\documentclass[12pt]{article}
\pdfoutput=1
\usepackage{amsmath}
\usepackage{graphicx,psfrag,epsf}
\usepackage{enumerate}
\usepackage{newclude} 
\usepackage[square,sort]{natbib}
\usepackage{url} 

\newcommand{\blind}{1}

\RequirePackage[paper = a4paper,         
            hmargin = 25mm,          
            vmargin = 25mm,          
            dvipdfm]{geometry}

\clearpage{}
\RequirePackage[english]{babel}
\usepackage[autostyle=true,english=british]{csquotes}

\RequirePackage[]{color}
 \makeatletter \def\maxwidth{ 
   \ifdim\Gin@nat@width>\linewidth \linewidth \else \Gin@nat@width \fi
 } \makeatother

\definecolor{fgcolor}{rgb}{0, 0, 0}

\RequirePackage{framed}
\makeatletter
\newenvironment{kframe}{
 \def\at@end@of@kframe{}
 \ifinner\ifhmode
  \def\at@end@of@kframe{\end{minipage}}
  \begin{minipage}{\columnwidth}
 \fi\fi
 \def\FrameCommand##1{\hskip\@totalleftmargin \hskip-\fboxsep
 \colorbox{shadecolor}{##1}\hskip-\fboxsep
 \hskip-\linewidth \hskip-\@totalleftmargin \hskip\columnwidth}
\MakeFramed {\advance\hsize-\width \@totalleftmargin\z@
  \linewidth\hsize \@setminipage}}
{\par\unskip\endMakeFramed
 \at@end@of@kframe}
\makeatother

\definecolor{shadecolor}{rgb}{.97, .97, .97}
\definecolor{messagecolor}{rgb}{0, 0, 0}
\definecolor{warningcolor}{rgb}{1, 0, 1}
\definecolor{errorcolor}{rgb}{1, 0, 0}
\newenvironment{knitrout}{}{}

\RequirePackage{alltt}

\RequirePackage{pdflscape}
\RequirePackage{mathtools}

\RequirePackage{fancyhdr}
\RequirePackage{datetime} 
\pagestyle{fancy}
\fancyhead{} 
\fancyfoot[C]{\thepage}

\RequirePackage{last page}

\numberwithin{equation}{section}

\RequirePackage{verbatim}  
\RequirePackage{bbm}     
\RequirePackage[final]{showkeys} 
\RequirePackage{bm} 
\RequirePackage{xspace} 
\RequirePackage{framed}

\RequirePackage{lineno}  

\RequirePackage{enumitem} 
\setlist{nolistsep} 

\RequirePackage[
            colorlinks,      
            citecolor=blue,  
            linkcolor=magenta,  
            urlcolor=blue,   
            ]{hyperref}

\hypersetup{citecolor=blue, urlcolor=magenta, colorlinks=true} 

\hyphenation{copula Kurowicka}

\RequirePackage[sf,medium,bf,noindentafter]{titlesec}  
\titlespacing{\section}{0pt}{12pt plus 4pt minus 2pt}{0pt plus 2pt minus 2pt}
\titlespacing{\subsection}{0pt}{12pt plus 4pt minus 2pt}{0pt plus 2pt minus 2pt}
\titlespacing{\subsubsection}{0pt}{12pt plus 4pt minus 2pt}{0pt plus 2pt minus 2pt}
\titlespacing{\paragraph}{0pt}{\parskip}{-\parskip}

\RequirePackage{tensor} 
\RequirePackage{etoolbox} 

\RequirePackage{xparse} 

\RequirePackage{trimclip} 
\RequirePackage{adjustbox}

\RequirePackage{fouridx}

\IfFileExists{upquote.sty}{\RequirePackage{upquote}}{}
\RequirePackage{amsthm}

\pdfstringdefDisableCommands{
    \renewcommand*{\bm}[1]{#1}
}

\makeatletter
\newcommand\appendix@section[1]{
  \refstepcounter{section}
  \orig@section*{Appendix \@Alph\c@section: #1}
  \addcontentsline{toc}{section}{Appendix \@Alph\c@section: #1}
}
\let\orig@section\section
\g@addto@macro\appendix{\let\section\appendix@section}
\makeatother

\widowpenalty10000
\clubpenalty10000

\clearpage{}
\clearpage{}
\pdfoutput=1

\graphicspath{{figure/}}
\graphicspath{{/}{../}}

\hyphenation{copula Kurowicka}

\makeatletter{}

\RequirePackage{amssymb}
\RequirePackage{amsfonts}

\RequirePackage{xstring}

\newcounter{definition}
\newcounter{corollary}

\newtheorem{theorem}{Theorem}[section]
\newtheorem{lemma}[theorem]{Lemma}
\newtheorem{corollary}[theorem]{Corollary}

\newtheorem{definition}[theorem]{Definition}

\let\oldLemma\lemma
\renewcommand{\lemma}{  \crefalias{theorem}{lemma}  \oldLemma}

\let\oldDefinition\definition
\renewcommand{\definition}{  \crefalias{theorem}{definition}  \oldDefinition}

\let\oldCorollary\corollary
\renewcommand{\corollary}{  \crefalias{theorem}{corollary}  \oldCorollary}

	                                        	    \setcounter{topnumber}{2}
\setcounter{bottomnumber}{2}
\setcounter{totalnumber}{4}     \setcounter{dbltopnumber}{2}    		    		                                                 
	
\RequirePackage[]{cleveref}

\newcommand{\myref}[2]{\mbox{\cref{#1}\ref{#2}}\xspace
}

\newcommand{\Myref}[2]{\mbox{\Cref{#1}\ref{#2}}\xspace
}

\newrobustcmd{\MacroNote}[1]{}

\MacroNote{Text written like this is contained in the macro MacroNote,
  and this is intended as a reminder concerning how to call the
  different commands. Might never be used...}

\newrobustcmd{\InternalNote}[2][Internal note:]{
  \rule{\linewidth}{.25pt} \\
  \textbf{#1} #2\\[-.25cm]
  \rule{\linewidth}{.25pt}
  \vspace*{\lineskip}
}

\newcommand{\E}[2][]{\ensuremath{
    \operatorname{E}_{#1}\! \left[#2\right]}}
\newcommand{\Prob}[1]{\ensuremath{
    \operatorname{P}\! \left(#1\right)}}
\newcommand{\Var}[2][]{\ensuremath{
    \operatorname{Var}_{#1}\!\left(#2\right)}}
\newcommand{\Cov}[2]{\ensuremath{
    \operatorname{Cov}\!\left(#1,#2\right)}}

\newcommand{\Ind}[1]{\ensuremath{ \mathbbm{1}{\left\{#1\right\}}}}

\renewcommand{\d}[1]{\operatorname{d}\!#1}

\newcommand{\UVN}[2]{\ensuremath{
    \operatorname{N}\!\left(#1,#2\right)}}

\newcommand{\R}{\texttt{R}\xspace}
\newcommand{\Rfunction}{\mbox{\R-function}\xspace}
\newcommand{\Rpackage}{\mbox{\R-package}\xspace}
\newcommand{\Rcode}{\mbox{\R code}\xspace}

\newcommand{\Rref}[1]{\mbox{\texttt{#1}\xspace}}

\newcommand{\lgsdRpackage}{\Rref{localgaussSpec}}
\newcommand{\devtoolgithublgsdRpackage}{\Rref{devtools:\!:install\_github("LAJordanger/localgaussSpec")}}

\newcommand{\NN}{\ensuremath{\mathbb{N}}} \newcommand{\RR}{\ensuremath{\mathbb{R}}}  \newcommand{\ZZ}{\ensuremath{\mathbb{Z}}}

\newcommand{\defarrow}{\stackrel{\textup{\tiny def}}{\Longleftrightarrow}}
\newcommand{\defeq}{\coloneqq}

\newcommand{\inv}{\ensuremath{{}^{-1}}}

\newcommand{\ssi}[1]{\mbox{\footnotesize
    \IfStrEq{#1}{}
    {}{\ensuremath{{}_{#1}}}}}

\newcommand{\floor}[1]{\ensuremath{\left\lfloor #1 \right\rfloor}}
\newcommand{\ceil}[1]{\ensuremath{\left\lceil #1 \right\rceil}}

\newcommand{\ssub}[3]{\mbox{\tiny
        \IfStrEq{#2}{}
                {}{\ensuremath{{#1#2#3}}}}} 
\newcommand{\ssup}[3]{\mbox{\tiny
        \IfStrEq{#2}{}
                {}{\ensuremath{{#1#2#3}}}}}

\newcommand{\SSI}[7]{\ensuremath{
        \tensor*{#1}{^{\ssup{#2}{#3}{#4}}_{\ssub{#5}{#6}{#7}}}}}

\newcommand{\pSSI}[9]{\ensuremath{
        \tensor*[^{\ssup{#8}{p}{#9}}]{#1}{^{\ssup{#2}{#3}{#4}}_{\ssub{#5}{#6}{#7}}}}}

\newcommand{\fs}[3][]{\SSI{f}{}{#1}{}{}{#2}{}
        \IfStrEq{#3}{}{}{\left(#3\right)}}
\newcommand{\Fs}[3][]{\SSI{F}{}{#1}{}{}{#2}{}
        \IfStrEq{#3}{}{}{\left(#3\right)}}
\newcommand{\fbs}[3][]{\SSI{\bm{f}}{}{#1}{}{}{#2}{}
        \IfStrEq{#3}{}{}{\left(#3\right)}}
\newcommand{\Fbs}[3][]{\SSI{\bm{F}}{}{#1}{}{}{#2}{}
        \IfStrEq{#3}{}{}{\!\! \left(#3\right)}}

\newcommand{\cs}[3][]{\SSI{c}{}{#1}{}{}{#2}{}
        \IfStrEq{#3}{}{}{\left(#3\right)}}
\newcommand{\Cs}[3][]{\SSI{C}{}{#1}{}{}{#2}{}
        \IfStrEq{#3}{}{}{\left(#3\right)}}
\newcommand{\cbs}[3][]{\SSI{\bm{c}}{}{#1}{}{}{#2}{}
        \IfStrEq{#3}{}{}{\left(#3\right)}}
\newcommand{\Cbs}[3][]{\SSI{\bm{C}}{}{#1}{}{}{#2}{}
        \IfStrEq{#3}{}{}{\left(#3\right)}}

\newcommand{\pFs}[3][]{\pSSI{F}{}{#1}{}{}{#2}{}{}{}
        \IfStrEq{#3}{}{}{\left(#3\right)}}

\newrobustcmd{\subp}[5]{ 
      \def\somethingThere{#2#3#4#5}
  \IfStrEq{\somethingThere}{}{#1}{
    \ensuremath{#1_{\scriptscriptstyle #2#3}^{        \scriptscriptstyle #4#5}}}
}

\newrobustcmd{\partialz}[2][]{
  \ensuremath{\subp{\partial}{}{#2}{}{#1}}}

\newrobustcmd{\onez}[2][]{
  \ensuremath{\subp{1}{}{#2}{}{#1}}}
\newrobustcmd{\bmonez}[2][]{
  \ensuremath{\subp{\bm{1}}{}{#2}{}{#1}}}
\newrobustcmd{\zeroz}[2][]{
  \ensuremath{\subp{0}{}{#2}{}{#1}}}
\newrobustcmd{\bmzeroz}[2][]{
  \ensuremath{\subp{\bm{0}}{}{#2}{}{#1}}}

\newrobustcmd{\az}[2][]{\ensuremath{\subp{a}{}{#2}{}{#1}}}
\newrobustcmd{\hataz}[2][]{\ensuremath{\subp{\hat{a}}{}{#2}{}{#1}}}
\newrobustcmd{\widehataz}[2][]{\ensuremath{\subp{\widehat{a}}{}{#2}{}{#1}}}
\newrobustcmd{\checkaz}[2][]{\ensuremath{\subp{\check{a}}{}{#2}{}{#1}}}
\newrobustcmd{\tildeaz}[2][]{\ensuremath{\subp{\tilde{a}}{}{#2}{}{#1}}}
\newrobustcmd{\widetildeaz}[2][]{\ensuremath{\subp{\widetilde{a}}{}{#2}{}{#1}}}
\newrobustcmd{\acuteaz}[2][]{\ensuremath{\subp{\acute{a}}{}{#2}{}{#1}}}
\newrobustcmd{\graveaz}[2][]{\ensuremath{\subp{\grave{a}}{}{#2}{}{#1}}}
\newrobustcmd{\dotaz}[2][]{\ensuremath{\subp{\dot{a}}{}{#2}{}{#1}}}
\newrobustcmd{\ddotaz}[2][]{\ensuremath{\subp{\ddot{a}}{}{#2}{}{#1}}}
\newrobustcmd{\breveaz}[2][]{\ensuremath{\subp{\breve{a}}{}{#2}{}{#1}}}
\newrobustcmd{\baraz}[2][]{\ensuremath{\subp{\bar{a}}{}{#2}{}{#1}}}
\newrobustcmd{\vecaz}[2][]{\ensuremath{\subp{\vec{a}}{}{#2}{}{#1}}}
\newrobustcmd{\bmaz}[2][]{\ensuremath{\subp{\bm{a}}{}{#2}{}{#1}}}
\newrobustcmd{\hatbmaz}[2][]{\ensuremath{\subp{\hat{\bm{a}}}{}{#2}{}{#1}}}
\newrobustcmd{\widehatbmaz}[2][]{\ensuremath{\subp{\widehat{\bm{a}}}{}{#2}{}{#1}}}
\newrobustcmd{\checkbmaz}[2][]{\ensuremath{\subp{\check{\bm{a}}}{}{#2}{}{#1}}}
\newrobustcmd{\tildebmaz}[2][]{\ensuremath{\subp{\tilde{\bm{a}}}{}{#2}{}{#1}}}
\newrobustcmd{\widetildebmaz}[2][]{\ensuremath{\subp{\widetilde{\bm{a}}}{}{#2}{}{#1}}}
\newrobustcmd{\acutebmaz}[2][]{\ensuremath{\subp{\acute{\bm{a}}}{}{#2}{}{#1}}}
\newrobustcmd{\gravebmaz}[2][]{\ensuremath{\subp{\grave{\bm{a}}}{}{#2}{}{#1}}}
\newrobustcmd{\dotbmaz}[2][]{\ensuremath{\subp{\dot{\bm{a}}}{}{#2}{}{#1}}}
\newrobustcmd{\ddotbmaz}[2][]{\ensuremath{\subp{\ddot{\bm{a}}}{}{#2}{}{#1}}}
\newrobustcmd{\brevebmaz}[2][]{\ensuremath{\subp{\breve{\bm{a}}}{}{#2}{}{#1}}}
\newrobustcmd{\barbmaz}[2][]{\ensuremath{\subp{\bar{\bm{a}}}{}{#2}{}{#1}}}
\newrobustcmd{\vecbmaz}[2][]{\ensuremath{\subp{\vec{\bm{a}}}{}{#2}{}{#1}}}
\newrobustcmd{\bz}[2][]{\ensuremath{\subp{b}{}{#2}{}{#1}}}
\newrobustcmd{\hatbz}[2][]{\ensuremath{\subp{\hat{b}}{}{#2}{}{#1}}}
\newrobustcmd{\widehatbz}[2][]{\ensuremath{\subp{\widehat{b}}{}{#2}{}{#1}}}
\newrobustcmd{\checkbz}[2][]{\ensuremath{\subp{\check{b}}{}{#2}{}{#1}}}
\newrobustcmd{\tildebz}[2][]{\ensuremath{\subp{\tilde{b}}{}{#2}{}{#1}}}
\newrobustcmd{\widetildebz}[2][]{\ensuremath{\subp{\widetilde{b}}{}{#2}{}{#1}}}
\newrobustcmd{\acutebz}[2][]{\ensuremath{\subp{\acute{b}}{}{#2}{}{#1}}}
\newrobustcmd{\gravebz}[2][]{\ensuremath{\subp{\grave{b}}{}{#2}{}{#1}}}
\newrobustcmd{\dotbz}[2][]{\ensuremath{\subp{\dot{b}}{}{#2}{}{#1}}}
\newrobustcmd{\ddotbz}[2][]{\ensuremath{\subp{\ddot{b}}{}{#2}{}{#1}}}
\newrobustcmd{\brevebz}[2][]{\ensuremath{\subp{\breve{b}}{}{#2}{}{#1}}}
\newrobustcmd{\barbz}[2][]{\ensuremath{\subp{\bar{b}}{}{#2}{}{#1}}}
\newrobustcmd{\vecbz}[2][]{\ensuremath{\subp{\vec{b}}{}{#2}{}{#1}}}
\newrobustcmd{\bmbz}[2][]{\ensuremath{\subp{\bm{b}}{}{#2}{}{#1}}}
\newrobustcmd{\hatbmbz}[2][]{\ensuremath{\subp{\hat{\bm{b}}}{}{#2}{}{#1}}}
\newrobustcmd{\widehatbmbz}[2][]{\ensuremath{\subp{\widehat{\bm{b}}}{}{#2}{}{#1}}}
\newrobustcmd{\checkbmbz}[2][]{\ensuremath{\subp{\check{\bm{b}}}{}{#2}{}{#1}}}
\newrobustcmd{\tildebmbz}[2][]{\ensuremath{\subp{\tilde{\bm{b}}}{}{#2}{}{#1}}}
\newrobustcmd{\widetildebmbz}[2][]{\ensuremath{\subp{\widetilde{\bm{b}}}{}{#2}{}{#1}}}
\newrobustcmd{\acutebmbz}[2][]{\ensuremath{\subp{\acute{\bm{b}}}{}{#2}{}{#1}}}
\newrobustcmd{\gravebmbz}[2][]{\ensuremath{\subp{\grave{\bm{b}}}{}{#2}{}{#1}}}
\newrobustcmd{\dotbmbz}[2][]{\ensuremath{\subp{\dot{\bm{b}}}{}{#2}{}{#1}}}
\newrobustcmd{\ddotbmbz}[2][]{\ensuremath{\subp{\ddot{\bm{b}}}{}{#2}{}{#1}}}
\newrobustcmd{\brevebmbz}[2][]{\ensuremath{\subp{\breve{\bm{b}}}{}{#2}{}{#1}}}
\newrobustcmd{\barbmbz}[2][]{\ensuremath{\subp{\bar{\bm{b}}}{}{#2}{}{#1}}}
\newrobustcmd{\vecbmbz}[2][]{\ensuremath{\subp{\vec{\bm{b}}}{}{#2}{}{#1}}}
\newrobustcmd{\cz}[2][]{\ensuremath{\subp{c}{}{#2}{}{#1}}}
\newrobustcmd{\hatcz}[2][]{\ensuremath{\subp{\hat{c}}{}{#2}{}{#1}}}
\newrobustcmd{\widehatcz}[2][]{\ensuremath{\subp{\widehat{c}}{}{#2}{}{#1}}}
\newrobustcmd{\checkcz}[2][]{\ensuremath{\subp{\check{c}}{}{#2}{}{#1}}}
\newrobustcmd{\tildecz}[2][]{\ensuremath{\subp{\tilde{c}}{}{#2}{}{#1}}}
\newrobustcmd{\widetildecz}[2][]{\ensuremath{\subp{\widetilde{c}}{}{#2}{}{#1}}}
\newrobustcmd{\acutecz}[2][]{\ensuremath{\subp{\acute{c}}{}{#2}{}{#1}}}
\newrobustcmd{\gravecz}[2][]{\ensuremath{\subp{\grave{c}}{}{#2}{}{#1}}}
\newrobustcmd{\dotcz}[2][]{\ensuremath{\subp{\dot{c}}{}{#2}{}{#1}}}
\newrobustcmd{\ddotcz}[2][]{\ensuremath{\subp{\ddot{c}}{}{#2}{}{#1}}}
\newrobustcmd{\brevecz}[2][]{\ensuremath{\subp{\breve{c}}{}{#2}{}{#1}}}
\newrobustcmd{\barcz}[2][]{\ensuremath{\subp{\bar{c}}{}{#2}{}{#1}}}
\newrobustcmd{\veccz}[2][]{\ensuremath{\subp{\vec{c}}{}{#2}{}{#1}}}
\newrobustcmd{\bmcz}[2][]{\ensuremath{\subp{\bm{c}}{}{#2}{}{#1}}}
\newrobustcmd{\hatbmcz}[2][]{\ensuremath{\subp{\hat{\bm{c}}}{}{#2}{}{#1}}}
\newrobustcmd{\widehatbmcz}[2][]{\ensuremath{\subp{\widehat{\bm{c}}}{}{#2}{}{#1}}}
\newrobustcmd{\checkbmcz}[2][]{\ensuremath{\subp{\check{\bm{c}}}{}{#2}{}{#1}}}
\newrobustcmd{\tildebmcz}[2][]{\ensuremath{\subp{\tilde{\bm{c}}}{}{#2}{}{#1}}}
\newrobustcmd{\widetildebmcz}[2][]{\ensuremath{\subp{\widetilde{\bm{c}}}{}{#2}{}{#1}}}
\newrobustcmd{\acutebmcz}[2][]{\ensuremath{\subp{\acute{\bm{c}}}{}{#2}{}{#1}}}
\newrobustcmd{\gravebmcz}[2][]{\ensuremath{\subp{\grave{\bm{c}}}{}{#2}{}{#1}}}
\newrobustcmd{\dotbmcz}[2][]{\ensuremath{\subp{\dot{\bm{c}}}{}{#2}{}{#1}}}
\newrobustcmd{\ddotbmcz}[2][]{\ensuremath{\subp{\ddot{\bm{c}}}{}{#2}{}{#1}}}
\newrobustcmd{\brevebmcz}[2][]{\ensuremath{\subp{\breve{\bm{c}}}{}{#2}{}{#1}}}
\newrobustcmd{\barbmcz}[2][]{\ensuremath{\subp{\bar{\bm{c}}}{}{#2}{}{#1}}}
\newrobustcmd{\vecbmcz}[2][]{\ensuremath{\subp{\vec{\bm{c}}}{}{#2}{}{#1}}}
\newrobustcmd{\dz}[2][]{\ensuremath{\subp{d}{}{#2}{}{#1}}}
\newrobustcmd{\hatdz}[2][]{\ensuremath{\subp{\hat{d}}{}{#2}{}{#1}}}
\newrobustcmd{\widehatdz}[2][]{\ensuremath{\subp{\widehat{d}}{}{#2}{}{#1}}}
\newrobustcmd{\checkdz}[2][]{\ensuremath{\subp{\check{d}}{}{#2}{}{#1}}}
\newrobustcmd{\tildedz}[2][]{\ensuremath{\subp{\tilde{d}}{}{#2}{}{#1}}}
\newrobustcmd{\widetildedz}[2][]{\ensuremath{\subp{\widetilde{d}}{}{#2}{}{#1}}}
\newrobustcmd{\acutedz}[2][]{\ensuremath{\subp{\acute{d}}{}{#2}{}{#1}}}
\newrobustcmd{\gravedz}[2][]{\ensuremath{\subp{\grave{d}}{}{#2}{}{#1}}}
\newrobustcmd{\dotdz}[2][]{\ensuremath{\subp{\dot{d}}{}{#2}{}{#1}}}
\newrobustcmd{\ddotdz}[2][]{\ensuremath{\subp{\ddot{d}}{}{#2}{}{#1}}}
\newrobustcmd{\brevedz}[2][]{\ensuremath{\subp{\breve{d}}{}{#2}{}{#1}}}
\newrobustcmd{\bardz}[2][]{\ensuremath{\subp{\bar{d}}{}{#2}{}{#1}}}
\newrobustcmd{\vecdz}[2][]{\ensuremath{\subp{\vec{d}}{}{#2}{}{#1}}}
\newrobustcmd{\bmdz}[2][]{\ensuremath{\subp{\bm{d}}{}{#2}{}{#1}}}
\newrobustcmd{\hatbmdz}[2][]{\ensuremath{\subp{\hat{\bm{d}}}{}{#2}{}{#1}}}
\newrobustcmd{\widehatbmdz}[2][]{\ensuremath{\subp{\widehat{\bm{d}}}{}{#2}{}{#1}}}
\newrobustcmd{\checkbmdz}[2][]{\ensuremath{\subp{\check{\bm{d}}}{}{#2}{}{#1}}}
\newrobustcmd{\tildebmdz}[2][]{\ensuremath{\subp{\tilde{\bm{d}}}{}{#2}{}{#1}}}
\newrobustcmd{\widetildebmdz}[2][]{\ensuremath{\subp{\widetilde{\bm{d}}}{}{#2}{}{#1}}}
\newrobustcmd{\acutebmdz}[2][]{\ensuremath{\subp{\acute{\bm{d}}}{}{#2}{}{#1}}}
\newrobustcmd{\gravebmdz}[2][]{\ensuremath{\subp{\grave{\bm{d}}}{}{#2}{}{#1}}}
\newrobustcmd{\dotbmdz}[2][]{\ensuremath{\subp{\dot{\bm{d}}}{}{#2}{}{#1}}}
\newrobustcmd{\ddotbmdz}[2][]{\ensuremath{\subp{\ddot{\bm{d}}}{}{#2}{}{#1}}}
\newrobustcmd{\brevebmdz}[2][]{\ensuremath{\subp{\breve{\bm{d}}}{}{#2}{}{#1}}}
\newrobustcmd{\barbmdz}[2][]{\ensuremath{\subp{\bar{\bm{d}}}{}{#2}{}{#1}}}
\newrobustcmd{\vecbmdz}[2][]{\ensuremath{\subp{\vec{\bm{d}}}{}{#2}{}{#1}}}
\newrobustcmd{\ez}[2][]{\ensuremath{\subp{e}{}{#2}{}{#1}}}
\newrobustcmd{\hatez}[2][]{\ensuremath{\subp{\hat{e}}{}{#2}{}{#1}}}
\newrobustcmd{\widehatez}[2][]{\ensuremath{\subp{\widehat{e}}{}{#2}{}{#1}}}
\newrobustcmd{\checkez}[2][]{\ensuremath{\subp{\check{e}}{}{#2}{}{#1}}}
\newrobustcmd{\tildeez}[2][]{\ensuremath{\subp{\tilde{e}}{}{#2}{}{#1}}}
\newrobustcmd{\widetildeez}[2][]{\ensuremath{\subp{\widetilde{e}}{}{#2}{}{#1}}}
\newrobustcmd{\acuteez}[2][]{\ensuremath{\subp{\acute{e}}{}{#2}{}{#1}}}
\newrobustcmd{\graveez}[2][]{\ensuremath{\subp{\grave{e}}{}{#2}{}{#1}}}
\newrobustcmd{\dotez}[2][]{\ensuremath{\subp{\dot{e}}{}{#2}{}{#1}}}
\newrobustcmd{\ddotez}[2][]{\ensuremath{\subp{\ddot{e}}{}{#2}{}{#1}}}
\newrobustcmd{\breveez}[2][]{\ensuremath{\subp{\breve{e}}{}{#2}{}{#1}}}
\newrobustcmd{\barez}[2][]{\ensuremath{\subp{\bar{e}}{}{#2}{}{#1}}}
\newrobustcmd{\vecez}[2][]{\ensuremath{\subp{\vec{e}}{}{#2}{}{#1}}}
\newrobustcmd{\bmez}[2][]{\ensuremath{\subp{\bm{e}}{}{#2}{}{#1}}}
\newrobustcmd{\hatbmez}[2][]{\ensuremath{\subp{\hat{\bm{e}}}{}{#2}{}{#1}}}
\newrobustcmd{\widehatbmez}[2][]{\ensuremath{\subp{\widehat{\bm{e}}}{}{#2}{}{#1}}}
\newrobustcmd{\checkbmez}[2][]{\ensuremath{\subp{\check{\bm{e}}}{}{#2}{}{#1}}}
\newrobustcmd{\tildebmez}[2][]{\ensuremath{\subp{\tilde{\bm{e}}}{}{#2}{}{#1}}}
\newrobustcmd{\widetildebmez}[2][]{\ensuremath{\subp{\widetilde{\bm{e}}}{}{#2}{}{#1}}}
\newrobustcmd{\acutebmez}[2][]{\ensuremath{\subp{\acute{\bm{e}}}{}{#2}{}{#1}}}
\newrobustcmd{\gravebmez}[2][]{\ensuremath{\subp{\grave{\bm{e}}}{}{#2}{}{#1}}}
\newrobustcmd{\dotbmez}[2][]{\ensuremath{\subp{\dot{\bm{e}}}{}{#2}{}{#1}}}
\newrobustcmd{\ddotbmez}[2][]{\ensuremath{\subp{\ddot{\bm{e}}}{}{#2}{}{#1}}}
\newrobustcmd{\brevebmez}[2][]{\ensuremath{\subp{\breve{\bm{e}}}{}{#2}{}{#1}}}
\newrobustcmd{\barbmez}[2][]{\ensuremath{\subp{\bar{\bm{e}}}{}{#2}{}{#1}}}
\newrobustcmd{\vecbmez}[2][]{\ensuremath{\subp{\vec{\bm{e}}}{}{#2}{}{#1}}}
\newrobustcmd{\fz}[2][]{\ensuremath{\subp{f}{}{#2}{}{#1}}}
\newrobustcmd{\hatfz}[2][]{\ensuremath{\subp{\hat{f}}{}{#2}{}{#1}}}
\newrobustcmd{\widehatfz}[2][]{\ensuremath{\subp{\widehat{f}}{}{#2}{}{#1}}}
\newrobustcmd{\checkfz}[2][]{\ensuremath{\subp{\check{f}}{}{#2}{}{#1}}}
\newrobustcmd{\tildefz}[2][]{\ensuremath{\subp{\tilde{f}}{}{#2}{}{#1}}}
\newrobustcmd{\widetildefz}[2][]{\ensuremath{\subp{\widetilde{f}}{}{#2}{}{#1}}}
\newrobustcmd{\acutefz}[2][]{\ensuremath{\subp{\acute{f}}{}{#2}{}{#1}}}
\newrobustcmd{\gravefz}[2][]{\ensuremath{\subp{\grave{f}}{}{#2}{}{#1}}}
\newrobustcmd{\dotfz}[2][]{\ensuremath{\subp{\dot{f}}{}{#2}{}{#1}}}
\newrobustcmd{\ddotfz}[2][]{\ensuremath{\subp{\ddot{f}}{}{#2}{}{#1}}}
\newrobustcmd{\brevefz}[2][]{\ensuremath{\subp{\breve{f}}{}{#2}{}{#1}}}
\newrobustcmd{\barfz}[2][]{\ensuremath{\subp{\bar{f}}{}{#2}{}{#1}}}
\newrobustcmd{\vecfz}[2][]{\ensuremath{\subp{\vec{f}}{}{#2}{}{#1}}}
\newrobustcmd{\bmfz}[2][]{\ensuremath{\subp{\bm{f}}{}{#2}{}{#1}}}
\newrobustcmd{\hatbmfz}[2][]{\ensuremath{\subp{\hat{\bm{f}}}{}{#2}{}{#1}}}
\newrobustcmd{\widehatbmfz}[2][]{\ensuremath{\subp{\widehat{\bm{f}}}{}{#2}{}{#1}}}
\newrobustcmd{\checkbmfz}[2][]{\ensuremath{\subp{\check{\bm{f}}}{}{#2}{}{#1}}}
\newrobustcmd{\tildebmfz}[2][]{\ensuremath{\subp{\tilde{\bm{f}}}{}{#2}{}{#1}}}
\newrobustcmd{\widetildebmfz}[2][]{\ensuremath{\subp{\widetilde{\bm{f}}}{}{#2}{}{#1}}}
\newrobustcmd{\acutebmfz}[2][]{\ensuremath{\subp{\acute{\bm{f}}}{}{#2}{}{#1}}}
\newrobustcmd{\gravebmfz}[2][]{\ensuremath{\subp{\grave{\bm{f}}}{}{#2}{}{#1}}}
\newrobustcmd{\dotbmfz}[2][]{\ensuremath{\subp{\dot{\bm{f}}}{}{#2}{}{#1}}}
\newrobustcmd{\ddotbmfz}[2][]{\ensuremath{\subp{\ddot{\bm{f}}}{}{#2}{}{#1}}}
\newrobustcmd{\brevebmfz}[2][]{\ensuremath{\subp{\breve{\bm{f}}}{}{#2}{}{#1}}}
\newrobustcmd{\barbmfz}[2][]{\ensuremath{\subp{\bar{\bm{f}}}{}{#2}{}{#1}}}
\newrobustcmd{\vecbmfz}[2][]{\ensuremath{\subp{\vec{\bm{f}}}{}{#2}{}{#1}}}
\newrobustcmd{\gz}[2][]{\ensuremath{\subp{g}{}{#2}{}{#1}}}
\newrobustcmd{\hatgz}[2][]{\ensuremath{\subp{\hat{g}}{}{#2}{}{#1}}}
\newrobustcmd{\widehatgz}[2][]{\ensuremath{\subp{\widehat{g}}{}{#2}{}{#1}}}
\newrobustcmd{\checkgz}[2][]{\ensuremath{\subp{\check{g}}{}{#2}{}{#1}}}
\newrobustcmd{\tildegz}[2][]{\ensuremath{\subp{\tilde{g}}{}{#2}{}{#1}}}
\newrobustcmd{\widetildegz}[2][]{\ensuremath{\subp{\widetilde{g}}{}{#2}{}{#1}}}
\newrobustcmd{\acutegz}[2][]{\ensuremath{\subp{\acute{g}}{}{#2}{}{#1}}}
\newrobustcmd{\gravegz}[2][]{\ensuremath{\subp{\grave{g}}{}{#2}{}{#1}}}
\newrobustcmd{\dotgz}[2][]{\ensuremath{\subp{\dot{g}}{}{#2}{}{#1}}}
\newrobustcmd{\ddotgz}[2][]{\ensuremath{\subp{\ddot{g}}{}{#2}{}{#1}}}
\newrobustcmd{\brevegz}[2][]{\ensuremath{\subp{\breve{g}}{}{#2}{}{#1}}}
\newrobustcmd{\bargz}[2][]{\ensuremath{\subp{\bar{g}}{}{#2}{}{#1}}}
\newrobustcmd{\vecgz}[2][]{\ensuremath{\subp{\vec{g}}{}{#2}{}{#1}}}
\newrobustcmd{\bmgz}[2][]{\ensuremath{\subp{\bm{g}}{}{#2}{}{#1}}}
\newrobustcmd{\hatbmgz}[2][]{\ensuremath{\subp{\hat{\bm{g}}}{}{#2}{}{#1}}}
\newrobustcmd{\widehatbmgz}[2][]{\ensuremath{\subp{\widehat{\bm{g}}}{}{#2}{}{#1}}}
\newrobustcmd{\checkbmgz}[2][]{\ensuremath{\subp{\check{\bm{g}}}{}{#2}{}{#1}}}
\newrobustcmd{\tildebmgz}[2][]{\ensuremath{\subp{\tilde{\bm{g}}}{}{#2}{}{#1}}}
\newrobustcmd{\widetildebmgz}[2][]{\ensuremath{\subp{\widetilde{\bm{g}}}{}{#2}{}{#1}}}
\newrobustcmd{\acutebmgz}[2][]{\ensuremath{\subp{\acute{\bm{g}}}{}{#2}{}{#1}}}
\newrobustcmd{\gravebmgz}[2][]{\ensuremath{\subp{\grave{\bm{g}}}{}{#2}{}{#1}}}
\newrobustcmd{\dotbmgz}[2][]{\ensuremath{\subp{\dot{\bm{g}}}{}{#2}{}{#1}}}
\newrobustcmd{\ddotbmgz}[2][]{\ensuremath{\subp{\ddot{\bm{g}}}{}{#2}{}{#1}}}
\newrobustcmd{\brevebmgz}[2][]{\ensuremath{\subp{\breve{\bm{g}}}{}{#2}{}{#1}}}
\newrobustcmd{\barbmgz}[2][]{\ensuremath{\subp{\bar{\bm{g}}}{}{#2}{}{#1}}}
\newrobustcmd{\vecbmgz}[2][]{\ensuremath{\subp{\vec{\bm{g}}}{}{#2}{}{#1}}}
\newrobustcmd{\hz}[2][]{\ensuremath{\subp{h}{}{#2}{}{#1}}}
\newrobustcmd{\hathz}[2][]{\ensuremath{\subp{\hat{h}}{}{#2}{}{#1}}}
\newrobustcmd{\widehathz}[2][]{\ensuremath{\subp{\widehat{h}}{}{#2}{}{#1}}}
\newrobustcmd{\checkhz}[2][]{\ensuremath{\subp{\check{h}}{}{#2}{}{#1}}}
\newrobustcmd{\tildehz}[2][]{\ensuremath{\subp{\tilde{h}}{}{#2}{}{#1}}}
\newrobustcmd{\widetildehz}[2][]{\ensuremath{\subp{\widetilde{h}}{}{#2}{}{#1}}}
\newrobustcmd{\acutehz}[2][]{\ensuremath{\subp{\acute{h}}{}{#2}{}{#1}}}
\newrobustcmd{\gravehz}[2][]{\ensuremath{\subp{\grave{h}}{}{#2}{}{#1}}}
\newrobustcmd{\dothz}[2][]{\ensuremath{\subp{\dot{h}}{}{#2}{}{#1}}}
\newrobustcmd{\ddothz}[2][]{\ensuremath{\subp{\ddot{h}}{}{#2}{}{#1}}}
\newrobustcmd{\brevehz}[2][]{\ensuremath{\subp{\breve{h}}{}{#2}{}{#1}}}
\newrobustcmd{\barhz}[2][]{\ensuremath{\subp{\bar{h}}{}{#2}{}{#1}}}
\newrobustcmd{\vechz}[2][]{\ensuremath{\subp{\vec{h}}{}{#2}{}{#1}}}
\newrobustcmd{\bmhz}[2][]{\ensuremath{\subp{\bm{h}}{}{#2}{}{#1}}}
\newrobustcmd{\hatbmhz}[2][]{\ensuremath{\subp{\hat{\bm{h}}}{}{#2}{}{#1}}}
\newrobustcmd{\widehatbmhz}[2][]{\ensuremath{\subp{\widehat{\bm{h}}}{}{#2}{}{#1}}}
\newrobustcmd{\checkbmhz}[2][]{\ensuremath{\subp{\check{\bm{h}}}{}{#2}{}{#1}}}
\newrobustcmd{\tildebmhz}[2][]{\ensuremath{\subp{\tilde{\bm{h}}}{}{#2}{}{#1}}}
\newrobustcmd{\widetildebmhz}[2][]{\ensuremath{\subp{\widetilde{\bm{h}}}{}{#2}{}{#1}}}
\newrobustcmd{\acutebmhz}[2][]{\ensuremath{\subp{\acute{\bm{h}}}{}{#2}{}{#1}}}
\newrobustcmd{\gravebmhz}[2][]{\ensuremath{\subp{\grave{\bm{h}}}{}{#2}{}{#1}}}
\newrobustcmd{\dotbmhz}[2][]{\ensuremath{\subp{\dot{\bm{h}}}{}{#2}{}{#1}}}
\newrobustcmd{\ddotbmhz}[2][]{\ensuremath{\subp{\ddot{\bm{h}}}{}{#2}{}{#1}}}
\newrobustcmd{\brevebmhz}[2][]{\ensuremath{\subp{\breve{\bm{h}}}{}{#2}{}{#1}}}
\newrobustcmd{\barbmhz}[2][]{\ensuremath{\subp{\bar{\bm{h}}}{}{#2}{}{#1}}}
\newrobustcmd{\vecbmhz}[2][]{\ensuremath{\subp{\vec{\bm{h}}}{}{#2}{}{#1}}}
\newrobustcmd{\iz}[2][]{\ensuremath{\subp{i}{}{#2}{}{#1}}}
\newrobustcmd{\hatiz}[2][]{\ensuremath{\subp{\hat{\imath}}{}{#2}{}{#1}}}
\newrobustcmd{\widehatiz}[2][]{\ensuremath{\subp{\widehat{\imath}}{}{#2}{}{#1}}}
\newrobustcmd{\checkiz}[2][]{\ensuremath{\subp{\check{\imath}}{}{#2}{}{#1}}}
\newrobustcmd{\tildeiz}[2][]{\ensuremath{\subp{\tilde{\imath}}{}{#2}{}{#1}}}
\newrobustcmd{\widetildeiz}[2][]{\ensuremath{\subp{\widetilde{\imath}}{}{#2}{}{#1}}}
\newrobustcmd{\acuteiz}[2][]{\ensuremath{\subp{\acute{\imath}}{}{#2}{}{#1}}}
\newrobustcmd{\graveiz}[2][]{\ensuremath{\subp{\grave{\imath}}{}{#2}{}{#1}}}
\newrobustcmd{\dotiz}[2][]{\ensuremath{\subp{\dot{\imath}}{}{#2}{}{#1}}}
\newrobustcmd{\ddotiz}[2][]{\ensuremath{\subp{\ddot{\imath}}{}{#2}{}{#1}}}
\newrobustcmd{\breveiz}[2][]{\ensuremath{\subp{\breve{\imath}}{}{#2}{}{#1}}}
\newrobustcmd{\bariz}[2][]{\ensuremath{\subp{\bar{\imath}}{}{#2}{}{#1}}}
\newrobustcmd{\veciz}[2][]{\ensuremath{\subp{\vec{\imath}}{}{#2}{}{#1}}}
\newrobustcmd{\bmiz}[2][]{\ensuremath{\subp{\bm{i}}{}{#2}{}{#1}}}
\newrobustcmd{\hatbmiz}[2][]{\ensuremath{\subp{\hat{\bm{\imath}}}{}{#2}{}{#1}}}
\newrobustcmd{\widehatbmiz}[2][]{\ensuremath{\subp{\widehat{\bm{\imath}}}{}{#2}{}{#1}}}
\newrobustcmd{\checkbmiz}[2][]{\ensuremath{\subp{\check{\bm{\imath}}}{}{#2}{}{#1}}}
\newrobustcmd{\tildebmiz}[2][]{\ensuremath{\subp{\tilde{\bm{\imath}}}{}{#2}{}{#1}}}
\newrobustcmd{\widetildebmiz}[2][]{\ensuremath{\subp{\widetilde{\bm{\imath}}}{}{#2}{}{#1}}}
\newrobustcmd{\acutebmiz}[2][]{\ensuremath{\subp{\acute{\bm{\imath}}}{}{#2}{}{#1}}}
\newrobustcmd{\gravebmiz}[2][]{\ensuremath{\subp{\grave{\bm{\imath}}}{}{#2}{}{#1}}}
\newrobustcmd{\dotbmiz}[2][]{\ensuremath{\subp{\dot{\bm{\imath}}}{}{#2}{}{#1}}}
\newrobustcmd{\ddotbmiz}[2][]{\ensuremath{\subp{\ddot{\bm{\imath}}}{}{#2}{}{#1}}}
\newrobustcmd{\brevebmiz}[2][]{\ensuremath{\subp{\breve{\bm{\imath}}}{}{#2}{}{#1}}}
\newrobustcmd{\barbmiz}[2][]{\ensuremath{\subp{\bar{\bm{\imath}}}{}{#2}{}{#1}}}
\newrobustcmd{\vecbmiz}[2][]{\ensuremath{\subp{\vec{\bm{\imath}}}{}{#2}{}{#1}}}
\newrobustcmd{\jz}[2][]{\ensuremath{\subp{j}{}{#2}{}{#1}}}
\newrobustcmd{\hatjz}[2][]{\ensuremath{\subp{\hat{\jmath}}{}{#2}{}{#1}}}
\newrobustcmd{\widehatjz}[2][]{\ensuremath{\subp{\widehat{\jmath}}{}{#2}{}{#1}}}
\newrobustcmd{\checkjz}[2][]{\ensuremath{\subp{\check{\jmath}}{}{#2}{}{#1}}}
\newrobustcmd{\tildejz}[2][]{\ensuremath{\subp{\tilde{\jmath}}{}{#2}{}{#1}}}
\newrobustcmd{\widetildejz}[2][]{\ensuremath{\subp{\widetilde{\jmath}}{}{#2}{}{#1}}}
\newrobustcmd{\acutejz}[2][]{\ensuremath{\subp{\acute{\jmath}}{}{#2}{}{#1}}}
\newrobustcmd{\gravejz}[2][]{\ensuremath{\subp{\grave{\jmath}}{}{#2}{}{#1}}}
\newrobustcmd{\dotjz}[2][]{\ensuremath{\subp{\dot{\jmath}}{}{#2}{}{#1}}}
\newrobustcmd{\ddotjz}[2][]{\ensuremath{\subp{\ddot{\jmath}}{}{#2}{}{#1}}}
\newrobustcmd{\brevejz}[2][]{\ensuremath{\subp{\breve{\jmath}}{}{#2}{}{#1}}}
\newrobustcmd{\barjz}[2][]{\ensuremath{\subp{\bar{\jmath}}{}{#2}{}{#1}}}
\newrobustcmd{\vecjz}[2][]{\ensuremath{\subp{\vec{\jmath}}{}{#2}{}{#1}}}
\newrobustcmd{\bmjz}[2][]{\ensuremath{\subp{\bm{j}}{}{#2}{}{#1}}}
\newrobustcmd{\hatbmjz}[2][]{\ensuremath{\subp{\hat{\bm{\jmath}}}{}{#2}{}{#1}}}
\newrobustcmd{\widehatbmjz}[2][]{\ensuremath{\subp{\widehat{\bm{\jmath}}}{}{#2}{}{#1}}}
\newrobustcmd{\checkbmjz}[2][]{\ensuremath{\subp{\check{\bm{\jmath}}}{}{#2}{}{#1}}}
\newrobustcmd{\tildebmjz}[2][]{\ensuremath{\subp{\tilde{\bm{\jmath}}}{}{#2}{}{#1}}}
\newrobustcmd{\widetildebmjz}[2][]{\ensuremath{\subp{\widetilde{\bm{\jmath}}}{}{#2}{}{#1}}}
\newrobustcmd{\acutebmjz}[2][]{\ensuremath{\subp{\acute{\bm{\jmath}}}{}{#2}{}{#1}}}
\newrobustcmd{\gravebmjz}[2][]{\ensuremath{\subp{\grave{\bm{\jmath}}}{}{#2}{}{#1}}}
\newrobustcmd{\dotbmjz}[2][]{\ensuremath{\subp{\dot{\bm{\jmath}}}{}{#2}{}{#1}}}
\newrobustcmd{\ddotbmjz}[2][]{\ensuremath{\subp{\ddot{\bm{\jmath}}}{}{#2}{}{#1}}}
\newrobustcmd{\brevebmjz}[2][]{\ensuremath{\subp{\breve{\bm{\jmath}}}{}{#2}{}{#1}}}
\newrobustcmd{\barbmjz}[2][]{\ensuremath{\subp{\bar{\bm{\jmath}}}{}{#2}{}{#1}}}
\newrobustcmd{\vecbmjz}[2][]{\ensuremath{\subp{\vec{\bm{\jmath}}}{}{#2}{}{#1}}}
\newrobustcmd{\kz}[2][]{\ensuremath{\subp{k}{}{#2}{}{#1}}}
\newrobustcmd{\hatkz}[2][]{\ensuremath{\subp{\hat{k}}{}{#2}{}{#1}}}
\newrobustcmd{\widehatkz}[2][]{\ensuremath{\subp{\widehat{k}}{}{#2}{}{#1}}}
\newrobustcmd{\checkkz}[2][]{\ensuremath{\subp{\check{k}}{}{#2}{}{#1}}}
\newrobustcmd{\tildekz}[2][]{\ensuremath{\subp{\tilde{k}}{}{#2}{}{#1}}}
\newrobustcmd{\widetildekz}[2][]{\ensuremath{\subp{\widetilde{k}}{}{#2}{}{#1}}}
\newrobustcmd{\acutekz}[2][]{\ensuremath{\subp{\acute{k}}{}{#2}{}{#1}}}
\newrobustcmd{\gravekz}[2][]{\ensuremath{\subp{\grave{k}}{}{#2}{}{#1}}}
\newrobustcmd{\dotkz}[2][]{\ensuremath{\subp{\dot{k}}{}{#2}{}{#1}}}
\newrobustcmd{\ddotkz}[2][]{\ensuremath{\subp{\ddot{k}}{}{#2}{}{#1}}}
\newrobustcmd{\brevekz}[2][]{\ensuremath{\subp{\breve{k}}{}{#2}{}{#1}}}
\newrobustcmd{\barkz}[2][]{\ensuremath{\subp{\bar{k}}{}{#2}{}{#1}}}
\newrobustcmd{\veckz}[2][]{\ensuremath{\subp{\vec{k}}{}{#2}{}{#1}}}
\newrobustcmd{\bmkz}[2][]{\ensuremath{\subp{\bm{k}}{}{#2}{}{#1}}}
\newrobustcmd{\hatbmkz}[2][]{\ensuremath{\subp{\hat{\bm{k}}}{}{#2}{}{#1}}}
\newrobustcmd{\widehatbmkz}[2][]{\ensuremath{\subp{\widehat{\bm{k}}}{}{#2}{}{#1}}}
\newrobustcmd{\checkbmkz}[2][]{\ensuremath{\subp{\check{\bm{k}}}{}{#2}{}{#1}}}
\newrobustcmd{\tildebmkz}[2][]{\ensuremath{\subp{\tilde{\bm{k}}}{}{#2}{}{#1}}}
\newrobustcmd{\widetildebmkz}[2][]{\ensuremath{\subp{\widetilde{\bm{k}}}{}{#2}{}{#1}}}
\newrobustcmd{\acutebmkz}[2][]{\ensuremath{\subp{\acute{\bm{k}}}{}{#2}{}{#1}}}
\newrobustcmd{\gravebmkz}[2][]{\ensuremath{\subp{\grave{\bm{k}}}{}{#2}{}{#1}}}
\newrobustcmd{\dotbmkz}[2][]{\ensuremath{\subp{\dot{\bm{k}}}{}{#2}{}{#1}}}
\newrobustcmd{\ddotbmkz}[2][]{\ensuremath{\subp{\ddot{\bm{k}}}{}{#2}{}{#1}}}
\newrobustcmd{\brevebmkz}[2][]{\ensuremath{\subp{\breve{\bm{k}}}{}{#2}{}{#1}}}
\newrobustcmd{\barbmkz}[2][]{\ensuremath{\subp{\bar{\bm{k}}}{}{#2}{}{#1}}}
\newrobustcmd{\vecbmkz}[2][]{\ensuremath{\subp{\vec{\bm{k}}}{}{#2}{}{#1}}}
\newrobustcmd{\lz}[2][]{\ensuremath{\subp{l}{}{#2}{}{#1}}}
\newrobustcmd{\hatlz}[2][]{\ensuremath{\subp{\hat{l}}{}{#2}{}{#1}}}
\newrobustcmd{\widehatlz}[2][]{\ensuremath{\subp{\widehat{l}}{}{#2}{}{#1}}}
\newrobustcmd{\checklz}[2][]{\ensuremath{\subp{\check{l}}{}{#2}{}{#1}}}
\newrobustcmd{\tildelz}[2][]{\ensuremath{\subp{\tilde{l}}{}{#2}{}{#1}}}
\newrobustcmd{\widetildelz}[2][]{\ensuremath{\subp{\widetilde{l}}{}{#2}{}{#1}}}
\newrobustcmd{\acutelz}[2][]{\ensuremath{\subp{\acute{l}}{}{#2}{}{#1}}}
\newrobustcmd{\gravelz}[2][]{\ensuremath{\subp{\grave{l}}{}{#2}{}{#1}}}
\newrobustcmd{\dotlz}[2][]{\ensuremath{\subp{\dot{l}}{}{#2}{}{#1}}}
\newrobustcmd{\ddotlz}[2][]{\ensuremath{\subp{\ddot{l}}{}{#2}{}{#1}}}
\newrobustcmd{\brevelz}[2][]{\ensuremath{\subp{\breve{l}}{}{#2}{}{#1}}}
\newrobustcmd{\barlz}[2][]{\ensuremath{\subp{\bar{l}}{}{#2}{}{#1}}}
\newrobustcmd{\veclz}[2][]{\ensuremath{\subp{\vec{l}}{}{#2}{}{#1}}}
\newrobustcmd{\bmlz}[2][]{\ensuremath{\subp{\bm{l}}{}{#2}{}{#1}}}
\newrobustcmd{\hatbmlz}[2][]{\ensuremath{\subp{\hat{\bm{l}}}{}{#2}{}{#1}}}
\newrobustcmd{\widehatbmlz}[2][]{\ensuremath{\subp{\widehat{\bm{l}}}{}{#2}{}{#1}}}
\newrobustcmd{\checkbmlz}[2][]{\ensuremath{\subp{\check{\bm{l}}}{}{#2}{}{#1}}}
\newrobustcmd{\tildebmlz}[2][]{\ensuremath{\subp{\tilde{\bm{l}}}{}{#2}{}{#1}}}
\newrobustcmd{\widetildebmlz}[2][]{\ensuremath{\subp{\widetilde{\bm{l}}}{}{#2}{}{#1}}}
\newrobustcmd{\acutebmlz}[2][]{\ensuremath{\subp{\acute{\bm{l}}}{}{#2}{}{#1}}}
\newrobustcmd{\gravebmlz}[2][]{\ensuremath{\subp{\grave{\bm{l}}}{}{#2}{}{#1}}}
\newrobustcmd{\dotbmlz}[2][]{\ensuremath{\subp{\dot{\bm{l}}}{}{#2}{}{#1}}}
\newrobustcmd{\ddotbmlz}[2][]{\ensuremath{\subp{\ddot{\bm{l}}}{}{#2}{}{#1}}}
\newrobustcmd{\brevebmlz}[2][]{\ensuremath{\subp{\breve{\bm{l}}}{}{#2}{}{#1}}}
\newrobustcmd{\barbmlz}[2][]{\ensuremath{\subp{\bar{\bm{l}}}{}{#2}{}{#1}}}
\newrobustcmd{\vecbmlz}[2][]{\ensuremath{\subp{\vec{\bm{l}}}{}{#2}{}{#1}}}
\newrobustcmd{\mz}[2][]{\ensuremath{\subp{m}{}{#2}{}{#1}}}
\newrobustcmd{\hatmz}[2][]{\ensuremath{\subp{\hat{m}}{}{#2}{}{#1}}}
\newrobustcmd{\widehatmz}[2][]{\ensuremath{\subp{\widehat{m}}{}{#2}{}{#1}}}
\newrobustcmd{\checkmz}[2][]{\ensuremath{\subp{\check{m}}{}{#2}{}{#1}}}
\newrobustcmd{\tildemz}[2][]{\ensuremath{\subp{\tilde{m}}{}{#2}{}{#1}}}
\newrobustcmd{\widetildemz}[2][]{\ensuremath{\subp{\widetilde{m}}{}{#2}{}{#1}}}
\newrobustcmd{\acutemz}[2][]{\ensuremath{\subp{\acute{m}}{}{#2}{}{#1}}}
\newrobustcmd{\gravemz}[2][]{\ensuremath{\subp{\grave{m}}{}{#2}{}{#1}}}
\newrobustcmd{\dotmz}[2][]{\ensuremath{\subp{\dot{m}}{}{#2}{}{#1}}}
\newrobustcmd{\ddotmz}[2][]{\ensuremath{\subp{\ddot{m}}{}{#2}{}{#1}}}
\newrobustcmd{\brevemz}[2][]{\ensuremath{\subp{\breve{m}}{}{#2}{}{#1}}}
\newrobustcmd{\barmz}[2][]{\ensuremath{\subp{\bar{m}}{}{#2}{}{#1}}}
\newrobustcmd{\vecmz}[2][]{\ensuremath{\subp{\vec{m}}{}{#2}{}{#1}}}
\newrobustcmd{\bmmz}[2][]{\ensuremath{\subp{\bm{m}}{}{#2}{}{#1}}}
\newrobustcmd{\hatbmmz}[2][]{\ensuremath{\subp{\hat{\bm{m}}}{}{#2}{}{#1}}}
\newrobustcmd{\widehatbmmz}[2][]{\ensuremath{\subp{\widehat{\bm{m}}}{}{#2}{}{#1}}}
\newrobustcmd{\checkbmmz}[2][]{\ensuremath{\subp{\check{\bm{m}}}{}{#2}{}{#1}}}
\newrobustcmd{\tildebmmz}[2][]{\ensuremath{\subp{\tilde{\bm{m}}}{}{#2}{}{#1}}}
\newrobustcmd{\widetildebmmz}[2][]{\ensuremath{\subp{\widetilde{\bm{m}}}{}{#2}{}{#1}}}
\newrobustcmd{\acutebmmz}[2][]{\ensuremath{\subp{\acute{\bm{m}}}{}{#2}{}{#1}}}
\newrobustcmd{\gravebmmz}[2][]{\ensuremath{\subp{\grave{\bm{m}}}{}{#2}{}{#1}}}
\newrobustcmd{\dotbmmz}[2][]{\ensuremath{\subp{\dot{\bm{m}}}{}{#2}{}{#1}}}
\newrobustcmd{\ddotbmmz}[2][]{\ensuremath{\subp{\ddot{\bm{m}}}{}{#2}{}{#1}}}
\newrobustcmd{\brevebmmz}[2][]{\ensuremath{\subp{\breve{\bm{m}}}{}{#2}{}{#1}}}
\newrobustcmd{\barbmmz}[2][]{\ensuremath{\subp{\bar{\bm{m}}}{}{#2}{}{#1}}}
\newrobustcmd{\vecbmmz}[2][]{\ensuremath{\subp{\vec{\bm{m}}}{}{#2}{}{#1}}}
\newrobustcmd{\nz}[2][]{\ensuremath{\subp{n}{}{#2}{}{#1}}}
\newrobustcmd{\hatnz}[2][]{\ensuremath{\subp{\hat{n}}{}{#2}{}{#1}}}
\newrobustcmd{\widehatnz}[2][]{\ensuremath{\subp{\widehat{n}}{}{#2}{}{#1}}}
\newrobustcmd{\checknz}[2][]{\ensuremath{\subp{\check{n}}{}{#2}{}{#1}}}
\newrobustcmd{\tildenz}[2][]{\ensuremath{\subp{\tilde{n}}{}{#2}{}{#1}}}
\newrobustcmd{\widetildenz}[2][]{\ensuremath{\subp{\widetilde{n}}{}{#2}{}{#1}}}
\newrobustcmd{\acutenz}[2][]{\ensuremath{\subp{\acute{n}}{}{#2}{}{#1}}}
\newrobustcmd{\gravenz}[2][]{\ensuremath{\subp{\grave{n}}{}{#2}{}{#1}}}
\newrobustcmd{\dotnz}[2][]{\ensuremath{\subp{\dot{n}}{}{#2}{}{#1}}}
\newrobustcmd{\ddotnz}[2][]{\ensuremath{\subp{\ddot{n}}{}{#2}{}{#1}}}
\newrobustcmd{\brevenz}[2][]{\ensuremath{\subp{\breve{n}}{}{#2}{}{#1}}}
\newrobustcmd{\barnz}[2][]{\ensuremath{\subp{\bar{n}}{}{#2}{}{#1}}}
\newrobustcmd{\vecnz}[2][]{\ensuremath{\subp{\vec{n}}{}{#2}{}{#1}}}
\newrobustcmd{\bmnz}[2][]{\ensuremath{\subp{\bm{n}}{}{#2}{}{#1}}}
\newrobustcmd{\hatbmnz}[2][]{\ensuremath{\subp{\hat{\bm{n}}}{}{#2}{}{#1}}}
\newrobustcmd{\widehatbmnz}[2][]{\ensuremath{\subp{\widehat{\bm{n}}}{}{#2}{}{#1}}}
\newrobustcmd{\checkbmnz}[2][]{\ensuremath{\subp{\check{\bm{n}}}{}{#2}{}{#1}}}
\newrobustcmd{\tildebmnz}[2][]{\ensuremath{\subp{\tilde{\bm{n}}}{}{#2}{}{#1}}}
\newrobustcmd{\widetildebmnz}[2][]{\ensuremath{\subp{\widetilde{\bm{n}}}{}{#2}{}{#1}}}
\newrobustcmd{\acutebmnz}[2][]{\ensuremath{\subp{\acute{\bm{n}}}{}{#2}{}{#1}}}
\newrobustcmd{\gravebmnz}[2][]{\ensuremath{\subp{\grave{\bm{n}}}{}{#2}{}{#1}}}
\newrobustcmd{\dotbmnz}[2][]{\ensuremath{\subp{\dot{\bm{n}}}{}{#2}{}{#1}}}
\newrobustcmd{\ddotbmnz}[2][]{\ensuremath{\subp{\ddot{\bm{n}}}{}{#2}{}{#1}}}
\newrobustcmd{\brevebmnz}[2][]{\ensuremath{\subp{\breve{\bm{n}}}{}{#2}{}{#1}}}
\newrobustcmd{\barbmnz}[2][]{\ensuremath{\subp{\bar{\bm{n}}}{}{#2}{}{#1}}}
\newrobustcmd{\vecbmnz}[2][]{\ensuremath{\subp{\vec{\bm{n}}}{}{#2}{}{#1}}}
\newrobustcmd{\oz}[2][]{\ensuremath{\subp{o}{}{#2}{}{#1}}}
\newrobustcmd{\hatoz}[2][]{\ensuremath{\subp{\hat{o}}{}{#2}{}{#1}}}
\newrobustcmd{\widehatoz}[2][]{\ensuremath{\subp{\widehat{o}}{}{#2}{}{#1}}}
\newrobustcmd{\checkoz}[2][]{\ensuremath{\subp{\check{o}}{}{#2}{}{#1}}}
\newrobustcmd{\tildeoz}[2][]{\ensuremath{\subp{\tilde{o}}{}{#2}{}{#1}}}
\newrobustcmd{\widetildeoz}[2][]{\ensuremath{\subp{\widetilde{o}}{}{#2}{}{#1}}}
\newrobustcmd{\acuteoz}[2][]{\ensuremath{\subp{\acute{o}}{}{#2}{}{#1}}}
\newrobustcmd{\graveoz}[2][]{\ensuremath{\subp{\grave{o}}{}{#2}{}{#1}}}
\newrobustcmd{\dotoz}[2][]{\ensuremath{\subp{\dot{o}}{}{#2}{}{#1}}}
\newrobustcmd{\ddotoz}[2][]{\ensuremath{\subp{\ddot{o}}{}{#2}{}{#1}}}
\newrobustcmd{\breveoz}[2][]{\ensuremath{\subp{\breve{o}}{}{#2}{}{#1}}}
\newrobustcmd{\baroz}[2][]{\ensuremath{\subp{\bar{o}}{}{#2}{}{#1}}}
\newrobustcmd{\vecoz}[2][]{\ensuremath{\subp{\vec{o}}{}{#2}{}{#1}}}
\newrobustcmd{\bmoz}[2][]{\ensuremath{\subp{\bm{o}}{}{#2}{}{#1}}}
\newrobustcmd{\hatbmoz}[2][]{\ensuremath{\subp{\hat{\bm{o}}}{}{#2}{}{#1}}}
\newrobustcmd{\widehatbmoz}[2][]{\ensuremath{\subp{\widehat{\bm{o}}}{}{#2}{}{#1}}}
\newrobustcmd{\checkbmoz}[2][]{\ensuremath{\subp{\check{\bm{o}}}{}{#2}{}{#1}}}
\newrobustcmd{\tildebmoz}[2][]{\ensuremath{\subp{\tilde{\bm{o}}}{}{#2}{}{#1}}}
\newrobustcmd{\widetildebmoz}[2][]{\ensuremath{\subp{\widetilde{\bm{o}}}{}{#2}{}{#1}}}
\newrobustcmd{\acutebmoz}[2][]{\ensuremath{\subp{\acute{\bm{o}}}{}{#2}{}{#1}}}
\newrobustcmd{\gravebmoz}[2][]{\ensuremath{\subp{\grave{\bm{o}}}{}{#2}{}{#1}}}
\newrobustcmd{\dotbmoz}[2][]{\ensuremath{\subp{\dot{\bm{o}}}{}{#2}{}{#1}}}
\newrobustcmd{\ddotbmoz}[2][]{\ensuremath{\subp{\ddot{\bm{o}}}{}{#2}{}{#1}}}
\newrobustcmd{\brevebmoz}[2][]{\ensuremath{\subp{\breve{\bm{o}}}{}{#2}{}{#1}}}
\newrobustcmd{\barbmoz}[2][]{\ensuremath{\subp{\bar{\bm{o}}}{}{#2}{}{#1}}}
\newrobustcmd{\vecbmoz}[2][]{\ensuremath{\subp{\vec{\bm{o}}}{}{#2}{}{#1}}}
\newrobustcmd{\pz}[2][]{\ensuremath{\subp{p}{}{#2}{}{#1}}}
\newrobustcmd{\hatpz}[2][]{\ensuremath{\subp{\hat{p}}{}{#2}{}{#1}}}
\newrobustcmd{\widehatpz}[2][]{\ensuremath{\subp{\widehat{p}}{}{#2}{}{#1}}}
\newrobustcmd{\checkpz}[2][]{\ensuremath{\subp{\check{p}}{}{#2}{}{#1}}}
\newrobustcmd{\tildepz}[2][]{\ensuremath{\subp{\tilde{p}}{}{#2}{}{#1}}}
\newrobustcmd{\widetildepz}[2][]{\ensuremath{\subp{\widetilde{p}}{}{#2}{}{#1}}}
\newrobustcmd{\acutepz}[2][]{\ensuremath{\subp{\acute{p}}{}{#2}{}{#1}}}
\newrobustcmd{\gravepz}[2][]{\ensuremath{\subp{\grave{p}}{}{#2}{}{#1}}}
\newrobustcmd{\dotpz}[2][]{\ensuremath{\subp{\dot{p}}{}{#2}{}{#1}}}
\newrobustcmd{\ddotpz}[2][]{\ensuremath{\subp{\ddot{p}}{}{#2}{}{#1}}}
\newrobustcmd{\brevepz}[2][]{\ensuremath{\subp{\breve{p}}{}{#2}{}{#1}}}
\newrobustcmd{\barpz}[2][]{\ensuremath{\subp{\bar{p}}{}{#2}{}{#1}}}
\newrobustcmd{\vecpz}[2][]{\ensuremath{\subp{\vec{p}}{}{#2}{}{#1}}}
\newrobustcmd{\bmpz}[2][]{\ensuremath{\subp{\bm{p}}{}{#2}{}{#1}}}
\newrobustcmd{\hatbmpz}[2][]{\ensuremath{\subp{\hat{\bm{p}}}{}{#2}{}{#1}}}
\newrobustcmd{\widehatbmpz}[2][]{\ensuremath{\subp{\widehat{\bm{p}}}{}{#2}{}{#1}}}
\newrobustcmd{\checkbmpz}[2][]{\ensuremath{\subp{\check{\bm{p}}}{}{#2}{}{#1}}}
\newrobustcmd{\tildebmpz}[2][]{\ensuremath{\subp{\tilde{\bm{p}}}{}{#2}{}{#1}}}
\newrobustcmd{\widetildebmpz}[2][]{\ensuremath{\subp{\widetilde{\bm{p}}}{}{#2}{}{#1}}}
\newrobustcmd{\acutebmpz}[2][]{\ensuremath{\subp{\acute{\bm{p}}}{}{#2}{}{#1}}}
\newrobustcmd{\gravebmpz}[2][]{\ensuremath{\subp{\grave{\bm{p}}}{}{#2}{}{#1}}}
\newrobustcmd{\dotbmpz}[2][]{\ensuremath{\subp{\dot{\bm{p}}}{}{#2}{}{#1}}}
\newrobustcmd{\ddotbmpz}[2][]{\ensuremath{\subp{\ddot{\bm{p}}}{}{#2}{}{#1}}}
\newrobustcmd{\brevebmpz}[2][]{\ensuremath{\subp{\breve{\bm{p}}}{}{#2}{}{#1}}}
\newrobustcmd{\barbmpz}[2][]{\ensuremath{\subp{\bar{\bm{p}}}{}{#2}{}{#1}}}
\newrobustcmd{\vecbmpz}[2][]{\ensuremath{\subp{\vec{\bm{p}}}{}{#2}{}{#1}}}
\newrobustcmd{\qz}[2][]{\ensuremath{\subp{q}{}{#2}{}{#1}}}
\newrobustcmd{\hatqz}[2][]{\ensuremath{\subp{\hat{q}}{}{#2}{}{#1}}}
\newrobustcmd{\widehatqz}[2][]{\ensuremath{\subp{\widehat{q}}{}{#2}{}{#1}}}
\newrobustcmd{\checkqz}[2][]{\ensuremath{\subp{\check{q}}{}{#2}{}{#1}}}
\newrobustcmd{\tildeqz}[2][]{\ensuremath{\subp{\tilde{q}}{}{#2}{}{#1}}}
\newrobustcmd{\widetildeqz}[2][]{\ensuremath{\subp{\widetilde{q}}{}{#2}{}{#1}}}
\newrobustcmd{\acuteqz}[2][]{\ensuremath{\subp{\acute{q}}{}{#2}{}{#1}}}
\newrobustcmd{\graveqz}[2][]{\ensuremath{\subp{\grave{q}}{}{#2}{}{#1}}}
\newrobustcmd{\dotqz}[2][]{\ensuremath{\subp{\dot{q}}{}{#2}{}{#1}}}
\newrobustcmd{\ddotqz}[2][]{\ensuremath{\subp{\ddot{q}}{}{#2}{}{#1}}}
\newrobustcmd{\breveqz}[2][]{\ensuremath{\subp{\breve{q}}{}{#2}{}{#1}}}
\newrobustcmd{\barqz}[2][]{\ensuremath{\subp{\bar{q}}{}{#2}{}{#1}}}
\newrobustcmd{\vecqz}[2][]{\ensuremath{\subp{\vec{q}}{}{#2}{}{#1}}}
\newrobustcmd{\bmqz}[2][]{\ensuremath{\subp{\bm{q}}{}{#2}{}{#1}}}
\newrobustcmd{\hatbmqz}[2][]{\ensuremath{\subp{\hat{\bm{q}}}{}{#2}{}{#1}}}
\newrobustcmd{\widehatbmqz}[2][]{\ensuremath{\subp{\widehat{\bm{q}}}{}{#2}{}{#1}}}
\newrobustcmd{\checkbmqz}[2][]{\ensuremath{\subp{\check{\bm{q}}}{}{#2}{}{#1}}}
\newrobustcmd{\tildebmqz}[2][]{\ensuremath{\subp{\tilde{\bm{q}}}{}{#2}{}{#1}}}
\newrobustcmd{\widetildebmqz}[2][]{\ensuremath{\subp{\widetilde{\bm{q}}}{}{#2}{}{#1}}}
\newrobustcmd{\acutebmqz}[2][]{\ensuremath{\subp{\acute{\bm{q}}}{}{#2}{}{#1}}}
\newrobustcmd{\gravebmqz}[2][]{\ensuremath{\subp{\grave{\bm{q}}}{}{#2}{}{#1}}}
\newrobustcmd{\dotbmqz}[2][]{\ensuremath{\subp{\dot{\bm{q}}}{}{#2}{}{#1}}}
\newrobustcmd{\ddotbmqz}[2][]{\ensuremath{\subp{\ddot{\bm{q}}}{}{#2}{}{#1}}}
\newrobustcmd{\brevebmqz}[2][]{\ensuremath{\subp{\breve{\bm{q}}}{}{#2}{}{#1}}}
\newrobustcmd{\barbmqz}[2][]{\ensuremath{\subp{\bar{\bm{q}}}{}{#2}{}{#1}}}
\newrobustcmd{\vecbmqz}[2][]{\ensuremath{\subp{\vec{\bm{q}}}{}{#2}{}{#1}}}
\newrobustcmd{\rz}[2][]{\ensuremath{\subp{r}{}{#2}{}{#1}}}
\newrobustcmd{\hatrz}[2][]{\ensuremath{\subp{\hat{r}}{}{#2}{}{#1}}}
\newrobustcmd{\widehatrz}[2][]{\ensuremath{\subp{\widehat{r}}{}{#2}{}{#1}}}
\newrobustcmd{\checkrz}[2][]{\ensuremath{\subp{\check{r}}{}{#2}{}{#1}}}
\newrobustcmd{\tilderz}[2][]{\ensuremath{\subp{\tilde{r}}{}{#2}{}{#1}}}
\newrobustcmd{\widetilderz}[2][]{\ensuremath{\subp{\widetilde{r}}{}{#2}{}{#1}}}
\newrobustcmd{\acuterz}[2][]{\ensuremath{\subp{\acute{r}}{}{#2}{}{#1}}}
\newrobustcmd{\graverz}[2][]{\ensuremath{\subp{\grave{r}}{}{#2}{}{#1}}}
\newrobustcmd{\dotrz}[2][]{\ensuremath{\subp{\dot{r}}{}{#2}{}{#1}}}
\newrobustcmd{\ddotrz}[2][]{\ensuremath{\subp{\ddot{r}}{}{#2}{}{#1}}}
\newrobustcmd{\breverz}[2][]{\ensuremath{\subp{\breve{r}}{}{#2}{}{#1}}}
\newrobustcmd{\barrz}[2][]{\ensuremath{\subp{\bar{r}}{}{#2}{}{#1}}}
\newrobustcmd{\vecrz}[2][]{\ensuremath{\subp{\vec{r}}{}{#2}{}{#1}}}
\newrobustcmd{\bmrz}[2][]{\ensuremath{\subp{\bm{r}}{}{#2}{}{#1}}}
\newrobustcmd{\hatbmrz}[2][]{\ensuremath{\subp{\hat{\bm{r}}}{}{#2}{}{#1}}}
\newrobustcmd{\widehatbmrz}[2][]{\ensuremath{\subp{\widehat{\bm{r}}}{}{#2}{}{#1}}}
\newrobustcmd{\checkbmrz}[2][]{\ensuremath{\subp{\check{\bm{r}}}{}{#2}{}{#1}}}
\newrobustcmd{\tildebmrz}[2][]{\ensuremath{\subp{\tilde{\bm{r}}}{}{#2}{}{#1}}}
\newrobustcmd{\widetildebmrz}[2][]{\ensuremath{\subp{\widetilde{\bm{r}}}{}{#2}{}{#1}}}
\newrobustcmd{\acutebmrz}[2][]{\ensuremath{\subp{\acute{\bm{r}}}{}{#2}{}{#1}}}
\newrobustcmd{\gravebmrz}[2][]{\ensuremath{\subp{\grave{\bm{r}}}{}{#2}{}{#1}}}
\newrobustcmd{\dotbmrz}[2][]{\ensuremath{\subp{\dot{\bm{r}}}{}{#2}{}{#1}}}
\newrobustcmd{\ddotbmrz}[2][]{\ensuremath{\subp{\ddot{\bm{r}}}{}{#2}{}{#1}}}
\newrobustcmd{\brevebmrz}[2][]{\ensuremath{\subp{\breve{\bm{r}}}{}{#2}{}{#1}}}
\newrobustcmd{\barbmrz}[2][]{\ensuremath{\subp{\bar{\bm{r}}}{}{#2}{}{#1}}}
\newrobustcmd{\vecbmrz}[2][]{\ensuremath{\subp{\vec{\bm{r}}}{}{#2}{}{#1}}}
\newrobustcmd{\sz}[2][]{\ensuremath{\subp{s}{}{#2}{}{#1}}}
\newrobustcmd{\hatsz}[2][]{\ensuremath{\subp{\hat{s}}{}{#2}{}{#1}}}
\newrobustcmd{\widehatsz}[2][]{\ensuremath{\subp{\widehat{s}}{}{#2}{}{#1}}}
\newrobustcmd{\checksz}[2][]{\ensuremath{\subp{\check{s}}{}{#2}{}{#1}}}
\newrobustcmd{\tildesz}[2][]{\ensuremath{\subp{\tilde{s}}{}{#2}{}{#1}}}
\newrobustcmd{\widetildesz}[2][]{\ensuremath{\subp{\widetilde{s}}{}{#2}{}{#1}}}
\newrobustcmd{\acutesz}[2][]{\ensuremath{\subp{\acute{s}}{}{#2}{}{#1}}}
\newrobustcmd{\gravesz}[2][]{\ensuremath{\subp{\grave{s}}{}{#2}{}{#1}}}
\newrobustcmd{\dotsz}[2][]{\ensuremath{\subp{\dot{s}}{}{#2}{}{#1}}}
\newrobustcmd{\ddotsz}[2][]{\ensuremath{\subp{\ddot{s}}{}{#2}{}{#1}}}
\newrobustcmd{\brevesz}[2][]{\ensuremath{\subp{\breve{s}}{}{#2}{}{#1}}}
\newrobustcmd{\barsz}[2][]{\ensuremath{\subp{\bar{s}}{}{#2}{}{#1}}}
\newrobustcmd{\vecsz}[2][]{\ensuremath{\subp{\vec{s}}{}{#2}{}{#1}}}
\newrobustcmd{\bmsz}[2][]{\ensuremath{\subp{\bm{s}}{}{#2}{}{#1}}}
\newrobustcmd{\hatbmsz}[2][]{\ensuremath{\subp{\hat{\bm{s}}}{}{#2}{}{#1}}}
\newrobustcmd{\widehatbmsz}[2][]{\ensuremath{\subp{\widehat{\bm{s}}}{}{#2}{}{#1}}}
\newrobustcmd{\checkbmsz}[2][]{\ensuremath{\subp{\check{\bm{s}}}{}{#2}{}{#1}}}
\newrobustcmd{\tildebmsz}[2][]{\ensuremath{\subp{\tilde{\bm{s}}}{}{#2}{}{#1}}}
\newrobustcmd{\widetildebmsz}[2][]{\ensuremath{\subp{\widetilde{\bm{s}}}{}{#2}{}{#1}}}
\newrobustcmd{\acutebmsz}[2][]{\ensuremath{\subp{\acute{\bm{s}}}{}{#2}{}{#1}}}
\newrobustcmd{\gravebmsz}[2][]{\ensuremath{\subp{\grave{\bm{s}}}{}{#2}{}{#1}}}
\newrobustcmd{\dotbmsz}[2][]{\ensuremath{\subp{\dot{\bm{s}}}{}{#2}{}{#1}}}
\newrobustcmd{\ddotbmsz}[2][]{\ensuremath{\subp{\ddot{\bm{s}}}{}{#2}{}{#1}}}
\newrobustcmd{\brevebmsz}[2][]{\ensuremath{\subp{\breve{\bm{s}}}{}{#2}{}{#1}}}
\newrobustcmd{\barbmsz}[2][]{\ensuremath{\subp{\bar{\bm{s}}}{}{#2}{}{#1}}}
\newrobustcmd{\vecbmsz}[2][]{\ensuremath{\subp{\vec{\bm{s}}}{}{#2}{}{#1}}}
\newrobustcmd{\tz}[2][]{\ensuremath{\subp{t}{}{#2}{}{#1}}}
\newrobustcmd{\hattz}[2][]{\ensuremath{\subp{\hat{t}}{}{#2}{}{#1}}}
\newrobustcmd{\widehattz}[2][]{\ensuremath{\subp{\widehat{t}}{}{#2}{}{#1}}}
\newrobustcmd{\checktz}[2][]{\ensuremath{\subp{\check{t}}{}{#2}{}{#1}}}
\newrobustcmd{\tildetz}[2][]{\ensuremath{\subp{\tilde{t}}{}{#2}{}{#1}}}
\newrobustcmd{\widetildetz}[2][]{\ensuremath{\subp{\widetilde{t}}{}{#2}{}{#1}}}
\newrobustcmd{\acutetz}[2][]{\ensuremath{\subp{\acute{t}}{}{#2}{}{#1}}}
\newrobustcmd{\gravetz}[2][]{\ensuremath{\subp{\grave{t}}{}{#2}{}{#1}}}
\newrobustcmd{\dottz}[2][]{\ensuremath{\subp{\dot{t}}{}{#2}{}{#1}}}
\newrobustcmd{\ddottz}[2][]{\ensuremath{\subp{\ddot{t}}{}{#2}{}{#1}}}
\newrobustcmd{\brevetz}[2][]{\ensuremath{\subp{\breve{t}}{}{#2}{}{#1}}}
\newrobustcmd{\bartz}[2][]{\ensuremath{\subp{\bar{t}}{}{#2}{}{#1}}}
\newrobustcmd{\vectz}[2][]{\ensuremath{\subp{\vec{t}}{}{#2}{}{#1}}}
\newrobustcmd{\bmtz}[2][]{\ensuremath{\subp{\bm{t}}{}{#2}{}{#1}}}
\newrobustcmd{\hatbmtz}[2][]{\ensuremath{\subp{\hat{\bm{t}}}{}{#2}{}{#1}}}
\newrobustcmd{\widehatbmtz}[2][]{\ensuremath{\subp{\widehat{\bm{t}}}{}{#2}{}{#1}}}
\newrobustcmd{\checkbmtz}[2][]{\ensuremath{\subp{\check{\bm{t}}}{}{#2}{}{#1}}}
\newrobustcmd{\tildebmtz}[2][]{\ensuremath{\subp{\tilde{\bm{t}}}{}{#2}{}{#1}}}
\newrobustcmd{\widetildebmtz}[2][]{\ensuremath{\subp{\widetilde{\bm{t}}}{}{#2}{}{#1}}}
\newrobustcmd{\acutebmtz}[2][]{\ensuremath{\subp{\acute{\bm{t}}}{}{#2}{}{#1}}}
\newrobustcmd{\gravebmtz}[2][]{\ensuremath{\subp{\grave{\bm{t}}}{}{#2}{}{#1}}}
\newrobustcmd{\dotbmtz}[2][]{\ensuremath{\subp{\dot{\bm{t}}}{}{#2}{}{#1}}}
\newrobustcmd{\ddotbmtz}[2][]{\ensuremath{\subp{\ddot{\bm{t}}}{}{#2}{}{#1}}}
\newrobustcmd{\brevebmtz}[2][]{\ensuremath{\subp{\breve{\bm{t}}}{}{#2}{}{#1}}}
\newrobustcmd{\barbmtz}[2][]{\ensuremath{\subp{\bar{\bm{t}}}{}{#2}{}{#1}}}
\newrobustcmd{\vecbmtz}[2][]{\ensuremath{\subp{\vec{\bm{t}}}{}{#2}{}{#1}}}
\newrobustcmd{\uz}[2][]{\ensuremath{\subp{u}{}{#2}{}{#1}}}
\newrobustcmd{\hatuz}[2][]{\ensuremath{\subp{\hat{u}}{}{#2}{}{#1}}}
\newrobustcmd{\widehatuz}[2][]{\ensuremath{\subp{\widehat{u}}{}{#2}{}{#1}}}
\newrobustcmd{\checkuz}[2][]{\ensuremath{\subp{\check{u}}{}{#2}{}{#1}}}
\newrobustcmd{\tildeuz}[2][]{\ensuremath{\subp{\tilde{u}}{}{#2}{}{#1}}}
\newrobustcmd{\widetildeuz}[2][]{\ensuremath{\subp{\widetilde{u}}{}{#2}{}{#1}}}
\newrobustcmd{\acuteuz}[2][]{\ensuremath{\subp{\acute{u}}{}{#2}{}{#1}}}
\newrobustcmd{\graveuz}[2][]{\ensuremath{\subp{\grave{u}}{}{#2}{}{#1}}}
\newrobustcmd{\dotuz}[2][]{\ensuremath{\subp{\dot{u}}{}{#2}{}{#1}}}
\newrobustcmd{\ddotuz}[2][]{\ensuremath{\subp{\ddot{u}}{}{#2}{}{#1}}}
\newrobustcmd{\breveuz}[2][]{\ensuremath{\subp{\breve{u}}{}{#2}{}{#1}}}
\newrobustcmd{\baruz}[2][]{\ensuremath{\subp{\bar{u}}{}{#2}{}{#1}}}
\newrobustcmd{\vecuz}[2][]{\ensuremath{\subp{\vec{u}}{}{#2}{}{#1}}}
\newrobustcmd{\bmuz}[2][]{\ensuremath{\subp{\bm{u}}{}{#2}{}{#1}}}
\newrobustcmd{\hatbmuz}[2][]{\ensuremath{\subp{\hat{\bm{u}}}{}{#2}{}{#1}}}
\newrobustcmd{\widehatbmuz}[2][]{\ensuremath{\subp{\widehat{\bm{u}}}{}{#2}{}{#1}}}
\newrobustcmd{\checkbmuz}[2][]{\ensuremath{\subp{\check{\bm{u}}}{}{#2}{}{#1}}}
\newrobustcmd{\tildebmuz}[2][]{\ensuremath{\subp{\tilde{\bm{u}}}{}{#2}{}{#1}}}
\newrobustcmd{\widetildebmuz}[2][]{\ensuremath{\subp{\widetilde{\bm{u}}}{}{#2}{}{#1}}}
\newrobustcmd{\acutebmuz}[2][]{\ensuremath{\subp{\acute{\bm{u}}}{}{#2}{}{#1}}}
\newrobustcmd{\gravebmuz}[2][]{\ensuremath{\subp{\grave{\bm{u}}}{}{#2}{}{#1}}}
\newrobustcmd{\dotbmuz}[2][]{\ensuremath{\subp{\dot{\bm{u}}}{}{#2}{}{#1}}}
\newrobustcmd{\ddotbmuz}[2][]{\ensuremath{\subp{\ddot{\bm{u}}}{}{#2}{}{#1}}}
\newrobustcmd{\brevebmuz}[2][]{\ensuremath{\subp{\breve{\bm{u}}}{}{#2}{}{#1}}}
\newrobustcmd{\barbmuz}[2][]{\ensuremath{\subp{\bar{\bm{u}}}{}{#2}{}{#1}}}
\newrobustcmd{\vecbmuz}[2][]{\ensuremath{\subp{\vec{\bm{u}}}{}{#2}{}{#1}}}
\newrobustcmd{\vz}[2][]{\ensuremath{\subp{v}{}{#2}{}{#1}}}
\newrobustcmd{\hatvz}[2][]{\ensuremath{\subp{\hat{v}}{}{#2}{}{#1}}}
\newrobustcmd{\widehatvz}[2][]{\ensuremath{\subp{\widehat{v}}{}{#2}{}{#1}}}
\newrobustcmd{\checkvz}[2][]{\ensuremath{\subp{\check{v}}{}{#2}{}{#1}}}
\newrobustcmd{\tildevz}[2][]{\ensuremath{\subp{\tilde{v}}{}{#2}{}{#1}}}
\newrobustcmd{\widetildevz}[2][]{\ensuremath{\subp{\widetilde{v}}{}{#2}{}{#1}}}
\newrobustcmd{\acutevz}[2][]{\ensuremath{\subp{\acute{v}}{}{#2}{}{#1}}}
\newrobustcmd{\gravevz}[2][]{\ensuremath{\subp{\grave{v}}{}{#2}{}{#1}}}
\newrobustcmd{\dotvz}[2][]{\ensuremath{\subp{\dot{v}}{}{#2}{}{#1}}}
\newrobustcmd{\ddotvz}[2][]{\ensuremath{\subp{\ddot{v}}{}{#2}{}{#1}}}
\newrobustcmd{\brevevz}[2][]{\ensuremath{\subp{\breve{v}}{}{#2}{}{#1}}}
\newrobustcmd{\barvz}[2][]{\ensuremath{\subp{\bar{v}}{}{#2}{}{#1}}}
\newrobustcmd{\vecvz}[2][]{\ensuremath{\subp{\vec{v}}{}{#2}{}{#1}}}
\newrobustcmd{\bmvz}[2][]{\ensuremath{\subp{\bm{v}}{}{#2}{}{#1}}}
\newrobustcmd{\hatbmvz}[2][]{\ensuremath{\subp{\hat{\bm{v}}}{}{#2}{}{#1}}}
\newrobustcmd{\widehatbmvz}[2][]{\ensuremath{\subp{\widehat{\bm{v}}}{}{#2}{}{#1}}}
\newrobustcmd{\checkbmvz}[2][]{\ensuremath{\subp{\check{\bm{v}}}{}{#2}{}{#1}}}
\newrobustcmd{\tildebmvz}[2][]{\ensuremath{\subp{\tilde{\bm{v}}}{}{#2}{}{#1}}}
\newrobustcmd{\widetildebmvz}[2][]{\ensuremath{\subp{\widetilde{\bm{v}}}{}{#2}{}{#1}}}
\newrobustcmd{\acutebmvz}[2][]{\ensuremath{\subp{\acute{\bm{v}}}{}{#2}{}{#1}}}
\newrobustcmd{\gravebmvz}[2][]{\ensuremath{\subp{\grave{\bm{v}}}{}{#2}{}{#1}}}
\newrobustcmd{\dotbmvz}[2][]{\ensuremath{\subp{\dot{\bm{v}}}{}{#2}{}{#1}}}
\newrobustcmd{\ddotbmvz}[2][]{\ensuremath{\subp{\ddot{\bm{v}}}{}{#2}{}{#1}}}
\newrobustcmd{\brevebmvz}[2][]{\ensuremath{\subp{\breve{\bm{v}}}{}{#2}{}{#1}}}
\newrobustcmd{\barbmvz}[2][]{\ensuremath{\subp{\bar{\bm{v}}}{}{#2}{}{#1}}}
\newrobustcmd{\vecbmvz}[2][]{\ensuremath{\subp{\vec{\bm{v}}}{}{#2}{}{#1}}}
\newrobustcmd{\wz}[2][]{\ensuremath{\subp{w}{}{#2}{}{#1}}}
\newrobustcmd{\hatwz}[2][]{\ensuremath{\subp{\hat{w}}{}{#2}{}{#1}}}
\newrobustcmd{\widehatwz}[2][]{\ensuremath{\subp{\widehat{w}}{}{#2}{}{#1}}}
\newrobustcmd{\checkwz}[2][]{\ensuremath{\subp{\check{w}}{}{#2}{}{#1}}}
\newrobustcmd{\tildewz}[2][]{\ensuremath{\subp{\tilde{w}}{}{#2}{}{#1}}}
\newrobustcmd{\widetildewz}[2][]{\ensuremath{\subp{\widetilde{w}}{}{#2}{}{#1}}}
\newrobustcmd{\acutewz}[2][]{\ensuremath{\subp{\acute{w}}{}{#2}{}{#1}}}
\newrobustcmd{\gravewz}[2][]{\ensuremath{\subp{\grave{w}}{}{#2}{}{#1}}}
\newrobustcmd{\dotwz}[2][]{\ensuremath{\subp{\dot{w}}{}{#2}{}{#1}}}
\newrobustcmd{\ddotwz}[2][]{\ensuremath{\subp{\ddot{w}}{}{#2}{}{#1}}}
\newrobustcmd{\brevewz}[2][]{\ensuremath{\subp{\breve{w}}{}{#2}{}{#1}}}
\newrobustcmd{\barwz}[2][]{\ensuremath{\subp{\bar{w}}{}{#2}{}{#1}}}
\newrobustcmd{\vecwz}[2][]{\ensuremath{\subp{\vec{w}}{}{#2}{}{#1}}}
\newrobustcmd{\bmwz}[2][]{\ensuremath{\subp{\bm{w}}{}{#2}{}{#1}}}
\newrobustcmd{\hatbmwz}[2][]{\ensuremath{\subp{\hat{\bm{w}}}{}{#2}{}{#1}}}
\newrobustcmd{\widehatbmwz}[2][]{\ensuremath{\subp{\widehat{\bm{w}}}{}{#2}{}{#1}}}
\newrobustcmd{\checkbmwz}[2][]{\ensuremath{\subp{\check{\bm{w}}}{}{#2}{}{#1}}}
\newrobustcmd{\tildebmwz}[2][]{\ensuremath{\subp{\tilde{\bm{w}}}{}{#2}{}{#1}}}
\newrobustcmd{\widetildebmwz}[2][]{\ensuremath{\subp{\widetilde{\bm{w}}}{}{#2}{}{#1}}}
\newrobustcmd{\acutebmwz}[2][]{\ensuremath{\subp{\acute{\bm{w}}}{}{#2}{}{#1}}}
\newrobustcmd{\gravebmwz}[2][]{\ensuremath{\subp{\grave{\bm{w}}}{}{#2}{}{#1}}}
\newrobustcmd{\dotbmwz}[2][]{\ensuremath{\subp{\dot{\bm{w}}}{}{#2}{}{#1}}}
\newrobustcmd{\ddotbmwz}[2][]{\ensuremath{\subp{\ddot{\bm{w}}}{}{#2}{}{#1}}}
\newrobustcmd{\brevebmwz}[2][]{\ensuremath{\subp{\breve{\bm{w}}}{}{#2}{}{#1}}}
\newrobustcmd{\barbmwz}[2][]{\ensuremath{\subp{\bar{\bm{w}}}{}{#2}{}{#1}}}
\newrobustcmd{\vecbmwz}[2][]{\ensuremath{\subp{\vec{\bm{w}}}{}{#2}{}{#1}}}
\newrobustcmd{\xz}[2][]{\ensuremath{\subp{x}{}{#2}{}{#1}}}
\newrobustcmd{\hatxz}[2][]{\ensuremath{\subp{\hat{x}}{}{#2}{}{#1}}}
\newrobustcmd{\widehatxz}[2][]{\ensuremath{\subp{\widehat{x}}{}{#2}{}{#1}}}
\newrobustcmd{\checkxz}[2][]{\ensuremath{\subp{\check{x}}{}{#2}{}{#1}}}
\newrobustcmd{\tildexz}[2][]{\ensuremath{\subp{\tilde{x}}{}{#2}{}{#1}}}
\newrobustcmd{\widetildexz}[2][]{\ensuremath{\subp{\widetilde{x}}{}{#2}{}{#1}}}
\newrobustcmd{\acutexz}[2][]{\ensuremath{\subp{\acute{x}}{}{#2}{}{#1}}}
\newrobustcmd{\gravexz}[2][]{\ensuremath{\subp{\grave{x}}{}{#2}{}{#1}}}
\newrobustcmd{\dotxz}[2][]{\ensuremath{\subp{\dot{x}}{}{#2}{}{#1}}}
\newrobustcmd{\ddotxz}[2][]{\ensuremath{\subp{\ddot{x}}{}{#2}{}{#1}}}
\newrobustcmd{\brevexz}[2][]{\ensuremath{\subp{\breve{x}}{}{#2}{}{#1}}}
\newrobustcmd{\barxz}[2][]{\ensuremath{\subp{\bar{x}}{}{#2}{}{#1}}}
\newrobustcmd{\vecxz}[2][]{\ensuremath{\subp{\vec{x}}{}{#2}{}{#1}}}
\newrobustcmd{\bmxz}[2][]{\ensuremath{\subp{\bm{x}}{}{#2}{}{#1}}}
\newrobustcmd{\hatbmxz}[2][]{\ensuremath{\subp{\hat{\bm{x}}}{}{#2}{}{#1}}}
\newrobustcmd{\widehatbmxz}[2][]{\ensuremath{\subp{\widehat{\bm{x}}}{}{#2}{}{#1}}}
\newrobustcmd{\checkbmxz}[2][]{\ensuremath{\subp{\check{\bm{x}}}{}{#2}{}{#1}}}
\newrobustcmd{\tildebmxz}[2][]{\ensuremath{\subp{\tilde{\bm{x}}}{}{#2}{}{#1}}}
\newrobustcmd{\widetildebmxz}[2][]{\ensuremath{\subp{\widetilde{\bm{x}}}{}{#2}{}{#1}}}
\newrobustcmd{\acutebmxz}[2][]{\ensuremath{\subp{\acute{\bm{x}}}{}{#2}{}{#1}}}
\newrobustcmd{\gravebmxz}[2][]{\ensuremath{\subp{\grave{\bm{x}}}{}{#2}{}{#1}}}
\newrobustcmd{\dotbmxz}[2][]{\ensuremath{\subp{\dot{\bm{x}}}{}{#2}{}{#1}}}
\newrobustcmd{\ddotbmxz}[2][]{\ensuremath{\subp{\ddot{\bm{x}}}{}{#2}{}{#1}}}
\newrobustcmd{\brevebmxz}[2][]{\ensuremath{\subp{\breve{\bm{x}}}{}{#2}{}{#1}}}
\newrobustcmd{\barbmxz}[2][]{\ensuremath{\subp{\bar{\bm{x}}}{}{#2}{}{#1}}}
\newrobustcmd{\vecbmxz}[2][]{\ensuremath{\subp{\vec{\bm{x}}}{}{#2}{}{#1}}}
\newrobustcmd{\yz}[2][]{\ensuremath{\subp{y}{}{#2}{}{#1}}}
\newrobustcmd{\hatyz}[2][]{\ensuremath{\subp{\hat{y}}{}{#2}{}{#1}}}
\newrobustcmd{\widehatyz}[2][]{\ensuremath{\subp{\widehat{y}}{}{#2}{}{#1}}}
\newrobustcmd{\checkyz}[2][]{\ensuremath{\subp{\check{y}}{}{#2}{}{#1}}}
\newrobustcmd{\tildeyz}[2][]{\ensuremath{\subp{\tilde{y}}{}{#2}{}{#1}}}
\newrobustcmd{\widetildeyz}[2][]{\ensuremath{\subp{\widetilde{y}}{}{#2}{}{#1}}}
\newrobustcmd{\acuteyz}[2][]{\ensuremath{\subp{\acute{y}}{}{#2}{}{#1}}}
\newrobustcmd{\graveyz}[2][]{\ensuremath{\subp{\grave{y}}{}{#2}{}{#1}}}
\newrobustcmd{\dotyz}[2][]{\ensuremath{\subp{\dot{y}}{}{#2}{}{#1}}}
\newrobustcmd{\ddotyz}[2][]{\ensuremath{\subp{\ddot{y}}{}{#2}{}{#1}}}
\newrobustcmd{\breveyz}[2][]{\ensuremath{\subp{\breve{y}}{}{#2}{}{#1}}}
\newrobustcmd{\baryz}[2][]{\ensuremath{\subp{\bar{y}}{}{#2}{}{#1}}}
\newrobustcmd{\vecyz}[2][]{\ensuremath{\subp{\vec{y}}{}{#2}{}{#1}}}
\newrobustcmd{\bmyz}[2][]{\ensuremath{\subp{\bm{y}}{}{#2}{}{#1}}}
\newrobustcmd{\hatbmyz}[2][]{\ensuremath{\subp{\hat{\bm{y}}}{}{#2}{}{#1}}}
\newrobustcmd{\widehatbmyz}[2][]{\ensuremath{\subp{\widehat{\bm{y}}}{}{#2}{}{#1}}}
\newrobustcmd{\checkbmyz}[2][]{\ensuremath{\subp{\check{\bm{y}}}{}{#2}{}{#1}}}
\newrobustcmd{\tildebmyz}[2][]{\ensuremath{\subp{\tilde{\bm{y}}}{}{#2}{}{#1}}}
\newrobustcmd{\widetildebmyz}[2][]{\ensuremath{\subp{\widetilde{\bm{y}}}{}{#2}{}{#1}}}
\newrobustcmd{\acutebmyz}[2][]{\ensuremath{\subp{\acute{\bm{y}}}{}{#2}{}{#1}}}
\newrobustcmd{\gravebmyz}[2][]{\ensuremath{\subp{\grave{\bm{y}}}{}{#2}{}{#1}}}
\newrobustcmd{\dotbmyz}[2][]{\ensuremath{\subp{\dot{\bm{y}}}{}{#2}{}{#1}}}
\newrobustcmd{\ddotbmyz}[2][]{\ensuremath{\subp{\ddot{\bm{y}}}{}{#2}{}{#1}}}
\newrobustcmd{\brevebmyz}[2][]{\ensuremath{\subp{\breve{\bm{y}}}{}{#2}{}{#1}}}
\newrobustcmd{\barbmyz}[2][]{\ensuremath{\subp{\bar{\bm{y}}}{}{#2}{}{#1}}}
\newrobustcmd{\vecbmyz}[2][]{\ensuremath{\subp{\vec{\bm{y}}}{}{#2}{}{#1}}}
\newrobustcmd{\zz}[2][]{\ensuremath{\subp{z}{}{#2}{}{#1}}}
\newrobustcmd{\hatzz}[2][]{\ensuremath{\subp{\hat{z}}{}{#2}{}{#1}}}
\newrobustcmd{\widehatzz}[2][]{\ensuremath{\subp{\widehat{z}}{}{#2}{}{#1}}}
\newrobustcmd{\checkzz}[2][]{\ensuremath{\subp{\check{z}}{}{#2}{}{#1}}}
\newrobustcmd{\tildezz}[2][]{\ensuremath{\subp{\tilde{z}}{}{#2}{}{#1}}}
\newrobustcmd{\widetildezz}[2][]{\ensuremath{\subp{\widetilde{z}}{}{#2}{}{#1}}}
\newrobustcmd{\acutezz}[2][]{\ensuremath{\subp{\acute{z}}{}{#2}{}{#1}}}
\newrobustcmd{\gravezz}[2][]{\ensuremath{\subp{\grave{z}}{}{#2}{}{#1}}}
\newrobustcmd{\dotzz}[2][]{\ensuremath{\subp{\dot{z}}{}{#2}{}{#1}}}
\newrobustcmd{\ddotzz}[2][]{\ensuremath{\subp{\ddot{z}}{}{#2}{}{#1}}}
\newrobustcmd{\brevezz}[2][]{\ensuremath{\subp{\breve{z}}{}{#2}{}{#1}}}
\newrobustcmd{\barzz}[2][]{\ensuremath{\subp{\bar{z}}{}{#2}{}{#1}}}
\newrobustcmd{\veczz}[2][]{\ensuremath{\subp{\vec{z}}{}{#2}{}{#1}}}
\newrobustcmd{\bmzz}[2][]{\ensuremath{\subp{\bm{z}}{}{#2}{}{#1}}}
\newrobustcmd{\hatbmzz}[2][]{\ensuremath{\subp{\hat{\bm{z}}}{}{#2}{}{#1}}}
\newrobustcmd{\widehatbmzz}[2][]{\ensuremath{\subp{\widehat{\bm{z}}}{}{#2}{}{#1}}}
\newrobustcmd{\checkbmzz}[2][]{\ensuremath{\subp{\check{\bm{z}}}{}{#2}{}{#1}}}
\newrobustcmd{\tildebmzz}[2][]{\ensuremath{\subp{\tilde{\bm{z}}}{}{#2}{}{#1}}}
\newrobustcmd{\widetildebmzz}[2][]{\ensuremath{\subp{\widetilde{\bm{z}}}{}{#2}{}{#1}}}
\newrobustcmd{\acutebmzz}[2][]{\ensuremath{\subp{\acute{\bm{z}}}{}{#2}{}{#1}}}
\newrobustcmd{\gravebmzz}[2][]{\ensuremath{\subp{\grave{\bm{z}}}{}{#2}{}{#1}}}
\newrobustcmd{\dotbmzz}[2][]{\ensuremath{\subp{\dot{\bm{z}}}{}{#2}{}{#1}}}
\newrobustcmd{\ddotbmzz}[2][]{\ensuremath{\subp{\ddot{\bm{z}}}{}{#2}{}{#1}}}
\newrobustcmd{\brevebmzz}[2][]{\ensuremath{\subp{\breve{\bm{z}}}{}{#2}{}{#1}}}
\newrobustcmd{\barbmzz}[2][]{\ensuremath{\subp{\bar{\bm{z}}}{}{#2}{}{#1}}}
\newrobustcmd{\vecbmzz}[2][]{\ensuremath{\subp{\vec{\bm{z}}}{}{#2}{}{#1}}}

\newrobustcmd{\Az}[2][]{\ensuremath{\subp{A}{}{#2}{}{#1}}}
\newrobustcmd{\hatAz}[2][]{\ensuremath{\subp{\hat{A}}{}{#2}{}{#1}}}
\newrobustcmd{\widehatAz}[2][]{\ensuremath{\subp{\widehat{A}}{}{#2}{}{#1}}}
\newrobustcmd{\checkAz}[2][]{\ensuremath{\subp{\check{A}}{}{#2}{}{#1}}}
\newrobustcmd{\tildeAz}[2][]{\ensuremath{\subp{\tilde{A}}{}{#2}{}{#1}}}
\newrobustcmd{\widetildeAz}[2][]{\ensuremath{\subp{\widetilde{A}}{}{#2}{}{#1}}}
\newrobustcmd{\acuteAz}[2][]{\ensuremath{\subp{\acute{A}}{}{#2}{}{#1}}}
\newrobustcmd{\graveAz}[2][]{\ensuremath{\subp{\grave{A}}{}{#2}{}{#1}}}
\newrobustcmd{\dotAz}[2][]{\ensuremath{\subp{\dot{A}}{}{#2}{}{#1}}}
\newrobustcmd{\ddotAz}[2][]{\ensuremath{\subp{\ddot{A}}{}{#2}{}{#1}}}
\newrobustcmd{\breveAz}[2][]{\ensuremath{\subp{\breve{A}}{}{#2}{}{#1}}}
\newrobustcmd{\barAz}[2][]{\ensuremath{\subp{\bar{A}}{}{#2}{}{#1}}}
\newrobustcmd{\vecAz}[2][]{\ensuremath{\subp{\vec{A}}{}{#2}{}{#1}}}
\newrobustcmd{\bmAz}[2][]{\ensuremath{\subp{\bm{A}}{}{#2}{}{#1}}}
\newrobustcmd{\hatbmAz}[2][]{\ensuremath{\subp{\hat{\bm{A}}}{}{#2}{}{#1}}}
\newrobustcmd{\widehatbmAz}[2][]{\ensuremath{\subp{\widehat{\bm{A}}}{}{#2}{}{#1}}}
\newrobustcmd{\checkbmAz}[2][]{\ensuremath{\subp{\check{\bm{A}}}{}{#2}{}{#1}}}
\newrobustcmd{\tildebmAz}[2][]{\ensuremath{\subp{\tilde{\bm{A}}}{}{#2}{}{#1}}}
\newrobustcmd{\widetildebmAz}[2][]{\ensuremath{\subp{\widetilde{\bm{A}}}{}{#2}{}{#1}}}
\newrobustcmd{\acutebmAz}[2][]{\ensuremath{\subp{\acute{\bm{A}}}{}{#2}{}{#1}}}
\newrobustcmd{\gravebmAz}[2][]{\ensuremath{\subp{\grave{\bm{A}}}{}{#2}{}{#1}}}
\newrobustcmd{\dotbmAz}[2][]{\ensuremath{\subp{\dot{\bm{A}}}{}{#2}{}{#1}}}
\newrobustcmd{\ddotbmAz}[2][]{\ensuremath{\subp{\ddot{\bm{A}}}{}{#2}{}{#1}}}
\newrobustcmd{\brevebmAz}[2][]{\ensuremath{\subp{\breve{\bm{A}}}{}{#2}{}{#1}}}
\newrobustcmd{\barbmAz}[2][]{\ensuremath{\subp{\bar{\bm{A}}}{}{#2}{}{#1}}}
\newrobustcmd{\vecbmAz}[2][]{\ensuremath{\subp{\vec{\bm{A}}}{}{#2}{}{#1}}}
\newrobustcmd{\Bz}[2][]{\ensuremath{\subp{B}{}{#2}{}{#1}}}
\newrobustcmd{\hatBz}[2][]{\ensuremath{\subp{\hat{B}}{}{#2}{}{#1}}}
\newrobustcmd{\widehatBz}[2][]{\ensuremath{\subp{\widehat{B}}{}{#2}{}{#1}}}
\newrobustcmd{\checkBz}[2][]{\ensuremath{\subp{\check{B}}{}{#2}{}{#1}}}
\newrobustcmd{\tildeBz}[2][]{\ensuremath{\subp{\tilde{B}}{}{#2}{}{#1}}}
\newrobustcmd{\widetildeBz}[2][]{\ensuremath{\subp{\widetilde{B}}{}{#2}{}{#1}}}
\newrobustcmd{\acuteBz}[2][]{\ensuremath{\subp{\acute{B}}{}{#2}{}{#1}}}
\newrobustcmd{\graveBz}[2][]{\ensuremath{\subp{\grave{B}}{}{#2}{}{#1}}}
\newrobustcmd{\dotBz}[2][]{\ensuremath{\subp{\dot{B}}{}{#2}{}{#1}}}
\newrobustcmd{\ddotBz}[2][]{\ensuremath{\subp{\ddot{B}}{}{#2}{}{#1}}}
\newrobustcmd{\breveBz}[2][]{\ensuremath{\subp{\breve{B}}{}{#2}{}{#1}}}
\newrobustcmd{\barBz}[2][]{\ensuremath{\subp{\bar{B}}{}{#2}{}{#1}}}
\newrobustcmd{\vecBz}[2][]{\ensuremath{\subp{\vec{B}}{}{#2}{}{#1}}}
\newrobustcmd{\bmBz}[2][]{\ensuremath{\subp{\bm{B}}{}{#2}{}{#1}}}
\newrobustcmd{\hatbmBz}[2][]{\ensuremath{\subp{\hat{\bm{B}}}{}{#2}{}{#1}}}
\newrobustcmd{\widehatbmBz}[2][]{\ensuremath{\subp{\widehat{\bm{B}}}{}{#2}{}{#1}}}
\newrobustcmd{\checkbmBz}[2][]{\ensuremath{\subp{\check{\bm{B}}}{}{#2}{}{#1}}}
\newrobustcmd{\tildebmBz}[2][]{\ensuremath{\subp{\tilde{\bm{B}}}{}{#2}{}{#1}}}
\newrobustcmd{\widetildebmBz}[2][]{\ensuremath{\subp{\widetilde{\bm{B}}}{}{#2}{}{#1}}}
\newrobustcmd{\acutebmBz}[2][]{\ensuremath{\subp{\acute{\bm{B}}}{}{#2}{}{#1}}}
\newrobustcmd{\gravebmBz}[2][]{\ensuremath{\subp{\grave{\bm{B}}}{}{#2}{}{#1}}}
\newrobustcmd{\dotbmBz}[2][]{\ensuremath{\subp{\dot{\bm{B}}}{}{#2}{}{#1}}}
\newrobustcmd{\ddotbmBz}[2][]{\ensuremath{\subp{\ddot{\bm{B}}}{}{#2}{}{#1}}}
\newrobustcmd{\brevebmBz}[2][]{\ensuremath{\subp{\breve{\bm{B}}}{}{#2}{}{#1}}}
\newrobustcmd{\barbmBz}[2][]{\ensuremath{\subp{\bar{\bm{B}}}{}{#2}{}{#1}}}
\newrobustcmd{\vecbmBz}[2][]{\ensuremath{\subp{\vec{\bm{B}}}{}{#2}{}{#1}}}
\newrobustcmd{\Cz}[2][]{\ensuremath{\subp{C}{}{#2}{}{#1}}}
\newrobustcmd{\hatCz}[2][]{\ensuremath{\subp{\hat{C}}{}{#2}{}{#1}}}
\newrobustcmd{\widehatCz}[2][]{\ensuremath{\subp{\widehat{C}}{}{#2}{}{#1}}}
\newrobustcmd{\checkCz}[2][]{\ensuremath{\subp{\check{C}}{}{#2}{}{#1}}}
\newrobustcmd{\tildeCz}[2][]{\ensuremath{\subp{\tilde{C}}{}{#2}{}{#1}}}
\newrobustcmd{\widetildeCz}[2][]{\ensuremath{\subp{\widetilde{C}}{}{#2}{}{#1}}}
\newrobustcmd{\acuteCz}[2][]{\ensuremath{\subp{\acute{C}}{}{#2}{}{#1}}}
\newrobustcmd{\graveCz}[2][]{\ensuremath{\subp{\grave{C}}{}{#2}{}{#1}}}
\newrobustcmd{\dotCz}[2][]{\ensuremath{\subp{\dot{C}}{}{#2}{}{#1}}}
\newrobustcmd{\ddotCz}[2][]{\ensuremath{\subp{\ddot{C}}{}{#2}{}{#1}}}
\newrobustcmd{\breveCz}[2][]{\ensuremath{\subp{\breve{C}}{}{#2}{}{#1}}}
\newrobustcmd{\barCz}[2][]{\ensuremath{\subp{\bar{C}}{}{#2}{}{#1}}}
\newrobustcmd{\vecCz}[2][]{\ensuremath{\subp{\vec{C}}{}{#2}{}{#1}}}
\newrobustcmd{\bmCz}[2][]{\ensuremath{\subp{\bm{C}}{}{#2}{}{#1}}}
\newrobustcmd{\hatbmCz}[2][]{\ensuremath{\subp{\hat{\bm{C}}}{}{#2}{}{#1}}}
\newrobustcmd{\widehatbmCz}[2][]{\ensuremath{\subp{\widehat{\bm{C}}}{}{#2}{}{#1}}}
\newrobustcmd{\checkbmCz}[2][]{\ensuremath{\subp{\check{\bm{C}}}{}{#2}{}{#1}}}
\newrobustcmd{\tildebmCz}[2][]{\ensuremath{\subp{\tilde{\bm{C}}}{}{#2}{}{#1}}}
\newrobustcmd{\widetildebmCz}[2][]{\ensuremath{\subp{\widetilde{\bm{C}}}{}{#2}{}{#1}}}
\newrobustcmd{\acutebmCz}[2][]{\ensuremath{\subp{\acute{\bm{C}}}{}{#2}{}{#1}}}
\newrobustcmd{\gravebmCz}[2][]{\ensuremath{\subp{\grave{\bm{C}}}{}{#2}{}{#1}}}
\newrobustcmd{\dotbmCz}[2][]{\ensuremath{\subp{\dot{\bm{C}}}{}{#2}{}{#1}}}
\newrobustcmd{\ddotbmCz}[2][]{\ensuremath{\subp{\ddot{\bm{C}}}{}{#2}{}{#1}}}
\newrobustcmd{\brevebmCz}[2][]{\ensuremath{\subp{\breve{\bm{C}}}{}{#2}{}{#1}}}
\newrobustcmd{\barbmCz}[2][]{\ensuremath{\subp{\bar{\bm{C}}}{}{#2}{}{#1}}}
\newrobustcmd{\vecbmCz}[2][]{\ensuremath{\subp{\vec{\bm{C}}}{}{#2}{}{#1}}}

\newrobustcmd{\Dz}[2][]{\ensuremath{\subp{D}{}{#2}{}{#1}}}
\newrobustcmd{\hatDz}[2][]{\ensuremath{\subp{\hat{D}}{}{#2}{}{#1}}}
\newrobustcmd{\widehatDz}[2][]{\ensuremath{\subp{\widehat{D}}{}{#2}{}{#1}}}
\newrobustcmd{\checkDz}[2][]{\ensuremath{\subp{\check{D}}{}{#2}{}{#1}}}
\newrobustcmd{\tildeDz}[2][]{\ensuremath{\subp{\tilde{D}}{}{#2}{}{#1}}}
\newrobustcmd{\widetildeDz}[2][]{\ensuremath{\subp{\widetilde{D}}{}{#2}{}{#1}}}
\newrobustcmd{\acuteDz}[2][]{\ensuremath{\subp{\acute{D}}{}{#2}{}{#1}}}
\newrobustcmd{\graveDz}[2][]{\ensuremath{\subp{\grave{D}}{}{#2}{}{#1}}}
\newrobustcmd{\dotDz}[2][]{\ensuremath{\subp{\dot{D}}{}{#2}{}{#1}}}
\newrobustcmd{\ddotDz}[2][]{\ensuremath{\subp{\ddot{D}}{}{#2}{}{#1}}}
\newrobustcmd{\breveDz}[2][]{\ensuremath{\subp{\breve{D}}{}{#2}{}{#1}}}
\newrobustcmd{\barDz}[2][]{\ensuremath{\subp{\bar{D}}{}{#2}{}{#1}}}
\newrobustcmd{\vecDz}[2][]{\ensuremath{\subp{\vec{D}}{}{#2}{}{#1}}}
\newrobustcmd{\bmDz}[2][]{\ensuremath{\subp{\bm{D}}{}{#2}{}{#1}}}
\newrobustcmd{\hatbmDz}[2][]{\ensuremath{\subp{\hat{\bm{D}}}{}{#2}{}{#1}}}
\newrobustcmd{\widehatbmDz}[2][]{\ensuremath{\subp{\widehat{\bm{D}}}{}{#2}{}{#1}}}
\newrobustcmd{\checkbmDz}[2][]{\ensuremath{\subp{\check{\bm{D}}}{}{#2}{}{#1}}}
\newrobustcmd{\tildebmDz}[2][]{\ensuremath{\subp{\tilde{\bm{D}}}{}{#2}{}{#1}}}
\newrobustcmd{\widetildebmDz}[2][]{\ensuremath{\subp{\widetilde{\bm{D}}}{}{#2}{}{#1}}}
\newrobustcmd{\acutebmDz}[2][]{\ensuremath{\subp{\acute{\bm{D}}}{}{#2}{}{#1}}}
\newrobustcmd{\gravebmDz}[2][]{\ensuremath{\subp{\grave{\bm{D}}}{}{#2}{}{#1}}}
\newrobustcmd{\dotbmDz}[2][]{\ensuremath{\subp{\dot{\bm{D}}}{}{#2}{}{#1}}}
\newrobustcmd{\ddotbmDz}[2][]{\ensuremath{\subp{\ddot{\bm{D}}}{}{#2}{}{#1}}}
\newrobustcmd{\brevebmDz}[2][]{\ensuremath{\subp{\breve{\bm{D}}}{}{#2}{}{#1}}}
\newrobustcmd{\barbmDz}[2][]{\ensuremath{\subp{\bar{\bm{D}}}{}{#2}{}{#1}}}
\newrobustcmd{\vecbmDz}[2][]{\ensuremath{\subp{\vec{\bm{D}}}{}{#2}{}{#1}}}
\newrobustcmd{\Ez}[2][]{\ensuremath{\subp{E}{}{#2}{}{#1}}}
\newrobustcmd{\hatEz}[2][]{\ensuremath{\subp{\hat{E}}{}{#2}{}{#1}}}
\newrobustcmd{\widehatEz}[2][]{\ensuremath{\subp{\widehat{E}}{}{#2}{}{#1}}}
\newrobustcmd{\checkEz}[2][]{\ensuremath{\subp{\check{E}}{}{#2}{}{#1}}}
\newrobustcmd{\tildeEz}[2][]{\ensuremath{\subp{\tilde{E}}{}{#2}{}{#1}}}
\newrobustcmd{\widetildeEz}[2][]{\ensuremath{\subp{\widetilde{E}}{}{#2}{}{#1}}}
\newrobustcmd{\acuteEz}[2][]{\ensuremath{\subp{\acute{E}}{}{#2}{}{#1}}}
\newrobustcmd{\graveEz}[2][]{\ensuremath{\subp{\grave{E}}{}{#2}{}{#1}}}
\newrobustcmd{\dotEz}[2][]{\ensuremath{\subp{\dot{E}}{}{#2}{}{#1}}}
\newrobustcmd{\ddotEz}[2][]{\ensuremath{\subp{\ddot{E}}{}{#2}{}{#1}}}
\newrobustcmd{\breveEz}[2][]{\ensuremath{\subp{\breve{E}}{}{#2}{}{#1}}}
\newrobustcmd{\barEz}[2][]{\ensuremath{\subp{\bar{E}}{}{#2}{}{#1}}}
\newrobustcmd{\vecEz}[2][]{\ensuremath{\subp{\vec{E}}{}{#2}{}{#1}}}
\newrobustcmd{\bmEz}[2][]{\ensuremath{\subp{\bm{E}}{}{#2}{}{#1}}}
\newrobustcmd{\hatbmEz}[2][]{\ensuremath{\subp{\hat{\bm{E}}}{}{#2}{}{#1}}}
\newrobustcmd{\widehatbmEz}[2][]{\ensuremath{\subp{\widehat{\bm{E}}}{}{#2}{}{#1}}}
\newrobustcmd{\checkbmEz}[2][]{\ensuremath{\subp{\check{\bm{E}}}{}{#2}{}{#1}}}
\newrobustcmd{\tildebmEz}[2][]{\ensuremath{\subp{\tilde{\bm{E}}}{}{#2}{}{#1}}}
\newrobustcmd{\widetildebmEz}[2][]{\ensuremath{\subp{\widetilde{\bm{E}}}{}{#2}{}{#1}}}
\newrobustcmd{\acutebmEz}[2][]{\ensuremath{\subp{\acute{\bm{E}}}{}{#2}{}{#1}}}
\newrobustcmd{\gravebmEz}[2][]{\ensuremath{\subp{\grave{\bm{E}}}{}{#2}{}{#1}}}
\newrobustcmd{\dotbmEz}[2][]{\ensuremath{\subp{\dot{\bm{E}}}{}{#2}{}{#1}}}
\newrobustcmd{\ddotbmEz}[2][]{\ensuremath{\subp{\ddot{\bm{E}}}{}{#2}{}{#1}}}
\newrobustcmd{\brevebmEz}[2][]{\ensuremath{\subp{\breve{\bm{E}}}{}{#2}{}{#1}}}
\newrobustcmd{\barbmEz}[2][]{\ensuremath{\subp{\bar{\bm{E}}}{}{#2}{}{#1}}}
\newrobustcmd{\vecbmEz}[2][]{\ensuremath{\subp{\vec{\bm{E}}}{}{#2}{}{#1}}}
\newrobustcmd{\Fz}[2][]{\ensuremath{\subp{F}{}{#2}{}{#1}}}
\newrobustcmd{\hatFz}[2][]{\ensuremath{\subp{\hat{F}}{}{#2}{}{#1}}}
\newrobustcmd{\widehatFz}[2][]{\ensuremath{\subp{\widehat{F}}{}{#2}{}{#1}}}
\newrobustcmd{\checkFz}[2][]{\ensuremath{\subp{\check{F}}{}{#2}{}{#1}}}
\newrobustcmd{\tildeFz}[2][]{\ensuremath{\subp{\tilde{F}}{}{#2}{}{#1}}}
\newrobustcmd{\widetildeFz}[2][]{\ensuremath{\subp{\widetilde{F}}{}{#2}{}{#1}}}
\newrobustcmd{\acuteFz}[2][]{\ensuremath{\subp{\acute{F}}{}{#2}{}{#1}}}
\newrobustcmd{\graveFz}[2][]{\ensuremath{\subp{\grave{F}}{}{#2}{}{#1}}}
\newrobustcmd{\dotFz}[2][]{\ensuremath{\subp{\dot{F}}{}{#2}{}{#1}}}
\newrobustcmd{\ddotFz}[2][]{\ensuremath{\subp{\ddot{F}}{}{#2}{}{#1}}}
\newrobustcmd{\breveFz}[2][]{\ensuremath{\subp{\breve{F}}{}{#2}{}{#1}}}
\newrobustcmd{\barFz}[2][]{\ensuremath{\subp{\bar{F}}{}{#2}{}{#1}}}
\newrobustcmd{\vecFz}[2][]{\ensuremath{\subp{\vec{F}}{}{#2}{}{#1}}}
\newrobustcmd{\bmFz}[2][]{\ensuremath{\subp{\bm{F}}{}{#2}{}{#1}}}
\newrobustcmd{\hatbmFz}[2][]{\ensuremath{\subp{\hat{\bm{F}}}{}{#2}{}{#1}}}
\newrobustcmd{\widehatbmFz}[2][]{\ensuremath{\subp{\widehat{\bm{F}}}{}{#2}{}{#1}}}
\newrobustcmd{\checkbmFz}[2][]{\ensuremath{\subp{\check{\bm{F}}}{}{#2}{}{#1}}}
\newrobustcmd{\tildebmFz}[2][]{\ensuremath{\subp{\tilde{\bm{F}}}{}{#2}{}{#1}}}
\newrobustcmd{\widetildebmFz}[2][]{\ensuremath{\subp{\widetilde{\bm{F}}}{}{#2}{}{#1}}}
\newrobustcmd{\acutebmFz}[2][]{\ensuremath{\subp{\acute{\bm{F}}}{}{#2}{}{#1}}}
\newrobustcmd{\gravebmFz}[2][]{\ensuremath{\subp{\grave{\bm{F}}}{}{#2}{}{#1}}}
\newrobustcmd{\dotbmFz}[2][]{\ensuremath{\subp{\dot{\bm{F}}}{}{#2}{}{#1}}}
\newrobustcmd{\ddotbmFz}[2][]{\ensuremath{\subp{\ddot{\bm{F}}}{}{#2}{}{#1}}}
\newrobustcmd{\brevebmFz}[2][]{\ensuremath{\subp{\breve{\bm{F}}}{}{#2}{}{#1}}}
\newrobustcmd{\barbmFz}[2][]{\ensuremath{\subp{\bar{\bm{F}}}{}{#2}{}{#1}}}
\newrobustcmd{\vecbmFz}[2][]{\ensuremath{\subp{\vec{\bm{F}}}{}{#2}{}{#1}}}
\newrobustcmd{\Gz}[2][]{\ensuremath{\subp{G}{}{#2}{}{#1}}}
\newrobustcmd{\hatGz}[2][]{\ensuremath{\subp{\hat{G}}{}{#2}{}{#1}}}
\newrobustcmd{\widehatGz}[2][]{\ensuremath{\subp{\widehat{G}}{}{#2}{}{#1}}}
\newrobustcmd{\checkGz}[2][]{\ensuremath{\subp{\check{G}}{}{#2}{}{#1}}}
\newrobustcmd{\tildeGz}[2][]{\ensuremath{\subp{\tilde{G}}{}{#2}{}{#1}}}
\newrobustcmd{\widetildeGz}[2][]{\ensuremath{\subp{\widetilde{G}}{}{#2}{}{#1}}}
\newrobustcmd{\acuteGz}[2][]{\ensuremath{\subp{\acute{G}}{}{#2}{}{#1}}}
\newrobustcmd{\graveGz}[2][]{\ensuremath{\subp{\grave{G}}{}{#2}{}{#1}}}
\newrobustcmd{\dotGz}[2][]{\ensuremath{\subp{\dot{G}}{}{#2}{}{#1}}}
\newrobustcmd{\ddotGz}[2][]{\ensuremath{\subp{\ddot{G}}{}{#2}{}{#1}}}
\newrobustcmd{\breveGz}[2][]{\ensuremath{\subp{\breve{G}}{}{#2}{}{#1}}}
\newrobustcmd{\barGz}[2][]{\ensuremath{\subp{\bar{G}}{}{#2}{}{#1}}}
\newrobustcmd{\vecGz}[2][]{\ensuremath{\subp{\vec{G}}{}{#2}{}{#1}}}
\newrobustcmd{\bmGz}[2][]{\ensuremath{\subp{\bm{G}}{}{#2}{}{#1}}}
\newrobustcmd{\hatbmGz}[2][]{\ensuremath{\subp{\hat{\bm{G}}}{}{#2}{}{#1}}}
\newrobustcmd{\widehatbmGz}[2][]{\ensuremath{\subp{\widehat{\bm{G}}}{}{#2}{}{#1}}}
\newrobustcmd{\checkbmGz}[2][]{\ensuremath{\subp{\check{\bm{G}}}{}{#2}{}{#1}}}
\newrobustcmd{\tildebmGz}[2][]{\ensuremath{\subp{\tilde{\bm{G}}}{}{#2}{}{#1}}}
\newrobustcmd{\widetildebmGz}[2][]{\ensuremath{\subp{\widetilde{\bm{G}}}{}{#2}{}{#1}}}
\newrobustcmd{\acutebmGz}[2][]{\ensuremath{\subp{\acute{\bm{G}}}{}{#2}{}{#1}}}
\newrobustcmd{\gravebmGz}[2][]{\ensuremath{\subp{\grave{\bm{G}}}{}{#2}{}{#1}}}
\newrobustcmd{\dotbmGz}[2][]{\ensuremath{\subp{\dot{\bm{G}}}{}{#2}{}{#1}}}
\newrobustcmd{\ddotbmGz}[2][]{\ensuremath{\subp{\ddot{\bm{G}}}{}{#2}{}{#1}}}
\newrobustcmd{\brevebmGz}[2][]{\ensuremath{\subp{\breve{\bm{G}}}{}{#2}{}{#1}}}
\newrobustcmd{\barbmGz}[2][]{\ensuremath{\subp{\bar{\bm{G}}}{}{#2}{}{#1}}}
\newrobustcmd{\vecbmGz}[2][]{\ensuremath{\subp{\vec{\bm{G}}}{}{#2}{}{#1}}}
\newrobustcmd{\Hz}[2][]{\ensuremath{\subp{H}{}{#2}{}{#1}}}
\newrobustcmd{\hatHz}[2][]{\ensuremath{\subp{\hat{H}}{}{#2}{}{#1}}}
\newrobustcmd{\widehatHz}[2][]{\ensuremath{\subp{\widehat{H}}{}{#2}{}{#1}}}
\newrobustcmd{\checkHz}[2][]{\ensuremath{\subp{\check{H}}{}{#2}{}{#1}}}
\newrobustcmd{\tildeHz}[2][]{\ensuremath{\subp{\tilde{H}}{}{#2}{}{#1}}}
\newrobustcmd{\widetildeHz}[2][]{\ensuremath{\subp{\widetilde{H}}{}{#2}{}{#1}}}
\newrobustcmd{\acuteHz}[2][]{\ensuremath{\subp{\acute{H}}{}{#2}{}{#1}}}
\newrobustcmd{\graveHz}[2][]{\ensuremath{\subp{\grave{H}}{}{#2}{}{#1}}}
\newrobustcmd{\dotHz}[2][]{\ensuremath{\subp{\dot{H}}{}{#2}{}{#1}}}
\newrobustcmd{\ddotHz}[2][]{\ensuremath{\subp{\ddot{H}}{}{#2}{}{#1}}}
\newrobustcmd{\breveHz}[2][]{\ensuremath{\subp{\breve{H}}{}{#2}{}{#1}}}
\newrobustcmd{\barHz}[2][]{\ensuremath{\subp{\bar{H}}{}{#2}{}{#1}}}
\newrobustcmd{\vecHz}[2][]{\ensuremath{\subp{\vec{H}}{}{#2}{}{#1}}}
\newrobustcmd{\bmHz}[2][]{\ensuremath{\subp{\bm{H}}{}{#2}{}{#1}}}
\newrobustcmd{\hatbmHz}[2][]{\ensuremath{\subp{\hat{\bm{H}}}{}{#2}{}{#1}}}
\newrobustcmd{\widehatbmHz}[2][]{\ensuremath{\subp{\widehat{\bm{H}}}{}{#2}{}{#1}}}
\newrobustcmd{\checkbmHz}[2][]{\ensuremath{\subp{\check{\bm{H}}}{}{#2}{}{#1}}}
\newrobustcmd{\tildebmHz}[2][]{\ensuremath{\subp{\tilde{\bm{H}}}{}{#2}{}{#1}}}
\newrobustcmd{\widetildebmHz}[2][]{\ensuremath{\subp{\widetilde{\bm{H}}}{}{#2}{}{#1}}}
\newrobustcmd{\acutebmHz}[2][]{\ensuremath{\subp{\acute{\bm{H}}}{}{#2}{}{#1}}}
\newrobustcmd{\gravebmHz}[2][]{\ensuremath{\subp{\grave{\bm{H}}}{}{#2}{}{#1}}}
\newrobustcmd{\dotbmHz}[2][]{\ensuremath{\subp{\dot{\bm{H}}}{}{#2}{}{#1}}}
\newrobustcmd{\ddotbmHz}[2][]{\ensuremath{\subp{\ddot{\bm{H}}}{}{#2}{}{#1}}}
\newrobustcmd{\brevebmHz}[2][]{\ensuremath{\subp{\breve{\bm{H}}}{}{#2}{}{#1}}}
\newrobustcmd{\barbmHz}[2][]{\ensuremath{\subp{\bar{\bm{H}}}{}{#2}{}{#1}}}
\newrobustcmd{\vecbmHz}[2][]{\ensuremath{\subp{\vec{\bm{H}}}{}{#2}{}{#1}}}
\newrobustcmd{\Iz}[2][]{\ensuremath{\subp{I}{}{#2}{}{#1}}}
\newrobustcmd{\hatIz}[2][]{\ensuremath{\subp{\hat{I}}{}{#2}{}{#1}}}
\newrobustcmd{\widehatIz}[2][]{\ensuremath{\subp{\widehat{I}}{}{#2}{}{#1}}}
\newrobustcmd{\checkIz}[2][]{\ensuremath{\subp{\check{I}}{}{#2}{}{#1}}}
\newrobustcmd{\tildeIz}[2][]{\ensuremath{\subp{\tilde{I}}{}{#2}{}{#1}}}
\newrobustcmd{\widetildeIz}[2][]{\ensuremath{\subp{\widetilde{I}}{}{#2}{}{#1}}}
\newrobustcmd{\acuteIz}[2][]{\ensuremath{\subp{\acute{I}}{}{#2}{}{#1}}}
\newrobustcmd{\graveIz}[2][]{\ensuremath{\subp{\grave{I}}{}{#2}{}{#1}}}
\newrobustcmd{\dotIz}[2][]{\ensuremath{\subp{\dot{I}}{}{#2}{}{#1}}}
\newrobustcmd{\ddotIz}[2][]{\ensuremath{\subp{\ddot{I}}{}{#2}{}{#1}}}
\newrobustcmd{\breveIz}[2][]{\ensuremath{\subp{\breve{I}}{}{#2}{}{#1}}}
\newrobustcmd{\barIz}[2][]{\ensuremath{\subp{\bar{I}}{}{#2}{}{#1}}}
\newrobustcmd{\vecIz}[2][]{\ensuremath{\subp{\vec{I}}{}{#2}{}{#1}}}
\newrobustcmd{\bmIz}[2][]{\ensuremath{\subp{\bm{I}}{}{#2}{}{#1}}}
\newrobustcmd{\hatbmIz}[2][]{\ensuremath{\subp{\hat{\bm{I}}}{}{#2}{}{#1}}}
\newrobustcmd{\widehatbmIz}[2][]{\ensuremath{\subp{\widehat{\bm{I}}}{}{#2}{}{#1}}}
\newrobustcmd{\checkbmIz}[2][]{\ensuremath{\subp{\check{\bm{I}}}{}{#2}{}{#1}}}
\newrobustcmd{\tildebmIz}[2][]{\ensuremath{\subp{\tilde{\bm{I}}}{}{#2}{}{#1}}}
\newrobustcmd{\widetildebmIz}[2][]{\ensuremath{\subp{\widetilde{\bm{I}}}{}{#2}{}{#1}}}
\newrobustcmd{\acutebmIz}[2][]{\ensuremath{\subp{\acute{\bm{I}}}{}{#2}{}{#1}}}
\newrobustcmd{\gravebmIz}[2][]{\ensuremath{\subp{\grave{\bm{I}}}{}{#2}{}{#1}}}
\newrobustcmd{\dotbmIz}[2][]{\ensuremath{\subp{\dot{\bm{I}}}{}{#2}{}{#1}}}
\newrobustcmd{\ddotbmIz}[2][]{\ensuremath{\subp{\ddot{\bm{I}}}{}{#2}{}{#1}}}
\newrobustcmd{\brevebmIz}[2][]{\ensuremath{\subp{\breve{\bm{I}}}{}{#2}{}{#1}}}
\newrobustcmd{\barbmIz}[2][]{\ensuremath{\subp{\bar{\bm{I}}}{}{#2}{}{#1}}}
\newrobustcmd{\vecbmIz}[2][]{\ensuremath{\subp{\vec{\bm{I}}}{}{#2}{}{#1}}}
\newrobustcmd{\Jz}[2][]{\ensuremath{\subp{J}{}{#2}{}{#1}}}
\newrobustcmd{\hatJz}[2][]{\ensuremath{\subp{\hat{J}}{}{#2}{}{#1}}}
\newrobustcmd{\widehatJz}[2][]{\ensuremath{\subp{\widehat{J}}{}{#2}{}{#1}}}
\newrobustcmd{\checkJz}[2][]{\ensuremath{\subp{\check{J}}{}{#2}{}{#1}}}
\newrobustcmd{\tildeJz}[2][]{\ensuremath{\subp{\tilde{J}}{}{#2}{}{#1}}}
\newrobustcmd{\widetildeJz}[2][]{\ensuremath{\subp{\widetilde{J}}{}{#2}{}{#1}}}
\newrobustcmd{\acuteJz}[2][]{\ensuremath{\subp{\acute{J}}{}{#2}{}{#1}}}
\newrobustcmd{\graveJz}[2][]{\ensuremath{\subp{\grave{J}}{}{#2}{}{#1}}}
\newrobustcmd{\dotJz}[2][]{\ensuremath{\subp{\dot{J}}{}{#2}{}{#1}}}
\newrobustcmd{\ddotJz}[2][]{\ensuremath{\subp{\ddot{J}}{}{#2}{}{#1}}}
\newrobustcmd{\breveJz}[2][]{\ensuremath{\subp{\breve{J}}{}{#2}{}{#1}}}
\newrobustcmd{\barJz}[2][]{\ensuremath{\subp{\bar{J}}{}{#2}{}{#1}}}
\newrobustcmd{\vecJz}[2][]{\ensuremath{\subp{\vec{J}}{}{#2}{}{#1}}}
\newrobustcmd{\bmJz}[2][]{\ensuremath{\subp{\bm{J}}{}{#2}{}{#1}}}
\newrobustcmd{\hatbmJz}[2][]{\ensuremath{\subp{\hat{\bm{J}}}{}{#2}{}{#1}}}
\newrobustcmd{\widehatbmJz}[2][]{\ensuremath{\subp{\widehat{\bm{J}}}{}{#2}{}{#1}}}
\newrobustcmd{\checkbmJz}[2][]{\ensuremath{\subp{\check{\bm{J}}}{}{#2}{}{#1}}}
\newrobustcmd{\tildebmJz}[2][]{\ensuremath{\subp{\tilde{\bm{J}}}{}{#2}{}{#1}}}
\newrobustcmd{\widetildebmJz}[2][]{\ensuremath{\subp{\widetilde{\bm{J}}}{}{#2}{}{#1}}}
\newrobustcmd{\acutebmJz}[2][]{\ensuremath{\subp{\acute{\bm{J}}}{}{#2}{}{#1}}}
\newrobustcmd{\gravebmJz}[2][]{\ensuremath{\subp{\grave{\bm{J}}}{}{#2}{}{#1}}}
\newrobustcmd{\dotbmJz}[2][]{\ensuremath{\subp{\dot{\bm{J}}}{}{#2}{}{#1}}}
\newrobustcmd{\ddotbmJz}[2][]{\ensuremath{\subp{\ddot{\bm{J}}}{}{#2}{}{#1}}}
\newrobustcmd{\brevebmJz}[2][]{\ensuremath{\subp{\breve{\bm{J}}}{}{#2}{}{#1}}}
\newrobustcmd{\barbmJz}[2][]{\ensuremath{\subp{\bar{\bm{J}}}{}{#2}{}{#1}}}
\newrobustcmd{\vecbmJz}[2][]{\ensuremath{\subp{\vec{\bm{J}}}{}{#2}{}{#1}}}
\newrobustcmd{\Kz}[2][]{\ensuremath{\subp{K}{}{#2}{}{#1}}}
\newrobustcmd{\hatKz}[2][]{\ensuremath{\subp{\hat{K}}{}{#2}{}{#1}}}
\newrobustcmd{\widehatKz}[2][]{\ensuremath{\subp{\widehat{K}}{}{#2}{}{#1}}}
\newrobustcmd{\checkKz}[2][]{\ensuremath{\subp{\check{K}}{}{#2}{}{#1}}}
\newrobustcmd{\tildeKz}[2][]{\ensuremath{\subp{\tilde{K}}{}{#2}{}{#1}}}
\newrobustcmd{\widetildeKz}[2][]{\ensuremath{\subp{\widetilde{K}}{}{#2}{}{#1}}}
\newrobustcmd{\acuteKz}[2][]{\ensuremath{\subp{\acute{K}}{}{#2}{}{#1}}}
\newrobustcmd{\graveKz}[2][]{\ensuremath{\subp{\grave{K}}{}{#2}{}{#1}}}
\newrobustcmd{\dotKz}[2][]{\ensuremath{\subp{\dot{K}}{}{#2}{}{#1}}}
\newrobustcmd{\ddotKz}[2][]{\ensuremath{\subp{\ddot{K}}{}{#2}{}{#1}}}
\newrobustcmd{\breveKz}[2][]{\ensuremath{\subp{\breve{K}}{}{#2}{}{#1}}}
\newrobustcmd{\barKz}[2][]{\ensuremath{\subp{\bar{K}}{}{#2}{}{#1}}}
\newrobustcmd{\vecKz}[2][]{\ensuremath{\subp{\vec{K}}{}{#2}{}{#1}}}
\newrobustcmd{\bmKz}[2][]{\ensuremath{\subp{\bm{K}}{}{#2}{}{#1}}}
\newrobustcmd{\hatbmKz}[2][]{\ensuremath{\subp{\hat{\bm{K}}}{}{#2}{}{#1}}}
\newrobustcmd{\widehatbmKz}[2][]{\ensuremath{\subp{\widehat{\bm{K}}}{}{#2}{}{#1}}}
\newrobustcmd{\checkbmKz}[2][]{\ensuremath{\subp{\check{\bm{K}}}{}{#2}{}{#1}}}
\newrobustcmd{\tildebmKz}[2][]{\ensuremath{\subp{\tilde{\bm{K}}}{}{#2}{}{#1}}}
\newrobustcmd{\widetildebmKz}[2][]{\ensuremath{\subp{\widetilde{\bm{K}}}{}{#2}{}{#1}}}
\newrobustcmd{\acutebmKz}[2][]{\ensuremath{\subp{\acute{\bm{K}}}{}{#2}{}{#1}}}
\newrobustcmd{\gravebmKz}[2][]{\ensuremath{\subp{\grave{\bm{K}}}{}{#2}{}{#1}}}
\newrobustcmd{\dotbmKz}[2][]{\ensuremath{\subp{\dot{\bm{K}}}{}{#2}{}{#1}}}
\newrobustcmd{\ddotbmKz}[2][]{\ensuremath{\subp{\ddot{\bm{K}}}{}{#2}{}{#1}}}
\newrobustcmd{\brevebmKz}[2][]{\ensuremath{\subp{\breve{\bm{K}}}{}{#2}{}{#1}}}
\newrobustcmd{\barbmKz}[2][]{\ensuremath{\subp{\bar{\bm{K}}}{}{#2}{}{#1}}}
\newrobustcmd{\vecbmKz}[2][]{\ensuremath{\subp{\vec{\bm{K}}}{}{#2}{}{#1}}}
\newrobustcmd{\Lz}[2][]{\ensuremath{\subp{L}{}{#2}{}{#1}}}
\newrobustcmd{\hatLz}[2][]{\ensuremath{\subp{\hat{L}}{}{#2}{}{#1}}}
\newrobustcmd{\widehatLz}[2][]{\ensuremath{\subp{\widehat{L}}{}{#2}{}{#1}}}
\newrobustcmd{\checkLz}[2][]{\ensuremath{\subp{\check{L}}{}{#2}{}{#1}}}
\newrobustcmd{\tildeLz}[2][]{\ensuremath{\subp{\tilde{L}}{}{#2}{}{#1}}}
\newrobustcmd{\widetildeLz}[2][]{\ensuremath{\subp{\widetilde{L}}{}{#2}{}{#1}}}
\newrobustcmd{\acuteLz}[2][]{\ensuremath{\subp{\acute{L}}{}{#2}{}{#1}}}
\newrobustcmd{\graveLz}[2][]{\ensuremath{\subp{\grave{L}}{}{#2}{}{#1}}}
\newrobustcmd{\dotLz}[2][]{\ensuremath{\subp{\dot{L}}{}{#2}{}{#1}}}
\newrobustcmd{\ddotLz}[2][]{\ensuremath{\subp{\ddot{L}}{}{#2}{}{#1}}}
\newrobustcmd{\breveLz}[2][]{\ensuremath{\subp{\breve{L}}{}{#2}{}{#1}}}
\newrobustcmd{\barLz}[2][]{\ensuremath{\subp{\bar{L}}{}{#2}{}{#1}}}
\newrobustcmd{\vecLz}[2][]{\ensuremath{\subp{\vec{L}}{}{#2}{}{#1}}}
\newrobustcmd{\bmLz}[2][]{\ensuremath{\subp{\bm{L}}{}{#2}{}{#1}}}
\newrobustcmd{\hatbmLz}[2][]{\ensuremath{\subp{\hat{\bm{L}}}{}{#2}{}{#1}}}
\newrobustcmd{\widehatbmLz}[2][]{\ensuremath{\subp{\widehat{\bm{L}}}{}{#2}{}{#1}}}
\newrobustcmd{\checkbmLz}[2][]{\ensuremath{\subp{\check{\bm{L}}}{}{#2}{}{#1}}}
\newrobustcmd{\tildebmLz}[2][]{\ensuremath{\subp{\tilde{\bm{L}}}{}{#2}{}{#1}}}
\newrobustcmd{\widetildebmLz}[2][]{\ensuremath{\subp{\widetilde{\bm{L}}}{}{#2}{}{#1}}}
\newrobustcmd{\acutebmLz}[2][]{\ensuremath{\subp{\acute{\bm{L}}}{}{#2}{}{#1}}}
\newrobustcmd{\gravebmLz}[2][]{\ensuremath{\subp{\grave{\bm{L}}}{}{#2}{}{#1}}}
\newrobustcmd{\dotbmLz}[2][]{\ensuremath{\subp{\dot{\bm{L}}}{}{#2}{}{#1}}}
\newrobustcmd{\ddotbmLz}[2][]{\ensuremath{\subp{\ddot{\bm{L}}}{}{#2}{}{#1}}}
\newrobustcmd{\brevebmLz}[2][]{\ensuremath{\subp{\breve{\bm{L}}}{}{#2}{}{#1}}}
\newrobustcmd{\barbmLz}[2][]{\ensuremath{\subp{\bar{\bm{L}}}{}{#2}{}{#1}}}
\newrobustcmd{\vecbmLz}[2][]{\ensuremath{\subp{\vec{\bm{L}}}{}{#2}{}{#1}}}
\newrobustcmd{\Mz}[2][]{\ensuremath{\subp{M}{}{#2}{}{#1}}}
\newrobustcmd{\hatMz}[2][]{\ensuremath{\subp{\hat{M}}{}{#2}{}{#1}}}
\newrobustcmd{\widehatMz}[2][]{\ensuremath{\subp{\widehat{M}}{}{#2}{}{#1}}}
\newrobustcmd{\checkMz}[2][]{\ensuremath{\subp{\check{M}}{}{#2}{}{#1}}}
\newrobustcmd{\tildeMz}[2][]{\ensuremath{\subp{\tilde{M}}{}{#2}{}{#1}}}
\newrobustcmd{\widetildeMz}[2][]{\ensuremath{\subp{\widetilde{M}}{}{#2}{}{#1}}}
\newrobustcmd{\acuteMz}[2][]{\ensuremath{\subp{\acute{M}}{}{#2}{}{#1}}}
\newrobustcmd{\graveMz}[2][]{\ensuremath{\subp{\grave{M}}{}{#2}{}{#1}}}
\newrobustcmd{\dotMz}[2][]{\ensuremath{\subp{\dot{M}}{}{#2}{}{#1}}}
\newrobustcmd{\ddotMz}[2][]{\ensuremath{\subp{\ddot{M}}{}{#2}{}{#1}}}
\newrobustcmd{\breveMz}[2][]{\ensuremath{\subp{\breve{M}}{}{#2}{}{#1}}}
\newrobustcmd{\barMz}[2][]{\ensuremath{\subp{\bar{M}}{}{#2}{}{#1}}}
\newrobustcmd{\vecMz}[2][]{\ensuremath{\subp{\vec{M}}{}{#2}{}{#1}}}
\newrobustcmd{\bmMz}[2][]{\ensuremath{\subp{\bm{M}}{}{#2}{}{#1}}}
\newrobustcmd{\hatbmMz}[2][]{\ensuremath{\subp{\hat{\bm{M}}}{}{#2}{}{#1}}}
\newrobustcmd{\widehatbmMz}[2][]{\ensuremath{\subp{\widehat{\bm{M}}}{}{#2}{}{#1}}}
\newrobustcmd{\checkbmMz}[2][]{\ensuremath{\subp{\check{\bm{M}}}{}{#2}{}{#1}}}
\newrobustcmd{\tildebmMz}[2][]{\ensuremath{\subp{\tilde{\bm{M}}}{}{#2}{}{#1}}}
\newrobustcmd{\widetildebmMz}[2][]{\ensuremath{\subp{\widetilde{\bm{M}}}{}{#2}{}{#1}}}
\newrobustcmd{\acutebmMz}[2][]{\ensuremath{\subp{\acute{\bm{M}}}{}{#2}{}{#1}}}
\newrobustcmd{\gravebmMz}[2][]{\ensuremath{\subp{\grave{\bm{M}}}{}{#2}{}{#1}}}
\newrobustcmd{\dotbmMz}[2][]{\ensuremath{\subp{\dot{\bm{M}}}{}{#2}{}{#1}}}
\newrobustcmd{\ddotbmMz}[2][]{\ensuremath{\subp{\ddot{\bm{M}}}{}{#2}{}{#1}}}
\newrobustcmd{\brevebmMz}[2][]{\ensuremath{\subp{\breve{\bm{M}}}{}{#2}{}{#1}}}
\newrobustcmd{\barbmMz}[2][]{\ensuremath{\subp{\bar{\bm{M}}}{}{#2}{}{#1}}}
\newrobustcmd{\vecbmMz}[2][]{\ensuremath{\subp{\vec{\bm{M}}}{}{#2}{}{#1}}}
\newrobustcmd{\Nz}[2][]{\ensuremath{\subp{N}{}{#2}{}{#1}}}
\newrobustcmd{\hatNz}[2][]{\ensuremath{\subp{\hat{N}}{}{#2}{}{#1}}}
\newrobustcmd{\widehatNz}[2][]{\ensuremath{\subp{\widehat{N}}{}{#2}{}{#1}}}
\newrobustcmd{\checkNz}[2][]{\ensuremath{\subp{\check{N}}{}{#2}{}{#1}}}
\newrobustcmd{\tildeNz}[2][]{\ensuremath{\subp{\tilde{N}}{}{#2}{}{#1}}}
\newrobustcmd{\widetildeNz}[2][]{\ensuremath{\subp{\widetilde{N}}{}{#2}{}{#1}}}
\newrobustcmd{\acuteNz}[2][]{\ensuremath{\subp{\acute{N}}{}{#2}{}{#1}}}
\newrobustcmd{\graveNz}[2][]{\ensuremath{\subp{\grave{N}}{}{#2}{}{#1}}}
\newrobustcmd{\dotNz}[2][]{\ensuremath{\subp{\dot{N}}{}{#2}{}{#1}}}
\newrobustcmd{\ddotNz}[2][]{\ensuremath{\subp{\ddot{N}}{}{#2}{}{#1}}}
\newrobustcmd{\breveNz}[2][]{\ensuremath{\subp{\breve{N}}{}{#2}{}{#1}}}
\newrobustcmd{\barNz}[2][]{\ensuremath{\subp{\bar{N}}{}{#2}{}{#1}}}
\newrobustcmd{\vecNz}[2][]{\ensuremath{\subp{\vec{N}}{}{#2}{}{#1}}}
\newrobustcmd{\bmNz}[2][]{\ensuremath{\subp{\bm{N}}{}{#2}{}{#1}}}
\newrobustcmd{\hatbmNz}[2][]{\ensuremath{\subp{\hat{\bm{N}}}{}{#2}{}{#1}}}
\newrobustcmd{\widehatbmNz}[2][]{\ensuremath{\subp{\widehat{\bm{N}}}{}{#2}{}{#1}}}
\newrobustcmd{\checkbmNz}[2][]{\ensuremath{\subp{\check{\bm{N}}}{}{#2}{}{#1}}}
\newrobustcmd{\tildebmNz}[2][]{\ensuremath{\subp{\tilde{\bm{N}}}{}{#2}{}{#1}}}
\newrobustcmd{\widetildebmNz}[2][]{\ensuremath{\subp{\widetilde{\bm{N}}}{}{#2}{}{#1}}}
\newrobustcmd{\acutebmNz}[2][]{\ensuremath{\subp{\acute{\bm{N}}}{}{#2}{}{#1}}}
\newrobustcmd{\gravebmNz}[2][]{\ensuremath{\subp{\grave{\bm{N}}}{}{#2}{}{#1}}}
\newrobustcmd{\dotbmNz}[2][]{\ensuremath{\subp{\dot{\bm{N}}}{}{#2}{}{#1}}}
\newrobustcmd{\ddotbmNz}[2][]{\ensuremath{\subp{\ddot{\bm{N}}}{}{#2}{}{#1}}}
\newrobustcmd{\brevebmNz}[2][]{\ensuremath{\subp{\breve{\bm{N}}}{}{#2}{}{#1}}}
\newrobustcmd{\barbmNz}[2][]{\ensuremath{\subp{\bar{\bm{N}}}{}{#2}{}{#1}}}
\newrobustcmd{\vecbmNz}[2][]{\ensuremath{\subp{\vec{\bm{N}}}{}{#2}{}{#1}}}
\newrobustcmd{\Oz}[2][]{\ensuremath{\subp{O}{}{#2}{}{#1}}}
\newrobustcmd{\hatOz}[2][]{\ensuremath{\subp{\hat{O}}{}{#2}{}{#1}}}
\newrobustcmd{\widehatOz}[2][]{\ensuremath{\subp{\widehat{O}}{}{#2}{}{#1}}}
\newrobustcmd{\checkOz}[2][]{\ensuremath{\subp{\check{O}}{}{#2}{}{#1}}}
\newrobustcmd{\tildeOz}[2][]{\ensuremath{\subp{\tilde{O}}{}{#2}{}{#1}}}
\newrobustcmd{\widetildeOz}[2][]{\ensuremath{\subp{\widetilde{O}}{}{#2}{}{#1}}}
\newrobustcmd{\acuteOz}[2][]{\ensuremath{\subp{\acute{O}}{}{#2}{}{#1}}}
\newrobustcmd{\graveOz}[2][]{\ensuremath{\subp{\grave{O}}{}{#2}{}{#1}}}
\newrobustcmd{\dotOz}[2][]{\ensuremath{\subp{\dot{O}}{}{#2}{}{#1}}}
\newrobustcmd{\ddotOz}[2][]{\ensuremath{\subp{\ddot{O}}{}{#2}{}{#1}}}
\newrobustcmd{\breveOz}[2][]{\ensuremath{\subp{\breve{O}}{}{#2}{}{#1}}}
\newrobustcmd{\barOz}[2][]{\ensuremath{\subp{\bar{O}}{}{#2}{}{#1}}}
\newrobustcmd{\vecOz}[2][]{\ensuremath{\subp{\vec{O}}{}{#2}{}{#1}}}
\newrobustcmd{\bmOz}[2][]{\ensuremath{\subp{\bm{O}}{}{#2}{}{#1}}}
\newrobustcmd{\hatbmOz}[2][]{\ensuremath{\subp{\hat{\bm{O}}}{}{#2}{}{#1}}}
\newrobustcmd{\widehatbmOz}[2][]{\ensuremath{\subp{\widehat{\bm{O}}}{}{#2}{}{#1}}}
\newrobustcmd{\checkbmOz}[2][]{\ensuremath{\subp{\check{\bm{O}}}{}{#2}{}{#1}}}
\newrobustcmd{\tildebmOz}[2][]{\ensuremath{\subp{\tilde{\bm{O}}}{}{#2}{}{#1}}}
\newrobustcmd{\widetildebmOz}[2][]{\ensuremath{\subp{\widetilde{\bm{O}}}{}{#2}{}{#1}}}
\newrobustcmd{\acutebmOz}[2][]{\ensuremath{\subp{\acute{\bm{O}}}{}{#2}{}{#1}}}
\newrobustcmd{\gravebmOz}[2][]{\ensuremath{\subp{\grave{\bm{O}}}{}{#2}{}{#1}}}
\newrobustcmd{\dotbmOz}[2][]{\ensuremath{\subp{\dot{\bm{O}}}{}{#2}{}{#1}}}
\newrobustcmd{\ddotbmOz}[2][]{\ensuremath{\subp{\ddot{\bm{O}}}{}{#2}{}{#1}}}
\newrobustcmd{\brevebmOz}[2][]{\ensuremath{\subp{\breve{\bm{O}}}{}{#2}{}{#1}}}
\newrobustcmd{\barbmOz}[2][]{\ensuremath{\subp{\bar{\bm{O}}}{}{#2}{}{#1}}}
\newrobustcmd{\vecbmOz}[2][]{\ensuremath{\subp{\vec{\bm{O}}}{}{#2}{}{#1}}}
\newrobustcmd{\Pz}[2][]{\ensuremath{\subp{P}{}{#2}{}{#1}}}
\newrobustcmd{\hatPz}[2][]{\ensuremath{\subp{\hat{P}}{}{#2}{}{#1}}}
\newrobustcmd{\widehatPz}[2][]{\ensuremath{\subp{\widehat{P}}{}{#2}{}{#1}}}
\newrobustcmd{\checkPz}[2][]{\ensuremath{\subp{\check{P}}{}{#2}{}{#1}}}
\newrobustcmd{\tildePz}[2][]{\ensuremath{\subp{\tilde{P}}{}{#2}{}{#1}}}
\newrobustcmd{\widetildePz}[2][]{\ensuremath{\subp{\widetilde{P}}{}{#2}{}{#1}}}
\newrobustcmd{\acutePz}[2][]{\ensuremath{\subp{\acute{P}}{}{#2}{}{#1}}}
\newrobustcmd{\gravePz}[2][]{\ensuremath{\subp{\grave{P}}{}{#2}{}{#1}}}
\newrobustcmd{\dotPz}[2][]{\ensuremath{\subp{\dot{P}}{}{#2}{}{#1}}}
\newrobustcmd{\ddotPz}[2][]{\ensuremath{\subp{\ddot{P}}{}{#2}{}{#1}}}
\newrobustcmd{\brevePz}[2][]{\ensuremath{\subp{\breve{P}}{}{#2}{}{#1}}}
\newrobustcmd{\barPz}[2][]{\ensuremath{\subp{\bar{P}}{}{#2}{}{#1}}}
\newrobustcmd{\vecPz}[2][]{\ensuremath{\subp{\vec{P}}{}{#2}{}{#1}}}
\newrobustcmd{\bmPz}[2][]{\ensuremath{\subp{\bm{P}}{}{#2}{}{#1}}}
\newrobustcmd{\hatbmPz}[2][]{\ensuremath{\subp{\hat{\bm{P}}}{}{#2}{}{#1}}}
\newrobustcmd{\widehatbmPz}[2][]{\ensuremath{\subp{\widehat{\bm{P}}}{}{#2}{}{#1}}}
\newrobustcmd{\checkbmPz}[2][]{\ensuremath{\subp{\check{\bm{P}}}{}{#2}{}{#1}}}
\newrobustcmd{\tildebmPz}[2][]{\ensuremath{\subp{\tilde{\bm{P}}}{}{#2}{}{#1}}}
\newrobustcmd{\widetildebmPz}[2][]{\ensuremath{\subp{\widetilde{\bm{P}}}{}{#2}{}{#1}}}
\newrobustcmd{\acutebmPz}[2][]{\ensuremath{\subp{\acute{\bm{P}}}{}{#2}{}{#1}}}
\newrobustcmd{\gravebmPz}[2][]{\ensuremath{\subp{\grave{\bm{P}}}{}{#2}{}{#1}}}
\newrobustcmd{\dotbmPz}[2][]{\ensuremath{\subp{\dot{\bm{P}}}{}{#2}{}{#1}}}
\newrobustcmd{\ddotbmPz}[2][]{\ensuremath{\subp{\ddot{\bm{P}}}{}{#2}{}{#1}}}
\newrobustcmd{\brevebmPz}[2][]{\ensuremath{\subp{\breve{\bm{P}}}{}{#2}{}{#1}}}
\newrobustcmd{\barbmPz}[2][]{\ensuremath{\subp{\bar{\bm{P}}}{}{#2}{}{#1}}}
\newrobustcmd{\vecbmPz}[2][]{\ensuremath{\subp{\vec{\bm{P}}}{}{#2}{}{#1}}}
\newrobustcmd{\Qz}[2][]{\ensuremath{\subp{Q}{}{#2}{}{#1}}}
\newrobustcmd{\hatQz}[2][]{\ensuremath{\subp{\hat{Q}}{}{#2}{}{#1}}}
\newrobustcmd{\widehatQz}[2][]{\ensuremath{\subp{\widehat{Q}}{}{#2}{}{#1}}}
\newrobustcmd{\checkQz}[2][]{\ensuremath{\subp{\check{Q}}{}{#2}{}{#1}}}
\newrobustcmd{\tildeQz}[2][]{\ensuremath{\subp{\tilde{Q}}{}{#2}{}{#1}}}
\newrobustcmd{\widetildeQz}[2][]{\ensuremath{\subp{\widetilde{Q}}{}{#2}{}{#1}}}
\newrobustcmd{\acuteQz}[2][]{\ensuremath{\subp{\acute{Q}}{}{#2}{}{#1}}}
\newrobustcmd{\graveQz}[2][]{\ensuremath{\subp{\grave{Q}}{}{#2}{}{#1}}}
\newrobustcmd{\dotQz}[2][]{\ensuremath{\subp{\dot{Q}}{}{#2}{}{#1}}}
\newrobustcmd{\ddotQz}[2][]{\ensuremath{\subp{\ddot{Q}}{}{#2}{}{#1}}}
\newrobustcmd{\breveQz}[2][]{\ensuremath{\subp{\breve{Q}}{}{#2}{}{#1}}}
\newrobustcmd{\barQz}[2][]{\ensuremath{\subp{\bar{Q}}{}{#2}{}{#1}}}
\newrobustcmd{\vecQz}[2][]{\ensuremath{\subp{\vec{Q}}{}{#2}{}{#1}}}
\newrobustcmd{\bmQz}[2][]{\ensuremath{\subp{\bm{Q}}{}{#2}{}{#1}}}
\newrobustcmd{\hatbmQz}[2][]{\ensuremath{\subp{\hat{\bm{Q}}}{}{#2}{}{#1}}}
\newrobustcmd{\widehatbmQz}[2][]{\ensuremath{\subp{\widehat{\bm{Q}}}{}{#2}{}{#1}}}
\newrobustcmd{\checkbmQz}[2][]{\ensuremath{\subp{\check{\bm{Q}}}{}{#2}{}{#1}}}
\newrobustcmd{\tildebmQz}[2][]{\ensuremath{\subp{\tilde{\bm{Q}}}{}{#2}{}{#1}}}
\newrobustcmd{\widetildebmQz}[2][]{\ensuremath{\subp{\widetilde{\bm{Q}}}{}{#2}{}{#1}}}
\newrobustcmd{\acutebmQz}[2][]{\ensuremath{\subp{\acute{\bm{Q}}}{}{#2}{}{#1}}}
\newrobustcmd{\gravebmQz}[2][]{\ensuremath{\subp{\grave{\bm{Q}}}{}{#2}{}{#1}}}
\newrobustcmd{\dotbmQz}[2][]{\ensuremath{\subp{\dot{\bm{Q}}}{}{#2}{}{#1}}}
\newrobustcmd{\ddotbmQz}[2][]{\ensuremath{\subp{\ddot{\bm{Q}}}{}{#2}{}{#1}}}
\newrobustcmd{\brevebmQz}[2][]{\ensuremath{\subp{\breve{\bm{Q}}}{}{#2}{}{#1}}}
\newrobustcmd{\barbmQz}[2][]{\ensuremath{\subp{\bar{\bm{Q}}}{}{#2}{}{#1}}}
\newrobustcmd{\vecbmQz}[2][]{\ensuremath{\subp{\vec{\bm{Q}}}{}{#2}{}{#1}}}
\newrobustcmd{\Rz}[2][]{\ensuremath{\subp{R}{}{#2}{}{#1}}}
\newrobustcmd{\hatRz}[2][]{\ensuremath{\subp{\hat{R}}{}{#2}{}{#1}}}
\newrobustcmd{\widehatRz}[2][]{\ensuremath{\subp{\widehat{R}}{}{#2}{}{#1}}}
\newrobustcmd{\checkRz}[2][]{\ensuremath{\subp{\check{R}}{}{#2}{}{#1}}}
\newrobustcmd{\tildeRz}[2][]{\ensuremath{\subp{\tilde{R}}{}{#2}{}{#1}}}
\newrobustcmd{\widetildeRz}[2][]{\ensuremath{\subp{\widetilde{R}}{}{#2}{}{#1}}}
\newrobustcmd{\acuteRz}[2][]{\ensuremath{\subp{\acute{R}}{}{#2}{}{#1}}}
\newrobustcmd{\graveRz}[2][]{\ensuremath{\subp{\grave{R}}{}{#2}{}{#1}}}
\newrobustcmd{\dotRz}[2][]{\ensuremath{\subp{\dot{R}}{}{#2}{}{#1}}}
\newrobustcmd{\ddotRz}[2][]{\ensuremath{\subp{\ddot{R}}{}{#2}{}{#1}}}
\newrobustcmd{\breveRz}[2][]{\ensuremath{\subp{\breve{R}}{}{#2}{}{#1}}}
\newrobustcmd{\barRz}[2][]{\ensuremath{\subp{\bar{R}}{}{#2}{}{#1}}}
\newrobustcmd{\vecRz}[2][]{\ensuremath{\subp{\vec{R}}{}{#2}{}{#1}}}
\newrobustcmd{\bmRz}[2][]{\ensuremath{\subp{\bm{R}}{}{#2}{}{#1}}}
\newrobustcmd{\hatbmRz}[2][]{\ensuremath{\subp{\hat{\bm{R}}}{}{#2}{}{#1}}}
\newrobustcmd{\widehatbmRz}[2][]{\ensuremath{\subp{\widehat{\bm{R}}}{}{#2}{}{#1}}}
\newrobustcmd{\checkbmRz}[2][]{\ensuremath{\subp{\check{\bm{R}}}{}{#2}{}{#1}}}
\newrobustcmd{\tildebmRz}[2][]{\ensuremath{\subp{\tilde{\bm{R}}}{}{#2}{}{#1}}}
\newrobustcmd{\widetildebmRz}[2][]{\ensuremath{\subp{\widetilde{\bm{R}}}{}{#2}{}{#1}}}
\newrobustcmd{\acutebmRz}[2][]{\ensuremath{\subp{\acute{\bm{R}}}{}{#2}{}{#1}}}
\newrobustcmd{\gravebmRz}[2][]{\ensuremath{\subp{\grave{\bm{R}}}{}{#2}{}{#1}}}
\newrobustcmd{\dotbmRz}[2][]{\ensuremath{\subp{\dot{\bm{R}}}{}{#2}{}{#1}}}
\newrobustcmd{\ddotbmRz}[2][]{\ensuremath{\subp{\ddot{\bm{R}}}{}{#2}{}{#1}}}
\newrobustcmd{\brevebmRz}[2][]{\ensuremath{\subp{\breve{\bm{R}}}{}{#2}{}{#1}}}
\newrobustcmd{\barbmRz}[2][]{\ensuremath{\subp{\bar{\bm{R}}}{}{#2}{}{#1}}}
\newrobustcmd{\vecbmRz}[2][]{\ensuremath{\subp{\vec{\bm{R}}}{}{#2}{}{#1}}}
\newrobustcmd{\Sz}[2][]{\ensuremath{\subp{S}{}{#2}{}{#1}}}
\newrobustcmd{\hatSz}[2][]{\ensuremath{\subp{\hat{S}}{}{#2}{}{#1}}}
\newrobustcmd{\widehatSz}[2][]{\ensuremath{\subp{\widehat{S}}{}{#2}{}{#1}}}
\newrobustcmd{\checkSz}[2][]{\ensuremath{\subp{\check{S}}{}{#2}{}{#1}}}
\newrobustcmd{\tildeSz}[2][]{\ensuremath{\subp{\tilde{S}}{}{#2}{}{#1}}}
\newrobustcmd{\widetildeSz}[2][]{\ensuremath{\subp{\widetilde{S}}{}{#2}{}{#1}}}
\newrobustcmd{\acuteSz}[2][]{\ensuremath{\subp{\acute{S}}{}{#2}{}{#1}}}
\newrobustcmd{\graveSz}[2][]{\ensuremath{\subp{\grave{S}}{}{#2}{}{#1}}}
\newrobustcmd{\dotSz}[2][]{\ensuremath{\subp{\dot{S}}{}{#2}{}{#1}}}
\newrobustcmd{\ddotSz}[2][]{\ensuremath{\subp{\ddot{S}}{}{#2}{}{#1}}}
\newrobustcmd{\breveSz}[2][]{\ensuremath{\subp{\breve{S}}{}{#2}{}{#1}}}
\newrobustcmd{\barSz}[2][]{\ensuremath{\subp{\bar{S}}{}{#2}{}{#1}}}
\newrobustcmd{\vecSz}[2][]{\ensuremath{\subp{\vec{S}}{}{#2}{}{#1}}}
\newrobustcmd{\bmSz}[2][]{\ensuremath{\subp{\bm{S}}{}{#2}{}{#1}}}
\newrobustcmd{\hatbmSz}[2][]{\ensuremath{\subp{\hat{\bm{S}}}{}{#2}{}{#1}}}
\newrobustcmd{\widehatbmSz}[2][]{\ensuremath{\subp{\widehat{\bm{S}}}{}{#2}{}{#1}}}
\newrobustcmd{\checkbmSz}[2][]{\ensuremath{\subp{\check{\bm{S}}}{}{#2}{}{#1}}}
\newrobustcmd{\tildebmSz}[2][]{\ensuremath{\subp{\tilde{\bm{S}}}{}{#2}{}{#1}}}
\newrobustcmd{\widetildebmSz}[2][]{\ensuremath{\subp{\widetilde{\bm{S}}}{}{#2}{}{#1}}}
\newrobustcmd{\acutebmSz}[2][]{\ensuremath{\subp{\acute{\bm{S}}}{}{#2}{}{#1}}}
\newrobustcmd{\gravebmSz}[2][]{\ensuremath{\subp{\grave{\bm{S}}}{}{#2}{}{#1}}}
\newrobustcmd{\dotbmSz}[2][]{\ensuremath{\subp{\dot{\bm{S}}}{}{#2}{}{#1}}}
\newrobustcmd{\ddotbmSz}[2][]{\ensuremath{\subp{\ddot{\bm{S}}}{}{#2}{}{#1}}}
\newrobustcmd{\brevebmSz}[2][]{\ensuremath{\subp{\breve{\bm{S}}}{}{#2}{}{#1}}}
\newrobustcmd{\barbmSz}[2][]{\ensuremath{\subp{\bar{\bm{S}}}{}{#2}{}{#1}}}
\newrobustcmd{\vecbmSz}[2][]{\ensuremath{\subp{\vec{\bm{S}}}{}{#2}{}{#1}}}
\newrobustcmd{\Tz}[2][]{\ensuremath{\subp{T}{}{#2}{}{#1}}}
\newrobustcmd{\hatTz}[2][]{\ensuremath{\subp{\hat{T}}{}{#2}{}{#1}}}
\newrobustcmd{\widehatTz}[2][]{\ensuremath{\subp{\widehat{T}}{}{#2}{}{#1}}}
\newrobustcmd{\checkTz}[2][]{\ensuremath{\subp{\check{T}}{}{#2}{}{#1}}}
\newrobustcmd{\tildeTz}[2][]{\ensuremath{\subp{\tilde{T}}{}{#2}{}{#1}}}
\newrobustcmd{\widetildeTz}[2][]{\ensuremath{\subp{\widetilde{T}}{}{#2}{}{#1}}}
\newrobustcmd{\acuteTz}[2][]{\ensuremath{\subp{\acute{T}}{}{#2}{}{#1}}}
\newrobustcmd{\graveTz}[2][]{\ensuremath{\subp{\grave{T}}{}{#2}{}{#1}}}
\newrobustcmd{\dotTz}[2][]{\ensuremath{\subp{\dot{T}}{}{#2}{}{#1}}}
\newrobustcmd{\ddotTz}[2][]{\ensuremath{\subp{\ddot{T}}{}{#2}{}{#1}}}
\newrobustcmd{\breveTz}[2][]{\ensuremath{\subp{\breve{T}}{}{#2}{}{#1}}}
\newrobustcmd{\barTz}[2][]{\ensuremath{\subp{\bar{T}}{}{#2}{}{#1}}}
\newrobustcmd{\vecTz}[2][]{\ensuremath{\subp{\vec{T}}{}{#2}{}{#1}}}
\newrobustcmd{\bmTz}[2][]{\ensuremath{\subp{\bm{T}}{}{#2}{}{#1}}}
\newrobustcmd{\hatbmTz}[2][]{\ensuremath{\subp{\hat{\bm{T}}}{}{#2}{}{#1}}}
\newrobustcmd{\widehatbmTz}[2][]{\ensuremath{\subp{\widehat{\bm{T}}}{}{#2}{}{#1}}}
\newrobustcmd{\checkbmTz}[2][]{\ensuremath{\subp{\check{\bm{T}}}{}{#2}{}{#1}}}
\newrobustcmd{\tildebmTz}[2][]{\ensuremath{\subp{\tilde{\bm{T}}}{}{#2}{}{#1}}}
\newrobustcmd{\widetildebmTz}[2][]{\ensuremath{\subp{\widetilde{\bm{T}}}{}{#2}{}{#1}}}
\newrobustcmd{\acutebmTz}[2][]{\ensuremath{\subp{\acute{\bm{T}}}{}{#2}{}{#1}}}
\newrobustcmd{\gravebmTz}[2][]{\ensuremath{\subp{\grave{\bm{T}}}{}{#2}{}{#1}}}
\newrobustcmd{\dotbmTz}[2][]{\ensuremath{\subp{\dot{\bm{T}}}{}{#2}{}{#1}}}
\newrobustcmd{\ddotbmTz}[2][]{\ensuremath{\subp{\ddot{\bm{T}}}{}{#2}{}{#1}}}
\newrobustcmd{\brevebmTz}[2][]{\ensuremath{\subp{\breve{\bm{T}}}{}{#2}{}{#1}}}
\newrobustcmd{\barbmTz}[2][]{\ensuremath{\subp{\bar{\bm{T}}}{}{#2}{}{#1}}}
\newrobustcmd{\vecbmTz}[2][]{\ensuremath{\subp{\vec{\bm{T}}}{}{#2}{}{#1}}}
\newrobustcmd{\Uz}[2][]{\ensuremath{\subp{U}{}{#2}{}{#1}}}
\newrobustcmd{\hatUz}[2][]{\ensuremath{\subp{\hat{U}}{}{#2}{}{#1}}}
\newrobustcmd{\widehatUz}[2][]{\ensuremath{\subp{\widehat{U}}{}{#2}{}{#1}}}
\newrobustcmd{\checkUz}[2][]{\ensuremath{\subp{\check{U}}{}{#2}{}{#1}}}
\newrobustcmd{\tildeUz}[2][]{\ensuremath{\subp{\tilde{U}}{}{#2}{}{#1}}}
\newrobustcmd{\widetildeUz}[2][]{\ensuremath{\subp{\widetilde{U}}{}{#2}{}{#1}}}
\newrobustcmd{\acuteUz}[2][]{\ensuremath{\subp{\acute{U}}{}{#2}{}{#1}}}
\newrobustcmd{\graveUz}[2][]{\ensuremath{\subp{\grave{U}}{}{#2}{}{#1}}}
\newrobustcmd{\dotUz}[2][]{\ensuremath{\subp{\dot{U}}{}{#2}{}{#1}}}
\newrobustcmd{\ddotUz}[2][]{\ensuremath{\subp{\ddot{U}}{}{#2}{}{#1}}}
\newrobustcmd{\breveUz}[2][]{\ensuremath{\subp{\breve{U}}{}{#2}{}{#1}}}
\newrobustcmd{\barUz}[2][]{\ensuremath{\subp{\bar{U}}{}{#2}{}{#1}}}
\newrobustcmd{\vecUz}[2][]{\ensuremath{\subp{\vec{U}}{}{#2}{}{#1}}}
\newrobustcmd{\bmUz}[2][]{\ensuremath{\subp{\bm{U}}{}{#2}{}{#1}}}
\newrobustcmd{\hatbmUz}[2][]{\ensuremath{\subp{\hat{\bm{U}}}{}{#2}{}{#1}}}
\newrobustcmd{\widehatbmUz}[2][]{\ensuremath{\subp{\widehat{\bm{U}}}{}{#2}{}{#1}}}
\newrobustcmd{\checkbmUz}[2][]{\ensuremath{\subp{\check{\bm{U}}}{}{#2}{}{#1}}}
\newrobustcmd{\tildebmUz}[2][]{\ensuremath{\subp{\tilde{\bm{U}}}{}{#2}{}{#1}}}
\newrobustcmd{\widetildebmUz}[2][]{\ensuremath{\subp{\widetilde{\bm{U}}}{}{#2}{}{#1}}}
\newrobustcmd{\acutebmUz}[2][]{\ensuremath{\subp{\acute{\bm{U}}}{}{#2}{}{#1}}}
\newrobustcmd{\gravebmUz}[2][]{\ensuremath{\subp{\grave{\bm{U}}}{}{#2}{}{#1}}}
\newrobustcmd{\dotbmUz}[2][]{\ensuremath{\subp{\dot{\bm{U}}}{}{#2}{}{#1}}}
\newrobustcmd{\ddotbmUz}[2][]{\ensuremath{\subp{\ddot{\bm{U}}}{}{#2}{}{#1}}}
\newrobustcmd{\brevebmUz}[2][]{\ensuremath{\subp{\breve{\bm{U}}}{}{#2}{}{#1}}}
\newrobustcmd{\barbmUz}[2][]{\ensuremath{\subp{\bar{\bm{U}}}{}{#2}{}{#1}}}
\newrobustcmd{\vecbmUz}[2][]{\ensuremath{\subp{\vec{\bm{U}}}{}{#2}{}{#1}}}
\newrobustcmd{\Vz}[2][]{\ensuremath{\subp{V}{}{#2}{}{#1}}}
\newrobustcmd{\hatVz}[2][]{\ensuremath{\subp{\hat{V}}{}{#2}{}{#1}}}
\newrobustcmd{\widehatVz}[2][]{\ensuremath{\subp{\widehat{V}}{}{#2}{}{#1}}}
\newrobustcmd{\checkVz}[2][]{\ensuremath{\subp{\check{V}}{}{#2}{}{#1}}}
\newrobustcmd{\tildeVz}[2][]{\ensuremath{\subp{\tilde{V}}{}{#2}{}{#1}}}
\newrobustcmd{\widetildeVz}[2][]{\ensuremath{\subp{\widetilde{V}}{}{#2}{}{#1}}}
\newrobustcmd{\acuteVz}[2][]{\ensuremath{\subp{\acute{V}}{}{#2}{}{#1}}}
\newrobustcmd{\graveVz}[2][]{\ensuremath{\subp{\grave{V}}{}{#2}{}{#1}}}
\newrobustcmd{\dotVz}[2][]{\ensuremath{\subp{\dot{V}}{}{#2}{}{#1}}}
\newrobustcmd{\ddotVz}[2][]{\ensuremath{\subp{\ddot{V}}{}{#2}{}{#1}}}
\newrobustcmd{\breveVz}[2][]{\ensuremath{\subp{\breve{V}}{}{#2}{}{#1}}}
\newrobustcmd{\barVz}[2][]{\ensuremath{\subp{\bar{V}}{}{#2}{}{#1}}}
\newrobustcmd{\vecVz}[2][]{\ensuremath{\subp{\vec{V}}{}{#2}{}{#1}}}
\newrobustcmd{\bmVz}[2][]{\ensuremath{\subp{\bm{V}}{}{#2}{}{#1}}}
\newrobustcmd{\hatbmVz}[2][]{\ensuremath{\subp{\hat{\bm{V}}}{}{#2}{}{#1}}}
\newrobustcmd{\widehatbmVz}[2][]{\ensuremath{\subp{\widehat{\bm{V}}}{}{#2}{}{#1}}}
\newrobustcmd{\checkbmVz}[2][]{\ensuremath{\subp{\check{\bm{V}}}{}{#2}{}{#1}}}
\newrobustcmd{\tildebmVz}[2][]{\ensuremath{\subp{\tilde{\bm{V}}}{}{#2}{}{#1}}}
\newrobustcmd{\widetildebmVz}[2][]{\ensuremath{\subp{\widetilde{\bm{V}}}{}{#2}{}{#1}}}
\newrobustcmd{\acutebmVz}[2][]{\ensuremath{\subp{\acute{\bm{V}}}{}{#2}{}{#1}}}
\newrobustcmd{\gravebmVz}[2][]{\ensuremath{\subp{\grave{\bm{V}}}{}{#2}{}{#1}}}
\newrobustcmd{\dotbmVz}[2][]{\ensuremath{\subp{\dot{\bm{V}}}{}{#2}{}{#1}}}
\newrobustcmd{\ddotbmVz}[2][]{\ensuremath{\subp{\ddot{\bm{V}}}{}{#2}{}{#1}}}
\newrobustcmd{\brevebmVz}[2][]{\ensuremath{\subp{\breve{\bm{V}}}{}{#2}{}{#1}}}
\newrobustcmd{\barbmVz}[2][]{\ensuremath{\subp{\bar{\bm{V}}}{}{#2}{}{#1}}}
\newrobustcmd{\vecbmVz}[2][]{\ensuremath{\subp{\vec{\bm{V}}}{}{#2}{}{#1}}}
\newrobustcmd{\Wz}[2][]{\ensuremath{\subp{W}{}{#2}{}{#1}}}
\newrobustcmd{\hatWz}[2][]{\ensuremath{\subp{\hat{W}}{}{#2}{}{#1}}}
\newrobustcmd{\widehatWz}[2][]{\ensuremath{\subp{\widehat{W}}{}{#2}{}{#1}}}
\newrobustcmd{\checkWz}[2][]{\ensuremath{\subp{\check{W}}{}{#2}{}{#1}}}
\newrobustcmd{\tildeWz}[2][]{\ensuremath{\subp{\tilde{W}}{}{#2}{}{#1}}}
\newrobustcmd{\widetildeWz}[2][]{\ensuremath{\subp{\widetilde{W}}{}{#2}{}{#1}}}
\newrobustcmd{\acuteWz}[2][]{\ensuremath{\subp{\acute{W}}{}{#2}{}{#1}}}
\newrobustcmd{\graveWz}[2][]{\ensuremath{\subp{\grave{W}}{}{#2}{}{#1}}}
\newrobustcmd{\dotWz}[2][]{\ensuremath{\subp{\dot{W}}{}{#2}{}{#1}}}
\newrobustcmd{\ddotWz}[2][]{\ensuremath{\subp{\ddot{W}}{}{#2}{}{#1}}}
\newrobustcmd{\breveWz}[2][]{\ensuremath{\subp{\breve{W}}{}{#2}{}{#1}}}
\newrobustcmd{\barWz}[2][]{\ensuremath{\subp{\bar{W}}{}{#2}{}{#1}}}
\newrobustcmd{\vecWz}[2][]{\ensuremath{\subp{\vec{W}}{}{#2}{}{#1}}}
\newrobustcmd{\bmWz}[2][]{\ensuremath{\subp{\bm{W}}{}{#2}{}{#1}}}
\newrobustcmd{\hatbmWz}[2][]{\ensuremath{\subp{\hat{\bm{W}}}{}{#2}{}{#1}}}
\newrobustcmd{\widehatbmWz}[2][]{\ensuremath{\subp{\widehat{\bm{W}}}{}{#2}{}{#1}}}
\newrobustcmd{\checkbmWz}[2][]{\ensuremath{\subp{\check{\bm{W}}}{}{#2}{}{#1}}}
\newrobustcmd{\tildebmWz}[2][]{\ensuremath{\subp{\tilde{\bm{W}}}{}{#2}{}{#1}}}
\newrobustcmd{\widetildebmWz}[2][]{\ensuremath{\subp{\widetilde{\bm{W}}}{}{#2}{}{#1}}}
\newrobustcmd{\acutebmWz}[2][]{\ensuremath{\subp{\acute{\bm{W}}}{}{#2}{}{#1}}}
\newrobustcmd{\gravebmWz}[2][]{\ensuremath{\subp{\grave{\bm{W}}}{}{#2}{}{#1}}}
\newrobustcmd{\dotbmWz}[2][]{\ensuremath{\subp{\dot{\bm{W}}}{}{#2}{}{#1}}}
\newrobustcmd{\ddotbmWz}[2][]{\ensuremath{\subp{\ddot{\bm{W}}}{}{#2}{}{#1}}}
\newrobustcmd{\brevebmWz}[2][]{\ensuremath{\subp{\breve{\bm{W}}}{}{#2}{}{#1}}}
\newrobustcmd{\barbmWz}[2][]{\ensuremath{\subp{\bar{\bm{W}}}{}{#2}{}{#1}}}
\newrobustcmd{\vecbmWz}[2][]{\ensuremath{\subp{\vec{\bm{W}}}{}{#2}{}{#1}}}
\newrobustcmd{\Xz}[2][]{\ensuremath{\subp{X}{}{#2}{}{#1}}}
\newrobustcmd{\hatXz}[2][]{\ensuremath{\subp{\hat{X}}{}{#2}{}{#1}}}
\newrobustcmd{\widehatXz}[2][]{\ensuremath{\subp{\widehat{X}}{}{#2}{}{#1}}}
\newrobustcmd{\checkXz}[2][]{\ensuremath{\subp{\check{X}}{}{#2}{}{#1}}}
\newrobustcmd{\tildeXz}[2][]{\ensuremath{\subp{\tilde{X}}{}{#2}{}{#1}}}
\newrobustcmd{\widetildeXz}[2][]{\ensuremath{\subp{\widetilde{X}}{}{#2}{}{#1}}}
\newrobustcmd{\acuteXz}[2][]{\ensuremath{\subp{\acute{X}}{}{#2}{}{#1}}}
\newrobustcmd{\graveXz}[2][]{\ensuremath{\subp{\grave{X}}{}{#2}{}{#1}}}
\newrobustcmd{\dotXz}[2][]{\ensuremath{\subp{\dot{X}}{}{#2}{}{#1}}}
\newrobustcmd{\ddotXz}[2][]{\ensuremath{\subp{\ddot{X}}{}{#2}{}{#1}}}
\newrobustcmd{\breveXz}[2][]{\ensuremath{\subp{\breve{X}}{}{#2}{}{#1}}}
\newrobustcmd{\barXz}[2][]{\ensuremath{\subp{\bar{X}}{}{#2}{}{#1}}}
\newrobustcmd{\vecXz}[2][]{\ensuremath{\subp{\vec{X}}{}{#2}{}{#1}}}
\newrobustcmd{\bmXz}[2][]{\ensuremath{\subp{\bm{X}}{}{#2}{}{#1}}}
\newrobustcmd{\hatbmXz}[2][]{\ensuremath{\subp{\hat{\bm{X}}}{}{#2}{}{#1}}}
\newrobustcmd{\widehatbmXz}[2][]{\ensuremath{\subp{\widehat{\bm{X}}}{}{#2}{}{#1}}}
\newrobustcmd{\checkbmXz}[2][]{\ensuremath{\subp{\check{\bm{X}}}{}{#2}{}{#1}}}
\newrobustcmd{\tildebmXz}[2][]{\ensuremath{\subp{\tilde{\bm{X}}}{}{#2}{}{#1}}}
\newrobustcmd{\widetildebmXz}[2][]{\ensuremath{\subp{\widetilde{\bm{X}}}{}{#2}{}{#1}}}
\newrobustcmd{\acutebmXz}[2][]{\ensuremath{\subp{\acute{\bm{X}}}{}{#2}{}{#1}}}
\newrobustcmd{\gravebmXz}[2][]{\ensuremath{\subp{\grave{\bm{X}}}{}{#2}{}{#1}}}
\newrobustcmd{\dotbmXz}[2][]{\ensuremath{\subp{\dot{\bm{X}}}{}{#2}{}{#1}}}
\newrobustcmd{\ddotbmXz}[2][]{\ensuremath{\subp{\ddot{\bm{X}}}{}{#2}{}{#1}}}
\newrobustcmd{\brevebmXz}[2][]{\ensuremath{\subp{\breve{\bm{X}}}{}{#2}{}{#1}}}
\newrobustcmd{\barbmXz}[2][]{\ensuremath{\subp{\bar{\bm{X}}}{}{#2}{}{#1}}}
\newrobustcmd{\vecbmXz}[2][]{\ensuremath{\subp{\vec{\bm{X}}}{}{#2}{}{#1}}}
\newrobustcmd{\Yz}[2][]{\ensuremath{\subp{Y}{}{#2}{}{#1}}}
\newrobustcmd{\hatYz}[2][]{\ensuremath{\subp{\hat{Y}}{}{#2}{}{#1}}}
\newrobustcmd{\widehatYz}[2][]{\ensuremath{\subp{\widehat{Y}}{}{#2}{}{#1}}}
\newrobustcmd{\checkYz}[2][]{\ensuremath{\subp{\check{Y}}{}{#2}{}{#1}}}
\newrobustcmd{\tildeYz}[2][]{\ensuremath{\subp{\tilde{Y}}{}{#2}{}{#1}}}
\newrobustcmd{\widetildeYz}[2][]{\ensuremath{\subp{\widetilde{Y}}{}{#2}{}{#1}}}
\newrobustcmd{\acuteYz}[2][]{\ensuremath{\subp{\acute{Y}}{}{#2}{}{#1}}}
\newrobustcmd{\graveYz}[2][]{\ensuremath{\subp{\grave{Y}}{}{#2}{}{#1}}}
\newrobustcmd{\dotYz}[2][]{\ensuremath{\subp{\dot{Y}}{}{#2}{}{#1}}}
\newrobustcmd{\ddotYz}[2][]{\ensuremath{\subp{\ddot{Y}}{}{#2}{}{#1}}}
\newrobustcmd{\breveYz}[2][]{\ensuremath{\subp{\breve{Y}}{}{#2}{}{#1}}}
\newrobustcmd{\barYz}[2][]{\ensuremath{\subp{\bar{Y}}{}{#2}{}{#1}}}
\newrobustcmd{\vecYz}[2][]{\ensuremath{\subp{\vec{Y}}{}{#2}{}{#1}}}
\newrobustcmd{\bmYz}[2][]{\ensuremath{\subp{\bm{Y}}{}{#2}{}{#1}}}
\newrobustcmd{\hatbmYz}[2][]{\ensuremath{\subp{\hat{\bm{Y}}}{}{#2}{}{#1}}}
\newrobustcmd{\widehatbmYz}[2][]{\ensuremath{\subp{\widehat{\bm{Y}}}{}{#2}{}{#1}}}
\newrobustcmd{\checkbmYz}[2][]{\ensuremath{\subp{\check{\bm{Y}}}{}{#2}{}{#1}}}
\newrobustcmd{\tildebmYz}[2][]{\ensuremath{\subp{\tilde{\bm{Y}}}{}{#2}{}{#1}}}
\newrobustcmd{\widetildebmYz}[2][]{\ensuremath{\subp{\widetilde{\bm{Y}}}{}{#2}{}{#1}}}
\newrobustcmd{\acutebmYz}[2][]{\ensuremath{\subp{\acute{\bm{Y}}}{}{#2}{}{#1}}}
\newrobustcmd{\gravebmYz}[2][]{\ensuremath{\subp{\grave{\bm{Y}}}{}{#2}{}{#1}}}
\newrobustcmd{\dotbmYz}[2][]{\ensuremath{\subp{\dot{\bm{Y}}}{}{#2}{}{#1}}}
\newrobustcmd{\ddotbmYz}[2][]{\ensuremath{\subp{\ddot{\bm{Y}}}{}{#2}{}{#1}}}
\newrobustcmd{\brevebmYz}[2][]{\ensuremath{\subp{\breve{\bm{Y}}}{}{#2}{}{#1}}}
\newrobustcmd{\barbmYz}[2][]{\ensuremath{\subp{\bar{\bm{Y}}}{}{#2}{}{#1}}}
\newrobustcmd{\vecbmYz}[2][]{\ensuremath{\subp{\vec{\bm{Y}}}{}{#2}{}{#1}}}
\newrobustcmd{\Zz}[2][]{\ensuremath{\subp{Z}{}{#2}{}{#1}}}
\newrobustcmd{\hatZz}[2][]{\ensuremath{\subp{\hat{Z}}{}{#2}{}{#1}}}
\newrobustcmd{\widehatZz}[2][]{\ensuremath{\subp{\widehat{Z}}{}{#2}{}{#1}}}
\newrobustcmd{\checkZz}[2][]{\ensuremath{\subp{\check{Z}}{}{#2}{}{#1}}}
\newrobustcmd{\tildeZz}[2][]{\ensuremath{\subp{\tilde{Z}}{}{#2}{}{#1}}}
\newrobustcmd{\widetildeZz}[2][]{\ensuremath{\subp{\widetilde{Z}}{}{#2}{}{#1}}}
\newrobustcmd{\acuteZz}[2][]{\ensuremath{\subp{\acute{Z}}{}{#2}{}{#1}}}
\newrobustcmd{\graveZz}[2][]{\ensuremath{\subp{\grave{Z}}{}{#2}{}{#1}}}
\newrobustcmd{\dotZz}[2][]{\ensuremath{\subp{\dot{Z}}{}{#2}{}{#1}}}
\newrobustcmd{\ddotZz}[2][]{\ensuremath{\subp{\ddot{Z}}{}{#2}{}{#1}}}
\newrobustcmd{\breveZz}[2][]{\ensuremath{\subp{\breve{Z}}{}{#2}{}{#1}}}
\newrobustcmd{\barZz}[2][]{\ensuremath{\subp{\bar{Z}}{}{#2}{}{#1}}}
\newrobustcmd{\vecZz}[2][]{\ensuremath{\subp{\vec{Z}}{}{#2}{}{#1}}}
\newrobustcmd{\bmZz}[2][]{\ensuremath{\subp{\bm{Z}}{}{#2}{}{#1}}}
\newrobustcmd{\hatbmZz}[2][]{\ensuremath{\subp{\hat{\bm{Z}}}{}{#2}{}{#1}}}
\newrobustcmd{\widehatbmZz}[2][]{\ensuremath{\subp{\widehat{\bm{Z}}}{}{#2}{}{#1}}}
\newrobustcmd{\checkbmZz}[2][]{\ensuremath{\subp{\check{\bm{Z}}}{}{#2}{}{#1}}}
\newrobustcmd{\tildebmZz}[2][]{\ensuremath{\subp{\tilde{\bm{Z}}}{}{#2}{}{#1}}}
\newrobustcmd{\widetildebmZz}[2][]{\ensuremath{\subp{\widetilde{\bm{Z}}}{}{#2}{}{#1}}}
\newrobustcmd{\acutebmZz}[2][]{\ensuremath{\subp{\acute{\bm{Z}}}{}{#2}{}{#1}}}
\newrobustcmd{\gravebmZz}[2][]{\ensuremath{\subp{\grave{\bm{Z}}}{}{#2}{}{#1}}}
\newrobustcmd{\dotbmZz}[2][]{\ensuremath{\subp{\dot{\bm{Z}}}{}{#2}{}{#1}}}
\newrobustcmd{\ddotbmZz}[2][]{\ensuremath{\subp{\ddot{\bm{Z}}}{}{#2}{}{#1}}}
\newrobustcmd{\brevebmZz}[2][]{\ensuremath{\subp{\breve{\bm{Z}}}{}{#2}{}{#1}}}
\newrobustcmd{\barbmZz}[2][]{\ensuremath{\subp{\bar{\bm{Z}}}{}{#2}{}{#1}}}
\newrobustcmd{\vecbmZz}[2][]{\ensuremath{\subp{\vec{\bm{Z}}}{}{#2}{}{#1}}}

\newrobustcmd{\alphaz}[2][]{\ensuremath{\subp{\alpha}{}{#2}{}{#1}}}
\newrobustcmd{\hatalphaz}[2][]{\ensuremath{\subp{\hat{\alpha}}{}{#2}{}{#1}}}
\newrobustcmd{\widehatalphaz}[2][]{\ensuremath{\subp{\widehat{\alpha}}{}{#2}{}{#1}}}
\newrobustcmd{\checkalphaz}[2][]{\ensuremath{\subp{\check{\alpha}}{}{#2}{}{#1}}}
\newrobustcmd{\tildealphaz}[2][]{\ensuremath{\subp{\tilde{\alpha}}{}{#2}{}{#1}}}
\newrobustcmd{\widetildealphaz}[2][]{\ensuremath{\subp{\widetilde{\alpha}}{}{#2}{}{#1}}}
\newrobustcmd{\acutealphaz}[2][]{\ensuremath{\subp{\acute{\alpha}}{}{#2}{}{#1}}}
\newrobustcmd{\gravealphaz}[2][]{\ensuremath{\subp{\grave{\alpha}}{}{#2}{}{#1}}}
\newrobustcmd{\dotalphaz}[2][]{\ensuremath{\subp{\dot{\alpha}}{}{#2}{}{#1}}}
\newrobustcmd{\ddotalphaz}[2][]{\ensuremath{\subp{\ddot{\alpha}}{}{#2}{}{#1}}}
\newrobustcmd{\brevealphaz}[2][]{\ensuremath{\subp{\breve{\alpha}}{}{#2}{}{#1}}}
\newrobustcmd{\baralphaz}[2][]{\ensuremath{\subp{\bar{\alpha}}{}{#2}{}{#1}}}
\newrobustcmd{\vecalphaz}[2][]{\ensuremath{\subp{\vec{\alpha}}{}{#2}{}{#1}}}
\newrobustcmd{\bmalphaz}[2][]{\ensuremath{\subp{\bm{\alpha}}{}{#2}{}{#1}}}
\newrobustcmd{\hatbmalphaz}[2][]{\ensuremath{\subp{\hat{\bm{\alpha}}}{}{#2}{}{#1}}}
\newrobustcmd{\widehatbmalphaz}[2][]{\ensuremath{\subp{\widehat{\bm{\alpha}}}{}{#2}{}{#1}}}
\newrobustcmd{\checkbmalphaz}[2][]{\ensuremath{\subp{\check{\bm{\alpha}}}{}{#2}{}{#1}}}
\newrobustcmd{\tildebmalphaz}[2][]{\ensuremath{\subp{\tilde{\bm{\alpha}}}{}{#2}{}{#1}}}
\newrobustcmd{\widetildebmalphaz}[2][]{\ensuremath{\subp{\widetilde{\bm{\alpha}}}{}{#2}{}{#1}}}
\newrobustcmd{\acutebmalphaz}[2][]{\ensuremath{\subp{\acute{\bm{\alpha}}}{}{#2}{}{#1}}}
\newrobustcmd{\gravebmalphaz}[2][]{\ensuremath{\subp{\grave{\bm{\alpha}}}{}{#2}{}{#1}}}
\newrobustcmd{\dotbmalphaz}[2][]{\ensuremath{\subp{\dot{\bm{\alpha}}}{}{#2}{}{#1}}}
\newrobustcmd{\ddotbmalphaz}[2][]{\ensuremath{\subp{\ddot{\bm{\alpha}}}{}{#2}{}{#1}}}
\newrobustcmd{\brevebmalphaz}[2][]{\ensuremath{\subp{\breve{\bm{\alpha}}}{}{#2}{}{#1}}}
\newrobustcmd{\barbmalphaz}[2][]{\ensuremath{\subp{\bar{\bm{\alpha}}}{}{#2}{}{#1}}}
\newrobustcmd{\vecbmalphaz}[2][]{\ensuremath{\subp{\vec{\bm{\alpha}}}{}{#2}{}{#1}}}
\newrobustcmd{\betaz}[2][]{\ensuremath{\subp{\beta}{}{#2}{}{#1}}}
\newrobustcmd{\hatbetaz}[2][]{\ensuremath{\subp{\hat{\beta}}{}{#2}{}{#1}}}
\newrobustcmd{\widehatbetaz}[2][]{\ensuremath{\subp{\widehat{\beta}}{}{#2}{}{#1}}}
\newrobustcmd{\checkbetaz}[2][]{\ensuremath{\subp{\check{\beta}}{}{#2}{}{#1}}}
\newrobustcmd{\tildebetaz}[2][]{\ensuremath{\subp{\tilde{\beta}}{}{#2}{}{#1}}}
\newrobustcmd{\widetildebetaz}[2][]{\ensuremath{\subp{\widetilde{\beta}}{}{#2}{}{#1}}}
\newrobustcmd{\acutebetaz}[2][]{\ensuremath{\subp{\acute{\beta}}{}{#2}{}{#1}}}
\newrobustcmd{\gravebetaz}[2][]{\ensuremath{\subp{\grave{\beta}}{}{#2}{}{#1}}}
\newrobustcmd{\dotbetaz}[2][]{\ensuremath{\subp{\dot{\beta}}{}{#2}{}{#1}}}
\newrobustcmd{\ddotbetaz}[2][]{\ensuremath{\subp{\ddot{\beta}}{}{#2}{}{#1}}}
\newrobustcmd{\brevebetaz}[2][]{\ensuremath{\subp{\breve{\beta}}{}{#2}{}{#1}}}
\newrobustcmd{\barbetaz}[2][]{\ensuremath{\subp{\bar{\beta}}{}{#2}{}{#1}}}
\newrobustcmd{\vecbetaz}[2][]{\ensuremath{\subp{\vec{\beta}}{}{#2}{}{#1}}}
\newrobustcmd{\bmbetaz}[2][]{\ensuremath{\subp{\bm{\beta}}{}{#2}{}{#1}}}
\newrobustcmd{\hatbmbetaz}[2][]{\ensuremath{\subp{\hat{\bm{\beta}}}{}{#2}{}{#1}}}
\newrobustcmd{\widehatbmbetaz}[2][]{\ensuremath{\subp{\widehat{\bm{\beta}}}{}{#2}{}{#1}}}
\newrobustcmd{\checkbmbetaz}[2][]{\ensuremath{\subp{\check{\bm{\beta}}}{}{#2}{}{#1}}}
\newrobustcmd{\tildebmbetaz}[2][]{\ensuremath{\subp{\tilde{\bm{\beta}}}{}{#2}{}{#1}}}
\newrobustcmd{\widetildebmbetaz}[2][]{\ensuremath{\subp{\widetilde{\bm{\beta}}}{}{#2}{}{#1}}}
\newrobustcmd{\acutebmbetaz}[2][]{\ensuremath{\subp{\acute{\bm{\beta}}}{}{#2}{}{#1}}}
\newrobustcmd{\gravebmbetaz}[2][]{\ensuremath{\subp{\grave{\bm{\beta}}}{}{#2}{}{#1}}}
\newrobustcmd{\dotbmbetaz}[2][]{\ensuremath{\subp{\dot{\bm{\beta}}}{}{#2}{}{#1}}}
\newrobustcmd{\ddotbmbetaz}[2][]{\ensuremath{\subp{\ddot{\bm{\beta}}}{}{#2}{}{#1}}}
\newrobustcmd{\brevebmbetaz}[2][]{\ensuremath{\subp{\breve{\bm{\beta}}}{}{#2}{}{#1}}}
\newrobustcmd{\barbmbetaz}[2][]{\ensuremath{\subp{\bar{\bm{\beta}}}{}{#2}{}{#1}}}
\newrobustcmd{\vecbmbetaz}[2][]{\ensuremath{\subp{\vec{\bm{\beta}}}{}{#2}{}{#1}}}
\newrobustcmd{\gammaz}[2][]{\ensuremath{\subp{\gamma}{}{#2}{}{#1}}}
\newrobustcmd{\hatgammaz}[2][]{\ensuremath{\subp{\hat{\gamma}}{}{#2}{}{#1}}}
\newrobustcmd{\widehatgammaz}[2][]{\ensuremath{\subp{\widehat{\gamma}}{}{#2}{}{#1}}}
\newrobustcmd{\checkgammaz}[2][]{\ensuremath{\subp{\check{\gamma}}{}{#2}{}{#1}}}
\newrobustcmd{\tildegammaz}[2][]{\ensuremath{\subp{\tilde{\gamma}}{}{#2}{}{#1}}}
\newrobustcmd{\widetildegammaz}[2][]{\ensuremath{\subp{\widetilde{\gamma}}{}{#2}{}{#1}}}
\newrobustcmd{\acutegammaz}[2][]{\ensuremath{\subp{\acute{\gamma}}{}{#2}{}{#1}}}
\newrobustcmd{\gravegammaz}[2][]{\ensuremath{\subp{\grave{\gamma}}{}{#2}{}{#1}}}
\newrobustcmd{\dotgammaz}[2][]{\ensuremath{\subp{\dot{\gamma}}{}{#2}{}{#1}}}
\newrobustcmd{\ddotgammaz}[2][]{\ensuremath{\subp{\ddot{\gamma}}{}{#2}{}{#1}}}
\newrobustcmd{\brevegammaz}[2][]{\ensuremath{\subp{\breve{\gamma}}{}{#2}{}{#1}}}
\newrobustcmd{\bargammaz}[2][]{\ensuremath{\subp{\bar{\gamma}}{}{#2}{}{#1}}}
\newrobustcmd{\vecgammaz}[2][]{\ensuremath{\subp{\vec{\gamma}}{}{#2}{}{#1}}}
\newrobustcmd{\bmgammaz}[2][]{\ensuremath{\subp{\bm{\gamma}}{}{#2}{}{#1}}}
\newrobustcmd{\hatbmgammaz}[2][]{\ensuremath{\subp{\hat{\bm{\gamma}}}{}{#2}{}{#1}}}
\newrobustcmd{\widehatbmgammaz}[2][]{\ensuremath{\subp{\widehat{\bm{\gamma}}}{}{#2}{}{#1}}}
\newrobustcmd{\checkbmgammaz}[2][]{\ensuremath{\subp{\check{\bm{\gamma}}}{}{#2}{}{#1}}}
\newrobustcmd{\tildebmgammaz}[2][]{\ensuremath{\subp{\tilde{\bm{\gamma}}}{}{#2}{}{#1}}}
\newrobustcmd{\widetildebmgammaz}[2][]{\ensuremath{\subp{\widetilde{\bm{\gamma}}}{}{#2}{}{#1}}}
\newrobustcmd{\acutebmgammaz}[2][]{\ensuremath{\subp{\acute{\bm{\gamma}}}{}{#2}{}{#1}}}
\newrobustcmd{\gravebmgammaz}[2][]{\ensuremath{\subp{\grave{\bm{\gamma}}}{}{#2}{}{#1}}}
\newrobustcmd{\dotbmgammaz}[2][]{\ensuremath{\subp{\dot{\bm{\gamma}}}{}{#2}{}{#1}}}
\newrobustcmd{\ddotbmgammaz}[2][]{\ensuremath{\subp{\ddot{\bm{\gamma}}}{}{#2}{}{#1}}}
\newrobustcmd{\brevebmgammaz}[2][]{\ensuremath{\subp{\breve{\bm{\gamma}}}{}{#2}{}{#1}}}
\newrobustcmd{\barbmgammaz}[2][]{\ensuremath{\subp{\bar{\bm{\gamma}}}{}{#2}{}{#1}}}
\newrobustcmd{\vecbmgammaz}[2][]{\ensuremath{\subp{\vec{\bm{\gamma}}}{}{#2}{}{#1}}}
\newrobustcmd{\deltaz}[2][]{\ensuremath{\subp{\delta}{}{#2}{}{#1}}}
\newrobustcmd{\hatdeltaz}[2][]{\ensuremath{\subp{\hat{\delta}}{}{#2}{}{#1}}}
\newrobustcmd{\widehatdeltaz}[2][]{\ensuremath{\subp{\widehat{\delta}}{}{#2}{}{#1}}}
\newrobustcmd{\checkdeltaz}[2][]{\ensuremath{\subp{\check{\delta}}{}{#2}{}{#1}}}
\newrobustcmd{\tildedeltaz}[2][]{\ensuremath{\subp{\tilde{\delta}}{}{#2}{}{#1}}}
\newrobustcmd{\widetildedeltaz}[2][]{\ensuremath{\subp{\widetilde{\delta}}{}{#2}{}{#1}}}
\newrobustcmd{\acutedeltaz}[2][]{\ensuremath{\subp{\acute{\delta}}{}{#2}{}{#1}}}
\newrobustcmd{\gravedeltaz}[2][]{\ensuremath{\subp{\grave{\delta}}{}{#2}{}{#1}}}
\newrobustcmd{\dotdeltaz}[2][]{\ensuremath{\subp{\dot{\delta}}{}{#2}{}{#1}}}
\newrobustcmd{\ddotdeltaz}[2][]{\ensuremath{\subp{\ddot{\delta}}{}{#2}{}{#1}}}
\newrobustcmd{\brevedeltaz}[2][]{\ensuremath{\subp{\breve{\delta}}{}{#2}{}{#1}}}
\newrobustcmd{\bardeltaz}[2][]{\ensuremath{\subp{\bar{\delta}}{}{#2}{}{#1}}}
\newrobustcmd{\vecdeltaz}[2][]{\ensuremath{\subp{\vec{\delta}}{}{#2}{}{#1}}}
\newrobustcmd{\bmdeltaz}[2][]{\ensuremath{\subp{\bm{\delta}}{}{#2}{}{#1}}}
\newrobustcmd{\hatbmdeltaz}[2][]{\ensuremath{\subp{\hat{\bm{\delta}}}{}{#2}{}{#1}}}
\newrobustcmd{\widehatbmdeltaz}[2][]{\ensuremath{\subp{\widehat{\bm{\delta}}}{}{#2}{}{#1}}}
\newrobustcmd{\checkbmdeltaz}[2][]{\ensuremath{\subp{\check{\bm{\delta}}}{}{#2}{}{#1}}}
\newrobustcmd{\tildebmdeltaz}[2][]{\ensuremath{\subp{\tilde{\bm{\delta}}}{}{#2}{}{#1}}}
\newrobustcmd{\widetildebmdeltaz}[2][]{\ensuremath{\subp{\widetilde{\bm{\delta}}}{}{#2}{}{#1}}}
\newrobustcmd{\acutebmdeltaz}[2][]{\ensuremath{\subp{\acute{\bm{\delta}}}{}{#2}{}{#1}}}
\newrobustcmd{\gravebmdeltaz}[2][]{\ensuremath{\subp{\grave{\bm{\delta}}}{}{#2}{}{#1}}}
\newrobustcmd{\dotbmdeltaz}[2][]{\ensuremath{\subp{\dot{\bm{\delta}}}{}{#2}{}{#1}}}
\newrobustcmd{\ddotbmdeltaz}[2][]{\ensuremath{\subp{\ddot{\bm{\delta}}}{}{#2}{}{#1}}}
\newrobustcmd{\brevebmdeltaz}[2][]{\ensuremath{\subp{\breve{\bm{\delta}}}{}{#2}{}{#1}}}
\newrobustcmd{\barbmdeltaz}[2][]{\ensuremath{\subp{\bar{\bm{\delta}}}{}{#2}{}{#1}}}
\newrobustcmd{\vecbmdeltaz}[2][]{\ensuremath{\subp{\vec{\bm{\delta}}}{}{#2}{}{#1}}}
\newrobustcmd{\epsilonz}[2][]{\ensuremath{\subp{\epsilon}{}{#2}{}{#1}}}
\newrobustcmd{\hatepsilonz}[2][]{\ensuremath{\subp{\hat{\epsilon}}{}{#2}{}{#1}}}
\newrobustcmd{\widehatepsilonz}[2][]{\ensuremath{\subp{\widehat{\epsilon}}{}{#2}{}{#1}}}
\newrobustcmd{\checkepsilonz}[2][]{\ensuremath{\subp{\check{\epsilon}}{}{#2}{}{#1}}}
\newrobustcmd{\tildeepsilonz}[2][]{\ensuremath{\subp{\tilde{\epsilon}}{}{#2}{}{#1}}}
\newrobustcmd{\widetildeepsilonz}[2][]{\ensuremath{\subp{\widetilde{\epsilon}}{}{#2}{}{#1}}}
\newrobustcmd{\acuteepsilonz}[2][]{\ensuremath{\subp{\acute{\epsilon}}{}{#2}{}{#1}}}
\newrobustcmd{\graveepsilonz}[2][]{\ensuremath{\subp{\grave{\epsilon}}{}{#2}{}{#1}}}
\newrobustcmd{\dotepsilonz}[2][]{\ensuremath{\subp{\dot{\epsilon}}{}{#2}{}{#1}}}
\newrobustcmd{\ddotepsilonz}[2][]{\ensuremath{\subp{\ddot{\epsilon}}{}{#2}{}{#1}}}
\newrobustcmd{\breveepsilonz}[2][]{\ensuremath{\subp{\breve{\epsilon}}{}{#2}{}{#1}}}
\newrobustcmd{\barepsilonz}[2][]{\ensuremath{\subp{\bar{\epsilon}}{}{#2}{}{#1}}}
\newrobustcmd{\vecepsilonz}[2][]{\ensuremath{\subp{\vec{\epsilon}}{}{#2}{}{#1}}}
\newrobustcmd{\bmepsilonz}[2][]{\ensuremath{\subp{\bm{\epsilon}}{}{#2}{}{#1}}}
\newrobustcmd{\hatbmepsilonz}[2][]{\ensuremath{\subp{\hat{\bm{\epsilon}}}{}{#2}{}{#1}}}
\newrobustcmd{\widehatbmepsilonz}[2][]{\ensuremath{\subp{\widehat{\bm{\epsilon}}}{}{#2}{}{#1}}}
\newrobustcmd{\checkbmepsilonz}[2][]{\ensuremath{\subp{\check{\bm{\epsilon}}}{}{#2}{}{#1}}}
\newrobustcmd{\tildebmepsilonz}[2][]{\ensuremath{\subp{\tilde{\bm{\epsilon}}}{}{#2}{}{#1}}}
\newrobustcmd{\widetildebmepsilonz}[2][]{\ensuremath{\subp{\widetilde{\bm{\epsilon}}}{}{#2}{}{#1}}}
\newrobustcmd{\acutebmepsilonz}[2][]{\ensuremath{\subp{\acute{\bm{\epsilon}}}{}{#2}{}{#1}}}
\newrobustcmd{\gravebmepsilonz}[2][]{\ensuremath{\subp{\grave{\bm{\epsilon}}}{}{#2}{}{#1}}}
\newrobustcmd{\dotbmepsilonz}[2][]{\ensuremath{\subp{\dot{\bm{\epsilon}}}{}{#2}{}{#1}}}
\newrobustcmd{\ddotbmepsilonz}[2][]{\ensuremath{\subp{\ddot{\bm{\epsilon}}}{}{#2}{}{#1}}}
\newrobustcmd{\brevebmepsilonz}[2][]{\ensuremath{\subp{\breve{\bm{\epsilon}}}{}{#2}{}{#1}}}
\newrobustcmd{\barbmepsilonz}[2][]{\ensuremath{\subp{\bar{\bm{\epsilon}}}{}{#2}{}{#1}}}
\newrobustcmd{\vecbmepsilonz}[2][]{\ensuremath{\subp{\vec{\bm{\epsilon}}}{}{#2}{}{#1}}}
\newrobustcmd{\varepsilonz}[2][]{\ensuremath{\subp{\varepsilon}{}{#2}{}{#1}}}
\newrobustcmd{\hatvarepsilonz}[2][]{\ensuremath{\subp{\hat{\varepsilon}}{}{#2}{}{#1}}}
\newrobustcmd{\widehatvarepsilonz}[2][]{\ensuremath{\subp{\widehat{\varepsilon}}{}{#2}{}{#1}}}
\newrobustcmd{\checkvarepsilonz}[2][]{\ensuremath{\subp{\check{\varepsilon}}{}{#2}{}{#1}}}
\newrobustcmd{\tildevarepsilonz}[2][]{\ensuremath{\subp{\tilde{\varepsilon}}{}{#2}{}{#1}}}
\newrobustcmd{\widetildevarepsilonz}[2][]{\ensuremath{\subp{\widetilde{\varepsilon}}{}{#2}{}{#1}}}
\newrobustcmd{\acutevarepsilonz}[2][]{\ensuremath{\subp{\acute{\varepsilon}}{}{#2}{}{#1}}}
\newrobustcmd{\gravevarepsilonz}[2][]{\ensuremath{\subp{\grave{\varepsilon}}{}{#2}{}{#1}}}
\newrobustcmd{\dotvarepsilonz}[2][]{\ensuremath{\subp{\dot{\varepsilon}}{}{#2}{}{#1}}}
\newrobustcmd{\ddotvarepsilonz}[2][]{\ensuremath{\subp{\ddot{\varepsilon}}{}{#2}{}{#1}}}
\newrobustcmd{\brevevarepsilonz}[2][]{\ensuremath{\subp{\breve{\varepsilon}}{}{#2}{}{#1}}}
\newrobustcmd{\barvarepsilonz}[2][]{\ensuremath{\subp{\bar{\varepsilon}}{}{#2}{}{#1}}}
\newrobustcmd{\vecvarepsilonz}[2][]{\ensuremath{\subp{\vec{\varepsilon}}{}{#2}{}{#1}}}
\newrobustcmd{\bmvarepsilonz}[2][]{\ensuremath{\subp{\bm{\varepsilon}}{}{#2}{}{#1}}}
\newrobustcmd{\hatbmvarepsilonz}[2][]{\ensuremath{\subp{\hat{\bm{\varepsilon}}}{}{#2}{}{#1}}}
\newrobustcmd{\widehatbmvarepsilonz}[2][]{\ensuremath{\subp{\widehat{\bm{\varepsilon}}}{}{#2}{}{#1}}}
\newrobustcmd{\checkbmvarepsilonz}[2][]{\ensuremath{\subp{\check{\bm{\varepsilon}}}{}{#2}{}{#1}}}
\newrobustcmd{\tildebmvarepsilonz}[2][]{\ensuremath{\subp{\tilde{\bm{\varepsilon}}}{}{#2}{}{#1}}}
\newrobustcmd{\widetildebmvarepsilonz}[2][]{\ensuremath{\subp{\widetilde{\bm{\varepsilon}}}{}{#2}{}{#1}}}
\newrobustcmd{\acutebmvarepsilonz}[2][]{\ensuremath{\subp{\acute{\bm{\varepsilon}}}{}{#2}{}{#1}}}
\newrobustcmd{\gravebmvarepsilonz}[2][]{\ensuremath{\subp{\grave{\bm{\varepsilon}}}{}{#2}{}{#1}}}
\newrobustcmd{\dotbmvarepsilonz}[2][]{\ensuremath{\subp{\dot{\bm{\varepsilon}}}{}{#2}{}{#1}}}
\newrobustcmd{\ddotbmvarepsilonz}[2][]{\ensuremath{\subp{\ddot{\bm{\varepsilon}}}{}{#2}{}{#1}}}
\newrobustcmd{\brevebmvarepsilonz}[2][]{\ensuremath{\subp{\breve{\bm{\varepsilon}}}{}{#2}{}{#1}}}
\newrobustcmd{\barbmvarepsilonz}[2][]{\ensuremath{\subp{\bar{\bm{\varepsilon}}}{}{#2}{}{#1}}}
\newrobustcmd{\vecbmvarepsilonz}[2][]{\ensuremath{\subp{\vec{\bm{\varepsilon}}}{}{#2}{}{#1}}}
\newrobustcmd{\zetaz}[2][]{\ensuremath{\subp{\zeta}{}{#2}{}{#1}}}
\newrobustcmd{\hatzetaz}[2][]{\ensuremath{\subp{\hat{\zeta}}{}{#2}{}{#1}}}
\newrobustcmd{\widehatzetaz}[2][]{\ensuremath{\subp{\widehat{\zeta}}{}{#2}{}{#1}}}
\newrobustcmd{\checkzetaz}[2][]{\ensuremath{\subp{\check{\zeta}}{}{#2}{}{#1}}}
\newrobustcmd{\tildezetaz}[2][]{\ensuremath{\subp{\tilde{\zeta}}{}{#2}{}{#1}}}
\newrobustcmd{\widetildezetaz}[2][]{\ensuremath{\subp{\widetilde{\zeta}}{}{#2}{}{#1}}}
\newrobustcmd{\acutezetaz}[2][]{\ensuremath{\subp{\acute{\zeta}}{}{#2}{}{#1}}}
\newrobustcmd{\gravezetaz}[2][]{\ensuremath{\subp{\grave{\zeta}}{}{#2}{}{#1}}}
\newrobustcmd{\dotzetaz}[2][]{\ensuremath{\subp{\dot{\zeta}}{}{#2}{}{#1}}}
\newrobustcmd{\ddotzetaz}[2][]{\ensuremath{\subp{\ddot{\zeta}}{}{#2}{}{#1}}}
\newrobustcmd{\brevezetaz}[2][]{\ensuremath{\subp{\breve{\zeta}}{}{#2}{}{#1}}}
\newrobustcmd{\barzetaz}[2][]{\ensuremath{\subp{\bar{\zeta}}{}{#2}{}{#1}}}
\newrobustcmd{\veczetaz}[2][]{\ensuremath{\subp{\vec{\zeta}}{}{#2}{}{#1}}}
\newrobustcmd{\bmzetaz}[2][]{\ensuremath{\subp{\bm{\zeta}}{}{#2}{}{#1}}}
\newrobustcmd{\hatbmzetaz}[2][]{\ensuremath{\subp{\hat{\bm{\zeta}}}{}{#2}{}{#1}}}
\newrobustcmd{\widehatbmzetaz}[2][]{\ensuremath{\subp{\widehat{\bm{\zeta}}}{}{#2}{}{#1}}}
\newrobustcmd{\checkbmzetaz}[2][]{\ensuremath{\subp{\check{\bm{\zeta}}}{}{#2}{}{#1}}}
\newrobustcmd{\tildebmzetaz}[2][]{\ensuremath{\subp{\tilde{\bm{\zeta}}}{}{#2}{}{#1}}}
\newrobustcmd{\widetildebmzetaz}[2][]{\ensuremath{\subp{\widetilde{\bm{\zeta}}}{}{#2}{}{#1}}}
\newrobustcmd{\acutebmzetaz}[2][]{\ensuremath{\subp{\acute{\bm{\zeta}}}{}{#2}{}{#1}}}
\newrobustcmd{\gravebmzetaz}[2][]{\ensuremath{\subp{\grave{\bm{\zeta}}}{}{#2}{}{#1}}}
\newrobustcmd{\dotbmzetaz}[2][]{\ensuremath{\subp{\dot{\bm{\zeta}}}{}{#2}{}{#1}}}
\newrobustcmd{\ddotbmzetaz}[2][]{\ensuremath{\subp{\ddot{\bm{\zeta}}}{}{#2}{}{#1}}}
\newrobustcmd{\brevebmzetaz}[2][]{\ensuremath{\subp{\breve{\bm{\zeta}}}{}{#2}{}{#1}}}
\newrobustcmd{\barbmzetaz}[2][]{\ensuremath{\subp{\bar{\bm{\zeta}}}{}{#2}{}{#1}}}
\newrobustcmd{\vecbmzetaz}[2][]{\ensuremath{\subp{\vec{\bm{\zeta}}}{}{#2}{}{#1}}}
\newrobustcmd{\etaz}[2][]{\ensuremath{\subp{\eta}{}{#2}{}{#1}}}
\newrobustcmd{\hatetaz}[2][]{\ensuremath{\subp{\hat{\eta}}{}{#2}{}{#1}}}
\newrobustcmd{\widehatetaz}[2][]{\ensuremath{\subp{\widehat{\eta}}{}{#2}{}{#1}}}
\newrobustcmd{\checketaz}[2][]{\ensuremath{\subp{\check{\eta}}{}{#2}{}{#1}}}
\newrobustcmd{\tildeetaz}[2][]{\ensuremath{\subp{\tilde{\eta}}{}{#2}{}{#1}}}
\newrobustcmd{\widetildeetaz}[2][]{\ensuremath{\subp{\widetilde{\eta}}{}{#2}{}{#1}}}
\newrobustcmd{\acuteetaz}[2][]{\ensuremath{\subp{\acute{\eta}}{}{#2}{}{#1}}}
\newrobustcmd{\graveetaz}[2][]{\ensuremath{\subp{\grave{\eta}}{}{#2}{}{#1}}}
\newrobustcmd{\dotetaz}[2][]{\ensuremath{\subp{\dot{\eta}}{}{#2}{}{#1}}}
\newrobustcmd{\ddotetaz}[2][]{\ensuremath{\subp{\ddot{\eta}}{}{#2}{}{#1}}}
\newrobustcmd{\breveetaz}[2][]{\ensuremath{\subp{\breve{\eta}}{}{#2}{}{#1}}}
\newrobustcmd{\baretaz}[2][]{\ensuremath{\subp{\bar{\eta}}{}{#2}{}{#1}}}
\newrobustcmd{\vecetaz}[2][]{\ensuremath{\subp{\vec{\eta}}{}{#2}{}{#1}}}
\newrobustcmd{\bmetaz}[2][]{\ensuremath{\subp{\bm{\eta}}{}{#2}{}{#1}}}
\newrobustcmd{\hatbmetaz}[2][]{\ensuremath{\subp{\hat{\bm{\eta}}}{}{#2}{}{#1}}}
\newrobustcmd{\widehatbmetaz}[2][]{\ensuremath{\subp{\widehat{\bm{\eta}}}{}{#2}{}{#1}}}
\newrobustcmd{\checkbmetaz}[2][]{\ensuremath{\subp{\check{\bm{\eta}}}{}{#2}{}{#1}}}
\newrobustcmd{\tildebmetaz}[2][]{\ensuremath{\subp{\tilde{\bm{\eta}}}{}{#2}{}{#1}}}
\newrobustcmd{\widetildebmetaz}[2][]{\ensuremath{\subp{\widetilde{\bm{\eta}}}{}{#2}{}{#1}}}
\newrobustcmd{\acutebmetaz}[2][]{\ensuremath{\subp{\acute{\bm{\eta}}}{}{#2}{}{#1}}}
\newrobustcmd{\gravebmetaz}[2][]{\ensuremath{\subp{\grave{\bm{\eta}}}{}{#2}{}{#1}}}
\newrobustcmd{\dotbmetaz}[2][]{\ensuremath{\subp{\dot{\bm{\eta}}}{}{#2}{}{#1}}}
\newrobustcmd{\ddotbmetaz}[2][]{\ensuremath{\subp{\ddot{\bm{\eta}}}{}{#2}{}{#1}}}
\newrobustcmd{\brevebmetaz}[2][]{\ensuremath{\subp{\breve{\bm{\eta}}}{}{#2}{}{#1}}}
\newrobustcmd{\barbmetaz}[2][]{\ensuremath{\subp{\bar{\bm{\eta}}}{}{#2}{}{#1}}}
\newrobustcmd{\vecbmetaz}[2][]{\ensuremath{\subp{\vec{\bm{\eta}}}{}{#2}{}{#1}}}
\newrobustcmd{\thetaz}[2][]{\ensuremath{\subp{\theta}{}{#2}{}{#1}}}
\newrobustcmd{\hatthetaz}[2][]{\ensuremath{\subp{\hat{\theta}}{}{#2}{}{#1}}}
\newrobustcmd{\widehatthetaz}[2][]{\ensuremath{\subp{\widehat{\theta}}{}{#2}{}{#1}}}
\newrobustcmd{\checkthetaz}[2][]{\ensuremath{\subp{\check{\theta}}{}{#2}{}{#1}}}
\newrobustcmd{\tildethetaz}[2][]{\ensuremath{\subp{\tilde{\theta}}{}{#2}{}{#1}}}
\newrobustcmd{\widetildethetaz}[2][]{\ensuremath{\subp{\widetilde{\theta}}{}{#2}{}{#1}}}
\newrobustcmd{\acutethetaz}[2][]{\ensuremath{\subp{\acute{\theta}}{}{#2}{}{#1}}}
\newrobustcmd{\gravethetaz}[2][]{\ensuremath{\subp{\grave{\theta}}{}{#2}{}{#1}}}
\newrobustcmd{\dotthetaz}[2][]{\ensuremath{\subp{\dot{\theta}}{}{#2}{}{#1}}}
\newrobustcmd{\ddotthetaz}[2][]{\ensuremath{\subp{\ddot{\theta}}{}{#2}{}{#1}}}
\newrobustcmd{\brevethetaz}[2][]{\ensuremath{\subp{\breve{\theta}}{}{#2}{}{#1}}}
\newrobustcmd{\barthetaz}[2][]{\ensuremath{\subp{\bar{\theta}}{}{#2}{}{#1}}}
\newrobustcmd{\vecthetaz}[2][]{\ensuremath{\subp{\vec{\theta}}{}{#2}{}{#1}}}
\newrobustcmd{\bmthetaz}[2][]{\ensuremath{\subp{\bm{\theta}}{}{#2}{}{#1}}}
\newrobustcmd{\hatbmthetaz}[2][]{\ensuremath{\subp{\hat{\bm{\theta}}}{}{#2}{}{#1}}}
\newrobustcmd{\widehatbmthetaz}[2][]{\ensuremath{\subp{\widehat{\bm{\theta}}}{}{#2}{}{#1}}}
\newrobustcmd{\checkbmthetaz}[2][]{\ensuremath{\subp{\check{\bm{\theta}}}{}{#2}{}{#1}}}
\newrobustcmd{\tildebmthetaz}[2][]{\ensuremath{\subp{\tilde{\bm{\theta}}}{}{#2}{}{#1}}}
\newrobustcmd{\widetildebmthetaz}[2][]{\ensuremath{\subp{\widetilde{\bm{\theta}}}{}{#2}{}{#1}}}
\newrobustcmd{\acutebmthetaz}[2][]{\ensuremath{\subp{\acute{\bm{\theta}}}{}{#2}{}{#1}}}
\newrobustcmd{\gravebmthetaz}[2][]{\ensuremath{\subp{\grave{\bm{\theta}}}{}{#2}{}{#1}}}
\newrobustcmd{\dotbmthetaz}[2][]{\ensuremath{\subp{\dot{\bm{\theta}}}{}{#2}{}{#1}}}
\newrobustcmd{\ddotbmthetaz}[2][]{\ensuremath{\subp{\ddot{\bm{\theta}}}{}{#2}{}{#1}}}
\newrobustcmd{\brevebmthetaz}[2][]{\ensuremath{\subp{\breve{\bm{\theta}}}{}{#2}{}{#1}}}
\newrobustcmd{\barbmthetaz}[2][]{\ensuremath{\subp{\bar{\bm{\theta}}}{}{#2}{}{#1}}}
\newrobustcmd{\vecbmthetaz}[2][]{\ensuremath{\subp{\vec{\bm{\theta}}}{}{#2}{}{#1}}}
\newrobustcmd{\varthetaz}[2][]{\ensuremath{\subp{\vartheta}{}{#2}{}{#1}}}
\newrobustcmd{\hatvarthetaz}[2][]{\ensuremath{\subp{\hat{\vartheta}}{}{#2}{}{#1}}}
\newrobustcmd{\widehatvarthetaz}[2][]{\ensuremath{\subp{\widehat{\vartheta}}{}{#2}{}{#1}}}
\newrobustcmd{\checkvarthetaz}[2][]{\ensuremath{\subp{\check{\vartheta}}{}{#2}{}{#1}}}
\newrobustcmd{\tildevarthetaz}[2][]{\ensuremath{\subp{\tilde{\vartheta}}{}{#2}{}{#1}}}
\newrobustcmd{\widetildevarthetaz}[2][]{\ensuremath{\subp{\widetilde{\vartheta}}{}{#2}{}{#1}}}
\newrobustcmd{\acutevarthetaz}[2][]{\ensuremath{\subp{\acute{\vartheta}}{}{#2}{}{#1}}}
\newrobustcmd{\gravevarthetaz}[2][]{\ensuremath{\subp{\grave{\vartheta}}{}{#2}{}{#1}}}
\newrobustcmd{\dotvarthetaz}[2][]{\ensuremath{\subp{\dot{\vartheta}}{}{#2}{}{#1}}}
\newrobustcmd{\ddotvarthetaz}[2][]{\ensuremath{\subp{\ddot{\vartheta}}{}{#2}{}{#1}}}
\newrobustcmd{\brevevarthetaz}[2][]{\ensuremath{\subp{\breve{\vartheta}}{}{#2}{}{#1}}}
\newrobustcmd{\barvarthetaz}[2][]{\ensuremath{\subp{\bar{\vartheta}}{}{#2}{}{#1}}}
\newrobustcmd{\vecvarthetaz}[2][]{\ensuremath{\subp{\vec{\vartheta}}{}{#2}{}{#1}}}
\newrobustcmd{\bmvarthetaz}[2][]{\ensuremath{\subp{\bm{\vartheta}}{}{#2}{}{#1}}}
\newrobustcmd{\hatbmvarthetaz}[2][]{\ensuremath{\subp{\hat{\bm{\vartheta}}}{}{#2}{}{#1}}}
\newrobustcmd{\widehatbmvarthetaz}[2][]{\ensuremath{\subp{\widehat{\bm{\vartheta}}}{}{#2}{}{#1}}}
\newrobustcmd{\checkbmvarthetaz}[2][]{\ensuremath{\subp{\check{\bm{\vartheta}}}{}{#2}{}{#1}}}
\newrobustcmd{\tildebmvarthetaz}[2][]{\ensuremath{\subp{\tilde{\bm{\vartheta}}}{}{#2}{}{#1}}}
\newrobustcmd{\widetildebmvarthetaz}[2][]{\ensuremath{\subp{\widetilde{\bm{\vartheta}}}{}{#2}{}{#1}}}
\newrobustcmd{\acutebmvarthetaz}[2][]{\ensuremath{\subp{\acute{\bm{\vartheta}}}{}{#2}{}{#1}}}
\newrobustcmd{\gravebmvarthetaz}[2][]{\ensuremath{\subp{\grave{\bm{\vartheta}}}{}{#2}{}{#1}}}
\newrobustcmd{\dotbmvarthetaz}[2][]{\ensuremath{\subp{\dot{\bm{\vartheta}}}{}{#2}{}{#1}}}
\newrobustcmd{\ddotbmvarthetaz}[2][]{\ensuremath{\subp{\ddot{\bm{\vartheta}}}{}{#2}{}{#1}}}
\newrobustcmd{\brevebmvarthetaz}[2][]{\ensuremath{\subp{\breve{\bm{\vartheta}}}{}{#2}{}{#1}}}
\newrobustcmd{\barbmvarthetaz}[2][]{\ensuremath{\subp{\bar{\bm{\vartheta}}}{}{#2}{}{#1}}}
\newrobustcmd{\vecbmvarthetaz}[2][]{\ensuremath{\subp{\vec{\bm{\vartheta}}}{}{#2}{}{#1}}}
\newrobustcmd{\iotaz}[2][]{\ensuremath{\subp{\iota}{}{#2}{}{#1}}}
\newrobustcmd{\hatiotaz}[2][]{\ensuremath{\subp{\hat{\iota}}{}{#2}{}{#1}}}
\newrobustcmd{\widehatiotaz}[2][]{\ensuremath{\subp{\widehat{\iota}}{}{#2}{}{#1}}}
\newrobustcmd{\checkiotaz}[2][]{\ensuremath{\subp{\check{\iota}}{}{#2}{}{#1}}}
\newrobustcmd{\tildeiotaz}[2][]{\ensuremath{\subp{\tilde{\iota}}{}{#2}{}{#1}}}
\newrobustcmd{\widetildeiotaz}[2][]{\ensuremath{\subp{\widetilde{\iota}}{}{#2}{}{#1}}}
\newrobustcmd{\acuteiotaz}[2][]{\ensuremath{\subp{\acute{\iota}}{}{#2}{}{#1}}}
\newrobustcmd{\graveiotaz}[2][]{\ensuremath{\subp{\grave{\iota}}{}{#2}{}{#1}}}
\newrobustcmd{\dotiotaz}[2][]{\ensuremath{\subp{\dot{\iota}}{}{#2}{}{#1}}}
\newrobustcmd{\ddotiotaz}[2][]{\ensuremath{\subp{\ddot{\iota}}{}{#2}{}{#1}}}
\newrobustcmd{\breveiotaz}[2][]{\ensuremath{\subp{\breve{\iota}}{}{#2}{}{#1}}}
\newrobustcmd{\bariotaz}[2][]{\ensuremath{\subp{\bar{\iota}}{}{#2}{}{#1}}}
\newrobustcmd{\veciotaz}[2][]{\ensuremath{\subp{\vec{\iota}}{}{#2}{}{#1}}}
\newrobustcmd{\bmiotaz}[2][]{\ensuremath{\subp{\bm{\iota}}{}{#2}{}{#1}}}
\newrobustcmd{\hatbmiotaz}[2][]{\ensuremath{\subp{\hat{\bm{\iota}}}{}{#2}{}{#1}}}
\newrobustcmd{\widehatbmiotaz}[2][]{\ensuremath{\subp{\widehat{\bm{\iota}}}{}{#2}{}{#1}}}
\newrobustcmd{\checkbmiotaz}[2][]{\ensuremath{\subp{\check{\bm{\iota}}}{}{#2}{}{#1}}}
\newrobustcmd{\tildebmiotaz}[2][]{\ensuremath{\subp{\tilde{\bm{\iota}}}{}{#2}{}{#1}}}
\newrobustcmd{\widetildebmiotaz}[2][]{\ensuremath{\subp{\widetilde{\bm{\iota}}}{}{#2}{}{#1}}}
\newrobustcmd{\acutebmiotaz}[2][]{\ensuremath{\subp{\acute{\bm{\iota}}}{}{#2}{}{#1}}}
\newrobustcmd{\gravebmiotaz}[2][]{\ensuremath{\subp{\grave{\bm{\iota}}}{}{#2}{}{#1}}}
\newrobustcmd{\dotbmiotaz}[2][]{\ensuremath{\subp{\dot{\bm{\iota}}}{}{#2}{}{#1}}}
\newrobustcmd{\ddotbmiotaz}[2][]{\ensuremath{\subp{\ddot{\bm{\iota}}}{}{#2}{}{#1}}}
\newrobustcmd{\brevebmiotaz}[2][]{\ensuremath{\subp{\breve{\bm{\iota}}}{}{#2}{}{#1}}}
\newrobustcmd{\barbmiotaz}[2][]{\ensuremath{\subp{\bar{\bm{\iota}}}{}{#2}{}{#1}}}
\newrobustcmd{\vecbmiotaz}[2][]{\ensuremath{\subp{\vec{\bm{\iota}}}{}{#2}{}{#1}}}
\newrobustcmd{\kappaz}[2][]{\ensuremath{\subp{\kappa}{}{#2}{}{#1}}}
\newrobustcmd{\hatkappaz}[2][]{\ensuremath{\subp{\hat{\kappa}}{}{#2}{}{#1}}}
\newrobustcmd{\widehatkappaz}[2][]{\ensuremath{\subp{\widehat{\kappa}}{}{#2}{}{#1}}}
\newrobustcmd{\checkkappaz}[2][]{\ensuremath{\subp{\check{\kappa}}{}{#2}{}{#1}}}
\newrobustcmd{\tildekappaz}[2][]{\ensuremath{\subp{\tilde{\kappa}}{}{#2}{}{#1}}}
\newrobustcmd{\widetildekappaz}[2][]{\ensuremath{\subp{\widetilde{\kappa}}{}{#2}{}{#1}}}
\newrobustcmd{\acutekappaz}[2][]{\ensuremath{\subp{\acute{\kappa}}{}{#2}{}{#1}}}
\newrobustcmd{\gravekappaz}[2][]{\ensuremath{\subp{\grave{\kappa}}{}{#2}{}{#1}}}
\newrobustcmd{\dotkappaz}[2][]{\ensuremath{\subp{\dot{\kappa}}{}{#2}{}{#1}}}
\newrobustcmd{\ddotkappaz}[2][]{\ensuremath{\subp{\ddot{\kappa}}{}{#2}{}{#1}}}
\newrobustcmd{\brevekappaz}[2][]{\ensuremath{\subp{\breve{\kappa}}{}{#2}{}{#1}}}
\newrobustcmd{\barkappaz}[2][]{\ensuremath{\subp{\bar{\kappa}}{}{#2}{}{#1}}}
\newrobustcmd{\veckappaz}[2][]{\ensuremath{\subp{\vec{\kappa}}{}{#2}{}{#1}}}
\newrobustcmd{\bmkappaz}[2][]{\ensuremath{\subp{\bm{\kappa}}{}{#2}{}{#1}}}
\newrobustcmd{\hatbmkappaz}[2][]{\ensuremath{\subp{\hat{\bm{\kappa}}}{}{#2}{}{#1}}}
\newrobustcmd{\widehatbmkappaz}[2][]{\ensuremath{\subp{\widehat{\bm{\kappa}}}{}{#2}{}{#1}}}
\newrobustcmd{\checkbmkappaz}[2][]{\ensuremath{\subp{\check{\bm{\kappa}}}{}{#2}{}{#1}}}
\newrobustcmd{\tildebmkappaz}[2][]{\ensuremath{\subp{\tilde{\bm{\kappa}}}{}{#2}{}{#1}}}
\newrobustcmd{\widetildebmkappaz}[2][]{\ensuremath{\subp{\widetilde{\bm{\kappa}}}{}{#2}{}{#1}}}
\newrobustcmd{\acutebmkappaz}[2][]{\ensuremath{\subp{\acute{\bm{\kappa}}}{}{#2}{}{#1}}}
\newrobustcmd{\gravebmkappaz}[2][]{\ensuremath{\subp{\grave{\bm{\kappa}}}{}{#2}{}{#1}}}
\newrobustcmd{\dotbmkappaz}[2][]{\ensuremath{\subp{\dot{\bm{\kappa}}}{}{#2}{}{#1}}}
\newrobustcmd{\ddotbmkappaz}[2][]{\ensuremath{\subp{\ddot{\bm{\kappa}}}{}{#2}{}{#1}}}
\newrobustcmd{\brevebmkappaz}[2][]{\ensuremath{\subp{\breve{\bm{\kappa}}}{}{#2}{}{#1}}}
\newrobustcmd{\barbmkappaz}[2][]{\ensuremath{\subp{\bar{\bm{\kappa}}}{}{#2}{}{#1}}}
\newrobustcmd{\vecbmkappaz}[2][]{\ensuremath{\subp{\vec{\bm{\kappa}}}{}{#2}{}{#1}}}
\newrobustcmd{\varkappaz}[2][]{\ensuremath{\subp{\varkappa}{}{#2}{}{#1}}}
\newrobustcmd{\hatvarkappaz}[2][]{\ensuremath{\subp{\hat{\varkappa}}{}{#2}{}{#1}}}
\newrobustcmd{\widehatvarkappaz}[2][]{\ensuremath{\subp{\widehat{\varkappa}}{}{#2}{}{#1}}}
\newrobustcmd{\checkvarkappaz}[2][]{\ensuremath{\subp{\check{\varkappa}}{}{#2}{}{#1}}}
\newrobustcmd{\tildevarkappaz}[2][]{\ensuremath{\subp{\tilde{\varkappa}}{}{#2}{}{#1}}}
\newrobustcmd{\widetildevarkappaz}[2][]{\ensuremath{\subp{\widetilde{\varkappa}}{}{#2}{}{#1}}}
\newrobustcmd{\acutevarkappaz}[2][]{\ensuremath{\subp{\acute{\varkappa}}{}{#2}{}{#1}}}
\newrobustcmd{\gravevarkappaz}[2][]{\ensuremath{\subp{\grave{\varkappa}}{}{#2}{}{#1}}}
\newrobustcmd{\dotvarkappaz}[2][]{\ensuremath{\subp{\dot{\varkappa}}{}{#2}{}{#1}}}
\newrobustcmd{\ddotvarkappaz}[2][]{\ensuremath{\subp{\ddot{\varkappa}}{}{#2}{}{#1}}}
\newrobustcmd{\brevevarkappaz}[2][]{\ensuremath{\subp{\breve{\varkappa}}{}{#2}{}{#1}}}
\newrobustcmd{\barvarkappaz}[2][]{\ensuremath{\subp{\bar{\varkappa}}{}{#2}{}{#1}}}
\newrobustcmd{\vecvarkappaz}[2][]{\ensuremath{\subp{\vec{\varkappa}}{}{#2}{}{#1}}}
\newrobustcmd{\bmvarkappaz}[2][]{\ensuremath{\subp{\bm{\varkappa}}{}{#2}{}{#1}}}
\newrobustcmd{\hatbmvarkappaz}[2][]{\ensuremath{\subp{\hat{\bm{\varkappa}}}{}{#2}{}{#1}}}
\newrobustcmd{\widehatbmvarkappaz}[2][]{\ensuremath{\subp{\widehat{\bm{\varkappa}}}{}{#2}{}{#1}}}
\newrobustcmd{\checkbmvarkappaz}[2][]{\ensuremath{\subp{\check{\bm{\varkappa}}}{}{#2}{}{#1}}}
\newrobustcmd{\tildebmvarkappaz}[2][]{\ensuremath{\subp{\tilde{\bm{\varkappa}}}{}{#2}{}{#1}}}
\newrobustcmd{\widetildebmvarkappaz}[2][]{\ensuremath{\subp{\widetilde{\bm{\varkappa}}}{}{#2}{}{#1}}}
\newrobustcmd{\acutebmvarkappaz}[2][]{\ensuremath{\subp{\acute{\bm{\varkappa}}}{}{#2}{}{#1}}}
\newrobustcmd{\gravebmvarkappaz}[2][]{\ensuremath{\subp{\grave{\bm{\varkappa}}}{}{#2}{}{#1}}}
\newrobustcmd{\dotbmvarkappaz}[2][]{\ensuremath{\subp{\dot{\bm{\varkappa}}}{}{#2}{}{#1}}}
\newrobustcmd{\ddotbmvarkappaz}[2][]{\ensuremath{\subp{\ddot{\bm{\varkappa}}}{}{#2}{}{#1}}}
\newrobustcmd{\brevebmvarkappaz}[2][]{\ensuremath{\subp{\breve{\bm{\varkappa}}}{}{#2}{}{#1}}}
\newrobustcmd{\barbmvarkappaz}[2][]{\ensuremath{\subp{\bar{\bm{\varkappa}}}{}{#2}{}{#1}}}
\newrobustcmd{\vecbmvarkappaz}[2][]{\ensuremath{\subp{\vec{\bm{\varkappa}}}{}{#2}{}{#1}}}
\newrobustcmd{\lambdaz}[2][]{\ensuremath{\subp{\lambda}{}{#2}{}{#1}}}
\newrobustcmd{\hatlambdaz}[2][]{\ensuremath{\subp{\hat{\lambda}}{}{#2}{}{#1}}}
\newrobustcmd{\widehatlambdaz}[2][]{\ensuremath{\subp{\widehat{\lambda}}{}{#2}{}{#1}}}
\newrobustcmd{\checklambdaz}[2][]{\ensuremath{\subp{\check{\lambda}}{}{#2}{}{#1}}}
\newrobustcmd{\tildelambdaz}[2][]{\ensuremath{\subp{\tilde{\lambda}}{}{#2}{}{#1}}}
\newrobustcmd{\widetildelambdaz}[2][]{\ensuremath{\subp{\widetilde{\lambda}}{}{#2}{}{#1}}}
\newrobustcmd{\acutelambdaz}[2][]{\ensuremath{\subp{\acute{\lambda}}{}{#2}{}{#1}}}
\newrobustcmd{\gravelambdaz}[2][]{\ensuremath{\subp{\grave{\lambda}}{}{#2}{}{#1}}}
\newrobustcmd{\dotlambdaz}[2][]{\ensuremath{\subp{\dot{\lambda}}{}{#2}{}{#1}}}
\newrobustcmd{\ddotlambdaz}[2][]{\ensuremath{\subp{\ddot{\lambda}}{}{#2}{}{#1}}}
\newrobustcmd{\brevelambdaz}[2][]{\ensuremath{\subp{\breve{\lambda}}{}{#2}{}{#1}}}
\newrobustcmd{\barlambdaz}[2][]{\ensuremath{\subp{\bar{\lambda}}{}{#2}{}{#1}}}
\newrobustcmd{\veclambdaz}[2][]{\ensuremath{\subp{\vec{\lambda}}{}{#2}{}{#1}}}
\newrobustcmd{\bmlambdaz}[2][]{\ensuremath{\subp{\bm{\lambda}}{}{#2}{}{#1}}}
\newrobustcmd{\hatbmlambdaz}[2][]{\ensuremath{\subp{\hat{\bm{\lambda}}}{}{#2}{}{#1}}}
\newrobustcmd{\widehatbmlambdaz}[2][]{\ensuremath{\subp{\widehat{\bm{\lambda}}}{}{#2}{}{#1}}}
\newrobustcmd{\checkbmlambdaz}[2][]{\ensuremath{\subp{\check{\bm{\lambda}}}{}{#2}{}{#1}}}
\newrobustcmd{\tildebmlambdaz}[2][]{\ensuremath{\subp{\tilde{\bm{\lambda}}}{}{#2}{}{#1}}}
\newrobustcmd{\widetildebmlambdaz}[2][]{\ensuremath{\subp{\widetilde{\bm{\lambda}}}{}{#2}{}{#1}}}
\newrobustcmd{\acutebmlambdaz}[2][]{\ensuremath{\subp{\acute{\bm{\lambda}}}{}{#2}{}{#1}}}
\newrobustcmd{\gravebmlambdaz}[2][]{\ensuremath{\subp{\grave{\bm{\lambda}}}{}{#2}{}{#1}}}
\newrobustcmd{\dotbmlambdaz}[2][]{\ensuremath{\subp{\dot{\bm{\lambda}}}{}{#2}{}{#1}}}
\newrobustcmd{\ddotbmlambdaz}[2][]{\ensuremath{\subp{\ddot{\bm{\lambda}}}{}{#2}{}{#1}}}
\newrobustcmd{\brevebmlambdaz}[2][]{\ensuremath{\subp{\breve{\bm{\lambda}}}{}{#2}{}{#1}}}
\newrobustcmd{\barbmlambdaz}[2][]{\ensuremath{\subp{\bar{\bm{\lambda}}}{}{#2}{}{#1}}}
\newrobustcmd{\vecbmlambdaz}[2][]{\ensuremath{\subp{\vec{\bm{\lambda}}}{}{#2}{}{#1}}}
\newrobustcmd{\muz}[2][]{\ensuremath{\subp{\mu}{}{#2}{}{#1}}}
\newrobustcmd{\hatmuz}[2][]{\ensuremath{\subp{\hat{\mu}}{}{#2}{}{#1}}}
\newrobustcmd{\widehatmuz}[2][]{\ensuremath{\subp{\widehat{\mu}}{}{#2}{}{#1}}}
\newrobustcmd{\checkmuz}[2][]{\ensuremath{\subp{\check{\mu}}{}{#2}{}{#1}}}
\newrobustcmd{\tildemuz}[2][]{\ensuremath{\subp{\tilde{\mu}}{}{#2}{}{#1}}}
\newrobustcmd{\widetildemuz}[2][]{\ensuremath{\subp{\widetilde{\mu}}{}{#2}{}{#1}}}
\newrobustcmd{\acutemuz}[2][]{\ensuremath{\subp{\acute{\mu}}{}{#2}{}{#1}}}
\newrobustcmd{\gravemuz}[2][]{\ensuremath{\subp{\grave{\mu}}{}{#2}{}{#1}}}
\newrobustcmd{\dotmuz}[2][]{\ensuremath{\subp{\dot{\mu}}{}{#2}{}{#1}}}
\newrobustcmd{\ddotmuz}[2][]{\ensuremath{\subp{\ddot{\mu}}{}{#2}{}{#1}}}
\newrobustcmd{\brevemuz}[2][]{\ensuremath{\subp{\breve{\mu}}{}{#2}{}{#1}}}
\newrobustcmd{\barmuz}[2][]{\ensuremath{\subp{\bar{\mu}}{}{#2}{}{#1}}}
\newrobustcmd{\vecmuz}[2][]{\ensuremath{\subp{\vec{\mu}}{}{#2}{}{#1}}}
\newrobustcmd{\bmmuz}[2][]{\ensuremath{\subp{\bm{\mu}}{}{#2}{}{#1}}}
\newrobustcmd{\hatbmmuz}[2][]{\ensuremath{\subp{\hat{\bm{\mu}}}{}{#2}{}{#1}}}
\newrobustcmd{\widehatbmmuz}[2][]{\ensuremath{\subp{\widehat{\bm{\mu}}}{}{#2}{}{#1}}}
\newrobustcmd{\checkbmmuz}[2][]{\ensuremath{\subp{\check{\bm{\mu}}}{}{#2}{}{#1}}}
\newrobustcmd{\tildebmmuz}[2][]{\ensuremath{\subp{\tilde{\bm{\mu}}}{}{#2}{}{#1}}}
\newrobustcmd{\widetildebmmuz}[2][]{\ensuremath{\subp{\widetilde{\bm{\mu}}}{}{#2}{}{#1}}}
\newrobustcmd{\acutebmmuz}[2][]{\ensuremath{\subp{\acute{\bm{\mu}}}{}{#2}{}{#1}}}
\newrobustcmd{\gravebmmuz}[2][]{\ensuremath{\subp{\grave{\bm{\mu}}}{}{#2}{}{#1}}}
\newrobustcmd{\dotbmmuz}[2][]{\ensuremath{\subp{\dot{\bm{\mu}}}{}{#2}{}{#1}}}
\newrobustcmd{\ddotbmmuz}[2][]{\ensuremath{\subp{\ddot{\bm{\mu}}}{}{#2}{}{#1}}}
\newrobustcmd{\brevebmmuz}[2][]{\ensuremath{\subp{\breve{\bm{\mu}}}{}{#2}{}{#1}}}
\newrobustcmd{\barbmmuz}[2][]{\ensuremath{\subp{\bar{\bm{\mu}}}{}{#2}{}{#1}}}
\newrobustcmd{\vecbmmuz}[2][]{\ensuremath{\subp{\vec{\bm{\mu}}}{}{#2}{}{#1}}}
\newrobustcmd{\nuz}[2][]{\ensuremath{\subp{\nu}{}{#2}{}{#1}}}
\newrobustcmd{\hatnuz}[2][]{\ensuremath{\subp{\hat{\nu}}{}{#2}{}{#1}}}
\newrobustcmd{\widehatnuz}[2][]{\ensuremath{\subp{\widehat{\nu}}{}{#2}{}{#1}}}
\newrobustcmd{\checknuz}[2][]{\ensuremath{\subp{\check{\nu}}{}{#2}{}{#1}}}
\newrobustcmd{\tildenuz}[2][]{\ensuremath{\subp{\tilde{\nu}}{}{#2}{}{#1}}}
\newrobustcmd{\widetildenuz}[2][]{\ensuremath{\subp{\widetilde{\nu}}{}{#2}{}{#1}}}
\newrobustcmd{\acutenuz}[2][]{\ensuremath{\subp{\acute{\nu}}{}{#2}{}{#1}}}
\newrobustcmd{\gravenuz}[2][]{\ensuremath{\subp{\grave{\nu}}{}{#2}{}{#1}}}
\newrobustcmd{\dotnuz}[2][]{\ensuremath{\subp{\dot{\nu}}{}{#2}{}{#1}}}
\newrobustcmd{\ddotnuz}[2][]{\ensuremath{\subp{\ddot{\nu}}{}{#2}{}{#1}}}
\newrobustcmd{\brevenuz}[2][]{\ensuremath{\subp{\breve{\nu}}{}{#2}{}{#1}}}
\newrobustcmd{\barnuz}[2][]{\ensuremath{\subp{\bar{\nu}}{}{#2}{}{#1}}}
\newrobustcmd{\vecnuz}[2][]{\ensuremath{\subp{\vec{\nu}}{}{#2}{}{#1}}}
\newrobustcmd{\bmnuz}[2][]{\ensuremath{\subp{\bm{\nu}}{}{#2}{}{#1}}}
\newrobustcmd{\hatbmnuz}[2][]{\ensuremath{\subp{\hat{\bm{\nu}}}{}{#2}{}{#1}}}
\newrobustcmd{\widehatbmnuz}[2][]{\ensuremath{\subp{\widehat{\bm{\nu}}}{}{#2}{}{#1}}}
\newrobustcmd{\checkbmnuz}[2][]{\ensuremath{\subp{\check{\bm{\nu}}}{}{#2}{}{#1}}}
\newrobustcmd{\tildebmnuz}[2][]{\ensuremath{\subp{\tilde{\bm{\nu}}}{}{#2}{}{#1}}}
\newrobustcmd{\widetildebmnuz}[2][]{\ensuremath{\subp{\widetilde{\bm{\nu}}}{}{#2}{}{#1}}}
\newrobustcmd{\acutebmnuz}[2][]{\ensuremath{\subp{\acute{\bm{\nu}}}{}{#2}{}{#1}}}
\newrobustcmd{\gravebmnuz}[2][]{\ensuremath{\subp{\grave{\bm{\nu}}}{}{#2}{}{#1}}}
\newrobustcmd{\dotbmnuz}[2][]{\ensuremath{\subp{\dot{\bm{\nu}}}{}{#2}{}{#1}}}
\newrobustcmd{\ddotbmnuz}[2][]{\ensuremath{\subp{\ddot{\bm{\nu}}}{}{#2}{}{#1}}}
\newrobustcmd{\brevebmnuz}[2][]{\ensuremath{\subp{\breve{\bm{\nu}}}{}{#2}{}{#1}}}
\newrobustcmd{\barbmnuz}[2][]{\ensuremath{\subp{\bar{\bm{\nu}}}{}{#2}{}{#1}}}
\newrobustcmd{\vecbmnuz}[2][]{\ensuremath{\subp{\vec{\bm{\nu}}}{}{#2}{}{#1}}}
\newrobustcmd{\xiz}[2][]{\ensuremath{\subp{\xi}{}{#2}{}{#1}}}
\newrobustcmd{\hatxiz}[2][]{\ensuremath{\subp{\hat{\xi}}{}{#2}{}{#1}}}
\newrobustcmd{\widehatxiz}[2][]{\ensuremath{\subp{\widehat{\xi}}{}{#2}{}{#1}}}
\newrobustcmd{\checkxiz}[2][]{\ensuremath{\subp{\check{\xi}}{}{#2}{}{#1}}}
\newrobustcmd{\tildexiz}[2][]{\ensuremath{\subp{\tilde{\xi}}{}{#2}{}{#1}}}
\newrobustcmd{\widetildexiz}[2][]{\ensuremath{\subp{\widetilde{\xi}}{}{#2}{}{#1}}}
\newrobustcmd{\acutexiz}[2][]{\ensuremath{\subp{\acute{\xi}}{}{#2}{}{#1}}}
\newrobustcmd{\gravexiz}[2][]{\ensuremath{\subp{\grave{\xi}}{}{#2}{}{#1}}}
\newrobustcmd{\dotxiz}[2][]{\ensuremath{\subp{\dot{\xi}}{}{#2}{}{#1}}}
\newrobustcmd{\ddotxiz}[2][]{\ensuremath{\subp{\ddot{\xi}}{}{#2}{}{#1}}}
\newrobustcmd{\brevexiz}[2][]{\ensuremath{\subp{\breve{\xi}}{}{#2}{}{#1}}}
\newrobustcmd{\barxiz}[2][]{\ensuremath{\subp{\bar{\xi}}{}{#2}{}{#1}}}
\newrobustcmd{\vecxiz}[2][]{\ensuremath{\subp{\vec{\xi}}{}{#2}{}{#1}}}
\newrobustcmd{\bmxiz}[2][]{\ensuremath{\subp{\bm{\xi}}{}{#2}{}{#1}}}
\newrobustcmd{\hatbmxiz}[2][]{\ensuremath{\subp{\hat{\bm{\xi}}}{}{#2}{}{#1}}}
\newrobustcmd{\widehatbmxiz}[2][]{\ensuremath{\subp{\widehat{\bm{\xi}}}{}{#2}{}{#1}}}
\newrobustcmd{\checkbmxiz}[2][]{\ensuremath{\subp{\check{\bm{\xi}}}{}{#2}{}{#1}}}
\newrobustcmd{\tildebmxiz}[2][]{\ensuremath{\subp{\tilde{\bm{\xi}}}{}{#2}{}{#1}}}
\newrobustcmd{\widetildebmxiz}[2][]{\ensuremath{\subp{\widetilde{\bm{\xi}}}{}{#2}{}{#1}}}
\newrobustcmd{\acutebmxiz}[2][]{\ensuremath{\subp{\acute{\bm{\xi}}}{}{#2}{}{#1}}}
\newrobustcmd{\gravebmxiz}[2][]{\ensuremath{\subp{\grave{\bm{\xi}}}{}{#2}{}{#1}}}
\newrobustcmd{\dotbmxiz}[2][]{\ensuremath{\subp{\dot{\bm{\xi}}}{}{#2}{}{#1}}}
\newrobustcmd{\ddotbmxiz}[2][]{\ensuremath{\subp{\ddot{\bm{\xi}}}{}{#2}{}{#1}}}
\newrobustcmd{\brevebmxiz}[2][]{\ensuremath{\subp{\breve{\bm{\xi}}}{}{#2}{}{#1}}}
\newrobustcmd{\barbmxiz}[2][]{\ensuremath{\subp{\bar{\bm{\xi}}}{}{#2}{}{#1}}}
\newrobustcmd{\vecbmxiz}[2][]{\ensuremath{\subp{\vec{\bm{\xi}}}{}{#2}{}{#1}}}
\newrobustcmd{\piz}[2][]{\ensuremath{\subp{\pi}{}{#2}{}{#1}}}
\newrobustcmd{\hatpiz}[2][]{\ensuremath{\subp{\hat{\pi}}{}{#2}{}{#1}}}
\newrobustcmd{\widehatpiz}[2][]{\ensuremath{\subp{\widehat{\pi}}{}{#2}{}{#1}}}
\newrobustcmd{\checkpiz}[2][]{\ensuremath{\subp{\check{\pi}}{}{#2}{}{#1}}}
\newrobustcmd{\tildepiz}[2][]{\ensuremath{\subp{\tilde{\pi}}{}{#2}{}{#1}}}
\newrobustcmd{\widetildepiz}[2][]{\ensuremath{\subp{\widetilde{\pi}}{}{#2}{}{#1}}}
\newrobustcmd{\acutepiz}[2][]{\ensuremath{\subp{\acute{\pi}}{}{#2}{}{#1}}}
\newrobustcmd{\gravepiz}[2][]{\ensuremath{\subp{\grave{\pi}}{}{#2}{}{#1}}}
\newrobustcmd{\dotpiz}[2][]{\ensuremath{\subp{\dot{\pi}}{}{#2}{}{#1}}}
\newrobustcmd{\ddotpiz}[2][]{\ensuremath{\subp{\ddot{\pi}}{}{#2}{}{#1}}}
\newrobustcmd{\brevepiz}[2][]{\ensuremath{\subp{\breve{\pi}}{}{#2}{}{#1}}}
\newrobustcmd{\barpiz}[2][]{\ensuremath{\subp{\bar{\pi}}{}{#2}{}{#1}}}
\newrobustcmd{\vecpiz}[2][]{\ensuremath{\subp{\vec{\pi}}{}{#2}{}{#1}}}
\newrobustcmd{\bmpiz}[2][]{\ensuremath{\subp{\bm{\pi}}{}{#2}{}{#1}}}
\newrobustcmd{\hatbmpiz}[2][]{\ensuremath{\subp{\hat{\bm{\pi}}}{}{#2}{}{#1}}}
\newrobustcmd{\widehatbmpiz}[2][]{\ensuremath{\subp{\widehat{\bm{\pi}}}{}{#2}{}{#1}}}
\newrobustcmd{\checkbmpiz}[2][]{\ensuremath{\subp{\check{\bm{\pi}}}{}{#2}{}{#1}}}
\newrobustcmd{\tildebmpiz}[2][]{\ensuremath{\subp{\tilde{\bm{\pi}}}{}{#2}{}{#1}}}
\newrobustcmd{\widetildebmpiz}[2][]{\ensuremath{\subp{\widetilde{\bm{\pi}}}{}{#2}{}{#1}}}
\newrobustcmd{\acutebmpiz}[2][]{\ensuremath{\subp{\acute{\bm{\pi}}}{}{#2}{}{#1}}}
\newrobustcmd{\gravebmpiz}[2][]{\ensuremath{\subp{\grave{\bm{\pi}}}{}{#2}{}{#1}}}
\newrobustcmd{\dotbmpiz}[2][]{\ensuremath{\subp{\dot{\bm{\pi}}}{}{#2}{}{#1}}}
\newrobustcmd{\ddotbmpiz}[2][]{\ensuremath{\subp{\ddot{\bm{\pi}}}{}{#2}{}{#1}}}
\newrobustcmd{\brevebmpiz}[2][]{\ensuremath{\subp{\breve{\bm{\pi}}}{}{#2}{}{#1}}}
\newrobustcmd{\barbmpiz}[2][]{\ensuremath{\subp{\bar{\bm{\pi}}}{}{#2}{}{#1}}}
\newrobustcmd{\vecbmpiz}[2][]{\ensuremath{\subp{\vec{\bm{\pi}}}{}{#2}{}{#1}}}
\newrobustcmd{\varpiz}[2][]{\ensuremath{\subp{\varpi}{}{#2}{}{#1}}}
\newrobustcmd{\hatvarpiz}[2][]{\ensuremath{\subp{\hat{\varpi}}{}{#2}{}{#1}}}
\newrobustcmd{\widehatvarpiz}[2][]{\ensuremath{\subp{\widehat{\varpi}}{}{#2}{}{#1}}}
\newrobustcmd{\checkvarpiz}[2][]{\ensuremath{\subp{\check{\varpi}}{}{#2}{}{#1}}}
\newrobustcmd{\tildevarpiz}[2][]{\ensuremath{\subp{\tilde{\varpi}}{}{#2}{}{#1}}}
\newrobustcmd{\widetildevarpiz}[2][]{\ensuremath{\subp{\widetilde{\varpi}}{}{#2}{}{#1}}}
\newrobustcmd{\acutevarpiz}[2][]{\ensuremath{\subp{\acute{\varpi}}{}{#2}{}{#1}}}
\newrobustcmd{\gravevarpiz}[2][]{\ensuremath{\subp{\grave{\varpi}}{}{#2}{}{#1}}}
\newrobustcmd{\dotvarpiz}[2][]{\ensuremath{\subp{\dot{\varpi}}{}{#2}{}{#1}}}
\newrobustcmd{\ddotvarpiz}[2][]{\ensuremath{\subp{\ddot{\varpi}}{}{#2}{}{#1}}}
\newrobustcmd{\brevevarpiz}[2][]{\ensuremath{\subp{\breve{\varpi}}{}{#2}{}{#1}}}
\newrobustcmd{\barvarpiz}[2][]{\ensuremath{\subp{\bar{\varpi}}{}{#2}{}{#1}}}
\newrobustcmd{\vecvarpiz}[2][]{\ensuremath{\subp{\vec{\varpi}}{}{#2}{}{#1}}}
\newrobustcmd{\bmvarpiz}[2][]{\ensuremath{\subp{\bm{\varpi}}{}{#2}{}{#1}}}
\newrobustcmd{\hatbmvarpiz}[2][]{\ensuremath{\subp{\hat{\bm{\varpi}}}{}{#2}{}{#1}}}
\newrobustcmd{\widehatbmvarpiz}[2][]{\ensuremath{\subp{\widehat{\bm{\varpi}}}{}{#2}{}{#1}}}
\newrobustcmd{\checkbmvarpiz}[2][]{\ensuremath{\subp{\check{\bm{\varpi}}}{}{#2}{}{#1}}}
\newrobustcmd{\tildebmvarpiz}[2][]{\ensuremath{\subp{\tilde{\bm{\varpi}}}{}{#2}{}{#1}}}
\newrobustcmd{\widetildebmvarpiz}[2][]{\ensuremath{\subp{\widetilde{\bm{\varpi}}}{}{#2}{}{#1}}}
\newrobustcmd{\acutebmvarpiz}[2][]{\ensuremath{\subp{\acute{\bm{\varpi}}}{}{#2}{}{#1}}}
\newrobustcmd{\gravebmvarpiz}[2][]{\ensuremath{\subp{\grave{\bm{\varpi}}}{}{#2}{}{#1}}}
\newrobustcmd{\dotbmvarpiz}[2][]{\ensuremath{\subp{\dot{\bm{\varpi}}}{}{#2}{}{#1}}}
\newrobustcmd{\ddotbmvarpiz}[2][]{\ensuremath{\subp{\ddot{\bm{\varpi}}}{}{#2}{}{#1}}}
\newrobustcmd{\brevebmvarpiz}[2][]{\ensuremath{\subp{\breve{\bm{\varpi}}}{}{#2}{}{#1}}}
\newrobustcmd{\barbmvarpiz}[2][]{\ensuremath{\subp{\bar{\bm{\varpi}}}{}{#2}{}{#1}}}
\newrobustcmd{\vecbmvarpiz}[2][]{\ensuremath{\subp{\vec{\bm{\varpi}}}{}{#2}{}{#1}}}
\newrobustcmd{\rhoz}[2][]{\ensuremath{\subp{\rho}{}{#2}{}{#1}}}
\newrobustcmd{\hatrhoz}[2][]{\ensuremath{\subp{\hat{\rho}}{}{#2}{}{#1}}}
\newrobustcmd{\widehatrhoz}[2][]{\ensuremath{\subp{\widehat{\rho}}{}{#2}{}{#1}}}
\newrobustcmd{\checkrhoz}[2][]{\ensuremath{\subp{\check{\rho}}{}{#2}{}{#1}}}
\newrobustcmd{\tilderhoz}[2][]{\ensuremath{\subp{\tilde{\rho}}{}{#2}{}{#1}}}
\newrobustcmd{\widetilderhoz}[2][]{\ensuremath{\subp{\widetilde{\rho}}{}{#2}{}{#1}}}
\newrobustcmd{\acuterhoz}[2][]{\ensuremath{\subp{\acute{\rho}}{}{#2}{}{#1}}}
\newrobustcmd{\graverhoz}[2][]{\ensuremath{\subp{\grave{\rho}}{}{#2}{}{#1}}}
\newrobustcmd{\dotrhoz}[2][]{\ensuremath{\subp{\dot{\rho}}{}{#2}{}{#1}}}
\newrobustcmd{\ddotrhoz}[2][]{\ensuremath{\subp{\ddot{\rho}}{}{#2}{}{#1}}}
\newrobustcmd{\breverhoz}[2][]{\ensuremath{\subp{\breve{\rho}}{}{#2}{}{#1}}}
\newrobustcmd{\barrhoz}[2][]{\ensuremath{\subp{\bar{\rho}}{}{#2}{}{#1}}}
\newrobustcmd{\vecrhoz}[2][]{\ensuremath{\subp{\vec{\rho}}{}{#2}{}{#1}}}
\newrobustcmd{\bmrhoz}[2][]{\ensuremath{\subp{\bm{\rho}}{}{#2}{}{#1}}}
\newrobustcmd{\hatbmrhoz}[2][]{\ensuremath{\subp{\hat{\bm{\rho}}}{}{#2}{}{#1}}}
\newrobustcmd{\widehatbmrhoz}[2][]{\ensuremath{\subp{\widehat{\bm{\rho}}}{}{#2}{}{#1}}}
\newrobustcmd{\checkbmrhoz}[2][]{\ensuremath{\subp{\check{\bm{\rho}}}{}{#2}{}{#1}}}
\newrobustcmd{\tildebmrhoz}[2][]{\ensuremath{\subp{\tilde{\bm{\rho}}}{}{#2}{}{#1}}}
\newrobustcmd{\widetildebmrhoz}[2][]{\ensuremath{\subp{\widetilde{\bm{\rho}}}{}{#2}{}{#1}}}
\newrobustcmd{\acutebmrhoz}[2][]{\ensuremath{\subp{\acute{\bm{\rho}}}{}{#2}{}{#1}}}
\newrobustcmd{\gravebmrhoz}[2][]{\ensuremath{\subp{\grave{\bm{\rho}}}{}{#2}{}{#1}}}
\newrobustcmd{\dotbmrhoz}[2][]{\ensuremath{\subp{\dot{\bm{\rho}}}{}{#2}{}{#1}}}
\newrobustcmd{\ddotbmrhoz}[2][]{\ensuremath{\subp{\ddot{\bm{\rho}}}{}{#2}{}{#1}}}
\newrobustcmd{\brevebmrhoz}[2][]{\ensuremath{\subp{\breve{\bm{\rho}}}{}{#2}{}{#1}}}
\newrobustcmd{\barbmrhoz}[2][]{\ensuremath{\subp{\bar{\bm{\rho}}}{}{#2}{}{#1}}}
\newrobustcmd{\vecbmrhoz}[2][]{\ensuremath{\subp{\vec{\bm{\rho}}}{}{#2}{}{#1}}}
\newrobustcmd{\varrhoz}[2][]{\ensuremath{\subp{\varrho}{}{#2}{}{#1}}}
\newrobustcmd{\hatvarrhoz}[2][]{\ensuremath{\subp{\hat{\varrho}}{}{#2}{}{#1}}}
\newrobustcmd{\widehatvarrhoz}[2][]{\ensuremath{\subp{\widehat{\varrho}}{}{#2}{}{#1}}}
\newrobustcmd{\checkvarrhoz}[2][]{\ensuremath{\subp{\check{\varrho}}{}{#2}{}{#1}}}
\newrobustcmd{\tildevarrhoz}[2][]{\ensuremath{\subp{\tilde{\varrho}}{}{#2}{}{#1}}}
\newrobustcmd{\widetildevarrhoz}[2][]{\ensuremath{\subp{\widetilde{\varrho}}{}{#2}{}{#1}}}
\newrobustcmd{\acutevarrhoz}[2][]{\ensuremath{\subp{\acute{\varrho}}{}{#2}{}{#1}}}
\newrobustcmd{\gravevarrhoz}[2][]{\ensuremath{\subp{\grave{\varrho}}{}{#2}{}{#1}}}
\newrobustcmd{\dotvarrhoz}[2][]{\ensuremath{\subp{\dot{\varrho}}{}{#2}{}{#1}}}
\newrobustcmd{\ddotvarrhoz}[2][]{\ensuremath{\subp{\ddot{\varrho}}{}{#2}{}{#1}}}
\newrobustcmd{\brevevarrhoz}[2][]{\ensuremath{\subp{\breve{\varrho}}{}{#2}{}{#1}}}
\newrobustcmd{\barvarrhoz}[2][]{\ensuremath{\subp{\bar{\varrho}}{}{#2}{}{#1}}}
\newrobustcmd{\vecvarrhoz}[2][]{\ensuremath{\subp{\vec{\varrho}}{}{#2}{}{#1}}}
\newrobustcmd{\bmvarrhoz}[2][]{\ensuremath{\subp{\bm{\varrho}}{}{#2}{}{#1}}}
\newrobustcmd{\hatbmvarrhoz}[2][]{\ensuremath{\subp{\hat{\bm{\varrho}}}{}{#2}{}{#1}}}
\newrobustcmd{\widehatbmvarrhoz}[2][]{\ensuremath{\subp{\widehat{\bm{\varrho}}}{}{#2}{}{#1}}}
\newrobustcmd{\checkbmvarrhoz}[2][]{\ensuremath{\subp{\check{\bm{\varrho}}}{}{#2}{}{#1}}}
\newrobustcmd{\tildebmvarrhoz}[2][]{\ensuremath{\subp{\tilde{\bm{\varrho}}}{}{#2}{}{#1}}}
\newrobustcmd{\widetildebmvarrhoz}[2][]{\ensuremath{\subp{\widetilde{\bm{\varrho}}}{}{#2}{}{#1}}}
\newrobustcmd{\acutebmvarrhoz}[2][]{\ensuremath{\subp{\acute{\bm{\varrho}}}{}{#2}{}{#1}}}
\newrobustcmd{\gravebmvarrhoz}[2][]{\ensuremath{\subp{\grave{\bm{\varrho}}}{}{#2}{}{#1}}}
\newrobustcmd{\dotbmvarrhoz}[2][]{\ensuremath{\subp{\dot{\bm{\varrho}}}{}{#2}{}{#1}}}
\newrobustcmd{\ddotbmvarrhoz}[2][]{\ensuremath{\subp{\ddot{\bm{\varrho}}}{}{#2}{}{#1}}}
\newrobustcmd{\brevebmvarrhoz}[2][]{\ensuremath{\subp{\breve{\bm{\varrho}}}{}{#2}{}{#1}}}
\newrobustcmd{\barbmvarrhoz}[2][]{\ensuremath{\subp{\bar{\bm{\varrho}}}{}{#2}{}{#1}}}
\newrobustcmd{\vecbmvarrhoz}[2][]{\ensuremath{\subp{\vec{\bm{\varrho}}}{}{#2}{}{#1}}}
\newrobustcmd{\sigmaz}[2][]{\ensuremath{\subp{\sigma}{}{#2}{}{#1}}}
\newrobustcmd{\hatsigmaz}[2][]{\ensuremath{\subp{\hat{\sigma}}{}{#2}{}{#1}}}
\newrobustcmd{\widehatsigmaz}[2][]{\ensuremath{\subp{\widehat{\sigma}}{}{#2}{}{#1}}}
\newrobustcmd{\checksigmaz}[2][]{\ensuremath{\subp{\check{\sigma}}{}{#2}{}{#1}}}
\newrobustcmd{\tildesigmaz}[2][]{\ensuremath{\subp{\tilde{\sigma}}{}{#2}{}{#1}}}
\newrobustcmd{\widetildesigmaz}[2][]{\ensuremath{\subp{\widetilde{\sigma}}{}{#2}{}{#1}}}
\newrobustcmd{\acutesigmaz}[2][]{\ensuremath{\subp{\acute{\sigma}}{}{#2}{}{#1}}}
\newrobustcmd{\gravesigmaz}[2][]{\ensuremath{\subp{\grave{\sigma}}{}{#2}{}{#1}}}
\newrobustcmd{\dotsigmaz}[2][]{\ensuremath{\subp{\dot{\sigma}}{}{#2}{}{#1}}}
\newrobustcmd{\ddotsigmaz}[2][]{\ensuremath{\subp{\ddot{\sigma}}{}{#2}{}{#1}}}
\newrobustcmd{\brevesigmaz}[2][]{\ensuremath{\subp{\breve{\sigma}}{}{#2}{}{#1}}}
\newrobustcmd{\barsigmaz}[2][]{\ensuremath{\subp{\bar{\sigma}}{}{#2}{}{#1}}}
\newrobustcmd{\vecsigmaz}[2][]{\ensuremath{\subp{\vec{\sigma}}{}{#2}{}{#1}}}
\newrobustcmd{\bmsigmaz}[2][]{\ensuremath{\subp{\bm{\sigma}}{}{#2}{}{#1}}}
\newrobustcmd{\hatbmsigmaz}[2][]{\ensuremath{\subp{\hat{\bm{\sigma}}}{}{#2}{}{#1}}}
\newrobustcmd{\widehatbmsigmaz}[2][]{\ensuremath{\subp{\widehat{\bm{\sigma}}}{}{#2}{}{#1}}}
\newrobustcmd{\checkbmsigmaz}[2][]{\ensuremath{\subp{\check{\bm{\sigma}}}{}{#2}{}{#1}}}
\newrobustcmd{\tildebmsigmaz}[2][]{\ensuremath{\subp{\tilde{\bm{\sigma}}}{}{#2}{}{#1}}}
\newrobustcmd{\widetildebmsigmaz}[2][]{\ensuremath{\subp{\widetilde{\bm{\sigma}}}{}{#2}{}{#1}}}
\newrobustcmd{\acutebmsigmaz}[2][]{\ensuremath{\subp{\acute{\bm{\sigma}}}{}{#2}{}{#1}}}
\newrobustcmd{\gravebmsigmaz}[2][]{\ensuremath{\subp{\grave{\bm{\sigma}}}{}{#2}{}{#1}}}
\newrobustcmd{\dotbmsigmaz}[2][]{\ensuremath{\subp{\dot{\bm{\sigma}}}{}{#2}{}{#1}}}
\newrobustcmd{\ddotbmsigmaz}[2][]{\ensuremath{\subp{\ddot{\bm{\sigma}}}{}{#2}{}{#1}}}
\newrobustcmd{\brevebmsigmaz}[2][]{\ensuremath{\subp{\breve{\bm{\sigma}}}{}{#2}{}{#1}}}
\newrobustcmd{\barbmsigmaz}[2][]{\ensuremath{\subp{\bar{\bm{\sigma}}}{}{#2}{}{#1}}}
\newrobustcmd{\vecbmsigmaz}[2][]{\ensuremath{\subp{\vec{\bm{\sigma}}}{}{#2}{}{#1}}}
\newrobustcmd{\varsigmaz}[2][]{\ensuremath{\subp{\varsigma}{}{#2}{}{#1}}}
\newrobustcmd{\hatvarsigmaz}[2][]{\ensuremath{\subp{\hat{\varsigma}}{}{#2}{}{#1}}}
\newrobustcmd{\widehatvarsigmaz}[2][]{\ensuremath{\subp{\widehat{\varsigma}}{}{#2}{}{#1}}}
\newrobustcmd{\checkvarsigmaz}[2][]{\ensuremath{\subp{\check{\varsigma}}{}{#2}{}{#1}}}
\newrobustcmd{\tildevarsigmaz}[2][]{\ensuremath{\subp{\tilde{\varsigma}}{}{#2}{}{#1}}}
\newrobustcmd{\widetildevarsigmaz}[2][]{\ensuremath{\subp{\widetilde{\varsigma}}{}{#2}{}{#1}}}
\newrobustcmd{\acutevarsigmaz}[2][]{\ensuremath{\subp{\acute{\varsigma}}{}{#2}{}{#1}}}
\newrobustcmd{\gravevarsigmaz}[2][]{\ensuremath{\subp{\grave{\varsigma}}{}{#2}{}{#1}}}
\newrobustcmd{\dotvarsigmaz}[2][]{\ensuremath{\subp{\dot{\varsigma}}{}{#2}{}{#1}}}
\newrobustcmd{\ddotvarsigmaz}[2][]{\ensuremath{\subp{\ddot{\varsigma}}{}{#2}{}{#1}}}
\newrobustcmd{\brevevarsigmaz}[2][]{\ensuremath{\subp{\breve{\varsigma}}{}{#2}{}{#1}}}
\newrobustcmd{\barvarsigmaz}[2][]{\ensuremath{\subp{\bar{\varsigma}}{}{#2}{}{#1}}}
\newrobustcmd{\vecvarsigmaz}[2][]{\ensuremath{\subp{\vec{\varsigma}}{}{#2}{}{#1}}}
\newrobustcmd{\bmvarsigmaz}[2][]{\ensuremath{\subp{\bm{\varsigma}}{}{#2}{}{#1}}}
\newrobustcmd{\hatbmvarsigmaz}[2][]{\ensuremath{\subp{\hat{\bm{\varsigma}}}{}{#2}{}{#1}}}
\newrobustcmd{\widehatbmvarsigmaz}[2][]{\ensuremath{\subp{\widehat{\bm{\varsigma}}}{}{#2}{}{#1}}}
\newrobustcmd{\checkbmvarsigmaz}[2][]{\ensuremath{\subp{\check{\bm{\varsigma}}}{}{#2}{}{#1}}}
\newrobustcmd{\tildebmvarsigmaz}[2][]{\ensuremath{\subp{\tilde{\bm{\varsigma}}}{}{#2}{}{#1}}}
\newrobustcmd{\widetildebmvarsigmaz}[2][]{\ensuremath{\subp{\widetilde{\bm{\varsigma}}}{}{#2}{}{#1}}}
\newrobustcmd{\acutebmvarsigmaz}[2][]{\ensuremath{\subp{\acute{\bm{\varsigma}}}{}{#2}{}{#1}}}
\newrobustcmd{\gravebmvarsigmaz}[2][]{\ensuremath{\subp{\grave{\bm{\varsigma}}}{}{#2}{}{#1}}}
\newrobustcmd{\dotbmvarsigmaz}[2][]{\ensuremath{\subp{\dot{\bm{\varsigma}}}{}{#2}{}{#1}}}
\newrobustcmd{\ddotbmvarsigmaz}[2][]{\ensuremath{\subp{\ddot{\bm{\varsigma}}}{}{#2}{}{#1}}}
\newrobustcmd{\brevebmvarsigmaz}[2][]{\ensuremath{\subp{\breve{\bm{\varsigma}}}{}{#2}{}{#1}}}
\newrobustcmd{\barbmvarsigmaz}[2][]{\ensuremath{\subp{\bar{\bm{\varsigma}}}{}{#2}{}{#1}}}
\newrobustcmd{\vecbmvarsigmaz}[2][]{\ensuremath{\subp{\vec{\bm{\varsigma}}}{}{#2}{}{#1}}}
\newrobustcmd{\tauz}[2][]{\ensuremath{\subp{\tau}{}{#2}{}{#1}}}
\newrobustcmd{\hattauz}[2][]{\ensuremath{\subp{\hat{\tau}}{}{#2}{}{#1}}}
\newrobustcmd{\widehattauz}[2][]{\ensuremath{\subp{\widehat{\tau}}{}{#2}{}{#1}}}
\newrobustcmd{\checktauz}[2][]{\ensuremath{\subp{\check{\tau}}{}{#2}{}{#1}}}
\newrobustcmd{\tildetauz}[2][]{\ensuremath{\subp{\tilde{\tau}}{}{#2}{}{#1}}}
\newrobustcmd{\widetildetauz}[2][]{\ensuremath{\subp{\widetilde{\tau}}{}{#2}{}{#1}}}
\newrobustcmd{\acutetauz}[2][]{\ensuremath{\subp{\acute{\tau}}{}{#2}{}{#1}}}
\newrobustcmd{\gravetauz}[2][]{\ensuremath{\subp{\grave{\tau}}{}{#2}{}{#1}}}
\newrobustcmd{\dottauz}[2][]{\ensuremath{\subp{\dot{\tau}}{}{#2}{}{#1}}}
\newrobustcmd{\ddottauz}[2][]{\ensuremath{\subp{\ddot{\tau}}{}{#2}{}{#1}}}
\newrobustcmd{\brevetauz}[2][]{\ensuremath{\subp{\breve{\tau}}{}{#2}{}{#1}}}
\newrobustcmd{\bartauz}[2][]{\ensuremath{\subp{\bar{\tau}}{}{#2}{}{#1}}}
\newrobustcmd{\vectauz}[2][]{\ensuremath{\subp{\vec{\tau}}{}{#2}{}{#1}}}
\newrobustcmd{\bmtauz}[2][]{\ensuremath{\subp{\bm{\tau}}{}{#2}{}{#1}}}
\newrobustcmd{\hatbmtauz}[2][]{\ensuremath{\subp{\hat{\bm{\tau}}}{}{#2}{}{#1}}}
\newrobustcmd{\widehatbmtauz}[2][]{\ensuremath{\subp{\widehat{\bm{\tau}}}{}{#2}{}{#1}}}
\newrobustcmd{\checkbmtauz}[2][]{\ensuremath{\subp{\check{\bm{\tau}}}{}{#2}{}{#1}}}
\newrobustcmd{\tildebmtauz}[2][]{\ensuremath{\subp{\tilde{\bm{\tau}}}{}{#2}{}{#1}}}
\newrobustcmd{\widetildebmtauz}[2][]{\ensuremath{\subp{\widetilde{\bm{\tau}}}{}{#2}{}{#1}}}
\newrobustcmd{\acutebmtauz}[2][]{\ensuremath{\subp{\acute{\bm{\tau}}}{}{#2}{}{#1}}}
\newrobustcmd{\gravebmtauz}[2][]{\ensuremath{\subp{\grave{\bm{\tau}}}{}{#2}{}{#1}}}
\newrobustcmd{\dotbmtauz}[2][]{\ensuremath{\subp{\dot{\bm{\tau}}}{}{#2}{}{#1}}}
\newrobustcmd{\ddotbmtauz}[2][]{\ensuremath{\subp{\ddot{\bm{\tau}}}{}{#2}{}{#1}}}
\newrobustcmd{\brevebmtauz}[2][]{\ensuremath{\subp{\breve{\bm{\tau}}}{}{#2}{}{#1}}}
\newrobustcmd{\barbmtauz}[2][]{\ensuremath{\subp{\bar{\bm{\tau}}}{}{#2}{}{#1}}}
\newrobustcmd{\vecbmtauz}[2][]{\ensuremath{\subp{\vec{\bm{\tau}}}{}{#2}{}{#1}}}
\newrobustcmd{\upsilonz}[2][]{\ensuremath{\subp{\upsilon}{}{#2}{}{#1}}}
\newrobustcmd{\hatupsilonz}[2][]{\ensuremath{\subp{\hat{\upsilon}}{}{#2}{}{#1}}}
\newrobustcmd{\widehatupsilonz}[2][]{\ensuremath{\subp{\widehat{\upsilon}}{}{#2}{}{#1}}}
\newrobustcmd{\checkupsilonz}[2][]{\ensuremath{\subp{\check{\upsilon}}{}{#2}{}{#1}}}
\newrobustcmd{\tildeupsilonz}[2][]{\ensuremath{\subp{\tilde{\upsilon}}{}{#2}{}{#1}}}
\newrobustcmd{\widetildeupsilonz}[2][]{\ensuremath{\subp{\widetilde{\upsilon}}{}{#2}{}{#1}}}
\newrobustcmd{\acuteupsilonz}[2][]{\ensuremath{\subp{\acute{\upsilon}}{}{#2}{}{#1}}}
\newrobustcmd{\graveupsilonz}[2][]{\ensuremath{\subp{\grave{\upsilon}}{}{#2}{}{#1}}}
\newrobustcmd{\dotupsilonz}[2][]{\ensuremath{\subp{\dot{\upsilon}}{}{#2}{}{#1}}}
\newrobustcmd{\ddotupsilonz}[2][]{\ensuremath{\subp{\ddot{\upsilon}}{}{#2}{}{#1}}}
\newrobustcmd{\breveupsilonz}[2][]{\ensuremath{\subp{\breve{\upsilon}}{}{#2}{}{#1}}}
\newrobustcmd{\barupsilonz}[2][]{\ensuremath{\subp{\bar{\upsilon}}{}{#2}{}{#1}}}
\newrobustcmd{\vecupsilonz}[2][]{\ensuremath{\subp{\vec{\upsilon}}{}{#2}{}{#1}}}
\newrobustcmd{\bmupsilonz}[2][]{\ensuremath{\subp{\bm{\upsilon}}{}{#2}{}{#1}}}
\newrobustcmd{\hatbmupsilonz}[2][]{\ensuremath{\subp{\hat{\bm{\upsilon}}}{}{#2}{}{#1}}}
\newrobustcmd{\widehatbmupsilonz}[2][]{\ensuremath{\subp{\widehat{\bm{\upsilon}}}{}{#2}{}{#1}}}
\newrobustcmd{\checkbmupsilonz}[2][]{\ensuremath{\subp{\check{\bm{\upsilon}}}{}{#2}{}{#1}}}
\newrobustcmd{\tildebmupsilonz}[2][]{\ensuremath{\subp{\tilde{\bm{\upsilon}}}{}{#2}{}{#1}}}
\newrobustcmd{\widetildebmupsilonz}[2][]{\ensuremath{\subp{\widetilde{\bm{\upsilon}}}{}{#2}{}{#1}}}
\newrobustcmd{\acutebmupsilonz}[2][]{\ensuremath{\subp{\acute{\bm{\upsilon}}}{}{#2}{}{#1}}}
\newrobustcmd{\gravebmupsilonz}[2][]{\ensuremath{\subp{\grave{\bm{\upsilon}}}{}{#2}{}{#1}}}
\newrobustcmd{\dotbmupsilonz}[2][]{\ensuremath{\subp{\dot{\bm{\upsilon}}}{}{#2}{}{#1}}}
\newrobustcmd{\ddotbmupsilonz}[2][]{\ensuremath{\subp{\ddot{\bm{\upsilon}}}{}{#2}{}{#1}}}
\newrobustcmd{\brevebmupsilonz}[2][]{\ensuremath{\subp{\breve{\bm{\upsilon}}}{}{#2}{}{#1}}}
\newrobustcmd{\barbmupsilonz}[2][]{\ensuremath{\subp{\bar{\bm{\upsilon}}}{}{#2}{}{#1}}}
\newrobustcmd{\vecbmupsilonz}[2][]{\ensuremath{\subp{\vec{\bm{\upsilon}}}{}{#2}{}{#1}}}
\newrobustcmd{\phiz}[2][]{\ensuremath{\subp{\phi}{}{#2}{}{#1}}}
\newrobustcmd{\hatphiz}[2][]{\ensuremath{\subp{\hat{\phi}}{}{#2}{}{#1}}}
\newrobustcmd{\widehatphiz}[2][]{\ensuremath{\subp{\widehat{\phi}}{}{#2}{}{#1}}}
\newrobustcmd{\checkphiz}[2][]{\ensuremath{\subp{\check{\phi}}{}{#2}{}{#1}}}
\newrobustcmd{\tildephiz}[2][]{\ensuremath{\subp{\tilde{\phi}}{}{#2}{}{#1}}}
\newrobustcmd{\widetildephiz}[2][]{\ensuremath{\subp{\widetilde{\phi}}{}{#2}{}{#1}}}
\newrobustcmd{\acutephiz}[2][]{\ensuremath{\subp{\acute{\phi}}{}{#2}{}{#1}}}
\newrobustcmd{\gravephiz}[2][]{\ensuremath{\subp{\grave{\phi}}{}{#2}{}{#1}}}
\newrobustcmd{\dotphiz}[2][]{\ensuremath{\subp{\dot{\phi}}{}{#2}{}{#1}}}
\newrobustcmd{\ddotphiz}[2][]{\ensuremath{\subp{\ddot{\phi}}{}{#2}{}{#1}}}
\newrobustcmd{\brevephiz}[2][]{\ensuremath{\subp{\breve{\phi}}{}{#2}{}{#1}}}
\newrobustcmd{\barphiz}[2][]{\ensuremath{\subp{\bar{\phi}}{}{#2}{}{#1}}}
\newrobustcmd{\vecphiz}[2][]{\ensuremath{\subp{\vec{\phi}}{}{#2}{}{#1}}}
\newrobustcmd{\bmphiz}[2][]{\ensuremath{\subp{\bm{\phi}}{}{#2}{}{#1}}}
\newrobustcmd{\hatbmphiz}[2][]{\ensuremath{\subp{\hat{\bm{\phi}}}{}{#2}{}{#1}}}
\newrobustcmd{\widehatbmphiz}[2][]{\ensuremath{\subp{\widehat{\bm{\phi}}}{}{#2}{}{#1}}}
\newrobustcmd{\checkbmphiz}[2][]{\ensuremath{\subp{\check{\bm{\phi}}}{}{#2}{}{#1}}}
\newrobustcmd{\tildebmphiz}[2][]{\ensuremath{\subp{\tilde{\bm{\phi}}}{}{#2}{}{#1}}}
\newrobustcmd{\widetildebmphiz}[2][]{\ensuremath{\subp{\widetilde{\bm{\phi}}}{}{#2}{}{#1}}}
\newrobustcmd{\acutebmphiz}[2][]{\ensuremath{\subp{\acute{\bm{\phi}}}{}{#2}{}{#1}}}
\newrobustcmd{\gravebmphiz}[2][]{\ensuremath{\subp{\grave{\bm{\phi}}}{}{#2}{}{#1}}}
\newrobustcmd{\dotbmphiz}[2][]{\ensuremath{\subp{\dot{\bm{\phi}}}{}{#2}{}{#1}}}
\newrobustcmd{\ddotbmphiz}[2][]{\ensuremath{\subp{\ddot{\bm{\phi}}}{}{#2}{}{#1}}}
\newrobustcmd{\brevebmphiz}[2][]{\ensuremath{\subp{\breve{\bm{\phi}}}{}{#2}{}{#1}}}
\newrobustcmd{\barbmphiz}[2][]{\ensuremath{\subp{\bar{\bm{\phi}}}{}{#2}{}{#1}}}
\newrobustcmd{\vecbmphiz}[2][]{\ensuremath{\subp{\vec{\bm{\phi}}}{}{#2}{}{#1}}}
\newrobustcmd{\varphiz}[2][]{\ensuremath{\subp{\varphi}{}{#2}{}{#1}}}
\newrobustcmd{\hatvarphiz}[2][]{\ensuremath{\subp{\hat{\varphi}}{}{#2}{}{#1}}}
\newrobustcmd{\widehatvarphiz}[2][]{\ensuremath{\subp{\widehat{\varphi}}{}{#2}{}{#1}}}
\newrobustcmd{\checkvarphiz}[2][]{\ensuremath{\subp{\check{\varphi}}{}{#2}{}{#1}}}
\newrobustcmd{\tildevarphiz}[2][]{\ensuremath{\subp{\tilde{\varphi}}{}{#2}{}{#1}}}
\newrobustcmd{\widetildevarphiz}[2][]{\ensuremath{\subp{\widetilde{\varphi}}{}{#2}{}{#1}}}
\newrobustcmd{\acutevarphiz}[2][]{\ensuremath{\subp{\acute{\varphi}}{}{#2}{}{#1}}}
\newrobustcmd{\gravevarphiz}[2][]{\ensuremath{\subp{\grave{\varphi}}{}{#2}{}{#1}}}
\newrobustcmd{\dotvarphiz}[2][]{\ensuremath{\subp{\dot{\varphi}}{}{#2}{}{#1}}}
\newrobustcmd{\ddotvarphiz}[2][]{\ensuremath{\subp{\ddot{\varphi}}{}{#2}{}{#1}}}
\newrobustcmd{\brevevarphiz}[2][]{\ensuremath{\subp{\breve{\varphi}}{}{#2}{}{#1}}}
\newrobustcmd{\barvarphiz}[2][]{\ensuremath{\subp{\bar{\varphi}}{}{#2}{}{#1}}}
\newrobustcmd{\vecvarphiz}[2][]{\ensuremath{\subp{\vec{\varphi}}{}{#2}{}{#1}}}
\newrobustcmd{\bmvarphiz}[2][]{\ensuremath{\subp{\bm{\varphi}}{}{#2}{}{#1}}}
\newrobustcmd{\hatbmvarphiz}[2][]{\ensuremath{\subp{\hat{\bm{\varphi}}}{}{#2}{}{#1}}}
\newrobustcmd{\widehatbmvarphiz}[2][]{\ensuremath{\subp{\widehat{\bm{\varphi}}}{}{#2}{}{#1}}}
\newrobustcmd{\checkbmvarphiz}[2][]{\ensuremath{\subp{\check{\bm{\varphi}}}{}{#2}{}{#1}}}
\newrobustcmd{\tildebmvarphiz}[2][]{\ensuremath{\subp{\tilde{\bm{\varphi}}}{}{#2}{}{#1}}}
\newrobustcmd{\widetildebmvarphiz}[2][]{\ensuremath{\subp{\widetilde{\bm{\varphi}}}{}{#2}{}{#1}}}
\newrobustcmd{\acutebmvarphiz}[2][]{\ensuremath{\subp{\acute{\bm{\varphi}}}{}{#2}{}{#1}}}
\newrobustcmd{\gravebmvarphiz}[2][]{\ensuremath{\subp{\grave{\bm{\varphi}}}{}{#2}{}{#1}}}
\newrobustcmd{\dotbmvarphiz}[2][]{\ensuremath{\subp{\dot{\bm{\varphi}}}{}{#2}{}{#1}}}
\newrobustcmd{\ddotbmvarphiz}[2][]{\ensuremath{\subp{\ddot{\bm{\varphi}}}{}{#2}{}{#1}}}
\newrobustcmd{\brevebmvarphiz}[2][]{\ensuremath{\subp{\breve{\bm{\varphi}}}{}{#2}{}{#1}}}
\newrobustcmd{\barbmvarphiz}[2][]{\ensuremath{\subp{\bar{\bm{\varphi}}}{}{#2}{}{#1}}}
\newrobustcmd{\vecbmvarphiz}[2][]{\ensuremath{\subp{\vec{\bm{\varphi}}}{}{#2}{}{#1}}}
\newrobustcmd{\chiz}[2][]{\ensuremath{\subp{\chi}{}{#2}{}{#1}}}
\newrobustcmd{\hatchiz}[2][]{\ensuremath{\subp{\hat{\chi}}{}{#2}{}{#1}}}
\newrobustcmd{\widehatchiz}[2][]{\ensuremath{\subp{\widehat{\chi}}{}{#2}{}{#1}}}
\newrobustcmd{\checkchiz}[2][]{\ensuremath{\subp{\check{\chi}}{}{#2}{}{#1}}}
\newrobustcmd{\tildechiz}[2][]{\ensuremath{\subp{\tilde{\chi}}{}{#2}{}{#1}}}
\newrobustcmd{\widetildechiz}[2][]{\ensuremath{\subp{\widetilde{\chi}}{}{#2}{}{#1}}}
\newrobustcmd{\acutechiz}[2][]{\ensuremath{\subp{\acute{\chi}}{}{#2}{}{#1}}}
\newrobustcmd{\gravechiz}[2][]{\ensuremath{\subp{\grave{\chi}}{}{#2}{}{#1}}}
\newrobustcmd{\dotchiz}[2][]{\ensuremath{\subp{\dot{\chi}}{}{#2}{}{#1}}}
\newrobustcmd{\ddotchiz}[2][]{\ensuremath{\subp{\ddot{\chi}}{}{#2}{}{#1}}}
\newrobustcmd{\brevechiz}[2][]{\ensuremath{\subp{\breve{\chi}}{}{#2}{}{#1}}}
\newrobustcmd{\barchiz}[2][]{\ensuremath{\subp{\bar{\chi}}{}{#2}{}{#1}}}
\newrobustcmd{\vecchiz}[2][]{\ensuremath{\subp{\vec{\chi}}{}{#2}{}{#1}}}
\newrobustcmd{\bmchiz}[2][]{\ensuremath{\subp{\bm{\chi}}{}{#2}{}{#1}}}
\newrobustcmd{\hatbmchiz}[2][]{\ensuremath{\subp{\hat{\bm{\chi}}}{}{#2}{}{#1}}}
\newrobustcmd{\widehatbmchiz}[2][]{\ensuremath{\subp{\widehat{\bm{\chi}}}{}{#2}{}{#1}}}
\newrobustcmd{\checkbmchiz}[2][]{\ensuremath{\subp{\check{\bm{\chi}}}{}{#2}{}{#1}}}
\newrobustcmd{\tildebmchiz}[2][]{\ensuremath{\subp{\tilde{\bm{\chi}}}{}{#2}{}{#1}}}
\newrobustcmd{\widetildebmchiz}[2][]{\ensuremath{\subp{\widetilde{\bm{\chi}}}{}{#2}{}{#1}}}
\newrobustcmd{\acutebmchiz}[2][]{\ensuremath{\subp{\acute{\bm{\chi}}}{}{#2}{}{#1}}}
\newrobustcmd{\gravebmchiz}[2][]{\ensuremath{\subp{\grave{\bm{\chi}}}{}{#2}{}{#1}}}
\newrobustcmd{\dotbmchiz}[2][]{\ensuremath{\subp{\dot{\bm{\chi}}}{}{#2}{}{#1}}}
\newrobustcmd{\ddotbmchiz}[2][]{\ensuremath{\subp{\ddot{\bm{\chi}}}{}{#2}{}{#1}}}
\newrobustcmd{\brevebmchiz}[2][]{\ensuremath{\subp{\breve{\bm{\chi}}}{}{#2}{}{#1}}}
\newrobustcmd{\barbmchiz}[2][]{\ensuremath{\subp{\bar{\bm{\chi}}}{}{#2}{}{#1}}}
\newrobustcmd{\vecbmchiz}[2][]{\ensuremath{\subp{\vec{\bm{\chi}}}{}{#2}{}{#1}}}
\newrobustcmd{\psiz}[2][]{\ensuremath{\subp{\psi}{}{#2}{}{#1}}}
\newrobustcmd{\hatpsiz}[2][]{\ensuremath{\subp{\hat{\psi}}{}{#2}{}{#1}}}
\newrobustcmd{\widehatpsiz}[2][]{\ensuremath{\subp{\widehat{\psi}}{}{#2}{}{#1}}}
\newrobustcmd{\checkpsiz}[2][]{\ensuremath{\subp{\check{\psi}}{}{#2}{}{#1}}}
\newrobustcmd{\tildepsiz}[2][]{\ensuremath{\subp{\tilde{\psi}}{}{#2}{}{#1}}}
\newrobustcmd{\widetildepsiz}[2][]{\ensuremath{\subp{\widetilde{\psi}}{}{#2}{}{#1}}}
\newrobustcmd{\acutepsiz}[2][]{\ensuremath{\subp{\acute{\psi}}{}{#2}{}{#1}}}
\newrobustcmd{\gravepsiz}[2][]{\ensuremath{\subp{\grave{\psi}}{}{#2}{}{#1}}}
\newrobustcmd{\dotpsiz}[2][]{\ensuremath{\subp{\dot{\psi}}{}{#2}{}{#1}}}
\newrobustcmd{\ddotpsiz}[2][]{\ensuremath{\subp{\ddot{\psi}}{}{#2}{}{#1}}}
\newrobustcmd{\brevepsiz}[2][]{\ensuremath{\subp{\breve{\psi}}{}{#2}{}{#1}}}
\newrobustcmd{\barpsiz}[2][]{\ensuremath{\subp{\bar{\psi}}{}{#2}{}{#1}}}
\newrobustcmd{\vecpsiz}[2][]{\ensuremath{\subp{\vec{\psi}}{}{#2}{}{#1}}}
\newrobustcmd{\bmpsiz}[2][]{\ensuremath{\subp{\bm{\psi}}{}{#2}{}{#1}}}
\newrobustcmd{\hatbmpsiz}[2][]{\ensuremath{\subp{\hat{\bm{\psi}}}{}{#2}{}{#1}}}
\newrobustcmd{\widehatbmpsiz}[2][]{\ensuremath{\subp{\widehat{\bm{\psi}}}{}{#2}{}{#1}}}
\newrobustcmd{\checkbmpsiz}[2][]{\ensuremath{\subp{\check{\bm{\psi}}}{}{#2}{}{#1}}}
\newrobustcmd{\tildebmpsiz}[2][]{\ensuremath{\subp{\tilde{\bm{\psi}}}{}{#2}{}{#1}}}
\newrobustcmd{\widetildebmpsiz}[2][]{\ensuremath{\subp{\widetilde{\bm{\psi}}}{}{#2}{}{#1}}}
\newrobustcmd{\acutebmpsiz}[2][]{\ensuremath{\subp{\acute{\bm{\psi}}}{}{#2}{}{#1}}}
\newrobustcmd{\gravebmpsiz}[2][]{\ensuremath{\subp{\grave{\bm{\psi}}}{}{#2}{}{#1}}}
\newrobustcmd{\dotbmpsiz}[2][]{\ensuremath{\subp{\dot{\bm{\psi}}}{}{#2}{}{#1}}}
\newrobustcmd{\ddotbmpsiz}[2][]{\ensuremath{\subp{\ddot{\bm{\psi}}}{}{#2}{}{#1}}}
\newrobustcmd{\brevebmpsiz}[2][]{\ensuremath{\subp{\breve{\bm{\psi}}}{}{#2}{}{#1}}}
\newrobustcmd{\barbmpsiz}[2][]{\ensuremath{\subp{\bar{\bm{\psi}}}{}{#2}{}{#1}}}
\newrobustcmd{\vecbmpsiz}[2][]{\ensuremath{\subp{\vec{\bm{\psi}}}{}{#2}{}{#1}}}
\newrobustcmd{\omegaz}[2][]{\ensuremath{\subp{\omega}{}{#2}{}{#1}}}
\newrobustcmd{\hatomegaz}[2][]{\ensuremath{\subp{\hat{\omega}}{}{#2}{}{#1}}}
\newrobustcmd{\widehatomegaz}[2][]{\ensuremath{\subp{\widehat{\omega}}{}{#2}{}{#1}}}
\newrobustcmd{\checkomegaz}[2][]{\ensuremath{\subp{\check{\omega}}{}{#2}{}{#1}}}
\newrobustcmd{\tildeomegaz}[2][]{\ensuremath{\subp{\tilde{\omega}}{}{#2}{}{#1}}}
\newrobustcmd{\widetildeomegaz}[2][]{\ensuremath{\subp{\widetilde{\omega}}{}{#2}{}{#1}}}
\newrobustcmd{\acuteomegaz}[2][]{\ensuremath{\subp{\acute{\omega}}{}{#2}{}{#1}}}
\newrobustcmd{\graveomegaz}[2][]{\ensuremath{\subp{\grave{\omega}}{}{#2}{}{#1}}}
\newrobustcmd{\dotomegaz}[2][]{\ensuremath{\subp{\dot{\omega}}{}{#2}{}{#1}}}
\newrobustcmd{\ddotomegaz}[2][]{\ensuremath{\subp{\ddot{\omega}}{}{#2}{}{#1}}}
\newrobustcmd{\breveomegaz}[2][]{\ensuremath{\subp{\breve{\omega}}{}{#2}{}{#1}}}
\newrobustcmd{\baromegaz}[2][]{\ensuremath{\subp{\bar{\omega}}{}{#2}{}{#1}}}
\newrobustcmd{\vecomegaz}[2][]{\ensuremath{\subp{\vec{\omega}}{}{#2}{}{#1}}}
\newrobustcmd{\bmomegaz}[2][]{\ensuremath{\subp{\bm{\omega}}{}{#2}{}{#1}}}
\newrobustcmd{\hatbmomegaz}[2][]{\ensuremath{\subp{\hat{\bm{\omega}}}{}{#2}{}{#1}}}
\newrobustcmd{\widehatbmomegaz}[2][]{\ensuremath{\subp{\widehat{\bm{\omega}}}{}{#2}{}{#1}}}
\newrobustcmd{\checkbmomegaz}[2][]{\ensuremath{\subp{\check{\bm{\omega}}}{}{#2}{}{#1}}}
\newrobustcmd{\tildebmomegaz}[2][]{\ensuremath{\subp{\tilde{\bm{\omega}}}{}{#2}{}{#1}}}
\newrobustcmd{\widetildebmomegaz}[2][]{\ensuremath{\subp{\widetilde{\bm{\omega}}}{}{#2}{}{#1}}}
\newrobustcmd{\acutebmomegaz}[2][]{\ensuremath{\subp{\acute{\bm{\omega}}}{}{#2}{}{#1}}}
\newrobustcmd{\gravebmomegaz}[2][]{\ensuremath{\subp{\grave{\bm{\omega}}}{}{#2}{}{#1}}}
\newrobustcmd{\dotbmomegaz}[2][]{\ensuremath{\subp{\dot{\bm{\omega}}}{}{#2}{}{#1}}}
\newrobustcmd{\ddotbmomegaz}[2][]{\ensuremath{\subp{\ddot{\bm{\omega}}}{}{#2}{}{#1}}}
\newrobustcmd{\brevebmomegaz}[2][]{\ensuremath{\subp{\breve{\bm{\omega}}}{}{#2}{}{#1}}}
\newrobustcmd{\barbmomegaz}[2][]{\ensuremath{\subp{\bar{\bm{\omega}}}{}{#2}{}{#1}}}
\newrobustcmd{\vecbmomegaz}[2][]{\ensuremath{\subp{\vec{\bm{\omega}}}{}{#2}{}{#1}}}

\newrobustcmd{\Alphaz}[2][]{\ensuremath{\subp{A}{}{#2}{}{#1}}}
\newrobustcmd{\hatAlphaz}[2][]{\ensuremath{\subp{\hat{A}}{}{#2}{}{#1}}}
\newrobustcmd{\widehatAlphaz}[2][]{\ensuremath{\subp{\widehat{A}}{}{#2}{}{#1}}}
\newrobustcmd{\checkAlphaz}[2][]{\ensuremath{\subp{\check{A}}{}{#2}{}{#1}}}
\newrobustcmd{\tildeAlphaz}[2][]{\ensuremath{\subp{\tilde{A}}{}{#2}{}{#1}}}
\newrobustcmd{\widetildeAlphaz}[2][]{\ensuremath{\subp{\widetilde{A}}{}{#2}{}{#1}}}
\newrobustcmd{\acuteAlphaz}[2][]{\ensuremath{\subp{\acute{A}}{}{#2}{}{#1}}}
\newrobustcmd{\graveAlphaz}[2][]{\ensuremath{\subp{\grave{A}}{}{#2}{}{#1}}}
\newrobustcmd{\dotAlphaz}[2][]{\ensuremath{\subp{\dot{A}}{}{#2}{}{#1}}}
\newrobustcmd{\ddotAlphaz}[2][]{\ensuremath{\subp{\ddot{A}}{}{#2}{}{#1}}}
\newrobustcmd{\breveAlphaz}[2][]{\ensuremath{\subp{\breve{A}}{}{#2}{}{#1}}}
\newrobustcmd{\barAlphaz}[2][]{\ensuremath{\subp{\bar{A}}{}{#2}{}{#1}}}
\newrobustcmd{\vecAlphaz}[2][]{\ensuremath{\subp{\vec{A}}{}{#2}{}{#1}}}
\newrobustcmd{\bmAlphaz}[2][]{\ensuremath{\subp{\bm{A}}{}{#2}{}{#1}}}
\newrobustcmd{\hatbmAlphaz}[2][]{\ensuremath{\subp{\hat{\bm{A}}}{}{#2}{}{#1}}}
\newrobustcmd{\widehatbmAlphaz}[2][]{\ensuremath{\subp{\widehat{\bm{A}}}{}{#2}{}{#1}}}
\newrobustcmd{\checkbmAlphaz}[2][]{\ensuremath{\subp{\check{\bm{A}}}{}{#2}{}{#1}}}
\newrobustcmd{\tildebmAlphaz}[2][]{\ensuremath{\subp{\tilde{\bm{A}}}{}{#2}{}{#1}}}
\newrobustcmd{\widetildebmAlphaz}[2][]{\ensuremath{\subp{\widetilde{\bm{A}}}{}{#2}{}{#1}}}
\newrobustcmd{\acutebmAlphaz}[2][]{\ensuremath{\subp{\acute{\bm{A}}}{}{#2}{}{#1}}}
\newrobustcmd{\gravebmAlphaz}[2][]{\ensuremath{\subp{\grave{\bm{A}}}{}{#2}{}{#1}}}
\newrobustcmd{\dotbmAlphaz}[2][]{\ensuremath{\subp{\dot{\bm{A}}}{}{#2}{}{#1}}}
\newrobustcmd{\ddotbmAlphaz}[2][]{\ensuremath{\subp{\ddot{\bm{A}}}{}{#2}{}{#1}}}
\newrobustcmd{\brevebmAlphaz}[2][]{\ensuremath{\subp{\breve{\bm{A}}}{}{#2}{}{#1}}}
\newrobustcmd{\barbmAlphaz}[2][]{\ensuremath{\subp{\bar{\bm{A}}}{}{#2}{}{#1}}}
\newrobustcmd{\vecbmAlphaz}[2][]{\ensuremath{\subp{\vec{\bm{A}}}{}{#2}{}{#1}}}
\newrobustcmd{\Betaz}[2][]{\ensuremath{\subp{B}{}{#2}{}{#1}}}
\newrobustcmd{\hatBetaz}[2][]{\ensuremath{\subp{\hat{B}}{}{#2}{}{#1}}}
\newrobustcmd{\widehatBetaz}[2][]{\ensuremath{\subp{\widehat{B}}{}{#2}{}{#1}}}
\newrobustcmd{\checkBetaz}[2][]{\ensuremath{\subp{\check{B}}{}{#2}{}{#1}}}
\newrobustcmd{\tildeBetaz}[2][]{\ensuremath{\subp{\tilde{B}}{}{#2}{}{#1}}}
\newrobustcmd{\widetildeBetaz}[2][]{\ensuremath{\subp{\widetilde{B}}{}{#2}{}{#1}}}
\newrobustcmd{\acuteBetaz}[2][]{\ensuremath{\subp{\acute{B}}{}{#2}{}{#1}}}
\newrobustcmd{\graveBetaz}[2][]{\ensuremath{\subp{\grave{B}}{}{#2}{}{#1}}}
\newrobustcmd{\dotBetaz}[2][]{\ensuremath{\subp{\dot{B}}{}{#2}{}{#1}}}
\newrobustcmd{\ddotBetaz}[2][]{\ensuremath{\subp{\ddot{B}}{}{#2}{}{#1}}}
\newrobustcmd{\breveBetaz}[2][]{\ensuremath{\subp{\breve{B}}{}{#2}{}{#1}}}
\newrobustcmd{\barBetaz}[2][]{\ensuremath{\subp{\bar{B}}{}{#2}{}{#1}}}
\newrobustcmd{\vecBetaz}[2][]{\ensuremath{\subp{\vec{B}}{}{#2}{}{#1}}}
\newrobustcmd{\bmBetaz}[2][]{\ensuremath{\subp{\bm{B}}{}{#2}{}{#1}}}
\newrobustcmd{\hatbmBetaz}[2][]{\ensuremath{\subp{\hat{\bm{B}}}{}{#2}{}{#1}}}
\newrobustcmd{\widehatbmBetaz}[2][]{\ensuremath{\subp{\widehat{\bm{B}}}{}{#2}{}{#1}}}
\newrobustcmd{\checkbmBetaz}[2][]{\ensuremath{\subp{\check{\bm{B}}}{}{#2}{}{#1}}}
\newrobustcmd{\tildebmBetaz}[2][]{\ensuremath{\subp{\tilde{\bm{B}}}{}{#2}{}{#1}}}
\newrobustcmd{\widetildebmBetaz}[2][]{\ensuremath{\subp{\widetilde{\bm{B}}}{}{#2}{}{#1}}}
\newrobustcmd{\acutebmBetaz}[2][]{\ensuremath{\subp{\acute{\bm{B}}}{}{#2}{}{#1}}}
\newrobustcmd{\gravebmBetaz}[2][]{\ensuremath{\subp{\grave{\bm{B}}}{}{#2}{}{#1}}}
\newrobustcmd{\dotbmBetaz}[2][]{\ensuremath{\subp{\dot{\bm{B}}}{}{#2}{}{#1}}}
\newrobustcmd{\ddotbmBetaz}[2][]{\ensuremath{\subp{\ddot{\bm{B}}}{}{#2}{}{#1}}}
\newrobustcmd{\brevebmBetaz}[2][]{\ensuremath{\subp{\breve{\bm{B}}}{}{#2}{}{#1}}}
\newrobustcmd{\barbmBetaz}[2][]{\ensuremath{\subp{\bar{\bm{B}}}{}{#2}{}{#1}}}
\newrobustcmd{\vecbmBetaz}[2][]{\ensuremath{\subp{\vec{\bm{B}}}{}{#2}{}{#1}}}
\newrobustcmd{\Gammaz}[2][]{\ensuremath{\subp{\Gamma}{}{#2}{}{#1}}}
\newrobustcmd{\hatGammaz}[2][]{\ensuremath{\subp{\hat{\Gamma}}{}{#2}{}{#1}}}
\newrobustcmd{\widehatGammaz}[2][]{\ensuremath{\subp{\widehat{\Gamma}}{}{#2}{}{#1}}}
\newrobustcmd{\checkGammaz}[2][]{\ensuremath{\subp{\check{\Gamma}}{}{#2}{}{#1}}}
\newrobustcmd{\tildeGammaz}[2][]{\ensuremath{\subp{\tilde{\Gamma}}{}{#2}{}{#1}}}
\newrobustcmd{\widetildeGammaz}[2][]{\ensuremath{\subp{\widetilde{\Gamma}}{}{#2}{}{#1}}}
\newrobustcmd{\acuteGammaz}[2][]{\ensuremath{\subp{\acute{\Gamma}}{}{#2}{}{#1}}}
\newrobustcmd{\graveGammaz}[2][]{\ensuremath{\subp{\grave{\Gamma}}{}{#2}{}{#1}}}
\newrobustcmd{\dotGammaz}[2][]{\ensuremath{\subp{\dot{\Gamma}}{}{#2}{}{#1}}}
\newrobustcmd{\ddotGammaz}[2][]{\ensuremath{\subp{\ddot{\Gamma}}{}{#2}{}{#1}}}
\newrobustcmd{\breveGammaz}[2][]{\ensuremath{\subp{\breve{\Gamma}}{}{#2}{}{#1}}}
\newrobustcmd{\barGammaz}[2][]{\ensuremath{\subp{\bar{\Gamma}}{}{#2}{}{#1}}}
\newrobustcmd{\vecGammaz}[2][]{\ensuremath{\subp{\vec{\Gamma}}{}{#2}{}{#1}}}
\newrobustcmd{\bmGammaz}[2][]{\ensuremath{\subp{\bm{\Gamma}}{}{#2}{}{#1}}}
\newrobustcmd{\hatbmGammaz}[2][]{\ensuremath{\subp{\hat{\bm{\Gamma}}}{}{#2}{}{#1}}}
\newrobustcmd{\widehatbmGammaz}[2][]{\ensuremath{\subp{\widehat{\bm{\Gamma}}}{}{#2}{}{#1}}}
\newrobustcmd{\checkbmGammaz}[2][]{\ensuremath{\subp{\check{\bm{\Gamma}}}{}{#2}{}{#1}}}
\newrobustcmd{\tildebmGammaz}[2][]{\ensuremath{\subp{\tilde{\bm{\Gamma}}}{}{#2}{}{#1}}}
\newrobustcmd{\widetildebmGammaz}[2][]{\ensuremath{\subp{\widetilde{\bm{\Gamma}}}{}{#2}{}{#1}}}
\newrobustcmd{\acutebmGammaz}[2][]{\ensuremath{\subp{\acute{\bm{\Gamma}}}{}{#2}{}{#1}}}
\newrobustcmd{\gravebmGammaz}[2][]{\ensuremath{\subp{\grave{\bm{\Gamma}}}{}{#2}{}{#1}}}
\newrobustcmd{\dotbmGammaz}[2][]{\ensuremath{\subp{\dot{\bm{\Gamma}}}{}{#2}{}{#1}}}
\newrobustcmd{\ddotbmGammaz}[2][]{\ensuremath{\subp{\ddot{\bm{\Gamma}}}{}{#2}{}{#1}}}
\newrobustcmd{\brevebmGammaz}[2][]{\ensuremath{\subp{\breve{\bm{\Gamma}}}{}{#2}{}{#1}}}
\newrobustcmd{\barbmGammaz}[2][]{\ensuremath{\subp{\bar{\bm{\Gamma}}}{}{#2}{}{#1}}}
\newrobustcmd{\vecbmGammaz}[2][]{\ensuremath{\subp{\vec{\bm{\Gamma}}}{}{#2}{}{#1}}}
\newrobustcmd{\Deltaz}[2][]{\ensuremath{\subp{\Delta}{}{#2}{}{#1}}}
\newrobustcmd{\hatDeltaz}[2][]{\ensuremath{\subp{\hat{\Delta}}{}{#2}{}{#1}}}
\newrobustcmd{\widehatDeltaz}[2][]{\ensuremath{\subp{\widehat{\Delta}}{}{#2}{}{#1}}}
\newrobustcmd{\checkDeltaz}[2][]{\ensuremath{\subp{\check{\Delta}}{}{#2}{}{#1}}}
\newrobustcmd{\tildeDeltaz}[2][]{\ensuremath{\subp{\tilde{\Delta}}{}{#2}{}{#1}}}
\newrobustcmd{\widetildeDeltaz}[2][]{\ensuremath{\subp{\widetilde{\Delta}}{}{#2}{}{#1}}}
\newrobustcmd{\acuteDeltaz}[2][]{\ensuremath{\subp{\acute{\Delta}}{}{#2}{}{#1}}}
\newrobustcmd{\graveDeltaz}[2][]{\ensuremath{\subp{\grave{\Delta}}{}{#2}{}{#1}}}
\newrobustcmd{\dotDeltaz}[2][]{\ensuremath{\subp{\dot{\Delta}}{}{#2}{}{#1}}}
\newrobustcmd{\ddotDeltaz}[2][]{\ensuremath{\subp{\ddot{\Delta}}{}{#2}{}{#1}}}
\newrobustcmd{\breveDeltaz}[2][]{\ensuremath{\subp{\breve{\Delta}}{}{#2}{}{#1}}}
\newrobustcmd{\barDeltaz}[2][]{\ensuremath{\subp{\bar{\Delta}}{}{#2}{}{#1}}}
\newrobustcmd{\vecDeltaz}[2][]{\ensuremath{\subp{\vec{\Delta}}{}{#2}{}{#1}}}
\newrobustcmd{\bmDeltaz}[2][]{\ensuremath{\subp{\bm{\Delta}}{}{#2}{}{#1}}}
\newrobustcmd{\hatbmDeltaz}[2][]{\ensuremath{\subp{\hat{\bm{\Delta}}}{}{#2}{}{#1}}}
\newrobustcmd{\widehatbmDeltaz}[2][]{\ensuremath{\subp{\widehat{\bm{\Delta}}}{}{#2}{}{#1}}}
\newrobustcmd{\checkbmDeltaz}[2][]{\ensuremath{\subp{\check{\bm{\Delta}}}{}{#2}{}{#1}}}
\newrobustcmd{\tildebmDeltaz}[2][]{\ensuremath{\subp{\tilde{\bm{\Delta}}}{}{#2}{}{#1}}}
\newrobustcmd{\widetildebmDeltaz}[2][]{\ensuremath{\subp{\widetilde{\bm{\Delta}}}{}{#2}{}{#1}}}
\newrobustcmd{\acutebmDeltaz}[2][]{\ensuremath{\subp{\acute{\bm{\Delta}}}{}{#2}{}{#1}}}
\newrobustcmd{\gravebmDeltaz}[2][]{\ensuremath{\subp{\grave{\bm{\Delta}}}{}{#2}{}{#1}}}
\newrobustcmd{\dotbmDeltaz}[2][]{\ensuremath{\subp{\dot{\bm{\Delta}}}{}{#2}{}{#1}}}
\newrobustcmd{\ddotbmDeltaz}[2][]{\ensuremath{\subp{\ddot{\bm{\Delta}}}{}{#2}{}{#1}}}
\newrobustcmd{\brevebmDeltaz}[2][]{\ensuremath{\subp{\breve{\bm{\Delta}}}{}{#2}{}{#1}}}
\newrobustcmd{\barbmDeltaz}[2][]{\ensuremath{\subp{\bar{\bm{\Delta}}}{}{#2}{}{#1}}}
\newrobustcmd{\vecbmDeltaz}[2][]{\ensuremath{\subp{\vec{\bm{\Delta}}}{}{#2}{}{#1}}}
\newrobustcmd{\Epsilonz}[2][]{\ensuremath{\subp{\Epsilon}{}{#2}{}{#1}}}
\newrobustcmd{\hatEpsilonz}[2][]{\ensuremath{\subp{\hat{\Epsilon}}{}{#2}{}{#1}}}
\newrobustcmd{\widehatEpsilonz}[2][]{\ensuremath{\subp{\widehat{\Epsilon}}{}{#2}{}{#1}}}
\newrobustcmd{\checkEpsilonz}[2][]{\ensuremath{\subp{\check{\Epsilon}}{}{#2}{}{#1}}}
\newrobustcmd{\tildeEpsilonz}[2][]{\ensuremath{\subp{\tilde{\Epsilon}}{}{#2}{}{#1}}}
\newrobustcmd{\widetildeEpsilonz}[2][]{\ensuremath{\subp{\widetilde{\Epsilon}}{}{#2}{}{#1}}}
\newrobustcmd{\acuteEpsilonz}[2][]{\ensuremath{\subp{\acute{\Epsilon}}{}{#2}{}{#1}}}
\newrobustcmd{\graveEpsilonz}[2][]{\ensuremath{\subp{\grave{\Epsilon}}{}{#2}{}{#1}}}
\newrobustcmd{\dotEpsilonz}[2][]{\ensuremath{\subp{\dot{\Epsilon}}{}{#2}{}{#1}}}
\newrobustcmd{\ddotEpsilonz}[2][]{\ensuremath{\subp{\ddot{\Epsilon}}{}{#2}{}{#1}}}
\newrobustcmd{\breveEpsilonz}[2][]{\ensuremath{\subp{\breve{\Epsilon}}{}{#2}{}{#1}}}
\newrobustcmd{\barEpsilonz}[2][]{\ensuremath{\subp{\bar{\Epsilon}}{}{#2}{}{#1}}}
\newrobustcmd{\vecEpsilonz}[2][]{\ensuremath{\subp{\vec{\Epsilon}}{}{#2}{}{#1}}}
\newrobustcmd{\bmEpsilonz}[2][]{\ensuremath{\subp{\bm{\Epsilon}}{}{#2}{}{#1}}}
\newrobustcmd{\hatbmEpsilonz}[2][]{\ensuremath{\subp{\hat{\bm{\Epsilon}}}{}{#2}{}{#1}}}
\newrobustcmd{\widehatbmEpsilonz}[2][]{\ensuremath{\subp{\widehat{\bm{\Epsilon}}}{}{#2}{}{#1}}}
\newrobustcmd{\checkbmEpsilonz}[2][]{\ensuremath{\subp{\check{\bm{\Epsilon}}}{}{#2}{}{#1}}}
\newrobustcmd{\tildebmEpsilonz}[2][]{\ensuremath{\subp{\tilde{\bm{\Epsilon}}}{}{#2}{}{#1}}}
\newrobustcmd{\widetildebmEpsilonz}[2][]{\ensuremath{\subp{\widetilde{\bm{\Epsilon}}}{}{#2}{}{#1}}}
\newrobustcmd{\acutebmEpsilonz}[2][]{\ensuremath{\subp{\acute{\bm{\Epsilon}}}{}{#2}{}{#1}}}
\newrobustcmd{\gravebmEpsilonz}[2][]{\ensuremath{\subp{\grave{\bm{\Epsilon}}}{}{#2}{}{#1}}}
\newrobustcmd{\dotbmEpsilonz}[2][]{\ensuremath{\subp{\dot{\bm{\Epsilon}}}{}{#2}{}{#1}}}
\newrobustcmd{\ddotbmEpsilonz}[2][]{\ensuremath{\subp{\ddot{\bm{\Epsilon}}}{}{#2}{}{#1}}}
\newrobustcmd{\brevebmEpsilonz}[2][]{\ensuremath{\subp{\breve{\bm{\Epsilon}}}{}{#2}{}{#1}}}
\newrobustcmd{\barbmEpsilonz}[2][]{\ensuremath{\subp{\bar{\bm{\Epsilon}}}{}{#2}{}{#1}}}
\newrobustcmd{\vecbmEpsilonz}[2][]{\ensuremath{\subp{\vec{\bm{\Epsilon}}}{}{#2}{}{#1}}}
\newrobustcmd{\Zetaz}[2][]{\ensuremath{\subp{Z}{}{#2}{}{#1}}}
\newrobustcmd{\hatZetaz}[2][]{\ensuremath{\subp{\hat{Z}}{}{#2}{}{#1}}}
\newrobustcmd{\widehatZetaz}[2][]{\ensuremath{\subp{\widehat{Z}}{}{#2}{}{#1}}}
\newrobustcmd{\checkZetaz}[2][]{\ensuremath{\subp{\check{Z}}{}{#2}{}{#1}}}
\newrobustcmd{\tildeZetaz}[2][]{\ensuremath{\subp{\tilde{Z}}{}{#2}{}{#1}}}
\newrobustcmd{\widetildeZetaz}[2][]{\ensuremath{\subp{\widetilde{Z}}{}{#2}{}{#1}}}
\newrobustcmd{\acuteZetaz}[2][]{\ensuremath{\subp{\acute{Z}}{}{#2}{}{#1}}}
\newrobustcmd{\graveZetaz}[2][]{\ensuremath{\subp{\grave{Z}}{}{#2}{}{#1}}}
\newrobustcmd{\dotZetaz}[2][]{\ensuremath{\subp{\dot{Z}}{}{#2}{}{#1}}}
\newrobustcmd{\ddotZetaz}[2][]{\ensuremath{\subp{\ddot{Z}}{}{#2}{}{#1}}}
\newrobustcmd{\breveZetaz}[2][]{\ensuremath{\subp{\breve{Z}}{}{#2}{}{#1}}}
\newrobustcmd{\barZetaz}[2][]{\ensuremath{\subp{\bar{Z}}{}{#2}{}{#1}}}
\newrobustcmd{\vecZetaz}[2][]{\ensuremath{\subp{\vec{Z}}{}{#2}{}{#1}}}
\newrobustcmd{\bmZetaz}[2][]{\ensuremath{\subp{\bm{Z}}{}{#2}{}{#1}}}
\newrobustcmd{\hatbmZetaz}[2][]{\ensuremath{\subp{\hat{\bm{Z}}}{}{#2}{}{#1}}}
\newrobustcmd{\widehatbmZetaz}[2][]{\ensuremath{\subp{\widehat{\bm{Z}}}{}{#2}{}{#1}}}
\newrobustcmd{\checkbmZetaz}[2][]{\ensuremath{\subp{\check{\bm{Z}}}{}{#2}{}{#1}}}
\newrobustcmd{\tildebmZetaz}[2][]{\ensuremath{\subp{\tilde{\bm{Z}}}{}{#2}{}{#1}}}
\newrobustcmd{\widetildebmZetaz}[2][]{\ensuremath{\subp{\widetilde{\bm{Z}}}{}{#2}{}{#1}}}
\newrobustcmd{\acutebmZetaz}[2][]{\ensuremath{\subp{\acute{\bm{Z}}}{}{#2}{}{#1}}}
\newrobustcmd{\gravebmZetaz}[2][]{\ensuremath{\subp{\grave{\bm{Z}}}{}{#2}{}{#1}}}
\newrobustcmd{\dotbmZetaz}[2][]{\ensuremath{\subp{\dot{\bm{Z}}}{}{#2}{}{#1}}}
\newrobustcmd{\ddotbmZetaz}[2][]{\ensuremath{\subp{\ddot{\bm{Z}}}{}{#2}{}{#1}}}
\newrobustcmd{\brevebmZetaz}[2][]{\ensuremath{\subp{\breve{\bm{Z}}}{}{#2}{}{#1}}}
\newrobustcmd{\barbmZetaz}[2][]{\ensuremath{\subp{\bar{\bm{Z}}}{}{#2}{}{#1}}}
\newrobustcmd{\vecbmZetaz}[2][]{\ensuremath{\subp{\vec{\bm{Z}}}{}{#2}{}{#1}}}
\newrobustcmd{\Etaz}[2][]{\ensuremath{\subp{H}{}{#2}{}{#1}}}
\newrobustcmd{\hatEtaz}[2][]{\ensuremath{\subp{\hat{H}}{}{#2}{}{#1}}}
\newrobustcmd{\widehatEtaz}[2][]{\ensuremath{\subp{\widehat{H}}{}{#2}{}{#1}}}
\newrobustcmd{\checkEtaz}[2][]{\ensuremath{\subp{\check{H}}{}{#2}{}{#1}}}
\newrobustcmd{\tildeEtaz}[2][]{\ensuremath{\subp{\tilde{H}}{}{#2}{}{#1}}}
\newrobustcmd{\widetildeEtaz}[2][]{\ensuremath{\subp{\widetilde{H}}{}{#2}{}{#1}}}
\newrobustcmd{\acuteEtaz}[2][]{\ensuremath{\subp{\acute{H}}{}{#2}{}{#1}}}
\newrobustcmd{\graveEtaz}[2][]{\ensuremath{\subp{\grave{H}}{}{#2}{}{#1}}}
\newrobustcmd{\dotEtaz}[2][]{\ensuremath{\subp{\dot{H}}{}{#2}{}{#1}}}
\newrobustcmd{\ddotEtaz}[2][]{\ensuremath{\subp{\ddot{H}}{}{#2}{}{#1}}}
\newrobustcmd{\breveEtaz}[2][]{\ensuremath{\subp{\breve{H}}{}{#2}{}{#1}}}
\newrobustcmd{\barEtaz}[2][]{\ensuremath{\subp{\bar{H}}{}{#2}{}{#1}}}
\newrobustcmd{\vecEtaz}[2][]{\ensuremath{\subp{\vec{H}}{}{#2}{}{#1}}}
\newrobustcmd{\bmEtaz}[2][]{\ensuremath{\subp{\bm{H}}{}{#2}{}{#1}}}
\newrobustcmd{\hatbmEtaz}[2][]{\ensuremath{\subp{\hat{\bm{H}}}{}{#2}{}{#1}}}
\newrobustcmd{\widehatbmEtaz}[2][]{\ensuremath{\subp{\widehat{\bm{H}}}{}{#2}{}{#1}}}
\newrobustcmd{\checkbmEtaz}[2][]{\ensuremath{\subp{\check{\bm{H}}}{}{#2}{}{#1}}}
\newrobustcmd{\tildebmEtaz}[2][]{\ensuremath{\subp{\tilde{\bm{H}}}{}{#2}{}{#1}}}
\newrobustcmd{\widetildebmEtaz}[2][]{\ensuremath{\subp{\widetilde{\bm{H}}}{}{#2}{}{#1}}}
\newrobustcmd{\acutebmEtaz}[2][]{\ensuremath{\subp{\acute{\bm{H}}}{}{#2}{}{#1}}}
\newrobustcmd{\gravebmEtaz}[2][]{\ensuremath{\subp{\grave{\bm{H}}}{}{#2}{}{#1}}}
\newrobustcmd{\dotbmEtaz}[2][]{\ensuremath{\subp{\dot{\bm{H}}}{}{#2}{}{#1}}}
\newrobustcmd{\ddotbmEtaz}[2][]{\ensuremath{\subp{\ddot{\bm{H}}}{}{#2}{}{#1}}}
\newrobustcmd{\brevebmEtaz}[2][]{\ensuremath{\subp{\breve{\bm{H}}}{}{#2}{}{#1}}}
\newrobustcmd{\barbmEtaz}[2][]{\ensuremath{\subp{\bar{\bm{H}}}{}{#2}{}{#1}}}
\newrobustcmd{\vecbmEtaz}[2][]{\ensuremath{\subp{\vec{\bm{H}}}{}{#2}{}{#1}}}
\newrobustcmd{\Thetaz}[2][]{\ensuremath{\subp{\Theta}{}{#2}{}{#1}}}
\newrobustcmd{\hatThetaz}[2][]{\ensuremath{\subp{\hat{\Theta}}{}{#2}{}{#1}}}
\newrobustcmd{\widehatThetaz}[2][]{\ensuremath{\subp{\widehat{\Theta}}{}{#2}{}{#1}}}
\newrobustcmd{\checkThetaz}[2][]{\ensuremath{\subp{\check{\Theta}}{}{#2}{}{#1}}}
\newrobustcmd{\tildeThetaz}[2][]{\ensuremath{\subp{\tilde{\Theta}}{}{#2}{}{#1}}}
\newrobustcmd{\widetildeThetaz}[2][]{\ensuremath{\subp{\widetilde{\Theta}}{}{#2}{}{#1}}}
\newrobustcmd{\acuteThetaz}[2][]{\ensuremath{\subp{\acute{\Theta}}{}{#2}{}{#1}}}
\newrobustcmd{\graveThetaz}[2][]{\ensuremath{\subp{\grave{\Theta}}{}{#2}{}{#1}}}
\newrobustcmd{\dotThetaz}[2][]{\ensuremath{\subp{\dot{\Theta}}{}{#2}{}{#1}}}
\newrobustcmd{\ddotThetaz}[2][]{\ensuremath{\subp{\ddot{\Theta}}{}{#2}{}{#1}}}
\newrobustcmd{\breveThetaz}[2][]{\ensuremath{\subp{\breve{\Theta}}{}{#2}{}{#1}}}
\newrobustcmd{\barThetaz}[2][]{\ensuremath{\subp{\bar{\Theta}}{}{#2}{}{#1}}}
\newrobustcmd{\vecThetaz}[2][]{\ensuremath{\subp{\vec{\Theta}}{}{#2}{}{#1}}}
\newrobustcmd{\bmThetaz}[2][]{\ensuremath{\subp{\bm{\Theta}}{}{#2}{}{#1}}}
\newrobustcmd{\hatbmThetaz}[2][]{\ensuremath{\subp{\hat{\bm{\Theta}}}{}{#2}{}{#1}}}
\newrobustcmd{\widehatbmThetaz}[2][]{\ensuremath{\subp{\widehat{\bm{\Theta}}}{}{#2}{}{#1}}}
\newrobustcmd{\checkbmThetaz}[2][]{\ensuremath{\subp{\check{\bm{\Theta}}}{}{#2}{}{#1}}}
\newrobustcmd{\tildebmThetaz}[2][]{\ensuremath{\subp{\tilde{\bm{\Theta}}}{}{#2}{}{#1}}}
\newrobustcmd{\widetildebmThetaz}[2][]{\ensuremath{\subp{\widetilde{\bm{\Theta}}}{}{#2}{}{#1}}}
\newrobustcmd{\acutebmThetaz}[2][]{\ensuremath{\subp{\acute{\bm{\Theta}}}{}{#2}{}{#1}}}
\newrobustcmd{\gravebmThetaz}[2][]{\ensuremath{\subp{\grave{\bm{\Theta}}}{}{#2}{}{#1}}}
\newrobustcmd{\dotbmThetaz}[2][]{\ensuremath{\subp{\dot{\bm{\Theta}}}{}{#2}{}{#1}}}
\newrobustcmd{\ddotbmThetaz}[2][]{\ensuremath{\subp{\ddot{\bm{\Theta}}}{}{#2}{}{#1}}}
\newrobustcmd{\brevebmThetaz}[2][]{\ensuremath{\subp{\breve{\bm{\Theta}}}{}{#2}{}{#1}}}
\newrobustcmd{\barbmThetaz}[2][]{\ensuremath{\subp{\bar{\bm{\Theta}}}{}{#2}{}{#1}}}
\newrobustcmd{\vecbmThetaz}[2][]{\ensuremath{\subp{\vec{\bm{\Theta}}}{}{#2}{}{#1}}}
\newrobustcmd{\Iotaz}[2][]{\ensuremath{\subp{I}{}{#2}{}{#1}}}
\newrobustcmd{\hatIotaz}[2][]{\ensuremath{\subp{\hat{I}}{}{#2}{}{#1}}}
\newrobustcmd{\widehatIotaz}[2][]{\ensuremath{\subp{\widehat{I}}{}{#2}{}{#1}}}
\newrobustcmd{\checkIotaz}[2][]{\ensuremath{\subp{\check{I}}{}{#2}{}{#1}}}
\newrobustcmd{\tildeIotaz}[2][]{\ensuremath{\subp{\tilde{I}}{}{#2}{}{#1}}}
\newrobustcmd{\widetildeIotaz}[2][]{\ensuremath{\subp{\widetilde{I}}{}{#2}{}{#1}}}
\newrobustcmd{\acuteIotaz}[2][]{\ensuremath{\subp{\acute{I}}{}{#2}{}{#1}}}
\newrobustcmd{\graveIotaz}[2][]{\ensuremath{\subp{\grave{I}}{}{#2}{}{#1}}}
\newrobustcmd{\dotIotaz}[2][]{\ensuremath{\subp{\dot{I}}{}{#2}{}{#1}}}
\newrobustcmd{\ddotIotaz}[2][]{\ensuremath{\subp{\ddot{I}}{}{#2}{}{#1}}}
\newrobustcmd{\breveIotaz}[2][]{\ensuremath{\subp{\breve{I}}{}{#2}{}{#1}}}
\newrobustcmd{\barIotaz}[2][]{\ensuremath{\subp{\bar{I}}{}{#2}{}{#1}}}
\newrobustcmd{\vecIotaz}[2][]{\ensuremath{\subp{\vec{I}}{}{#2}{}{#1}}}
\newrobustcmd{\bmIotaz}[2][]{\ensuremath{\subp{\bm{I}}{}{#2}{}{#1}}}
\newrobustcmd{\hatbmIotaz}[2][]{\ensuremath{\subp{\hat{\bm{I}}}{}{#2}{}{#1}}}
\newrobustcmd{\widehatbmIotaz}[2][]{\ensuremath{\subp{\widehat{\bm{I}}}{}{#2}{}{#1}}}
\newrobustcmd{\checkbmIotaz}[2][]{\ensuremath{\subp{\check{\bm{I}}}{}{#2}{}{#1}}}
\newrobustcmd{\tildebmIotaz}[2][]{\ensuremath{\subp{\tilde{\bm{I}}}{}{#2}{}{#1}}}
\newrobustcmd{\widetildebmIotaz}[2][]{\ensuremath{\subp{\widetilde{\bm{I}}}{}{#2}{}{#1}}}
\newrobustcmd{\acutebmIotaz}[2][]{\ensuremath{\subp{\acute{\bm{I}}}{}{#2}{}{#1}}}
\newrobustcmd{\gravebmIotaz}[2][]{\ensuremath{\subp{\grave{\bm{I}}}{}{#2}{}{#1}}}
\newrobustcmd{\dotbmIotaz}[2][]{\ensuremath{\subp{\dot{\bm{I}}}{}{#2}{}{#1}}}
\newrobustcmd{\ddotbmIotaz}[2][]{\ensuremath{\subp{\ddot{\bm{I}}}{}{#2}{}{#1}}}
\newrobustcmd{\brevebmIotaz}[2][]{\ensuremath{\subp{\breve{\bm{I}}}{}{#2}{}{#1}}}
\newrobustcmd{\barbmIotaz}[2][]{\ensuremath{\subp{\bar{\bm{I}}}{}{#2}{}{#1}}}
\newrobustcmd{\vecbmIotaz}[2][]{\ensuremath{\subp{\vec{\bm{I}}}{}{#2}{}{#1}}}
\newrobustcmd{\Kappaz}[2][]{\ensuremath{\subp{K}{}{#2}{}{#1}}}
\newrobustcmd{\hatKappaz}[2][]{\ensuremath{\subp{\hat{K}}{}{#2}{}{#1}}}
\newrobustcmd{\widehatKappaz}[2][]{\ensuremath{\subp{\widehat{K}}{}{#2}{}{#1}}}
\newrobustcmd{\checkKappaz}[2][]{\ensuremath{\subp{\check{K}}{}{#2}{}{#1}}}
\newrobustcmd{\tildeKappaz}[2][]{\ensuremath{\subp{\tilde{K}}{}{#2}{}{#1}}}
\newrobustcmd{\widetildeKappaz}[2][]{\ensuremath{\subp{\widetilde{K}}{}{#2}{}{#1}}}
\newrobustcmd{\acuteKappaz}[2][]{\ensuremath{\subp{\acute{K}}{}{#2}{}{#1}}}
\newrobustcmd{\graveKappaz}[2][]{\ensuremath{\subp{\grave{K}}{}{#2}{}{#1}}}
\newrobustcmd{\dotKappaz}[2][]{\ensuremath{\subp{\dot{K}}{}{#2}{}{#1}}}
\newrobustcmd{\ddotKappaz}[2][]{\ensuremath{\subp{\ddot{K}}{}{#2}{}{#1}}}
\newrobustcmd{\breveKappaz}[2][]{\ensuremath{\subp{\breve{K}}{}{#2}{}{#1}}}
\newrobustcmd{\barKappaz}[2][]{\ensuremath{\subp{\bar{K}}{}{#2}{}{#1}}}
\newrobustcmd{\vecKappaz}[2][]{\ensuremath{\subp{\vec{K}}{}{#2}{}{#1}}}
\newrobustcmd{\bmKappaz}[2][]{\ensuremath{\subp{\bm{K}}{}{#2}{}{#1}}}
\newrobustcmd{\hatbmKappaz}[2][]{\ensuremath{\subp{\hat{\bm{K}}}{}{#2}{}{#1}}}
\newrobustcmd{\widehatbmKappaz}[2][]{\ensuremath{\subp{\widehat{\bm{K}}}{}{#2}{}{#1}}}
\newrobustcmd{\checkbmKappaz}[2][]{\ensuremath{\subp{\check{\bm{K}}}{}{#2}{}{#1}}}
\newrobustcmd{\tildebmKappaz}[2][]{\ensuremath{\subp{\tilde{\bm{K}}}{}{#2}{}{#1}}}
\newrobustcmd{\widetildebmKappaz}[2][]{\ensuremath{\subp{\widetilde{\bm{K}}}{}{#2}{}{#1}}}
\newrobustcmd{\acutebmKappaz}[2][]{\ensuremath{\subp{\acute{\bm{K}}}{}{#2}{}{#1}}}
\newrobustcmd{\gravebmKappaz}[2][]{\ensuremath{\subp{\grave{\bm{K}}}{}{#2}{}{#1}}}
\newrobustcmd{\dotbmKappaz}[2][]{\ensuremath{\subp{\dot{\bm{K}}}{}{#2}{}{#1}}}
\newrobustcmd{\ddotbmKappaz}[2][]{\ensuremath{\subp{\ddot{\bm{K}}}{}{#2}{}{#1}}}
\newrobustcmd{\brevebmKappaz}[2][]{\ensuremath{\subp{\breve{\bm{K}}}{}{#2}{}{#1}}}
\newrobustcmd{\barbmKappaz}[2][]{\ensuremath{\subp{\bar{\bm{K}}}{}{#2}{}{#1}}}
\newrobustcmd{\vecbmKappaz}[2][]{\ensuremath{\subp{\vec{\bm{K}}}{}{#2}{}{#1}}}
\newrobustcmd{\Lambdaz}[2][]{\ensuremath{\subp{\Lambda}{}{#2}{}{#1}}}
\newrobustcmd{\hatLambdaz}[2][]{\ensuremath{\subp{\hat{\Lambda}}{}{#2}{}{#1}}}
\newrobustcmd{\widehatLambdaz}[2][]{\ensuremath{\subp{\widehat{\Lambda}}{}{#2}{}{#1}}}
\newrobustcmd{\checkLambdaz}[2][]{\ensuremath{\subp{\check{\Lambda}}{}{#2}{}{#1}}}
\newrobustcmd{\tildeLambdaz}[2][]{\ensuremath{\subp{\tilde{\Lambda}}{}{#2}{}{#1}}}
\newrobustcmd{\widetildeLambdaz}[2][]{\ensuremath{\subp{\widetilde{\Lambda}}{}{#2}{}{#1}}}
\newrobustcmd{\acuteLambdaz}[2][]{\ensuremath{\subp{\acute{\Lambda}}{}{#2}{}{#1}}}
\newrobustcmd{\graveLambdaz}[2][]{\ensuremath{\subp{\grave{\Lambda}}{}{#2}{}{#1}}}
\newrobustcmd{\dotLambdaz}[2][]{\ensuremath{\subp{\dot{\Lambda}}{}{#2}{}{#1}}}
\newrobustcmd{\ddotLambdaz}[2][]{\ensuremath{\subp{\ddot{\Lambda}}{}{#2}{}{#1}}}
\newrobustcmd{\breveLambdaz}[2][]{\ensuremath{\subp{\breve{\Lambda}}{}{#2}{}{#1}}}
\newrobustcmd{\barLambdaz}[2][]{\ensuremath{\subp{\bar{\Lambda}}{}{#2}{}{#1}}}
\newrobustcmd{\vecLambdaz}[2][]{\ensuremath{\subp{\vec{\Lambda}}{}{#2}{}{#1}}}
\newrobustcmd{\bmLambdaz}[2][]{\ensuremath{\subp{\bm{\Lambda}}{}{#2}{}{#1}}}
\newrobustcmd{\hatbmLambdaz}[2][]{\ensuremath{\subp{\hat{\bm{\Lambda}}}{}{#2}{}{#1}}}
\newrobustcmd{\widehatbmLambdaz}[2][]{\ensuremath{\subp{\widehat{\bm{\Lambda}}}{}{#2}{}{#1}}}
\newrobustcmd{\checkbmLambdaz}[2][]{\ensuremath{\subp{\check{\bm{\Lambda}}}{}{#2}{}{#1}}}
\newrobustcmd{\tildebmLambdaz}[2][]{\ensuremath{\subp{\tilde{\bm{\Lambda}}}{}{#2}{}{#1}}}
\newrobustcmd{\widetildebmLambdaz}[2][]{\ensuremath{\subp{\widetilde{\bm{\Lambda}}}{}{#2}{}{#1}}}
\newrobustcmd{\acutebmLambdaz}[2][]{\ensuremath{\subp{\acute{\bm{\Lambda}}}{}{#2}{}{#1}}}
\newrobustcmd{\gravebmLambdaz}[2][]{\ensuremath{\subp{\grave{\bm{\Lambda}}}{}{#2}{}{#1}}}
\newrobustcmd{\dotbmLambdaz}[2][]{\ensuremath{\subp{\dot{\bm{\Lambda}}}{}{#2}{}{#1}}}
\newrobustcmd{\ddotbmLambdaz}[2][]{\ensuremath{\subp{\ddot{\bm{\Lambda}}}{}{#2}{}{#1}}}
\newrobustcmd{\brevebmLambdaz}[2][]{\ensuremath{\subp{\breve{\bm{\Lambda}}}{}{#2}{}{#1}}}
\newrobustcmd{\barbmLambdaz}[2][]{\ensuremath{\subp{\bar{\bm{\Lambda}}}{}{#2}{}{#1}}}
\newrobustcmd{\vecbmLambdaz}[2][]{\ensuremath{\subp{\vec{\bm{\Lambda}}}{}{#2}{}{#1}}}
\newrobustcmd{\Muz}[2][]{\ensuremath{\subp{M}{}{#2}{}{#1}}}
\newrobustcmd{\hatMuz}[2][]{\ensuremath{\subp{\hat{M}}{}{#2}{}{#1}}}
\newrobustcmd{\widehatMuz}[2][]{\ensuremath{\subp{\widehat{M}}{}{#2}{}{#1}}}
\newrobustcmd{\checkMuz}[2][]{\ensuremath{\subp{\check{M}}{}{#2}{}{#1}}}
\newrobustcmd{\tildeMuz}[2][]{\ensuremath{\subp{\tilde{M}}{}{#2}{}{#1}}}
\newrobustcmd{\widetildeMuz}[2][]{\ensuremath{\subp{\widetilde{M}}{}{#2}{}{#1}}}
\newrobustcmd{\acuteMuz}[2][]{\ensuremath{\subp{\acute{M}}{}{#2}{}{#1}}}
\newrobustcmd{\graveMuz}[2][]{\ensuremath{\subp{\grave{M}}{}{#2}{}{#1}}}
\newrobustcmd{\dotMuz}[2][]{\ensuremath{\subp{\dot{M}}{}{#2}{}{#1}}}
\newrobustcmd{\ddotMuz}[2][]{\ensuremath{\subp{\ddot{M}}{}{#2}{}{#1}}}
\newrobustcmd{\breveMuz}[2][]{\ensuremath{\subp{\breve{M}}{}{#2}{}{#1}}}
\newrobustcmd{\barMuz}[2][]{\ensuremath{\subp{\bar{M}}{}{#2}{}{#1}}}
\newrobustcmd{\vecMuz}[2][]{\ensuremath{\subp{\vec{M}}{}{#2}{}{#1}}}
\newrobustcmd{\bmMuz}[2][]{\ensuremath{\subp{\bm{M}}{}{#2}{}{#1}}}
\newrobustcmd{\hatbmMuz}[2][]{\ensuremath{\subp{\hat{\bm{M}}}{}{#2}{}{#1}}}
\newrobustcmd{\widehatbmMuz}[2][]{\ensuremath{\subp{\widehat{\bm{M}}}{}{#2}{}{#1}}}
\newrobustcmd{\checkbmMuz}[2][]{\ensuremath{\subp{\check{\bm{M}}}{}{#2}{}{#1}}}
\newrobustcmd{\tildebmMuz}[2][]{\ensuremath{\subp{\tilde{\bm{M}}}{}{#2}{}{#1}}}
\newrobustcmd{\widetildebmMuz}[2][]{\ensuremath{\subp{\widetilde{\bm{M}}}{}{#2}{}{#1}}}
\newrobustcmd{\acutebmMuz}[2][]{\ensuremath{\subp{\acute{\bm{M}}}{}{#2}{}{#1}}}
\newrobustcmd{\gravebmMuz}[2][]{\ensuremath{\subp{\grave{\bm{M}}}{}{#2}{}{#1}}}
\newrobustcmd{\dotbmMuz}[2][]{\ensuremath{\subp{\dot{\bm{M}}}{}{#2}{}{#1}}}
\newrobustcmd{\ddotbmMuz}[2][]{\ensuremath{\subp{\ddot{\bm{M}}}{}{#2}{}{#1}}}
\newrobustcmd{\brevebmMuz}[2][]{\ensuremath{\subp{\breve{\bm{M}}}{}{#2}{}{#1}}}
\newrobustcmd{\barbmMuz}[2][]{\ensuremath{\subp{\bar{\bm{M}}}{}{#2}{}{#1}}}
\newrobustcmd{\vecbmMuz}[2][]{\ensuremath{\subp{\vec{\bm{M}}}{}{#2}{}{#1}}}
\newrobustcmd{\Nuz}[2][]{\ensuremath{\subp{N}{}{#2}{}{#1}}}
\newrobustcmd{\hatNuz}[2][]{\ensuremath{\subp{\hat{N}}{}{#2}{}{#1}}}
\newrobustcmd{\widehatNuz}[2][]{\ensuremath{\subp{\widehat{N}}{}{#2}{}{#1}}}
\newrobustcmd{\checkNuz}[2][]{\ensuremath{\subp{\check{N}}{}{#2}{}{#1}}}
\newrobustcmd{\tildeNuz}[2][]{\ensuremath{\subp{\tilde{N}}{}{#2}{}{#1}}}
\newrobustcmd{\widetildeNuz}[2][]{\ensuremath{\subp{\widetilde{N}}{}{#2}{}{#1}}}
\newrobustcmd{\acuteNuz}[2][]{\ensuremath{\subp{\acute{N}}{}{#2}{}{#1}}}
\newrobustcmd{\graveNuz}[2][]{\ensuremath{\subp{\grave{N}}{}{#2}{}{#1}}}
\newrobustcmd{\dotNuz}[2][]{\ensuremath{\subp{\dot{N}}{}{#2}{}{#1}}}
\newrobustcmd{\ddotNuz}[2][]{\ensuremath{\subp{\ddot{N}}{}{#2}{}{#1}}}
\newrobustcmd{\breveNuz}[2][]{\ensuremath{\subp{\breve{N}}{}{#2}{}{#1}}}
\newrobustcmd{\barNuz}[2][]{\ensuremath{\subp{\bar{N}}{}{#2}{}{#1}}}
\newrobustcmd{\vecNuz}[2][]{\ensuremath{\subp{\vec{N}}{}{#2}{}{#1}}}
\newrobustcmd{\bmNuz}[2][]{\ensuremath{\subp{\bm{N}}{}{#2}{}{#1}}}
\newrobustcmd{\hatbmNuz}[2][]{\ensuremath{\subp{\hat{\bm{N}}}{}{#2}{}{#1}}}
\newrobustcmd{\widehatbmNuz}[2][]{\ensuremath{\subp{\widehat{\bm{N}}}{}{#2}{}{#1}}}
\newrobustcmd{\checkbmNuz}[2][]{\ensuremath{\subp{\check{\bm{N}}}{}{#2}{}{#1}}}
\newrobustcmd{\tildebmNuz}[2][]{\ensuremath{\subp{\tilde{\bm{N}}}{}{#2}{}{#1}}}
\newrobustcmd{\widetildebmNuz}[2][]{\ensuremath{\subp{\widetilde{\bm{N}}}{}{#2}{}{#1}}}
\newrobustcmd{\acutebmNuz}[2][]{\ensuremath{\subp{\acute{\bm{N}}}{}{#2}{}{#1}}}
\newrobustcmd{\gravebmNuz}[2][]{\ensuremath{\subp{\grave{\bm{N}}}{}{#2}{}{#1}}}
\newrobustcmd{\dotbmNuz}[2][]{\ensuremath{\subp{\dot{\bm{N}}}{}{#2}{}{#1}}}
\newrobustcmd{\ddotbmNuz}[2][]{\ensuremath{\subp{\ddot{\bm{N}}}{}{#2}{}{#1}}}
\newrobustcmd{\brevebmNuz}[2][]{\ensuremath{\subp{\breve{\bm{N}}}{}{#2}{}{#1}}}
\newrobustcmd{\barbmNuz}[2][]{\ensuremath{\subp{\bar{\bm{N}}}{}{#2}{}{#1}}}
\newrobustcmd{\vecbmNuz}[2][]{\ensuremath{\subp{\vec{\bm{N}}}{}{#2}{}{#1}}}
\newrobustcmd{\Xiz}[2][]{\ensuremath{\subp{\Xi}{}{#2}{}{#1}}}
\newrobustcmd{\hatXiz}[2][]{\ensuremath{\subp{\hat{\Xi}}{}{#2}{}{#1}}}
\newrobustcmd{\widehatXiz}[2][]{\ensuremath{\subp{\widehat{\Xi}}{}{#2}{}{#1}}}
\newrobustcmd{\checkXiz}[2][]{\ensuremath{\subp{\check{\Xi}}{}{#2}{}{#1}}}
\newrobustcmd{\tildeXiz}[2][]{\ensuremath{\subp{\tilde{\Xi}}{}{#2}{}{#1}}}
\newrobustcmd{\widetildeXiz}[2][]{\ensuremath{\subp{\widetilde{\Xi}}{}{#2}{}{#1}}}
\newrobustcmd{\acuteXiz}[2][]{\ensuremath{\subp{\acute{\Xi}}{}{#2}{}{#1}}}
\newrobustcmd{\graveXiz}[2][]{\ensuremath{\subp{\grave{\Xi}}{}{#2}{}{#1}}}
\newrobustcmd{\dotXiz}[2][]{\ensuremath{\subp{\dot{\Xi}}{}{#2}{}{#1}}}
\newrobustcmd{\ddotXiz}[2][]{\ensuremath{\subp{\ddot{\Xi}}{}{#2}{}{#1}}}
\newrobustcmd{\breveXiz}[2][]{\ensuremath{\subp{\breve{\Xi}}{}{#2}{}{#1}}}
\newrobustcmd{\barXiz}[2][]{\ensuremath{\subp{\bar{\Xi}}{}{#2}{}{#1}}}
\newrobustcmd{\vecXiz}[2][]{\ensuremath{\subp{\vec{\Xi}}{}{#2}{}{#1}}}
\newrobustcmd{\bmXiz}[2][]{\ensuremath{\subp{\bm{\Xi}}{}{#2}{}{#1}}}
\newrobustcmd{\hatbmXiz}[2][]{\ensuremath{\subp{\hat{\bm{\Xi}}}{}{#2}{}{#1}}}
\newrobustcmd{\widehatbmXiz}[2][]{\ensuremath{\subp{\widehat{\bm{\Xi}}}{}{#2}{}{#1}}}
\newrobustcmd{\checkbmXiz}[2][]{\ensuremath{\subp{\check{\bm{\Xi}}}{}{#2}{}{#1}}}
\newrobustcmd{\tildebmXiz}[2][]{\ensuremath{\subp{\tilde{\bm{\Xi}}}{}{#2}{}{#1}}}
\newrobustcmd{\widetildebmXiz}[2][]{\ensuremath{\subp{\widetilde{\bm{\Xi}}}{}{#2}{}{#1}}}
\newrobustcmd{\acutebmXiz}[2][]{\ensuremath{\subp{\acute{\bm{\Xi}}}{}{#2}{}{#1}}}
\newrobustcmd{\gravebmXiz}[2][]{\ensuremath{\subp{\grave{\bm{\Xi}}}{}{#2}{}{#1}}}
\newrobustcmd{\dotbmXiz}[2][]{\ensuremath{\subp{\dot{\bm{\Xi}}}{}{#2}{}{#1}}}
\newrobustcmd{\ddotbmXiz}[2][]{\ensuremath{\subp{\ddot{\bm{\Xi}}}{}{#2}{}{#1}}}
\newrobustcmd{\brevebmXiz}[2][]{\ensuremath{\subp{\breve{\bm{\Xi}}}{}{#2}{}{#1}}}
\newrobustcmd{\barbmXiz}[2][]{\ensuremath{\subp{\bar{\bm{\Xi}}}{}{#2}{}{#1}}}
\newrobustcmd{\vecbmXiz}[2][]{\ensuremath{\subp{\vec{\bm{\Xi}}}{}{#2}{}{#1}}}
\newrobustcmd{\Piz}[2][]{\ensuremath{\subp{\Pi}{}{#2}{}{#1}}}
\newrobustcmd{\hatPiz}[2][]{\ensuremath{\subp{\hat{\Pi}}{}{#2}{}{#1}}}
\newrobustcmd{\widehatPiz}[2][]{\ensuremath{\subp{\widehat{\Pi}}{}{#2}{}{#1}}}
\newrobustcmd{\checkPiz}[2][]{\ensuremath{\subp{\check{\Pi}}{}{#2}{}{#1}}}
\newrobustcmd{\tildePiz}[2][]{\ensuremath{\subp{\tilde{\Pi}}{}{#2}{}{#1}}}
\newrobustcmd{\widetildePiz}[2][]{\ensuremath{\subp{\widetilde{\Pi}}{}{#2}{}{#1}}}
\newrobustcmd{\acutePiz}[2][]{\ensuremath{\subp{\acute{\Pi}}{}{#2}{}{#1}}}
\newrobustcmd{\gravePiz}[2][]{\ensuremath{\subp{\grave{\Pi}}{}{#2}{}{#1}}}
\newrobustcmd{\dotPiz}[2][]{\ensuremath{\subp{\dot{\Pi}}{}{#2}{}{#1}}}
\newrobustcmd{\ddotPiz}[2][]{\ensuremath{\subp{\ddot{\Pi}}{}{#2}{}{#1}}}
\newrobustcmd{\brevePiz}[2][]{\ensuremath{\subp{\breve{\Pi}}{}{#2}{}{#1}}}
\newrobustcmd{\barPiz}[2][]{\ensuremath{\subp{\bar{\Pi}}{}{#2}{}{#1}}}
\newrobustcmd{\vecPiz}[2][]{\ensuremath{\subp{\vec{\Pi}}{}{#2}{}{#1}}}
\newrobustcmd{\bmPiz}[2][]{\ensuremath{\subp{\bm{\Pi}}{}{#2}{}{#1}}}
\newrobustcmd{\hatbmPiz}[2][]{\ensuremath{\subp{\hat{\bm{\Pi}}}{}{#2}{}{#1}}}
\newrobustcmd{\widehatbmPiz}[2][]{\ensuremath{\subp{\widehat{\bm{\Pi}}}{}{#2}{}{#1}}}
\newrobustcmd{\checkbmPiz}[2][]{\ensuremath{\subp{\check{\bm{\Pi}}}{}{#2}{}{#1}}}
\newrobustcmd{\tildebmPiz}[2][]{\ensuremath{\subp{\tilde{\bm{\Pi}}}{}{#2}{}{#1}}}
\newrobustcmd{\widetildebmPiz}[2][]{\ensuremath{\subp{\widetilde{\bm{\Pi}}}{}{#2}{}{#1}}}
\newrobustcmd{\acutebmPiz}[2][]{\ensuremath{\subp{\acute{\bm{\Pi}}}{}{#2}{}{#1}}}
\newrobustcmd{\gravebmPiz}[2][]{\ensuremath{\subp{\grave{\bm{\Pi}}}{}{#2}{}{#1}}}
\newrobustcmd{\dotbmPiz}[2][]{\ensuremath{\subp{\dot{\bm{\Pi}}}{}{#2}{}{#1}}}
\newrobustcmd{\ddotbmPiz}[2][]{\ensuremath{\subp{\ddot{\bm{\Pi}}}{}{#2}{}{#1}}}
\newrobustcmd{\brevebmPiz}[2][]{\ensuremath{\subp{\breve{\bm{\Pi}}}{}{#2}{}{#1}}}
\newrobustcmd{\barbmPiz}[2][]{\ensuremath{\subp{\bar{\bm{\Pi}}}{}{#2}{}{#1}}}
\newrobustcmd{\vecbmPiz}[2][]{\ensuremath{\subp{\vec{\bm{\Pi}}}{}{#2}{}{#1}}}
\newrobustcmd{\Rhoz}[2][]{\ensuremath{\subp{R}{}{#2}{}{#1}}}
\newrobustcmd{\hatRhoz}[2][]{\ensuremath{\subp{\hat{R}}{}{#2}{}{#1}}}
\newrobustcmd{\widehatRhoz}[2][]{\ensuremath{\subp{\widehat{R}}{}{#2}{}{#1}}}
\newrobustcmd{\checkRhoz}[2][]{\ensuremath{\subp{\check{R}}{}{#2}{}{#1}}}
\newrobustcmd{\tildeRhoz}[2][]{\ensuremath{\subp{\tilde{R}}{}{#2}{}{#1}}}
\newrobustcmd{\widetildeRhoz}[2][]{\ensuremath{\subp{\widetilde{R}}{}{#2}{}{#1}}}
\newrobustcmd{\acuteRhoz}[2][]{\ensuremath{\subp{\acute{R}}{}{#2}{}{#1}}}
\newrobustcmd{\graveRhoz}[2][]{\ensuremath{\subp{\grave{R}}{}{#2}{}{#1}}}
\newrobustcmd{\dotRhoz}[2][]{\ensuremath{\subp{\dot{R}}{}{#2}{}{#1}}}
\newrobustcmd{\ddotRhoz}[2][]{\ensuremath{\subp{\ddot{R}}{}{#2}{}{#1}}}
\newrobustcmd{\breveRhoz}[2][]{\ensuremath{\subp{\breve{R}}{}{#2}{}{#1}}}
\newrobustcmd{\barRhoz}[2][]{\ensuremath{\subp{\bar{R}}{}{#2}{}{#1}}}
\newrobustcmd{\vecRhoz}[2][]{\ensuremath{\subp{\vec{R}}{}{#2}{}{#1}}}
\newrobustcmd{\bmRhoz}[2][]{\ensuremath{\subp{\bm{R}}{}{#2}{}{#1}}}
\newrobustcmd{\hatbmRhoz}[2][]{\ensuremath{\subp{\hat{\bm{R}}}{}{#2}{}{#1}}}
\newrobustcmd{\widehatbmRhoz}[2][]{\ensuremath{\subp{\widehat{\bm{R}}}{}{#2}{}{#1}}}
\newrobustcmd{\checkbmRhoz}[2][]{\ensuremath{\subp{\check{\bm{R}}}{}{#2}{}{#1}}}
\newrobustcmd{\tildebmRhoz}[2][]{\ensuremath{\subp{\tilde{\bm{R}}}{}{#2}{}{#1}}}
\newrobustcmd{\widetildebmRhoz}[2][]{\ensuremath{\subp{\widetilde{\bm{R}}}{}{#2}{}{#1}}}
\newrobustcmd{\acutebmRhoz}[2][]{\ensuremath{\subp{\acute{\bm{R}}}{}{#2}{}{#1}}}
\newrobustcmd{\gravebmRhoz}[2][]{\ensuremath{\subp{\grave{\bm{R}}}{}{#2}{}{#1}}}
\newrobustcmd{\dotbmRhoz}[2][]{\ensuremath{\subp{\dot{\bm{R}}}{}{#2}{}{#1}}}
\newrobustcmd{\ddotbmRhoz}[2][]{\ensuremath{\subp{\ddot{\bm{R}}}{}{#2}{}{#1}}}
\newrobustcmd{\brevebmRhoz}[2][]{\ensuremath{\subp{\breve{\bm{R}}}{}{#2}{}{#1}}}
\newrobustcmd{\barbmRhoz}[2][]{\ensuremath{\subp{\bar{\bm{R}}}{}{#2}{}{#1}}}
\newrobustcmd{\vecbmRhoz}[2][]{\ensuremath{\subp{\vec{\bm{R}}}{}{#2}{}{#1}}}
\newrobustcmd{\Sigmaz}[2][]{\ensuremath{\subp{\Sigma}{}{#2}{}{#1}}}
\newrobustcmd{\hatSigmaz}[2][]{\ensuremath{\subp{\hat{\Sigma}}{}{#2}{}{#1}}}
\newrobustcmd{\widehatSigmaz}[2][]{\ensuremath{\subp{\widehat{\Sigma}}{}{#2}{}{#1}}}
\newrobustcmd{\checkSigmaz}[2][]{\ensuremath{\subp{\check{\Sigma}}{}{#2}{}{#1}}}
\newrobustcmd{\tildeSigmaz}[2][]{\ensuremath{\subp{\tilde{\Sigma}}{}{#2}{}{#1}}}
\newrobustcmd{\widetildeSigmaz}[2][]{\ensuremath{\subp{\widetilde{\Sigma}}{}{#2}{}{#1}}}
\newrobustcmd{\acuteSigmaz}[2][]{\ensuremath{\subp{\acute{\Sigma}}{}{#2}{}{#1}}}
\newrobustcmd{\graveSigmaz}[2][]{\ensuremath{\subp{\grave{\Sigma}}{}{#2}{}{#1}}}
\newrobustcmd{\dotSigmaz}[2][]{\ensuremath{\subp{\dot{\Sigma}}{}{#2}{}{#1}}}
\newrobustcmd{\ddotSigmaz}[2][]{\ensuremath{\subp{\ddot{\Sigma}}{}{#2}{}{#1}}}
\newrobustcmd{\breveSigmaz}[2][]{\ensuremath{\subp{\breve{\Sigma}}{}{#2}{}{#1}}}
\newrobustcmd{\barSigmaz}[2][]{\ensuremath{\subp{\bar{\Sigma}}{}{#2}{}{#1}}}
\newrobustcmd{\vecSigmaz}[2][]{\ensuremath{\subp{\vec{\Sigma}}{}{#2}{}{#1}}}
\newrobustcmd{\bmSigmaz}[2][]{\ensuremath{\subp{\bm{\Sigma}}{}{#2}{}{#1}}}
\newrobustcmd{\hatbmSigmaz}[2][]{\ensuremath{\subp{\hat{\bm{\Sigma}}}{}{#2}{}{#1}}}
\newrobustcmd{\widehatbmSigmaz}[2][]{\ensuremath{\subp{\widehat{\bm{\Sigma}}}{}{#2}{}{#1}}}
\newrobustcmd{\checkbmSigmaz}[2][]{\ensuremath{\subp{\check{\bm{\Sigma}}}{}{#2}{}{#1}}}
\newrobustcmd{\tildebmSigmaz}[2][]{\ensuremath{\subp{\tilde{\bm{\Sigma}}}{}{#2}{}{#1}}}
\newrobustcmd{\widetildebmSigmaz}[2][]{\ensuremath{\subp{\widetilde{\bm{\Sigma}}}{}{#2}{}{#1}}}
\newrobustcmd{\acutebmSigmaz}[2][]{\ensuremath{\subp{\acute{\bm{\Sigma}}}{}{#2}{}{#1}}}
\newrobustcmd{\gravebmSigmaz}[2][]{\ensuremath{\subp{\grave{\bm{\Sigma}}}{}{#2}{}{#1}}}
\newrobustcmd{\dotbmSigmaz}[2][]{\ensuremath{\subp{\dot{\bm{\Sigma}}}{}{#2}{}{#1}}}
\newrobustcmd{\ddotbmSigmaz}[2][]{\ensuremath{\subp{\ddot{\bm{\Sigma}}}{}{#2}{}{#1}}}
\newrobustcmd{\brevebmSigmaz}[2][]{\ensuremath{\subp{\breve{\bm{\Sigma}}}{}{#2}{}{#1}}}
\newrobustcmd{\barbmSigmaz}[2][]{\ensuremath{\subp{\bar{\bm{\Sigma}}}{}{#2}{}{#1}}}
\newrobustcmd{\vecbmSigmaz}[2][]{\ensuremath{\subp{\vec{\bm{\Sigma}}}{}{#2}{}{#1}}}
\newrobustcmd{\Tauz}[2][]{\ensuremath{\subp{T}{}{#2}{}{#1}}}
\newrobustcmd{\hatTauz}[2][]{\ensuremath{\subp{\hat{T}}{}{#2}{}{#1}}}
\newrobustcmd{\widehatTauz}[2][]{\ensuremath{\subp{\widehat{T}}{}{#2}{}{#1}}}
\newrobustcmd{\checkTauz}[2][]{\ensuremath{\subp{\check{T}}{}{#2}{}{#1}}}
\newrobustcmd{\tildeTauz}[2][]{\ensuremath{\subp{\tilde{T}}{}{#2}{}{#1}}}
\newrobustcmd{\widetildeTauz}[2][]{\ensuremath{\subp{\widetilde{T}}{}{#2}{}{#1}}}
\newrobustcmd{\acuteTauz}[2][]{\ensuremath{\subp{\acute{T}}{}{#2}{}{#1}}}
\newrobustcmd{\graveTauz}[2][]{\ensuremath{\subp{\grave{T}}{}{#2}{}{#1}}}
\newrobustcmd{\dotTauz}[2][]{\ensuremath{\subp{\dot{T}}{}{#2}{}{#1}}}
\newrobustcmd{\ddotTauz}[2][]{\ensuremath{\subp{\ddot{T}}{}{#2}{}{#1}}}
\newrobustcmd{\breveTauz}[2][]{\ensuremath{\subp{\breve{T}}{}{#2}{}{#1}}}
\newrobustcmd{\barTauz}[2][]{\ensuremath{\subp{\bar{T}}{}{#2}{}{#1}}}
\newrobustcmd{\vecTauz}[2][]{\ensuremath{\subp{\vec{T}}{}{#2}{}{#1}}}
\newrobustcmd{\bmTauz}[2][]{\ensuremath{\subp{\bm{T}}{}{#2}{}{#1}}}
\newrobustcmd{\hatbmTauz}[2][]{\ensuremath{\subp{\hat{\bm{T}}}{}{#2}{}{#1}}}
\newrobustcmd{\widehatbmTauz}[2][]{\ensuremath{\subp{\widehat{\bm{T}}}{}{#2}{}{#1}}}
\newrobustcmd{\checkbmTauz}[2][]{\ensuremath{\subp{\check{\bm{T}}}{}{#2}{}{#1}}}
\newrobustcmd{\tildebmTauz}[2][]{\ensuremath{\subp{\tilde{\bm{T}}}{}{#2}{}{#1}}}
\newrobustcmd{\widetildebmTauz}[2][]{\ensuremath{\subp{\widetilde{\bm{T}}}{}{#2}{}{#1}}}
\newrobustcmd{\acutebmTauz}[2][]{\ensuremath{\subp{\acute{\bm{T}}}{}{#2}{}{#1}}}
\newrobustcmd{\gravebmTauz}[2][]{\ensuremath{\subp{\grave{\bm{T}}}{}{#2}{}{#1}}}
\newrobustcmd{\dotbmTauz}[2][]{\ensuremath{\subp{\dot{\bm{T}}}{}{#2}{}{#1}}}
\newrobustcmd{\ddotbmTauz}[2][]{\ensuremath{\subp{\ddot{\bm{T}}}{}{#2}{}{#1}}}
\newrobustcmd{\brevebmTauz}[2][]{\ensuremath{\subp{\breve{\bm{T}}}{}{#2}{}{#1}}}
\newrobustcmd{\barbmTauz}[2][]{\ensuremath{\subp{\bar{\bm{T}}}{}{#2}{}{#1}}}
\newrobustcmd{\vecbmTauz}[2][]{\ensuremath{\subp{\vec{\bm{T}}}{}{#2}{}{#1}}}
\newrobustcmd{\Upsilonz}[2][]{\ensuremath{\subp{\Upsilon}{}{#2}{}{#1}}}
\newrobustcmd{\hatUpsilonz}[2][]{\ensuremath{\subp{\hat{\Upsilon}}{}{#2}{}{#1}}}
\newrobustcmd{\widehatUpsilonz}[2][]{\ensuremath{\subp{\widehat{\Upsilon}}{}{#2}{}{#1}}}
\newrobustcmd{\checkUpsilonz}[2][]{\ensuremath{\subp{\check{\Upsilon}}{}{#2}{}{#1}}}
\newrobustcmd{\tildeUpsilonz}[2][]{\ensuremath{\subp{\tilde{\Upsilon}}{}{#2}{}{#1}}}
\newrobustcmd{\widetildeUpsilonz}[2][]{\ensuremath{\subp{\widetilde{\Upsilon}}{}{#2}{}{#1}}}
\newrobustcmd{\acuteUpsilonz}[2][]{\ensuremath{\subp{\acute{\Upsilon}}{}{#2}{}{#1}}}
\newrobustcmd{\graveUpsilonz}[2][]{\ensuremath{\subp{\grave{\Upsilon}}{}{#2}{}{#1}}}
\newrobustcmd{\dotUpsilonz}[2][]{\ensuremath{\subp{\dot{\Upsilon}}{}{#2}{}{#1}}}
\newrobustcmd{\ddotUpsilonz}[2][]{\ensuremath{\subp{\ddot{\Upsilon}}{}{#2}{}{#1}}}
\newrobustcmd{\breveUpsilonz}[2][]{\ensuremath{\subp{\breve{\Upsilon}}{}{#2}{}{#1}}}
\newrobustcmd{\barUpsilonz}[2][]{\ensuremath{\subp{\bar{\Upsilon}}{}{#2}{}{#1}}}
\newrobustcmd{\vecUpsilonz}[2][]{\ensuremath{\subp{\vec{\Upsilon}}{}{#2}{}{#1}}}
\newrobustcmd{\bmUpsilonz}[2][]{\ensuremath{\subp{\bm{\Upsilon}}{}{#2}{}{#1}}}
\newrobustcmd{\hatbmUpsilonz}[2][]{\ensuremath{\subp{\hat{\bm{\Upsilon}}}{}{#2}{}{#1}}}
\newrobustcmd{\widehatbmUpsilonz}[2][]{\ensuremath{\subp{\widehat{\bm{\Upsilon}}}{}{#2}{}{#1}}}
\newrobustcmd{\checkbmUpsilonz}[2][]{\ensuremath{\subp{\check{\bm{\Upsilon}}}{}{#2}{}{#1}}}
\newrobustcmd{\tildebmUpsilonz}[2][]{\ensuremath{\subp{\tilde{\bm{\Upsilon}}}{}{#2}{}{#1}}}
\newrobustcmd{\widetildebmUpsilonz}[2][]{\ensuremath{\subp{\widetilde{\bm{\Upsilon}}}{}{#2}{}{#1}}}
\newrobustcmd{\acutebmUpsilonz}[2][]{\ensuremath{\subp{\acute{\bm{\Upsilon}}}{}{#2}{}{#1}}}
\newrobustcmd{\gravebmUpsilonz}[2][]{\ensuremath{\subp{\grave{\bm{\Upsilon}}}{}{#2}{}{#1}}}
\newrobustcmd{\dotbmUpsilonz}[2][]{\ensuremath{\subp{\dot{\bm{\Upsilon}}}{}{#2}{}{#1}}}
\newrobustcmd{\ddotbmUpsilonz}[2][]{\ensuremath{\subp{\ddot{\bm{\Upsilon}}}{}{#2}{}{#1}}}
\newrobustcmd{\brevebmUpsilonz}[2][]{\ensuremath{\subp{\breve{\bm{\Upsilon}}}{}{#2}{}{#1}}}
\newrobustcmd{\barbmUpsilonz}[2][]{\ensuremath{\subp{\bar{\bm{\Upsilon}}}{}{#2}{}{#1}}}
\newrobustcmd{\vecbmUpsilonz}[2][]{\ensuremath{\subp{\vec{\bm{\Upsilon}}}{}{#2}{}{#1}}}
\newrobustcmd{\Phiz}[2][]{\ensuremath{\subp{\Phi}{}{#2}{}{#1}}}
\newrobustcmd{\hatPhiz}[2][]{\ensuremath{\subp{\hat{\Phi}}{}{#2}{}{#1}}}
\newrobustcmd{\widehatPhiz}[2][]{\ensuremath{\subp{\widehat{\Phi}}{}{#2}{}{#1}}}
\newrobustcmd{\checkPhiz}[2][]{\ensuremath{\subp{\check{\Phi}}{}{#2}{}{#1}}}
\newrobustcmd{\tildePhiz}[2][]{\ensuremath{\subp{\tilde{\Phi}}{}{#2}{}{#1}}}
\newrobustcmd{\widetildePhiz}[2][]{\ensuremath{\subp{\widetilde{\Phi}}{}{#2}{}{#1}}}
\newrobustcmd{\acutePhiz}[2][]{\ensuremath{\subp{\acute{\Phi}}{}{#2}{}{#1}}}
\newrobustcmd{\gravePhiz}[2][]{\ensuremath{\subp{\grave{\Phi}}{}{#2}{}{#1}}}
\newrobustcmd{\dotPhiz}[2][]{\ensuremath{\subp{\dot{\Phi}}{}{#2}{}{#1}}}
\newrobustcmd{\ddotPhiz}[2][]{\ensuremath{\subp{\ddot{\Phi}}{}{#2}{}{#1}}}
\newrobustcmd{\brevePhiz}[2][]{\ensuremath{\subp{\breve{\Phi}}{}{#2}{}{#1}}}
\newrobustcmd{\barPhiz}[2][]{\ensuremath{\subp{\bar{\Phi}}{}{#2}{}{#1}}}
\newrobustcmd{\vecPhiz}[2][]{\ensuremath{\subp{\vec{\Phi}}{}{#2}{}{#1}}}
\newrobustcmd{\bmPhiz}[2][]{\ensuremath{\subp{\bm{\Phi}}{}{#2}{}{#1}}}
\newrobustcmd{\hatbmPhiz}[2][]{\ensuremath{\subp{\hat{\bm{\Phi}}}{}{#2}{}{#1}}}
\newrobustcmd{\widehatbmPhiz}[2][]{\ensuremath{\subp{\widehat{\bm{\Phi}}}{}{#2}{}{#1}}}
\newrobustcmd{\checkbmPhiz}[2][]{\ensuremath{\subp{\check{\bm{\Phi}}}{}{#2}{}{#1}}}
\newrobustcmd{\tildebmPhiz}[2][]{\ensuremath{\subp{\tilde{\bm{\Phi}}}{}{#2}{}{#1}}}
\newrobustcmd{\widetildebmPhiz}[2][]{\ensuremath{\subp{\widetilde{\bm{\Phi}}}{}{#2}{}{#1}}}
\newrobustcmd{\acutebmPhiz}[2][]{\ensuremath{\subp{\acute{\bm{\Phi}}}{}{#2}{}{#1}}}
\newrobustcmd{\gravebmPhiz}[2][]{\ensuremath{\subp{\grave{\bm{\Phi}}}{}{#2}{}{#1}}}
\newrobustcmd{\dotbmPhiz}[2][]{\ensuremath{\subp{\dot{\bm{\Phi}}}{}{#2}{}{#1}}}
\newrobustcmd{\ddotbmPhiz}[2][]{\ensuremath{\subp{\ddot{\bm{\Phi}}}{}{#2}{}{#1}}}
\newrobustcmd{\brevebmPhiz}[2][]{\ensuremath{\subp{\breve{\bm{\Phi}}}{}{#2}{}{#1}}}
\newrobustcmd{\barbmPhiz}[2][]{\ensuremath{\subp{\bar{\bm{\Phi}}}{}{#2}{}{#1}}}
\newrobustcmd{\vecbmPhiz}[2][]{\ensuremath{\subp{\vec{\bm{\Phi}}}{}{#2}{}{#1}}}
\newrobustcmd{\Chiz}[2][]{\ensuremath{\subp{\Chi}{}{#2}{}{#1}}}
\newrobustcmd{\hatChiz}[2][]{\ensuremath{\subp{\hat{\Chi}}{}{#2}{}{#1}}}
\newrobustcmd{\widehatChiz}[2][]{\ensuremath{\subp{\widehat{\Chi}}{}{#2}{}{#1}}}
\newrobustcmd{\checkChiz}[2][]{\ensuremath{\subp{\check{\Chi}}{}{#2}{}{#1}}}
\newrobustcmd{\tildeChiz}[2][]{\ensuremath{\subp{\tilde{\Chi}}{}{#2}{}{#1}}}
\newrobustcmd{\widetildeChiz}[2][]{\ensuremath{\subp{\widetilde{\Chi}}{}{#2}{}{#1}}}
\newrobustcmd{\acuteChiz}[2][]{\ensuremath{\subp{\acute{\Chi}}{}{#2}{}{#1}}}
\newrobustcmd{\graveChiz}[2][]{\ensuremath{\subp{\grave{\Chi}}{}{#2}{}{#1}}}
\newrobustcmd{\dotChiz}[2][]{\ensuremath{\subp{\dot{\Chi}}{}{#2}{}{#1}}}
\newrobustcmd{\ddotChiz}[2][]{\ensuremath{\subp{\ddot{\Chi}}{}{#2}{}{#1}}}
\newrobustcmd{\breveChiz}[2][]{\ensuremath{\subp{\breve{\Chi}}{}{#2}{}{#1}}}
\newrobustcmd{\barChiz}[2][]{\ensuremath{\subp{\bar{\Chi}}{}{#2}{}{#1}}}
\newrobustcmd{\vecChiz}[2][]{\ensuremath{\subp{\vec{\Chi}}{}{#2}{}{#1}}}
\newrobustcmd{\bmChiz}[2][]{\ensuremath{\subp{\bm{\Chi}}{}{#2}{}{#1}}}
\newrobustcmd{\hatbmChiz}[2][]{\ensuremath{\subp{\hat{\bm{\Chi}}}{}{#2}{}{#1}}}
\newrobustcmd{\widehatbmChiz}[2][]{\ensuremath{\subp{\widehat{\bm{\Chi}}}{}{#2}{}{#1}}}
\newrobustcmd{\checkbmChiz}[2][]{\ensuremath{\subp{\check{\bm{\Chi}}}{}{#2}{}{#1}}}
\newrobustcmd{\tildebmChiz}[2][]{\ensuremath{\subp{\tilde{\bm{\Chi}}}{}{#2}{}{#1}}}
\newrobustcmd{\widetildebmChiz}[2][]{\ensuremath{\subp{\widetilde{\bm{\Chi}}}{}{#2}{}{#1}}}
\newrobustcmd{\acutebmChiz}[2][]{\ensuremath{\subp{\acute{\bm{\Chi}}}{}{#2}{}{#1}}}
\newrobustcmd{\gravebmChiz}[2][]{\ensuremath{\subp{\grave{\bm{\Chi}}}{}{#2}{}{#1}}}
\newrobustcmd{\dotbmChiz}[2][]{\ensuremath{\subp{\dot{\bm{\Chi}}}{}{#2}{}{#1}}}
\newrobustcmd{\ddotbmChiz}[2][]{\ensuremath{\subp{\ddot{\bm{\Chi}}}{}{#2}{}{#1}}}
\newrobustcmd{\brevebmChiz}[2][]{\ensuremath{\subp{\breve{\bm{\Chi}}}{}{#2}{}{#1}}}
\newrobustcmd{\barbmChiz}[2][]{\ensuremath{\subp{\bar{\bm{\Chi}}}{}{#2}{}{#1}}}
\newrobustcmd{\vecbmChiz}[2][]{\ensuremath{\subp{\vec{\bm{\Chi}}}{}{#2}{}{#1}}}
\newrobustcmd{\Psiz}[2][]{\ensuremath{\subp{\Psi}{}{#2}{}{#1}}}
\newrobustcmd{\hatPsiz}[2][]{\ensuremath{\subp{\hat{\Psi}}{}{#2}{}{#1}}}
\newrobustcmd{\widehatPsiz}[2][]{\ensuremath{\subp{\widehat{\Psi}}{}{#2}{}{#1}}}
\newrobustcmd{\checkPsiz}[2][]{\ensuremath{\subp{\check{\Psi}}{}{#2}{}{#1}}}
\newrobustcmd{\tildePsiz}[2][]{\ensuremath{\subp{\tilde{\Psi}}{}{#2}{}{#1}}}
\newrobustcmd{\widetildePsiz}[2][]{\ensuremath{\subp{\widetilde{\Psi}}{}{#2}{}{#1}}}
\newrobustcmd{\acutePsiz}[2][]{\ensuremath{\subp{\acute{\Psi}}{}{#2}{}{#1}}}
\newrobustcmd{\gravePsiz}[2][]{\ensuremath{\subp{\grave{\Psi}}{}{#2}{}{#1}}}
\newrobustcmd{\dotPsiz}[2][]{\ensuremath{\subp{\dot{\Psi}}{}{#2}{}{#1}}}
\newrobustcmd{\ddotPsiz}[2][]{\ensuremath{\subp{\ddot{\Psi}}{}{#2}{}{#1}}}
\newrobustcmd{\brevePsiz}[2][]{\ensuremath{\subp{\breve{\Psi}}{}{#2}{}{#1}}}
\newrobustcmd{\barPsiz}[2][]{\ensuremath{\subp{\bar{\Psi}}{}{#2}{}{#1}}}
\newrobustcmd{\vecPsiz}[2][]{\ensuremath{\subp{\vec{\Psi}}{}{#2}{}{#1}}}
\newrobustcmd{\bmPsiz}[2][]{\ensuremath{\subp{\bm{\Psi}}{}{#2}{}{#1}}}
\newrobustcmd{\hatbmPsiz}[2][]{\ensuremath{\subp{\hat{\bm{\Psi}}}{}{#2}{}{#1}}}
\newrobustcmd{\widehatbmPsiz}[2][]{\ensuremath{\subp{\widehat{\bm{\Psi}}}{}{#2}{}{#1}}}
\newrobustcmd{\checkbmPsiz}[2][]{\ensuremath{\subp{\check{\bm{\Psi}}}{}{#2}{}{#1}}}
\newrobustcmd{\tildebmPsiz}[2][]{\ensuremath{\subp{\tilde{\bm{\Psi}}}{}{#2}{}{#1}}}
\newrobustcmd{\widetildebmPsiz}[2][]{\ensuremath{\subp{\widetilde{\bm{\Psi}}}{}{#2}{}{#1}}}
\newrobustcmd{\acutebmPsiz}[2][]{\ensuremath{\subp{\acute{\bm{\Psi}}}{}{#2}{}{#1}}}
\newrobustcmd{\gravebmPsiz}[2][]{\ensuremath{\subp{\grave{\bm{\Psi}}}{}{#2}{}{#1}}}
\newrobustcmd{\dotbmPsiz}[2][]{\ensuremath{\subp{\dot{\bm{\Psi}}}{}{#2}{}{#1}}}
\newrobustcmd{\ddotbmPsiz}[2][]{\ensuremath{\subp{\ddot{\bm{\Psi}}}{}{#2}{}{#1}}}
\newrobustcmd{\brevebmPsiz}[2][]{\ensuremath{\subp{\breve{\bm{\Psi}}}{}{#2}{}{#1}}}
\newrobustcmd{\barbmPsiz}[2][]{\ensuremath{\subp{\bar{\bm{\Psi}}}{}{#2}{}{#1}}}
\newrobustcmd{\vecbmPsiz}[2][]{\ensuremath{\subp{\vec{\bm{\Psi}}}{}{#2}{}{#1}}}
\newrobustcmd{\Omegaz}[2][]{\ensuremath{\subp{\Omega}{}{#2}{}{#1}}}
\newrobustcmd{\hatOmegaz}[2][]{\ensuremath{\subp{\hat{\Omega}}{}{#2}{}{#1}}}
\newrobustcmd{\widehatOmegaz}[2][]{\ensuremath{\subp{\widehat{\Omega}}{}{#2}{}{#1}}}
\newrobustcmd{\checkOmegaz}[2][]{\ensuremath{\subp{\check{\Omega}}{}{#2}{}{#1}}}
\newrobustcmd{\tildeOmegaz}[2][]{\ensuremath{\subp{\tilde{\Omega}}{}{#2}{}{#1}}}
\newrobustcmd{\widetildeOmegaz}[2][]{\ensuremath{\subp{\widetilde{\Omega}}{}{#2}{}{#1}}}
\newrobustcmd{\acuteOmegaz}[2][]{\ensuremath{\subp{\acute{\Omega}}{}{#2}{}{#1}}}
\newrobustcmd{\graveOmegaz}[2][]{\ensuremath{\subp{\grave{\Omega}}{}{#2}{}{#1}}}
\newrobustcmd{\dotOmegaz}[2][]{\ensuremath{\subp{\dot{\Omega}}{}{#2}{}{#1}}}
\newrobustcmd{\ddotOmegaz}[2][]{\ensuremath{\subp{\ddot{\Omega}}{}{#2}{}{#1}}}
\newrobustcmd{\breveOmegaz}[2][]{\ensuremath{\subp{\breve{\Omega}}{}{#2}{}{#1}}}
\newrobustcmd{\barOmegaz}[2][]{\ensuremath{\subp{\bar{\Omega}}{}{#2}{}{#1}}}
\newrobustcmd{\vecOmegaz}[2][]{\ensuremath{\subp{\vec{\Omega}}{}{#2}{}{#1}}}
\newrobustcmd{\bmOmegaz}[2][]{\ensuremath{\subp{\bm{\Omega}}{}{#2}{}{#1}}}
\newrobustcmd{\hatbmOmegaz}[2][]{\ensuremath{\subp{\hat{\bm{\Omega}}}{}{#2}{}{#1}}}
\newrobustcmd{\widehatbmOmegaz}[2][]{\ensuremath{\subp{\widehat{\bm{\Omega}}}{}{#2}{}{#1}}}
\newrobustcmd{\checkbmOmegaz}[2][]{\ensuremath{\subp{\check{\bm{\Omega}}}{}{#2}{}{#1}}}
\newrobustcmd{\tildebmOmegaz}[2][]{\ensuremath{\subp{\tilde{\bm{\Omega}}}{}{#2}{}{#1}}}
\newrobustcmd{\widetildebmOmegaz}[2][]{\ensuremath{\subp{\widetilde{\bm{\Omega}}}{}{#2}{}{#1}}}
\newrobustcmd{\acutebmOmegaz}[2][]{\ensuremath{\subp{\acute{\bm{\Omega}}}{}{#2}{}{#1}}}
\newrobustcmd{\gravebmOmegaz}[2][]{\ensuremath{\subp{\grave{\bm{\Omega}}}{}{#2}{}{#1}}}
\newrobustcmd{\dotbmOmegaz}[2][]{\ensuremath{\subp{\dot{\bm{\Omega}}}{}{#2}{}{#1}}}
\newrobustcmd{\ddotbmOmegaz}[2][]{\ensuremath{\subp{\ddot{\bm{\Omega}}}{}{#2}{}{#1}}}
\newrobustcmd{\brevebmOmegaz}[2][]{\ensuremath{\subp{\breve{\bm{\Omega}}}{}{#2}{}{#1}}}
\newrobustcmd{\barbmOmegaz}[2][]{\ensuremath{\subp{\bar{\bm{\Omega}}}{}{#2}{}{#1}}}
\newrobustcmd{\vecbmOmegaz}[2][]{\ensuremath{\subp{\vec{\bm{\Omega}}}{}{#2}{}{#1}}}

\newrobustcmd{\mathfrakaz}[2][]{\ensuremath{\subp{\mathfrak{a}}{}{#2}{}{#1}}}
\newrobustcmd{\hatmathfrakaz}[2][]{\ensuremath{\subp{\hat{\mathfrak{a}}}{}{#2}{}{#1}}}
\newrobustcmd{\widehatmathfrakaz}[2][]{\ensuremath{\subp{\widehat{\mathfrak{a}}}{}{#2}{}{#1}}}
\newrobustcmd{\checkmathfrakaz}[2][]{\ensuremath{\subp{\check{\mathfrak{a}}}{}{#2}{}{#1}}}
\newrobustcmd{\tildemathfrakaz}[2][]{\ensuremath{\subp{\tilde{\mathfrak{a}}}{}{#2}{}{#1}}}
\newrobustcmd{\widetildemathfrakaz}[2][]{\ensuremath{\subp{\widetilde{\mathfrak{a}}}{}{#2}{}{#1}}}
\newrobustcmd{\acutemathfrakaz}[2][]{\ensuremath{\subp{\acute{\mathfrak{a}}}{}{#2}{}{#1}}}
\newrobustcmd{\gravemathfrakaz}[2][]{\ensuremath{\subp{\grave{\mathfrak{a}}}{}{#2}{}{#1}}}
\newrobustcmd{\dotmathfrakaz}[2][]{\ensuremath{\subp{\dot{\mathfrak{a}}}{}{#2}{}{#1}}}
\newrobustcmd{\ddotmathfrakaz}[2][]{\ensuremath{\subp{\ddot{\mathfrak{a}}}{}{#2}{}{#1}}}
\newrobustcmd{\brevemathfrakaz}[2][]{\ensuremath{\subp{\breve{\mathfrak{a}}}{}{#2}{}{#1}}}
\newrobustcmd{\barmathfrakaz}[2][]{\ensuremath{\subp{\bar{\mathfrak{a}}}{}{#2}{}{#1}}}
\newrobustcmd{\vecmathfrakaz}[2][]{\ensuremath{\subp{\vec{\mathfrak{a}}}{}{#2}{}{#1}}}
\newrobustcmd{\bmmathfrakaz}[2][]{\ensuremath{\subp{\bm{\mathfrak{a}}}{}{#2}{}{#1}}}
\newrobustcmd{\hatbmmathfrakaz}[2][]{\ensuremath{\subp{\hat{\bm{\mathfrak{a}}}}{}{#2}{}{#1}}}
\newrobustcmd{\widehatbmmathfrakaz}[2][]{\ensuremath{\subp{\widehat{\bm{\mathfrak{a}}}}{}{#2}{}{#1}}}
\newrobustcmd{\checkbmmathfrakaz}[2][]{\ensuremath{\subp{\check{\bm{\mathfrak{a}}}}{}{#2}{}{#1}}}
\newrobustcmd{\tildebmmathfrakaz}[2][]{\ensuremath{\subp{\tilde{\bm{\mathfrak{a}}}}{}{#2}{}{#1}}}
\newrobustcmd{\widetildebmmathfrakaz}[2][]{\ensuremath{\subp{\widetilde{\bm{\mathfrak{a}}}}{}{#2}{}{#1}}}
\newrobustcmd{\acutebmmathfrakaz}[2][]{\ensuremath{\subp{\acute{\bm{\mathfrak{a}}}}{}{#2}{}{#1}}}
\newrobustcmd{\gravebmmathfrakaz}[2][]{\ensuremath{\subp{\grave{\bm{\mathfrak{a}}}}{}{#2}{}{#1}}}
\newrobustcmd{\dotbmmathfrakaz}[2][]{\ensuremath{\subp{\dot{\bm{\mathfrak{a}}}}{}{#2}{}{#1}}}
\newrobustcmd{\ddotbmmathfrakaz}[2][]{\ensuremath{\subp{\ddot{\bm{\mathfrak{a}}}}{}{#2}{}{#1}}}
\newrobustcmd{\brevebmmathfrakaz}[2][]{\ensuremath{\subp{\breve{\bm{\mathfrak{a}}}}{}{#2}{}{#1}}}
\newrobustcmd{\barbmmathfrakaz}[2][]{\ensuremath{\subp{\bar{\bm{\mathfrak{a}}}}{}{#2}{}{#1}}}
\newrobustcmd{\vecbmmathfrakaz}[2][]{\ensuremath{\subp{\vec{\bm{\mathfrak{a}}}}{}{#2}{}{#1}}}
\newrobustcmd{\mathfrakbz}[2][]{\ensuremath{\subp{\mathfrak{b}}{}{#2}{}{#1}}}
\newrobustcmd{\hatmathfrakbz}[2][]{\ensuremath{\subp{\hat{\mathfrak{b}}}{}{#2}{}{#1}}}
\newrobustcmd{\widehatmathfrakbz}[2][]{\ensuremath{\subp{\widehat{\mathfrak{b}}}{}{#2}{}{#1}}}
\newrobustcmd{\checkmathfrakbz}[2][]{\ensuremath{\subp{\check{\mathfrak{b}}}{}{#2}{}{#1}}}
\newrobustcmd{\tildemathfrakbz}[2][]{\ensuremath{\subp{\tilde{\mathfrak{b}}}{}{#2}{}{#1}}}
\newrobustcmd{\widetildemathfrakbz}[2][]{\ensuremath{\subp{\widetilde{\mathfrak{b}}}{}{#2}{}{#1}}}
\newrobustcmd{\acutemathfrakbz}[2][]{\ensuremath{\subp{\acute{\mathfrak{b}}}{}{#2}{}{#1}}}
\newrobustcmd{\gravemathfrakbz}[2][]{\ensuremath{\subp{\grave{\mathfrak{b}}}{}{#2}{}{#1}}}
\newrobustcmd{\dotmathfrakbz}[2][]{\ensuremath{\subp{\dot{\mathfrak{b}}}{}{#2}{}{#1}}}
\newrobustcmd{\ddotmathfrakbz}[2][]{\ensuremath{\subp{\ddot{\mathfrak{b}}}{}{#2}{}{#1}}}
\newrobustcmd{\brevemathfrakbz}[2][]{\ensuremath{\subp{\breve{\mathfrak{b}}}{}{#2}{}{#1}}}
\newrobustcmd{\barmathfrakbz}[2][]{\ensuremath{\subp{\bar{\mathfrak{b}}}{}{#2}{}{#1}}}
\newrobustcmd{\vecmathfrakbz}[2][]{\ensuremath{\subp{\vec{\mathfrak{b}}}{}{#2}{}{#1}}}
\newrobustcmd{\bmmathfrakbz}[2][]{\ensuremath{\subp{\bm{\mathfrak{b}}}{}{#2}{}{#1}}}
\newrobustcmd{\hatbmmathfrakbz}[2][]{\ensuremath{\subp{\hat{\bm{\mathfrak{b}}}}{}{#2}{}{#1}}}
\newrobustcmd{\widehatbmmathfrakbz}[2][]{\ensuremath{\subp{\widehat{\bm{\mathfrak{b}}}}{}{#2}{}{#1}}}
\newrobustcmd{\checkbmmathfrakbz}[2][]{\ensuremath{\subp{\check{\bm{\mathfrak{b}}}}{}{#2}{}{#1}}}
\newrobustcmd{\tildebmmathfrakbz}[2][]{\ensuremath{\subp{\tilde{\bm{\mathfrak{b}}}}{}{#2}{}{#1}}}
\newrobustcmd{\widetildebmmathfrakbz}[2][]{\ensuremath{\subp{\widetilde{\bm{\mathfrak{b}}}}{}{#2}{}{#1}}}
\newrobustcmd{\acutebmmathfrakbz}[2][]{\ensuremath{\subp{\acute{\bm{\mathfrak{b}}}}{}{#2}{}{#1}}}
\newrobustcmd{\gravebmmathfrakbz}[2][]{\ensuremath{\subp{\grave{\bm{\mathfrak{b}}}}{}{#2}{}{#1}}}
\newrobustcmd{\dotbmmathfrakbz}[2][]{\ensuremath{\subp{\dot{\bm{\mathfrak{b}}}}{}{#2}{}{#1}}}
\newrobustcmd{\ddotbmmathfrakbz}[2][]{\ensuremath{\subp{\ddot{\bm{\mathfrak{b}}}}{}{#2}{}{#1}}}
\newrobustcmd{\brevebmmathfrakbz}[2][]{\ensuremath{\subp{\breve{\bm{\mathfrak{b}}}}{}{#2}{}{#1}}}
\newrobustcmd{\barbmmathfrakbz}[2][]{\ensuremath{\subp{\bar{\bm{\mathfrak{b}}}}{}{#2}{}{#1}}}
\newrobustcmd{\vecbmmathfrakbz}[2][]{\ensuremath{\subp{\vec{\bm{\mathfrak{b}}}}{}{#2}{}{#1}}}
\newrobustcmd{\mathfrakcz}[2][]{\ensuremath{\subp{\mathfrak{c}}{}{#2}{}{#1}}}
\newrobustcmd{\hatmathfrakcz}[2][]{\ensuremath{\subp{\hat{\mathfrak{c}}}{}{#2}{}{#1}}}
\newrobustcmd{\widehatmathfrakcz}[2][]{\ensuremath{\subp{\widehat{\mathfrak{c}}}{}{#2}{}{#1}}}
\newrobustcmd{\checkmathfrakcz}[2][]{\ensuremath{\subp{\check{\mathfrak{c}}}{}{#2}{}{#1}}}
\newrobustcmd{\tildemathfrakcz}[2][]{\ensuremath{\subp{\tilde{\mathfrak{c}}}{}{#2}{}{#1}}}
\newrobustcmd{\widetildemathfrakcz}[2][]{\ensuremath{\subp{\widetilde{\mathfrak{c}}}{}{#2}{}{#1}}}
\newrobustcmd{\acutemathfrakcz}[2][]{\ensuremath{\subp{\acute{\mathfrak{c}}}{}{#2}{}{#1}}}
\newrobustcmd{\gravemathfrakcz}[2][]{\ensuremath{\subp{\grave{\mathfrak{c}}}{}{#2}{}{#1}}}
\newrobustcmd{\dotmathfrakcz}[2][]{\ensuremath{\subp{\dot{\mathfrak{c}}}{}{#2}{}{#1}}}
\newrobustcmd{\ddotmathfrakcz}[2][]{\ensuremath{\subp{\ddot{\mathfrak{c}}}{}{#2}{}{#1}}}
\newrobustcmd{\brevemathfrakcz}[2][]{\ensuremath{\subp{\breve{\mathfrak{c}}}{}{#2}{}{#1}}}
\newrobustcmd{\barmathfrakcz}[2][]{\ensuremath{\subp{\bar{\mathfrak{c}}}{}{#2}{}{#1}}}
\newrobustcmd{\vecmathfrakcz}[2][]{\ensuremath{\subp{\vec{\mathfrak{c}}}{}{#2}{}{#1}}}
\newrobustcmd{\bmmathfrakcz}[2][]{\ensuremath{\subp{\bm{\mathfrak{c}}}{}{#2}{}{#1}}}
\newrobustcmd{\hatbmmathfrakcz}[2][]{\ensuremath{\subp{\hat{\bm{\mathfrak{c}}}}{}{#2}{}{#1}}}
\newrobustcmd{\widehatbmmathfrakcz}[2][]{\ensuremath{\subp{\widehat{\bm{\mathfrak{c}}}}{}{#2}{}{#1}}}
\newrobustcmd{\checkbmmathfrakcz}[2][]{\ensuremath{\subp{\check{\bm{\mathfrak{c}}}}{}{#2}{}{#1}}}
\newrobustcmd{\tildebmmathfrakcz}[2][]{\ensuremath{\subp{\tilde{\bm{\mathfrak{c}}}}{}{#2}{}{#1}}}
\newrobustcmd{\widetildebmmathfrakcz}[2][]{\ensuremath{\subp{\widetilde{\bm{\mathfrak{c}}}}{}{#2}{}{#1}}}
\newrobustcmd{\acutebmmathfrakcz}[2][]{\ensuremath{\subp{\acute{\bm{\mathfrak{c}}}}{}{#2}{}{#1}}}
\newrobustcmd{\gravebmmathfrakcz}[2][]{\ensuremath{\subp{\grave{\bm{\mathfrak{c}}}}{}{#2}{}{#1}}}
\newrobustcmd{\dotbmmathfrakcz}[2][]{\ensuremath{\subp{\dot{\bm{\mathfrak{c}}}}{}{#2}{}{#1}}}
\newrobustcmd{\ddotbmmathfrakcz}[2][]{\ensuremath{\subp{\ddot{\bm{\mathfrak{c}}}}{}{#2}{}{#1}}}
\newrobustcmd{\brevebmmathfrakcz}[2][]{\ensuremath{\subp{\breve{\bm{\mathfrak{c}}}}{}{#2}{}{#1}}}
\newrobustcmd{\barbmmathfrakcz}[2][]{\ensuremath{\subp{\bar{\bm{\mathfrak{c}}}}{}{#2}{}{#1}}}
\newrobustcmd{\vecbmmathfrakcz}[2][]{\ensuremath{\subp{\vec{\bm{\mathfrak{c}}}}{}{#2}{}{#1}}}
\newrobustcmd{\mathfrakdz}[2][]{\ensuremath{\subp{\mathfrak{d}}{}{#2}{}{#1}}}
\newrobustcmd{\hatmathfrakdz}[2][]{\ensuremath{\subp{\hat{\mathfrak{d}}}{}{#2}{}{#1}}}
\newrobustcmd{\widehatmathfrakdz}[2][]{\ensuremath{\subp{\widehat{\mathfrak{d}}}{}{#2}{}{#1}}}
\newrobustcmd{\checkmathfrakdz}[2][]{\ensuremath{\subp{\check{\mathfrak{d}}}{}{#2}{}{#1}}}
\newrobustcmd{\tildemathfrakdz}[2][]{\ensuremath{\subp{\tilde{\mathfrak{d}}}{}{#2}{}{#1}}}
\newrobustcmd{\widetildemathfrakdz}[2][]{\ensuremath{\subp{\widetilde{\mathfrak{d}}}{}{#2}{}{#1}}}
\newrobustcmd{\acutemathfrakdz}[2][]{\ensuremath{\subp{\acute{\mathfrak{d}}}{}{#2}{}{#1}}}
\newrobustcmd{\gravemathfrakdz}[2][]{\ensuremath{\subp{\grave{\mathfrak{d}}}{}{#2}{}{#1}}}
\newrobustcmd{\dotmathfrakdz}[2][]{\ensuremath{\subp{\dot{\mathfrak{d}}}{}{#2}{}{#1}}}
\newrobustcmd{\ddotmathfrakdz}[2][]{\ensuremath{\subp{\ddot{\mathfrak{d}}}{}{#2}{}{#1}}}
\newrobustcmd{\brevemathfrakdz}[2][]{\ensuremath{\subp{\breve{\mathfrak{d}}}{}{#2}{}{#1}}}
\newrobustcmd{\barmathfrakdz}[2][]{\ensuremath{\subp{\bar{\mathfrak{d}}}{}{#2}{}{#1}}}
\newrobustcmd{\vecmathfrakdz}[2][]{\ensuremath{\subp{\vec{\mathfrak{d}}}{}{#2}{}{#1}}}
\newrobustcmd{\bmmathfrakdz}[2][]{\ensuremath{\subp{\bm{\mathfrak{d}}}{}{#2}{}{#1}}}
\newrobustcmd{\hatbmmathfrakdz}[2][]{\ensuremath{\subp{\hat{\bm{\mathfrak{d}}}}{}{#2}{}{#1}}}
\newrobustcmd{\widehatbmmathfrakdz}[2][]{\ensuremath{\subp{\widehat{\bm{\mathfrak{d}}}}{}{#2}{}{#1}}}
\newrobustcmd{\checkbmmathfrakdz}[2][]{\ensuremath{\subp{\check{\bm{\mathfrak{d}}}}{}{#2}{}{#1}}}
\newrobustcmd{\tildebmmathfrakdz}[2][]{\ensuremath{\subp{\tilde{\bm{\mathfrak{d}}}}{}{#2}{}{#1}}}
\newrobustcmd{\widetildebmmathfrakdz}[2][]{\ensuremath{\subp{\widetilde{\bm{\mathfrak{d}}}}{}{#2}{}{#1}}}
\newrobustcmd{\acutebmmathfrakdz}[2][]{\ensuremath{\subp{\acute{\bm{\mathfrak{d}}}}{}{#2}{}{#1}}}
\newrobustcmd{\gravebmmathfrakdz}[2][]{\ensuremath{\subp{\grave{\bm{\mathfrak{d}}}}{}{#2}{}{#1}}}
\newrobustcmd{\dotbmmathfrakdz}[2][]{\ensuremath{\subp{\dot{\bm{\mathfrak{d}}}}{}{#2}{}{#1}}}
\newrobustcmd{\ddotbmmathfrakdz}[2][]{\ensuremath{\subp{\ddot{\bm{\mathfrak{d}}}}{}{#2}{}{#1}}}
\newrobustcmd{\brevebmmathfrakdz}[2][]{\ensuremath{\subp{\breve{\bm{\mathfrak{d}}}}{}{#2}{}{#1}}}
\newrobustcmd{\barbmmathfrakdz}[2][]{\ensuremath{\subp{\bar{\bm{\mathfrak{d}}}}{}{#2}{}{#1}}}
\newrobustcmd{\vecbmmathfrakdz}[2][]{\ensuremath{\subp{\vec{\bm{\mathfrak{d}}}}{}{#2}{}{#1}}}
\newrobustcmd{\mathfrakez}[2][]{\ensuremath{\subp{\mathfrak{e}}{}{#2}{}{#1}}}
\newrobustcmd{\hatmathfrakez}[2][]{\ensuremath{\subp{\hat{\mathfrak{e}}}{}{#2}{}{#1}}}
\newrobustcmd{\widehatmathfrakez}[2][]{\ensuremath{\subp{\widehat{\mathfrak{e}}}{}{#2}{}{#1}}}
\newrobustcmd{\checkmathfrakez}[2][]{\ensuremath{\subp{\check{\mathfrak{e}}}{}{#2}{}{#1}}}
\newrobustcmd{\tildemathfrakez}[2][]{\ensuremath{\subp{\tilde{\mathfrak{e}}}{}{#2}{}{#1}}}
\newrobustcmd{\widetildemathfrakez}[2][]{\ensuremath{\subp{\widetilde{\mathfrak{e}}}{}{#2}{}{#1}}}
\newrobustcmd{\acutemathfrakez}[2][]{\ensuremath{\subp{\acute{\mathfrak{e}}}{}{#2}{}{#1}}}
\newrobustcmd{\gravemathfrakez}[2][]{\ensuremath{\subp{\grave{\mathfrak{e}}}{}{#2}{}{#1}}}
\newrobustcmd{\dotmathfrakez}[2][]{\ensuremath{\subp{\dot{\mathfrak{e}}}{}{#2}{}{#1}}}
\newrobustcmd{\ddotmathfrakez}[2][]{\ensuremath{\subp{\ddot{\mathfrak{e}}}{}{#2}{}{#1}}}
\newrobustcmd{\brevemathfrakez}[2][]{\ensuremath{\subp{\breve{\mathfrak{e}}}{}{#2}{}{#1}}}
\newrobustcmd{\barmathfrakez}[2][]{\ensuremath{\subp{\bar{\mathfrak{e}}}{}{#2}{}{#1}}}
\newrobustcmd{\vecmathfrakez}[2][]{\ensuremath{\subp{\vec{\mathfrak{e}}}{}{#2}{}{#1}}}
\newrobustcmd{\bmmathfrakez}[2][]{\ensuremath{\subp{\bm{\mathfrak{e}}}{}{#2}{}{#1}}}
\newrobustcmd{\hatbmmathfrakez}[2][]{\ensuremath{\subp{\hat{\bm{\mathfrak{e}}}}{}{#2}{}{#1}}}
\newrobustcmd{\widehatbmmathfrakez}[2][]{\ensuremath{\subp{\widehat{\bm{\mathfrak{e}}}}{}{#2}{}{#1}}}
\newrobustcmd{\checkbmmathfrakez}[2][]{\ensuremath{\subp{\check{\bm{\mathfrak{e}}}}{}{#2}{}{#1}}}
\newrobustcmd{\tildebmmathfrakez}[2][]{\ensuremath{\subp{\tilde{\bm{\mathfrak{e}}}}{}{#2}{}{#1}}}
\newrobustcmd{\widetildebmmathfrakez}[2][]{\ensuremath{\subp{\widetilde{\bm{\mathfrak{e}}}}{}{#2}{}{#1}}}
\newrobustcmd{\acutebmmathfrakez}[2][]{\ensuremath{\subp{\acute{\bm{\mathfrak{e}}}}{}{#2}{}{#1}}}
\newrobustcmd{\gravebmmathfrakez}[2][]{\ensuremath{\subp{\grave{\bm{\mathfrak{e}}}}{}{#2}{}{#1}}}
\newrobustcmd{\dotbmmathfrakez}[2][]{\ensuremath{\subp{\dot{\bm{\mathfrak{e}}}}{}{#2}{}{#1}}}
\newrobustcmd{\ddotbmmathfrakez}[2][]{\ensuremath{\subp{\ddot{\bm{\mathfrak{e}}}}{}{#2}{}{#1}}}
\newrobustcmd{\brevebmmathfrakez}[2][]{\ensuremath{\subp{\breve{\bm{\mathfrak{e}}}}{}{#2}{}{#1}}}
\newrobustcmd{\barbmmathfrakez}[2][]{\ensuremath{\subp{\bar{\bm{\mathfrak{e}}}}{}{#2}{}{#1}}}
\newrobustcmd{\vecbmmathfrakez}[2][]{\ensuremath{\subp{\vec{\bm{\mathfrak{e}}}}{}{#2}{}{#1}}}
\newrobustcmd{\mathfrakfz}[2][]{\ensuremath{\subp{\mathfrak{f}}{}{#2}{}{#1}}}
\newrobustcmd{\hatmathfrakfz}[2][]{\ensuremath{\subp{\hat{\mathfrak{f}}}{}{#2}{}{#1}}}
\newrobustcmd{\widehatmathfrakfz}[2][]{\ensuremath{\subp{\widehat{\mathfrak{f}}}{}{#2}{}{#1}}}
\newrobustcmd{\checkmathfrakfz}[2][]{\ensuremath{\subp{\check{\mathfrak{f}}}{}{#2}{}{#1}}}
\newrobustcmd{\tildemathfrakfz}[2][]{\ensuremath{\subp{\tilde{\mathfrak{f}}}{}{#2}{}{#1}}}
\newrobustcmd{\widetildemathfrakfz}[2][]{\ensuremath{\subp{\widetilde{\mathfrak{f}}}{}{#2}{}{#1}}}
\newrobustcmd{\acutemathfrakfz}[2][]{\ensuremath{\subp{\acute{\mathfrak{f}}}{}{#2}{}{#1}}}
\newrobustcmd{\gravemathfrakfz}[2][]{\ensuremath{\subp{\grave{\mathfrak{f}}}{}{#2}{}{#1}}}
\newrobustcmd{\dotmathfrakfz}[2][]{\ensuremath{\subp{\dot{\mathfrak{f}}}{}{#2}{}{#1}}}
\newrobustcmd{\ddotmathfrakfz}[2][]{\ensuremath{\subp{\ddot{\mathfrak{f}}}{}{#2}{}{#1}}}
\newrobustcmd{\brevemathfrakfz}[2][]{\ensuremath{\subp{\breve{\mathfrak{f}}}{}{#2}{}{#1}}}
\newrobustcmd{\barmathfrakfz}[2][]{\ensuremath{\subp{\bar{\mathfrak{f}}}{}{#2}{}{#1}}}
\newrobustcmd{\vecmathfrakfz}[2][]{\ensuremath{\subp{\vec{\mathfrak{f}}}{}{#2}{}{#1}}}
\newrobustcmd{\bmmathfrakfz}[2][]{\ensuremath{\subp{\bm{\mathfrak{f}}}{}{#2}{}{#1}}}
\newrobustcmd{\hatbmmathfrakfz}[2][]{\ensuremath{\subp{\hat{\bm{\mathfrak{f}}}}{}{#2}{}{#1}}}
\newrobustcmd{\widehatbmmathfrakfz}[2][]{\ensuremath{\subp{\widehat{\bm{\mathfrak{f}}}}{}{#2}{}{#1}}}
\newrobustcmd{\checkbmmathfrakfz}[2][]{\ensuremath{\subp{\check{\bm{\mathfrak{f}}}}{}{#2}{}{#1}}}
\newrobustcmd{\tildebmmathfrakfz}[2][]{\ensuremath{\subp{\tilde{\bm{\mathfrak{f}}}}{}{#2}{}{#1}}}
\newrobustcmd{\widetildebmmathfrakfz}[2][]{\ensuremath{\subp{\widetilde{\bm{\mathfrak{f}}}}{}{#2}{}{#1}}}
\newrobustcmd{\acutebmmathfrakfz}[2][]{\ensuremath{\subp{\acute{\bm{\mathfrak{f}}}}{}{#2}{}{#1}}}
\newrobustcmd{\gravebmmathfrakfz}[2][]{\ensuremath{\subp{\grave{\bm{\mathfrak{f}}}}{}{#2}{}{#1}}}
\newrobustcmd{\dotbmmathfrakfz}[2][]{\ensuremath{\subp{\dot{\bm{\mathfrak{f}}}}{}{#2}{}{#1}}}
\newrobustcmd{\ddotbmmathfrakfz}[2][]{\ensuremath{\subp{\ddot{\bm{\mathfrak{f}}}}{}{#2}{}{#1}}}
\newrobustcmd{\brevebmmathfrakfz}[2][]{\ensuremath{\subp{\breve{\bm{\mathfrak{f}}}}{}{#2}{}{#1}}}
\newrobustcmd{\barbmmathfrakfz}[2][]{\ensuremath{\subp{\bar{\bm{\mathfrak{f}}}}{}{#2}{}{#1}}}
\newrobustcmd{\vecbmmathfrakfz}[2][]{\ensuremath{\subp{\vec{\bm{\mathfrak{f}}}}{}{#2}{}{#1}}}
\newrobustcmd{\mathfrakgz}[2][]{\ensuremath{\subp{\mathfrak{g}}{}{#2}{}{#1}}}
\newrobustcmd{\hatmathfrakgz}[2][]{\ensuremath{\subp{\hat{\mathfrak{g}}}{}{#2}{}{#1}}}
\newrobustcmd{\widehatmathfrakgz}[2][]{\ensuremath{\subp{\widehat{\mathfrak{g}}}{}{#2}{}{#1}}}
\newrobustcmd{\checkmathfrakgz}[2][]{\ensuremath{\subp{\check{\mathfrak{g}}}{}{#2}{}{#1}}}
\newrobustcmd{\tildemathfrakgz}[2][]{\ensuremath{\subp{\tilde{\mathfrak{g}}}{}{#2}{}{#1}}}
\newrobustcmd{\widetildemathfrakgz}[2][]{\ensuremath{\subp{\widetilde{\mathfrak{g}}}{}{#2}{}{#1}}}
\newrobustcmd{\acutemathfrakgz}[2][]{\ensuremath{\subp{\acute{\mathfrak{g}}}{}{#2}{}{#1}}}
\newrobustcmd{\gravemathfrakgz}[2][]{\ensuremath{\subp{\grave{\mathfrak{g}}}{}{#2}{}{#1}}}
\newrobustcmd{\dotmathfrakgz}[2][]{\ensuremath{\subp{\dot{\mathfrak{g}}}{}{#2}{}{#1}}}
\newrobustcmd{\ddotmathfrakgz}[2][]{\ensuremath{\subp{\ddot{\mathfrak{g}}}{}{#2}{}{#1}}}
\newrobustcmd{\brevemathfrakgz}[2][]{\ensuremath{\subp{\breve{\mathfrak{g}}}{}{#2}{}{#1}}}
\newrobustcmd{\barmathfrakgz}[2][]{\ensuremath{\subp{\bar{\mathfrak{g}}}{}{#2}{}{#1}}}
\newrobustcmd{\vecmathfrakgz}[2][]{\ensuremath{\subp{\vec{\mathfrak{g}}}{}{#2}{}{#1}}}
\newrobustcmd{\bmmathfrakgz}[2][]{\ensuremath{\subp{\bm{\mathfrak{g}}}{}{#2}{}{#1}}}
\newrobustcmd{\hatbmmathfrakgz}[2][]{\ensuremath{\subp{\hat{\bm{\mathfrak{g}}}}{}{#2}{}{#1}}}
\newrobustcmd{\widehatbmmathfrakgz}[2][]{\ensuremath{\subp{\widehat{\bm{\mathfrak{g}}}}{}{#2}{}{#1}}}
\newrobustcmd{\checkbmmathfrakgz}[2][]{\ensuremath{\subp{\check{\bm{\mathfrak{g}}}}{}{#2}{}{#1}}}
\newrobustcmd{\tildebmmathfrakgz}[2][]{\ensuremath{\subp{\tilde{\bm{\mathfrak{g}}}}{}{#2}{}{#1}}}
\newrobustcmd{\widetildebmmathfrakgz}[2][]{\ensuremath{\subp{\widetilde{\bm{\mathfrak{g}}}}{}{#2}{}{#1}}}
\newrobustcmd{\acutebmmathfrakgz}[2][]{\ensuremath{\subp{\acute{\bm{\mathfrak{g}}}}{}{#2}{}{#1}}}
\newrobustcmd{\gravebmmathfrakgz}[2][]{\ensuremath{\subp{\grave{\bm{\mathfrak{g}}}}{}{#2}{}{#1}}}
\newrobustcmd{\dotbmmathfrakgz}[2][]{\ensuremath{\subp{\dot{\bm{\mathfrak{g}}}}{}{#2}{}{#1}}}
\newrobustcmd{\ddotbmmathfrakgz}[2][]{\ensuremath{\subp{\ddot{\bm{\mathfrak{g}}}}{}{#2}{}{#1}}}
\newrobustcmd{\brevebmmathfrakgz}[2][]{\ensuremath{\subp{\breve{\bm{\mathfrak{g}}}}{}{#2}{}{#1}}}
\newrobustcmd{\barbmmathfrakgz}[2][]{\ensuremath{\subp{\bar{\bm{\mathfrak{g}}}}{}{#2}{}{#1}}}
\newrobustcmd{\vecbmmathfrakgz}[2][]{\ensuremath{\subp{\vec{\bm{\mathfrak{g}}}}{}{#2}{}{#1}}}
\newrobustcmd{\mathfrakhz}[2][]{\ensuremath{\subp{\mathfrak{h}}{}{#2}{}{#1}}}
\newrobustcmd{\hatmathfrakhz}[2][]{\ensuremath{\subp{\hat{\mathfrak{h}}}{}{#2}{}{#1}}}
\newrobustcmd{\widehatmathfrakhz}[2][]{\ensuremath{\subp{\widehat{\mathfrak{h}}}{}{#2}{}{#1}}}
\newrobustcmd{\checkmathfrakhz}[2][]{\ensuremath{\subp{\check{\mathfrak{h}}}{}{#2}{}{#1}}}
\newrobustcmd{\tildemathfrakhz}[2][]{\ensuremath{\subp{\tilde{\mathfrak{h}}}{}{#2}{}{#1}}}
\newrobustcmd{\widetildemathfrakhz}[2][]{\ensuremath{\subp{\widetilde{\mathfrak{h}}}{}{#2}{}{#1}}}
\newrobustcmd{\acutemathfrakhz}[2][]{\ensuremath{\subp{\acute{\mathfrak{h}}}{}{#2}{}{#1}}}
\newrobustcmd{\gravemathfrakhz}[2][]{\ensuremath{\subp{\grave{\mathfrak{h}}}{}{#2}{}{#1}}}
\newrobustcmd{\dotmathfrakhz}[2][]{\ensuremath{\subp{\dot{\mathfrak{h}}}{}{#2}{}{#1}}}
\newrobustcmd{\ddotmathfrakhz}[2][]{\ensuremath{\subp{\ddot{\mathfrak{h}}}{}{#2}{}{#1}}}
\newrobustcmd{\brevemathfrakhz}[2][]{\ensuremath{\subp{\breve{\mathfrak{h}}}{}{#2}{}{#1}}}
\newrobustcmd{\barmathfrakhz}[2][]{\ensuremath{\subp{\bar{\mathfrak{h}}}{}{#2}{}{#1}}}
\newrobustcmd{\vecmathfrakhz}[2][]{\ensuremath{\subp{\vec{\mathfrak{h}}}{}{#2}{}{#1}}}
\newrobustcmd{\bmmathfrakhz}[2][]{\ensuremath{\subp{\bm{\mathfrak{h}}}{}{#2}{}{#1}}}
\newrobustcmd{\hatbmmathfrakhz}[2][]{\ensuremath{\subp{\hat{\bm{\mathfrak{h}}}}{}{#2}{}{#1}}}
\newrobustcmd{\widehatbmmathfrakhz}[2][]{\ensuremath{\subp{\widehat{\bm{\mathfrak{h}}}}{}{#2}{}{#1}}}
\newrobustcmd{\checkbmmathfrakhz}[2][]{\ensuremath{\subp{\check{\bm{\mathfrak{h}}}}{}{#2}{}{#1}}}
\newrobustcmd{\tildebmmathfrakhz}[2][]{\ensuremath{\subp{\tilde{\bm{\mathfrak{h}}}}{}{#2}{}{#1}}}
\newrobustcmd{\widetildebmmathfrakhz}[2][]{\ensuremath{\subp{\widetilde{\bm{\mathfrak{h}}}}{}{#2}{}{#1}}}
\newrobustcmd{\acutebmmathfrakhz}[2][]{\ensuremath{\subp{\acute{\bm{\mathfrak{h}}}}{}{#2}{}{#1}}}
\newrobustcmd{\gravebmmathfrakhz}[2][]{\ensuremath{\subp{\grave{\bm{\mathfrak{h}}}}{}{#2}{}{#1}}}
\newrobustcmd{\dotbmmathfrakhz}[2][]{\ensuremath{\subp{\dot{\bm{\mathfrak{h}}}}{}{#2}{}{#1}}}
\newrobustcmd{\ddotbmmathfrakhz}[2][]{\ensuremath{\subp{\ddot{\bm{\mathfrak{h}}}}{}{#2}{}{#1}}}
\newrobustcmd{\brevebmmathfrakhz}[2][]{\ensuremath{\subp{\breve{\bm{\mathfrak{h}}}}{}{#2}{}{#1}}}
\newrobustcmd{\barbmmathfrakhz}[2][]{\ensuremath{\subp{\bar{\bm{\mathfrak{h}}}}{}{#2}{}{#1}}}
\newrobustcmd{\vecbmmathfrakhz}[2][]{\ensuremath{\subp{\vec{\bm{\mathfrak{h}}}}{}{#2}{}{#1}}}
\newrobustcmd{\mathfrakiz}[2][]{\ensuremath{\subp{\mathfrak{i}}{}{#2}{}{#1}}}
\newrobustcmd{\hatmathfrakiz}[2][]{\ensuremath{\subp{\hat{\mathfrak{i}}}{}{#2}{}{#1}}}
\newrobustcmd{\widehatmathfrakiz}[2][]{\ensuremath{\subp{\widehat{\mathfrak{i}}}{}{#2}{}{#1}}}
\newrobustcmd{\checkmathfrakiz}[2][]{\ensuremath{\subp{\check{\mathfrak{i}}}{}{#2}{}{#1}}}
\newrobustcmd{\tildemathfrakiz}[2][]{\ensuremath{\subp{\tilde{\mathfrak{i}}}{}{#2}{}{#1}}}
\newrobustcmd{\widetildemathfrakiz}[2][]{\ensuremath{\subp{\widetilde{\mathfrak{i}}}{}{#2}{}{#1}}}
\newrobustcmd{\acutemathfrakiz}[2][]{\ensuremath{\subp{\acute{\mathfrak{i}}}{}{#2}{}{#1}}}
\newrobustcmd{\gravemathfrakiz}[2][]{\ensuremath{\subp{\grave{\mathfrak{i}}}{}{#2}{}{#1}}}
\newrobustcmd{\dotmathfrakiz}[2][]{\ensuremath{\subp{\dot{\mathfrak{i}}}{}{#2}{}{#1}}}
\newrobustcmd{\ddotmathfrakiz}[2][]{\ensuremath{\subp{\ddot{\mathfrak{i}}}{}{#2}{}{#1}}}
\newrobustcmd{\brevemathfrakiz}[2][]{\ensuremath{\subp{\breve{\mathfrak{i}}}{}{#2}{}{#1}}}
\newrobustcmd{\barmathfrakiz}[2][]{\ensuremath{\subp{\bar{\mathfrak{i}}}{}{#2}{}{#1}}}
\newrobustcmd{\vecmathfrakiz}[2][]{\ensuremath{\subp{\vec{\mathfrak{i}}}{}{#2}{}{#1}}}
\newrobustcmd{\bmmathfrakiz}[2][]{\ensuremath{\subp{\bm{\mathfrak{i}}}{}{#2}{}{#1}}}
\newrobustcmd{\hatbmmathfrakiz}[2][]{\ensuremath{\subp{\hat{\bm{\mathfrak{i}}}}{}{#2}{}{#1}}}
\newrobustcmd{\widehatbmmathfrakiz}[2][]{\ensuremath{\subp{\widehat{\bm{\mathfrak{i}}}}{}{#2}{}{#1}}}
\newrobustcmd{\checkbmmathfrakiz}[2][]{\ensuremath{\subp{\check{\bm{\mathfrak{i}}}}{}{#2}{}{#1}}}
\newrobustcmd{\tildebmmathfrakiz}[2][]{\ensuremath{\subp{\tilde{\bm{\mathfrak{i}}}}{}{#2}{}{#1}}}
\newrobustcmd{\widetildebmmathfrakiz}[2][]{\ensuremath{\subp{\widetilde{\bm{\mathfrak{i}}}}{}{#2}{}{#1}}}
\newrobustcmd{\acutebmmathfrakiz}[2][]{\ensuremath{\subp{\acute{\bm{\mathfrak{i}}}}{}{#2}{}{#1}}}
\newrobustcmd{\gravebmmathfrakiz}[2][]{\ensuremath{\subp{\grave{\bm{\mathfrak{i}}}}{}{#2}{}{#1}}}
\newrobustcmd{\dotbmmathfrakiz}[2][]{\ensuremath{\subp{\dot{\bm{\mathfrak{i}}}}{}{#2}{}{#1}}}
\newrobustcmd{\ddotbmmathfrakiz}[2][]{\ensuremath{\subp{\ddot{\bm{\mathfrak{i}}}}{}{#2}{}{#1}}}
\newrobustcmd{\brevebmmathfrakiz}[2][]{\ensuremath{\subp{\breve{\bm{\mathfrak{i}}}}{}{#2}{}{#1}}}
\newrobustcmd{\barbmmathfrakiz}[2][]{\ensuremath{\subp{\bar{\bm{\mathfrak{i}}}}{}{#2}{}{#1}}}
\newrobustcmd{\vecbmmathfrakiz}[2][]{\ensuremath{\subp{\vec{\bm{\mathfrak{i}}}}{}{#2}{}{#1}}}
\newrobustcmd{\mathfrakjz}[2][]{\ensuremath{\subp{\mathfrak{j}}{}{#2}{}{#1}}}
\newrobustcmd{\hatmathfrakjz}[2][]{\ensuremath{\subp{\hat{\mathfrak{j}}}{}{#2}{}{#1}}}
\newrobustcmd{\widehatmathfrakjz}[2][]{\ensuremath{\subp{\widehat{\mathfrak{j}}}{}{#2}{}{#1}}}
\newrobustcmd{\checkmathfrakjz}[2][]{\ensuremath{\subp{\check{\mathfrak{j}}}{}{#2}{}{#1}}}
\newrobustcmd{\tildemathfrakjz}[2][]{\ensuremath{\subp{\tilde{\mathfrak{j}}}{}{#2}{}{#1}}}
\newrobustcmd{\widetildemathfrakjz}[2][]{\ensuremath{\subp{\widetilde{\mathfrak{j}}}{}{#2}{}{#1}}}
\newrobustcmd{\acutemathfrakjz}[2][]{\ensuremath{\subp{\acute{\mathfrak{j}}}{}{#2}{}{#1}}}
\newrobustcmd{\gravemathfrakjz}[2][]{\ensuremath{\subp{\grave{\mathfrak{j}}}{}{#2}{}{#1}}}
\newrobustcmd{\dotmathfrakjz}[2][]{\ensuremath{\subp{\dot{\mathfrak{j}}}{}{#2}{}{#1}}}
\newrobustcmd{\ddotmathfrakjz}[2][]{\ensuremath{\subp{\ddot{\mathfrak{j}}}{}{#2}{}{#1}}}
\newrobustcmd{\brevemathfrakjz}[2][]{\ensuremath{\subp{\breve{\mathfrak{j}}}{}{#2}{}{#1}}}
\newrobustcmd{\barmathfrakjz}[2][]{\ensuremath{\subp{\bar{\mathfrak{j}}}{}{#2}{}{#1}}}
\newrobustcmd{\vecmathfrakjz}[2][]{\ensuremath{\subp{\vec{\mathfrak{j}}}{}{#2}{}{#1}}}
\newrobustcmd{\bmmathfrakjz}[2][]{\ensuremath{\subp{\bm{\mathfrak{j}}}{}{#2}{}{#1}}}
\newrobustcmd{\hatbmmathfrakjz}[2][]{\ensuremath{\subp{\hat{\bm{\mathfrak{j}}}}{}{#2}{}{#1}}}
\newrobustcmd{\widehatbmmathfrakjz}[2][]{\ensuremath{\subp{\widehat{\bm{\mathfrak{j}}}}{}{#2}{}{#1}}}
\newrobustcmd{\checkbmmathfrakjz}[2][]{\ensuremath{\subp{\check{\bm{\mathfrak{j}}}}{}{#2}{}{#1}}}
\newrobustcmd{\tildebmmathfrakjz}[2][]{\ensuremath{\subp{\tilde{\bm{\mathfrak{j}}}}{}{#2}{}{#1}}}
\newrobustcmd{\widetildebmmathfrakjz}[2][]{\ensuremath{\subp{\widetilde{\bm{\mathfrak{j}}}}{}{#2}{}{#1}}}
\newrobustcmd{\acutebmmathfrakjz}[2][]{\ensuremath{\subp{\acute{\bm{\mathfrak{j}}}}{}{#2}{}{#1}}}
\newrobustcmd{\gravebmmathfrakjz}[2][]{\ensuremath{\subp{\grave{\bm{\mathfrak{j}}}}{}{#2}{}{#1}}}
\newrobustcmd{\dotbmmathfrakjz}[2][]{\ensuremath{\subp{\dot{\bm{\mathfrak{j}}}}{}{#2}{}{#1}}}
\newrobustcmd{\ddotbmmathfrakjz}[2][]{\ensuremath{\subp{\ddot{\bm{\mathfrak{j}}}}{}{#2}{}{#1}}}
\newrobustcmd{\brevebmmathfrakjz}[2][]{\ensuremath{\subp{\breve{\bm{\mathfrak{j}}}}{}{#2}{}{#1}}}
\newrobustcmd{\barbmmathfrakjz}[2][]{\ensuremath{\subp{\bar{\bm{\mathfrak{j}}}}{}{#2}{}{#1}}}
\newrobustcmd{\vecbmmathfrakjz}[2][]{\ensuremath{\subp{\vec{\bm{\mathfrak{j}}}}{}{#2}{}{#1}}}
\newrobustcmd{\mathfrakkz}[2][]{\ensuremath{\subp{\mathfrak{k}}{}{#2}{}{#1}}}
\newrobustcmd{\hatmathfrakkz}[2][]{\ensuremath{\subp{\hat{\mathfrak{k}}}{}{#2}{}{#1}}}
\newrobustcmd{\widehatmathfrakkz}[2][]{\ensuremath{\subp{\widehat{\mathfrak{k}}}{}{#2}{}{#1}}}
\newrobustcmd{\checkmathfrakkz}[2][]{\ensuremath{\subp{\check{\mathfrak{k}}}{}{#2}{}{#1}}}
\newrobustcmd{\tildemathfrakkz}[2][]{\ensuremath{\subp{\tilde{\mathfrak{k}}}{}{#2}{}{#1}}}
\newrobustcmd{\widetildemathfrakkz}[2][]{\ensuremath{\subp{\widetilde{\mathfrak{k}}}{}{#2}{}{#1}}}
\newrobustcmd{\acutemathfrakkz}[2][]{\ensuremath{\subp{\acute{\mathfrak{k}}}{}{#2}{}{#1}}}
\newrobustcmd{\gravemathfrakkz}[2][]{\ensuremath{\subp{\grave{\mathfrak{k}}}{}{#2}{}{#1}}}
\newrobustcmd{\dotmathfrakkz}[2][]{\ensuremath{\subp{\dot{\mathfrak{k}}}{}{#2}{}{#1}}}
\newrobustcmd{\ddotmathfrakkz}[2][]{\ensuremath{\subp{\ddot{\mathfrak{k}}}{}{#2}{}{#1}}}
\newrobustcmd{\brevemathfrakkz}[2][]{\ensuremath{\subp{\breve{\mathfrak{k}}}{}{#2}{}{#1}}}
\newrobustcmd{\barmathfrakkz}[2][]{\ensuremath{\subp{\bar{\mathfrak{k}}}{}{#2}{}{#1}}}
\newrobustcmd{\vecmathfrakkz}[2][]{\ensuremath{\subp{\vec{\mathfrak{k}}}{}{#2}{}{#1}}}
\newrobustcmd{\bmmathfrakkz}[2][]{\ensuremath{\subp{\bm{\mathfrak{k}}}{}{#2}{}{#1}}}
\newrobustcmd{\hatbmmathfrakkz}[2][]{\ensuremath{\subp{\hat{\bm{\mathfrak{k}}}}{}{#2}{}{#1}}}
\newrobustcmd{\widehatbmmathfrakkz}[2][]{\ensuremath{\subp{\widehat{\bm{\mathfrak{k}}}}{}{#2}{}{#1}}}
\newrobustcmd{\checkbmmathfrakkz}[2][]{\ensuremath{\subp{\check{\bm{\mathfrak{k}}}}{}{#2}{}{#1}}}
\newrobustcmd{\tildebmmathfrakkz}[2][]{\ensuremath{\subp{\tilde{\bm{\mathfrak{k}}}}{}{#2}{}{#1}}}
\newrobustcmd{\widetildebmmathfrakkz}[2][]{\ensuremath{\subp{\widetilde{\bm{\mathfrak{k}}}}{}{#2}{}{#1}}}
\newrobustcmd{\acutebmmathfrakkz}[2][]{\ensuremath{\subp{\acute{\bm{\mathfrak{k}}}}{}{#2}{}{#1}}}
\newrobustcmd{\gravebmmathfrakkz}[2][]{\ensuremath{\subp{\grave{\bm{\mathfrak{k}}}}{}{#2}{}{#1}}}
\newrobustcmd{\dotbmmathfrakkz}[2][]{\ensuremath{\subp{\dot{\bm{\mathfrak{k}}}}{}{#2}{}{#1}}}
\newrobustcmd{\ddotbmmathfrakkz}[2][]{\ensuremath{\subp{\ddot{\bm{\mathfrak{k}}}}{}{#2}{}{#1}}}
\newrobustcmd{\brevebmmathfrakkz}[2][]{\ensuremath{\subp{\breve{\bm{\mathfrak{k}}}}{}{#2}{}{#1}}}
\newrobustcmd{\barbmmathfrakkz}[2][]{\ensuremath{\subp{\bar{\bm{\mathfrak{k}}}}{}{#2}{}{#1}}}
\newrobustcmd{\vecbmmathfrakkz}[2][]{\ensuremath{\subp{\vec{\bm{\mathfrak{k}}}}{}{#2}{}{#1}}}
\newrobustcmd{\mathfraklz}[2][]{\ensuremath{\subp{\mathfrak{l}}{}{#2}{}{#1}}}
\newrobustcmd{\hatmathfraklz}[2][]{\ensuremath{\subp{\hat{\mathfrak{l}}}{}{#2}{}{#1}}}
\newrobustcmd{\widehatmathfraklz}[2][]{\ensuremath{\subp{\widehat{\mathfrak{l}}}{}{#2}{}{#1}}}
\newrobustcmd{\checkmathfraklz}[2][]{\ensuremath{\subp{\check{\mathfrak{l}}}{}{#2}{}{#1}}}
\newrobustcmd{\tildemathfraklz}[2][]{\ensuremath{\subp{\tilde{\mathfrak{l}}}{}{#2}{}{#1}}}
\newrobustcmd{\widetildemathfraklz}[2][]{\ensuremath{\subp{\widetilde{\mathfrak{l}}}{}{#2}{}{#1}}}
\newrobustcmd{\acutemathfraklz}[2][]{\ensuremath{\subp{\acute{\mathfrak{l}}}{}{#2}{}{#1}}}
\newrobustcmd{\gravemathfraklz}[2][]{\ensuremath{\subp{\grave{\mathfrak{l}}}{}{#2}{}{#1}}}
\newrobustcmd{\dotmathfraklz}[2][]{\ensuremath{\subp{\dot{\mathfrak{l}}}{}{#2}{}{#1}}}
\newrobustcmd{\ddotmathfraklz}[2][]{\ensuremath{\subp{\ddot{\mathfrak{l}}}{}{#2}{}{#1}}}
\newrobustcmd{\brevemathfraklz}[2][]{\ensuremath{\subp{\breve{\mathfrak{l}}}{}{#2}{}{#1}}}
\newrobustcmd{\barmathfraklz}[2][]{\ensuremath{\subp{\bar{\mathfrak{l}}}{}{#2}{}{#1}}}
\newrobustcmd{\vecmathfraklz}[2][]{\ensuremath{\subp{\vec{\mathfrak{l}}}{}{#2}{}{#1}}}
\newrobustcmd{\bmmathfraklz}[2][]{\ensuremath{\subp{\bm{\mathfrak{l}}}{}{#2}{}{#1}}}
\newrobustcmd{\hatbmmathfraklz}[2][]{\ensuremath{\subp{\hat{\bm{\mathfrak{l}}}}{}{#2}{}{#1}}}
\newrobustcmd{\widehatbmmathfraklz}[2][]{\ensuremath{\subp{\widehat{\bm{\mathfrak{l}}}}{}{#2}{}{#1}}}
\newrobustcmd{\checkbmmathfraklz}[2][]{\ensuremath{\subp{\check{\bm{\mathfrak{l}}}}{}{#2}{}{#1}}}
\newrobustcmd{\tildebmmathfraklz}[2][]{\ensuremath{\subp{\tilde{\bm{\mathfrak{l}}}}{}{#2}{}{#1}}}
\newrobustcmd{\widetildebmmathfraklz}[2][]{\ensuremath{\subp{\widetilde{\bm{\mathfrak{l}}}}{}{#2}{}{#1}}}
\newrobustcmd{\acutebmmathfraklz}[2][]{\ensuremath{\subp{\acute{\bm{\mathfrak{l}}}}{}{#2}{}{#1}}}
\newrobustcmd{\gravebmmathfraklz}[2][]{\ensuremath{\subp{\grave{\bm{\mathfrak{l}}}}{}{#2}{}{#1}}}
\newrobustcmd{\dotbmmathfraklz}[2][]{\ensuremath{\subp{\dot{\bm{\mathfrak{l}}}}{}{#2}{}{#1}}}
\newrobustcmd{\ddotbmmathfraklz}[2][]{\ensuremath{\subp{\ddot{\bm{\mathfrak{l}}}}{}{#2}{}{#1}}}
\newrobustcmd{\brevebmmathfraklz}[2][]{\ensuremath{\subp{\breve{\bm{\mathfrak{l}}}}{}{#2}{}{#1}}}
\newrobustcmd{\barbmmathfraklz}[2][]{\ensuremath{\subp{\bar{\bm{\mathfrak{l}}}}{}{#2}{}{#1}}}
\newrobustcmd{\vecbmmathfraklz}[2][]{\ensuremath{\subp{\vec{\bm{\mathfrak{l}}}}{}{#2}{}{#1}}}
\newrobustcmd{\mathfrakmz}[2][]{\ensuremath{\subp{\mathfrak{m}}{}{#2}{}{#1}}}
\newrobustcmd{\hatmathfrakmz}[2][]{\ensuremath{\subp{\hat{\mathfrak{m}}}{}{#2}{}{#1}}}
\newrobustcmd{\widehatmathfrakmz}[2][]{\ensuremath{\subp{\widehat{\mathfrak{m}}}{}{#2}{}{#1}}}
\newrobustcmd{\checkmathfrakmz}[2][]{\ensuremath{\subp{\check{\mathfrak{m}}}{}{#2}{}{#1}}}
\newrobustcmd{\tildemathfrakmz}[2][]{\ensuremath{\subp{\tilde{\mathfrak{m}}}{}{#2}{}{#1}}}
\newrobustcmd{\widetildemathfrakmz}[2][]{\ensuremath{\subp{\widetilde{\mathfrak{m}}}{}{#2}{}{#1}}}
\newrobustcmd{\acutemathfrakmz}[2][]{\ensuremath{\subp{\acute{\mathfrak{m}}}{}{#2}{}{#1}}}
\newrobustcmd{\gravemathfrakmz}[2][]{\ensuremath{\subp{\grave{\mathfrak{m}}}{}{#2}{}{#1}}}
\newrobustcmd{\dotmathfrakmz}[2][]{\ensuremath{\subp{\dot{\mathfrak{m}}}{}{#2}{}{#1}}}
\newrobustcmd{\ddotmathfrakmz}[2][]{\ensuremath{\subp{\ddot{\mathfrak{m}}}{}{#2}{}{#1}}}
\newrobustcmd{\brevemathfrakmz}[2][]{\ensuremath{\subp{\breve{\mathfrak{m}}}{}{#2}{}{#1}}}
\newrobustcmd{\barmathfrakmz}[2][]{\ensuremath{\subp{\bar{\mathfrak{m}}}{}{#2}{}{#1}}}
\newrobustcmd{\vecmathfrakmz}[2][]{\ensuremath{\subp{\vec{\mathfrak{m}}}{}{#2}{}{#1}}}
\newrobustcmd{\bmmathfrakmz}[2][]{\ensuremath{\subp{\bm{\mathfrak{m}}}{}{#2}{}{#1}}}
\newrobustcmd{\hatbmmathfrakmz}[2][]{\ensuremath{\subp{\hat{\bm{\mathfrak{m}}}}{}{#2}{}{#1}}}
\newrobustcmd{\widehatbmmathfrakmz}[2][]{\ensuremath{\subp{\widehat{\bm{\mathfrak{m}}}}{}{#2}{}{#1}}}
\newrobustcmd{\checkbmmathfrakmz}[2][]{\ensuremath{\subp{\check{\bm{\mathfrak{m}}}}{}{#2}{}{#1}}}
\newrobustcmd{\tildebmmathfrakmz}[2][]{\ensuremath{\subp{\tilde{\bm{\mathfrak{m}}}}{}{#2}{}{#1}}}
\newrobustcmd{\widetildebmmathfrakmz}[2][]{\ensuremath{\subp{\widetilde{\bm{\mathfrak{m}}}}{}{#2}{}{#1}}}
\newrobustcmd{\acutebmmathfrakmz}[2][]{\ensuremath{\subp{\acute{\bm{\mathfrak{m}}}}{}{#2}{}{#1}}}
\newrobustcmd{\gravebmmathfrakmz}[2][]{\ensuremath{\subp{\grave{\bm{\mathfrak{m}}}}{}{#2}{}{#1}}}
\newrobustcmd{\dotbmmathfrakmz}[2][]{\ensuremath{\subp{\dot{\bm{\mathfrak{m}}}}{}{#2}{}{#1}}}
\newrobustcmd{\ddotbmmathfrakmz}[2][]{\ensuremath{\subp{\ddot{\bm{\mathfrak{m}}}}{}{#2}{}{#1}}}
\newrobustcmd{\brevebmmathfrakmz}[2][]{\ensuremath{\subp{\breve{\bm{\mathfrak{m}}}}{}{#2}{}{#1}}}
\newrobustcmd{\barbmmathfrakmz}[2][]{\ensuremath{\subp{\bar{\bm{\mathfrak{m}}}}{}{#2}{}{#1}}}
\newrobustcmd{\vecbmmathfrakmz}[2][]{\ensuremath{\subp{\vec{\bm{\mathfrak{m}}}}{}{#2}{}{#1}}}
\newrobustcmd{\mathfraknz}[2][]{\ensuremath{\subp{\mathfrak{n}}{}{#2}{}{#1}}}
\newrobustcmd{\hatmathfraknz}[2][]{\ensuremath{\subp{\hat{\mathfrak{n}}}{}{#2}{}{#1}}}
\newrobustcmd{\widehatmathfraknz}[2][]{\ensuremath{\subp{\widehat{\mathfrak{n}}}{}{#2}{}{#1}}}
\newrobustcmd{\checkmathfraknz}[2][]{\ensuremath{\subp{\check{\mathfrak{n}}}{}{#2}{}{#1}}}
\newrobustcmd{\tildemathfraknz}[2][]{\ensuremath{\subp{\tilde{\mathfrak{n}}}{}{#2}{}{#1}}}
\newrobustcmd{\widetildemathfraknz}[2][]{\ensuremath{\subp{\widetilde{\mathfrak{n}}}{}{#2}{}{#1}}}
\newrobustcmd{\acutemathfraknz}[2][]{\ensuremath{\subp{\acute{\mathfrak{n}}}{}{#2}{}{#1}}}
\newrobustcmd{\gravemathfraknz}[2][]{\ensuremath{\subp{\grave{\mathfrak{n}}}{}{#2}{}{#1}}}
\newrobustcmd{\dotmathfraknz}[2][]{\ensuremath{\subp{\dot{\mathfrak{n}}}{}{#2}{}{#1}}}
\newrobustcmd{\ddotmathfraknz}[2][]{\ensuremath{\subp{\ddot{\mathfrak{n}}}{}{#2}{}{#1}}}
\newrobustcmd{\brevemathfraknz}[2][]{\ensuremath{\subp{\breve{\mathfrak{n}}}{}{#2}{}{#1}}}
\newrobustcmd{\barmathfraknz}[2][]{\ensuremath{\subp{\bar{\mathfrak{n}}}{}{#2}{}{#1}}}
\newrobustcmd{\vecmathfraknz}[2][]{\ensuremath{\subp{\vec{\mathfrak{n}}}{}{#2}{}{#1}}}
\newrobustcmd{\bmmathfraknz}[2][]{\ensuremath{\subp{\bm{\mathfrak{n}}}{}{#2}{}{#1}}}
\newrobustcmd{\hatbmmathfraknz}[2][]{\ensuremath{\subp{\hat{\bm{\mathfrak{n}}}}{}{#2}{}{#1}}}
\newrobustcmd{\widehatbmmathfraknz}[2][]{\ensuremath{\subp{\widehat{\bm{\mathfrak{n}}}}{}{#2}{}{#1}}}
\newrobustcmd{\checkbmmathfraknz}[2][]{\ensuremath{\subp{\check{\bm{\mathfrak{n}}}}{}{#2}{}{#1}}}
\newrobustcmd{\tildebmmathfraknz}[2][]{\ensuremath{\subp{\tilde{\bm{\mathfrak{n}}}}{}{#2}{}{#1}}}
\newrobustcmd{\widetildebmmathfraknz}[2][]{\ensuremath{\subp{\widetilde{\bm{\mathfrak{n}}}}{}{#2}{}{#1}}}
\newrobustcmd{\acutebmmathfraknz}[2][]{\ensuremath{\subp{\acute{\bm{\mathfrak{n}}}}{}{#2}{}{#1}}}
\newrobustcmd{\gravebmmathfraknz}[2][]{\ensuremath{\subp{\grave{\bm{\mathfrak{n}}}}{}{#2}{}{#1}}}
\newrobustcmd{\dotbmmathfraknz}[2][]{\ensuremath{\subp{\dot{\bm{\mathfrak{n}}}}{}{#2}{}{#1}}}
\newrobustcmd{\ddotbmmathfraknz}[2][]{\ensuremath{\subp{\ddot{\bm{\mathfrak{n}}}}{}{#2}{}{#1}}}
\newrobustcmd{\brevebmmathfraknz}[2][]{\ensuremath{\subp{\breve{\bm{\mathfrak{n}}}}{}{#2}{}{#1}}}
\newrobustcmd{\barbmmathfraknz}[2][]{\ensuremath{\subp{\bar{\bm{\mathfrak{n}}}}{}{#2}{}{#1}}}
\newrobustcmd{\vecbmmathfraknz}[2][]{\ensuremath{\subp{\vec{\bm{\mathfrak{n}}}}{}{#2}{}{#1}}}
\newrobustcmd{\mathfrakoz}[2][]{\ensuremath{\subp{\mathfrak{o}}{}{#2}{}{#1}}}
\newrobustcmd{\hatmathfrakoz}[2][]{\ensuremath{\subp{\hat{\mathfrak{o}}}{}{#2}{}{#1}}}
\newrobustcmd{\widehatmathfrakoz}[2][]{\ensuremath{\subp{\widehat{\mathfrak{o}}}{}{#2}{}{#1}}}
\newrobustcmd{\checkmathfrakoz}[2][]{\ensuremath{\subp{\check{\mathfrak{o}}}{}{#2}{}{#1}}}
\newrobustcmd{\tildemathfrakoz}[2][]{\ensuremath{\subp{\tilde{\mathfrak{o}}}{}{#2}{}{#1}}}
\newrobustcmd{\widetildemathfrakoz}[2][]{\ensuremath{\subp{\widetilde{\mathfrak{o}}}{}{#2}{}{#1}}}
\newrobustcmd{\acutemathfrakoz}[2][]{\ensuremath{\subp{\acute{\mathfrak{o}}}{}{#2}{}{#1}}}
\newrobustcmd{\gravemathfrakoz}[2][]{\ensuremath{\subp{\grave{\mathfrak{o}}}{}{#2}{}{#1}}}
\newrobustcmd{\dotmathfrakoz}[2][]{\ensuremath{\subp{\dot{\mathfrak{o}}}{}{#2}{}{#1}}}
\newrobustcmd{\ddotmathfrakoz}[2][]{\ensuremath{\subp{\ddot{\mathfrak{o}}}{}{#2}{}{#1}}}
\newrobustcmd{\brevemathfrakoz}[2][]{\ensuremath{\subp{\breve{\mathfrak{o}}}{}{#2}{}{#1}}}
\newrobustcmd{\barmathfrakoz}[2][]{\ensuremath{\subp{\bar{\mathfrak{o}}}{}{#2}{}{#1}}}
\newrobustcmd{\vecmathfrakoz}[2][]{\ensuremath{\subp{\vec{\mathfrak{o}}}{}{#2}{}{#1}}}
\newrobustcmd{\bmmathfrakoz}[2][]{\ensuremath{\subp{\bm{\mathfrak{o}}}{}{#2}{}{#1}}}
\newrobustcmd{\hatbmmathfrakoz}[2][]{\ensuremath{\subp{\hat{\bm{\mathfrak{o}}}}{}{#2}{}{#1}}}
\newrobustcmd{\widehatbmmathfrakoz}[2][]{\ensuremath{\subp{\widehat{\bm{\mathfrak{o}}}}{}{#2}{}{#1}}}
\newrobustcmd{\checkbmmathfrakoz}[2][]{\ensuremath{\subp{\check{\bm{\mathfrak{o}}}}{}{#2}{}{#1}}}
\newrobustcmd{\tildebmmathfrakoz}[2][]{\ensuremath{\subp{\tilde{\bm{\mathfrak{o}}}}{}{#2}{}{#1}}}
\newrobustcmd{\widetildebmmathfrakoz}[2][]{\ensuremath{\subp{\widetilde{\bm{\mathfrak{o}}}}{}{#2}{}{#1}}}
\newrobustcmd{\acutebmmathfrakoz}[2][]{\ensuremath{\subp{\acute{\bm{\mathfrak{o}}}}{}{#2}{}{#1}}}
\newrobustcmd{\gravebmmathfrakoz}[2][]{\ensuremath{\subp{\grave{\bm{\mathfrak{o}}}}{}{#2}{}{#1}}}
\newrobustcmd{\dotbmmathfrakoz}[2][]{\ensuremath{\subp{\dot{\bm{\mathfrak{o}}}}{}{#2}{}{#1}}}
\newrobustcmd{\ddotbmmathfrakoz}[2][]{\ensuremath{\subp{\ddot{\bm{\mathfrak{o}}}}{}{#2}{}{#1}}}
\newrobustcmd{\brevebmmathfrakoz}[2][]{\ensuremath{\subp{\breve{\bm{\mathfrak{o}}}}{}{#2}{}{#1}}}
\newrobustcmd{\barbmmathfrakoz}[2][]{\ensuremath{\subp{\bar{\bm{\mathfrak{o}}}}{}{#2}{}{#1}}}
\newrobustcmd{\vecbmmathfrakoz}[2][]{\ensuremath{\subp{\vec{\bm{\mathfrak{o}}}}{}{#2}{}{#1}}}
\newrobustcmd{\mathfrakpz}[2][]{\ensuremath{\subp{\mathfrak{p}}{}{#2}{}{#1}}}
\newrobustcmd{\hatmathfrakpz}[2][]{\ensuremath{\subp{\hat{\mathfrak{p}}}{}{#2}{}{#1}}}
\newrobustcmd{\widehatmathfrakpz}[2][]{\ensuremath{\subp{\widehat{\mathfrak{p}}}{}{#2}{}{#1}}}
\newrobustcmd{\checkmathfrakpz}[2][]{\ensuremath{\subp{\check{\mathfrak{p}}}{}{#2}{}{#1}}}
\newrobustcmd{\tildemathfrakpz}[2][]{\ensuremath{\subp{\tilde{\mathfrak{p}}}{}{#2}{}{#1}}}
\newrobustcmd{\widetildemathfrakpz}[2][]{\ensuremath{\subp{\widetilde{\mathfrak{p}}}{}{#2}{}{#1}}}
\newrobustcmd{\acutemathfrakpz}[2][]{\ensuremath{\subp{\acute{\mathfrak{p}}}{}{#2}{}{#1}}}
\newrobustcmd{\gravemathfrakpz}[2][]{\ensuremath{\subp{\grave{\mathfrak{p}}}{}{#2}{}{#1}}}
\newrobustcmd{\dotmathfrakpz}[2][]{\ensuremath{\subp{\dot{\mathfrak{p}}}{}{#2}{}{#1}}}
\newrobustcmd{\ddotmathfrakpz}[2][]{\ensuremath{\subp{\ddot{\mathfrak{p}}}{}{#2}{}{#1}}}
\newrobustcmd{\brevemathfrakpz}[2][]{\ensuremath{\subp{\breve{\mathfrak{p}}}{}{#2}{}{#1}}}
\newrobustcmd{\barmathfrakpz}[2][]{\ensuremath{\subp{\bar{\mathfrak{p}}}{}{#2}{}{#1}}}
\newrobustcmd{\vecmathfrakpz}[2][]{\ensuremath{\subp{\vec{\mathfrak{p}}}{}{#2}{}{#1}}}
\newrobustcmd{\bmmathfrakpz}[2][]{\ensuremath{\subp{\bm{\mathfrak{p}}}{}{#2}{}{#1}}}
\newrobustcmd{\hatbmmathfrakpz}[2][]{\ensuremath{\subp{\hat{\bm{\mathfrak{p}}}}{}{#2}{}{#1}}}
\newrobustcmd{\widehatbmmathfrakpz}[2][]{\ensuremath{\subp{\widehat{\bm{\mathfrak{p}}}}{}{#2}{}{#1}}}
\newrobustcmd{\checkbmmathfrakpz}[2][]{\ensuremath{\subp{\check{\bm{\mathfrak{p}}}}{}{#2}{}{#1}}}
\newrobustcmd{\tildebmmathfrakpz}[2][]{\ensuremath{\subp{\tilde{\bm{\mathfrak{p}}}}{}{#2}{}{#1}}}
\newrobustcmd{\widetildebmmathfrakpz}[2][]{\ensuremath{\subp{\widetilde{\bm{\mathfrak{p}}}}{}{#2}{}{#1}}}
\newrobustcmd{\acutebmmathfrakpz}[2][]{\ensuremath{\subp{\acute{\bm{\mathfrak{p}}}}{}{#2}{}{#1}}}
\newrobustcmd{\gravebmmathfrakpz}[2][]{\ensuremath{\subp{\grave{\bm{\mathfrak{p}}}}{}{#2}{}{#1}}}
\newrobustcmd{\dotbmmathfrakpz}[2][]{\ensuremath{\subp{\dot{\bm{\mathfrak{p}}}}{}{#2}{}{#1}}}
\newrobustcmd{\ddotbmmathfrakpz}[2][]{\ensuremath{\subp{\ddot{\bm{\mathfrak{p}}}}{}{#2}{}{#1}}}
\newrobustcmd{\brevebmmathfrakpz}[2][]{\ensuremath{\subp{\breve{\bm{\mathfrak{p}}}}{}{#2}{}{#1}}}
\newrobustcmd{\barbmmathfrakpz}[2][]{\ensuremath{\subp{\bar{\bm{\mathfrak{p}}}}{}{#2}{}{#1}}}
\newrobustcmd{\vecbmmathfrakpz}[2][]{\ensuremath{\subp{\vec{\bm{\mathfrak{p}}}}{}{#2}{}{#1}}}
\newrobustcmd{\mathfrakqz}[2][]{\ensuremath{\subp{\mathfrak{q}}{}{#2}{}{#1}}}
\newrobustcmd{\hatmathfrakqz}[2][]{\ensuremath{\subp{\hat{\mathfrak{q}}}{}{#2}{}{#1}}}
\newrobustcmd{\widehatmathfrakqz}[2][]{\ensuremath{\subp{\widehat{\mathfrak{q}}}{}{#2}{}{#1}}}
\newrobustcmd{\checkmathfrakqz}[2][]{\ensuremath{\subp{\check{\mathfrak{q}}}{}{#2}{}{#1}}}
\newrobustcmd{\tildemathfrakqz}[2][]{\ensuremath{\subp{\tilde{\mathfrak{q}}}{}{#2}{}{#1}}}
\newrobustcmd{\widetildemathfrakqz}[2][]{\ensuremath{\subp{\widetilde{\mathfrak{q}}}{}{#2}{}{#1}}}
\newrobustcmd{\acutemathfrakqz}[2][]{\ensuremath{\subp{\acute{\mathfrak{q}}}{}{#2}{}{#1}}}
\newrobustcmd{\gravemathfrakqz}[2][]{\ensuremath{\subp{\grave{\mathfrak{q}}}{}{#2}{}{#1}}}
\newrobustcmd{\dotmathfrakqz}[2][]{\ensuremath{\subp{\dot{\mathfrak{q}}}{}{#2}{}{#1}}}
\newrobustcmd{\ddotmathfrakqz}[2][]{\ensuremath{\subp{\ddot{\mathfrak{q}}}{}{#2}{}{#1}}}
\newrobustcmd{\brevemathfrakqz}[2][]{\ensuremath{\subp{\breve{\mathfrak{q}}}{}{#2}{}{#1}}}
\newrobustcmd{\barmathfrakqz}[2][]{\ensuremath{\subp{\bar{\mathfrak{q}}}{}{#2}{}{#1}}}
\newrobustcmd{\vecmathfrakqz}[2][]{\ensuremath{\subp{\vec{\mathfrak{q}}}{}{#2}{}{#1}}}
\newrobustcmd{\bmmathfrakqz}[2][]{\ensuremath{\subp{\bm{\mathfrak{q}}}{}{#2}{}{#1}}}
\newrobustcmd{\hatbmmathfrakqz}[2][]{\ensuremath{\subp{\hat{\bm{\mathfrak{q}}}}{}{#2}{}{#1}}}
\newrobustcmd{\widehatbmmathfrakqz}[2][]{\ensuremath{\subp{\widehat{\bm{\mathfrak{q}}}}{}{#2}{}{#1}}}
\newrobustcmd{\checkbmmathfrakqz}[2][]{\ensuremath{\subp{\check{\bm{\mathfrak{q}}}}{}{#2}{}{#1}}}
\newrobustcmd{\tildebmmathfrakqz}[2][]{\ensuremath{\subp{\tilde{\bm{\mathfrak{q}}}}{}{#2}{}{#1}}}
\newrobustcmd{\widetildebmmathfrakqz}[2][]{\ensuremath{\subp{\widetilde{\bm{\mathfrak{q}}}}{}{#2}{}{#1}}}
\newrobustcmd{\acutebmmathfrakqz}[2][]{\ensuremath{\subp{\acute{\bm{\mathfrak{q}}}}{}{#2}{}{#1}}}
\newrobustcmd{\gravebmmathfrakqz}[2][]{\ensuremath{\subp{\grave{\bm{\mathfrak{q}}}}{}{#2}{}{#1}}}
\newrobustcmd{\dotbmmathfrakqz}[2][]{\ensuremath{\subp{\dot{\bm{\mathfrak{q}}}}{}{#2}{}{#1}}}
\newrobustcmd{\ddotbmmathfrakqz}[2][]{\ensuremath{\subp{\ddot{\bm{\mathfrak{q}}}}{}{#2}{}{#1}}}
\newrobustcmd{\brevebmmathfrakqz}[2][]{\ensuremath{\subp{\breve{\bm{\mathfrak{q}}}}{}{#2}{}{#1}}}
\newrobustcmd{\barbmmathfrakqz}[2][]{\ensuremath{\subp{\bar{\bm{\mathfrak{q}}}}{}{#2}{}{#1}}}
\newrobustcmd{\vecbmmathfrakqz}[2][]{\ensuremath{\subp{\vec{\bm{\mathfrak{q}}}}{}{#2}{}{#1}}}
\newrobustcmd{\mathfrakrz}[2][]{\ensuremath{\subp{\mathfrak{r}}{}{#2}{}{#1}}}
\newrobustcmd{\hatmathfrakrz}[2][]{\ensuremath{\subp{\hat{\mathfrak{r}}}{}{#2}{}{#1}}}
\newrobustcmd{\widehatmathfrakrz}[2][]{\ensuremath{\subp{\widehat{\mathfrak{r}}}{}{#2}{}{#1}}}
\newrobustcmd{\checkmathfrakrz}[2][]{\ensuremath{\subp{\check{\mathfrak{r}}}{}{#2}{}{#1}}}
\newrobustcmd{\tildemathfrakrz}[2][]{\ensuremath{\subp{\tilde{\mathfrak{r}}}{}{#2}{}{#1}}}
\newrobustcmd{\widetildemathfrakrz}[2][]{\ensuremath{\subp{\widetilde{\mathfrak{r}}}{}{#2}{}{#1}}}
\newrobustcmd{\acutemathfrakrz}[2][]{\ensuremath{\subp{\acute{\mathfrak{r}}}{}{#2}{}{#1}}}
\newrobustcmd{\gravemathfrakrz}[2][]{\ensuremath{\subp{\grave{\mathfrak{r}}}{}{#2}{}{#1}}}
\newrobustcmd{\dotmathfrakrz}[2][]{\ensuremath{\subp{\dot{\mathfrak{r}}}{}{#2}{}{#1}}}
\newrobustcmd{\ddotmathfrakrz}[2][]{\ensuremath{\subp{\ddot{\mathfrak{r}}}{}{#2}{}{#1}}}
\newrobustcmd{\brevemathfrakrz}[2][]{\ensuremath{\subp{\breve{\mathfrak{r}}}{}{#2}{}{#1}}}
\newrobustcmd{\barmathfrakrz}[2][]{\ensuremath{\subp{\bar{\mathfrak{r}}}{}{#2}{}{#1}}}
\newrobustcmd{\vecmathfrakrz}[2][]{\ensuremath{\subp{\vec{\mathfrak{r}}}{}{#2}{}{#1}}}
\newrobustcmd{\bmmathfrakrz}[2][]{\ensuremath{\subp{\bm{\mathfrak{r}}}{}{#2}{}{#1}}}
\newrobustcmd{\hatbmmathfrakrz}[2][]{\ensuremath{\subp{\hat{\bm{\mathfrak{r}}}}{}{#2}{}{#1}}}
\newrobustcmd{\widehatbmmathfrakrz}[2][]{\ensuremath{\subp{\widehat{\bm{\mathfrak{r}}}}{}{#2}{}{#1}}}
\newrobustcmd{\checkbmmathfrakrz}[2][]{\ensuremath{\subp{\check{\bm{\mathfrak{r}}}}{}{#2}{}{#1}}}
\newrobustcmd{\tildebmmathfrakrz}[2][]{\ensuremath{\subp{\tilde{\bm{\mathfrak{r}}}}{}{#2}{}{#1}}}
\newrobustcmd{\widetildebmmathfrakrz}[2][]{\ensuremath{\subp{\widetilde{\bm{\mathfrak{r}}}}{}{#2}{}{#1}}}
\newrobustcmd{\acutebmmathfrakrz}[2][]{\ensuremath{\subp{\acute{\bm{\mathfrak{r}}}}{}{#2}{}{#1}}}
\newrobustcmd{\gravebmmathfrakrz}[2][]{\ensuremath{\subp{\grave{\bm{\mathfrak{r}}}}{}{#2}{}{#1}}}
\newrobustcmd{\dotbmmathfrakrz}[2][]{\ensuremath{\subp{\dot{\bm{\mathfrak{r}}}}{}{#2}{}{#1}}}
\newrobustcmd{\ddotbmmathfrakrz}[2][]{\ensuremath{\subp{\ddot{\bm{\mathfrak{r}}}}{}{#2}{}{#1}}}
\newrobustcmd{\brevebmmathfrakrz}[2][]{\ensuremath{\subp{\breve{\bm{\mathfrak{r}}}}{}{#2}{}{#1}}}
\newrobustcmd{\barbmmathfrakrz}[2][]{\ensuremath{\subp{\bar{\bm{\mathfrak{r}}}}{}{#2}{}{#1}}}
\newrobustcmd{\vecbmmathfrakrz}[2][]{\ensuremath{\subp{\vec{\bm{\mathfrak{r}}}}{}{#2}{}{#1}}}
\newrobustcmd{\mathfraksz}[2][]{\ensuremath{\subp{\mathfrak{s}}{}{#2}{}{#1}}}
\newrobustcmd{\hatmathfraksz}[2][]{\ensuremath{\subp{\hat{\mathfrak{s}}}{}{#2}{}{#1}}}
\newrobustcmd{\widehatmathfraksz}[2][]{\ensuremath{\subp{\widehat{\mathfrak{s}}}{}{#2}{}{#1}}}
\newrobustcmd{\checkmathfraksz}[2][]{\ensuremath{\subp{\check{\mathfrak{s}}}{}{#2}{}{#1}}}
\newrobustcmd{\tildemathfraksz}[2][]{\ensuremath{\subp{\tilde{\mathfrak{s}}}{}{#2}{}{#1}}}
\newrobustcmd{\widetildemathfraksz}[2][]{\ensuremath{\subp{\widetilde{\mathfrak{s}}}{}{#2}{}{#1}}}
\newrobustcmd{\acutemathfraksz}[2][]{\ensuremath{\subp{\acute{\mathfrak{s}}}{}{#2}{}{#1}}}
\newrobustcmd{\gravemathfraksz}[2][]{\ensuremath{\subp{\grave{\mathfrak{s}}}{}{#2}{}{#1}}}
\newrobustcmd{\dotmathfraksz}[2][]{\ensuremath{\subp{\dot{\mathfrak{s}}}{}{#2}{}{#1}}}
\newrobustcmd{\ddotmathfraksz}[2][]{\ensuremath{\subp{\ddot{\mathfrak{s}}}{}{#2}{}{#1}}}
\newrobustcmd{\brevemathfraksz}[2][]{\ensuremath{\subp{\breve{\mathfrak{s}}}{}{#2}{}{#1}}}
\newrobustcmd{\barmathfraksz}[2][]{\ensuremath{\subp{\bar{\mathfrak{s}}}{}{#2}{}{#1}}}
\newrobustcmd{\vecmathfraksz}[2][]{\ensuremath{\subp{\vec{\mathfrak{s}}}{}{#2}{}{#1}}}
\newrobustcmd{\bmmathfraksz}[2][]{\ensuremath{\subp{\bm{\mathfrak{s}}}{}{#2}{}{#1}}}
\newrobustcmd{\hatbmmathfraksz}[2][]{\ensuremath{\subp{\hat{\bm{\mathfrak{s}}}}{}{#2}{}{#1}}}
\newrobustcmd{\widehatbmmathfraksz}[2][]{\ensuremath{\subp{\widehat{\bm{\mathfrak{s}}}}{}{#2}{}{#1}}}
\newrobustcmd{\checkbmmathfraksz}[2][]{\ensuremath{\subp{\check{\bm{\mathfrak{s}}}}{}{#2}{}{#1}}}
\newrobustcmd{\tildebmmathfraksz}[2][]{\ensuremath{\subp{\tilde{\bm{\mathfrak{s}}}}{}{#2}{}{#1}}}
\newrobustcmd{\widetildebmmathfraksz}[2][]{\ensuremath{\subp{\widetilde{\bm{\mathfrak{s}}}}{}{#2}{}{#1}}}
\newrobustcmd{\acutebmmathfraksz}[2][]{\ensuremath{\subp{\acute{\bm{\mathfrak{s}}}}{}{#2}{}{#1}}}
\newrobustcmd{\gravebmmathfraksz}[2][]{\ensuremath{\subp{\grave{\bm{\mathfrak{s}}}}{}{#2}{}{#1}}}
\newrobustcmd{\dotbmmathfraksz}[2][]{\ensuremath{\subp{\dot{\bm{\mathfrak{s}}}}{}{#2}{}{#1}}}
\newrobustcmd{\ddotbmmathfraksz}[2][]{\ensuremath{\subp{\ddot{\bm{\mathfrak{s}}}}{}{#2}{}{#1}}}
\newrobustcmd{\brevebmmathfraksz}[2][]{\ensuremath{\subp{\breve{\bm{\mathfrak{s}}}}{}{#2}{}{#1}}}
\newrobustcmd{\barbmmathfraksz}[2][]{\ensuremath{\subp{\bar{\bm{\mathfrak{s}}}}{}{#2}{}{#1}}}
\newrobustcmd{\vecbmmathfraksz}[2][]{\ensuremath{\subp{\vec{\bm{\mathfrak{s}}}}{}{#2}{}{#1}}}
\newrobustcmd{\mathfraktz}[2][]{\ensuremath{\subp{\mathfrak{t}}{}{#2}{}{#1}}}
\newrobustcmd{\hatmathfraktz}[2][]{\ensuremath{\subp{\hat{\mathfrak{t}}}{}{#2}{}{#1}}}
\newrobustcmd{\widehatmathfraktz}[2][]{\ensuremath{\subp{\widehat{\mathfrak{t}}}{}{#2}{}{#1}}}
\newrobustcmd{\checkmathfraktz}[2][]{\ensuremath{\subp{\check{\mathfrak{t}}}{}{#2}{}{#1}}}
\newrobustcmd{\tildemathfraktz}[2][]{\ensuremath{\subp{\tilde{\mathfrak{t}}}{}{#2}{}{#1}}}
\newrobustcmd{\widetildemathfraktz}[2][]{\ensuremath{\subp{\widetilde{\mathfrak{t}}}{}{#2}{}{#1}}}
\newrobustcmd{\acutemathfraktz}[2][]{\ensuremath{\subp{\acute{\mathfrak{t}}}{}{#2}{}{#1}}}
\newrobustcmd{\gravemathfraktz}[2][]{\ensuremath{\subp{\grave{\mathfrak{t}}}{}{#2}{}{#1}}}
\newrobustcmd{\dotmathfraktz}[2][]{\ensuremath{\subp{\dot{\mathfrak{t}}}{}{#2}{}{#1}}}
\newrobustcmd{\ddotmathfraktz}[2][]{\ensuremath{\subp{\ddot{\mathfrak{t}}}{}{#2}{}{#1}}}
\newrobustcmd{\brevemathfraktz}[2][]{\ensuremath{\subp{\breve{\mathfrak{t}}}{}{#2}{}{#1}}}
\newrobustcmd{\barmathfraktz}[2][]{\ensuremath{\subp{\bar{\mathfrak{t}}}{}{#2}{}{#1}}}
\newrobustcmd{\vecmathfraktz}[2][]{\ensuremath{\subp{\vec{\mathfrak{t}}}{}{#2}{}{#1}}}
\newrobustcmd{\bmmathfraktz}[2][]{\ensuremath{\subp{\bm{\mathfrak{t}}}{}{#2}{}{#1}}}
\newrobustcmd{\hatbmmathfraktz}[2][]{\ensuremath{\subp{\hat{\bm{\mathfrak{t}}}}{}{#2}{}{#1}}}
\newrobustcmd{\widehatbmmathfraktz}[2][]{\ensuremath{\subp{\widehat{\bm{\mathfrak{t}}}}{}{#2}{}{#1}}}
\newrobustcmd{\checkbmmathfraktz}[2][]{\ensuremath{\subp{\check{\bm{\mathfrak{t}}}}{}{#2}{}{#1}}}
\newrobustcmd{\tildebmmathfraktz}[2][]{\ensuremath{\subp{\tilde{\bm{\mathfrak{t}}}}{}{#2}{}{#1}}}
\newrobustcmd{\widetildebmmathfraktz}[2][]{\ensuremath{\subp{\widetilde{\bm{\mathfrak{t}}}}{}{#2}{}{#1}}}
\newrobustcmd{\acutebmmathfraktz}[2][]{\ensuremath{\subp{\acute{\bm{\mathfrak{t}}}}{}{#2}{}{#1}}}
\newrobustcmd{\gravebmmathfraktz}[2][]{\ensuremath{\subp{\grave{\bm{\mathfrak{t}}}}{}{#2}{}{#1}}}
\newrobustcmd{\dotbmmathfraktz}[2][]{\ensuremath{\subp{\dot{\bm{\mathfrak{t}}}}{}{#2}{}{#1}}}
\newrobustcmd{\ddotbmmathfraktz}[2][]{\ensuremath{\subp{\ddot{\bm{\mathfrak{t}}}}{}{#2}{}{#1}}}
\newrobustcmd{\brevebmmathfraktz}[2][]{\ensuremath{\subp{\breve{\bm{\mathfrak{t}}}}{}{#2}{}{#1}}}
\newrobustcmd{\barbmmathfraktz}[2][]{\ensuremath{\subp{\bar{\bm{\mathfrak{t}}}}{}{#2}{}{#1}}}
\newrobustcmd{\vecbmmathfraktz}[2][]{\ensuremath{\subp{\vec{\bm{\mathfrak{t}}}}{}{#2}{}{#1}}}
\newrobustcmd{\mathfrakuz}[2][]{\ensuremath{\subp{\mathfrak{u}}{}{#2}{}{#1}}}
\newrobustcmd{\hatmathfrakuz}[2][]{\ensuremath{\subp{\hat{\mathfrak{u}}}{}{#2}{}{#1}}}
\newrobustcmd{\widehatmathfrakuz}[2][]{\ensuremath{\subp{\widehat{\mathfrak{u}}}{}{#2}{}{#1}}}
\newrobustcmd{\checkmathfrakuz}[2][]{\ensuremath{\subp{\check{\mathfrak{u}}}{}{#2}{}{#1}}}
\newrobustcmd{\tildemathfrakuz}[2][]{\ensuremath{\subp{\tilde{\mathfrak{u}}}{}{#2}{}{#1}}}
\newrobustcmd{\widetildemathfrakuz}[2][]{\ensuremath{\subp{\widetilde{\mathfrak{u}}}{}{#2}{}{#1}}}
\newrobustcmd{\acutemathfrakuz}[2][]{\ensuremath{\subp{\acute{\mathfrak{u}}}{}{#2}{}{#1}}}
\newrobustcmd{\gravemathfrakuz}[2][]{\ensuremath{\subp{\grave{\mathfrak{u}}}{}{#2}{}{#1}}}
\newrobustcmd{\dotmathfrakuz}[2][]{\ensuremath{\subp{\dot{\mathfrak{u}}}{}{#2}{}{#1}}}
\newrobustcmd{\ddotmathfrakuz}[2][]{\ensuremath{\subp{\ddot{\mathfrak{u}}}{}{#2}{}{#1}}}
\newrobustcmd{\brevemathfrakuz}[2][]{\ensuremath{\subp{\breve{\mathfrak{u}}}{}{#2}{}{#1}}}
\newrobustcmd{\barmathfrakuz}[2][]{\ensuremath{\subp{\bar{\mathfrak{u}}}{}{#2}{}{#1}}}
\newrobustcmd{\vecmathfrakuz}[2][]{\ensuremath{\subp{\vec{\mathfrak{u}}}{}{#2}{}{#1}}}
\newrobustcmd{\bmmathfrakuz}[2][]{\ensuremath{\subp{\bm{\mathfrak{u}}}{}{#2}{}{#1}}}
\newrobustcmd{\hatbmmathfrakuz}[2][]{\ensuremath{\subp{\hat{\bm{\mathfrak{u}}}}{}{#2}{}{#1}}}
\newrobustcmd{\widehatbmmathfrakuz}[2][]{\ensuremath{\subp{\widehat{\bm{\mathfrak{u}}}}{}{#2}{}{#1}}}
\newrobustcmd{\checkbmmathfrakuz}[2][]{\ensuremath{\subp{\check{\bm{\mathfrak{u}}}}{}{#2}{}{#1}}}
\newrobustcmd{\tildebmmathfrakuz}[2][]{\ensuremath{\subp{\tilde{\bm{\mathfrak{u}}}}{}{#2}{}{#1}}}
\newrobustcmd{\widetildebmmathfrakuz}[2][]{\ensuremath{\subp{\widetilde{\bm{\mathfrak{u}}}}{}{#2}{}{#1}}}
\newrobustcmd{\acutebmmathfrakuz}[2][]{\ensuremath{\subp{\acute{\bm{\mathfrak{u}}}}{}{#2}{}{#1}}}
\newrobustcmd{\gravebmmathfrakuz}[2][]{\ensuremath{\subp{\grave{\bm{\mathfrak{u}}}}{}{#2}{}{#1}}}
\newrobustcmd{\dotbmmathfrakuz}[2][]{\ensuremath{\subp{\dot{\bm{\mathfrak{u}}}}{}{#2}{}{#1}}}
\newrobustcmd{\ddotbmmathfrakuz}[2][]{\ensuremath{\subp{\ddot{\bm{\mathfrak{u}}}}{}{#2}{}{#1}}}
\newrobustcmd{\brevebmmathfrakuz}[2][]{\ensuremath{\subp{\breve{\bm{\mathfrak{u}}}}{}{#2}{}{#1}}}
\newrobustcmd{\barbmmathfrakuz}[2][]{\ensuremath{\subp{\bar{\bm{\mathfrak{u}}}}{}{#2}{}{#1}}}
\newrobustcmd{\vecbmmathfrakuz}[2][]{\ensuremath{\subp{\vec{\bm{\mathfrak{u}}}}{}{#2}{}{#1}}}
\newrobustcmd{\mathfrakvz}[2][]{\ensuremath{\subp{\mathfrak{v}}{}{#2}{}{#1}}}
\newrobustcmd{\hatmathfrakvz}[2][]{\ensuremath{\subp{\hat{\mathfrak{v}}}{}{#2}{}{#1}}}
\newrobustcmd{\widehatmathfrakvz}[2][]{\ensuremath{\subp{\widehat{\mathfrak{v}}}{}{#2}{}{#1}}}
\newrobustcmd{\checkmathfrakvz}[2][]{\ensuremath{\subp{\check{\mathfrak{v}}}{}{#2}{}{#1}}}
\newrobustcmd{\tildemathfrakvz}[2][]{\ensuremath{\subp{\tilde{\mathfrak{v}}}{}{#2}{}{#1}}}
\newrobustcmd{\widetildemathfrakvz}[2][]{\ensuremath{\subp{\widetilde{\mathfrak{v}}}{}{#2}{}{#1}}}
\newrobustcmd{\acutemathfrakvz}[2][]{\ensuremath{\subp{\acute{\mathfrak{v}}}{}{#2}{}{#1}}}
\newrobustcmd{\gravemathfrakvz}[2][]{\ensuremath{\subp{\grave{\mathfrak{v}}}{}{#2}{}{#1}}}
\newrobustcmd{\dotmathfrakvz}[2][]{\ensuremath{\subp{\dot{\mathfrak{v}}}{}{#2}{}{#1}}}
\newrobustcmd{\ddotmathfrakvz}[2][]{\ensuremath{\subp{\ddot{\mathfrak{v}}}{}{#2}{}{#1}}}
\newrobustcmd{\brevemathfrakvz}[2][]{\ensuremath{\subp{\breve{\mathfrak{v}}}{}{#2}{}{#1}}}
\newrobustcmd{\barmathfrakvz}[2][]{\ensuremath{\subp{\bar{\mathfrak{v}}}{}{#2}{}{#1}}}
\newrobustcmd{\vecmathfrakvz}[2][]{\ensuremath{\subp{\vec{\mathfrak{v}}}{}{#2}{}{#1}}}
\newrobustcmd{\bmmathfrakvz}[2][]{\ensuremath{\subp{\bm{\mathfrak{v}}}{}{#2}{}{#1}}}
\newrobustcmd{\hatbmmathfrakvz}[2][]{\ensuremath{\subp{\hat{\bm{\mathfrak{v}}}}{}{#2}{}{#1}}}
\newrobustcmd{\widehatbmmathfrakvz}[2][]{\ensuremath{\subp{\widehat{\bm{\mathfrak{v}}}}{}{#2}{}{#1}}}
\newrobustcmd{\checkbmmathfrakvz}[2][]{\ensuremath{\subp{\check{\bm{\mathfrak{v}}}}{}{#2}{}{#1}}}
\newrobustcmd{\tildebmmathfrakvz}[2][]{\ensuremath{\subp{\tilde{\bm{\mathfrak{v}}}}{}{#2}{}{#1}}}
\newrobustcmd{\widetildebmmathfrakvz}[2][]{\ensuremath{\subp{\widetilde{\bm{\mathfrak{v}}}}{}{#2}{}{#1}}}
\newrobustcmd{\acutebmmathfrakvz}[2][]{\ensuremath{\subp{\acute{\bm{\mathfrak{v}}}}{}{#2}{}{#1}}}
\newrobustcmd{\gravebmmathfrakvz}[2][]{\ensuremath{\subp{\grave{\bm{\mathfrak{v}}}}{}{#2}{}{#1}}}
\newrobustcmd{\dotbmmathfrakvz}[2][]{\ensuremath{\subp{\dot{\bm{\mathfrak{v}}}}{}{#2}{}{#1}}}
\newrobustcmd{\ddotbmmathfrakvz}[2][]{\ensuremath{\subp{\ddot{\bm{\mathfrak{v}}}}{}{#2}{}{#1}}}
\newrobustcmd{\brevebmmathfrakvz}[2][]{\ensuremath{\subp{\breve{\bm{\mathfrak{v}}}}{}{#2}{}{#1}}}
\newrobustcmd{\barbmmathfrakvz}[2][]{\ensuremath{\subp{\bar{\bm{\mathfrak{v}}}}{}{#2}{}{#1}}}
\newrobustcmd{\vecbmmathfrakvz}[2][]{\ensuremath{\subp{\vec{\bm{\mathfrak{v}}}}{}{#2}{}{#1}}}
\newrobustcmd{\mathfrakwz}[2][]{\ensuremath{\subp{\mathfrak{w}}{}{#2}{}{#1}}}
\newrobustcmd{\hatmathfrakwz}[2][]{\ensuremath{\subp{\hat{\mathfrak{w}}}{}{#2}{}{#1}}}
\newrobustcmd{\widehatmathfrakwz}[2][]{\ensuremath{\subp{\widehat{\mathfrak{w}}}{}{#2}{}{#1}}}
\newrobustcmd{\checkmathfrakwz}[2][]{\ensuremath{\subp{\check{\mathfrak{w}}}{}{#2}{}{#1}}}
\newrobustcmd{\tildemathfrakwz}[2][]{\ensuremath{\subp{\tilde{\mathfrak{w}}}{}{#2}{}{#1}}}
\newrobustcmd{\widetildemathfrakwz}[2][]{\ensuremath{\subp{\widetilde{\mathfrak{w}}}{}{#2}{}{#1}}}
\newrobustcmd{\acutemathfrakwz}[2][]{\ensuremath{\subp{\acute{\mathfrak{w}}}{}{#2}{}{#1}}}
\newrobustcmd{\gravemathfrakwz}[2][]{\ensuremath{\subp{\grave{\mathfrak{w}}}{}{#2}{}{#1}}}
\newrobustcmd{\dotmathfrakwz}[2][]{\ensuremath{\subp{\dot{\mathfrak{w}}}{}{#2}{}{#1}}}
\newrobustcmd{\ddotmathfrakwz}[2][]{\ensuremath{\subp{\ddot{\mathfrak{w}}}{}{#2}{}{#1}}}
\newrobustcmd{\brevemathfrakwz}[2][]{\ensuremath{\subp{\breve{\mathfrak{w}}}{}{#2}{}{#1}}}
\newrobustcmd{\barmathfrakwz}[2][]{\ensuremath{\subp{\bar{\mathfrak{w}}}{}{#2}{}{#1}}}
\newrobustcmd{\vecmathfrakwz}[2][]{\ensuremath{\subp{\vec{\mathfrak{w}}}{}{#2}{}{#1}}}
\newrobustcmd{\bmmathfrakwz}[2][]{\ensuremath{\subp{\bm{\mathfrak{w}}}{}{#2}{}{#1}}}
\newrobustcmd{\hatbmmathfrakwz}[2][]{\ensuremath{\subp{\hat{\bm{\mathfrak{w}}}}{}{#2}{}{#1}}}
\newrobustcmd{\widehatbmmathfrakwz}[2][]{\ensuremath{\subp{\widehat{\bm{\mathfrak{w}}}}{}{#2}{}{#1}}}
\newrobustcmd{\checkbmmathfrakwz}[2][]{\ensuremath{\subp{\check{\bm{\mathfrak{w}}}}{}{#2}{}{#1}}}
\newrobustcmd{\tildebmmathfrakwz}[2][]{\ensuremath{\subp{\tilde{\bm{\mathfrak{w}}}}{}{#2}{}{#1}}}
\newrobustcmd{\widetildebmmathfrakwz}[2][]{\ensuremath{\subp{\widetilde{\bm{\mathfrak{w}}}}{}{#2}{}{#1}}}
\newrobustcmd{\acutebmmathfrakwz}[2][]{\ensuremath{\subp{\acute{\bm{\mathfrak{w}}}}{}{#2}{}{#1}}}
\newrobustcmd{\gravebmmathfrakwz}[2][]{\ensuremath{\subp{\grave{\bm{\mathfrak{w}}}}{}{#2}{}{#1}}}
\newrobustcmd{\dotbmmathfrakwz}[2][]{\ensuremath{\subp{\dot{\bm{\mathfrak{w}}}}{}{#2}{}{#1}}}
\newrobustcmd{\ddotbmmathfrakwz}[2][]{\ensuremath{\subp{\ddot{\bm{\mathfrak{w}}}}{}{#2}{}{#1}}}
\newrobustcmd{\brevebmmathfrakwz}[2][]{\ensuremath{\subp{\breve{\bm{\mathfrak{w}}}}{}{#2}{}{#1}}}
\newrobustcmd{\barbmmathfrakwz}[2][]{\ensuremath{\subp{\bar{\bm{\mathfrak{w}}}}{}{#2}{}{#1}}}
\newrobustcmd{\vecbmmathfrakwz}[2][]{\ensuremath{\subp{\vec{\bm{\mathfrak{w}}}}{}{#2}{}{#1}}}
\newrobustcmd{\mathfrakxz}[2][]{\ensuremath{\subp{\mathfrak{x}}{}{#2}{}{#1}}}
\newrobustcmd{\hatmathfrakxz}[2][]{\ensuremath{\subp{\hat{\mathfrak{x}}}{}{#2}{}{#1}}}
\newrobustcmd{\widehatmathfrakxz}[2][]{\ensuremath{\subp{\widehat{\mathfrak{x}}}{}{#2}{}{#1}}}
\newrobustcmd{\checkmathfrakxz}[2][]{\ensuremath{\subp{\check{\mathfrak{x}}}{}{#2}{}{#1}}}
\newrobustcmd{\tildemathfrakxz}[2][]{\ensuremath{\subp{\tilde{\mathfrak{x}}}{}{#2}{}{#1}}}
\newrobustcmd{\widetildemathfrakxz}[2][]{\ensuremath{\subp{\widetilde{\mathfrak{x}}}{}{#2}{}{#1}}}
\newrobustcmd{\acutemathfrakxz}[2][]{\ensuremath{\subp{\acute{\mathfrak{x}}}{}{#2}{}{#1}}}
\newrobustcmd{\gravemathfrakxz}[2][]{\ensuremath{\subp{\grave{\mathfrak{x}}}{}{#2}{}{#1}}}
\newrobustcmd{\dotmathfrakxz}[2][]{\ensuremath{\subp{\dot{\mathfrak{x}}}{}{#2}{}{#1}}}
\newrobustcmd{\ddotmathfrakxz}[2][]{\ensuremath{\subp{\ddot{\mathfrak{x}}}{}{#2}{}{#1}}}
\newrobustcmd{\brevemathfrakxz}[2][]{\ensuremath{\subp{\breve{\mathfrak{x}}}{}{#2}{}{#1}}}
\newrobustcmd{\barmathfrakxz}[2][]{\ensuremath{\subp{\bar{\mathfrak{x}}}{}{#2}{}{#1}}}
\newrobustcmd{\vecmathfrakxz}[2][]{\ensuremath{\subp{\vec{\mathfrak{x}}}{}{#2}{}{#1}}}
\newrobustcmd{\bmmathfrakxz}[2][]{\ensuremath{\subp{\bm{\mathfrak{x}}}{}{#2}{}{#1}}}
\newrobustcmd{\hatbmmathfrakxz}[2][]{\ensuremath{\subp{\hat{\bm{\mathfrak{x}}}}{}{#2}{}{#1}}}
\newrobustcmd{\widehatbmmathfrakxz}[2][]{\ensuremath{\subp{\widehat{\bm{\mathfrak{x}}}}{}{#2}{}{#1}}}
\newrobustcmd{\checkbmmathfrakxz}[2][]{\ensuremath{\subp{\check{\bm{\mathfrak{x}}}}{}{#2}{}{#1}}}
\newrobustcmd{\tildebmmathfrakxz}[2][]{\ensuremath{\subp{\tilde{\bm{\mathfrak{x}}}}{}{#2}{}{#1}}}
\newrobustcmd{\widetildebmmathfrakxz}[2][]{\ensuremath{\subp{\widetilde{\bm{\mathfrak{x}}}}{}{#2}{}{#1}}}
\newrobustcmd{\acutebmmathfrakxz}[2][]{\ensuremath{\subp{\acute{\bm{\mathfrak{x}}}}{}{#2}{}{#1}}}
\newrobustcmd{\gravebmmathfrakxz}[2][]{\ensuremath{\subp{\grave{\bm{\mathfrak{x}}}}{}{#2}{}{#1}}}
\newrobustcmd{\dotbmmathfrakxz}[2][]{\ensuremath{\subp{\dot{\bm{\mathfrak{x}}}}{}{#2}{}{#1}}}
\newrobustcmd{\ddotbmmathfrakxz}[2][]{\ensuremath{\subp{\ddot{\bm{\mathfrak{x}}}}{}{#2}{}{#1}}}
\newrobustcmd{\brevebmmathfrakxz}[2][]{\ensuremath{\subp{\breve{\bm{\mathfrak{x}}}}{}{#2}{}{#1}}}
\newrobustcmd{\barbmmathfrakxz}[2][]{\ensuremath{\subp{\bar{\bm{\mathfrak{x}}}}{}{#2}{}{#1}}}
\newrobustcmd{\vecbmmathfrakxz}[2][]{\ensuremath{\subp{\vec{\bm{\mathfrak{x}}}}{}{#2}{}{#1}}}
\newrobustcmd{\mathfrakyz}[2][]{\ensuremath{\subp{\mathfrak{y}}{}{#2}{}{#1}}}
\newrobustcmd{\hatmathfrakyz}[2][]{\ensuremath{\subp{\hat{\mathfrak{y}}}{}{#2}{}{#1}}}
\newrobustcmd{\widehatmathfrakyz}[2][]{\ensuremath{\subp{\widehat{\mathfrak{y}}}{}{#2}{}{#1}}}
\newrobustcmd{\checkmathfrakyz}[2][]{\ensuremath{\subp{\check{\mathfrak{y}}}{}{#2}{}{#1}}}
\newrobustcmd{\tildemathfrakyz}[2][]{\ensuremath{\subp{\tilde{\mathfrak{y}}}{}{#2}{}{#1}}}
\newrobustcmd{\widetildemathfrakyz}[2][]{\ensuremath{\subp{\widetilde{\mathfrak{y}}}{}{#2}{}{#1}}}
\newrobustcmd{\acutemathfrakyz}[2][]{\ensuremath{\subp{\acute{\mathfrak{y}}}{}{#2}{}{#1}}}
\newrobustcmd{\gravemathfrakyz}[2][]{\ensuremath{\subp{\grave{\mathfrak{y}}}{}{#2}{}{#1}}}
\newrobustcmd{\dotmathfrakyz}[2][]{\ensuremath{\subp{\dot{\mathfrak{y}}}{}{#2}{}{#1}}}
\newrobustcmd{\ddotmathfrakyz}[2][]{\ensuremath{\subp{\ddot{\mathfrak{y}}}{}{#2}{}{#1}}}
\newrobustcmd{\brevemathfrakyz}[2][]{\ensuremath{\subp{\breve{\mathfrak{y}}}{}{#2}{}{#1}}}
\newrobustcmd{\barmathfrakyz}[2][]{\ensuremath{\subp{\bar{\mathfrak{y}}}{}{#2}{}{#1}}}
\newrobustcmd{\vecmathfrakyz}[2][]{\ensuremath{\subp{\vec{\mathfrak{y}}}{}{#2}{}{#1}}}
\newrobustcmd{\bmmathfrakyz}[2][]{\ensuremath{\subp{\bm{\mathfrak{y}}}{}{#2}{}{#1}}}
\newrobustcmd{\hatbmmathfrakyz}[2][]{\ensuremath{\subp{\hat{\bm{\mathfrak{y}}}}{}{#2}{}{#1}}}
\newrobustcmd{\widehatbmmathfrakyz}[2][]{\ensuremath{\subp{\widehat{\bm{\mathfrak{y}}}}{}{#2}{}{#1}}}
\newrobustcmd{\checkbmmathfrakyz}[2][]{\ensuremath{\subp{\check{\bm{\mathfrak{y}}}}{}{#2}{}{#1}}}
\newrobustcmd{\tildebmmathfrakyz}[2][]{\ensuremath{\subp{\tilde{\bm{\mathfrak{y}}}}{}{#2}{}{#1}}}
\newrobustcmd{\widetildebmmathfrakyz}[2][]{\ensuremath{\subp{\widetilde{\bm{\mathfrak{y}}}}{}{#2}{}{#1}}}
\newrobustcmd{\acutebmmathfrakyz}[2][]{\ensuremath{\subp{\acute{\bm{\mathfrak{y}}}}{}{#2}{}{#1}}}
\newrobustcmd{\gravebmmathfrakyz}[2][]{\ensuremath{\subp{\grave{\bm{\mathfrak{y}}}}{}{#2}{}{#1}}}
\newrobustcmd{\dotbmmathfrakyz}[2][]{\ensuremath{\subp{\dot{\bm{\mathfrak{y}}}}{}{#2}{}{#1}}}
\newrobustcmd{\ddotbmmathfrakyz}[2][]{\ensuremath{\subp{\ddot{\bm{\mathfrak{y}}}}{}{#2}{}{#1}}}
\newrobustcmd{\brevebmmathfrakyz}[2][]{\ensuremath{\subp{\breve{\bm{\mathfrak{y}}}}{}{#2}{}{#1}}}
\newrobustcmd{\barbmmathfrakyz}[2][]{\ensuremath{\subp{\bar{\bm{\mathfrak{y}}}}{}{#2}{}{#1}}}
\newrobustcmd{\vecbmmathfrakyz}[2][]{\ensuremath{\subp{\vec{\bm{\mathfrak{y}}}}{}{#2}{}{#1}}}
\newrobustcmd{\mathfrakzz}[2][]{\ensuremath{\subp{\mathfrak{z}}{}{#2}{}{#1}}}
\newrobustcmd{\hatmathfrakzz}[2][]{\ensuremath{\subp{\hat{\mathfrak{z}}}{}{#2}{}{#1}}}
\newrobustcmd{\widehatmathfrakzz}[2][]{\ensuremath{\subp{\widehat{\mathfrak{z}}}{}{#2}{}{#1}}}
\newrobustcmd{\checkmathfrakzz}[2][]{\ensuremath{\subp{\check{\mathfrak{z}}}{}{#2}{}{#1}}}
\newrobustcmd{\tildemathfrakzz}[2][]{\ensuremath{\subp{\tilde{\mathfrak{z}}}{}{#2}{}{#1}}}
\newrobustcmd{\widetildemathfrakzz}[2][]{\ensuremath{\subp{\widetilde{\mathfrak{z}}}{}{#2}{}{#1}}}
\newrobustcmd{\acutemathfrakzz}[2][]{\ensuremath{\subp{\acute{\mathfrak{z}}}{}{#2}{}{#1}}}
\newrobustcmd{\gravemathfrakzz}[2][]{\ensuremath{\subp{\grave{\mathfrak{z}}}{}{#2}{}{#1}}}
\newrobustcmd{\dotmathfrakzz}[2][]{\ensuremath{\subp{\dot{\mathfrak{z}}}{}{#2}{}{#1}}}
\newrobustcmd{\ddotmathfrakzz}[2][]{\ensuremath{\subp{\ddot{\mathfrak{z}}}{}{#2}{}{#1}}}
\newrobustcmd{\brevemathfrakzz}[2][]{\ensuremath{\subp{\breve{\mathfrak{z}}}{}{#2}{}{#1}}}
\newrobustcmd{\barmathfrakzz}[2][]{\ensuremath{\subp{\bar{\mathfrak{z}}}{}{#2}{}{#1}}}
\newrobustcmd{\vecmathfrakzz}[2][]{\ensuremath{\subp{\vec{\mathfrak{z}}}{}{#2}{}{#1}}}
\newrobustcmd{\bmmathfrakzz}[2][]{\ensuremath{\subp{\bm{\mathfrak{z}}}{}{#2}{}{#1}}}
\newrobustcmd{\hatbmmathfrakzz}[2][]{\ensuremath{\subp{\hat{\bm{\mathfrak{z}}}}{}{#2}{}{#1}}}
\newrobustcmd{\widehatbmmathfrakzz}[2][]{\ensuremath{\subp{\widehat{\bm{\mathfrak{z}}}}{}{#2}{}{#1}}}
\newrobustcmd{\checkbmmathfrakzz}[2][]{\ensuremath{\subp{\check{\bm{\mathfrak{z}}}}{}{#2}{}{#1}}}
\newrobustcmd{\tildebmmathfrakzz}[2][]{\ensuremath{\subp{\tilde{\bm{\mathfrak{z}}}}{}{#2}{}{#1}}}
\newrobustcmd{\widetildebmmathfrakzz}[2][]{\ensuremath{\subp{\widetilde{\bm{\mathfrak{z}}}}{}{#2}{}{#1}}}
\newrobustcmd{\acutebmmathfrakzz}[2][]{\ensuremath{\subp{\acute{\bm{\mathfrak{z}}}}{}{#2}{}{#1}}}
\newrobustcmd{\gravebmmathfrakzz}[2][]{\ensuremath{\subp{\grave{\bm{\mathfrak{z}}}}{}{#2}{}{#1}}}
\newrobustcmd{\dotbmmathfrakzz}[2][]{\ensuremath{\subp{\dot{\bm{\mathfrak{z}}}}{}{#2}{}{#1}}}
\newrobustcmd{\ddotbmmathfrakzz}[2][]{\ensuremath{\subp{\ddot{\bm{\mathfrak{z}}}}{}{#2}{}{#1}}}
\newrobustcmd{\brevebmmathfrakzz}[2][]{\ensuremath{\subp{\breve{\bm{\mathfrak{z}}}}{}{#2}{}{#1}}}
\newrobustcmd{\barbmmathfrakzz}[2][]{\ensuremath{\subp{\bar{\bm{\mathfrak{z}}}}{}{#2}{}{#1}}}
\newrobustcmd{\vecbmmathfrakzz}[2][]{\ensuremath{\subp{\vec{\bm{\mathfrak{z}}}}{}{#2}{}{#1}}}

\newrobustcmd{\mathfrakAz}[2][]{\ensuremath{\subp{\mathfrak{A}}{}{#2}{}{#1}}}
\newrobustcmd{\hatmathfrakAz}[2][]{\ensuremath{\subp{\hat{\mathfrak{A}}}{}{#2}{}{#1}}}
\newrobustcmd{\widehatmathfrakAz}[2][]{\ensuremath{\subp{\widehat{\mathfrak{A}}}{}{#2}{}{#1}}}
\newrobustcmd{\checkmathfrakAz}[2][]{\ensuremath{\subp{\check{\mathfrak{A}}}{}{#2}{}{#1}}}
\newrobustcmd{\tildemathfrakAz}[2][]{\ensuremath{\subp{\tilde{\mathfrak{A}}}{}{#2}{}{#1}}}
\newrobustcmd{\widetildemathfrakAz}[2][]{\ensuremath{\subp{\widetilde{\mathfrak{A}}}{}{#2}{}{#1}}}
\newrobustcmd{\acutemathfrakAz}[2][]{\ensuremath{\subp{\acute{\mathfrak{A}}}{}{#2}{}{#1}}}
\newrobustcmd{\gravemathfrakAz}[2][]{\ensuremath{\subp{\grave{\mathfrak{A}}}{}{#2}{}{#1}}}
\newrobustcmd{\dotmathfrakAz}[2][]{\ensuremath{\subp{\dot{\mathfrak{A}}}{}{#2}{}{#1}}}
\newrobustcmd{\ddotmathfrakAz}[2][]{\ensuremath{\subp{\ddot{\mathfrak{A}}}{}{#2}{}{#1}}}
\newrobustcmd{\brevemathfrakAz}[2][]{\ensuremath{\subp{\breve{\mathfrak{A}}}{}{#2}{}{#1}}}
\newrobustcmd{\barmathfrakAz}[2][]{\ensuremath{\subp{\bar{\mathfrak{A}}}{}{#2}{}{#1}}}
\newrobustcmd{\vecmathfrakAz}[2][]{\ensuremath{\subp{\vec{\mathfrak{A}}}{}{#2}{}{#1}}}
\newrobustcmd{\bmmathfrakAz}[2][]{\ensuremath{\subp{\bm{\mathfrak{A}}}{}{#2}{}{#1}}}
\newrobustcmd{\hatbmmathfrakAz}[2][]{\ensuremath{\subp{\hat{\bm{\mathfrak{A}}}}{}{#2}{}{#1}}}
\newrobustcmd{\widehatbmmathfrakAz}[2][]{\ensuremath{\subp{\widehat{\bm{\mathfrak{A}}}}{}{#2}{}{#1}}}
\newrobustcmd{\checkbmmathfrakAz}[2][]{\ensuremath{\subp{\check{\bm{\mathfrak{A}}}}{}{#2}{}{#1}}}
\newrobustcmd{\tildebmmathfrakAz}[2][]{\ensuremath{\subp{\tilde{\bm{\mathfrak{A}}}}{}{#2}{}{#1}}}
\newrobustcmd{\widetildebmmathfrakAz}[2][]{\ensuremath{\subp{\widetilde{\bm{\mathfrak{A}}}}{}{#2}{}{#1}}}
\newrobustcmd{\acutebmmathfrakAz}[2][]{\ensuremath{\subp{\acute{\bm{\mathfrak{A}}}}{}{#2}{}{#1}}}
\newrobustcmd{\gravebmmathfrakAz}[2][]{\ensuremath{\subp{\grave{\bm{\mathfrak{A}}}}{}{#2}{}{#1}}}
\newrobustcmd{\dotbmmathfrakAz}[2][]{\ensuremath{\subp{\dot{\bm{\mathfrak{A}}}}{}{#2}{}{#1}}}
\newrobustcmd{\ddotbmmathfrakAz}[2][]{\ensuremath{\subp{\ddot{\bm{\mathfrak{A}}}}{}{#2}{}{#1}}}
\newrobustcmd{\brevebmmathfrakAz}[2][]{\ensuremath{\subp{\breve{\bm{\mathfrak{A}}}}{}{#2}{}{#1}}}
\newrobustcmd{\barbmmathfrakAz}[2][]{\ensuremath{\subp{\bar{\bm{\mathfrak{A}}}}{}{#2}{}{#1}}}
\newrobustcmd{\vecbmmathfrakAz}[2][]{\ensuremath{\subp{\vec{\bm{\mathfrak{A}}}}{}{#2}{}{#1}}}
\newrobustcmd{\mathfrakBz}[2][]{\ensuremath{\subp{\mathfrak{B}}{}{#2}{}{#1}}}
\newrobustcmd{\hatmathfrakBz}[2][]{\ensuremath{\subp{\hat{\mathfrak{B}}}{}{#2}{}{#1}}}
\newrobustcmd{\widehatmathfrakBz}[2][]{\ensuremath{\subp{\widehat{\mathfrak{B}}}{}{#2}{}{#1}}}
\newrobustcmd{\checkmathfrakBz}[2][]{\ensuremath{\subp{\check{\mathfrak{B}}}{}{#2}{}{#1}}}
\newrobustcmd{\tildemathfrakBz}[2][]{\ensuremath{\subp{\tilde{\mathfrak{B}}}{}{#2}{}{#1}}}
\newrobustcmd{\widetildemathfrakBz}[2][]{\ensuremath{\subp{\widetilde{\mathfrak{B}}}{}{#2}{}{#1}}}
\newrobustcmd{\acutemathfrakBz}[2][]{\ensuremath{\subp{\acute{\mathfrak{B}}}{}{#2}{}{#1}}}
\newrobustcmd{\gravemathfrakBz}[2][]{\ensuremath{\subp{\grave{\mathfrak{B}}}{}{#2}{}{#1}}}
\newrobustcmd{\dotmathfrakBz}[2][]{\ensuremath{\subp{\dot{\mathfrak{B}}}{}{#2}{}{#1}}}
\newrobustcmd{\ddotmathfrakBz}[2][]{\ensuremath{\subp{\ddot{\mathfrak{B}}}{}{#2}{}{#1}}}
\newrobustcmd{\brevemathfrakBz}[2][]{\ensuremath{\subp{\breve{\mathfrak{B}}}{}{#2}{}{#1}}}
\newrobustcmd{\barmathfrakBz}[2][]{\ensuremath{\subp{\bar{\mathfrak{B}}}{}{#2}{}{#1}}}
\newrobustcmd{\vecmathfrakBz}[2][]{\ensuremath{\subp{\vec{\mathfrak{B}}}{}{#2}{}{#1}}}
\newrobustcmd{\bmmathfrakBz}[2][]{\ensuremath{\subp{\bm{\mathfrak{B}}}{}{#2}{}{#1}}}
\newrobustcmd{\hatbmmathfrakBz}[2][]{\ensuremath{\subp{\hat{\bm{\mathfrak{B}}}}{}{#2}{}{#1}}}
\newrobustcmd{\widehatbmmathfrakBz}[2][]{\ensuremath{\subp{\widehat{\bm{\mathfrak{B}}}}{}{#2}{}{#1}}}
\newrobustcmd{\checkbmmathfrakBz}[2][]{\ensuremath{\subp{\check{\bm{\mathfrak{B}}}}{}{#2}{}{#1}}}
\newrobustcmd{\tildebmmathfrakBz}[2][]{\ensuremath{\subp{\tilde{\bm{\mathfrak{B}}}}{}{#2}{}{#1}}}
\newrobustcmd{\widetildebmmathfrakBz}[2][]{\ensuremath{\subp{\widetilde{\bm{\mathfrak{B}}}}{}{#2}{}{#1}}}
\newrobustcmd{\acutebmmathfrakBz}[2][]{\ensuremath{\subp{\acute{\bm{\mathfrak{B}}}}{}{#2}{}{#1}}}
\newrobustcmd{\gravebmmathfrakBz}[2][]{\ensuremath{\subp{\grave{\bm{\mathfrak{B}}}}{}{#2}{}{#1}}}
\newrobustcmd{\dotbmmathfrakBz}[2][]{\ensuremath{\subp{\dot{\bm{\mathfrak{B}}}}{}{#2}{}{#1}}}
\newrobustcmd{\ddotbmmathfrakBz}[2][]{\ensuremath{\subp{\ddot{\bm{\mathfrak{B}}}}{}{#2}{}{#1}}}
\newrobustcmd{\brevebmmathfrakBz}[2][]{\ensuremath{\subp{\breve{\bm{\mathfrak{B}}}}{}{#2}{}{#1}}}
\newrobustcmd{\barbmmathfrakBz}[2][]{\ensuremath{\subp{\bar{\bm{\mathfrak{B}}}}{}{#2}{}{#1}}}
\newrobustcmd{\vecbmmathfrakBz}[2][]{\ensuremath{\subp{\vec{\bm{\mathfrak{B}}}}{}{#2}{}{#1}}}
\newrobustcmd{\mathfrakCz}[2][]{\ensuremath{\subp{\mathfrak{C}}{}{#2}{}{#1}}}
\newrobustcmd{\hatmathfrakCz}[2][]{\ensuremath{\subp{\hat{\mathfrak{C}}}{}{#2}{}{#1}}}
\newrobustcmd{\widehatmathfrakCz}[2][]{\ensuremath{\subp{\widehat{\mathfrak{C}}}{}{#2}{}{#1}}}
\newrobustcmd{\checkmathfrakCz}[2][]{\ensuremath{\subp{\check{\mathfrak{C}}}{}{#2}{}{#1}}}
\newrobustcmd{\tildemathfrakCz}[2][]{\ensuremath{\subp{\tilde{\mathfrak{C}}}{}{#2}{}{#1}}}
\newrobustcmd{\widetildemathfrakCz}[2][]{\ensuremath{\subp{\widetilde{\mathfrak{C}}}{}{#2}{}{#1}}}
\newrobustcmd{\acutemathfrakCz}[2][]{\ensuremath{\subp{\acute{\mathfrak{C}}}{}{#2}{}{#1}}}
\newrobustcmd{\gravemathfrakCz}[2][]{\ensuremath{\subp{\grave{\mathfrak{C}}}{}{#2}{}{#1}}}
\newrobustcmd{\dotmathfrakCz}[2][]{\ensuremath{\subp{\dot{\mathfrak{C}}}{}{#2}{}{#1}}}
\newrobustcmd{\ddotmathfrakCz}[2][]{\ensuremath{\subp{\ddot{\mathfrak{C}}}{}{#2}{}{#1}}}
\newrobustcmd{\brevemathfrakCz}[2][]{\ensuremath{\subp{\breve{\mathfrak{C}}}{}{#2}{}{#1}}}
\newrobustcmd{\barmathfrakCz}[2][]{\ensuremath{\subp{\bar{\mathfrak{C}}}{}{#2}{}{#1}}}
\newrobustcmd{\vecmathfrakCz}[2][]{\ensuremath{\subp{\vec{\mathfrak{C}}}{}{#2}{}{#1}}}
\newrobustcmd{\bmmathfrakCz}[2][]{\ensuremath{\subp{\bm{\mathfrak{C}}}{}{#2}{}{#1}}}
\newrobustcmd{\hatbmmathfrakCz}[2][]{\ensuremath{\subp{\hat{\bm{\mathfrak{C}}}}{}{#2}{}{#1}}}
\newrobustcmd{\widehatbmmathfrakCz}[2][]{\ensuremath{\subp{\widehat{\bm{\mathfrak{C}}}}{}{#2}{}{#1}}}
\newrobustcmd{\checkbmmathfrakCz}[2][]{\ensuremath{\subp{\check{\bm{\mathfrak{C}}}}{}{#2}{}{#1}}}
\newrobustcmd{\tildebmmathfrakCz}[2][]{\ensuremath{\subp{\tilde{\bm{\mathfrak{C}}}}{}{#2}{}{#1}}}
\newrobustcmd{\widetildebmmathfrakCz}[2][]{\ensuremath{\subp{\widetilde{\bm{\mathfrak{C}}}}{}{#2}{}{#1}}}
\newrobustcmd{\acutebmmathfrakCz}[2][]{\ensuremath{\subp{\acute{\bm{\mathfrak{C}}}}{}{#2}{}{#1}}}
\newrobustcmd{\gravebmmathfrakCz}[2][]{\ensuremath{\subp{\grave{\bm{\mathfrak{C}}}}{}{#2}{}{#1}}}
\newrobustcmd{\dotbmmathfrakCz}[2][]{\ensuremath{\subp{\dot{\bm{\mathfrak{C}}}}{}{#2}{}{#1}}}
\newrobustcmd{\ddotbmmathfrakCz}[2][]{\ensuremath{\subp{\ddot{\bm{\mathfrak{C}}}}{}{#2}{}{#1}}}
\newrobustcmd{\brevebmmathfrakCz}[2][]{\ensuremath{\subp{\breve{\bm{\mathfrak{C}}}}{}{#2}{}{#1}}}
\newrobustcmd{\barbmmathfrakCz}[2][]{\ensuremath{\subp{\bar{\bm{\mathfrak{C}}}}{}{#2}{}{#1}}}
\newrobustcmd{\vecbmmathfrakCz}[2][]{\ensuremath{\subp{\vec{\bm{\mathfrak{C}}}}{}{#2}{}{#1}}}
\newrobustcmd{\mathfrakDz}[2][]{\ensuremath{\subp{\mathfrak{D}}{}{#2}{}{#1}}}
\newrobustcmd{\hatmathfrakDz}[2][]{\ensuremath{\subp{\hat{\mathfrak{D}}}{}{#2}{}{#1}}}
\newrobustcmd{\widehatmathfrakDz}[2][]{\ensuremath{\subp{\widehat{\mathfrak{D}}}{}{#2}{}{#1}}}
\newrobustcmd{\checkmathfrakDz}[2][]{\ensuremath{\subp{\check{\mathfrak{D}}}{}{#2}{}{#1}}}
\newrobustcmd{\tildemathfrakDz}[2][]{\ensuremath{\subp{\tilde{\mathfrak{D}}}{}{#2}{}{#1}}}
\newrobustcmd{\widetildemathfrakDz}[2][]{\ensuremath{\subp{\widetilde{\mathfrak{D}}}{}{#2}{}{#1}}}
\newrobustcmd{\acutemathfrakDz}[2][]{\ensuremath{\subp{\acute{\mathfrak{D}}}{}{#2}{}{#1}}}
\newrobustcmd{\gravemathfrakDz}[2][]{\ensuremath{\subp{\grave{\mathfrak{D}}}{}{#2}{}{#1}}}
\newrobustcmd{\dotmathfrakDz}[2][]{\ensuremath{\subp{\dot{\mathfrak{D}}}{}{#2}{}{#1}}}
\newrobustcmd{\ddotmathfrakDz}[2][]{\ensuremath{\subp{\ddot{\mathfrak{D}}}{}{#2}{}{#1}}}
\newrobustcmd{\brevemathfrakDz}[2][]{\ensuremath{\subp{\breve{\mathfrak{D}}}{}{#2}{}{#1}}}
\newrobustcmd{\barmathfrakDz}[2][]{\ensuremath{\subp{\bar{\mathfrak{D}}}{}{#2}{}{#1}}}
\newrobustcmd{\vecmathfrakDz}[2][]{\ensuremath{\subp{\vec{\mathfrak{D}}}{}{#2}{}{#1}}}
\newrobustcmd{\bmmathfrakDz}[2][]{\ensuremath{\subp{\bm{\mathfrak{D}}}{}{#2}{}{#1}}}
\newrobustcmd{\hatbmmathfrakDz}[2][]{\ensuremath{\subp{\hat{\bm{\mathfrak{D}}}}{}{#2}{}{#1}}}
\newrobustcmd{\widehatbmmathfrakDz}[2][]{\ensuremath{\subp{\widehat{\bm{\mathfrak{D}}}}{}{#2}{}{#1}}}
\newrobustcmd{\checkbmmathfrakDz}[2][]{\ensuremath{\subp{\check{\bm{\mathfrak{D}}}}{}{#2}{}{#1}}}
\newrobustcmd{\tildebmmathfrakDz}[2][]{\ensuremath{\subp{\tilde{\bm{\mathfrak{D}}}}{}{#2}{}{#1}}}
\newrobustcmd{\widetildebmmathfrakDz}[2][]{\ensuremath{\subp{\widetilde{\bm{\mathfrak{D}}}}{}{#2}{}{#1}}}
\newrobustcmd{\acutebmmathfrakDz}[2][]{\ensuremath{\subp{\acute{\bm{\mathfrak{D}}}}{}{#2}{}{#1}}}
\newrobustcmd{\gravebmmathfrakDz}[2][]{\ensuremath{\subp{\grave{\bm{\mathfrak{D}}}}{}{#2}{}{#1}}}
\newrobustcmd{\dotbmmathfrakDz}[2][]{\ensuremath{\subp{\dot{\bm{\mathfrak{D}}}}{}{#2}{}{#1}}}
\newrobustcmd{\ddotbmmathfrakDz}[2][]{\ensuremath{\subp{\ddot{\bm{\mathfrak{D}}}}{}{#2}{}{#1}}}
\newrobustcmd{\brevebmmathfrakDz}[2][]{\ensuremath{\subp{\breve{\bm{\mathfrak{D}}}}{}{#2}{}{#1}}}
\newrobustcmd{\barbmmathfrakDz}[2][]{\ensuremath{\subp{\bar{\bm{\mathfrak{D}}}}{}{#2}{}{#1}}}
\newrobustcmd{\vecbmmathfrakDz}[2][]{\ensuremath{\subp{\vec{\bm{\mathfrak{D}}}}{}{#2}{}{#1}}}
\newrobustcmd{\mathfrakEz}[2][]{\ensuremath{\subp{\mathfrak{E}}{}{#2}{}{#1}}}
\newrobustcmd{\hatmathfrakEz}[2][]{\ensuremath{\subp{\hat{\mathfrak{E}}}{}{#2}{}{#1}}}
\newrobustcmd{\widehatmathfrakEz}[2][]{\ensuremath{\subp{\widehat{\mathfrak{E}}}{}{#2}{}{#1}}}
\newrobustcmd{\checkmathfrakEz}[2][]{\ensuremath{\subp{\check{\mathfrak{E}}}{}{#2}{}{#1}}}
\newrobustcmd{\tildemathfrakEz}[2][]{\ensuremath{\subp{\tilde{\mathfrak{E}}}{}{#2}{}{#1}}}
\newrobustcmd{\widetildemathfrakEz}[2][]{\ensuremath{\subp{\widetilde{\mathfrak{E}}}{}{#2}{}{#1}}}
\newrobustcmd{\acutemathfrakEz}[2][]{\ensuremath{\subp{\acute{\mathfrak{E}}}{}{#2}{}{#1}}}
\newrobustcmd{\gravemathfrakEz}[2][]{\ensuremath{\subp{\grave{\mathfrak{E}}}{}{#2}{}{#1}}}
\newrobustcmd{\dotmathfrakEz}[2][]{\ensuremath{\subp{\dot{\mathfrak{E}}}{}{#2}{}{#1}}}
\newrobustcmd{\ddotmathfrakEz}[2][]{\ensuremath{\subp{\ddot{\mathfrak{E}}}{}{#2}{}{#1}}}
\newrobustcmd{\brevemathfrakEz}[2][]{\ensuremath{\subp{\breve{\mathfrak{E}}}{}{#2}{}{#1}}}
\newrobustcmd{\barmathfrakEz}[2][]{\ensuremath{\subp{\bar{\mathfrak{E}}}{}{#2}{}{#1}}}
\newrobustcmd{\vecmathfrakEz}[2][]{\ensuremath{\subp{\vec{\mathfrak{E}}}{}{#2}{}{#1}}}
\newrobustcmd{\bmmathfrakEz}[2][]{\ensuremath{\subp{\bm{\mathfrak{E}}}{}{#2}{}{#1}}}
\newrobustcmd{\hatbmmathfrakEz}[2][]{\ensuremath{\subp{\hat{\bm{\mathfrak{E}}}}{}{#2}{}{#1}}}
\newrobustcmd{\widehatbmmathfrakEz}[2][]{\ensuremath{\subp{\widehat{\bm{\mathfrak{E}}}}{}{#2}{}{#1}}}
\newrobustcmd{\checkbmmathfrakEz}[2][]{\ensuremath{\subp{\check{\bm{\mathfrak{E}}}}{}{#2}{}{#1}}}
\newrobustcmd{\tildebmmathfrakEz}[2][]{\ensuremath{\subp{\tilde{\bm{\mathfrak{E}}}}{}{#2}{}{#1}}}
\newrobustcmd{\widetildebmmathfrakEz}[2][]{\ensuremath{\subp{\widetilde{\bm{\mathfrak{E}}}}{}{#2}{}{#1}}}
\newrobustcmd{\acutebmmathfrakEz}[2][]{\ensuremath{\subp{\acute{\bm{\mathfrak{E}}}}{}{#2}{}{#1}}}
\newrobustcmd{\gravebmmathfrakEz}[2][]{\ensuremath{\subp{\grave{\bm{\mathfrak{E}}}}{}{#2}{}{#1}}}
\newrobustcmd{\dotbmmathfrakEz}[2][]{\ensuremath{\subp{\dot{\bm{\mathfrak{E}}}}{}{#2}{}{#1}}}
\newrobustcmd{\ddotbmmathfrakEz}[2][]{\ensuremath{\subp{\ddot{\bm{\mathfrak{E}}}}{}{#2}{}{#1}}}
\newrobustcmd{\brevebmmathfrakEz}[2][]{\ensuremath{\subp{\breve{\bm{\mathfrak{E}}}}{}{#2}{}{#1}}}
\newrobustcmd{\barbmmathfrakEz}[2][]{\ensuremath{\subp{\bar{\bm{\mathfrak{E}}}}{}{#2}{}{#1}}}
\newrobustcmd{\vecbmmathfrakEz}[2][]{\ensuremath{\subp{\vec{\bm{\mathfrak{E}}}}{}{#2}{}{#1}}}
\newrobustcmd{\mathfrakFz}[2][]{\ensuremath{\subp{\mathfrak{F}}{}{#2}{}{#1}}}
\newrobustcmd{\hatmathfrakFz}[2][]{\ensuremath{\subp{\hat{\mathfrak{F}}}{}{#2}{}{#1}}}
\newrobustcmd{\widehatmathfrakFz}[2][]{\ensuremath{\subp{\widehat{\mathfrak{F}}}{}{#2}{}{#1}}}
\newrobustcmd{\checkmathfrakFz}[2][]{\ensuremath{\subp{\check{\mathfrak{F}}}{}{#2}{}{#1}}}
\newrobustcmd{\tildemathfrakFz}[2][]{\ensuremath{\subp{\tilde{\mathfrak{F}}}{}{#2}{}{#1}}}
\newrobustcmd{\widetildemathfrakFz}[2][]{\ensuremath{\subp{\widetilde{\mathfrak{F}}}{}{#2}{}{#1}}}
\newrobustcmd{\acutemathfrakFz}[2][]{\ensuremath{\subp{\acute{\mathfrak{F}}}{}{#2}{}{#1}}}
\newrobustcmd{\gravemathfrakFz}[2][]{\ensuremath{\subp{\grave{\mathfrak{F}}}{}{#2}{}{#1}}}
\newrobustcmd{\dotmathfrakFz}[2][]{\ensuremath{\subp{\dot{\mathfrak{F}}}{}{#2}{}{#1}}}
\newrobustcmd{\ddotmathfrakFz}[2][]{\ensuremath{\subp{\ddot{\mathfrak{F}}}{}{#2}{}{#1}}}
\newrobustcmd{\brevemathfrakFz}[2][]{\ensuremath{\subp{\breve{\mathfrak{F}}}{}{#2}{}{#1}}}
\newrobustcmd{\barmathfrakFz}[2][]{\ensuremath{\subp{\bar{\mathfrak{F}}}{}{#2}{}{#1}}}
\newrobustcmd{\vecmathfrakFz}[2][]{\ensuremath{\subp{\vec{\mathfrak{F}}}{}{#2}{}{#1}}}
\newrobustcmd{\bmmathfrakFz}[2][]{\ensuremath{\subp{\bm{\mathfrak{F}}}{}{#2}{}{#1}}}
\newrobustcmd{\hatbmmathfrakFz}[2][]{\ensuremath{\subp{\hat{\bm{\mathfrak{F}}}}{}{#2}{}{#1}}}
\newrobustcmd{\widehatbmmathfrakFz}[2][]{\ensuremath{\subp{\widehat{\bm{\mathfrak{F}}}}{}{#2}{}{#1}}}
\newrobustcmd{\checkbmmathfrakFz}[2][]{\ensuremath{\subp{\check{\bm{\mathfrak{F}}}}{}{#2}{}{#1}}}
\newrobustcmd{\tildebmmathfrakFz}[2][]{\ensuremath{\subp{\tilde{\bm{\mathfrak{F}}}}{}{#2}{}{#1}}}
\newrobustcmd{\widetildebmmathfrakFz}[2][]{\ensuremath{\subp{\widetilde{\bm{\mathfrak{F}}}}{}{#2}{}{#1}}}
\newrobustcmd{\acutebmmathfrakFz}[2][]{\ensuremath{\subp{\acute{\bm{\mathfrak{F}}}}{}{#2}{}{#1}}}
\newrobustcmd{\gravebmmathfrakFz}[2][]{\ensuremath{\subp{\grave{\bm{\mathfrak{F}}}}{}{#2}{}{#1}}}
\newrobustcmd{\dotbmmathfrakFz}[2][]{\ensuremath{\subp{\dot{\bm{\mathfrak{F}}}}{}{#2}{}{#1}}}
\newrobustcmd{\ddotbmmathfrakFz}[2][]{\ensuremath{\subp{\ddot{\bm{\mathfrak{F}}}}{}{#2}{}{#1}}}
\newrobustcmd{\brevebmmathfrakFz}[2][]{\ensuremath{\subp{\breve{\bm{\mathfrak{F}}}}{}{#2}{}{#1}}}
\newrobustcmd{\barbmmathfrakFz}[2][]{\ensuremath{\subp{\bar{\bm{\mathfrak{F}}}}{}{#2}{}{#1}}}
\newrobustcmd{\vecbmmathfrakFz}[2][]{\ensuremath{\subp{\vec{\bm{\mathfrak{F}}}}{}{#2}{}{#1}}}
\newrobustcmd{\mathfrakGz}[2][]{\ensuremath{\subp{\mathfrak{G}}{}{#2}{}{#1}}}
\newrobustcmd{\hatmathfrakGz}[2][]{\ensuremath{\subp{\hat{\mathfrak{G}}}{}{#2}{}{#1}}}
\newrobustcmd{\widehatmathfrakGz}[2][]{\ensuremath{\subp{\widehat{\mathfrak{G}}}{}{#2}{}{#1}}}
\newrobustcmd{\checkmathfrakGz}[2][]{\ensuremath{\subp{\check{\mathfrak{G}}}{}{#2}{}{#1}}}
\newrobustcmd{\tildemathfrakGz}[2][]{\ensuremath{\subp{\tilde{\mathfrak{G}}}{}{#2}{}{#1}}}
\newrobustcmd{\widetildemathfrakGz}[2][]{\ensuremath{\subp{\widetilde{\mathfrak{G}}}{}{#2}{}{#1}}}
\newrobustcmd{\acutemathfrakGz}[2][]{\ensuremath{\subp{\acute{\mathfrak{G}}}{}{#2}{}{#1}}}
\newrobustcmd{\gravemathfrakGz}[2][]{\ensuremath{\subp{\grave{\mathfrak{G}}}{}{#2}{}{#1}}}
\newrobustcmd{\dotmathfrakGz}[2][]{\ensuremath{\subp{\dot{\mathfrak{G}}}{}{#2}{}{#1}}}
\newrobustcmd{\ddotmathfrakGz}[2][]{\ensuremath{\subp{\ddot{\mathfrak{G}}}{}{#2}{}{#1}}}
\newrobustcmd{\brevemathfrakGz}[2][]{\ensuremath{\subp{\breve{\mathfrak{G}}}{}{#2}{}{#1}}}
\newrobustcmd{\barmathfrakGz}[2][]{\ensuremath{\subp{\bar{\mathfrak{G}}}{}{#2}{}{#1}}}
\newrobustcmd{\vecmathfrakGz}[2][]{\ensuremath{\subp{\vec{\mathfrak{G}}}{}{#2}{}{#1}}}
\newrobustcmd{\bmmathfrakGz}[2][]{\ensuremath{\subp{\bm{\mathfrak{G}}}{}{#2}{}{#1}}}
\newrobustcmd{\hatbmmathfrakGz}[2][]{\ensuremath{\subp{\hat{\bm{\mathfrak{G}}}}{}{#2}{}{#1}}}
\newrobustcmd{\widehatbmmathfrakGz}[2][]{\ensuremath{\subp{\widehat{\bm{\mathfrak{G}}}}{}{#2}{}{#1}}}
\newrobustcmd{\checkbmmathfrakGz}[2][]{\ensuremath{\subp{\check{\bm{\mathfrak{G}}}}{}{#2}{}{#1}}}
\newrobustcmd{\tildebmmathfrakGz}[2][]{\ensuremath{\subp{\tilde{\bm{\mathfrak{G}}}}{}{#2}{}{#1}}}
\newrobustcmd{\widetildebmmathfrakGz}[2][]{\ensuremath{\subp{\widetilde{\bm{\mathfrak{G}}}}{}{#2}{}{#1}}}
\newrobustcmd{\acutebmmathfrakGz}[2][]{\ensuremath{\subp{\acute{\bm{\mathfrak{G}}}}{}{#2}{}{#1}}}
\newrobustcmd{\gravebmmathfrakGz}[2][]{\ensuremath{\subp{\grave{\bm{\mathfrak{G}}}}{}{#2}{}{#1}}}
\newrobustcmd{\dotbmmathfrakGz}[2][]{\ensuremath{\subp{\dot{\bm{\mathfrak{G}}}}{}{#2}{}{#1}}}
\newrobustcmd{\ddotbmmathfrakGz}[2][]{\ensuremath{\subp{\ddot{\bm{\mathfrak{G}}}}{}{#2}{}{#1}}}
\newrobustcmd{\brevebmmathfrakGz}[2][]{\ensuremath{\subp{\breve{\bm{\mathfrak{G}}}}{}{#2}{}{#1}}}
\newrobustcmd{\barbmmathfrakGz}[2][]{\ensuremath{\subp{\bar{\bm{\mathfrak{G}}}}{}{#2}{}{#1}}}
\newrobustcmd{\vecbmmathfrakGz}[2][]{\ensuremath{\subp{\vec{\bm{\mathfrak{G}}}}{}{#2}{}{#1}}}
\newrobustcmd{\mathfrakHz}[2][]{\ensuremath{\subp{\mathfrak{H}}{}{#2}{}{#1}}}
\newrobustcmd{\hatmathfrakHz}[2][]{\ensuremath{\subp{\hat{\mathfrak{H}}}{}{#2}{}{#1}}}
\newrobustcmd{\widehatmathfrakHz}[2][]{\ensuremath{\subp{\widehat{\mathfrak{H}}}{}{#2}{}{#1}}}
\newrobustcmd{\checkmathfrakHz}[2][]{\ensuremath{\subp{\check{\mathfrak{H}}}{}{#2}{}{#1}}}
\newrobustcmd{\tildemathfrakHz}[2][]{\ensuremath{\subp{\tilde{\mathfrak{H}}}{}{#2}{}{#1}}}
\newrobustcmd{\widetildemathfrakHz}[2][]{\ensuremath{\subp{\widetilde{\mathfrak{H}}}{}{#2}{}{#1}}}
\newrobustcmd{\acutemathfrakHz}[2][]{\ensuremath{\subp{\acute{\mathfrak{H}}}{}{#2}{}{#1}}}
\newrobustcmd{\gravemathfrakHz}[2][]{\ensuremath{\subp{\grave{\mathfrak{H}}}{}{#2}{}{#1}}}
\newrobustcmd{\dotmathfrakHz}[2][]{\ensuremath{\subp{\dot{\mathfrak{H}}}{}{#2}{}{#1}}}
\newrobustcmd{\ddotmathfrakHz}[2][]{\ensuremath{\subp{\ddot{\mathfrak{H}}}{}{#2}{}{#1}}}
\newrobustcmd{\brevemathfrakHz}[2][]{\ensuremath{\subp{\breve{\mathfrak{H}}}{}{#2}{}{#1}}}
\newrobustcmd{\barmathfrakHz}[2][]{\ensuremath{\subp{\bar{\mathfrak{H}}}{}{#2}{}{#1}}}
\newrobustcmd{\vecmathfrakHz}[2][]{\ensuremath{\subp{\vec{\mathfrak{H}}}{}{#2}{}{#1}}}
\newrobustcmd{\bmmathfrakHz}[2][]{\ensuremath{\subp{\bm{\mathfrak{H}}}{}{#2}{}{#1}}}
\newrobustcmd{\hatbmmathfrakHz}[2][]{\ensuremath{\subp{\hat{\bm{\mathfrak{H}}}}{}{#2}{}{#1}}}
\newrobustcmd{\widehatbmmathfrakHz}[2][]{\ensuremath{\subp{\widehat{\bm{\mathfrak{H}}}}{}{#2}{}{#1}}}
\newrobustcmd{\checkbmmathfrakHz}[2][]{\ensuremath{\subp{\check{\bm{\mathfrak{H}}}}{}{#2}{}{#1}}}
\newrobustcmd{\tildebmmathfrakHz}[2][]{\ensuremath{\subp{\tilde{\bm{\mathfrak{H}}}}{}{#2}{}{#1}}}
\newrobustcmd{\widetildebmmathfrakHz}[2][]{\ensuremath{\subp{\widetilde{\bm{\mathfrak{H}}}}{}{#2}{}{#1}}}
\newrobustcmd{\acutebmmathfrakHz}[2][]{\ensuremath{\subp{\acute{\bm{\mathfrak{H}}}}{}{#2}{}{#1}}}
\newrobustcmd{\gravebmmathfrakHz}[2][]{\ensuremath{\subp{\grave{\bm{\mathfrak{H}}}}{}{#2}{}{#1}}}
\newrobustcmd{\dotbmmathfrakHz}[2][]{\ensuremath{\subp{\dot{\bm{\mathfrak{H}}}}{}{#2}{}{#1}}}
\newrobustcmd{\ddotbmmathfrakHz}[2][]{\ensuremath{\subp{\ddot{\bm{\mathfrak{H}}}}{}{#2}{}{#1}}}
\newrobustcmd{\brevebmmathfrakHz}[2][]{\ensuremath{\subp{\breve{\bm{\mathfrak{H}}}}{}{#2}{}{#1}}}
\newrobustcmd{\barbmmathfrakHz}[2][]{\ensuremath{\subp{\bar{\bm{\mathfrak{H}}}}{}{#2}{}{#1}}}
\newrobustcmd{\vecbmmathfrakHz}[2][]{\ensuremath{\subp{\vec{\bm{\mathfrak{H}}}}{}{#2}{}{#1}}}
\newrobustcmd{\mathfrakIz}[2][]{\ensuremath{\subp{\mathfrak{I}}{}{#2}{}{#1}}}
\newrobustcmd{\hatmathfrakIz}[2][]{\ensuremath{\subp{\hat{\mathfrak{I}}}{}{#2}{}{#1}}}
\newrobustcmd{\widehatmathfrakIz}[2][]{\ensuremath{\subp{\widehat{\mathfrak{I}}}{}{#2}{}{#1}}}
\newrobustcmd{\checkmathfrakIz}[2][]{\ensuremath{\subp{\check{\mathfrak{I}}}{}{#2}{}{#1}}}
\newrobustcmd{\tildemathfrakIz}[2][]{\ensuremath{\subp{\tilde{\mathfrak{I}}}{}{#2}{}{#1}}}
\newrobustcmd{\widetildemathfrakIz}[2][]{\ensuremath{\subp{\widetilde{\mathfrak{I}}}{}{#2}{}{#1}}}
\newrobustcmd{\acutemathfrakIz}[2][]{\ensuremath{\subp{\acute{\mathfrak{I}}}{}{#2}{}{#1}}}
\newrobustcmd{\gravemathfrakIz}[2][]{\ensuremath{\subp{\grave{\mathfrak{I}}}{}{#2}{}{#1}}}
\newrobustcmd{\dotmathfrakIz}[2][]{\ensuremath{\subp{\dot{\mathfrak{I}}}{}{#2}{}{#1}}}
\newrobustcmd{\ddotmathfrakIz}[2][]{\ensuremath{\subp{\ddot{\mathfrak{I}}}{}{#2}{}{#1}}}
\newrobustcmd{\brevemathfrakIz}[2][]{\ensuremath{\subp{\breve{\mathfrak{I}}}{}{#2}{}{#1}}}
\newrobustcmd{\barmathfrakIz}[2][]{\ensuremath{\subp{\bar{\mathfrak{I}}}{}{#2}{}{#1}}}
\newrobustcmd{\vecmathfrakIz}[2][]{\ensuremath{\subp{\vec{\mathfrak{I}}}{}{#2}{}{#1}}}
\newrobustcmd{\bmmathfrakIz}[2][]{\ensuremath{\subp{\bm{\mathfrak{I}}}{}{#2}{}{#1}}}
\newrobustcmd{\hatbmmathfrakIz}[2][]{\ensuremath{\subp{\hat{\bm{\mathfrak{I}}}}{}{#2}{}{#1}}}
\newrobustcmd{\widehatbmmathfrakIz}[2][]{\ensuremath{\subp{\widehat{\bm{\mathfrak{I}}}}{}{#2}{}{#1}}}
\newrobustcmd{\checkbmmathfrakIz}[2][]{\ensuremath{\subp{\check{\bm{\mathfrak{I}}}}{}{#2}{}{#1}}}
\newrobustcmd{\tildebmmathfrakIz}[2][]{\ensuremath{\subp{\tilde{\bm{\mathfrak{I}}}}{}{#2}{}{#1}}}
\newrobustcmd{\widetildebmmathfrakIz}[2][]{\ensuremath{\subp{\widetilde{\bm{\mathfrak{I}}}}{}{#2}{}{#1}}}
\newrobustcmd{\acutebmmathfrakIz}[2][]{\ensuremath{\subp{\acute{\bm{\mathfrak{I}}}}{}{#2}{}{#1}}}
\newrobustcmd{\gravebmmathfrakIz}[2][]{\ensuremath{\subp{\grave{\bm{\mathfrak{I}}}}{}{#2}{}{#1}}}
\newrobustcmd{\dotbmmathfrakIz}[2][]{\ensuremath{\subp{\dot{\bm{\mathfrak{I}}}}{}{#2}{}{#1}}}
\newrobustcmd{\ddotbmmathfrakIz}[2][]{\ensuremath{\subp{\ddot{\bm{\mathfrak{I}}}}{}{#2}{}{#1}}}
\newrobustcmd{\brevebmmathfrakIz}[2][]{\ensuremath{\subp{\breve{\bm{\mathfrak{I}}}}{}{#2}{}{#1}}}
\newrobustcmd{\barbmmathfrakIz}[2][]{\ensuremath{\subp{\bar{\bm{\mathfrak{I}}}}{}{#2}{}{#1}}}
\newrobustcmd{\vecbmmathfrakIz}[2][]{\ensuremath{\subp{\vec{\bm{\mathfrak{I}}}}{}{#2}{}{#1}}}
\newrobustcmd{\mathfrakJz}[2][]{\ensuremath{\subp{\mathfrak{J}}{}{#2}{}{#1}}}
\newrobustcmd{\hatmathfrakJz}[2][]{\ensuremath{\subp{\hat{\mathfrak{J}}}{}{#2}{}{#1}}}
\newrobustcmd{\widehatmathfrakJz}[2][]{\ensuremath{\subp{\widehat{\mathfrak{J}}}{}{#2}{}{#1}}}
\newrobustcmd{\checkmathfrakJz}[2][]{\ensuremath{\subp{\check{\mathfrak{J}}}{}{#2}{}{#1}}}
\newrobustcmd{\tildemathfrakJz}[2][]{\ensuremath{\subp{\tilde{\mathfrak{J}}}{}{#2}{}{#1}}}
\newrobustcmd{\widetildemathfrakJz}[2][]{\ensuremath{\subp{\widetilde{\mathfrak{J}}}{}{#2}{}{#1}}}
\newrobustcmd{\acutemathfrakJz}[2][]{\ensuremath{\subp{\acute{\mathfrak{J}}}{}{#2}{}{#1}}}
\newrobustcmd{\gravemathfrakJz}[2][]{\ensuremath{\subp{\grave{\mathfrak{J}}}{}{#2}{}{#1}}}
\newrobustcmd{\dotmathfrakJz}[2][]{\ensuremath{\subp{\dot{\mathfrak{J}}}{}{#2}{}{#1}}}
\newrobustcmd{\ddotmathfrakJz}[2][]{\ensuremath{\subp{\ddot{\mathfrak{J}}}{}{#2}{}{#1}}}
\newrobustcmd{\brevemathfrakJz}[2][]{\ensuremath{\subp{\breve{\mathfrak{J}}}{}{#2}{}{#1}}}
\newrobustcmd{\barmathfrakJz}[2][]{\ensuremath{\subp{\bar{\mathfrak{J}}}{}{#2}{}{#1}}}
\newrobustcmd{\vecmathfrakJz}[2][]{\ensuremath{\subp{\vec{\mathfrak{J}}}{}{#2}{}{#1}}}
\newrobustcmd{\bmmathfrakJz}[2][]{\ensuremath{\subp{\bm{\mathfrak{J}}}{}{#2}{}{#1}}}
\newrobustcmd{\hatbmmathfrakJz}[2][]{\ensuremath{\subp{\hat{\bm{\mathfrak{J}}}}{}{#2}{}{#1}}}
\newrobustcmd{\widehatbmmathfrakJz}[2][]{\ensuremath{\subp{\widehat{\bm{\mathfrak{J}}}}{}{#2}{}{#1}}}
\newrobustcmd{\checkbmmathfrakJz}[2][]{\ensuremath{\subp{\check{\bm{\mathfrak{J}}}}{}{#2}{}{#1}}}
\newrobustcmd{\tildebmmathfrakJz}[2][]{\ensuremath{\subp{\tilde{\bm{\mathfrak{J}}}}{}{#2}{}{#1}}}
\newrobustcmd{\widetildebmmathfrakJz}[2][]{\ensuremath{\subp{\widetilde{\bm{\mathfrak{J}}}}{}{#2}{}{#1}}}
\newrobustcmd{\acutebmmathfrakJz}[2][]{\ensuremath{\subp{\acute{\bm{\mathfrak{J}}}}{}{#2}{}{#1}}}
\newrobustcmd{\gravebmmathfrakJz}[2][]{\ensuremath{\subp{\grave{\bm{\mathfrak{J}}}}{}{#2}{}{#1}}}
\newrobustcmd{\dotbmmathfrakJz}[2][]{\ensuremath{\subp{\dot{\bm{\mathfrak{J}}}}{}{#2}{}{#1}}}
\newrobustcmd{\ddotbmmathfrakJz}[2][]{\ensuremath{\subp{\ddot{\bm{\mathfrak{J}}}}{}{#2}{}{#1}}}
\newrobustcmd{\brevebmmathfrakJz}[2][]{\ensuremath{\subp{\breve{\bm{\mathfrak{J}}}}{}{#2}{}{#1}}}
\newrobustcmd{\barbmmathfrakJz}[2][]{\ensuremath{\subp{\bar{\bm{\mathfrak{J}}}}{}{#2}{}{#1}}}
\newrobustcmd{\vecbmmathfrakJz}[2][]{\ensuremath{\subp{\vec{\bm{\mathfrak{J}}}}{}{#2}{}{#1}}}
\newrobustcmd{\mathfrakKz}[2][]{\ensuremath{\subp{\mathfrak{K}}{}{#2}{}{#1}}}
\newrobustcmd{\hatmathfrakKz}[2][]{\ensuremath{\subp{\hat{\mathfrak{K}}}{}{#2}{}{#1}}}
\newrobustcmd{\widehatmathfrakKz}[2][]{\ensuremath{\subp{\widehat{\mathfrak{K}}}{}{#2}{}{#1}}}
\newrobustcmd{\checkmathfrakKz}[2][]{\ensuremath{\subp{\check{\mathfrak{K}}}{}{#2}{}{#1}}}
\newrobustcmd{\tildemathfrakKz}[2][]{\ensuremath{\subp{\tilde{\mathfrak{K}}}{}{#2}{}{#1}}}
\newrobustcmd{\widetildemathfrakKz}[2][]{\ensuremath{\subp{\widetilde{\mathfrak{K}}}{}{#2}{}{#1}}}
\newrobustcmd{\acutemathfrakKz}[2][]{\ensuremath{\subp{\acute{\mathfrak{K}}}{}{#2}{}{#1}}}
\newrobustcmd{\gravemathfrakKz}[2][]{\ensuremath{\subp{\grave{\mathfrak{K}}}{}{#2}{}{#1}}}
\newrobustcmd{\dotmathfrakKz}[2][]{\ensuremath{\subp{\dot{\mathfrak{K}}}{}{#2}{}{#1}}}
\newrobustcmd{\ddotmathfrakKz}[2][]{\ensuremath{\subp{\ddot{\mathfrak{K}}}{}{#2}{}{#1}}}
\newrobustcmd{\brevemathfrakKz}[2][]{\ensuremath{\subp{\breve{\mathfrak{K}}}{}{#2}{}{#1}}}
\newrobustcmd{\barmathfrakKz}[2][]{\ensuremath{\subp{\bar{\mathfrak{K}}}{}{#2}{}{#1}}}
\newrobustcmd{\vecmathfrakKz}[2][]{\ensuremath{\subp{\vec{\mathfrak{K}}}{}{#2}{}{#1}}}
\newrobustcmd{\bmmathfrakKz}[2][]{\ensuremath{\subp{\bm{\mathfrak{K}}}{}{#2}{}{#1}}}
\newrobustcmd{\hatbmmathfrakKz}[2][]{\ensuremath{\subp{\hat{\bm{\mathfrak{K}}}}{}{#2}{}{#1}}}
\newrobustcmd{\widehatbmmathfrakKz}[2][]{\ensuremath{\subp{\widehat{\bm{\mathfrak{K}}}}{}{#2}{}{#1}}}
\newrobustcmd{\checkbmmathfrakKz}[2][]{\ensuremath{\subp{\check{\bm{\mathfrak{K}}}}{}{#2}{}{#1}}}
\newrobustcmd{\tildebmmathfrakKz}[2][]{\ensuremath{\subp{\tilde{\bm{\mathfrak{K}}}}{}{#2}{}{#1}}}
\newrobustcmd{\widetildebmmathfrakKz}[2][]{\ensuremath{\subp{\widetilde{\bm{\mathfrak{K}}}}{}{#2}{}{#1}}}
\newrobustcmd{\acutebmmathfrakKz}[2][]{\ensuremath{\subp{\acute{\bm{\mathfrak{K}}}}{}{#2}{}{#1}}}
\newrobustcmd{\gravebmmathfrakKz}[2][]{\ensuremath{\subp{\grave{\bm{\mathfrak{K}}}}{}{#2}{}{#1}}}
\newrobustcmd{\dotbmmathfrakKz}[2][]{\ensuremath{\subp{\dot{\bm{\mathfrak{K}}}}{}{#2}{}{#1}}}
\newrobustcmd{\ddotbmmathfrakKz}[2][]{\ensuremath{\subp{\ddot{\bm{\mathfrak{K}}}}{}{#2}{}{#1}}}
\newrobustcmd{\brevebmmathfrakKz}[2][]{\ensuremath{\subp{\breve{\bm{\mathfrak{K}}}}{}{#2}{}{#1}}}
\newrobustcmd{\barbmmathfrakKz}[2][]{\ensuremath{\subp{\bar{\bm{\mathfrak{K}}}}{}{#2}{}{#1}}}
\newrobustcmd{\vecbmmathfrakKz}[2][]{\ensuremath{\subp{\vec{\bm{\mathfrak{K}}}}{}{#2}{}{#1}}}
\newrobustcmd{\mathfrakLz}[2][]{\ensuremath{\subp{\mathfrak{L}}{}{#2}{}{#1}}}
\newrobustcmd{\hatmathfrakLz}[2][]{\ensuremath{\subp{\hat{\mathfrak{L}}}{}{#2}{}{#1}}}
\newrobustcmd{\widehatmathfrakLz}[2][]{\ensuremath{\subp{\widehat{\mathfrak{L}}}{}{#2}{}{#1}}}
\newrobustcmd{\checkmathfrakLz}[2][]{\ensuremath{\subp{\check{\mathfrak{L}}}{}{#2}{}{#1}}}
\newrobustcmd{\tildemathfrakLz}[2][]{\ensuremath{\subp{\tilde{\mathfrak{L}}}{}{#2}{}{#1}}}
\newrobustcmd{\widetildemathfrakLz}[2][]{\ensuremath{\subp{\widetilde{\mathfrak{L}}}{}{#2}{}{#1}}}
\newrobustcmd{\acutemathfrakLz}[2][]{\ensuremath{\subp{\acute{\mathfrak{L}}}{}{#2}{}{#1}}}
\newrobustcmd{\gravemathfrakLz}[2][]{\ensuremath{\subp{\grave{\mathfrak{L}}}{}{#2}{}{#1}}}
\newrobustcmd{\dotmathfrakLz}[2][]{\ensuremath{\subp{\dot{\mathfrak{L}}}{}{#2}{}{#1}}}
\newrobustcmd{\ddotmathfrakLz}[2][]{\ensuremath{\subp{\ddot{\mathfrak{L}}}{}{#2}{}{#1}}}
\newrobustcmd{\brevemathfrakLz}[2][]{\ensuremath{\subp{\breve{\mathfrak{L}}}{}{#2}{}{#1}}}
\newrobustcmd{\barmathfrakLz}[2][]{\ensuremath{\subp{\bar{\mathfrak{L}}}{}{#2}{}{#1}}}
\newrobustcmd{\vecmathfrakLz}[2][]{\ensuremath{\subp{\vec{\mathfrak{L}}}{}{#2}{}{#1}}}
\newrobustcmd{\bmmathfrakLz}[2][]{\ensuremath{\subp{\bm{\mathfrak{L}}}{}{#2}{}{#1}}}
\newrobustcmd{\hatbmmathfrakLz}[2][]{\ensuremath{\subp{\hat{\bm{\mathfrak{L}}}}{}{#2}{}{#1}}}
\newrobustcmd{\widehatbmmathfrakLz}[2][]{\ensuremath{\subp{\widehat{\bm{\mathfrak{L}}}}{}{#2}{}{#1}}}
\newrobustcmd{\checkbmmathfrakLz}[2][]{\ensuremath{\subp{\check{\bm{\mathfrak{L}}}}{}{#2}{}{#1}}}
\newrobustcmd{\tildebmmathfrakLz}[2][]{\ensuremath{\subp{\tilde{\bm{\mathfrak{L}}}}{}{#2}{}{#1}}}
\newrobustcmd{\widetildebmmathfrakLz}[2][]{\ensuremath{\subp{\widetilde{\bm{\mathfrak{L}}}}{}{#2}{}{#1}}}
\newrobustcmd{\acutebmmathfrakLz}[2][]{\ensuremath{\subp{\acute{\bm{\mathfrak{L}}}}{}{#2}{}{#1}}}
\newrobustcmd{\gravebmmathfrakLz}[2][]{\ensuremath{\subp{\grave{\bm{\mathfrak{L}}}}{}{#2}{}{#1}}}
\newrobustcmd{\dotbmmathfrakLz}[2][]{\ensuremath{\subp{\dot{\bm{\mathfrak{L}}}}{}{#2}{}{#1}}}
\newrobustcmd{\ddotbmmathfrakLz}[2][]{\ensuremath{\subp{\ddot{\bm{\mathfrak{L}}}}{}{#2}{}{#1}}}
\newrobustcmd{\brevebmmathfrakLz}[2][]{\ensuremath{\subp{\breve{\bm{\mathfrak{L}}}}{}{#2}{}{#1}}}
\newrobustcmd{\barbmmathfrakLz}[2][]{\ensuremath{\subp{\bar{\bm{\mathfrak{L}}}}{}{#2}{}{#1}}}
\newrobustcmd{\vecbmmathfrakLz}[2][]{\ensuremath{\subp{\vec{\bm{\mathfrak{L}}}}{}{#2}{}{#1}}}
\newrobustcmd{\mathfrakMz}[2][]{\ensuremath{\subp{\mathfrak{M}}{}{#2}{}{#1}}}
\newrobustcmd{\hatmathfrakMz}[2][]{\ensuremath{\subp{\hat{\mathfrak{M}}}{}{#2}{}{#1}}}
\newrobustcmd{\widehatmathfrakMz}[2][]{\ensuremath{\subp{\widehat{\mathfrak{M}}}{}{#2}{}{#1}}}
\newrobustcmd{\checkmathfrakMz}[2][]{\ensuremath{\subp{\check{\mathfrak{M}}}{}{#2}{}{#1}}}
\newrobustcmd{\tildemathfrakMz}[2][]{\ensuremath{\subp{\tilde{\mathfrak{M}}}{}{#2}{}{#1}}}
\newrobustcmd{\widetildemathfrakMz}[2][]{\ensuremath{\subp{\widetilde{\mathfrak{M}}}{}{#2}{}{#1}}}
\newrobustcmd{\acutemathfrakMz}[2][]{\ensuremath{\subp{\acute{\mathfrak{M}}}{}{#2}{}{#1}}}
\newrobustcmd{\gravemathfrakMz}[2][]{\ensuremath{\subp{\grave{\mathfrak{M}}}{}{#2}{}{#1}}}
\newrobustcmd{\dotmathfrakMz}[2][]{\ensuremath{\subp{\dot{\mathfrak{M}}}{}{#2}{}{#1}}}
\newrobustcmd{\ddotmathfrakMz}[2][]{\ensuremath{\subp{\ddot{\mathfrak{M}}}{}{#2}{}{#1}}}
\newrobustcmd{\brevemathfrakMz}[2][]{\ensuremath{\subp{\breve{\mathfrak{M}}}{}{#2}{}{#1}}}
\newrobustcmd{\barmathfrakMz}[2][]{\ensuremath{\subp{\bar{\mathfrak{M}}}{}{#2}{}{#1}}}
\newrobustcmd{\vecmathfrakMz}[2][]{\ensuremath{\subp{\vec{\mathfrak{M}}}{}{#2}{}{#1}}}
\newrobustcmd{\bmmathfrakMz}[2][]{\ensuremath{\subp{\bm{\mathfrak{M}}}{}{#2}{}{#1}}}
\newrobustcmd{\hatbmmathfrakMz}[2][]{\ensuremath{\subp{\hat{\bm{\mathfrak{M}}}}{}{#2}{}{#1}}}
\newrobustcmd{\widehatbmmathfrakMz}[2][]{\ensuremath{\subp{\widehat{\bm{\mathfrak{M}}}}{}{#2}{}{#1}}}
\newrobustcmd{\checkbmmathfrakMz}[2][]{\ensuremath{\subp{\check{\bm{\mathfrak{M}}}}{}{#2}{}{#1}}}
\newrobustcmd{\tildebmmathfrakMz}[2][]{\ensuremath{\subp{\tilde{\bm{\mathfrak{M}}}}{}{#2}{}{#1}}}
\newrobustcmd{\widetildebmmathfrakMz}[2][]{\ensuremath{\subp{\widetilde{\bm{\mathfrak{M}}}}{}{#2}{}{#1}}}
\newrobustcmd{\acutebmmathfrakMz}[2][]{\ensuremath{\subp{\acute{\bm{\mathfrak{M}}}}{}{#2}{}{#1}}}
\newrobustcmd{\gravebmmathfrakMz}[2][]{\ensuremath{\subp{\grave{\bm{\mathfrak{M}}}}{}{#2}{}{#1}}}
\newrobustcmd{\dotbmmathfrakMz}[2][]{\ensuremath{\subp{\dot{\bm{\mathfrak{M}}}}{}{#2}{}{#1}}}
\newrobustcmd{\ddotbmmathfrakMz}[2][]{\ensuremath{\subp{\ddot{\bm{\mathfrak{M}}}}{}{#2}{}{#1}}}
\newrobustcmd{\brevebmmathfrakMz}[2][]{\ensuremath{\subp{\breve{\bm{\mathfrak{M}}}}{}{#2}{}{#1}}}
\newrobustcmd{\barbmmathfrakMz}[2][]{\ensuremath{\subp{\bar{\bm{\mathfrak{M}}}}{}{#2}{}{#1}}}
\newrobustcmd{\vecbmmathfrakMz}[2][]{\ensuremath{\subp{\vec{\bm{\mathfrak{M}}}}{}{#2}{}{#1}}}
\newrobustcmd{\mathfrakNz}[2][]{\ensuremath{\subp{\mathfrak{N}}{}{#2}{}{#1}}}
\newrobustcmd{\hatmathfrakNz}[2][]{\ensuremath{\subp{\hat{\mathfrak{N}}}{}{#2}{}{#1}}}
\newrobustcmd{\widehatmathfrakNz}[2][]{\ensuremath{\subp{\widehat{\mathfrak{N}}}{}{#2}{}{#1}}}
\newrobustcmd{\checkmathfrakNz}[2][]{\ensuremath{\subp{\check{\mathfrak{N}}}{}{#2}{}{#1}}}
\newrobustcmd{\tildemathfrakNz}[2][]{\ensuremath{\subp{\tilde{\mathfrak{N}}}{}{#2}{}{#1}}}
\newrobustcmd{\widetildemathfrakNz}[2][]{\ensuremath{\subp{\widetilde{\mathfrak{N}}}{}{#2}{}{#1}}}
\newrobustcmd{\acutemathfrakNz}[2][]{\ensuremath{\subp{\acute{\mathfrak{N}}}{}{#2}{}{#1}}}
\newrobustcmd{\gravemathfrakNz}[2][]{\ensuremath{\subp{\grave{\mathfrak{N}}}{}{#2}{}{#1}}}
\newrobustcmd{\dotmathfrakNz}[2][]{\ensuremath{\subp{\dot{\mathfrak{N}}}{}{#2}{}{#1}}}
\newrobustcmd{\ddotmathfrakNz}[2][]{\ensuremath{\subp{\ddot{\mathfrak{N}}}{}{#2}{}{#1}}}
\newrobustcmd{\brevemathfrakNz}[2][]{\ensuremath{\subp{\breve{\mathfrak{N}}}{}{#2}{}{#1}}}
\newrobustcmd{\barmathfrakNz}[2][]{\ensuremath{\subp{\bar{\mathfrak{N}}}{}{#2}{}{#1}}}
\newrobustcmd{\vecmathfrakNz}[2][]{\ensuremath{\subp{\vec{\mathfrak{N}}}{}{#2}{}{#1}}}
\newrobustcmd{\bmmathfrakNz}[2][]{\ensuremath{\subp{\bm{\mathfrak{N}}}{}{#2}{}{#1}}}
\newrobustcmd{\hatbmmathfrakNz}[2][]{\ensuremath{\subp{\hat{\bm{\mathfrak{N}}}}{}{#2}{}{#1}}}
\newrobustcmd{\widehatbmmathfrakNz}[2][]{\ensuremath{\subp{\widehat{\bm{\mathfrak{N}}}}{}{#2}{}{#1}}}
\newrobustcmd{\checkbmmathfrakNz}[2][]{\ensuremath{\subp{\check{\bm{\mathfrak{N}}}}{}{#2}{}{#1}}}
\newrobustcmd{\tildebmmathfrakNz}[2][]{\ensuremath{\subp{\tilde{\bm{\mathfrak{N}}}}{}{#2}{}{#1}}}
\newrobustcmd{\widetildebmmathfrakNz}[2][]{\ensuremath{\subp{\widetilde{\bm{\mathfrak{N}}}}{}{#2}{}{#1}}}
\newrobustcmd{\acutebmmathfrakNz}[2][]{\ensuremath{\subp{\acute{\bm{\mathfrak{N}}}}{}{#2}{}{#1}}}
\newrobustcmd{\gravebmmathfrakNz}[2][]{\ensuremath{\subp{\grave{\bm{\mathfrak{N}}}}{}{#2}{}{#1}}}
\newrobustcmd{\dotbmmathfrakNz}[2][]{\ensuremath{\subp{\dot{\bm{\mathfrak{N}}}}{}{#2}{}{#1}}}
\newrobustcmd{\ddotbmmathfrakNz}[2][]{\ensuremath{\subp{\ddot{\bm{\mathfrak{N}}}}{}{#2}{}{#1}}}
\newrobustcmd{\brevebmmathfrakNz}[2][]{\ensuremath{\subp{\breve{\bm{\mathfrak{N}}}}{}{#2}{}{#1}}}
\newrobustcmd{\barbmmathfrakNz}[2][]{\ensuremath{\subp{\bar{\bm{\mathfrak{N}}}}{}{#2}{}{#1}}}
\newrobustcmd{\vecbmmathfrakNz}[2][]{\ensuremath{\subp{\vec{\bm{\mathfrak{N}}}}{}{#2}{}{#1}}}
\newrobustcmd{\mathfrakOz}[2][]{\ensuremath{\subp{\mathfrak{O}}{}{#2}{}{#1}}}
\newrobustcmd{\hatmathfrakOz}[2][]{\ensuremath{\subp{\hat{\mathfrak{O}}}{}{#2}{}{#1}}}
\newrobustcmd{\widehatmathfrakOz}[2][]{\ensuremath{\subp{\widehat{\mathfrak{O}}}{}{#2}{}{#1}}}
\newrobustcmd{\checkmathfrakOz}[2][]{\ensuremath{\subp{\check{\mathfrak{O}}}{}{#2}{}{#1}}}
\newrobustcmd{\tildemathfrakOz}[2][]{\ensuremath{\subp{\tilde{\mathfrak{O}}}{}{#2}{}{#1}}}
\newrobustcmd{\widetildemathfrakOz}[2][]{\ensuremath{\subp{\widetilde{\mathfrak{O}}}{}{#2}{}{#1}}}
\newrobustcmd{\acutemathfrakOz}[2][]{\ensuremath{\subp{\acute{\mathfrak{O}}}{}{#2}{}{#1}}}
\newrobustcmd{\gravemathfrakOz}[2][]{\ensuremath{\subp{\grave{\mathfrak{O}}}{}{#2}{}{#1}}}
\newrobustcmd{\dotmathfrakOz}[2][]{\ensuremath{\subp{\dot{\mathfrak{O}}}{}{#2}{}{#1}}}
\newrobustcmd{\ddotmathfrakOz}[2][]{\ensuremath{\subp{\ddot{\mathfrak{O}}}{}{#2}{}{#1}}}
\newrobustcmd{\brevemathfrakOz}[2][]{\ensuremath{\subp{\breve{\mathfrak{O}}}{}{#2}{}{#1}}}
\newrobustcmd{\barmathfrakOz}[2][]{\ensuremath{\subp{\bar{\mathfrak{O}}}{}{#2}{}{#1}}}
\newrobustcmd{\vecmathfrakOz}[2][]{\ensuremath{\subp{\vec{\mathfrak{O}}}{}{#2}{}{#1}}}
\newrobustcmd{\bmmathfrakOz}[2][]{\ensuremath{\subp{\bm{\mathfrak{O}}}{}{#2}{}{#1}}}
\newrobustcmd{\hatbmmathfrakOz}[2][]{\ensuremath{\subp{\hat{\bm{\mathfrak{O}}}}{}{#2}{}{#1}}}
\newrobustcmd{\widehatbmmathfrakOz}[2][]{\ensuremath{\subp{\widehat{\bm{\mathfrak{O}}}}{}{#2}{}{#1}}}
\newrobustcmd{\checkbmmathfrakOz}[2][]{\ensuremath{\subp{\check{\bm{\mathfrak{O}}}}{}{#2}{}{#1}}}
\newrobustcmd{\tildebmmathfrakOz}[2][]{\ensuremath{\subp{\tilde{\bm{\mathfrak{O}}}}{}{#2}{}{#1}}}
\newrobustcmd{\widetildebmmathfrakOz}[2][]{\ensuremath{\subp{\widetilde{\bm{\mathfrak{O}}}}{}{#2}{}{#1}}}
\newrobustcmd{\acutebmmathfrakOz}[2][]{\ensuremath{\subp{\acute{\bm{\mathfrak{O}}}}{}{#2}{}{#1}}}
\newrobustcmd{\gravebmmathfrakOz}[2][]{\ensuremath{\subp{\grave{\bm{\mathfrak{O}}}}{}{#2}{}{#1}}}
\newrobustcmd{\dotbmmathfrakOz}[2][]{\ensuremath{\subp{\dot{\bm{\mathfrak{O}}}}{}{#2}{}{#1}}}
\newrobustcmd{\ddotbmmathfrakOz}[2][]{\ensuremath{\subp{\ddot{\bm{\mathfrak{O}}}}{}{#2}{}{#1}}}
\newrobustcmd{\brevebmmathfrakOz}[2][]{\ensuremath{\subp{\breve{\bm{\mathfrak{O}}}}{}{#2}{}{#1}}}
\newrobustcmd{\barbmmathfrakOz}[2][]{\ensuremath{\subp{\bar{\bm{\mathfrak{O}}}}{}{#2}{}{#1}}}
\newrobustcmd{\vecbmmathfrakOz}[2][]{\ensuremath{\subp{\vec{\bm{\mathfrak{O}}}}{}{#2}{}{#1}}}
\newrobustcmd{\mathfrakPz}[2][]{\ensuremath{\subp{\mathfrak{P}}{}{#2}{}{#1}}}
\newrobustcmd{\hatmathfrakPz}[2][]{\ensuremath{\subp{\hat{\mathfrak{P}}}{}{#2}{}{#1}}}
\newrobustcmd{\widehatmathfrakPz}[2][]{\ensuremath{\subp{\widehat{\mathfrak{P}}}{}{#2}{}{#1}}}
\newrobustcmd{\checkmathfrakPz}[2][]{\ensuremath{\subp{\check{\mathfrak{P}}}{}{#2}{}{#1}}}
\newrobustcmd{\tildemathfrakPz}[2][]{\ensuremath{\subp{\tilde{\mathfrak{P}}}{}{#2}{}{#1}}}
\newrobustcmd{\widetildemathfrakPz}[2][]{\ensuremath{\subp{\widetilde{\mathfrak{P}}}{}{#2}{}{#1}}}
\newrobustcmd{\acutemathfrakPz}[2][]{\ensuremath{\subp{\acute{\mathfrak{P}}}{}{#2}{}{#1}}}
\newrobustcmd{\gravemathfrakPz}[2][]{\ensuremath{\subp{\grave{\mathfrak{P}}}{}{#2}{}{#1}}}
\newrobustcmd{\dotmathfrakPz}[2][]{\ensuremath{\subp{\dot{\mathfrak{P}}}{}{#2}{}{#1}}}
\newrobustcmd{\ddotmathfrakPz}[2][]{\ensuremath{\subp{\ddot{\mathfrak{P}}}{}{#2}{}{#1}}}
\newrobustcmd{\brevemathfrakPz}[2][]{\ensuremath{\subp{\breve{\mathfrak{P}}}{}{#2}{}{#1}}}
\newrobustcmd{\barmathfrakPz}[2][]{\ensuremath{\subp{\bar{\mathfrak{P}}}{}{#2}{}{#1}}}
\newrobustcmd{\vecmathfrakPz}[2][]{\ensuremath{\subp{\vec{\mathfrak{P}}}{}{#2}{}{#1}}}
\newrobustcmd{\bmmathfrakPz}[2][]{\ensuremath{\subp{\bm{\mathfrak{P}}}{}{#2}{}{#1}}}
\newrobustcmd{\hatbmmathfrakPz}[2][]{\ensuremath{\subp{\hat{\bm{\mathfrak{P}}}}{}{#2}{}{#1}}}
\newrobustcmd{\widehatbmmathfrakPz}[2][]{\ensuremath{\subp{\widehat{\bm{\mathfrak{P}}}}{}{#2}{}{#1}}}
\newrobustcmd{\checkbmmathfrakPz}[2][]{\ensuremath{\subp{\check{\bm{\mathfrak{P}}}}{}{#2}{}{#1}}}
\newrobustcmd{\tildebmmathfrakPz}[2][]{\ensuremath{\subp{\tilde{\bm{\mathfrak{P}}}}{}{#2}{}{#1}}}
\newrobustcmd{\widetildebmmathfrakPz}[2][]{\ensuremath{\subp{\widetilde{\bm{\mathfrak{P}}}}{}{#2}{}{#1}}}
\newrobustcmd{\acutebmmathfrakPz}[2][]{\ensuremath{\subp{\acute{\bm{\mathfrak{P}}}}{}{#2}{}{#1}}}
\newrobustcmd{\gravebmmathfrakPz}[2][]{\ensuremath{\subp{\grave{\bm{\mathfrak{P}}}}{}{#2}{}{#1}}}
\newrobustcmd{\dotbmmathfrakPz}[2][]{\ensuremath{\subp{\dot{\bm{\mathfrak{P}}}}{}{#2}{}{#1}}}
\newrobustcmd{\ddotbmmathfrakPz}[2][]{\ensuremath{\subp{\ddot{\bm{\mathfrak{P}}}}{}{#2}{}{#1}}}
\newrobustcmd{\brevebmmathfrakPz}[2][]{\ensuremath{\subp{\breve{\bm{\mathfrak{P}}}}{}{#2}{}{#1}}}
\newrobustcmd{\barbmmathfrakPz}[2][]{\ensuremath{\subp{\bar{\bm{\mathfrak{P}}}}{}{#2}{}{#1}}}
\newrobustcmd{\vecbmmathfrakPz}[2][]{\ensuremath{\subp{\vec{\bm{\mathfrak{P}}}}{}{#2}{}{#1}}}
\newrobustcmd{\mathfrakQz}[2][]{\ensuremath{\subp{\mathfrak{Q}}{}{#2}{}{#1}}}
\newrobustcmd{\hatmathfrakQz}[2][]{\ensuremath{\subp{\hat{\mathfrak{Q}}}{}{#2}{}{#1}}}
\newrobustcmd{\widehatmathfrakQz}[2][]{\ensuremath{\subp{\widehat{\mathfrak{Q}}}{}{#2}{}{#1}}}
\newrobustcmd{\checkmathfrakQz}[2][]{\ensuremath{\subp{\check{\mathfrak{Q}}}{}{#2}{}{#1}}}
\newrobustcmd{\tildemathfrakQz}[2][]{\ensuremath{\subp{\tilde{\mathfrak{Q}}}{}{#2}{}{#1}}}
\newrobustcmd{\widetildemathfrakQz}[2][]{\ensuremath{\subp{\widetilde{\mathfrak{Q}}}{}{#2}{}{#1}}}
\newrobustcmd{\acutemathfrakQz}[2][]{\ensuremath{\subp{\acute{\mathfrak{Q}}}{}{#2}{}{#1}}}
\newrobustcmd{\gravemathfrakQz}[2][]{\ensuremath{\subp{\grave{\mathfrak{Q}}}{}{#2}{}{#1}}}
\newrobustcmd{\dotmathfrakQz}[2][]{\ensuremath{\subp{\dot{\mathfrak{Q}}}{}{#2}{}{#1}}}
\newrobustcmd{\ddotmathfrakQz}[2][]{\ensuremath{\subp{\ddot{\mathfrak{Q}}}{}{#2}{}{#1}}}
\newrobustcmd{\brevemathfrakQz}[2][]{\ensuremath{\subp{\breve{\mathfrak{Q}}}{}{#2}{}{#1}}}
\newrobustcmd{\barmathfrakQz}[2][]{\ensuremath{\subp{\bar{\mathfrak{Q}}}{}{#2}{}{#1}}}
\newrobustcmd{\vecmathfrakQz}[2][]{\ensuremath{\subp{\vec{\mathfrak{Q}}}{}{#2}{}{#1}}}
\newrobustcmd{\bmmathfrakQz}[2][]{\ensuremath{\subp{\bm{\mathfrak{Q}}}{}{#2}{}{#1}}}
\newrobustcmd{\hatbmmathfrakQz}[2][]{\ensuremath{\subp{\hat{\bm{\mathfrak{Q}}}}{}{#2}{}{#1}}}
\newrobustcmd{\widehatbmmathfrakQz}[2][]{\ensuremath{\subp{\widehat{\bm{\mathfrak{Q}}}}{}{#2}{}{#1}}}
\newrobustcmd{\checkbmmathfrakQz}[2][]{\ensuremath{\subp{\check{\bm{\mathfrak{Q}}}}{}{#2}{}{#1}}}
\newrobustcmd{\tildebmmathfrakQz}[2][]{\ensuremath{\subp{\tilde{\bm{\mathfrak{Q}}}}{}{#2}{}{#1}}}
\newrobustcmd{\widetildebmmathfrakQz}[2][]{\ensuremath{\subp{\widetilde{\bm{\mathfrak{Q}}}}{}{#2}{}{#1}}}
\newrobustcmd{\acutebmmathfrakQz}[2][]{\ensuremath{\subp{\acute{\bm{\mathfrak{Q}}}}{}{#2}{}{#1}}}
\newrobustcmd{\gravebmmathfrakQz}[2][]{\ensuremath{\subp{\grave{\bm{\mathfrak{Q}}}}{}{#2}{}{#1}}}
\newrobustcmd{\dotbmmathfrakQz}[2][]{\ensuremath{\subp{\dot{\bm{\mathfrak{Q}}}}{}{#2}{}{#1}}}
\newrobustcmd{\ddotbmmathfrakQz}[2][]{\ensuremath{\subp{\ddot{\bm{\mathfrak{Q}}}}{}{#2}{}{#1}}}
\newrobustcmd{\brevebmmathfrakQz}[2][]{\ensuremath{\subp{\breve{\bm{\mathfrak{Q}}}}{}{#2}{}{#1}}}
\newrobustcmd{\barbmmathfrakQz}[2][]{\ensuremath{\subp{\bar{\bm{\mathfrak{Q}}}}{}{#2}{}{#1}}}
\newrobustcmd{\vecbmmathfrakQz}[2][]{\ensuremath{\subp{\vec{\bm{\mathfrak{Q}}}}{}{#2}{}{#1}}}
\newrobustcmd{\mathfrakRz}[2][]{\ensuremath{\subp{\mathfrak{R}}{}{#2}{}{#1}}}
\newrobustcmd{\hatmathfrakRz}[2][]{\ensuremath{\subp{\hat{\mathfrak{R}}}{}{#2}{}{#1}}}
\newrobustcmd{\widehatmathfrakRz}[2][]{\ensuremath{\subp{\widehat{\mathfrak{R}}}{}{#2}{}{#1}}}
\newrobustcmd{\checkmathfrakRz}[2][]{\ensuremath{\subp{\check{\mathfrak{R}}}{}{#2}{}{#1}}}
\newrobustcmd{\tildemathfrakRz}[2][]{\ensuremath{\subp{\tilde{\mathfrak{R}}}{}{#2}{}{#1}}}
\newrobustcmd{\widetildemathfrakRz}[2][]{\ensuremath{\subp{\widetilde{\mathfrak{R}}}{}{#2}{}{#1}}}
\newrobustcmd{\acutemathfrakRz}[2][]{\ensuremath{\subp{\acute{\mathfrak{R}}}{}{#2}{}{#1}}}
\newrobustcmd{\gravemathfrakRz}[2][]{\ensuremath{\subp{\grave{\mathfrak{R}}}{}{#2}{}{#1}}}
\newrobustcmd{\dotmathfrakRz}[2][]{\ensuremath{\subp{\dot{\mathfrak{R}}}{}{#2}{}{#1}}}
\newrobustcmd{\ddotmathfrakRz}[2][]{\ensuremath{\subp{\ddot{\mathfrak{R}}}{}{#2}{}{#1}}}
\newrobustcmd{\brevemathfrakRz}[2][]{\ensuremath{\subp{\breve{\mathfrak{R}}}{}{#2}{}{#1}}}
\newrobustcmd{\barmathfrakRz}[2][]{\ensuremath{\subp{\bar{\mathfrak{R}}}{}{#2}{}{#1}}}
\newrobustcmd{\vecmathfrakRz}[2][]{\ensuremath{\subp{\vec{\mathfrak{R}}}{}{#2}{}{#1}}}
\newrobustcmd{\bmmathfrakRz}[2][]{\ensuremath{\subp{\bm{\mathfrak{R}}}{}{#2}{}{#1}}}
\newrobustcmd{\hatbmmathfrakRz}[2][]{\ensuremath{\subp{\hat{\bm{\mathfrak{R}}}}{}{#2}{}{#1}}}
\newrobustcmd{\widehatbmmathfrakRz}[2][]{\ensuremath{\subp{\widehat{\bm{\mathfrak{R}}}}{}{#2}{}{#1}}}
\newrobustcmd{\checkbmmathfrakRz}[2][]{\ensuremath{\subp{\check{\bm{\mathfrak{R}}}}{}{#2}{}{#1}}}
\newrobustcmd{\tildebmmathfrakRz}[2][]{\ensuremath{\subp{\tilde{\bm{\mathfrak{R}}}}{}{#2}{}{#1}}}
\newrobustcmd{\widetildebmmathfrakRz}[2][]{\ensuremath{\subp{\widetilde{\bm{\mathfrak{R}}}}{}{#2}{}{#1}}}
\newrobustcmd{\acutebmmathfrakRz}[2][]{\ensuremath{\subp{\acute{\bm{\mathfrak{R}}}}{}{#2}{}{#1}}}
\newrobustcmd{\gravebmmathfrakRz}[2][]{\ensuremath{\subp{\grave{\bm{\mathfrak{R}}}}{}{#2}{}{#1}}}
\newrobustcmd{\dotbmmathfrakRz}[2][]{\ensuremath{\subp{\dot{\bm{\mathfrak{R}}}}{}{#2}{}{#1}}}
\newrobustcmd{\ddotbmmathfrakRz}[2][]{\ensuremath{\subp{\ddot{\bm{\mathfrak{R}}}}{}{#2}{}{#1}}}
\newrobustcmd{\brevebmmathfrakRz}[2][]{\ensuremath{\subp{\breve{\bm{\mathfrak{R}}}}{}{#2}{}{#1}}}
\newrobustcmd{\barbmmathfrakRz}[2][]{\ensuremath{\subp{\bar{\bm{\mathfrak{R}}}}{}{#2}{}{#1}}}
\newrobustcmd{\vecbmmathfrakRz}[2][]{\ensuremath{\subp{\vec{\bm{\mathfrak{R}}}}{}{#2}{}{#1}}}
\newrobustcmd{\mathfrakSz}[2][]{\ensuremath{\subp{\mathfrak{S}}{}{#2}{}{#1}}}
\newrobustcmd{\hatmathfrakSz}[2][]{\ensuremath{\subp{\hat{\mathfrak{S}}}{}{#2}{}{#1}}}
\newrobustcmd{\widehatmathfrakSz}[2][]{\ensuremath{\subp{\widehat{\mathfrak{S}}}{}{#2}{}{#1}}}
\newrobustcmd{\checkmathfrakSz}[2][]{\ensuremath{\subp{\check{\mathfrak{S}}}{}{#2}{}{#1}}}
\newrobustcmd{\tildemathfrakSz}[2][]{\ensuremath{\subp{\tilde{\mathfrak{S}}}{}{#2}{}{#1}}}
\newrobustcmd{\widetildemathfrakSz}[2][]{\ensuremath{\subp{\widetilde{\mathfrak{S}}}{}{#2}{}{#1}}}
\newrobustcmd{\acutemathfrakSz}[2][]{\ensuremath{\subp{\acute{\mathfrak{S}}}{}{#2}{}{#1}}}
\newrobustcmd{\gravemathfrakSz}[2][]{\ensuremath{\subp{\grave{\mathfrak{S}}}{}{#2}{}{#1}}}
\newrobustcmd{\dotmathfrakSz}[2][]{\ensuremath{\subp{\dot{\mathfrak{S}}}{}{#2}{}{#1}}}
\newrobustcmd{\ddotmathfrakSz}[2][]{\ensuremath{\subp{\ddot{\mathfrak{S}}}{}{#2}{}{#1}}}
\newrobustcmd{\brevemathfrakSz}[2][]{\ensuremath{\subp{\breve{\mathfrak{S}}}{}{#2}{}{#1}}}
\newrobustcmd{\barmathfrakSz}[2][]{\ensuremath{\subp{\bar{\mathfrak{S}}}{}{#2}{}{#1}}}
\newrobustcmd{\vecmathfrakSz}[2][]{\ensuremath{\subp{\vec{\mathfrak{S}}}{}{#2}{}{#1}}}
\newrobustcmd{\bmmathfrakSz}[2][]{\ensuremath{\subp{\bm{\mathfrak{S}}}{}{#2}{}{#1}}}
\newrobustcmd{\hatbmmathfrakSz}[2][]{\ensuremath{\subp{\hat{\bm{\mathfrak{S}}}}{}{#2}{}{#1}}}
\newrobustcmd{\widehatbmmathfrakSz}[2][]{\ensuremath{\subp{\widehat{\bm{\mathfrak{S}}}}{}{#2}{}{#1}}}
\newrobustcmd{\checkbmmathfrakSz}[2][]{\ensuremath{\subp{\check{\bm{\mathfrak{S}}}}{}{#2}{}{#1}}}
\newrobustcmd{\tildebmmathfrakSz}[2][]{\ensuremath{\subp{\tilde{\bm{\mathfrak{S}}}}{}{#2}{}{#1}}}
\newrobustcmd{\widetildebmmathfrakSz}[2][]{\ensuremath{\subp{\widetilde{\bm{\mathfrak{S}}}}{}{#2}{}{#1}}}
\newrobustcmd{\acutebmmathfrakSz}[2][]{\ensuremath{\subp{\acute{\bm{\mathfrak{S}}}}{}{#2}{}{#1}}}
\newrobustcmd{\gravebmmathfrakSz}[2][]{\ensuremath{\subp{\grave{\bm{\mathfrak{S}}}}{}{#2}{}{#1}}}
\newrobustcmd{\dotbmmathfrakSz}[2][]{\ensuremath{\subp{\dot{\bm{\mathfrak{S}}}}{}{#2}{}{#1}}}
\newrobustcmd{\ddotbmmathfrakSz}[2][]{\ensuremath{\subp{\ddot{\bm{\mathfrak{S}}}}{}{#2}{}{#1}}}
\newrobustcmd{\brevebmmathfrakSz}[2][]{\ensuremath{\subp{\breve{\bm{\mathfrak{S}}}}{}{#2}{}{#1}}}
\newrobustcmd{\barbmmathfrakSz}[2][]{\ensuremath{\subp{\bar{\bm{\mathfrak{S}}}}{}{#2}{}{#1}}}
\newrobustcmd{\vecbmmathfrakSz}[2][]{\ensuremath{\subp{\vec{\bm{\mathfrak{S}}}}{}{#2}{}{#1}}}
\newrobustcmd{\mathfrakTz}[2][]{\ensuremath{\subp{\mathfrak{T}}{}{#2}{}{#1}}}
\newrobustcmd{\hatmathfrakTz}[2][]{\ensuremath{\subp{\hat{\mathfrak{T}}}{}{#2}{}{#1}}}
\newrobustcmd{\widehatmathfrakTz}[2][]{\ensuremath{\subp{\widehat{\mathfrak{T}}}{}{#2}{}{#1}}}
\newrobustcmd{\checkmathfrakTz}[2][]{\ensuremath{\subp{\check{\mathfrak{T}}}{}{#2}{}{#1}}}
\newrobustcmd{\tildemathfrakTz}[2][]{\ensuremath{\subp{\tilde{\mathfrak{T}}}{}{#2}{}{#1}}}
\newrobustcmd{\widetildemathfrakTz}[2][]{\ensuremath{\subp{\widetilde{\mathfrak{T}}}{}{#2}{}{#1}}}
\newrobustcmd{\acutemathfrakTz}[2][]{\ensuremath{\subp{\acute{\mathfrak{T}}}{}{#2}{}{#1}}}
\newrobustcmd{\gravemathfrakTz}[2][]{\ensuremath{\subp{\grave{\mathfrak{T}}}{}{#2}{}{#1}}}
\newrobustcmd{\dotmathfrakTz}[2][]{\ensuremath{\subp{\dot{\mathfrak{T}}}{}{#2}{}{#1}}}
\newrobustcmd{\ddotmathfrakTz}[2][]{\ensuremath{\subp{\ddot{\mathfrak{T}}}{}{#2}{}{#1}}}
\newrobustcmd{\brevemathfrakTz}[2][]{\ensuremath{\subp{\breve{\mathfrak{T}}}{}{#2}{}{#1}}}
\newrobustcmd{\barmathfrakTz}[2][]{\ensuremath{\subp{\bar{\mathfrak{T}}}{}{#2}{}{#1}}}
\newrobustcmd{\vecmathfrakTz}[2][]{\ensuremath{\subp{\vec{\mathfrak{T}}}{}{#2}{}{#1}}}
\newrobustcmd{\bmmathfrakTz}[2][]{\ensuremath{\subp{\bm{\mathfrak{T}}}{}{#2}{}{#1}}}
\newrobustcmd{\hatbmmathfrakTz}[2][]{\ensuremath{\subp{\hat{\bm{\mathfrak{T}}}}{}{#2}{}{#1}}}
\newrobustcmd{\widehatbmmathfrakTz}[2][]{\ensuremath{\subp{\widehat{\bm{\mathfrak{T}}}}{}{#2}{}{#1}}}
\newrobustcmd{\checkbmmathfrakTz}[2][]{\ensuremath{\subp{\check{\bm{\mathfrak{T}}}}{}{#2}{}{#1}}}
\newrobustcmd{\tildebmmathfrakTz}[2][]{\ensuremath{\subp{\tilde{\bm{\mathfrak{T}}}}{}{#2}{}{#1}}}
\newrobustcmd{\widetildebmmathfrakTz}[2][]{\ensuremath{\subp{\widetilde{\bm{\mathfrak{T}}}}{}{#2}{}{#1}}}
\newrobustcmd{\acutebmmathfrakTz}[2][]{\ensuremath{\subp{\acute{\bm{\mathfrak{T}}}}{}{#2}{}{#1}}}
\newrobustcmd{\gravebmmathfrakTz}[2][]{\ensuremath{\subp{\grave{\bm{\mathfrak{T}}}}{}{#2}{}{#1}}}
\newrobustcmd{\dotbmmathfrakTz}[2][]{\ensuremath{\subp{\dot{\bm{\mathfrak{T}}}}{}{#2}{}{#1}}}
\newrobustcmd{\ddotbmmathfrakTz}[2][]{\ensuremath{\subp{\ddot{\bm{\mathfrak{T}}}}{}{#2}{}{#1}}}
\newrobustcmd{\brevebmmathfrakTz}[2][]{\ensuremath{\subp{\breve{\bm{\mathfrak{T}}}}{}{#2}{}{#1}}}
\newrobustcmd{\barbmmathfrakTz}[2][]{\ensuremath{\subp{\bar{\bm{\mathfrak{T}}}}{}{#2}{}{#1}}}
\newrobustcmd{\vecbmmathfrakTz}[2][]{\ensuremath{\subp{\vec{\bm{\mathfrak{T}}}}{}{#2}{}{#1}}}
\newrobustcmd{\mathfrakUz}[2][]{\ensuremath{\subp{\mathfrak{U}}{}{#2}{}{#1}}}
\newrobustcmd{\hatmathfrakUz}[2][]{\ensuremath{\subp{\hat{\mathfrak{U}}}{}{#2}{}{#1}}}
\newrobustcmd{\widehatmathfrakUz}[2][]{\ensuremath{\subp{\widehat{\mathfrak{U}}}{}{#2}{}{#1}}}
\newrobustcmd{\checkmathfrakUz}[2][]{\ensuremath{\subp{\check{\mathfrak{U}}}{}{#2}{}{#1}}}
\newrobustcmd{\tildemathfrakUz}[2][]{\ensuremath{\subp{\tilde{\mathfrak{U}}}{}{#2}{}{#1}}}
\newrobustcmd{\widetildemathfrakUz}[2][]{\ensuremath{\subp{\widetilde{\mathfrak{U}}}{}{#2}{}{#1}}}
\newrobustcmd{\acutemathfrakUz}[2][]{\ensuremath{\subp{\acute{\mathfrak{U}}}{}{#2}{}{#1}}}
\newrobustcmd{\gravemathfrakUz}[2][]{\ensuremath{\subp{\grave{\mathfrak{U}}}{}{#2}{}{#1}}}
\newrobustcmd{\dotmathfrakUz}[2][]{\ensuremath{\subp{\dot{\mathfrak{U}}}{}{#2}{}{#1}}}
\newrobustcmd{\ddotmathfrakUz}[2][]{\ensuremath{\subp{\ddot{\mathfrak{U}}}{}{#2}{}{#1}}}
\newrobustcmd{\brevemathfrakUz}[2][]{\ensuremath{\subp{\breve{\mathfrak{U}}}{}{#2}{}{#1}}}
\newrobustcmd{\barmathfrakUz}[2][]{\ensuremath{\subp{\bar{\mathfrak{U}}}{}{#2}{}{#1}}}
\newrobustcmd{\vecmathfrakUz}[2][]{\ensuremath{\subp{\vec{\mathfrak{U}}}{}{#2}{}{#1}}}
\newrobustcmd{\bmmathfrakUz}[2][]{\ensuremath{\subp{\bm{\mathfrak{U}}}{}{#2}{}{#1}}}
\newrobustcmd{\hatbmmathfrakUz}[2][]{\ensuremath{\subp{\hat{\bm{\mathfrak{U}}}}{}{#2}{}{#1}}}
\newrobustcmd{\widehatbmmathfrakUz}[2][]{\ensuremath{\subp{\widehat{\bm{\mathfrak{U}}}}{}{#2}{}{#1}}}
\newrobustcmd{\checkbmmathfrakUz}[2][]{\ensuremath{\subp{\check{\bm{\mathfrak{U}}}}{}{#2}{}{#1}}}
\newrobustcmd{\tildebmmathfrakUz}[2][]{\ensuremath{\subp{\tilde{\bm{\mathfrak{U}}}}{}{#2}{}{#1}}}
\newrobustcmd{\widetildebmmathfrakUz}[2][]{\ensuremath{\subp{\widetilde{\bm{\mathfrak{U}}}}{}{#2}{}{#1}}}
\newrobustcmd{\acutebmmathfrakUz}[2][]{\ensuremath{\subp{\acute{\bm{\mathfrak{U}}}}{}{#2}{}{#1}}}
\newrobustcmd{\gravebmmathfrakUz}[2][]{\ensuremath{\subp{\grave{\bm{\mathfrak{U}}}}{}{#2}{}{#1}}}
\newrobustcmd{\dotbmmathfrakUz}[2][]{\ensuremath{\subp{\dot{\bm{\mathfrak{U}}}}{}{#2}{}{#1}}}
\newrobustcmd{\ddotbmmathfrakUz}[2][]{\ensuremath{\subp{\ddot{\bm{\mathfrak{U}}}}{}{#2}{}{#1}}}
\newrobustcmd{\brevebmmathfrakUz}[2][]{\ensuremath{\subp{\breve{\bm{\mathfrak{U}}}}{}{#2}{}{#1}}}
\newrobustcmd{\barbmmathfrakUz}[2][]{\ensuremath{\subp{\bar{\bm{\mathfrak{U}}}}{}{#2}{}{#1}}}
\newrobustcmd{\vecbmmathfrakUz}[2][]{\ensuremath{\subp{\vec{\bm{\mathfrak{U}}}}{}{#2}{}{#1}}}
\newrobustcmd{\mathfrakVz}[2][]{\ensuremath{\subp{\mathfrak{V}}{}{#2}{}{#1}}}
\newrobustcmd{\hatmathfrakVz}[2][]{\ensuremath{\subp{\hat{\mathfrak{V}}}{}{#2}{}{#1}}}
\newrobustcmd{\widehatmathfrakVz}[2][]{\ensuremath{\subp{\widehat{\mathfrak{V}}}{}{#2}{}{#1}}}
\newrobustcmd{\checkmathfrakVz}[2][]{\ensuremath{\subp{\check{\mathfrak{V}}}{}{#2}{}{#1}}}
\newrobustcmd{\tildemathfrakVz}[2][]{\ensuremath{\subp{\tilde{\mathfrak{V}}}{}{#2}{}{#1}}}
\newrobustcmd{\widetildemathfrakVz}[2][]{\ensuremath{\subp{\widetilde{\mathfrak{V}}}{}{#2}{}{#1}}}
\newrobustcmd{\acutemathfrakVz}[2][]{\ensuremath{\subp{\acute{\mathfrak{V}}}{}{#2}{}{#1}}}
\newrobustcmd{\gravemathfrakVz}[2][]{\ensuremath{\subp{\grave{\mathfrak{V}}}{}{#2}{}{#1}}}
\newrobustcmd{\dotmathfrakVz}[2][]{\ensuremath{\subp{\dot{\mathfrak{V}}}{}{#2}{}{#1}}}
\newrobustcmd{\ddotmathfrakVz}[2][]{\ensuremath{\subp{\ddot{\mathfrak{V}}}{}{#2}{}{#1}}}
\newrobustcmd{\brevemathfrakVz}[2][]{\ensuremath{\subp{\breve{\mathfrak{V}}}{}{#2}{}{#1}}}
\newrobustcmd{\barmathfrakVz}[2][]{\ensuremath{\subp{\bar{\mathfrak{V}}}{}{#2}{}{#1}}}
\newrobustcmd{\vecmathfrakVz}[2][]{\ensuremath{\subp{\vec{\mathfrak{V}}}{}{#2}{}{#1}}}
\newrobustcmd{\bmmathfrakVz}[2][]{\ensuremath{\subp{\bm{\mathfrak{V}}}{}{#2}{}{#1}}}
\newrobustcmd{\hatbmmathfrakVz}[2][]{\ensuremath{\subp{\hat{\bm{\mathfrak{V}}}}{}{#2}{}{#1}}}
\newrobustcmd{\widehatbmmathfrakVz}[2][]{\ensuremath{\subp{\widehat{\bm{\mathfrak{V}}}}{}{#2}{}{#1}}}
\newrobustcmd{\checkbmmathfrakVz}[2][]{\ensuremath{\subp{\check{\bm{\mathfrak{V}}}}{}{#2}{}{#1}}}
\newrobustcmd{\tildebmmathfrakVz}[2][]{\ensuremath{\subp{\tilde{\bm{\mathfrak{V}}}}{}{#2}{}{#1}}}
\newrobustcmd{\widetildebmmathfrakVz}[2][]{\ensuremath{\subp{\widetilde{\bm{\mathfrak{V}}}}{}{#2}{}{#1}}}
\newrobustcmd{\acutebmmathfrakVz}[2][]{\ensuremath{\subp{\acute{\bm{\mathfrak{V}}}}{}{#2}{}{#1}}}
\newrobustcmd{\gravebmmathfrakVz}[2][]{\ensuremath{\subp{\grave{\bm{\mathfrak{V}}}}{}{#2}{}{#1}}}
\newrobustcmd{\dotbmmathfrakVz}[2][]{\ensuremath{\subp{\dot{\bm{\mathfrak{V}}}}{}{#2}{}{#1}}}
\newrobustcmd{\ddotbmmathfrakVz}[2][]{\ensuremath{\subp{\ddot{\bm{\mathfrak{V}}}}{}{#2}{}{#1}}}
\newrobustcmd{\brevebmmathfrakVz}[2][]{\ensuremath{\subp{\breve{\bm{\mathfrak{V}}}}{}{#2}{}{#1}}}
\newrobustcmd{\barbmmathfrakVz}[2][]{\ensuremath{\subp{\bar{\bm{\mathfrak{V}}}}{}{#2}{}{#1}}}
\newrobustcmd{\vecbmmathfrakVz}[2][]{\ensuremath{\subp{\vec{\bm{\mathfrak{V}}}}{}{#2}{}{#1}}}
\newrobustcmd{\mathfrakWz}[2][]{\ensuremath{\subp{\mathfrak{W}}{}{#2}{}{#1}}}
\newrobustcmd{\hatmathfrakWz}[2][]{\ensuremath{\subp{\hat{\mathfrak{W}}}{}{#2}{}{#1}}}
\newrobustcmd{\widehatmathfrakWz}[2][]{\ensuremath{\subp{\widehat{\mathfrak{W}}}{}{#2}{}{#1}}}
\newrobustcmd{\checkmathfrakWz}[2][]{\ensuremath{\subp{\check{\mathfrak{W}}}{}{#2}{}{#1}}}
\newrobustcmd{\tildemathfrakWz}[2][]{\ensuremath{\subp{\tilde{\mathfrak{W}}}{}{#2}{}{#1}}}
\newrobustcmd{\widetildemathfrakWz}[2][]{\ensuremath{\subp{\widetilde{\mathfrak{W}}}{}{#2}{}{#1}}}
\newrobustcmd{\acutemathfrakWz}[2][]{\ensuremath{\subp{\acute{\mathfrak{W}}}{}{#2}{}{#1}}}
\newrobustcmd{\gravemathfrakWz}[2][]{\ensuremath{\subp{\grave{\mathfrak{W}}}{}{#2}{}{#1}}}
\newrobustcmd{\dotmathfrakWz}[2][]{\ensuremath{\subp{\dot{\mathfrak{W}}}{}{#2}{}{#1}}}
\newrobustcmd{\ddotmathfrakWz}[2][]{\ensuremath{\subp{\ddot{\mathfrak{W}}}{}{#2}{}{#1}}}
\newrobustcmd{\brevemathfrakWz}[2][]{\ensuremath{\subp{\breve{\mathfrak{W}}}{}{#2}{}{#1}}}
\newrobustcmd{\barmathfrakWz}[2][]{\ensuremath{\subp{\bar{\mathfrak{W}}}{}{#2}{}{#1}}}
\newrobustcmd{\vecmathfrakWz}[2][]{\ensuremath{\subp{\vec{\mathfrak{W}}}{}{#2}{}{#1}}}
\newrobustcmd{\bmmathfrakWz}[2][]{\ensuremath{\subp{\bm{\mathfrak{W}}}{}{#2}{}{#1}}}
\newrobustcmd{\hatbmmathfrakWz}[2][]{\ensuremath{\subp{\hat{\bm{\mathfrak{W}}}}{}{#2}{}{#1}}}
\newrobustcmd{\widehatbmmathfrakWz}[2][]{\ensuremath{\subp{\widehat{\bm{\mathfrak{W}}}}{}{#2}{}{#1}}}
\newrobustcmd{\checkbmmathfrakWz}[2][]{\ensuremath{\subp{\check{\bm{\mathfrak{W}}}}{}{#2}{}{#1}}}
\newrobustcmd{\tildebmmathfrakWz}[2][]{\ensuremath{\subp{\tilde{\bm{\mathfrak{W}}}}{}{#2}{}{#1}}}
\newrobustcmd{\widetildebmmathfrakWz}[2][]{\ensuremath{\subp{\widetilde{\bm{\mathfrak{W}}}}{}{#2}{}{#1}}}
\newrobustcmd{\acutebmmathfrakWz}[2][]{\ensuremath{\subp{\acute{\bm{\mathfrak{W}}}}{}{#2}{}{#1}}}
\newrobustcmd{\gravebmmathfrakWz}[2][]{\ensuremath{\subp{\grave{\bm{\mathfrak{W}}}}{}{#2}{}{#1}}}
\newrobustcmd{\dotbmmathfrakWz}[2][]{\ensuremath{\subp{\dot{\bm{\mathfrak{W}}}}{}{#2}{}{#1}}}
\newrobustcmd{\ddotbmmathfrakWz}[2][]{\ensuremath{\subp{\ddot{\bm{\mathfrak{W}}}}{}{#2}{}{#1}}}
\newrobustcmd{\brevebmmathfrakWz}[2][]{\ensuremath{\subp{\breve{\bm{\mathfrak{W}}}}{}{#2}{}{#1}}}
\newrobustcmd{\barbmmathfrakWz}[2][]{\ensuremath{\subp{\bar{\bm{\mathfrak{W}}}}{}{#2}{}{#1}}}
\newrobustcmd{\vecbmmathfrakWz}[2][]{\ensuremath{\subp{\vec{\bm{\mathfrak{W}}}}{}{#2}{}{#1}}}
\newrobustcmd{\mathfrakXz}[2][]{\ensuremath{\subp{\mathfrak{X}}{}{#2}{}{#1}}}
\newrobustcmd{\hatmathfrakXz}[2][]{\ensuremath{\subp{\hat{\mathfrak{X}}}{}{#2}{}{#1}}}
\newrobustcmd{\widehatmathfrakXz}[2][]{\ensuremath{\subp{\widehat{\mathfrak{X}}}{}{#2}{}{#1}}}
\newrobustcmd{\checkmathfrakXz}[2][]{\ensuremath{\subp{\check{\mathfrak{X}}}{}{#2}{}{#1}}}
\newrobustcmd{\tildemathfrakXz}[2][]{\ensuremath{\subp{\tilde{\mathfrak{X}}}{}{#2}{}{#1}}}
\newrobustcmd{\widetildemathfrakXz}[2][]{\ensuremath{\subp{\widetilde{\mathfrak{X}}}{}{#2}{}{#1}}}
\newrobustcmd{\acutemathfrakXz}[2][]{\ensuremath{\subp{\acute{\mathfrak{X}}}{}{#2}{}{#1}}}
\newrobustcmd{\gravemathfrakXz}[2][]{\ensuremath{\subp{\grave{\mathfrak{X}}}{}{#2}{}{#1}}}
\newrobustcmd{\dotmathfrakXz}[2][]{\ensuremath{\subp{\dot{\mathfrak{X}}}{}{#2}{}{#1}}}
\newrobustcmd{\ddotmathfrakXz}[2][]{\ensuremath{\subp{\ddot{\mathfrak{X}}}{}{#2}{}{#1}}}
\newrobustcmd{\brevemathfrakXz}[2][]{\ensuremath{\subp{\breve{\mathfrak{X}}}{}{#2}{}{#1}}}
\newrobustcmd{\barmathfrakXz}[2][]{\ensuremath{\subp{\bar{\mathfrak{X}}}{}{#2}{}{#1}}}
\newrobustcmd{\vecmathfrakXz}[2][]{\ensuremath{\subp{\vec{\mathfrak{X}}}{}{#2}{}{#1}}}
\newrobustcmd{\bmmathfrakXz}[2][]{\ensuremath{\subp{\bm{\mathfrak{X}}}{}{#2}{}{#1}}}
\newrobustcmd{\hatbmmathfrakXz}[2][]{\ensuremath{\subp{\hat{\bm{\mathfrak{X}}}}{}{#2}{}{#1}}}
\newrobustcmd{\widehatbmmathfrakXz}[2][]{\ensuremath{\subp{\widehat{\bm{\mathfrak{X}}}}{}{#2}{}{#1}}}
\newrobustcmd{\checkbmmathfrakXz}[2][]{\ensuremath{\subp{\check{\bm{\mathfrak{X}}}}{}{#2}{}{#1}}}
\newrobustcmd{\tildebmmathfrakXz}[2][]{\ensuremath{\subp{\tilde{\bm{\mathfrak{X}}}}{}{#2}{}{#1}}}
\newrobustcmd{\widetildebmmathfrakXz}[2][]{\ensuremath{\subp{\widetilde{\bm{\mathfrak{X}}}}{}{#2}{}{#1}}}
\newrobustcmd{\acutebmmathfrakXz}[2][]{\ensuremath{\subp{\acute{\bm{\mathfrak{X}}}}{}{#2}{}{#1}}}
\newrobustcmd{\gravebmmathfrakXz}[2][]{\ensuremath{\subp{\grave{\bm{\mathfrak{X}}}}{}{#2}{}{#1}}}
\newrobustcmd{\dotbmmathfrakXz}[2][]{\ensuremath{\subp{\dot{\bm{\mathfrak{X}}}}{}{#2}{}{#1}}}
\newrobustcmd{\ddotbmmathfrakXz}[2][]{\ensuremath{\subp{\ddot{\bm{\mathfrak{X}}}}{}{#2}{}{#1}}}
\newrobustcmd{\brevebmmathfrakXz}[2][]{\ensuremath{\subp{\breve{\bm{\mathfrak{X}}}}{}{#2}{}{#1}}}
\newrobustcmd{\barbmmathfrakXz}[2][]{\ensuremath{\subp{\bar{\bm{\mathfrak{X}}}}{}{#2}{}{#1}}}
\newrobustcmd{\vecbmmathfrakXz}[2][]{\ensuremath{\subp{\vec{\bm{\mathfrak{X}}}}{}{#2}{}{#1}}}
\newrobustcmd{\mathfrakYz}[2][]{\ensuremath{\subp{\mathfrak{Y}}{}{#2}{}{#1}}}
\newrobustcmd{\hatmathfrakYz}[2][]{\ensuremath{\subp{\hat{\mathfrak{Y}}}{}{#2}{}{#1}}}
\newrobustcmd{\widehatmathfrakYz}[2][]{\ensuremath{\subp{\widehat{\mathfrak{Y}}}{}{#2}{}{#1}}}
\newrobustcmd{\checkmathfrakYz}[2][]{\ensuremath{\subp{\check{\mathfrak{Y}}}{}{#2}{}{#1}}}
\newrobustcmd{\tildemathfrakYz}[2][]{\ensuremath{\subp{\tilde{\mathfrak{Y}}}{}{#2}{}{#1}}}
\newrobustcmd{\widetildemathfrakYz}[2][]{\ensuremath{\subp{\widetilde{\mathfrak{Y}}}{}{#2}{}{#1}}}
\newrobustcmd{\acutemathfrakYz}[2][]{\ensuremath{\subp{\acute{\mathfrak{Y}}}{}{#2}{}{#1}}}
\newrobustcmd{\gravemathfrakYz}[2][]{\ensuremath{\subp{\grave{\mathfrak{Y}}}{}{#2}{}{#1}}}
\newrobustcmd{\dotmathfrakYz}[2][]{\ensuremath{\subp{\dot{\mathfrak{Y}}}{}{#2}{}{#1}}}
\newrobustcmd{\ddotmathfrakYz}[2][]{\ensuremath{\subp{\ddot{\mathfrak{Y}}}{}{#2}{}{#1}}}
\newrobustcmd{\brevemathfrakYz}[2][]{\ensuremath{\subp{\breve{\mathfrak{Y}}}{}{#2}{}{#1}}}
\newrobustcmd{\barmathfrakYz}[2][]{\ensuremath{\subp{\bar{\mathfrak{Y}}}{}{#2}{}{#1}}}
\newrobustcmd{\vecmathfrakYz}[2][]{\ensuremath{\subp{\vec{\mathfrak{Y}}}{}{#2}{}{#1}}}
\newrobustcmd{\bmmathfrakYz}[2][]{\ensuremath{\subp{\bm{\mathfrak{Y}}}{}{#2}{}{#1}}}
\newrobustcmd{\hatbmmathfrakYz}[2][]{\ensuremath{\subp{\hat{\bm{\mathfrak{Y}}}}{}{#2}{}{#1}}}
\newrobustcmd{\widehatbmmathfrakYz}[2][]{\ensuremath{\subp{\widehat{\bm{\mathfrak{Y}}}}{}{#2}{}{#1}}}
\newrobustcmd{\checkbmmathfrakYz}[2][]{\ensuremath{\subp{\check{\bm{\mathfrak{Y}}}}{}{#2}{}{#1}}}
\newrobustcmd{\tildebmmathfrakYz}[2][]{\ensuremath{\subp{\tilde{\bm{\mathfrak{Y}}}}{}{#2}{}{#1}}}
\newrobustcmd{\widetildebmmathfrakYz}[2][]{\ensuremath{\subp{\widetilde{\bm{\mathfrak{Y}}}}{}{#2}{}{#1}}}
\newrobustcmd{\acutebmmathfrakYz}[2][]{\ensuremath{\subp{\acute{\bm{\mathfrak{Y}}}}{}{#2}{}{#1}}}
\newrobustcmd{\gravebmmathfrakYz}[2][]{\ensuremath{\subp{\grave{\bm{\mathfrak{Y}}}}{}{#2}{}{#1}}}
\newrobustcmd{\dotbmmathfrakYz}[2][]{\ensuremath{\subp{\dot{\bm{\mathfrak{Y}}}}{}{#2}{}{#1}}}
\newrobustcmd{\ddotbmmathfrakYz}[2][]{\ensuremath{\subp{\ddot{\bm{\mathfrak{Y}}}}{}{#2}{}{#1}}}
\newrobustcmd{\brevebmmathfrakYz}[2][]{\ensuremath{\subp{\breve{\bm{\mathfrak{Y}}}}{}{#2}{}{#1}}}
\newrobustcmd{\barbmmathfrakYz}[2][]{\ensuremath{\subp{\bar{\bm{\mathfrak{Y}}}}{}{#2}{}{#1}}}
\newrobustcmd{\vecbmmathfrakYz}[2][]{\ensuremath{\subp{\vec{\bm{\mathfrak{Y}}}}{}{#2}{}{#1}}}
\newrobustcmd{\mathfrakZz}[2][]{\ensuremath{\subp{\mathfrak{Z}}{}{#2}{}{#1}}}
\newrobustcmd{\hatmathfrakZz}[2][]{\ensuremath{\subp{\hat{\mathfrak{Z}}}{}{#2}{}{#1}}}
\newrobustcmd{\widehatmathfrakZz}[2][]{\ensuremath{\subp{\widehat{\mathfrak{Z}}}{}{#2}{}{#1}}}
\newrobustcmd{\checkmathfrakZz}[2][]{\ensuremath{\subp{\check{\mathfrak{Z}}}{}{#2}{}{#1}}}
\newrobustcmd{\tildemathfrakZz}[2][]{\ensuremath{\subp{\tilde{\mathfrak{Z}}}{}{#2}{}{#1}}}
\newrobustcmd{\widetildemathfrakZz}[2][]{\ensuremath{\subp{\widetilde{\mathfrak{Z}}}{}{#2}{}{#1}}}
\newrobustcmd{\acutemathfrakZz}[2][]{\ensuremath{\subp{\acute{\mathfrak{Z}}}{}{#2}{}{#1}}}
\newrobustcmd{\gravemathfrakZz}[2][]{\ensuremath{\subp{\grave{\mathfrak{Z}}}{}{#2}{}{#1}}}
\newrobustcmd{\dotmathfrakZz}[2][]{\ensuremath{\subp{\dot{\mathfrak{Z}}}{}{#2}{}{#1}}}
\newrobustcmd{\ddotmathfrakZz}[2][]{\ensuremath{\subp{\ddot{\mathfrak{Z}}}{}{#2}{}{#1}}}
\newrobustcmd{\brevemathfrakZz}[2][]{\ensuremath{\subp{\breve{\mathfrak{Z}}}{}{#2}{}{#1}}}
\newrobustcmd{\barmathfrakZz}[2][]{\ensuremath{\subp{\bar{\mathfrak{Z}}}{}{#2}{}{#1}}}
\newrobustcmd{\vecmathfrakZz}[2][]{\ensuremath{\subp{\vec{\mathfrak{Z}}}{}{#2}{}{#1}}}
\newrobustcmd{\bmmathfrakZz}[2][]{\ensuremath{\subp{\bm{\mathfrak{Z}}}{}{#2}{}{#1}}}
\newrobustcmd{\hatbmmathfrakZz}[2][]{\ensuremath{\subp{\hat{\bm{\mathfrak{Z}}}}{}{#2}{}{#1}}}
\newrobustcmd{\widehatbmmathfrakZz}[2][]{\ensuremath{\subp{\widehat{\bm{\mathfrak{Z}}}}{}{#2}{}{#1}}}
\newrobustcmd{\checkbmmathfrakZz}[2][]{\ensuremath{\subp{\check{\bm{\mathfrak{Z}}}}{}{#2}{}{#1}}}
\newrobustcmd{\tildebmmathfrakZz}[2][]{\ensuremath{\subp{\tilde{\bm{\mathfrak{Z}}}}{}{#2}{}{#1}}}
\newrobustcmd{\widetildebmmathfrakZz}[2][]{\ensuremath{\subp{\widetilde{\bm{\mathfrak{Z}}}}{}{#2}{}{#1}}}
\newrobustcmd{\acutebmmathfrakZz}[2][]{\ensuremath{\subp{\acute{\bm{\mathfrak{Z}}}}{}{#2}{}{#1}}}
\newrobustcmd{\gravebmmathfrakZz}[2][]{\ensuremath{\subp{\grave{\bm{\mathfrak{Z}}}}{}{#2}{}{#1}}}
\newrobustcmd{\dotbmmathfrakZz}[2][]{\ensuremath{\subp{\dot{\bm{\mathfrak{Z}}}}{}{#2}{}{#1}}}
\newrobustcmd{\ddotbmmathfrakZz}[2][]{\ensuremath{\subp{\ddot{\bm{\mathfrak{Z}}}}{}{#2}{}{#1}}}
\newrobustcmd{\brevebmmathfrakZz}[2][]{\ensuremath{\subp{\breve{\bm{\mathfrak{Z}}}}{}{#2}{}{#1}}}
\newrobustcmd{\barbmmathfrakZz}[2][]{\ensuremath{\subp{\bar{\bm{\mathfrak{Z}}}}{}{#2}{}{#1}}}
\newrobustcmd{\vecbmmathfrakZz}[2][]{\ensuremath{\subp{\vec{\bm{\mathfrak{Z}}}}{}{#2}{}{#1}}}

\newrobustcmd{\mathcalAz}[2][]{\ensuremath{\subp{\mathcal{A}}{}{#2}{}{#1}}}
\newrobustcmd{\hatmathcalAz}[2][]{\ensuremath{\subp{\hat{\mathcal{A}}}{}{#2}{}{#1}}}
\newrobustcmd{\widehatmathcalAz}[2][]{\ensuremath{\subp{\widehat{\mathcal{A}}}{}{#2}{}{#1}}}
\newrobustcmd{\checkmathcalAz}[2][]{\ensuremath{\subp{\check{\mathcal{A}}}{}{#2}{}{#1}}}
\newrobustcmd{\tildemathcalAz}[2][]{\ensuremath{\subp{\tilde{\mathcal{A}}}{}{#2}{}{#1}}}
\newrobustcmd{\widetildemathcalAz}[2][]{\ensuremath{\subp{\widetilde{\mathcal{A}}}{}{#2}{}{#1}}}
\newrobustcmd{\acutemathcalAz}[2][]{\ensuremath{\subp{\acute{\mathcal{A}}}{}{#2}{}{#1}}}
\newrobustcmd{\gravemathcalAz}[2][]{\ensuremath{\subp{\grave{\mathcal{A}}}{}{#2}{}{#1}}}
\newrobustcmd{\dotmathcalAz}[2][]{\ensuremath{\subp{\dot{\mathcal{A}}}{}{#2}{}{#1}}}
\newrobustcmd{\ddotmathcalAz}[2][]{\ensuremath{\subp{\ddot{\mathcal{A}}}{}{#2}{}{#1}}}
\newrobustcmd{\brevemathcalAz}[2][]{\ensuremath{\subp{\breve{\mathcal{A}}}{}{#2}{}{#1}}}
\newrobustcmd{\barmathcalAz}[2][]{\ensuremath{\subp{\bar{\mathcal{A}}}{}{#2}{}{#1}}}
\newrobustcmd{\vecmathcalAz}[2][]{\ensuremath{\subp{\vec{\mathcal{A}}}{}{#2}{}{#1}}}
\newrobustcmd{\bmmathcalAz}[2][]{\ensuremath{\subp{\bm{\mathcal{A}}}{}{#2}{}{#1}}}
\newrobustcmd{\hatbmmathcalAz}[2][]{\ensuremath{\subp{\hat{\bm{\mathcal{A}}}}{}{#2}{}{#1}}}
\newrobustcmd{\widehatbmmathcalAz}[2][]{\ensuremath{\subp{\widehat{\bm{\mathcal{A}}}}{}{#2}{}{#1}}}
\newrobustcmd{\checkbmmathcalAz}[2][]{\ensuremath{\subp{\check{\bm{\mathcal{A}}}}{}{#2}{}{#1}}}
\newrobustcmd{\tildebmmathcalAz}[2][]{\ensuremath{\subp{\tilde{\bm{\mathcal{A}}}}{}{#2}{}{#1}}}
\newrobustcmd{\widetildebmmathcalAz}[2][]{\ensuremath{\subp{\widetilde{\bm{\mathcal{A}}}}{}{#2}{}{#1}}}
\newrobustcmd{\acutebmmathcalAz}[2][]{\ensuremath{\subp{\acute{\bm{\mathcal{A}}}}{}{#2}{}{#1}}}
\newrobustcmd{\gravebmmathcalAz}[2][]{\ensuremath{\subp{\grave{\bm{\mathcal{A}}}}{}{#2}{}{#1}}}
\newrobustcmd{\dotbmmathcalAz}[2][]{\ensuremath{\subp{\dot{\bm{\mathcal{A}}}}{}{#2}{}{#1}}}
\newrobustcmd{\ddotbmmathcalAz}[2][]{\ensuremath{\subp{\ddot{\bm{\mathcal{A}}}}{}{#2}{}{#1}}}
\newrobustcmd{\brevebmmathcalAz}[2][]{\ensuremath{\subp{\breve{\bm{\mathcal{A}}}}{}{#2}{}{#1}}}
\newrobustcmd{\barbmmathcalAz}[2][]{\ensuremath{\subp{\bar{\bm{\mathcal{A}}}}{}{#2}{}{#1}}}
\newrobustcmd{\vecbmmathcalAz}[2][]{\ensuremath{\subp{\vec{\bm{\mathcal{A}}}}{}{#2}{}{#1}}}
\newrobustcmd{\mathcalBz}[2][]{\ensuremath{\subp{\mathcal{B}}{}{#2}{}{#1}}}
\newrobustcmd{\hatmathcalBz}[2][]{\ensuremath{\subp{\hat{\mathcal{B}}}{}{#2}{}{#1}}}
\newrobustcmd{\widehatmathcalBz}[2][]{\ensuremath{\subp{\widehat{\mathcal{B}}}{}{#2}{}{#1}}}
\newrobustcmd{\checkmathcalBz}[2][]{\ensuremath{\subp{\check{\mathcal{B}}}{}{#2}{}{#1}}}
\newrobustcmd{\tildemathcalBz}[2][]{\ensuremath{\subp{\tilde{\mathcal{B}}}{}{#2}{}{#1}}}
\newrobustcmd{\widetildemathcalBz}[2][]{\ensuremath{\subp{\widetilde{\mathcal{B}}}{}{#2}{}{#1}}}
\newrobustcmd{\acutemathcalBz}[2][]{\ensuremath{\subp{\acute{\mathcal{B}}}{}{#2}{}{#1}}}
\newrobustcmd{\gravemathcalBz}[2][]{\ensuremath{\subp{\grave{\mathcal{B}}}{}{#2}{}{#1}}}
\newrobustcmd{\dotmathcalBz}[2][]{\ensuremath{\subp{\dot{\mathcal{B}}}{}{#2}{}{#1}}}
\newrobustcmd{\ddotmathcalBz}[2][]{\ensuremath{\subp{\ddot{\mathcal{B}}}{}{#2}{}{#1}}}
\newrobustcmd{\brevemathcalBz}[2][]{\ensuremath{\subp{\breve{\mathcal{B}}}{}{#2}{}{#1}}}
\newrobustcmd{\barmathcalBz}[2][]{\ensuremath{\subp{\bar{\mathcal{B}}}{}{#2}{}{#1}}}
\newrobustcmd{\vecmathcalBz}[2][]{\ensuremath{\subp{\vec{\mathcal{B}}}{}{#2}{}{#1}}}
\newrobustcmd{\bmmathcalBz}[2][]{\ensuremath{\subp{\bm{\mathcal{B}}}{}{#2}{}{#1}}}
\newrobustcmd{\hatbmmathcalBz}[2][]{\ensuremath{\subp{\hat{\bm{\mathcal{B}}}}{}{#2}{}{#1}}}
\newrobustcmd{\widehatbmmathcalBz}[2][]{\ensuremath{\subp{\widehat{\bm{\mathcal{B}}}}{}{#2}{}{#1}}}
\newrobustcmd{\checkbmmathcalBz}[2][]{\ensuremath{\subp{\check{\bm{\mathcal{B}}}}{}{#2}{}{#1}}}
\newrobustcmd{\tildebmmathcalBz}[2][]{\ensuremath{\subp{\tilde{\bm{\mathcal{B}}}}{}{#2}{}{#1}}}
\newrobustcmd{\widetildebmmathcalBz}[2][]{\ensuremath{\subp{\widetilde{\bm{\mathcal{B}}}}{}{#2}{}{#1}}}
\newrobustcmd{\acutebmmathcalBz}[2][]{\ensuremath{\subp{\acute{\bm{\mathcal{B}}}}{}{#2}{}{#1}}}
\newrobustcmd{\gravebmmathcalBz}[2][]{\ensuremath{\subp{\grave{\bm{\mathcal{B}}}}{}{#2}{}{#1}}}
\newrobustcmd{\dotbmmathcalBz}[2][]{\ensuremath{\subp{\dot{\bm{\mathcal{B}}}}{}{#2}{}{#1}}}
\newrobustcmd{\ddotbmmathcalBz}[2][]{\ensuremath{\subp{\ddot{\bm{\mathcal{B}}}}{}{#2}{}{#1}}}
\newrobustcmd{\brevebmmathcalBz}[2][]{\ensuremath{\subp{\breve{\bm{\mathcal{B}}}}{}{#2}{}{#1}}}
\newrobustcmd{\barbmmathcalBz}[2][]{\ensuremath{\subp{\bar{\bm{\mathcal{B}}}}{}{#2}{}{#1}}}
\newrobustcmd{\vecbmmathcalBz}[2][]{\ensuremath{\subp{\vec{\bm{\mathcal{B}}}}{}{#2}{}{#1}}}
\newrobustcmd{\mathcalCz}[2][]{\ensuremath{\subp{\mathcal{C}}{}{#2}{}{#1}}}
\newrobustcmd{\hatmathcalCz}[2][]{\ensuremath{\subp{\hat{\mathcal{C}}}{}{#2}{}{#1}}}
\newrobustcmd{\widehatmathcalCz}[2][]{\ensuremath{\subp{\widehat{\mathcal{C}}}{}{#2}{}{#1}}}
\newrobustcmd{\checkmathcalCz}[2][]{\ensuremath{\subp{\check{\mathcal{C}}}{}{#2}{}{#1}}}
\newrobustcmd{\tildemathcalCz}[2][]{\ensuremath{\subp{\tilde{\mathcal{C}}}{}{#2}{}{#1}}}
\newrobustcmd{\widetildemathcalCz}[2][]{\ensuremath{\subp{\widetilde{\mathcal{C}}}{}{#2}{}{#1}}}
\newrobustcmd{\acutemathcalCz}[2][]{\ensuremath{\subp{\acute{\mathcal{C}}}{}{#2}{}{#1}}}
\newrobustcmd{\gravemathcalCz}[2][]{\ensuremath{\subp{\grave{\mathcal{C}}}{}{#2}{}{#1}}}
\newrobustcmd{\dotmathcalCz}[2][]{\ensuremath{\subp{\dot{\mathcal{C}}}{}{#2}{}{#1}}}
\newrobustcmd{\ddotmathcalCz}[2][]{\ensuremath{\subp{\ddot{\mathcal{C}}}{}{#2}{}{#1}}}
\newrobustcmd{\brevemathcalCz}[2][]{\ensuremath{\subp{\breve{\mathcal{C}}}{}{#2}{}{#1}}}
\newrobustcmd{\barmathcalCz}[2][]{\ensuremath{\subp{\bar{\mathcal{C}}}{}{#2}{}{#1}}}
\newrobustcmd{\vecmathcalCz}[2][]{\ensuremath{\subp{\vec{\mathcal{C}}}{}{#2}{}{#1}}}
\newrobustcmd{\bmmathcalCz}[2][]{\ensuremath{\subp{\bm{\mathcal{C}}}{}{#2}{}{#1}}}
\newrobustcmd{\hatbmmathcalCz}[2][]{\ensuremath{\subp{\hat{\bm{\mathcal{C}}}}{}{#2}{}{#1}}}
\newrobustcmd{\widehatbmmathcalCz}[2][]{\ensuremath{\subp{\widehat{\bm{\mathcal{C}}}}{}{#2}{}{#1}}}
\newrobustcmd{\checkbmmathcalCz}[2][]{\ensuremath{\subp{\check{\bm{\mathcal{C}}}}{}{#2}{}{#1}}}
\newrobustcmd{\tildebmmathcalCz}[2][]{\ensuremath{\subp{\tilde{\bm{\mathcal{C}}}}{}{#2}{}{#1}}}
\newrobustcmd{\widetildebmmathcalCz}[2][]{\ensuremath{\subp{\widetilde{\bm{\mathcal{C}}}}{}{#2}{}{#1}}}
\newrobustcmd{\acutebmmathcalCz}[2][]{\ensuremath{\subp{\acute{\bm{\mathcal{C}}}}{}{#2}{}{#1}}}
\newrobustcmd{\gravebmmathcalCz}[2][]{\ensuremath{\subp{\grave{\bm{\mathcal{C}}}}{}{#2}{}{#1}}}
\newrobustcmd{\dotbmmathcalCz}[2][]{\ensuremath{\subp{\dot{\bm{\mathcal{C}}}}{}{#2}{}{#1}}}
\newrobustcmd{\ddotbmmathcalCz}[2][]{\ensuremath{\subp{\ddot{\bm{\mathcal{C}}}}{}{#2}{}{#1}}}
\newrobustcmd{\brevebmmathcalCz}[2][]{\ensuremath{\subp{\breve{\bm{\mathcal{C}}}}{}{#2}{}{#1}}}
\newrobustcmd{\barbmmathcalCz}[2][]{\ensuremath{\subp{\bar{\bm{\mathcal{C}}}}{}{#2}{}{#1}}}
\newrobustcmd{\vecbmmathcalCz}[2][]{\ensuremath{\subp{\vec{\bm{\mathcal{C}}}}{}{#2}{}{#1}}}
\newrobustcmd{\mathcalDz}[2][]{\ensuremath{\subp{\mathcal{D}}{}{#2}{}{#1}}}
\newrobustcmd{\hatmathcalDz}[2][]{\ensuremath{\subp{\hat{\mathcal{D}}}{}{#2}{}{#1}}}
\newrobustcmd{\widehatmathcalDz}[2][]{\ensuremath{\subp{\widehat{\mathcal{D}}}{}{#2}{}{#1}}}
\newrobustcmd{\checkmathcalDz}[2][]{\ensuremath{\subp{\check{\mathcal{D}}}{}{#2}{}{#1}}}
\newrobustcmd{\tildemathcalDz}[2][]{\ensuremath{\subp{\tilde{\mathcal{D}}}{}{#2}{}{#1}}}
\newrobustcmd{\widetildemathcalDz}[2][]{\ensuremath{\subp{\widetilde{\mathcal{D}}}{}{#2}{}{#1}}}
\newrobustcmd{\acutemathcalDz}[2][]{\ensuremath{\subp{\acute{\mathcal{D}}}{}{#2}{}{#1}}}
\newrobustcmd{\gravemathcalDz}[2][]{\ensuremath{\subp{\grave{\mathcal{D}}}{}{#2}{}{#1}}}
\newrobustcmd{\dotmathcalDz}[2][]{\ensuremath{\subp{\dot{\mathcal{D}}}{}{#2}{}{#1}}}
\newrobustcmd{\ddotmathcalDz}[2][]{\ensuremath{\subp{\ddot{\mathcal{D}}}{}{#2}{}{#1}}}
\newrobustcmd{\brevemathcalDz}[2][]{\ensuremath{\subp{\breve{\mathcal{D}}}{}{#2}{}{#1}}}
\newrobustcmd{\barmathcalDz}[2][]{\ensuremath{\subp{\bar{\mathcal{D}}}{}{#2}{}{#1}}}
\newrobustcmd{\vecmathcalDz}[2][]{\ensuremath{\subp{\vec{\mathcal{D}}}{}{#2}{}{#1}}}
\newrobustcmd{\bmmathcalDz}[2][]{\ensuremath{\subp{\bm{\mathcal{D}}}{}{#2}{}{#1}}}
\newrobustcmd{\hatbmmathcalDz}[2][]{\ensuremath{\subp{\hat{\bm{\mathcal{D}}}}{}{#2}{}{#1}}}
\newrobustcmd{\widehatbmmathcalDz}[2][]{\ensuremath{\subp{\widehat{\bm{\mathcal{D}}}}{}{#2}{}{#1}}}
\newrobustcmd{\checkbmmathcalDz}[2][]{\ensuremath{\subp{\check{\bm{\mathcal{D}}}}{}{#2}{}{#1}}}
\newrobustcmd{\tildebmmathcalDz}[2][]{\ensuremath{\subp{\tilde{\bm{\mathcal{D}}}}{}{#2}{}{#1}}}
\newrobustcmd{\widetildebmmathcalDz}[2][]{\ensuremath{\subp{\widetilde{\bm{\mathcal{D}}}}{}{#2}{}{#1}}}
\newrobustcmd{\acutebmmathcalDz}[2][]{\ensuremath{\subp{\acute{\bm{\mathcal{D}}}}{}{#2}{}{#1}}}
\newrobustcmd{\gravebmmathcalDz}[2][]{\ensuremath{\subp{\grave{\bm{\mathcal{D}}}}{}{#2}{}{#1}}}
\newrobustcmd{\dotbmmathcalDz}[2][]{\ensuremath{\subp{\dot{\bm{\mathcal{D}}}}{}{#2}{}{#1}}}
\newrobustcmd{\ddotbmmathcalDz}[2][]{\ensuremath{\subp{\ddot{\bm{\mathcal{D}}}}{}{#2}{}{#1}}}
\newrobustcmd{\brevebmmathcalDz}[2][]{\ensuremath{\subp{\breve{\bm{\mathcal{D}}}}{}{#2}{}{#1}}}
\newrobustcmd{\barbmmathcalDz}[2][]{\ensuremath{\subp{\bar{\bm{\mathcal{D}}}}{}{#2}{}{#1}}}
\newrobustcmd{\vecbmmathcalDz}[2][]{\ensuremath{\subp{\vec{\bm{\mathcal{D}}}}{}{#2}{}{#1}}}
\newrobustcmd{\mathcalEz}[2][]{\ensuremath{\subp{\mathcal{E}}{}{#2}{}{#1}}}
\newrobustcmd{\hatmathcalEz}[2][]{\ensuremath{\subp{\hat{\mathcal{E}}}{}{#2}{}{#1}}}
\newrobustcmd{\widehatmathcalEz}[2][]{\ensuremath{\subp{\widehat{\mathcal{E}}}{}{#2}{}{#1}}}
\newrobustcmd{\checkmathcalEz}[2][]{\ensuremath{\subp{\check{\mathcal{E}}}{}{#2}{}{#1}}}
\newrobustcmd{\tildemathcalEz}[2][]{\ensuremath{\subp{\tilde{\mathcal{E}}}{}{#2}{}{#1}}}
\newrobustcmd{\widetildemathcalEz}[2][]{\ensuremath{\subp{\widetilde{\mathcal{E}}}{}{#2}{}{#1}}}
\newrobustcmd{\acutemathcalEz}[2][]{\ensuremath{\subp{\acute{\mathcal{E}}}{}{#2}{}{#1}}}
\newrobustcmd{\gravemathcalEz}[2][]{\ensuremath{\subp{\grave{\mathcal{E}}}{}{#2}{}{#1}}}
\newrobustcmd{\dotmathcalEz}[2][]{\ensuremath{\subp{\dot{\mathcal{E}}}{}{#2}{}{#1}}}
\newrobustcmd{\ddotmathcalEz}[2][]{\ensuremath{\subp{\ddot{\mathcal{E}}}{}{#2}{}{#1}}}
\newrobustcmd{\brevemathcalEz}[2][]{\ensuremath{\subp{\breve{\mathcal{E}}}{}{#2}{}{#1}}}
\newrobustcmd{\barmathcalEz}[2][]{\ensuremath{\subp{\bar{\mathcal{E}}}{}{#2}{}{#1}}}
\newrobustcmd{\vecmathcalEz}[2][]{\ensuremath{\subp{\vec{\mathcal{E}}}{}{#2}{}{#1}}}
\newrobustcmd{\bmmathcalEz}[2][]{\ensuremath{\subp{\bm{\mathcal{E}}}{}{#2}{}{#1}}}
\newrobustcmd{\hatbmmathcalEz}[2][]{\ensuremath{\subp{\hat{\bm{\mathcal{E}}}}{}{#2}{}{#1}}}
\newrobustcmd{\widehatbmmathcalEz}[2][]{\ensuremath{\subp{\widehat{\bm{\mathcal{E}}}}{}{#2}{}{#1}}}
\newrobustcmd{\checkbmmathcalEz}[2][]{\ensuremath{\subp{\check{\bm{\mathcal{E}}}}{}{#2}{}{#1}}}
\newrobustcmd{\tildebmmathcalEz}[2][]{\ensuremath{\subp{\tilde{\bm{\mathcal{E}}}}{}{#2}{}{#1}}}
\newrobustcmd{\widetildebmmathcalEz}[2][]{\ensuremath{\subp{\widetilde{\bm{\mathcal{E}}}}{}{#2}{}{#1}}}
\newrobustcmd{\acutebmmathcalEz}[2][]{\ensuremath{\subp{\acute{\bm{\mathcal{E}}}}{}{#2}{}{#1}}}
\newrobustcmd{\gravebmmathcalEz}[2][]{\ensuremath{\subp{\grave{\bm{\mathcal{E}}}}{}{#2}{}{#1}}}
\newrobustcmd{\dotbmmathcalEz}[2][]{\ensuremath{\subp{\dot{\bm{\mathcal{E}}}}{}{#2}{}{#1}}}
\newrobustcmd{\ddotbmmathcalEz}[2][]{\ensuremath{\subp{\ddot{\bm{\mathcal{E}}}}{}{#2}{}{#1}}}
\newrobustcmd{\brevebmmathcalEz}[2][]{\ensuremath{\subp{\breve{\bm{\mathcal{E}}}}{}{#2}{}{#1}}}
\newrobustcmd{\barbmmathcalEz}[2][]{\ensuremath{\subp{\bar{\bm{\mathcal{E}}}}{}{#2}{}{#1}}}
\newrobustcmd{\vecbmmathcalEz}[2][]{\ensuremath{\subp{\vec{\bm{\mathcal{E}}}}{}{#2}{}{#1}}}
\newrobustcmd{\mathcalFz}[2][]{\ensuremath{\subp{\mathcal{F}}{}{#2}{}{#1}}}
\newrobustcmd{\hatmathcalFz}[2][]{\ensuremath{\subp{\hat{\mathcal{F}}}{}{#2}{}{#1}}}
\newrobustcmd{\widehatmathcalFz}[2][]{\ensuremath{\subp{\widehat{\mathcal{F}}}{}{#2}{}{#1}}}
\newrobustcmd{\checkmathcalFz}[2][]{\ensuremath{\subp{\check{\mathcal{F}}}{}{#2}{}{#1}}}
\newrobustcmd{\tildemathcalFz}[2][]{\ensuremath{\subp{\tilde{\mathcal{F}}}{}{#2}{}{#1}}}
\newrobustcmd{\widetildemathcalFz}[2][]{\ensuremath{\subp{\widetilde{\mathcal{F}}}{}{#2}{}{#1}}}
\newrobustcmd{\acutemathcalFz}[2][]{\ensuremath{\subp{\acute{\mathcal{F}}}{}{#2}{}{#1}}}
\newrobustcmd{\gravemathcalFz}[2][]{\ensuremath{\subp{\grave{\mathcal{F}}}{}{#2}{}{#1}}}
\newrobustcmd{\dotmathcalFz}[2][]{\ensuremath{\subp{\dot{\mathcal{F}}}{}{#2}{}{#1}}}
\newrobustcmd{\ddotmathcalFz}[2][]{\ensuremath{\subp{\ddot{\mathcal{F}}}{}{#2}{}{#1}}}
\newrobustcmd{\brevemathcalFz}[2][]{\ensuremath{\subp{\breve{\mathcal{F}}}{}{#2}{}{#1}}}
\newrobustcmd{\barmathcalFz}[2][]{\ensuremath{\subp{\bar{\mathcal{F}}}{}{#2}{}{#1}}}
\newrobustcmd{\vecmathcalFz}[2][]{\ensuremath{\subp{\vec{\mathcal{F}}}{}{#2}{}{#1}}}
\newrobustcmd{\bmmathcalFz}[2][]{\ensuremath{\subp{\bm{\mathcal{F}}}{}{#2}{}{#1}}}
\newrobustcmd{\hatbmmathcalFz}[2][]{\ensuremath{\subp{\hat{\bm{\mathcal{F}}}}{}{#2}{}{#1}}}
\newrobustcmd{\widehatbmmathcalFz}[2][]{\ensuremath{\subp{\widehat{\bm{\mathcal{F}}}}{}{#2}{}{#1}}}
\newrobustcmd{\checkbmmathcalFz}[2][]{\ensuremath{\subp{\check{\bm{\mathcal{F}}}}{}{#2}{}{#1}}}
\newrobustcmd{\tildebmmathcalFz}[2][]{\ensuremath{\subp{\tilde{\bm{\mathcal{F}}}}{}{#2}{}{#1}}}
\newrobustcmd{\widetildebmmathcalFz}[2][]{\ensuremath{\subp{\widetilde{\bm{\mathcal{F}}}}{}{#2}{}{#1}}}
\newrobustcmd{\acutebmmathcalFz}[2][]{\ensuremath{\subp{\acute{\bm{\mathcal{F}}}}{}{#2}{}{#1}}}
\newrobustcmd{\gravebmmathcalFz}[2][]{\ensuremath{\subp{\grave{\bm{\mathcal{F}}}}{}{#2}{}{#1}}}
\newrobustcmd{\dotbmmathcalFz}[2][]{\ensuremath{\subp{\dot{\bm{\mathcal{F}}}}{}{#2}{}{#1}}}
\newrobustcmd{\ddotbmmathcalFz}[2][]{\ensuremath{\subp{\ddot{\bm{\mathcal{F}}}}{}{#2}{}{#1}}}
\newrobustcmd{\brevebmmathcalFz}[2][]{\ensuremath{\subp{\breve{\bm{\mathcal{F}}}}{}{#2}{}{#1}}}
\newrobustcmd{\barbmmathcalFz}[2][]{\ensuremath{\subp{\bar{\bm{\mathcal{F}}}}{}{#2}{}{#1}}}
\newrobustcmd{\vecbmmathcalFz}[2][]{\ensuremath{\subp{\vec{\bm{\mathcal{F}}}}{}{#2}{}{#1}}}
\newrobustcmd{\mathcalGz}[2][]{\ensuremath{\subp{\mathcal{G}}{}{#2}{}{#1}}}
\newrobustcmd{\hatmathcalGz}[2][]{\ensuremath{\subp{\hat{\mathcal{G}}}{}{#2}{}{#1}}}
\newrobustcmd{\widehatmathcalGz}[2][]{\ensuremath{\subp{\widehat{\mathcal{G}}}{}{#2}{}{#1}}}
\newrobustcmd{\checkmathcalGz}[2][]{\ensuremath{\subp{\check{\mathcal{G}}}{}{#2}{}{#1}}}
\newrobustcmd{\tildemathcalGz}[2][]{\ensuremath{\subp{\tilde{\mathcal{G}}}{}{#2}{}{#1}}}
\newrobustcmd{\widetildemathcalGz}[2][]{\ensuremath{\subp{\widetilde{\mathcal{G}}}{}{#2}{}{#1}}}
\newrobustcmd{\acutemathcalGz}[2][]{\ensuremath{\subp{\acute{\mathcal{G}}}{}{#2}{}{#1}}}
\newrobustcmd{\gravemathcalGz}[2][]{\ensuremath{\subp{\grave{\mathcal{G}}}{}{#2}{}{#1}}}
\newrobustcmd{\dotmathcalGz}[2][]{\ensuremath{\subp{\dot{\mathcal{G}}}{}{#2}{}{#1}}}
\newrobustcmd{\ddotmathcalGz}[2][]{\ensuremath{\subp{\ddot{\mathcal{G}}}{}{#2}{}{#1}}}
\newrobustcmd{\brevemathcalGz}[2][]{\ensuremath{\subp{\breve{\mathcal{G}}}{}{#2}{}{#1}}}
\newrobustcmd{\barmathcalGz}[2][]{\ensuremath{\subp{\bar{\mathcal{G}}}{}{#2}{}{#1}}}
\newrobustcmd{\vecmathcalGz}[2][]{\ensuremath{\subp{\vec{\mathcal{G}}}{}{#2}{}{#1}}}
\newrobustcmd{\bmmathcalGz}[2][]{\ensuremath{\subp{\bm{\mathcal{G}}}{}{#2}{}{#1}}}
\newrobustcmd{\hatbmmathcalGz}[2][]{\ensuremath{\subp{\hat{\bm{\mathcal{G}}}}{}{#2}{}{#1}}}
\newrobustcmd{\widehatbmmathcalGz}[2][]{\ensuremath{\subp{\widehat{\bm{\mathcal{G}}}}{}{#2}{}{#1}}}
\newrobustcmd{\checkbmmathcalGz}[2][]{\ensuremath{\subp{\check{\bm{\mathcal{G}}}}{}{#2}{}{#1}}}
\newrobustcmd{\tildebmmathcalGz}[2][]{\ensuremath{\subp{\tilde{\bm{\mathcal{G}}}}{}{#2}{}{#1}}}
\newrobustcmd{\widetildebmmathcalGz}[2][]{\ensuremath{\subp{\widetilde{\bm{\mathcal{G}}}}{}{#2}{}{#1}}}
\newrobustcmd{\acutebmmathcalGz}[2][]{\ensuremath{\subp{\acute{\bm{\mathcal{G}}}}{}{#2}{}{#1}}}
\newrobustcmd{\gravebmmathcalGz}[2][]{\ensuremath{\subp{\grave{\bm{\mathcal{G}}}}{}{#2}{}{#1}}}
\newrobustcmd{\dotbmmathcalGz}[2][]{\ensuremath{\subp{\dot{\bm{\mathcal{G}}}}{}{#2}{}{#1}}}
\newrobustcmd{\ddotbmmathcalGz}[2][]{\ensuremath{\subp{\ddot{\bm{\mathcal{G}}}}{}{#2}{}{#1}}}
\newrobustcmd{\brevebmmathcalGz}[2][]{\ensuremath{\subp{\breve{\bm{\mathcal{G}}}}{}{#2}{}{#1}}}
\newrobustcmd{\barbmmathcalGz}[2][]{\ensuremath{\subp{\bar{\bm{\mathcal{G}}}}{}{#2}{}{#1}}}
\newrobustcmd{\vecbmmathcalGz}[2][]{\ensuremath{\subp{\vec{\bm{\mathcal{G}}}}{}{#2}{}{#1}}}
\newrobustcmd{\mathcalHz}[2][]{\ensuremath{\subp{\mathcal{H}}{}{#2}{}{#1}}}
\newrobustcmd{\hatmathcalHz}[2][]{\ensuremath{\subp{\hat{\mathcal{H}}}{}{#2}{}{#1}}}
\newrobustcmd{\widehatmathcalHz}[2][]{\ensuremath{\subp{\widehat{\mathcal{H}}}{}{#2}{}{#1}}}
\newrobustcmd{\checkmathcalHz}[2][]{\ensuremath{\subp{\check{\mathcal{H}}}{}{#2}{}{#1}}}
\newrobustcmd{\tildemathcalHz}[2][]{\ensuremath{\subp{\tilde{\mathcal{H}}}{}{#2}{}{#1}}}
\newrobustcmd{\widetildemathcalHz}[2][]{\ensuremath{\subp{\widetilde{\mathcal{H}}}{}{#2}{}{#1}}}
\newrobustcmd{\acutemathcalHz}[2][]{\ensuremath{\subp{\acute{\mathcal{H}}}{}{#2}{}{#1}}}
\newrobustcmd{\gravemathcalHz}[2][]{\ensuremath{\subp{\grave{\mathcal{H}}}{}{#2}{}{#1}}}
\newrobustcmd{\dotmathcalHz}[2][]{\ensuremath{\subp{\dot{\mathcal{H}}}{}{#2}{}{#1}}}
\newrobustcmd{\ddotmathcalHz}[2][]{\ensuremath{\subp{\ddot{\mathcal{H}}}{}{#2}{}{#1}}}
\newrobustcmd{\brevemathcalHz}[2][]{\ensuremath{\subp{\breve{\mathcal{H}}}{}{#2}{}{#1}}}
\newrobustcmd{\barmathcalHz}[2][]{\ensuremath{\subp{\bar{\mathcal{H}}}{}{#2}{}{#1}}}
\newrobustcmd{\vecmathcalHz}[2][]{\ensuremath{\subp{\vec{\mathcal{H}}}{}{#2}{}{#1}}}
\newrobustcmd{\bmmathcalHz}[2][]{\ensuremath{\subp{\bm{\mathcal{H}}}{}{#2}{}{#1}}}
\newrobustcmd{\hatbmmathcalHz}[2][]{\ensuremath{\subp{\hat{\bm{\mathcal{H}}}}{}{#2}{}{#1}}}
\newrobustcmd{\widehatbmmathcalHz}[2][]{\ensuremath{\subp{\widehat{\bm{\mathcal{H}}}}{}{#2}{}{#1}}}
\newrobustcmd{\checkbmmathcalHz}[2][]{\ensuremath{\subp{\check{\bm{\mathcal{H}}}}{}{#2}{}{#1}}}
\newrobustcmd{\tildebmmathcalHz}[2][]{\ensuremath{\subp{\tilde{\bm{\mathcal{H}}}}{}{#2}{}{#1}}}
\newrobustcmd{\widetildebmmathcalHz}[2][]{\ensuremath{\subp{\widetilde{\bm{\mathcal{H}}}}{}{#2}{}{#1}}}
\newrobustcmd{\acutebmmathcalHz}[2][]{\ensuremath{\subp{\acute{\bm{\mathcal{H}}}}{}{#2}{}{#1}}}
\newrobustcmd{\gravebmmathcalHz}[2][]{\ensuremath{\subp{\grave{\bm{\mathcal{H}}}}{}{#2}{}{#1}}}
\newrobustcmd{\dotbmmathcalHz}[2][]{\ensuremath{\subp{\dot{\bm{\mathcal{H}}}}{}{#2}{}{#1}}}
\newrobustcmd{\ddotbmmathcalHz}[2][]{\ensuremath{\subp{\ddot{\bm{\mathcal{H}}}}{}{#2}{}{#1}}}
\newrobustcmd{\brevebmmathcalHz}[2][]{\ensuremath{\subp{\breve{\bm{\mathcal{H}}}}{}{#2}{}{#1}}}
\newrobustcmd{\barbmmathcalHz}[2][]{\ensuremath{\subp{\bar{\bm{\mathcal{H}}}}{}{#2}{}{#1}}}
\newrobustcmd{\vecbmmathcalHz}[2][]{\ensuremath{\subp{\vec{\bm{\mathcal{H}}}}{}{#2}{}{#1}}}
\newrobustcmd{\mathcalIz}[2][]{\ensuremath{\subp{\mathcal{I}}{}{#2}{}{#1}}}
\newrobustcmd{\hatmathcalIz}[2][]{\ensuremath{\subp{\hat{\mathcal{I}}}{}{#2}{}{#1}}}
\newrobustcmd{\widehatmathcalIz}[2][]{\ensuremath{\subp{\widehat{\mathcal{I}}}{}{#2}{}{#1}}}
\newrobustcmd{\checkmathcalIz}[2][]{\ensuremath{\subp{\check{\mathcal{I}}}{}{#2}{}{#1}}}
\newrobustcmd{\tildemathcalIz}[2][]{\ensuremath{\subp{\tilde{\mathcal{I}}}{}{#2}{}{#1}}}
\newrobustcmd{\widetildemathcalIz}[2][]{\ensuremath{\subp{\widetilde{\mathcal{I}}}{}{#2}{}{#1}}}
\newrobustcmd{\acutemathcalIz}[2][]{\ensuremath{\subp{\acute{\mathcal{I}}}{}{#2}{}{#1}}}
\newrobustcmd{\gravemathcalIz}[2][]{\ensuremath{\subp{\grave{\mathcal{I}}}{}{#2}{}{#1}}}
\newrobustcmd{\dotmathcalIz}[2][]{\ensuremath{\subp{\dot{\mathcal{I}}}{}{#2}{}{#1}}}
\newrobustcmd{\ddotmathcalIz}[2][]{\ensuremath{\subp{\ddot{\mathcal{I}}}{}{#2}{}{#1}}}
\newrobustcmd{\brevemathcalIz}[2][]{\ensuremath{\subp{\breve{\mathcal{I}}}{}{#2}{}{#1}}}
\newrobustcmd{\barmathcalIz}[2][]{\ensuremath{\subp{\bar{\mathcal{I}}}{}{#2}{}{#1}}}
\newrobustcmd{\vecmathcalIz}[2][]{\ensuremath{\subp{\vec{\mathcal{I}}}{}{#2}{}{#1}}}
\newrobustcmd{\bmmathcalIz}[2][]{\ensuremath{\subp{\bm{\mathcal{I}}}{}{#2}{}{#1}}}
\newrobustcmd{\hatbmmathcalIz}[2][]{\ensuremath{\subp{\hat{\bm{\mathcal{I}}}}{}{#2}{}{#1}}}
\newrobustcmd{\widehatbmmathcalIz}[2][]{\ensuremath{\subp{\widehat{\bm{\mathcal{I}}}}{}{#2}{}{#1}}}
\newrobustcmd{\checkbmmathcalIz}[2][]{\ensuremath{\subp{\check{\bm{\mathcal{I}}}}{}{#2}{}{#1}}}
\newrobustcmd{\tildebmmathcalIz}[2][]{\ensuremath{\subp{\tilde{\bm{\mathcal{I}}}}{}{#2}{}{#1}}}
\newrobustcmd{\widetildebmmathcalIz}[2][]{\ensuremath{\subp{\widetilde{\bm{\mathcal{I}}}}{}{#2}{}{#1}}}
\newrobustcmd{\acutebmmathcalIz}[2][]{\ensuremath{\subp{\acute{\bm{\mathcal{I}}}}{}{#2}{}{#1}}}
\newrobustcmd{\gravebmmathcalIz}[2][]{\ensuremath{\subp{\grave{\bm{\mathcal{I}}}}{}{#2}{}{#1}}}
\newrobustcmd{\dotbmmathcalIz}[2][]{\ensuremath{\subp{\dot{\bm{\mathcal{I}}}}{}{#2}{}{#1}}}
\newrobustcmd{\ddotbmmathcalIz}[2][]{\ensuremath{\subp{\ddot{\bm{\mathcal{I}}}}{}{#2}{}{#1}}}
\newrobustcmd{\brevebmmathcalIz}[2][]{\ensuremath{\subp{\breve{\bm{\mathcal{I}}}}{}{#2}{}{#1}}}
\newrobustcmd{\barbmmathcalIz}[2][]{\ensuremath{\subp{\bar{\bm{\mathcal{I}}}}{}{#2}{}{#1}}}
\newrobustcmd{\vecbmmathcalIz}[2][]{\ensuremath{\subp{\vec{\bm{\mathcal{I}}}}{}{#2}{}{#1}}}
\newrobustcmd{\mathcalJz}[2][]{\ensuremath{\subp{\mathcal{J}}{}{#2}{}{#1}}}
\newrobustcmd{\hatmathcalJz}[2][]{\ensuremath{\subp{\hat{\mathcal{J}}}{}{#2}{}{#1}}}
\newrobustcmd{\widehatmathcalJz}[2][]{\ensuremath{\subp{\widehat{\mathcal{J}}}{}{#2}{}{#1}}}
\newrobustcmd{\checkmathcalJz}[2][]{\ensuremath{\subp{\check{\mathcal{J}}}{}{#2}{}{#1}}}
\newrobustcmd{\tildemathcalJz}[2][]{\ensuremath{\subp{\tilde{\mathcal{J}}}{}{#2}{}{#1}}}
\newrobustcmd{\widetildemathcalJz}[2][]{\ensuremath{\subp{\widetilde{\mathcal{J}}}{}{#2}{}{#1}}}
\newrobustcmd{\acutemathcalJz}[2][]{\ensuremath{\subp{\acute{\mathcal{J}}}{}{#2}{}{#1}}}
\newrobustcmd{\gravemathcalJz}[2][]{\ensuremath{\subp{\grave{\mathcal{J}}}{}{#2}{}{#1}}}
\newrobustcmd{\dotmathcalJz}[2][]{\ensuremath{\subp{\dot{\mathcal{J}}}{}{#2}{}{#1}}}
\newrobustcmd{\ddotmathcalJz}[2][]{\ensuremath{\subp{\ddot{\mathcal{J}}}{}{#2}{}{#1}}}
\newrobustcmd{\brevemathcalJz}[2][]{\ensuremath{\subp{\breve{\mathcal{J}}}{}{#2}{}{#1}}}
\newrobustcmd{\barmathcalJz}[2][]{\ensuremath{\subp{\bar{\mathcal{J}}}{}{#2}{}{#1}}}
\newrobustcmd{\vecmathcalJz}[2][]{\ensuremath{\subp{\vec{\mathcal{J}}}{}{#2}{}{#1}}}
\newrobustcmd{\bmmathcalJz}[2][]{\ensuremath{\subp{\bm{\mathcal{J}}}{}{#2}{}{#1}}}
\newrobustcmd{\hatbmmathcalJz}[2][]{\ensuremath{\subp{\hat{\bm{\mathcal{J}}}}{}{#2}{}{#1}}}
\newrobustcmd{\widehatbmmathcalJz}[2][]{\ensuremath{\subp{\widehat{\bm{\mathcal{J}}}}{}{#2}{}{#1}}}
\newrobustcmd{\checkbmmathcalJz}[2][]{\ensuremath{\subp{\check{\bm{\mathcal{J}}}}{}{#2}{}{#1}}}
\newrobustcmd{\tildebmmathcalJz}[2][]{\ensuremath{\subp{\tilde{\bm{\mathcal{J}}}}{}{#2}{}{#1}}}
\newrobustcmd{\widetildebmmathcalJz}[2][]{\ensuremath{\subp{\widetilde{\bm{\mathcal{J}}}}{}{#2}{}{#1}}}
\newrobustcmd{\acutebmmathcalJz}[2][]{\ensuremath{\subp{\acute{\bm{\mathcal{J}}}}{}{#2}{}{#1}}}
\newrobustcmd{\gravebmmathcalJz}[2][]{\ensuremath{\subp{\grave{\bm{\mathcal{J}}}}{}{#2}{}{#1}}}
\newrobustcmd{\dotbmmathcalJz}[2][]{\ensuremath{\subp{\dot{\bm{\mathcal{J}}}}{}{#2}{}{#1}}}
\newrobustcmd{\ddotbmmathcalJz}[2][]{\ensuremath{\subp{\ddot{\bm{\mathcal{J}}}}{}{#2}{}{#1}}}
\newrobustcmd{\brevebmmathcalJz}[2][]{\ensuremath{\subp{\breve{\bm{\mathcal{J}}}}{}{#2}{}{#1}}}
\newrobustcmd{\barbmmathcalJz}[2][]{\ensuremath{\subp{\bar{\bm{\mathcal{J}}}}{}{#2}{}{#1}}}
\newrobustcmd{\vecbmmathcalJz}[2][]{\ensuremath{\subp{\vec{\bm{\mathcal{J}}}}{}{#2}{}{#1}}}
\newrobustcmd{\mathcalKz}[2][]{\ensuremath{\subp{\mathcal{K}}{}{#2}{}{#1}}}
\newrobustcmd{\hatmathcalKz}[2][]{\ensuremath{\subp{\hat{\mathcal{K}}}{}{#2}{}{#1}}}
\newrobustcmd{\widehatmathcalKz}[2][]{\ensuremath{\subp{\widehat{\mathcal{K}}}{}{#2}{}{#1}}}
\newrobustcmd{\checkmathcalKz}[2][]{\ensuremath{\subp{\check{\mathcal{K}}}{}{#2}{}{#1}}}
\newrobustcmd{\tildemathcalKz}[2][]{\ensuremath{\subp{\tilde{\mathcal{K}}}{}{#2}{}{#1}}}
\newrobustcmd{\widetildemathcalKz}[2][]{\ensuremath{\subp{\widetilde{\mathcal{K}}}{}{#2}{}{#1}}}
\newrobustcmd{\acutemathcalKz}[2][]{\ensuremath{\subp{\acute{\mathcal{K}}}{}{#2}{}{#1}}}
\newrobustcmd{\gravemathcalKz}[2][]{\ensuremath{\subp{\grave{\mathcal{K}}}{}{#2}{}{#1}}}
\newrobustcmd{\dotmathcalKz}[2][]{\ensuremath{\subp{\dot{\mathcal{K}}}{}{#2}{}{#1}}}
\newrobustcmd{\ddotmathcalKz}[2][]{\ensuremath{\subp{\ddot{\mathcal{K}}}{}{#2}{}{#1}}}
\newrobustcmd{\brevemathcalKz}[2][]{\ensuremath{\subp{\breve{\mathcal{K}}}{}{#2}{}{#1}}}
\newrobustcmd{\barmathcalKz}[2][]{\ensuremath{\subp{\bar{\mathcal{K}}}{}{#2}{}{#1}}}
\newrobustcmd{\vecmathcalKz}[2][]{\ensuremath{\subp{\vec{\mathcal{K}}}{}{#2}{}{#1}}}
\newrobustcmd{\bmmathcalKz}[2][]{\ensuremath{\subp{\bm{\mathcal{K}}}{}{#2}{}{#1}}}
\newrobustcmd{\hatbmmathcalKz}[2][]{\ensuremath{\subp{\hat{\bm{\mathcal{K}}}}{}{#2}{}{#1}}}
\newrobustcmd{\widehatbmmathcalKz}[2][]{\ensuremath{\subp{\widehat{\bm{\mathcal{K}}}}{}{#2}{}{#1}}}
\newrobustcmd{\checkbmmathcalKz}[2][]{\ensuremath{\subp{\check{\bm{\mathcal{K}}}}{}{#2}{}{#1}}}
\newrobustcmd{\tildebmmathcalKz}[2][]{\ensuremath{\subp{\tilde{\bm{\mathcal{K}}}}{}{#2}{}{#1}}}
\newrobustcmd{\widetildebmmathcalKz}[2][]{\ensuremath{\subp{\widetilde{\bm{\mathcal{K}}}}{}{#2}{}{#1}}}
\newrobustcmd{\acutebmmathcalKz}[2][]{\ensuremath{\subp{\acute{\bm{\mathcal{K}}}}{}{#2}{}{#1}}}
\newrobustcmd{\gravebmmathcalKz}[2][]{\ensuremath{\subp{\grave{\bm{\mathcal{K}}}}{}{#2}{}{#1}}}
\newrobustcmd{\dotbmmathcalKz}[2][]{\ensuremath{\subp{\dot{\bm{\mathcal{K}}}}{}{#2}{}{#1}}}
\newrobustcmd{\ddotbmmathcalKz}[2][]{\ensuremath{\subp{\ddot{\bm{\mathcal{K}}}}{}{#2}{}{#1}}}
\newrobustcmd{\brevebmmathcalKz}[2][]{\ensuremath{\subp{\breve{\bm{\mathcal{K}}}}{}{#2}{}{#1}}}
\newrobustcmd{\barbmmathcalKz}[2][]{\ensuremath{\subp{\bar{\bm{\mathcal{K}}}}{}{#2}{}{#1}}}
\newrobustcmd{\vecbmmathcalKz}[2][]{\ensuremath{\subp{\vec{\bm{\mathcal{K}}}}{}{#2}{}{#1}}}
\newrobustcmd{\mathcalLz}[2][]{\ensuremath{\subp{\mathcal{L}}{}{#2}{}{#1}}}
\newrobustcmd{\hatmathcalLz}[2][]{\ensuremath{\subp{\hat{\mathcal{L}}}{}{#2}{}{#1}}}
\newrobustcmd{\widehatmathcalLz}[2][]{\ensuremath{\subp{\widehat{\mathcal{L}}}{}{#2}{}{#1}}}
\newrobustcmd{\checkmathcalLz}[2][]{\ensuremath{\subp{\check{\mathcal{L}}}{}{#2}{}{#1}}}
\newrobustcmd{\tildemathcalLz}[2][]{\ensuremath{\subp{\tilde{\mathcal{L}}}{}{#2}{}{#1}}}
\newrobustcmd{\widetildemathcalLz}[2][]{\ensuremath{\subp{\widetilde{\mathcal{L}}}{}{#2}{}{#1}}}
\newrobustcmd{\acutemathcalLz}[2][]{\ensuremath{\subp{\acute{\mathcal{L}}}{}{#2}{}{#1}}}
\newrobustcmd{\gravemathcalLz}[2][]{\ensuremath{\subp{\grave{\mathcal{L}}}{}{#2}{}{#1}}}
\newrobustcmd{\dotmathcalLz}[2][]{\ensuremath{\subp{\dot{\mathcal{L}}}{}{#2}{}{#1}}}
\newrobustcmd{\ddotmathcalLz}[2][]{\ensuremath{\subp{\ddot{\mathcal{L}}}{}{#2}{}{#1}}}
\newrobustcmd{\brevemathcalLz}[2][]{\ensuremath{\subp{\breve{\mathcal{L}}}{}{#2}{}{#1}}}
\newrobustcmd{\barmathcalLz}[2][]{\ensuremath{\subp{\bar{\mathcal{L}}}{}{#2}{}{#1}}}
\newrobustcmd{\vecmathcalLz}[2][]{\ensuremath{\subp{\vec{\mathcal{L}}}{}{#2}{}{#1}}}
\newrobustcmd{\bmmathcalLz}[2][]{\ensuremath{\subp{\bm{\mathcal{L}}}{}{#2}{}{#1}}}
\newrobustcmd{\hatbmmathcalLz}[2][]{\ensuremath{\subp{\hat{\bm{\mathcal{L}}}}{}{#2}{}{#1}}}
\newrobustcmd{\widehatbmmathcalLz}[2][]{\ensuremath{\subp{\widehat{\bm{\mathcal{L}}}}{}{#2}{}{#1}}}
\newrobustcmd{\checkbmmathcalLz}[2][]{\ensuremath{\subp{\check{\bm{\mathcal{L}}}}{}{#2}{}{#1}}}
\newrobustcmd{\tildebmmathcalLz}[2][]{\ensuremath{\subp{\tilde{\bm{\mathcal{L}}}}{}{#2}{}{#1}}}
\newrobustcmd{\widetildebmmathcalLz}[2][]{\ensuremath{\subp{\widetilde{\bm{\mathcal{L}}}}{}{#2}{}{#1}}}
\newrobustcmd{\acutebmmathcalLz}[2][]{\ensuremath{\subp{\acute{\bm{\mathcal{L}}}}{}{#2}{}{#1}}}
\newrobustcmd{\gravebmmathcalLz}[2][]{\ensuremath{\subp{\grave{\bm{\mathcal{L}}}}{}{#2}{}{#1}}}
\newrobustcmd{\dotbmmathcalLz}[2][]{\ensuremath{\subp{\dot{\bm{\mathcal{L}}}}{}{#2}{}{#1}}}
\newrobustcmd{\ddotbmmathcalLz}[2][]{\ensuremath{\subp{\ddot{\bm{\mathcal{L}}}}{}{#2}{}{#1}}}
\newrobustcmd{\brevebmmathcalLz}[2][]{\ensuremath{\subp{\breve{\bm{\mathcal{L}}}}{}{#2}{}{#1}}}
\newrobustcmd{\barbmmathcalLz}[2][]{\ensuremath{\subp{\bar{\bm{\mathcal{L}}}}{}{#2}{}{#1}}}
\newrobustcmd{\vecbmmathcalLz}[2][]{\ensuremath{\subp{\vec{\bm{\mathcal{L}}}}{}{#2}{}{#1}}}
\newrobustcmd{\mathcalMz}[2][]{\ensuremath{\subp{\mathcal{M}}{}{#2}{}{#1}}}
\newrobustcmd{\hatmathcalMz}[2][]{\ensuremath{\subp{\hat{\mathcal{M}}}{}{#2}{}{#1}}}
\newrobustcmd{\widehatmathcalMz}[2][]{\ensuremath{\subp{\widehat{\mathcal{M}}}{}{#2}{}{#1}}}
\newrobustcmd{\checkmathcalMz}[2][]{\ensuremath{\subp{\check{\mathcal{M}}}{}{#2}{}{#1}}}
\newrobustcmd{\tildemathcalMz}[2][]{\ensuremath{\subp{\tilde{\mathcal{M}}}{}{#2}{}{#1}}}
\newrobustcmd{\widetildemathcalMz}[2][]{\ensuremath{\subp{\widetilde{\mathcal{M}}}{}{#2}{}{#1}}}
\newrobustcmd{\acutemathcalMz}[2][]{\ensuremath{\subp{\acute{\mathcal{M}}}{}{#2}{}{#1}}}
\newrobustcmd{\gravemathcalMz}[2][]{\ensuremath{\subp{\grave{\mathcal{M}}}{}{#2}{}{#1}}}
\newrobustcmd{\dotmathcalMz}[2][]{\ensuremath{\subp{\dot{\mathcal{M}}}{}{#2}{}{#1}}}
\newrobustcmd{\ddotmathcalMz}[2][]{\ensuremath{\subp{\ddot{\mathcal{M}}}{}{#2}{}{#1}}}
\newrobustcmd{\brevemathcalMz}[2][]{\ensuremath{\subp{\breve{\mathcal{M}}}{}{#2}{}{#1}}}
\newrobustcmd{\barmathcalMz}[2][]{\ensuremath{\subp{\bar{\mathcal{M}}}{}{#2}{}{#1}}}
\newrobustcmd{\vecmathcalMz}[2][]{\ensuremath{\subp{\vec{\mathcal{M}}}{}{#2}{}{#1}}}
\newrobustcmd{\bmmathcalMz}[2][]{\ensuremath{\subp{\bm{\mathcal{M}}}{}{#2}{}{#1}}}
\newrobustcmd{\hatbmmathcalMz}[2][]{\ensuremath{\subp{\hat{\bm{\mathcal{M}}}}{}{#2}{}{#1}}}
\newrobustcmd{\widehatbmmathcalMz}[2][]{\ensuremath{\subp{\widehat{\bm{\mathcal{M}}}}{}{#2}{}{#1}}}
\newrobustcmd{\checkbmmathcalMz}[2][]{\ensuremath{\subp{\check{\bm{\mathcal{M}}}}{}{#2}{}{#1}}}
\newrobustcmd{\tildebmmathcalMz}[2][]{\ensuremath{\subp{\tilde{\bm{\mathcal{M}}}}{}{#2}{}{#1}}}
\newrobustcmd{\widetildebmmathcalMz}[2][]{\ensuremath{\subp{\widetilde{\bm{\mathcal{M}}}}{}{#2}{}{#1}}}
\newrobustcmd{\acutebmmathcalMz}[2][]{\ensuremath{\subp{\acute{\bm{\mathcal{M}}}}{}{#2}{}{#1}}}
\newrobustcmd{\gravebmmathcalMz}[2][]{\ensuremath{\subp{\grave{\bm{\mathcal{M}}}}{}{#2}{}{#1}}}
\newrobustcmd{\dotbmmathcalMz}[2][]{\ensuremath{\subp{\dot{\bm{\mathcal{M}}}}{}{#2}{}{#1}}}
\newrobustcmd{\ddotbmmathcalMz}[2][]{\ensuremath{\subp{\ddot{\bm{\mathcal{M}}}}{}{#2}{}{#1}}}
\newrobustcmd{\brevebmmathcalMz}[2][]{\ensuremath{\subp{\breve{\bm{\mathcal{M}}}}{}{#2}{}{#1}}}
\newrobustcmd{\barbmmathcalMz}[2][]{\ensuremath{\subp{\bar{\bm{\mathcal{M}}}}{}{#2}{}{#1}}}
\newrobustcmd{\vecbmmathcalMz}[2][]{\ensuremath{\subp{\vec{\bm{\mathcal{M}}}}{}{#2}{}{#1}}}
\newrobustcmd{\mathcalNz}[2][]{\ensuremath{\subp{\mathcal{N}}{}{#2}{}{#1}}}
\newrobustcmd{\hatmathcalNz}[2][]{\ensuremath{\subp{\hat{\mathcal{N}}}{}{#2}{}{#1}}}
\newrobustcmd{\widehatmathcalNz}[2][]{\ensuremath{\subp{\widehat{\mathcal{N}}}{}{#2}{}{#1}}}
\newrobustcmd{\checkmathcalNz}[2][]{\ensuremath{\subp{\check{\mathcal{N}}}{}{#2}{}{#1}}}
\newrobustcmd{\tildemathcalNz}[2][]{\ensuremath{\subp{\tilde{\mathcal{N}}}{}{#2}{}{#1}}}
\newrobustcmd{\widetildemathcalNz}[2][]{\ensuremath{\subp{\widetilde{\mathcal{N}}}{}{#2}{}{#1}}}
\newrobustcmd{\acutemathcalNz}[2][]{\ensuremath{\subp{\acute{\mathcal{N}}}{}{#2}{}{#1}}}
\newrobustcmd{\gravemathcalNz}[2][]{\ensuremath{\subp{\grave{\mathcal{N}}}{}{#2}{}{#1}}}
\newrobustcmd{\dotmathcalNz}[2][]{\ensuremath{\subp{\dot{\mathcal{N}}}{}{#2}{}{#1}}}
\newrobustcmd{\ddotmathcalNz}[2][]{\ensuremath{\subp{\ddot{\mathcal{N}}}{}{#2}{}{#1}}}
\newrobustcmd{\brevemathcalNz}[2][]{\ensuremath{\subp{\breve{\mathcal{N}}}{}{#2}{}{#1}}}
\newrobustcmd{\barmathcalNz}[2][]{\ensuremath{\subp{\bar{\mathcal{N}}}{}{#2}{}{#1}}}
\newrobustcmd{\vecmathcalNz}[2][]{\ensuremath{\subp{\vec{\mathcal{N}}}{}{#2}{}{#1}}}
\newrobustcmd{\bmmathcalNz}[2][]{\ensuremath{\subp{\bm{\mathcal{N}}}{}{#2}{}{#1}}}
\newrobustcmd{\hatbmmathcalNz}[2][]{\ensuremath{\subp{\hat{\bm{\mathcal{N}}}}{}{#2}{}{#1}}}
\newrobustcmd{\widehatbmmathcalNz}[2][]{\ensuremath{\subp{\widehat{\bm{\mathcal{N}}}}{}{#2}{}{#1}}}
\newrobustcmd{\checkbmmathcalNz}[2][]{\ensuremath{\subp{\check{\bm{\mathcal{N}}}}{}{#2}{}{#1}}}
\newrobustcmd{\tildebmmathcalNz}[2][]{\ensuremath{\subp{\tilde{\bm{\mathcal{N}}}}{}{#2}{}{#1}}}
\newrobustcmd{\widetildebmmathcalNz}[2][]{\ensuremath{\subp{\widetilde{\bm{\mathcal{N}}}}{}{#2}{}{#1}}}
\newrobustcmd{\acutebmmathcalNz}[2][]{\ensuremath{\subp{\acute{\bm{\mathcal{N}}}}{}{#2}{}{#1}}}
\newrobustcmd{\gravebmmathcalNz}[2][]{\ensuremath{\subp{\grave{\bm{\mathcal{N}}}}{}{#2}{}{#1}}}
\newrobustcmd{\dotbmmathcalNz}[2][]{\ensuremath{\subp{\dot{\bm{\mathcal{N}}}}{}{#2}{}{#1}}}
\newrobustcmd{\ddotbmmathcalNz}[2][]{\ensuremath{\subp{\ddot{\bm{\mathcal{N}}}}{}{#2}{}{#1}}}
\newrobustcmd{\brevebmmathcalNz}[2][]{\ensuremath{\subp{\breve{\bm{\mathcal{N}}}}{}{#2}{}{#1}}}
\newrobustcmd{\barbmmathcalNz}[2][]{\ensuremath{\subp{\bar{\bm{\mathcal{N}}}}{}{#2}{}{#1}}}
\newrobustcmd{\vecbmmathcalNz}[2][]{\ensuremath{\subp{\vec{\bm{\mathcal{N}}}}{}{#2}{}{#1}}}
\newrobustcmd{\mathcalOz}[2][]{\ensuremath{\subp{\mathcal{O}}{}{#2}{}{#1}}}
\newrobustcmd{\hatmathcalOz}[2][]{\ensuremath{\subp{\hat{\mathcal{O}}}{}{#2}{}{#1}}}
\newrobustcmd{\widehatmathcalOz}[2][]{\ensuremath{\subp{\widehat{\mathcal{O}}}{}{#2}{}{#1}}}
\newrobustcmd{\checkmathcalOz}[2][]{\ensuremath{\subp{\check{\mathcal{O}}}{}{#2}{}{#1}}}
\newrobustcmd{\tildemathcalOz}[2][]{\ensuremath{\subp{\tilde{\mathcal{O}}}{}{#2}{}{#1}}}
\newrobustcmd{\widetildemathcalOz}[2][]{\ensuremath{\subp{\widetilde{\mathcal{O}}}{}{#2}{}{#1}}}
\newrobustcmd{\acutemathcalOz}[2][]{\ensuremath{\subp{\acute{\mathcal{O}}}{}{#2}{}{#1}}}
\newrobustcmd{\gravemathcalOz}[2][]{\ensuremath{\subp{\grave{\mathcal{O}}}{}{#2}{}{#1}}}
\newrobustcmd{\dotmathcalOz}[2][]{\ensuremath{\subp{\dot{\mathcal{O}}}{}{#2}{}{#1}}}
\newrobustcmd{\ddotmathcalOz}[2][]{\ensuremath{\subp{\ddot{\mathcal{O}}}{}{#2}{}{#1}}}
\newrobustcmd{\brevemathcalOz}[2][]{\ensuremath{\subp{\breve{\mathcal{O}}}{}{#2}{}{#1}}}
\newrobustcmd{\barmathcalOz}[2][]{\ensuremath{\subp{\bar{\mathcal{O}}}{}{#2}{}{#1}}}
\newrobustcmd{\vecmathcalOz}[2][]{\ensuremath{\subp{\vec{\mathcal{O}}}{}{#2}{}{#1}}}
\newrobustcmd{\bmmathcalOz}[2][]{\ensuremath{\subp{\bm{\mathcal{O}}}{}{#2}{}{#1}}}
\newrobustcmd{\hatbmmathcalOz}[2][]{\ensuremath{\subp{\hat{\bm{\mathcal{O}}}}{}{#2}{}{#1}}}
\newrobustcmd{\widehatbmmathcalOz}[2][]{\ensuremath{\subp{\widehat{\bm{\mathcal{O}}}}{}{#2}{}{#1}}}
\newrobustcmd{\checkbmmathcalOz}[2][]{\ensuremath{\subp{\check{\bm{\mathcal{O}}}}{}{#2}{}{#1}}}
\newrobustcmd{\tildebmmathcalOz}[2][]{\ensuremath{\subp{\tilde{\bm{\mathcal{O}}}}{}{#2}{}{#1}}}
\newrobustcmd{\widetildebmmathcalOz}[2][]{\ensuremath{\subp{\widetilde{\bm{\mathcal{O}}}}{}{#2}{}{#1}}}
\newrobustcmd{\acutebmmathcalOz}[2][]{\ensuremath{\subp{\acute{\bm{\mathcal{O}}}}{}{#2}{}{#1}}}
\newrobustcmd{\gravebmmathcalOz}[2][]{\ensuremath{\subp{\grave{\bm{\mathcal{O}}}}{}{#2}{}{#1}}}
\newrobustcmd{\dotbmmathcalOz}[2][]{\ensuremath{\subp{\dot{\bm{\mathcal{O}}}}{}{#2}{}{#1}}}
\newrobustcmd{\ddotbmmathcalOz}[2][]{\ensuremath{\subp{\ddot{\bm{\mathcal{O}}}}{}{#2}{}{#1}}}
\newrobustcmd{\brevebmmathcalOz}[2][]{\ensuremath{\subp{\breve{\bm{\mathcal{O}}}}{}{#2}{}{#1}}}
\newrobustcmd{\barbmmathcalOz}[2][]{\ensuremath{\subp{\bar{\bm{\mathcal{O}}}}{}{#2}{}{#1}}}
\newrobustcmd{\vecbmmathcalOz}[2][]{\ensuremath{\subp{\vec{\bm{\mathcal{O}}}}{}{#2}{}{#1}}}
\newrobustcmd{\mathcalPz}[2][]{\ensuremath{\subp{\mathcal{P}}{}{#2}{}{#1}}}
\newrobustcmd{\hatmathcalPz}[2][]{\ensuremath{\subp{\hat{\mathcal{P}}}{}{#2}{}{#1}}}
\newrobustcmd{\widehatmathcalPz}[2][]{\ensuremath{\subp{\widehat{\mathcal{P}}}{}{#2}{}{#1}}}
\newrobustcmd{\checkmathcalPz}[2][]{\ensuremath{\subp{\check{\mathcal{P}}}{}{#2}{}{#1}}}
\newrobustcmd{\tildemathcalPz}[2][]{\ensuremath{\subp{\tilde{\mathcal{P}}}{}{#2}{}{#1}}}
\newrobustcmd{\widetildemathcalPz}[2][]{\ensuremath{\subp{\widetilde{\mathcal{P}}}{}{#2}{}{#1}}}
\newrobustcmd{\acutemathcalPz}[2][]{\ensuremath{\subp{\acute{\mathcal{P}}}{}{#2}{}{#1}}}
\newrobustcmd{\gravemathcalPz}[2][]{\ensuremath{\subp{\grave{\mathcal{P}}}{}{#2}{}{#1}}}
\newrobustcmd{\dotmathcalPz}[2][]{\ensuremath{\subp{\dot{\mathcal{P}}}{}{#2}{}{#1}}}
\newrobustcmd{\ddotmathcalPz}[2][]{\ensuremath{\subp{\ddot{\mathcal{P}}}{}{#2}{}{#1}}}
\newrobustcmd{\brevemathcalPz}[2][]{\ensuremath{\subp{\breve{\mathcal{P}}}{}{#2}{}{#1}}}
\newrobustcmd{\barmathcalPz}[2][]{\ensuremath{\subp{\bar{\mathcal{P}}}{}{#2}{}{#1}}}
\newrobustcmd{\vecmathcalPz}[2][]{\ensuremath{\subp{\vec{\mathcal{P}}}{}{#2}{}{#1}}}
\newrobustcmd{\bmmathcalPz}[2][]{\ensuremath{\subp{\bm{\mathcal{P}}}{}{#2}{}{#1}}}
\newrobustcmd{\hatbmmathcalPz}[2][]{\ensuremath{\subp{\hat{\bm{\mathcal{P}}}}{}{#2}{}{#1}}}
\newrobustcmd{\widehatbmmathcalPz}[2][]{\ensuremath{\subp{\widehat{\bm{\mathcal{P}}}}{}{#2}{}{#1}}}
\newrobustcmd{\checkbmmathcalPz}[2][]{\ensuremath{\subp{\check{\bm{\mathcal{P}}}}{}{#2}{}{#1}}}
\newrobustcmd{\tildebmmathcalPz}[2][]{\ensuremath{\subp{\tilde{\bm{\mathcal{P}}}}{}{#2}{}{#1}}}
\newrobustcmd{\widetildebmmathcalPz}[2][]{\ensuremath{\subp{\widetilde{\bm{\mathcal{P}}}}{}{#2}{}{#1}}}
\newrobustcmd{\acutebmmathcalPz}[2][]{\ensuremath{\subp{\acute{\bm{\mathcal{P}}}}{}{#2}{}{#1}}}
\newrobustcmd{\gravebmmathcalPz}[2][]{\ensuremath{\subp{\grave{\bm{\mathcal{P}}}}{}{#2}{}{#1}}}
\newrobustcmd{\dotbmmathcalPz}[2][]{\ensuremath{\subp{\dot{\bm{\mathcal{P}}}}{}{#2}{}{#1}}}
\newrobustcmd{\ddotbmmathcalPz}[2][]{\ensuremath{\subp{\ddot{\bm{\mathcal{P}}}}{}{#2}{}{#1}}}
\newrobustcmd{\brevebmmathcalPz}[2][]{\ensuremath{\subp{\breve{\bm{\mathcal{P}}}}{}{#2}{}{#1}}}
\newrobustcmd{\barbmmathcalPz}[2][]{\ensuremath{\subp{\bar{\bm{\mathcal{P}}}}{}{#2}{}{#1}}}
\newrobustcmd{\vecbmmathcalPz}[2][]{\ensuremath{\subp{\vec{\bm{\mathcal{P}}}}{}{#2}{}{#1}}}
\newrobustcmd{\mathcalQz}[2][]{\ensuremath{\subp{\mathcal{Q}}{}{#2}{}{#1}}}
\newrobustcmd{\hatmathcalQz}[2][]{\ensuremath{\subp{\hat{\mathcal{Q}}}{}{#2}{}{#1}}}
\newrobustcmd{\widehatmathcalQz}[2][]{\ensuremath{\subp{\widehat{\mathcal{Q}}}{}{#2}{}{#1}}}
\newrobustcmd{\checkmathcalQz}[2][]{\ensuremath{\subp{\check{\mathcal{Q}}}{}{#2}{}{#1}}}
\newrobustcmd{\tildemathcalQz}[2][]{\ensuremath{\subp{\tilde{\mathcal{Q}}}{}{#2}{}{#1}}}
\newrobustcmd{\widetildemathcalQz}[2][]{\ensuremath{\subp{\widetilde{\mathcal{Q}}}{}{#2}{}{#1}}}
\newrobustcmd{\acutemathcalQz}[2][]{\ensuremath{\subp{\acute{\mathcal{Q}}}{}{#2}{}{#1}}}
\newrobustcmd{\gravemathcalQz}[2][]{\ensuremath{\subp{\grave{\mathcal{Q}}}{}{#2}{}{#1}}}
\newrobustcmd{\dotmathcalQz}[2][]{\ensuremath{\subp{\dot{\mathcal{Q}}}{}{#2}{}{#1}}}
\newrobustcmd{\ddotmathcalQz}[2][]{\ensuremath{\subp{\ddot{\mathcal{Q}}}{}{#2}{}{#1}}}
\newrobustcmd{\brevemathcalQz}[2][]{\ensuremath{\subp{\breve{\mathcal{Q}}}{}{#2}{}{#1}}}
\newrobustcmd{\barmathcalQz}[2][]{\ensuremath{\subp{\bar{\mathcal{Q}}}{}{#2}{}{#1}}}
\newrobustcmd{\vecmathcalQz}[2][]{\ensuremath{\subp{\vec{\mathcal{Q}}}{}{#2}{}{#1}}}
\newrobustcmd{\bmmathcalQz}[2][]{\ensuremath{\subp{\bm{\mathcal{Q}}}{}{#2}{}{#1}}}
\newrobustcmd{\hatbmmathcalQz}[2][]{\ensuremath{\subp{\hat{\bm{\mathcal{Q}}}}{}{#2}{}{#1}}}
\newrobustcmd{\widehatbmmathcalQz}[2][]{\ensuremath{\subp{\widehat{\bm{\mathcal{Q}}}}{}{#2}{}{#1}}}
\newrobustcmd{\checkbmmathcalQz}[2][]{\ensuremath{\subp{\check{\bm{\mathcal{Q}}}}{}{#2}{}{#1}}}
\newrobustcmd{\tildebmmathcalQz}[2][]{\ensuremath{\subp{\tilde{\bm{\mathcal{Q}}}}{}{#2}{}{#1}}}
\newrobustcmd{\widetildebmmathcalQz}[2][]{\ensuremath{\subp{\widetilde{\bm{\mathcal{Q}}}}{}{#2}{}{#1}}}
\newrobustcmd{\acutebmmathcalQz}[2][]{\ensuremath{\subp{\acute{\bm{\mathcal{Q}}}}{}{#2}{}{#1}}}
\newrobustcmd{\gravebmmathcalQz}[2][]{\ensuremath{\subp{\grave{\bm{\mathcal{Q}}}}{}{#2}{}{#1}}}
\newrobustcmd{\dotbmmathcalQz}[2][]{\ensuremath{\subp{\dot{\bm{\mathcal{Q}}}}{}{#2}{}{#1}}}
\newrobustcmd{\ddotbmmathcalQz}[2][]{\ensuremath{\subp{\ddot{\bm{\mathcal{Q}}}}{}{#2}{}{#1}}}
\newrobustcmd{\brevebmmathcalQz}[2][]{\ensuremath{\subp{\breve{\bm{\mathcal{Q}}}}{}{#2}{}{#1}}}
\newrobustcmd{\barbmmathcalQz}[2][]{\ensuremath{\subp{\bar{\bm{\mathcal{Q}}}}{}{#2}{}{#1}}}
\newrobustcmd{\vecbmmathcalQz}[2][]{\ensuremath{\subp{\vec{\bm{\mathcal{Q}}}}{}{#2}{}{#1}}}
\newrobustcmd{\mathcalRz}[2][]{\ensuremath{\subp{\mathcal{R}}{}{#2}{}{#1}}}
\newrobustcmd{\hatmathcalRz}[2][]{\ensuremath{\subp{\hat{\mathcal{R}}}{}{#2}{}{#1}}}
\newrobustcmd{\widehatmathcalRz}[2][]{\ensuremath{\subp{\widehat{\mathcal{R}}}{}{#2}{}{#1}}}
\newrobustcmd{\checkmathcalRz}[2][]{\ensuremath{\subp{\check{\mathcal{R}}}{}{#2}{}{#1}}}
\newrobustcmd{\tildemathcalRz}[2][]{\ensuremath{\subp{\tilde{\mathcal{R}}}{}{#2}{}{#1}}}
\newrobustcmd{\widetildemathcalRz}[2][]{\ensuremath{\subp{\widetilde{\mathcal{R}}}{}{#2}{}{#1}}}
\newrobustcmd{\acutemathcalRz}[2][]{\ensuremath{\subp{\acute{\mathcal{R}}}{}{#2}{}{#1}}}
\newrobustcmd{\gravemathcalRz}[2][]{\ensuremath{\subp{\grave{\mathcal{R}}}{}{#2}{}{#1}}}
\newrobustcmd{\dotmathcalRz}[2][]{\ensuremath{\subp{\dot{\mathcal{R}}}{}{#2}{}{#1}}}
\newrobustcmd{\ddotmathcalRz}[2][]{\ensuremath{\subp{\ddot{\mathcal{R}}}{}{#2}{}{#1}}}
\newrobustcmd{\brevemathcalRz}[2][]{\ensuremath{\subp{\breve{\mathcal{R}}}{}{#2}{}{#1}}}
\newrobustcmd{\barmathcalRz}[2][]{\ensuremath{\subp{\bar{\mathcal{R}}}{}{#2}{}{#1}}}
\newrobustcmd{\vecmathcalRz}[2][]{\ensuremath{\subp{\vec{\mathcal{R}}}{}{#2}{}{#1}}}
\newrobustcmd{\bmmathcalRz}[2][]{\ensuremath{\subp{\bm{\mathcal{R}}}{}{#2}{}{#1}}}
\newrobustcmd{\hatbmmathcalRz}[2][]{\ensuremath{\subp{\hat{\bm{\mathcal{R}}}}{}{#2}{}{#1}}}
\newrobustcmd{\widehatbmmathcalRz}[2][]{\ensuremath{\subp{\widehat{\bm{\mathcal{R}}}}{}{#2}{}{#1}}}
\newrobustcmd{\checkbmmathcalRz}[2][]{\ensuremath{\subp{\check{\bm{\mathcal{R}}}}{}{#2}{}{#1}}}
\newrobustcmd{\tildebmmathcalRz}[2][]{\ensuremath{\subp{\tilde{\bm{\mathcal{R}}}}{}{#2}{}{#1}}}
\newrobustcmd{\widetildebmmathcalRz}[2][]{\ensuremath{\subp{\widetilde{\bm{\mathcal{R}}}}{}{#2}{}{#1}}}
\newrobustcmd{\acutebmmathcalRz}[2][]{\ensuremath{\subp{\acute{\bm{\mathcal{R}}}}{}{#2}{}{#1}}}
\newrobustcmd{\gravebmmathcalRz}[2][]{\ensuremath{\subp{\grave{\bm{\mathcal{R}}}}{}{#2}{}{#1}}}
\newrobustcmd{\dotbmmathcalRz}[2][]{\ensuremath{\subp{\dot{\bm{\mathcal{R}}}}{}{#2}{}{#1}}}
\newrobustcmd{\ddotbmmathcalRz}[2][]{\ensuremath{\subp{\ddot{\bm{\mathcal{R}}}}{}{#2}{}{#1}}}
\newrobustcmd{\brevebmmathcalRz}[2][]{\ensuremath{\subp{\breve{\bm{\mathcal{R}}}}{}{#2}{}{#1}}}
\newrobustcmd{\barbmmathcalRz}[2][]{\ensuremath{\subp{\bar{\bm{\mathcal{R}}}}{}{#2}{}{#1}}}
\newrobustcmd{\vecbmmathcalRz}[2][]{\ensuremath{\subp{\vec{\bm{\mathcal{R}}}}{}{#2}{}{#1}}}
\newrobustcmd{\mathcalSz}[2][]{\ensuremath{\subp{\mathcal{S}}{}{#2}{}{#1}}}
\newrobustcmd{\hatmathcalSz}[2][]{\ensuremath{\subp{\hat{\mathcal{S}}}{}{#2}{}{#1}}}
\newrobustcmd{\widehatmathcalSz}[2][]{\ensuremath{\subp{\widehat{\mathcal{S}}}{}{#2}{}{#1}}}
\newrobustcmd{\checkmathcalSz}[2][]{\ensuremath{\subp{\check{\mathcal{S}}}{}{#2}{}{#1}}}
\newrobustcmd{\tildemathcalSz}[2][]{\ensuremath{\subp{\tilde{\mathcal{S}}}{}{#2}{}{#1}}}
\newrobustcmd{\widetildemathcalSz}[2][]{\ensuremath{\subp{\widetilde{\mathcal{S}}}{}{#2}{}{#1}}}
\newrobustcmd{\acutemathcalSz}[2][]{\ensuremath{\subp{\acute{\mathcal{S}}}{}{#2}{}{#1}}}
\newrobustcmd{\gravemathcalSz}[2][]{\ensuremath{\subp{\grave{\mathcal{S}}}{}{#2}{}{#1}}}
\newrobustcmd{\dotmathcalSz}[2][]{\ensuremath{\subp{\dot{\mathcal{S}}}{}{#2}{}{#1}}}
\newrobustcmd{\ddotmathcalSz}[2][]{\ensuremath{\subp{\ddot{\mathcal{S}}}{}{#2}{}{#1}}}
\newrobustcmd{\brevemathcalSz}[2][]{\ensuremath{\subp{\breve{\mathcal{S}}}{}{#2}{}{#1}}}
\newrobustcmd{\barmathcalSz}[2][]{\ensuremath{\subp{\bar{\mathcal{S}}}{}{#2}{}{#1}}}
\newrobustcmd{\vecmathcalSz}[2][]{\ensuremath{\subp{\vec{\mathcal{S}}}{}{#2}{}{#1}}}
\newrobustcmd{\bmmathcalSz}[2][]{\ensuremath{\subp{\bm{\mathcal{S}}}{}{#2}{}{#1}}}
\newrobustcmd{\hatbmmathcalSz}[2][]{\ensuremath{\subp{\hat{\bm{\mathcal{S}}}}{}{#2}{}{#1}}}
\newrobustcmd{\widehatbmmathcalSz}[2][]{\ensuremath{\subp{\widehat{\bm{\mathcal{S}}}}{}{#2}{}{#1}}}
\newrobustcmd{\checkbmmathcalSz}[2][]{\ensuremath{\subp{\check{\bm{\mathcal{S}}}}{}{#2}{}{#1}}}
\newrobustcmd{\tildebmmathcalSz}[2][]{\ensuremath{\subp{\tilde{\bm{\mathcal{S}}}}{}{#2}{}{#1}}}
\newrobustcmd{\widetildebmmathcalSz}[2][]{\ensuremath{\subp{\widetilde{\bm{\mathcal{S}}}}{}{#2}{}{#1}}}
\newrobustcmd{\acutebmmathcalSz}[2][]{\ensuremath{\subp{\acute{\bm{\mathcal{S}}}}{}{#2}{}{#1}}}
\newrobustcmd{\gravebmmathcalSz}[2][]{\ensuremath{\subp{\grave{\bm{\mathcal{S}}}}{}{#2}{}{#1}}}
\newrobustcmd{\dotbmmathcalSz}[2][]{\ensuremath{\subp{\dot{\bm{\mathcal{S}}}}{}{#2}{}{#1}}}
\newrobustcmd{\ddotbmmathcalSz}[2][]{\ensuremath{\subp{\ddot{\bm{\mathcal{S}}}}{}{#2}{}{#1}}}
\newrobustcmd{\brevebmmathcalSz}[2][]{\ensuremath{\subp{\breve{\bm{\mathcal{S}}}}{}{#2}{}{#1}}}
\newrobustcmd{\barbmmathcalSz}[2][]{\ensuremath{\subp{\bar{\bm{\mathcal{S}}}}{}{#2}{}{#1}}}
\newrobustcmd{\vecbmmathcalSz}[2][]{\ensuremath{\subp{\vec{\bm{\mathcal{S}}}}{}{#2}{}{#1}}}
\newrobustcmd{\mathcalTz}[2][]{\ensuremath{\subp{\mathcal{T}}{}{#2}{}{#1}}}
\newrobustcmd{\hatmathcalTz}[2][]{\ensuremath{\subp{\hat{\mathcal{T}}}{}{#2}{}{#1}}}
\newrobustcmd{\widehatmathcalTz}[2][]{\ensuremath{\subp{\widehat{\mathcal{T}}}{}{#2}{}{#1}}}
\newrobustcmd{\checkmathcalTz}[2][]{\ensuremath{\subp{\check{\mathcal{T}}}{}{#2}{}{#1}}}
\newrobustcmd{\tildemathcalTz}[2][]{\ensuremath{\subp{\tilde{\mathcal{T}}}{}{#2}{}{#1}}}
\newrobustcmd{\widetildemathcalTz}[2][]{\ensuremath{\subp{\widetilde{\mathcal{T}}}{}{#2}{}{#1}}}
\newrobustcmd{\acutemathcalTz}[2][]{\ensuremath{\subp{\acute{\mathcal{T}}}{}{#2}{}{#1}}}
\newrobustcmd{\gravemathcalTz}[2][]{\ensuremath{\subp{\grave{\mathcal{T}}}{}{#2}{}{#1}}}
\newrobustcmd{\dotmathcalTz}[2][]{\ensuremath{\subp{\dot{\mathcal{T}}}{}{#2}{}{#1}}}
\newrobustcmd{\ddotmathcalTz}[2][]{\ensuremath{\subp{\ddot{\mathcal{T}}}{}{#2}{}{#1}}}
\newrobustcmd{\brevemathcalTz}[2][]{\ensuremath{\subp{\breve{\mathcal{T}}}{}{#2}{}{#1}}}
\newrobustcmd{\barmathcalTz}[2][]{\ensuremath{\subp{\bar{\mathcal{T}}}{}{#2}{}{#1}}}
\newrobustcmd{\vecmathcalTz}[2][]{\ensuremath{\subp{\vec{\mathcal{T}}}{}{#2}{}{#1}}}
\newrobustcmd{\bmmathcalTz}[2][]{\ensuremath{\subp{\bm{\mathcal{T}}}{}{#2}{}{#1}}}
\newrobustcmd{\hatbmmathcalTz}[2][]{\ensuremath{\subp{\hat{\bm{\mathcal{T}}}}{}{#2}{}{#1}}}
\newrobustcmd{\widehatbmmathcalTz}[2][]{\ensuremath{\subp{\widehat{\bm{\mathcal{T}}}}{}{#2}{}{#1}}}
\newrobustcmd{\checkbmmathcalTz}[2][]{\ensuremath{\subp{\check{\bm{\mathcal{T}}}}{}{#2}{}{#1}}}
\newrobustcmd{\tildebmmathcalTz}[2][]{\ensuremath{\subp{\tilde{\bm{\mathcal{T}}}}{}{#2}{}{#1}}}
\newrobustcmd{\widetildebmmathcalTz}[2][]{\ensuremath{\subp{\widetilde{\bm{\mathcal{T}}}}{}{#2}{}{#1}}}
\newrobustcmd{\acutebmmathcalTz}[2][]{\ensuremath{\subp{\acute{\bm{\mathcal{T}}}}{}{#2}{}{#1}}}
\newrobustcmd{\gravebmmathcalTz}[2][]{\ensuremath{\subp{\grave{\bm{\mathcal{T}}}}{}{#2}{}{#1}}}
\newrobustcmd{\dotbmmathcalTz}[2][]{\ensuremath{\subp{\dot{\bm{\mathcal{T}}}}{}{#2}{}{#1}}}
\newrobustcmd{\ddotbmmathcalTz}[2][]{\ensuremath{\subp{\ddot{\bm{\mathcal{T}}}}{}{#2}{}{#1}}}
\newrobustcmd{\brevebmmathcalTz}[2][]{\ensuremath{\subp{\breve{\bm{\mathcal{T}}}}{}{#2}{}{#1}}}
\newrobustcmd{\barbmmathcalTz}[2][]{\ensuremath{\subp{\bar{\bm{\mathcal{T}}}}{}{#2}{}{#1}}}
\newrobustcmd{\vecbmmathcalTz}[2][]{\ensuremath{\subp{\vec{\bm{\mathcal{T}}}}{}{#2}{}{#1}}}
\newrobustcmd{\mathcalUz}[2][]{\ensuremath{\subp{\mathcal{U}}{}{#2}{}{#1}}}
\newrobustcmd{\hatmathcalUz}[2][]{\ensuremath{\subp{\hat{\mathcal{U}}}{}{#2}{}{#1}}}
\newrobustcmd{\widehatmathcalUz}[2][]{\ensuremath{\subp{\widehat{\mathcal{U}}}{}{#2}{}{#1}}}
\newrobustcmd{\checkmathcalUz}[2][]{\ensuremath{\subp{\check{\mathcal{U}}}{}{#2}{}{#1}}}
\newrobustcmd{\tildemathcalUz}[2][]{\ensuremath{\subp{\tilde{\mathcal{U}}}{}{#2}{}{#1}}}
\newrobustcmd{\widetildemathcalUz}[2][]{\ensuremath{\subp{\widetilde{\mathcal{U}}}{}{#2}{}{#1}}}
\newrobustcmd{\acutemathcalUz}[2][]{\ensuremath{\subp{\acute{\mathcal{U}}}{}{#2}{}{#1}}}
\newrobustcmd{\gravemathcalUz}[2][]{\ensuremath{\subp{\grave{\mathcal{U}}}{}{#2}{}{#1}}}
\newrobustcmd{\dotmathcalUz}[2][]{\ensuremath{\subp{\dot{\mathcal{U}}}{}{#2}{}{#1}}}
\newrobustcmd{\ddotmathcalUz}[2][]{\ensuremath{\subp{\ddot{\mathcal{U}}}{}{#2}{}{#1}}}
\newrobustcmd{\brevemathcalUz}[2][]{\ensuremath{\subp{\breve{\mathcal{U}}}{}{#2}{}{#1}}}
\newrobustcmd{\barmathcalUz}[2][]{\ensuremath{\subp{\bar{\mathcal{U}}}{}{#2}{}{#1}}}
\newrobustcmd{\vecmathcalUz}[2][]{\ensuremath{\subp{\vec{\mathcal{U}}}{}{#2}{}{#1}}}
\newrobustcmd{\bmmathcalUz}[2][]{\ensuremath{\subp{\bm{\mathcal{U}}}{}{#2}{}{#1}}}
\newrobustcmd{\hatbmmathcalUz}[2][]{\ensuremath{\subp{\hat{\bm{\mathcal{U}}}}{}{#2}{}{#1}}}
\newrobustcmd{\widehatbmmathcalUz}[2][]{\ensuremath{\subp{\widehat{\bm{\mathcal{U}}}}{}{#2}{}{#1}}}
\newrobustcmd{\checkbmmathcalUz}[2][]{\ensuremath{\subp{\check{\bm{\mathcal{U}}}}{}{#2}{}{#1}}}
\newrobustcmd{\tildebmmathcalUz}[2][]{\ensuremath{\subp{\tilde{\bm{\mathcal{U}}}}{}{#2}{}{#1}}}
\newrobustcmd{\widetildebmmathcalUz}[2][]{\ensuremath{\subp{\widetilde{\bm{\mathcal{U}}}}{}{#2}{}{#1}}}
\newrobustcmd{\acutebmmathcalUz}[2][]{\ensuremath{\subp{\acute{\bm{\mathcal{U}}}}{}{#2}{}{#1}}}
\newrobustcmd{\gravebmmathcalUz}[2][]{\ensuremath{\subp{\grave{\bm{\mathcal{U}}}}{}{#2}{}{#1}}}
\newrobustcmd{\dotbmmathcalUz}[2][]{\ensuremath{\subp{\dot{\bm{\mathcal{U}}}}{}{#2}{}{#1}}}
\newrobustcmd{\ddotbmmathcalUz}[2][]{\ensuremath{\subp{\ddot{\bm{\mathcal{U}}}}{}{#2}{}{#1}}}
\newrobustcmd{\brevebmmathcalUz}[2][]{\ensuremath{\subp{\breve{\bm{\mathcal{U}}}}{}{#2}{}{#1}}}
\newrobustcmd{\barbmmathcalUz}[2][]{\ensuremath{\subp{\bar{\bm{\mathcal{U}}}}{}{#2}{}{#1}}}
\newrobustcmd{\vecbmmathcalUz}[2][]{\ensuremath{\subp{\vec{\bm{\mathcal{U}}}}{}{#2}{}{#1}}}
\newrobustcmd{\mathcalVz}[2][]{\ensuremath{\subp{\mathcal{V}}{}{#2}{}{#1}}}
\newrobustcmd{\hatmathcalVz}[2][]{\ensuremath{\subp{\hat{\mathcal{V}}}{}{#2}{}{#1}}}
\newrobustcmd{\widehatmathcalVz}[2][]{\ensuremath{\subp{\widehat{\mathcal{V}}}{}{#2}{}{#1}}}
\newrobustcmd{\checkmathcalVz}[2][]{\ensuremath{\subp{\check{\mathcal{V}}}{}{#2}{}{#1}}}
\newrobustcmd{\tildemathcalVz}[2][]{\ensuremath{\subp{\tilde{\mathcal{V}}}{}{#2}{}{#1}}}
\newrobustcmd{\widetildemathcalVz}[2][]{\ensuremath{\subp{\widetilde{\mathcal{V}}}{}{#2}{}{#1}}}
\newrobustcmd{\acutemathcalVz}[2][]{\ensuremath{\subp{\acute{\mathcal{V}}}{}{#2}{}{#1}}}
\newrobustcmd{\gravemathcalVz}[2][]{\ensuremath{\subp{\grave{\mathcal{V}}}{}{#2}{}{#1}}}
\newrobustcmd{\dotmathcalVz}[2][]{\ensuremath{\subp{\dot{\mathcal{V}}}{}{#2}{}{#1}}}
\newrobustcmd{\ddotmathcalVz}[2][]{\ensuremath{\subp{\ddot{\mathcal{V}}}{}{#2}{}{#1}}}
\newrobustcmd{\brevemathcalVz}[2][]{\ensuremath{\subp{\breve{\mathcal{V}}}{}{#2}{}{#1}}}
\newrobustcmd{\barmathcalVz}[2][]{\ensuremath{\subp{\bar{\mathcal{V}}}{}{#2}{}{#1}}}
\newrobustcmd{\vecmathcalVz}[2][]{\ensuremath{\subp{\vec{\mathcal{V}}}{}{#2}{}{#1}}}
\newrobustcmd{\bmmathcalVz}[2][]{\ensuremath{\subp{\bm{\mathcal{V}}}{}{#2}{}{#1}}}
\newrobustcmd{\hatbmmathcalVz}[2][]{\ensuremath{\subp{\hat{\bm{\mathcal{V}}}}{}{#2}{}{#1}}}
\newrobustcmd{\widehatbmmathcalVz}[2][]{\ensuremath{\subp{\widehat{\bm{\mathcal{V}}}}{}{#2}{}{#1}}}
\newrobustcmd{\checkbmmathcalVz}[2][]{\ensuremath{\subp{\check{\bm{\mathcal{V}}}}{}{#2}{}{#1}}}
\newrobustcmd{\tildebmmathcalVz}[2][]{\ensuremath{\subp{\tilde{\bm{\mathcal{V}}}}{}{#2}{}{#1}}}
\newrobustcmd{\widetildebmmathcalVz}[2][]{\ensuremath{\subp{\widetilde{\bm{\mathcal{V}}}}{}{#2}{}{#1}}}
\newrobustcmd{\acutebmmathcalVz}[2][]{\ensuremath{\subp{\acute{\bm{\mathcal{V}}}}{}{#2}{}{#1}}}
\newrobustcmd{\gravebmmathcalVz}[2][]{\ensuremath{\subp{\grave{\bm{\mathcal{V}}}}{}{#2}{}{#1}}}
\newrobustcmd{\dotbmmathcalVz}[2][]{\ensuremath{\subp{\dot{\bm{\mathcal{V}}}}{}{#2}{}{#1}}}
\newrobustcmd{\ddotbmmathcalVz}[2][]{\ensuremath{\subp{\ddot{\bm{\mathcal{V}}}}{}{#2}{}{#1}}}
\newrobustcmd{\brevebmmathcalVz}[2][]{\ensuremath{\subp{\breve{\bm{\mathcal{V}}}}{}{#2}{}{#1}}}
\newrobustcmd{\barbmmathcalVz}[2][]{\ensuremath{\subp{\bar{\bm{\mathcal{V}}}}{}{#2}{}{#1}}}
\newrobustcmd{\vecbmmathcalVz}[2][]{\ensuremath{\subp{\vec{\bm{\mathcal{V}}}}{}{#2}{}{#1}}}
\newrobustcmd{\mathcalWz}[2][]{\ensuremath{\subp{\mathcal{W}}{}{#2}{}{#1}}}
\newrobustcmd{\hatmathcalWz}[2][]{\ensuremath{\subp{\hat{\mathcal{W}}}{}{#2}{}{#1}}}
\newrobustcmd{\widehatmathcalWz}[2][]{\ensuremath{\subp{\widehat{\mathcal{W}}}{}{#2}{}{#1}}}
\newrobustcmd{\checkmathcalWz}[2][]{\ensuremath{\subp{\check{\mathcal{W}}}{}{#2}{}{#1}}}
\newrobustcmd{\tildemathcalWz}[2][]{\ensuremath{\subp{\tilde{\mathcal{W}}}{}{#2}{}{#1}}}
\newrobustcmd{\widetildemathcalWz}[2][]{\ensuremath{\subp{\widetilde{\mathcal{W}}}{}{#2}{}{#1}}}
\newrobustcmd{\acutemathcalWz}[2][]{\ensuremath{\subp{\acute{\mathcal{W}}}{}{#2}{}{#1}}}
\newrobustcmd{\gravemathcalWz}[2][]{\ensuremath{\subp{\grave{\mathcal{W}}}{}{#2}{}{#1}}}
\newrobustcmd{\dotmathcalWz}[2][]{\ensuremath{\subp{\dot{\mathcal{W}}}{}{#2}{}{#1}}}
\newrobustcmd{\ddotmathcalWz}[2][]{\ensuremath{\subp{\ddot{\mathcal{W}}}{}{#2}{}{#1}}}
\newrobustcmd{\brevemathcalWz}[2][]{\ensuremath{\subp{\breve{\mathcal{W}}}{}{#2}{}{#1}}}
\newrobustcmd{\barmathcalWz}[2][]{\ensuremath{\subp{\bar{\mathcal{W}}}{}{#2}{}{#1}}}
\newrobustcmd{\vecmathcalWz}[2][]{\ensuremath{\subp{\vec{\mathcal{W}}}{}{#2}{}{#1}}}
\newrobustcmd{\bmmathcalWz}[2][]{\ensuremath{\subp{\bm{\mathcal{W}}}{}{#2}{}{#1}}}
\newrobustcmd{\hatbmmathcalWz}[2][]{\ensuremath{\subp{\hat{\bm{\mathcal{W}}}}{}{#2}{}{#1}}}
\newrobustcmd{\widehatbmmathcalWz}[2][]{\ensuremath{\subp{\widehat{\bm{\mathcal{W}}}}{}{#2}{}{#1}}}
\newrobustcmd{\checkbmmathcalWz}[2][]{\ensuremath{\subp{\check{\bm{\mathcal{W}}}}{}{#2}{}{#1}}}
\newrobustcmd{\tildebmmathcalWz}[2][]{\ensuremath{\subp{\tilde{\bm{\mathcal{W}}}}{}{#2}{}{#1}}}
\newrobustcmd{\widetildebmmathcalWz}[2][]{\ensuremath{\subp{\widetilde{\bm{\mathcal{W}}}}{}{#2}{}{#1}}}
\newrobustcmd{\acutebmmathcalWz}[2][]{\ensuremath{\subp{\acute{\bm{\mathcal{W}}}}{}{#2}{}{#1}}}
\newrobustcmd{\gravebmmathcalWz}[2][]{\ensuremath{\subp{\grave{\bm{\mathcal{W}}}}{}{#2}{}{#1}}}
\newrobustcmd{\dotbmmathcalWz}[2][]{\ensuremath{\subp{\dot{\bm{\mathcal{W}}}}{}{#2}{}{#1}}}
\newrobustcmd{\ddotbmmathcalWz}[2][]{\ensuremath{\subp{\ddot{\bm{\mathcal{W}}}}{}{#2}{}{#1}}}
\newrobustcmd{\brevebmmathcalWz}[2][]{\ensuremath{\subp{\breve{\bm{\mathcal{W}}}}{}{#2}{}{#1}}}
\newrobustcmd{\barbmmathcalWz}[2][]{\ensuremath{\subp{\bar{\bm{\mathcal{W}}}}{}{#2}{}{#1}}}
\newrobustcmd{\vecbmmathcalWz}[2][]{\ensuremath{\subp{\vec{\bm{\mathcal{W}}}}{}{#2}{}{#1}}}
\newrobustcmd{\mathcalXz}[2][]{\ensuremath{\subp{\mathcal{X}}{}{#2}{}{#1}}}
\newrobustcmd{\hatmathcalXz}[2][]{\ensuremath{\subp{\hat{\mathcal{X}}}{}{#2}{}{#1}}}
\newrobustcmd{\widehatmathcalXz}[2][]{\ensuremath{\subp{\widehat{\mathcal{X}}}{}{#2}{}{#1}}}
\newrobustcmd{\checkmathcalXz}[2][]{\ensuremath{\subp{\check{\mathcal{X}}}{}{#2}{}{#1}}}
\newrobustcmd{\tildemathcalXz}[2][]{\ensuremath{\subp{\tilde{\mathcal{X}}}{}{#2}{}{#1}}}
\newrobustcmd{\widetildemathcalXz}[2][]{\ensuremath{\subp{\widetilde{\mathcal{X}}}{}{#2}{}{#1}}}
\newrobustcmd{\acutemathcalXz}[2][]{\ensuremath{\subp{\acute{\mathcal{X}}}{}{#2}{}{#1}}}
\newrobustcmd{\gravemathcalXz}[2][]{\ensuremath{\subp{\grave{\mathcal{X}}}{}{#2}{}{#1}}}
\newrobustcmd{\dotmathcalXz}[2][]{\ensuremath{\subp{\dot{\mathcal{X}}}{}{#2}{}{#1}}}
\newrobustcmd{\ddotmathcalXz}[2][]{\ensuremath{\subp{\ddot{\mathcal{X}}}{}{#2}{}{#1}}}
\newrobustcmd{\brevemathcalXz}[2][]{\ensuremath{\subp{\breve{\mathcal{X}}}{}{#2}{}{#1}}}
\newrobustcmd{\barmathcalXz}[2][]{\ensuremath{\subp{\bar{\mathcal{X}}}{}{#2}{}{#1}}}
\newrobustcmd{\vecmathcalXz}[2][]{\ensuremath{\subp{\vec{\mathcal{X}}}{}{#2}{}{#1}}}
\newrobustcmd{\bmmathcalXz}[2][]{\ensuremath{\subp{\bm{\mathcal{X}}}{}{#2}{}{#1}}}
\newrobustcmd{\hatbmmathcalXz}[2][]{\ensuremath{\subp{\hat{\bm{\mathcal{X}}}}{}{#2}{}{#1}}}
\newrobustcmd{\widehatbmmathcalXz}[2][]{\ensuremath{\subp{\widehat{\bm{\mathcal{X}}}}{}{#2}{}{#1}}}
\newrobustcmd{\checkbmmathcalXz}[2][]{\ensuremath{\subp{\check{\bm{\mathcal{X}}}}{}{#2}{}{#1}}}
\newrobustcmd{\tildebmmathcalXz}[2][]{\ensuremath{\subp{\tilde{\bm{\mathcal{X}}}}{}{#2}{}{#1}}}
\newrobustcmd{\widetildebmmathcalXz}[2][]{\ensuremath{\subp{\widetilde{\bm{\mathcal{X}}}}{}{#2}{}{#1}}}
\newrobustcmd{\acutebmmathcalXz}[2][]{\ensuremath{\subp{\acute{\bm{\mathcal{X}}}}{}{#2}{}{#1}}}
\newrobustcmd{\gravebmmathcalXz}[2][]{\ensuremath{\subp{\grave{\bm{\mathcal{X}}}}{}{#2}{}{#1}}}
\newrobustcmd{\dotbmmathcalXz}[2][]{\ensuremath{\subp{\dot{\bm{\mathcal{X}}}}{}{#2}{}{#1}}}
\newrobustcmd{\ddotbmmathcalXz}[2][]{\ensuremath{\subp{\ddot{\bm{\mathcal{X}}}}{}{#2}{}{#1}}}
\newrobustcmd{\brevebmmathcalXz}[2][]{\ensuremath{\subp{\breve{\bm{\mathcal{X}}}}{}{#2}{}{#1}}}
\newrobustcmd{\barbmmathcalXz}[2][]{\ensuremath{\subp{\bar{\bm{\mathcal{X}}}}{}{#2}{}{#1}}}
\newrobustcmd{\vecbmmathcalXz}[2][]{\ensuremath{\subp{\vec{\bm{\mathcal{X}}}}{}{#2}{}{#1}}}
\newrobustcmd{\mathcalYz}[2][]{\ensuremath{\subp{\mathcal{Y}}{}{#2}{}{#1}}}
\newrobustcmd{\hatmathcalYz}[2][]{\ensuremath{\subp{\hat{\mathcal{Y}}}{}{#2}{}{#1}}}
\newrobustcmd{\widehatmathcalYz}[2][]{\ensuremath{\subp{\widehat{\mathcal{Y}}}{}{#2}{}{#1}}}
\newrobustcmd{\checkmathcalYz}[2][]{\ensuremath{\subp{\check{\mathcal{Y}}}{}{#2}{}{#1}}}
\newrobustcmd{\tildemathcalYz}[2][]{\ensuremath{\subp{\tilde{\mathcal{Y}}}{}{#2}{}{#1}}}
\newrobustcmd{\widetildemathcalYz}[2][]{\ensuremath{\subp{\widetilde{\mathcal{Y}}}{}{#2}{}{#1}}}
\newrobustcmd{\acutemathcalYz}[2][]{\ensuremath{\subp{\acute{\mathcal{Y}}}{}{#2}{}{#1}}}
\newrobustcmd{\gravemathcalYz}[2][]{\ensuremath{\subp{\grave{\mathcal{Y}}}{}{#2}{}{#1}}}
\newrobustcmd{\dotmathcalYz}[2][]{\ensuremath{\subp{\dot{\mathcal{Y}}}{}{#2}{}{#1}}}
\newrobustcmd{\ddotmathcalYz}[2][]{\ensuremath{\subp{\ddot{\mathcal{Y}}}{}{#2}{}{#1}}}
\newrobustcmd{\brevemathcalYz}[2][]{\ensuremath{\subp{\breve{\mathcal{Y}}}{}{#2}{}{#1}}}
\newrobustcmd{\barmathcalYz}[2][]{\ensuremath{\subp{\bar{\mathcal{Y}}}{}{#2}{}{#1}}}
\newrobustcmd{\vecmathcalYz}[2][]{\ensuremath{\subp{\vec{\mathcal{Y}}}{}{#2}{}{#1}}}
\newrobustcmd{\bmmathcalYz}[2][]{\ensuremath{\subp{\bm{\mathcal{Y}}}{}{#2}{}{#1}}}
\newrobustcmd{\hatbmmathcalYz}[2][]{\ensuremath{\subp{\hat{\bm{\mathcal{Y}}}}{}{#2}{}{#1}}}
\newrobustcmd{\widehatbmmathcalYz}[2][]{\ensuremath{\subp{\widehat{\bm{\mathcal{Y}}}}{}{#2}{}{#1}}}
\newrobustcmd{\checkbmmathcalYz}[2][]{\ensuremath{\subp{\check{\bm{\mathcal{Y}}}}{}{#2}{}{#1}}}
\newrobustcmd{\tildebmmathcalYz}[2][]{\ensuremath{\subp{\tilde{\bm{\mathcal{Y}}}}{}{#2}{}{#1}}}
\newrobustcmd{\widetildebmmathcalYz}[2][]{\ensuremath{\subp{\widetilde{\bm{\mathcal{Y}}}}{}{#2}{}{#1}}}
\newrobustcmd{\acutebmmathcalYz}[2][]{\ensuremath{\subp{\acute{\bm{\mathcal{Y}}}}{}{#2}{}{#1}}}
\newrobustcmd{\gravebmmathcalYz}[2][]{\ensuremath{\subp{\grave{\bm{\mathcal{Y}}}}{}{#2}{}{#1}}}
\newrobustcmd{\dotbmmathcalYz}[2][]{\ensuremath{\subp{\dot{\bm{\mathcal{Y}}}}{}{#2}{}{#1}}}
\newrobustcmd{\ddotbmmathcalYz}[2][]{\ensuremath{\subp{\ddot{\bm{\mathcal{Y}}}}{}{#2}{}{#1}}}
\newrobustcmd{\brevebmmathcalYz}[2][]{\ensuremath{\subp{\breve{\bm{\mathcal{Y}}}}{}{#2}{}{#1}}}
\newrobustcmd{\barbmmathcalYz}[2][]{\ensuremath{\subp{\bar{\bm{\mathcal{Y}}}}{}{#2}{}{#1}}}
\newrobustcmd{\vecbmmathcalYz}[2][]{\ensuremath{\subp{\vec{\bm{\mathcal{Y}}}}{}{#2}{}{#1}}}
\newrobustcmd{\mathcalZz}[2][]{\ensuremath{\subp{\mathcal{Z}}{}{#2}{}{#1}}}
\newrobustcmd{\hatmathcalZz}[2][]{\ensuremath{\subp{\hat{\mathcal{Z}}}{}{#2}{}{#1}}}
\newrobustcmd{\widehatmathcalZz}[2][]{\ensuremath{\subp{\widehat{\mathcal{Z}}}{}{#2}{}{#1}}}
\newrobustcmd{\checkmathcalZz}[2][]{\ensuremath{\subp{\check{\mathcal{Z}}}{}{#2}{}{#1}}}
\newrobustcmd{\tildemathcalZz}[2][]{\ensuremath{\subp{\tilde{\mathcal{Z}}}{}{#2}{}{#1}}}
\newrobustcmd{\widetildemathcalZz}[2][]{\ensuremath{\subp{\widetilde{\mathcal{Z}}}{}{#2}{}{#1}}}
\newrobustcmd{\acutemathcalZz}[2][]{\ensuremath{\subp{\acute{\mathcal{Z}}}{}{#2}{}{#1}}}
\newrobustcmd{\gravemathcalZz}[2][]{\ensuremath{\subp{\grave{\mathcal{Z}}}{}{#2}{}{#1}}}
\newrobustcmd{\dotmathcalZz}[2][]{\ensuremath{\subp{\dot{\mathcal{Z}}}{}{#2}{}{#1}}}
\newrobustcmd{\ddotmathcalZz}[2][]{\ensuremath{\subp{\ddot{\mathcal{Z}}}{}{#2}{}{#1}}}
\newrobustcmd{\brevemathcalZz}[2][]{\ensuremath{\subp{\breve{\mathcal{Z}}}{}{#2}{}{#1}}}
\newrobustcmd{\barmathcalZz}[2][]{\ensuremath{\subp{\bar{\mathcal{Z}}}{}{#2}{}{#1}}}
\newrobustcmd{\vecmathcalZz}[2][]{\ensuremath{\subp{\vec{\mathcal{Z}}}{}{#2}{}{#1}}}
\newrobustcmd{\bmmathcalZz}[2][]{\ensuremath{\subp{\bm{\mathcal{Z}}}{}{#2}{}{#1}}}
\newrobustcmd{\hatbmmathcalZz}[2][]{\ensuremath{\subp{\hat{\bm{\mathcal{Z}}}}{}{#2}{}{#1}}}
\newrobustcmd{\widehatbmmathcalZz}[2][]{\ensuremath{\subp{\widehat{\bm{\mathcal{Z}}}}{}{#2}{}{#1}}}
\newrobustcmd{\checkbmmathcalZz}[2][]{\ensuremath{\subp{\check{\bm{\mathcal{Z}}}}{}{#2}{}{#1}}}
\newrobustcmd{\tildebmmathcalZz}[2][]{\ensuremath{\subp{\tilde{\bm{\mathcal{Z}}}}{}{#2}{}{#1}}}
\newrobustcmd{\widetildebmmathcalZz}[2][]{\ensuremath{\subp{\widetilde{\bm{\mathcal{Z}}}}{}{#2}{}{#1}}}
\newrobustcmd{\acutebmmathcalZz}[2][]{\ensuremath{\subp{\acute{\bm{\mathcal{Z}}}}{}{#2}{}{#1}}}
\newrobustcmd{\gravebmmathcalZz}[2][]{\ensuremath{\subp{\grave{\bm{\mathcal{Z}}}}{}{#2}{}{#1}}}
\newrobustcmd{\dotbmmathcalZz}[2][]{\ensuremath{\subp{\dot{\bm{\mathcal{Z}}}}{}{#2}{}{#1}}}
\newrobustcmd{\ddotbmmathcalZz}[2][]{\ensuremath{\subp{\ddot{\bm{\mathcal{Z}}}}{}{#2}{}{#1}}}
\newrobustcmd{\brevebmmathcalZz}[2][]{\ensuremath{\subp{\breve{\bm{\mathcal{Z}}}}{}{#2}{}{#1}}}
\newrobustcmd{\barbmmathcalZz}[2][]{\ensuremath{\subp{\bar{\bm{\mathcal{Z}}}}{}{#2}{}{#1}}}
\newrobustcmd{\vecbmmathcalZz}[2][]{\ensuremath{\subp{\vec{\bm{\mathcal{Z}}}}{}{#2}{}{#1}}}

\newrobustcmd{\ellz}[2][]{\ensuremath{\subp{\ell}{}{#2}{}{#1}}}

\newcommand{\TSR}[3]{   \mbox{\ensuremath{\subp{   \left\{ #1 \right\}}{}{#2}{}{#3}}}
}

\newcommand{\lgcor}[2][]{
  \rhoz{#2
    \IfStrEq{#1}{\empty}{\!}{|#1}}}

\newcommand{\hatlgcor}[2][]{
  \widehatrhoz{#2
    \IfStrEq{#1}{\empty}{\!}{|#1}}}

\newcommand{\lgacr}[3][]{
  \rhoz{#2
    \IfStrEq{#1}{\empty}{\!}{|#1}}(#3)}

\newcommand{\lgccr}[4][]{
  \rhoz{#2:#3
    \IfStrEq{#1}{\empty}{\!}{|#1}}(#4)}

\newcommand{\hatlgacr}[3][]{
  \widehatrhoz{#2
    \IfStrEq{#1}{\empty}{\!}{|#1}}(#3)}
\newcommand{\hatlgccr}[4][]{
  \widehatrhoz{#2:#3
    \IfStrEq{#1}{\empty}{\!}{|#1}}(#4)}

\newcommand{\hatlgacrb}[4][]{
  \widehatrhoz{\!#2
    \IfStrEq{#1}{\empty}{\!}{|#1}
  }(#3|\scalebox{.7}{$#4$})}

\newcommand{\hatlgccrb}[5][]{
  \widehatrhoz{#2:#3
    \IfStrEq{#1}{\empty}{\!}{|#1}
  }(#4|\scalebox{.7}{$#5$})}

\newcommand{\hatlgthetab}[4][]{
  \widehatbmthetaz{\!#2
    \IfStrEq{#1}{\empty}{\!}{|#1}
  }(#3|\scalebox{.7}{$#4$})}

\newcommand{\lgsd}[3][]{
  \fz{\!#2
    \IfStrEq{#1}{\empty}{\!}{|#1}}(#3)}

\newcommand{\lgcsd}[4][]{
  \fz{#2:#3
    \IfStrEq{#1}{\empty}{\!}{|#1}}(#4)}

\newcommand{\lgcsdCo}[4][]{
  \cz{#2:#3
    \IfStrEq{#1}{\empty}{\!}{|#1}}(#4)}
\newcommand{\lgcsdQuad}[4][]{
  \qz{#2:#3
    \IfStrEq{#1}{\empty}{\!}{|#1}}(#4)}
\newcommand{\lgcsdAmplitude}[4][]{
  \alphaz{#2:#3
    \IfStrEq{#1}{\empty}{\!}{|#1}}(#4)}
\newcommand{\lgcsdPhase}[4][]{
  \phiz{#2:#3
    \IfStrEq{#1}{\empty}{\!}{|#1}}(#4)}

\newcommand{\lgcsdCoM}[5][]{
  \cz[#5]{#2:#3
    \IfStrEq{#1}{\empty}{\!}{|#1}}(#4)}
\newcommand{\lgcsdQuadM}[5][]{
  \qz[#5]{#2:#3
    \IfStrEq{#1}{\empty}{\!}{|#1}}(#4)}
\newcommand{\lgcsdAmplitudeM}[5][]{
  \alphaz[#5]{#2:#3
    \IfStrEq{#1}{\empty}{\!}{|#1}}(#4)}
\newcommand{\lgcsdPhaseM}[5][]{
  \phiz[#5]{#2:#3
    \IfStrEq{#1}{\empty}{\!}{|#1}}(#4)}

\newcommand{\hatlgcsdCoM}[5][]{
  \widehatcz[\ #5]{#2:#3
    \IfStrEq{#1}{\empty}{\!}{|#1}}(#4)}
\newcommand{\hatlgcsdQuadM}[5][]{
  \widehatqz[\ #5]{#2:#3
    \IfStrEq{#1}{\empty}{\!}{|#1}}(#4)}
\newcommand{\hatlgcsdAmplitudeM}[5][]{
  \widehatalphaz[\ #5]{#2:#3
    \IfStrEq{#1}{\empty}{\!}{|#1}}(#4)}
\newcommand{\hatlgcsdPhaseM}[5][]{
  \widehatphiz[\ #5]{#2:#3
    \IfStrEq{#1}{\empty}{\!}{|#1}}(#4)}

\newcommand{\lgcsdSQ}[4][]{
  \mathcalKz[asc]{#2:#3
    \IfStrEq{#1}{\empty}{\!}{|#1}}(#4)}

\newcommand{\lgsdM}[4][]{
  \fz[#4]{\!#2
    \IfStrEq{#1}{\empty}{\!}{|#1}}(#3)}
\newcommand{\lgcsdM}[5][]{
  \fz[#5]{\!#2:#3
    \IfStrEq{#1}{\empty}{\!}{|#1}}(#4)}

\newcommand{\hatlgsd}[3][]{
  \widehatfz{\!#2
    \IfStrEq{#1}{\empty}{\!}{|#1}}(#3)}

\newcommand{\hatlgcsd}[4][]{
  \widehatfz{\!#2:#3
    \IfStrEq{#1}{\empty}{\!}{|#1}}(#4)}

\newcommand{\hatlgsdM}[4][]{
  \widehatfz[#4]{\!#2
    \IfStrEq{#1}{\empty}{\!}{|#1}}(#3)}
\newcommand{\hatlgcsdM}[5][]{
  \widehatfz[#5]{\!#2:#3
    \IfStrEq{#1}{\empty}{\!}{|#1}}(#4)}

\newcommand{\lgsdRE}[3][]{
  \cz{#2
    \IfStrEq{#1}{\empty}{\!}{|#1}}(#3)}
\newcommand{\hatlgsdREM}[4][]{
  \widehatcz[\ #4]{#2
    \IfStrEq{#1}{\empty}{\!}{|#1}}(#3)}

\newcommand{\lgsdIM}[3][]{
  \qz{#2
    \IfStrEq{#1}{\empty}{\!}{|#1}}(#3)}
\newcommand{\hatlgsdIMM}[4][]{
  \widehatqz[\ #4]{#2
    \IfStrEq{#1}{\empty}{\!}{|#1}}(#3)}

\newcommand{\sigmaYhtRange}[2]{
  \ensuremath{\subp{\mathcal{F}}{}{#1}{}{#2}} }

\newcommand{\overbar}[1]{\mkern 2.5mu\overline{\mkern-2.5mu#1\mkern-2.5mu}\mkern 2.5mu}

\newcommand{\Ubar}[1]{\mkern 2.5mu\underline{\mkern-2.5mu#1\mkern-2.5mu}\mkern 2.5mu}

\newcommand{\OUbar}[1]{\ensuremath{\overbar{\Ubar{#1}}}}

\newcommand{\Mblock}[1]{
  \overbar{#1}}

\newcommand{\MMAS}[8]{
  \ensuremath{\subp{#1}{}{                 #2
                \ifstrempty{#4}{#3}{\Mblock{#3}}
                        #5#6}{}{}{}
            \IfStrEq{#8}{}{}{#7\left(#8\right)} 
          }}

\newcommand{\LGpsymbol}{v}
\newcommand{\LGp}{\ensuremath{\bm{\LGpsymbol}}}
\newcommand{\LGpd}{\ensuremath{\breve{\LGp}}}
\newcommand{\LGpi}[2][]{
  \subp{\LGpsymbol}{}{#2}{}{#1}
}
\newcommand{\LGpoint}{
  \ensuremath{\parenR{\LGpi{1},\LGpi{2}}}}
\newcommand{\LGpointd}{
  \ensuremath{\parenR{\LGpi{2},\LGpi{1}}}}
\newcommand{\LGpdiagonal}{
  \ensuremath{\LGpi{1} = \LGpi{2}}}
\newcommand{\LGpnotdiagonal}{
  \ensuremath{\LGpi{1} \neq \LGpi{2}}}

\newcommand{\LGpz}[2][]{
  \subp{\LGp}{}{#2}{}{#1}
}

\newrobustcmd{\Yht}[2]{
  \MMAS{\bm{Y}}{}{#1}{}{:}{#2}{}{} }
\newrobustcmd{\Yhtc}[3]{
  \MMAS{\bm{Y}}{}{#1:#2}{}{:}{#3}{}{} }
\newrobustcmd{\YMt}[2]{
  \MMAS{\bm{Y}}{}{#1}{-}{:}{#2}{}{} }
\newrobustcmd{\yh}[1]{
  \MMAS{\bm{y}}{}{#1}{}{}{}{}{} }
\newrobustcmd{\yht}[2]{
  \MMAS{\bm{y}}{}{#1}{}{:}{#2}{}{} }
\newrobustcmd{\yM}[1]{
  \MMAS{\bm{y}}{}{#1}{-}{}{}{}{} }
\newrobustcmd{\yMt}[2]{
  \MMAS{\bm{y}}{}{#1}{-}{:}{#2}{}{} }

\newrobustcmd{\yhc}[2]{
  \MMAS{\bm{y}}{}{#1:#2}{}{}{}{}{} }

\newrobustcmd{\dyh}[1]{
  \MMAS{\d{\bm{y}}}{}{#1}{}{}{}{}{} }
\newrobustcmd{\dyhc}[2]{
  \MMAS{\d{\bm{y}}}{}{#1:#2}{}{}{}{}{} }
\newrobustcmd{\dyM}[1]{
  \MMAS{\d{\bm{y}}}{}{#1}{-}{}{}{}{} }
\newrobustcmd{\dyMh}[2]{
  \MMAS{\d{\bm{y}}}{}{#1}{-}{/}{#2}{}{} }
\newrobustcmd{\gh}[2][]{
  \MMAS{g}{}{#2}{}{}{}{\!}{#1} }
\newrobustcmd{\ght}[3][]{
  \MMAS{g}{}{#2}{}{:}{#3}{\!}{#1} }
\newrobustcmd{\gM}[2][]{
  \MMAS{g}{}{#2}{-}{}{}{\!}{#1} }
\newrobustcmd{\gMt}[3][]{
  \MMAS{g}{}{#2}{-}{:}{#3}{\!}{#1} }

\newrobustcmd{\ghc}[3][]{
  \MMAS{g}{}{#2:#3}{}{}{}{\!}{#1} }

\newrobustcmd{\Xhit}[3]{
  \MMAS{X}{}{#1#2}{}{:}{#3}{}{} }
\newrobustcmd{\Xat}[2][]{
  \MMAS{X}{}{#2}{}{}{}{\!}{\bm{#1}} }
\newrobustcmd{\Xt}[1]{
  \MMAS{\bm{X}}{}{#1}{}{}{}{\!}{} }
\newrobustcmd{\Xht}[2]{
  \MMAS{\bm{X}}{}{#1}{}{:}{#2}{\!}{} }

\newcommand{\Lp}[2][]{\ensuremath{L_{#1}^{{}^{#2}}}}
\newcommand{\LpM}[2]{\ensuremath{L_{\overset{\overbar{#1}}{}}^{{}^{#2}}}}
\newcommand{\Lph}[2]{\ensuremath{L_{\overset{#1}{}}^{{}^{#2}}}}

\newcommand{\absp}[2][]{  \ensuremath{\subp{\left|#2\right|}{}{}{}{#1}} }

\newcommand{\etahi}[3][]{
  \MMAS{\eta}{}{#2#3}{}{}{}{\!}{#1} }
\newcommand{\tildeetahi}[3][]{
  \MMAS{\widetilde{\eta}}{}{#2#3}{}{}{}{\!}{#1} }

\newcommand{\zetah}[2][]{
  \MMAS{\bm{\zeta}}{}{#2}{}{}{}{\!}{#1} }

\newcommand{\zetahi}[3][]{
  \MMAS{\zeta}{}{#2#3}{}{}{}{\!}{#1} }
\newcommand{\tildezetahi}[3][]{
  \MMAS{\widetilde{\zeta}}{}{#2#3}{}{}{}{\!}{#1} }

\newcommand{\vechcomp}[2]{\ensuremath{#2}_{\overset{#1}{}}}

\newcommand{\thetahcomp}[3][]{
  \thetaz{
    \IfStrEq{#1}{\empty}{}{\!#1|}
    #2:#3}}

\newcommand{\thetah}[2][]{
  {\bmthetaz{
    \IfStrEq{#1}{\empty}{}{\!#1|}
    #2}}}

\newcommand{\thetahc}[4]{
  {\bmthetaz{
      \IfStrEq{#2}{\empty}{}{\!#2|}
      #3:#4
      \IfStrEq{#1}{\empty}{}{|#1}
}}}

\newcommand{\hatthetahc}[4]{
  {\widehatbmthetaz{
      \IfStrEq{#2}{\empty}{}{\!#2|}
      #3:#4
      \IfStrEq{#1}{\empty}{}{|#1}
}}
}

\newcommand{\thetaM}[2][]{
  \MMAS{\bm{\theta}}{}{#2}{-}{}{}{\!}{#1} }
\newcommand{\thetahb}[3][]{
  \MMAS{\bm{\theta}}{}{#2}{}{:}{\bm{#3}}{\!}{#1} }
\newcommand{\thetahbc}[5]{
  \MMAS{\bm{\theta}}{}{#2|#3:#4}{}{:}{\bm{#5}
    \IfStrEq{#1}{\empty}{}{|#1}
  }{}{} }

\newcommand{\thetaMb}[3][]{
  \MMAS{\bm{\theta}}{}{#2}{-}{:}{\bm{#3}}{\!}{#1} }
\newcommand{\hatthetahN}[3][]{
  \MMAS{\widehat{\bm{\theta}}}{}{#2}{}{:}{#3}{\!}{#1} }
\newcommand{\hatthetahNc}[6][]{
    \MMAS{\widehat{\bm{\theta}}}{}{#3|#4:#5}{}{:}{#6
      \IfStrEq{#2}{\empty}{}{|#2}
    }{\!}{#1} }
\newcommand{\hatthetaMN}[3][]{
  \MMAS{\widehat{\bm{\theta}}}{}{#2}{-}
  {\IfStrEq{#3}{}{}{:}}
  {\IfStrEq{#3}{}{}{#3}}
  {\!}{#1} }

\newcommand{\nablah}[2][]{
  \ensuremath{\subp{\bm{\nabla}}{\!\!}{#2}{}{#1}}}
\newcommand{\nablahc}[4][]{
  \ensuremath{\subp{\bm{\nabla}}{\!\!}{#3:#4
      \IfStrEq{#2}{\empty}{}{|#2}}{}{#1}}}

\newcommand{\nablaM}[2][]{
  \ensuremath{\subp{\bm{\nabla}}{\!\!}{\overbar{#2}}{}{#1}}}
\newcommand{\nablaMc}[4][]{
  \ensuremath{\subp{\bm{\nabla}}{\!\!}{#3:\OUbar{#4}
      \IfStrEq{#2}{\empty}{}{|#2}}{}{#1}}}

\newcommand{\uh}[2][]{
  \MMAS{\bm{u}}{}{#2}{}{}{}{\!}{#1} }
\newcommand{\uhi}[3][]{
  \MMAS{u}{}{#2#3}{}{}{}{\!}{#1} }
\newcommand{\Uh}[2][]{
  \MMAS{u}{}{#2}{}{}{}{\!}{#1} }
\newrobustcmd{\UhL}[3][]{
  \MMAS{u}{}{#2}{}{}{}{\!}{#1}^{{}_{#3}} }

\newcommand{\Kh}[2][]{\ensuremath{\Kz{\!#2}
    \IfStrEq{#1}{}{}{\!\left(#1\right)} }}
\newcommand{\Khb}[3][]{\ensuremath{K_{\overset{#2:#3}{}}
    \IfStrEq{#1}{}{}{\!\left(#1\right)} }}
\newcommand{\Khbc}[4][]{\ensuremath{K_{\overset{#2:#3:#4}{}}
    \IfStrEq{#1}{}{}{\!\left(#1\right)} }}
\newcommand{\KhN}[3][]{\ensuremath{K_{\overset{#2:#3}{}}
    \IfStrEq{#1}{}{}{\!\left(#1\right)} }}
\newcommand{\Khbdef}{\ensuremath{\Khb[\yh{h}-\LGp]{h}{\bm{b}}}}
\newcommand{\KhbDEF}{\ensuremath{\Khb[\Yht{h}{t}-\LGp]{h}{\bm{b}}}}

\newcommand{\qh}[2]{
  \MMAS{q}{}{#1}{}{:}{\bm{#2}}{}{} }
\newcommand{\qhc}[5]{
  \MMAS{q}{}{#2|#3:#4}{}{:}{\bm{#5}
    \IfStrEq{#1}{}{}{|#1}}{}{} }
\newcommand{\LhN}[3][]{
  \MMAS{L}{}{#2}{}{:}{#3}{\!}{#1} }
\newcommand{\LhNc}[6][]{
  \MMAS{L}{}{#3|#4:#5}{}{:}{#6
    \IfStrEq{#2}{\empty}{}{|#2}
  }{\!}{#1} }
\newcommand{\QhN}[3][]{
  \MMAS{Q}{}{#2}{}{:}{#3}{\!}{#1} }
\newcommand{\QhNc}[6][]{
  \MMAS{Q}{}{#3|#4:#5}{}{:}{#6
    \IfStrEq{#2}{\empty}{}{|#2}
  }{\!}{#1} }
\newcommand{\tildeQhN}[3][]{
  \MMAS{\widetilde{Q}}{}{#2}{}{:}{#3}{\!}{#1} }
\newcommand{\tildeQhNc}[6][]{
  \MMAS{\widetilde{Q}}{}{#3|#4:#5}{}{:}{#6
    \IfStrEq{#2}{\empty}{}{|#2}
  }{\!}{#1} }
\newcommand{\QMN}[3][]{
  \MMAS{Q}{}{#2}{-}{:}{#3}{\!}{#1} }

\newcommand{\QMNc}[7][]{
  \MMAS{Q}{}{#3:#4|#5:\OUbar{#6}}{}{:}{#7
    \IfStrEq{#2}{\empty}{}{|#2}
  }{\!}{#1} }

\newcommand{\ThN}[3][]{
    \ensuremath{\Tz{#2
      \IfStrEq{#3}{}{}{:#3}}
    \IfStrEq{#1}{}{}{\!\left(#1\right)}}
}
\newcommand{\ThNc}[6][]{
  \ensuremath{\Tz{#3|#4:#5
      \IfStrEq{#6}{}{}{:#6
        \IfStrEq{#2}{\empty}{}{|#2}
    }}
    \IfStrEq{#1}{}{}{\!\left(#1\right)}}
}

\newcommand{\TMbN}[4][]{
  \ensuremath{\Tz{\overbar{#2}
      \IfStrEq{#3}{}{}{|#3}
      \IfStrEq{#4}{}{}{:#4}}
    \IfStrEq{#1}{}{}{\!\left(#1\right)}}
}

\newcommand{\WhN}[4][]{
  \ensuremath{\Wz[
    \IfStrEq{#2}{}{}{#2}]{\!#3
            \IfStrEq{#4}{}{}{:#4}}
    \IfStrEq{#1}{}{}{\!\left(#1\right)}}
}
\newcommand{\WhNc}[7][]{
  \ensuremath{\Wz[
      \IfStrEq{#5}{}{}{#5}]{\!#3|#4:#6
      \IfStrEq{#7}{}{}{:#7}
      \IfStrEq{#2}{\empty}{}{|#2}
    }
    \IfStrEq{#1}{}{}{\!\left(#1\right)}}
}

\newcommand{\WMbN}[4][]{
  \ensuremath{\Wz{\!\overbar{#2}
      \IfStrEq{#3}{}{}{|#3}
      \IfStrEq{#4}{}{}{:#4}}
    \IfStrEq{#1}{}{}{\!\left(#1\right)}}
}

\newcommand{\VhN}[4][]{
  \ensuremath{\Vz[
    \IfStrEq{#2}{}{}{#2}]{\!#3
            \IfStrEq{#4}{}{}{:#4}}
    \IfStrEq{#1}{}{}{\!\left(#1\right)}}
}
\newcommand{\VhNc}[7][]{
  \ensuremath{\Vz[
      \IfStrEq{#5}{}{}{#5}]{\!#3|#4:#6
      \IfStrEq{#7}{}{}{:#7}
      \IfStrEq{#2}{\empty}{}{|#2}
    }
    \IfStrEq{#1}{}{}{\!\left(#1\right)}}
}

\newcommand{\VMbN}[4][]{
  \ensuremath{\Vz{\overbar{#2}
      \IfStrEq{#3}{}{}{|#3}
      \IfStrEq{#4}{}{}{:#4}}
    \IfStrEq{#1}{}{}{\!\left(#1\right)}}
}

\newcommand{\RRn}[1]{\ensuremath{\subp{\RR}{}{}{}{#1}}}

{
  \theoremstyle{definition}
  \newtheorem{assumption}{Assumption}[section]
}

\crefname{assumption}{assumption}{assumptions}

{
  \theoremstyle{remark}
  
}
\crefname{remark}{remark}{remarks}

\renewcommand{\uh}[2][]{
  \MMAS{\bm{u}}{}{#2}{}{}{}{\!}{#1} }
\newrobustcmd{\uhc}[4][]{
  \MMAS{\bm{u}}{}{#3:#4
    \IfStrEq{#2}{\empty}{}{|#2}
  }{}{}{}{\!}{#1} }
\newrobustcmd{\bNi}[2][N]{
  \MMAS{b}{}{#1}{}{}{#2}{}{} }
\newrobustcmd{\aNi}[2][]{
  \MMAS{a}{}{#1}{}{}{#2}{}{} }
\newrobustcmd{\cNi}[2][]{
  \MMAS{c}{}{#1}{}{}{#2}{}{} }
\newrobustcmd{\rNi}[2][]{
  \MMAS{r}{}{#1}{}{}{#2}{}{} }
\newrobustcmd{\sNi}[2][]{
  \MMAS{s}{}{#1}{}{}{#2}{}{} }
\newrobustcmd{\ellNi}[2][]{
  \MMAS{\ell}{}{#1}{}{}{#2}{}{} }
\newrobustcmd{\XNht}[4][]{\Xz[#2\IfStrEq{#1}{\empty}{}{|#1}]{#3:#4}} 
\newrobustcmd{\tildeXNht}[4][]{\widetildeXz[#2\IfStrEq{#1}{\empty}{}{|#1}]{#3:#4}} 
\newrobustcmd{\XNMt}[4][]{\Xz[#2\IfStrEq{#1}{\empty}{}{|#1}]{\overbar{#3}:#4}}

\newrobustcmd{\ZNht}[4][]{\Zz[#2\IfStrEq{#1}{\empty}{}{|#1}]{#3:#4}} 
\newrobustcmd{\ZNhtvec}[3][]{\bmZz[#2\IfStrEq{#1}{\empty}{}{|#1}]{#3}} 
\newrobustcmd{\ZNMt}[4][]{\Zz[#2\IfStrEq{#1}{\empty}{}{|#1}]{\overbar{#3}:#4}}

\newrobustcmd{\QNh}[3][]{\subp{\reflectbox{$Q$}}{}{#3}{}{#2\IfStrEq{#1}{\empty}{}{|#1}}} 
\newrobustcmd{\QNhvec}[3][]{\subp{\reflectbox{$\bm{Q}$}}{}{#3}{}{#2\IfStrEq{#1}{\empty}{}{|#1}}} 
\newrobustcmd{\QNM}[3][]{\subp{\reflectbox{$Q$}}{}{\overbar{#3}}{}{#2\IfStrEq{#1}{\empty}{}{|#1}}}

\newrobustcmd{\muNh}[3][]{
  \MMAS{\mu}{}{#3}{}{}{}{}{}^{{}_{#2\IfStrEq{#1}{\empty}{}{|#1}}} }
\newrobustcmd{\muNM}[3][]{
  \MMAS{\mu}{}{#3}{-}{}{}{}{}^{{}_{#2\IfStrEq{#1}{\empty}{}{|#1}}} }

\newrobustcmd{\zetaNj}[3][*]{
  \MMAS{\zeta}{}{}{}{}{#3}{}{}^{{}_{#2|#1}} }

\newrobustcmd{\pnt}[2]{
  \ensuremath{p\!\left(
      \IfStrEq{#1}{#2}{#2}{#2|#1}\right)}}
\newrobustcmd{\qnt}[2]{
  \ensuremath{q\!\left(
      \IfStrEq{#1}{#2}{#2}{#2|#1}\right)}}

\newrobustcmd{\tildeqnt}[2]{
  \ensuremath{\widetilde{q}\!\left(
      \IfStrEq{#1}{#2}{#2}{#2|#1}\right)}}

\newrobustcmd{\innt}[2]{
  \ensuremath{i\!\left(
      \IfStrEq{#1}{#2}{#2}{#2|#1}\right)}}
  
\newrobustcmd{\tildeinnt}[2]{
  \ensuremath{\tilde{\imath}\!\left(
      \IfStrEq{#1}{#2}{#2}{#2|#1}\right)}}

\newrobustcmd{\anp}[3][]{
  \ensuremath{\subp{a}{
      \IfStrEq{#2}{}{}{#2:}
      }{#3}{}{#1}}}

\newrobustcmd{\vnp}[3][]{
  \ensuremath{\subp{v}{#2:}{#3}{}{#1}}}

\newrobustcmd{\tp}[1]{
  \ensuremath{\subp{t}{}{}{}{#1}}}

\newrobustcmd{\lambdaiM}[2]{
  \ensuremath{\subp{\lambda}{}{#1}{}{}\!\left(#2 \right)}}

\newrobustcmd{\lambdazM}[3][]{
  \lambdaz[#1]{#3\!}(#2)}

\newrobustcmd{\bmbzh}[1]{\bmbz{#1}}

\newrobustcmd{\rni}[3][]{
  \ensuremath{\subp{r}{
      \IfStrEq{#2}{}{}{#2:}}{#3}{}{}        
    \IfStrEq{#1}{}
    {}{\!\left(#1\right) }}}

\newrobustcmd{\unit}[2][]{
  \ensuremath{\subp{\bm{e}}{}{#2}{}{#1}}}

\newrobustcmd{\baM}[2][]{
  \ensuremath{\subp{\bm{a}}{}{\overbar{#2}}{}{#1}}}

\newrobustcmd{\mathfrakUhb}[3][]{
  \ensuremath{\subp{\mathfrak{U}}{#2}{
      \IfStrEq{#3}{}{}{:\bm{#3}}
    }{}{#1}}}

\newrobustcmd{\mathcalCnki}[3]{
  \ensuremath{\subp{\mathcal{C}}{
      \IfStrEq{#1}{}{}{#1:}
    }{
      \IfStrEq{#2}{}{}{#2|}
      #3}{}{}}}

\newrobustcmd{\mathcalVnki}[3]{
  \ensuremath{\subp{\mathcal{V}}{#1:}{#2|#3}{}{}}}

\newrobustcmd{\mathcalWnki}[3]{
  \ensuremath{\subp{\mathcal{W}}{#1:}{#2|#3}{}{}}}

\newrobustcmd{\QMx}[2]{
  \ensuremath{\subp{Q}{}{#1}{}{}\!\!\left(#2\right)}}

\newrobustcmd{\sigmah}[2]{
  \MMAS{\sigma}{}{#1}{}{|}{#2}{}{}^{\,{}_{2}} }

\newrobustcmd{\sigmaM}[3][]{
  \MMAS{\sigma}{}{#2}{-}{|}{#3}{}{}^{\,{}_{2}}\!\!\left(#1\right) }

\newrobustcmd{\sigmaMlimit}[2][]{
  \MMAS{\sigma}{}{\bullet}{}{|}{#2}{}{}^{\,{}_{2}}\!\!\left(#1\right) }

\newrobustcmd{\SigmaM}[2]{
  \MMAS{\bm{\Sigma}}{}{#1}{-}{|}{#2}{}{} }

\newrobustcmd{\INhjl}[5][]{\Iz[#2\IfStrEq{#1}{\empty}{}{|#1}]{#3#4:#5}}
\newrobustcmd{\INMl}[4][]{\Iz[#2\IfStrEq{#1}{\empty}{}{|#1}]{\overbar{#3}:#4}}

\newcommand{\Oh}[2][]{   \ensuremath{\subp{O}{}{#1}{}{}\!\left(#2\right)} }

\newcommand{\oh}[2][]{   \ensuremath{\subp{o}{}{#1}{}{}\!\left(#2\right)} }

\newcommand{\SNp}[2]{   \ensuremath{\subp{S}{}{#1}{}{(#2)}} }

\newcommand{\blimit}{
  \ensuremath{\bm{b}\rightarrow\subp{\bm{0}}{}{}{}{+}}
}

\newcommand{\hlimit}{
  \ensuremath{h\rightarrow\infty}
}

\newcommand{\mlimit}{
  \ensuremath{m\rightarrow\infty}
}

\newcommand{\nlimit}{
  \ensuremath{n\rightarrow\infty}
}

\newcommand{\Norm}[3][]{
  \ensuremath{\subp{\operatorname{N}}{\!}{#1}{}{} \!     \left(#2, #3 \right) }
}

\newcommand{\operatorss}[3]{  \relax   \ifmmode   #1_{#2}^{#3}   \else   \ensuremath{\subp{#1}{}{#2}{}{#3}}   \fi }

\newcommand{\cupss}[2]{
  \operatorss{\bigcup}{#1}{#2}
}

\newcommand{\intss}[2]{
  \operatorss{\int}{#1}{#2}
}

\newcommand{\oplusss}[2]{
  \operatorss{\bigoplus}{#1}{#2}
}

\newcommand{\prodss}[2]{
  \operatorss{\prod}{#1}{#2}
}

\newcommand{\sumss}[2]{
  \operatorss{\sum}{#1}{#2}
}

\newcommand{\ith}[2][]{
  \ensuremath{#2^{#1 \operatorname{th}}}
}

\expandarg

\newrobustcmd{\parenType}[2][L]{
    \StrLeft{#2}{1}[\parSelect]    \IfStrEq{#1}{L}{
    \def\parDirection{\left}
         \IfStrEq{\parSelect}{.}{\def\parType{.}}{}
    \IfStrEq{\parSelect}{|}{\def\parType{|}}{}
    \IfStrEq{\parSelect}{1}{\def\parType{\|}}{}
    \IfStrEq{\parSelect}{u}{\def\parType{\uparrow}}{}
    \IfStrEq{\parSelect}{d}{\def\parType{\downarrow}}{}
    \IfStrEq{\parSelect}{U}{\def\parType{\Uparrow}}{}
    \IfStrEq{\parSelect}{D}{\def\parType{\Downarrow}}{}
        \IfStrEq{\parSelect}{2}{\def\parType{(}}{}
    \IfStrEq{\parSelect}{3}{\def\parType{)}}{}
        \IfStrEq{\parSelect}{4}{\def\parType{[}}{}
    \IfStrEq{\parSelect}{5}{\def\parType{]}}{}
        \IfStrEq{\parSelect}{6}{\def\parType{\{}}{}
    \IfStrEq{\parSelect}{7}{\def\parType{\}}}{}
        \IfStrEq{\parSelect}{8}{\def\parType{<}}{}
    \IfStrEq{\parSelect}{9}{\def\parType{>}}{}
                    \IfStrEq{\parSelect}{b}{\def\parType{\lfloor}}{}
    \IfStrEq{\parSelect}{B}{\def\parType{\rfloor}}{}
        \IfStrEq{\parSelect}{b}{\def\parType{\lfloor}}{}
    \IfStrEq{\parSelect}{B}{\def\parType{\rfloor}}{}
        \IfStrEq{\parSelect}{t}{\def\parType{\lceil}}{}
    \IfStrEq{\parSelect}{T}{\def\parType{\rceil}}{}
  }{
    \def\parDirection{\right}
         \IfStrEq{\parSelect}{.}{\def\parType{.}}{}
    \IfStrEq{\parSelect}{|}{\def\parType{|}}{}
    \IfStrEq{\parSelect}{1}{\def\parType{\|}}{}
    \IfStrEq{\parSelect}{u}{\def\parType{\uparrow}}{}
    \IfStrEq{\parSelect}{d}{\def\parType{\downarrow}}{}
    \IfStrEq{\parSelect}{U}{\def\parType{\Uparrow}}{}
    \IfStrEq{\parSelect}{D}{\def\parType{\Downarrow}}{}
        \IfStrEq{\parSelect}{2}{\def\parType{)}}{}
    \IfStrEq{\parSelect}{3}{\def\parType{(}}{}
        \IfStrEq{\parSelect}{4}{\def\parType{]}}{}
    \IfStrEq{\parSelect}{5}{\def\parType{[}}{}
        \IfStrEq{\parSelect}{6}{\def\parType{\}}}{}
    \IfStrEq{\parSelect}{7}{\def\parType{\{}}{}
        \IfStrEq{\parSelect}{8}{\def\parType{>}}{}
    \IfStrEq{\parSelect}{9}{\def\parType{<}}{}
                    \IfStrEq{\parSelect}{b}{\def\parType{\rfloor}}{}
    \IfStrEq{\parSelect}{B}{\def\parType{\lfloor}}{}
        \IfStrEq{\parSelect}{T}{\def\parType{\rceil}}{}
    \IfStrEq{\parSelect}{t}{\Def\parType{\lceil}}{}
  }
    \IfStrEq{#2}{}{}{\parDirection\parType}
}

\newrobustcmd{\parenthesisGenerator}[5][]{
  \begingroup
        \IfStrEq{#1}{}{\def\firstArg{#2}}{\def\firstArg{#1}}
    \IfStrEq{#2}{}{\def\firstArg{#2}}{}
    \subp{
    \IfStrEq{#2}{}{}{\parenType[L]{#2}}
    \IfStrEq{#2}{}{
      \mbox{\ensuremath{#3}}}{#3}
    \IfStrEq{\firstArg}{}{}{\parenType[R]{\firstArg}}}{}{#4}{}{#5}
  \endgroup
}

\newrobustcmd{\parenR}[1]{
  \parenthesisGenerator{2}{#1}{}{}
}
\newrobustcmd{\parenRz}[3][]{
  \parenthesisGenerator{2}{#3}{\!#2}{\!#1}
}
\newrobustcmd{\parenA}[1]{
  \parenthesisGenerator{8}{#1}{}{}
}
\newrobustcmd{\parenAz}[3][]{
  \parenthesisGenerator{8}{#3}{\!#2}{\!#1}
}
\newrobustcmd{\parenC}[1]{
  \parenthesisGenerator{6}{#1}{}{}
} 
\newrobustcmd{\parenCz}[3][]{
  \parenthesisGenerator{6}{#3}{#2}{#1}
}
\newrobustcmd{\parenS}[1]{
  \parenthesisGenerator{4}{#1}{}{}
} 
\newrobustcmd{\parenSz}[3][]{
  \parenthesisGenerator{4}{#3}{\!#2}{\!#1}
}
\newrobustcmd{\parenAbs}[1]{
  \parenthesisGenerator{|}{#1}{}{}
} 
\newrobustcmd{\parenAbsz}[3][]{
  \parenthesisGenerator{|}{#3}{#2}{#1}
}
\newrobustcmd{\parenABS}[1]{
  \parenthesisGenerator{1}{#1}{}{}
} 
\newrobustcmd{\parenABSz}[3][]{
  \parenthesisGenerator{1}{#3}{#2}{#1}
}

\newrobustcmd{\power}[3][]{
  \parenthesisGenerator{}{#3}{#2}{#1}
}

\newrobustcmd{\Vector}[2][]{
  \parenSz[\IfStrEq{#1}{}{}{\,#1}]{}{#2}
}

\newrobustcmd{\Co}{\texttt{Co}\xspace}
\newrobustcmd{\Quad}{\texttt{Quad}\xspace}
\newrobustcmd{\Phase}{\texttt{Phase}\xspace}
\newrobustcmd{\Amplitude}{\texttt{Amplitude}\xspace}

\newrobustcmd{\eval}[3][]{
  \subp{\left.#2\right|}{}{#3}{}{}
}

\makeatletter{}

\renewcommand{\lgcor}[2][]{
  \rhoz{#2
    \IfStrEq{#1}{\empty}{\!}{}}}

\renewcommand{\hatlgcor}[2][]{
  \widehatrhoz{#2
    \IfStrEq{#1}{\empty}{\!}{}}}

\renewcommand{\lgacr}[3][]{
  \rhoz{#2
    \IfStrEq{#1}{\empty}{\!}{}}(#3)}

\renewcommand{\lgccr}[4][]{
  \rhoz{#2:#3
    \IfStrEq{#1}{\empty}{\!}{}}(#4)}

\renewcommand{\hatlgacr}[3][]{
  \widehatrhoz{#2
    \IfStrEq{#1}{\empty}{\!}{}}(#3)}
\renewcommand{\hatlgccr}[4][]{
  \widehatrhoz{#2:#3
    \IfStrEq{#1}{\empty}{\!}{}}(#4)}

\renewcommand{\hatlgacrb}[4][]{
  \widehatrhoz{\!#2
    \IfStrEq{#1}{\empty}{\!}{}
  }(#3|\scalebox{.7}{$#4$})}

\renewcommand{\hatlgccrb}[5][]{
  \widehatrhoz{#2:#3
    \IfStrEq{#1}{\empty}{\!}{}
  }(#4|\scalebox{.7}{$#5$})}

\renewcommand{\hatlgthetab}[4][]{
  \widehatbmthetaz{\!#2
    \IfStrEq{#1}{\empty}{\!}{}
  }(#3|\scalebox{.7}{$#4$})}

\renewcommand{\lgsd}[3][]{
  \fz{\!#2
    \IfStrEq{#1}{\empty}{\!}{}}(#3)}

\renewcommand{\lgcsd}[4][]{
  \fz{#2:#3
    \IfStrEq{#1}{\empty}{\!}{}}(#4)}

\renewcommand{\lgcsdCo}[4][]{
  \cz{#2:#3
    \IfStrEq{#1}{\empty}{\!}{}}(#4)}
\renewcommand{\lgcsdQuad}[4][]{
  \qz{#2:#3
    \IfStrEq{#1}{\empty}{\!}{}}(#4)}
\renewcommand{\lgcsdAmplitude}[4][]{
  \alphaz{#2:#3
    \IfStrEq{#1}{\empty}{\!}{}}(#4)}
\renewcommand{\lgcsdPhase}[4][]{
  \phiz{#2:#3
    \IfStrEq{#1}{\empty}{\!}{}}(#4)}

\renewcommand{\lgcsdCoM}[5][]{
  \cz[#5]{#2:#3
    \IfStrEq{#1}{\empty}{\!}{}}(#4)}
\renewcommand{\lgcsdQuadM}[5][]{
  \qz[#5]{#2:#3
    \IfStrEq{#1}{\empty}{\!}{}}(#4)}
\renewcommand{\lgcsdAmplitudeM}[5][]{
  \alphaz[#5]{#2:#3
    \IfStrEq{#1}{\empty}{\!}{}}(#4)}
\renewcommand{\lgcsdPhaseM}[5][]{
  \phiz[#5]{#2:#3
    \IfStrEq{#1}{\empty}{\!}{}}(#4)}

\renewcommand{\hatlgcsdCoM}[5][]{
  \widehatcz[\ #5]{#2:#3
    \IfStrEq{#1}{\empty}{\!}{}}(#4)}
\renewcommand{\hatlgcsdQuadM}[5][]{
  \widehatqz[\ #5]{#2:#3
    \IfStrEq{#1}{\empty}{\!}{}}(#4)}
\renewcommand{\hatlgcsdAmplitudeM}[5][]{
  \widehatalphaz[\ #5]{#2:#3
    \IfStrEq{#1}{\empty}{\!}{}}(#4)}
\renewcommand{\hatlgcsdPhaseM}[5][]{
  \widehatphiz[\ #5]{#2:#3
    \IfStrEq{#1}{\empty}{\!}{}}(#4)}

\renewcommand{\lgcsdSQ}[4][]{
  \mathcalKz[asc]{#2:#3
    \IfStrEq{#1}{\empty}{\!}{}}(#4)}

\renewcommand{\lgsdM}[4][]{
  \fz[#4]{\!#2
    \IfStrEq{#1}{\empty}{\!}{}}(#3)}
\renewcommand{\lgcsdM}[5][]{
  \fz[#5]{\!#2:#3
    \IfStrEq{#1}{\empty}{\!}{}}(#4)}

\renewcommand{\hatlgsd}[3][]{
  \widehatfz{\!#2
    \IfStrEq{#1}{\empty}{\!}{}}(#3)}

\renewcommand{\hatlgcsd}[4][]{
  \widehatfz{\!#2:#3
    \IfStrEq{#1}{\empty}{\!}{}}(#4)}

\renewcommand{\hatlgsdM}[4][]{
  \widehatfz[#4]{\!#2
    \IfStrEq{#1}{\empty}{\!}{}}(#3)}
\renewcommand{\hatlgcsdM}[5][]{
  \widehatfz[#5]{\!#2:#3
    \IfStrEq{#1}{\empty}{\!}{}}(#4)}

\renewcommand{\lgsdRE}[3][]{
  \cz{#2
    \IfStrEq{#1}{\empty}{\!}{}}(#3)}
\renewcommand{\hatlgsdREM}[4][]{
  \widehatcz[\ #4]{#2
    \IfStrEq{#1}{\empty}{\!}{}}(#3)}

\renewcommand{\lgsdIM}[3][]{
  \qz{#2
    \IfStrEq{#1}{\empty}{\!}{}}(#3)}
\renewcommand{\hatlgsdIMM}[4][]{
  \widehatqz[\ #4]{#2
    \IfStrEq{#1}{\empty}{\!}{}}(#3)}

\clearpage{}

\newcommand{\myblind}[2]{
  \if1\blind
  #1
  \fi
  \if0\blind
  \texttt{::#2::}\xspace
  \fi
}

\usepackage{marginnote}

\usepackage[
linewidth=1pt,
middlelinecolor= black,
middlelinewidth=0.4pt,
roundcorner=1pt,
topline = false,
rightline = false,
bottomline = false,
rightmargin=0pt,
skipabove=0pt,
skipbelow=0pt,
leftmargin=-.3cm,
innerleftmargin=.3cm,
innerrightmargin=0pt,
innertopmargin=0pt,
innerbottommargin=0pt,
]{mdframed}

\usepackage{bibunits}
\defaultbibliography{../Bibliography}
\defaultbibliographystyle{elsarticle-harv}

\begin{document}

\def\spacingset#1{\renewcommand{\baselinestretch}
  {#1}\small\normalsize} \spacingset{1}

\begin{bibunit}

  \if1\blind
  {
  \title{\bf   {Nonlinear spectral analysis:\\ A local Gaussian
      approach}}
  
  \author{      Lars Arne Jordanger\thanks{     Western Norway University of Applied Sciences, Faculty of
      Engineering and Science, P.B 7030, 5020 Bergen, Norway
      E-mail: \textrm{Lars.Arne.Jordanger@hvl.no}}  \and       Dag Tj{\o}stheim\thanks{     University of Bergen, Department of Mathematics, P.B. 7803, 5020
      Bergen, Norway} }
  \date{\vspace{-5ex}}
  
  \maketitle
  } \fi

  \if0\blind
  {
    \bigskip
    \bigskip
    \bigskip
    \begin{center}
      {\LARGE\bf {Nonlinear spectral analysis:\\ A local Gaussian
      approach}}
    \end{center}
    \medskip
  } \fi

  \bigskip
  \begin{abstract}

    The spectral distribution $f(\omega)$ of a stationary time series
    $\TSR{\Yz{t}}{t\in\ZZ}{}$ can be used to investigate whether or not
    periodic structures are present in $\TSR{\Yz{t}}{t\in\ZZ}{}$, but
    $f(\omega)$ has some limitations due to its dependence on the
    autocovariances $\gamma(h)$.  For example, $f(\omega)$ can not
    distinguish white i.i.d.\ noise from GARCH-type models (whose terms
    are dependent, but uncorrelated), which implies that $f(\omega)$ can
    be an inadequate tool when $\TSR{\Yz{t}}{t\in\ZZ}{}$ contains
    asymmetries and nonlinear dependencies.

    Asymmetries between the upper and lower tails of a time series can
    be investigated by means of the \textit{local Gaussian
      autocorrelations} introduced in \citet{Tjostheim201333}, and these
    \textit{local measures of dependence} can be used to construct the
    \textit{local Gaussian spectral density} presented in this paper.  A
    key feature of the new local spectral density is that it coincides
    with $f(\omega)$ for Gaussian time series, which implies that it can
    be used to detect non-Gaussian traits in the time series under
    investigation.  In particular, if $f(\omega)$ is flat, then peaks
    and troughs of the new local spectral density can indicate nonlinear
    traits, which potentially might discover \textit{local periodic
      phenomena} that remain undetected in an ordinary spectral analysis.

  \end{abstract}

  \noindent
  {\it Keywords:}  
  Local periodocities, GARCH models, graphical tools.



  \makeatletter{}


\section{Introduction}
\label{sec:Introduction}

Spectral analysis is an important tool in time series analysis.  In
its classical form, assuming $\sum|\gamma(h)|<\infty$, the spectral
density function of a stationary times series
$\TSR{\Yz{t}}{t\in\ZZ}{}$ is the Fourier transform of the
autocovariances $\TSR{\gamma(h)=\Cov{\Yz{t+h}}{\Yz{t}}}{h\in\ZZ}{}$.
Furthermore, since $\gamma(h)=\Var{\Yz{t}}\cdot\rho(h)$, with
$\rho(h)$ the autocorrelations, this can be expressed as:
\begin{equation}
  \label{eq:classical_spectral_density}
  f(\omega) \defeq \sum_{h\in\ZZ} \gamma(h)\cdot \ez[-2\pi i\omega h]{} =
  \Var{\Yz{t}} \cdot \sum_{h\in\ZZ} \rho(h)\cdot \ez[-2\pi i\omega h]{}.
\end{equation}

The connection $\Var{\Yz{t}} = \intss{-1/2}{1/2} f(\omega) \d{\omega}$
follows from the inverse Fourier transformation, and this reveals how
$f(\omega)$ gives a decomposition of the variance over different
frequencies.  In particular, the spectral density function $f(\omega)$
captures the components of periodic linear structure decomposed over
frequency for $\TSR{\Yz{t}}{t\in\ZZ}{}$, and the peaks and troughs of
$f(\omega)$ can thus reveal important features of the time series
under investigation.

Nonlinear dependencies between the terms of a time series
$\TSR{\Yz{t}}{t\in\ZZ}{}$ will however not be reflected in the
spectral density $f(\omega)$, since it is the linear dependencies that
are detected by the autocovariance functions $\gamma(h)$.  The most
obvious example is the GARCH model from \citet{bollerslev86:_gener}.
The GARCH model is much used in econometrics, and it is well known
that this model in general 
exhibits dependence over many lags (long range dependence).  But this
dependence is not captured by the autocovariance function, since
$\gamma(h)$ is zero for lags $|h|\geq1$.  This again implies that the
spectral density is flat for a GARCH model.

{
  \label{pAE1}
  An estimate of $f(\omega)$ based on samples from e.g.\ a
  GARCH(1,1)-model will then, as seen in the left panel of
  \cref{fig:GARCH_intro_example_short}, not reveal any information at
  all.  An investigation based on the method presented in this paper
  can however detect the nonlinear structure --- as seen in the right
  panel of \cref{fig:GARCH_intro_example_short}, where a point in the
  lower tail has been inspected.  }

\begin{figure}[h]
  {\centering \includegraphics[width=1\linewidth]{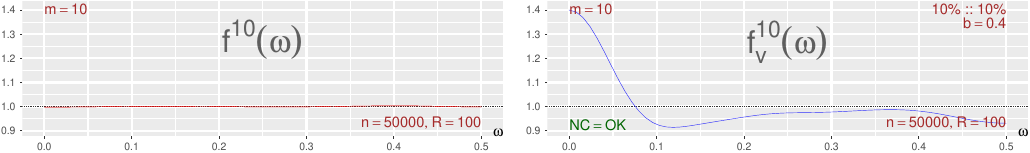}
  }
  \caption[]{Left: Estimated ordinary (variance-rescaled) spectral
    density based on a GARCH(1,1)-example.  Right: Estimated local
    Gaussian spectral density at a point in the lower tail.  See
    \cref{app:fig:GARCH_intro_example_short} for details regarding the
    underlying data.}\label{fig:GARCH_intro_example_short}
\end{figure}

One may ask whether there exist classes of processes for which the
spectral density gives complete information about the probabilistic
dependence structure.  The answer is simple: If
$\TSR{\Yz{t}}{t\in\ZZ}{}$ is a stationary Gaussian process, then its
complete distributional dependence structure (assuming a zero mean
process) can be set up in terms of its spectral density.  This is in
fact a starting point for the Whittle-type likelihood in time series
analysis.

This paper is concerned with finding a generalisation of
\cref{eq:classical_spectral_density} that enables the investigation of
nonlinear structures in general non-Gaussian stationary processes.
This will be based on a local approach using Gaussian approximations,
which ensures the desirable property that the ordinary spectral
density is returned for a Gaussian process.

A number of attempts have been made in the literature to extend the
standard spectral density $f(\omega)$, and these can roughly be
divided into three categories.

Perhaps the best known, and probably the procedure going furthest back
in time, is represented by the higher order spectra; see
\citet{Brillinger:1984:CWJa,brillinger1991some,Tukey:1959:IMS}.  The
formula for the ordinary spectral density $f(\omega)$ from
\cref{eq:classical_spectral_density} is then supplemented by
considering the Fourier transformations of the higher order moments
(or cumulants), such as $\E{\Yz{r}\Yz{s}\Yz{t}}$ resulting in the
bispectrum depending on a double set of frequencies and
$\E{\Yz{r}\Yz{s}\Yz{t}\Yz{u}}$ producing the trispectrum dependent on
a triple of frequencies.  {\label{R1mc3_1}
  These cumulant-based higher order spectra are identical to zero for
  Gaussian processes.} 
The multi-frequency dependence of the bispectrum and trispectrum are
not always easy to interpret, and one may also question the existence
of higher order moments; in econometrics thick tails often makes this
into an issue.

Another approach is to replace $\gamma(h)$ in
\cref{eq:classical_spectral_density} by another measure of dependence
as a function of $h$.  Recently there has been much activity in
constructing an alternative to \cref{eq:classical_spectral_density} by
considering covariances of a stationary process obtained by describing
quantile crossings, see \citet{hagemann2011robust} for a well-written
introduction and many references.  This is a local spectrum in the
sense that it varies with the chosen quantile.  It is not always
possible to give a local periodic frequency interpretation as in
\cref{eq:classical_spectral_density}, but \citet{li2012quantile}
emphasises a local sinusoidal construction by analogy with quantile
regression models.  See also
\citet{Linton2007250,Han2016251,li08:_laplac_period_time_series_analy,Li20102133,li2014quantile,li12:_detec,li10:_nonlin_method_robus_spect_analy,li10:_robus,li2012robust}.
{\label{R1mc3_2}
  These approaches does usually not recover the ordinary spectrum for
  the Gaussian processes.}  
This loss of recovery is also the case if a local spectrum is
constructed on the basis of the so-called conditional correlation
function (\citet{silvapulle01:_large}).  Still another viewpoint would
be obtained in a spectral analysis of the distance Brownian covariance
function \citet{szekely2009}.

A third alternative is constituted by Hong's generalised spectrum, see
\citet{hong1999hypothesis,hong2000generalized}, which is obtained by
replacing the covariance function $\gamma(h)$ in
\cref{eq:classical_spectral_density} by the bivariate covariance
function $\sigmaz{h}(u,v)$ constructed by taking covariances between
the characteristic function expressions $\exp\parenR{iu\Yz{t+h}}$ and
$\exp\parenR{iv\Yz{h}}$.  Again, this gives a complete distributional
characterisation of dependence properties, but so far not much
attention has been given to concrete data analytic interpretation of
this frequency representation.  Rather, it has been used to test for
independence, conditional independence and predictability
\citet{li2016generalized,wang2017characteristic}.

The new approach presented in this paper follows the strategy where
the $\gamma(h)$ of \cref{eq:classical_spectral_density} is replaced by
another dependence measure, i.e.\ the \textit{local Gaussian
  autocorrelation} introduced in \citet{Tjostheim201333}, see
\citet{lacal2017local,lacal:_estim} for a number of recent
references.  
\label{R1mc4} 
The definition of the local Gaussian autocorrelation is given in
\cref{sec:Definitions}, but the gist of it can be described as
follows: The joint distribution of $\parenR{\Yz{t+h},\Yz{t}}$ is
approximated locally at a point $\LGp$, say, by a Gaussian bivariate
distribution --- and the correlation parameter from this approximating
Gaussian distribution is then taken as the local Gaussian
autocorrelation $\lgacr{\LGp}{h}$ at the point $\LGp$.  If
$\sum |\lgacr[p]{\LGp}{h}| < \infty$, the \textit{local Gaussian
  spectral density} at the point $\LGp$ can be defined in the
following manner,
\begin{align}
  \label{eq:lgsd_definition_intro}
  \lgsd[p]{\LGp}{\omega} \defeq \sumss{h=-\infty}{\infty}
  \lgacr[p]{\LGp}{h} \cdot \ez[-2\pi i\omega h]{}.
\end{align}

This enables a local frequency decomposition with different frequency
representations at different points $\LGp$, e.g.\ different
oscillatory behaviour at extremes (cf.\ also the extremogram of
\citet{davis2009}) as compared to oscillatory behaviour in the center
of the process.  The point $\LGp$ will naturally correspond to a pair
of quantiles, but this concept is distinctly different from the
quantile spectra referred to above in that it considers a
neighbourhood of $\LGp$ and not $\LGp$ as a threshold.  {
  \label{R1mc3_3}
  Moreover, this approach returns a scaled version of the ordinary
  spectrum when a Gaussian process is investigated, with equality when
  $\Var{\Yz{t}}=1$.}

Due to issues related to numerical convergence, the estimates
presented in this paper will be based on an initial normalisation of
$\TSR{\Yz{t}}{t\in\ZZ}{}$, and for the normalised processes the
correlation $\rho(h)$ will always equal the covariance $\gamma(h)$.
All references to $f(\omega)$ will henceforth refer to the spectral
density of a normalised process, i.e.\ $f(\omega)$ will now refer to
the following rescaled version instead of the one given in
\cref{eq:classical_spectral_density},
\begin{align}
  \label{eq:basic_spectral_density_rho}
  f(\omega) \defeq  \sumss{h\in\ZZ}{} \rho(h)\cdot \ez[-2\pi i\omega ]{}.
\end{align}
For the normalised processes, 
$f(\omega)$ and $\lgsd[p]{\LGp}{\omega}$ will by construction be
identical for Gaussian time series, and a comparison of the ordinary
spectrum $f(\omega)$ and the local Gaussian spectrum
$\lgsd[p]{\LGp}{\omega}$ can thus be used to investigate at a local
level how a non-Gaussian time series deviates from being Gaussian.

Much more details of this framework is given in \cref{sec:LGSD}.  This
section also contains the asymptotic theory with detailed proofs in
the Supplementary Material.  The real and simulated examples of
\cref{sec:Examples} show that local spectral estimates can detect
local periodic phenomena and detect nonlinearities in non-Gaussian
white noise.  Note that the scripts needed for the reproduction of
these examples are contained in the \Rpackage
\lgsdRpackage,\footnote{
  \label{LGSD_footnote:install_package} Use
  \devtoolgithublgsdRpackage\ to install the package.  See the
  documentation of the function \Rref{LG\_extract\_scripts} for
  further details.  See also \cref{app:data_details}. } 
where it in addition is possible to use an interactive tool to see how
adjustments of the input parameters (used in the estimation
algorithms) influence the estimates of~$\lgsd{\LGp}{\omega}$.

The theory developed in this paper can be extended to the multivariate
case, see \myblind{\citet{jordanger17:_lgcsd}}{blinded reference}.

\section{Local Gaussian spectral densities}
\label{sec:LGSD}

The local Gaussian correlation (LGC) was introduced in
\citet{Tjostheim201333}, with theory that showed how it could be used
to estimate the local Gaussian autocorrelations for a time series.  It
has been further developed in a number of papers, primarily
\citet{lacal2017local,lacal:_estim}, but see also
\citet{otneim2017locally,otneim2018conditional,Berentsen2014:departure_from_independence,Berentsen2014:Copula,Berentsen:2014:ILR,berentsen16:_some_proper_local_gauss_correl,stove2014using,stoeve14:_measur}
for related issues.  In \citet{Tjostheim201333} the possibility
of developing a local Gaussian spectral analysis was briefly
mentioned, and this is the topic of the present paper.

This section gives a brief summary of the local Gaussian
autocorrelations, and use them to define the local Gaussian spectral
density for strictly\footnote{
  Strict stationarity is necessary in order for the machinery of the
  local Gaussian approximations to be feasible, since Gaussian pdfs
  will be used to locally approximate the pdfs corresponding to the
  bivariate pairs~\mbox{$\left(\Yz{t+h},\Yz{t}\right)$}.
} stationary
univariate time series $\TSR{\Yz{t}}{t\in\ZZ}{}$, and give estimators
with a corresponding asymptotic theory.

\subsection{The local Gaussian correlations}
\label{sec:Definitions}

Details related to the estimation regime, and asymptotic properties,
can be found in \cref{app:bivariate_penalty_functions} in the
Supplementary Material.  Note that other approaches to the concept of
local Gaussian correlation also have been investigated, cf.\
\citet{berentsen16:_some_proper_local_gauss_correl} for~details.

\subsubsection{Local Gaussian correlation, general version}
\label{sec:LGC}
Consider a bivariate random variable
\mbox{$\bm{W} = \left(\Wz{1},\Wz{2}\right)$} with joint
cdf~$G(\bm{w})$ and joint pdf~$g(\bm{w})$.  For a specified
point~\mbox{$\LGp \defeq \LGpoint$}, the main idea is to find the 
bivariate Gaussian distribution whose density function best
approximates~$g(\bm{w})$ in a neighbourhood of the point of interest.
The LGC will then be defined to be the correlation of this local
Gaussian~approximation.

For the purpose of this investigation, the vector containing the five
local parameters $\muz{1}$, $\muz{2}$, $\sigmaz{1}$, $\sigmaz{2}$
and~$\rho$ will be denoted
by~$\bm{\theta} = \bm{\theta}(\LGp)$,\footnote{The vector
  $\bm{\theta}$ is a function of the point~$\LGp$, but this will
  henceforth be suppressed in the~notation.} and the approximating
bivariate Gaussian density function at the point $\LGp$ will be
denoted~$\psi(\bm{w};\bm{\theta})$, i.e.\
\begin{equation}
  \label{eq:psi_density}
  \psi(\bm{w}; \bm{\theta}) \defeq \tfrac{1}{2\pi \cdot     \sigmaz{1} \sigmaz{2} \sqrt{1 -
      \rhoz[2]{}}}   \exp \parenC{- \tfrac{      \sigmaz[2]{1} \parenRz[2]{}{\wz{1} - \muz{1}}      - 2\sigmaz{1}\sigmaz{2}\rho       \left(\wz{1} - \muz{1} \right)      \left(\wz{2} - \muz{2} \right)      + \sigmaz[2]{2} \parenRz[2]{}{\wz{2} - \muz{2}}    }{      2\sigmaz[2]{1} \sigmaz[2]{2} \left( 1 -
        \rhoz[2]{} \right) } }.
\end{equation}

In order for $\psi(\bm{w};\bmthetaz{})$ to be considered a good
approximation of $g(\bm{w})$ in a neighbourhood of the point $\LGp$,
it should at least coincide with $g(\bm{w})$ at $\LGp$, and it
furthermore seems natural to require that the tangent planes should
coincide too, i.e.\
\begin{subequations}
  \label{eq:equality_g_and_psi_intro}
\begin{align}
  \label{eq:equality_g_and_psi_intro_0-part}
  &g(\LGp) = {\psi(\LGp;\bmthetaz{})}, \\
  \label{eq:equality_g_and_psi_intro_1-part}
  &\frac{\partial}{\partial \wz{1}} g(\LGp) =   {\frac{\partial}{\partial \wz{1}} 
    \psi(\LGp;\bmthetaz{})} \text{\ and\ }   \frac{\partial}{\partial \wz{2}} g(\LGp) =   {\frac{\partial}{\partial \wz{2}}
    \psi(\LGp;\bmthetaz{})}.
\end{align}
\end{subequations}

It is easy to verify analytically that a solution $\bmthetaz{}$ can be
found for any point $\LGp$ where $g(\bm{w})$ is smooth~--- but these
solutions are not unique: $\psi(\bm{w};\bmthetaz{})$ and
$\psi(\bm{w};\bmthetaz[']{})$ can have the same first order
linearisation around the point~$\LGp$, without $\bmthetaz{}$ being
identical to~$\bmthetaz[']{}$.  It is tempting to extend
\cref{eq:equality_g_and_psi_intro} to also include similar
requirements for the second order partial derivatives, but the system
of equations will then in general have no solution.

This shows that it, in order to find the local Gaussian parameters in
$\bmthetaz{}$, is insufficient to only consider requirements
at~$\LGp$, it is necessary to apply an argument that also takes into
account a neighbourhood around~$\LGp$.  Applying the approach used
when estimating densities in \citet{hjort96:_local}, one can consider
a \mbox{$\blimit$} limit of parameters
$\thetah{\bm{b}} = \thetah{\bm{b}}(\LGp)$ that minimise the penalty
function
\begin{align}
  \label{eq:locally_weighted_Kullback_Leibler_intro}
  \qz{\bm{b}} = \int \Kh[\bm{w}-\LGp]{\bm{b}}   \left[\psi(\bm{w};\bm{\theta}) -   g(\bm{w})\log\left(\psi(\bm{w};\bm{\theta})\right)   \right]\d{\bm{w}},
\end{align}
where~$\Kh[\bm{w}-\LGp]{\bm{b}}$ is a kernel function with
bandwidth~$\bm{b}$.  As explained in
\citet[Section~2.1]{hjort96:_local}, this can be interpreted as a
locally weighted Kullback-Leibler distance between the targeted
density~$g(\bm{w})$ and the approximating
density~$\psi(\bm{w};\bm{\theta})$.
An optimal parameter configuration~$\thetah{\bm{b}}$ for
\cref{eq:locally_weighted_Kullback_Leibler_intro} should solve the
vector~equation
\begin{align}
  \label{eq:locally_weighted_Kullback_Leibler_equation_intro}
  \int \Kh[\bm{w}-\LGp]{\bm{b}}\bm{u}(\bm{w};\bm{\theta})  \left[\psi(\bm{w};\bm{\theta}) - g(\bm{w})\right]\d{\bm{w}} =
  \bm{0},
\end{align}
where~\mbox{$\bm{u}(\bm{w};\bm{\theta}) \defeq
  \tfrac{\partial}{\partial \bm{\theta}}
  \log\left(\psi(\bm{w};\bm{\theta}) \right)$} is the score function
of the approximating density~$\psi(\bm{w};\bm{\theta})$.  There will,
under suitable assumptions \citet{hjort96:_local,Tjostheim201333}, be
a unique limiting solution of
\cref{eq:locally_weighted_Kullback_Leibler_equation_intro}, i.e.
\begin{equation}
  \label{eq:theta0_limit}
  \bmthetaz{0} = \bmthetaz{0}(\LGp) =
  \lim_{\blimit} \thetah{\bm{b}}(\LGp)
\end{equation}
will be well-defined,\footnote{The solution $\bmthetaz{0}$ will always
  satisfy \cref{eq:equality_g_and_psi_intro_0-part}, but it will in
  general not satisfy \cref{eq:equality_g_and_psi_intro_1-part}.}  and
the $\rho$-part of the $\bmthetaz{0}$-vector can be used to define a
LGC at the point~$\LGp$.

For the special case where $g(\bm{w})$ is a bivariate normal
distribution, i.e.\ when
\begin{equation}
  \label{eq:W_bivariate_Gaussian}
  \bm{W} \sim \operatorname{N}\!\left(      \left[        \begin{matrix}
        \muz{1} \\
        \muz{1}
      \end{matrix}
    \right], 
    \left[         \begin{matrix}
        \sigmaz[2]{1}           & \sigmaz{1}\sigmaz{2}\rho \\
        \sigmaz{1}\sigmaz{2}\rho           & \sigmaz[2]{2}
      \end{matrix}
    \right] 
  \right), 
\end{equation}
then, for any point~$\LGp$ and any bandwidth~$\bm{b}$, the parameters
$\thetah{\bm{b}}$ that gives the optimal solution of
\cref{eq:locally_weighted_Kullback_Leibler_equation_intro} will be the
parameters given in~\cref{eq:W_bivariate_Gaussian}.  The
limit~$\bmthetaz{0}$ in \cref{eq:theta0_limit} will thus of course
also be these parameters, which implies that the LGC coincides with
the global parameter~$\rho$ at all points in the Gaussian case.  The
interested reader should consult \citet[p.~33]{Tjostheim201333} for
further details/remarks that motivates the use of the~LGC.
  
An estimate of the local Gaussian parameters $\bmthetaz{0}(\LGp)$ in
\cref{eq:theta0_limit} can, for a given bivariate sample
$\TSR{\bmWz{t}}{t=1}{n}$ and some reasonable bandwidth $\bm{b}$, be
found as the parameter-vector $\widehatbmthetaz{\bm{b}}(\LGp)$
that maximises the local log-likelihood\footnote{Confer
  \cref{app:bivariate_penalty_functions} in the supplementary material
  for a detailed exposition.}
\begin{align}
  \label{eq:LN_intro}
  \Lz{n}(\bmthetaz{})
  &\defeq
    n\inv \sumss{t=1}{n}
    \Kz{\bm{b}}\!\parenR{\bmWz{t}-\LGp}\log\psi\!\parenR{\bmWz{t};\bmthetaz{}} 
    - \intss{\RRn{2}}{} \Kz{\bm{b}}\!\parenR{\bm{w}-\LGp}\psi\!\parenR{\bm{w};\bmthetaz{}}  \d{\bm{w}}.
\end{align}

The asymptotic behaviour of $\widehatbmthetaz{\bm{b}}(\LGp)$ (as
$\nlimit$ and $\blimit$) is in \citet{Tjostheim201333} investigated by
entities derived from a local penalty function $\Qz{n}(\bmthetaz{})$
defined as $-n\cdot\Lz{n}(\bmthetaz{})$, i.e.\
\begin{align}
  \label{eq:QhN_paper}
  \Qz{n}(\bmthetaz{}) 
  &= - \sumss{t=1}{n}
    \Kz{\bm{b}}\!\parenR{\bmWz{t}-\LGp}\log\psi\!\parenR{\bmWz{t};\bmthetaz{}} 
    + n \intss{\RRn{2}}{} \Kz{\bm{b}}\!\parenR{\bm{w}-\LGp}\psi\!\parenR{\bm{w};\bmthetaz{}}  \d{\bm{w}}.
\end{align}
The key ingredient in the analysis is the corresponding vector of
partial derivatives, 
\begin{align}
  \label{eq:QhN_derivatives_paper}
  \nabla\Qz{n}(\bmthetaz{})
  &=  - \sumss{t=1}{n}
    \left[\Kz{\bm{b}}\!\parenR{\bmWz{t}-\LGp}\bm{u}\!\parenR{\bmWz{t};\bmthetaz{}}
    - \intss{\RRn{2}}{}
    \Kz{\bm{b}}\!\parenR{\bm{w}-\LGp}\bm{u}\!\parenR{\bm{w};\bmthetaz{}}\psi\!\parenR{\bm{w};\bmthetaz{}}
    \d{\bm{w}} \right],
\end{align}
and, as will be seen later on, the asymptotic investigation of the
local Gaussian spectral density $\lgsd{\LGp}{\omega}$ introduced in
this paper does also build on this entity.

Notice that the bias-variance balance of the estimate
$\widehatbmthetaz{\bm{b}}(\LGp)$ depends on the bandwidth-vector
$\bm{b}$, and an estimate based on a $\bm{b}$ too close to
$\bmzeroz{}$ might thus be dubious.  However, it can still be of
interest (for a given sample) to compare estimates
$\widehatbmthetaz{\bm{b}}(\LGp)$ for different scales of $\bm{b}$ in
order to see how they behave.

Since the goal is to estimate $\bmthetaz{0}(\LGp)$, it is of course
important to find $\widehatbmthetaz{\bm{b}}(\LGp)$ for not too large
bandwidth-vectors $\bm{b}$~--- but it might still be of interest to
point out how \cref{eq:LN_intro} behaves in the \enquote{global limit
  $\bm{b}\rightarrow \bm{\infty}=(\infty,\infty)$}.  In this case the
second term goes to zero, and the parameter-vector
$\widehatbmthetaz{\bm{\infty}}(\LGp)$ that maximises the first term
becomes the ordinary (global) least squares estimates of a global
parameter vector $\bmthetaz{}$ which contains the ordinary means,
variances and correlation.

\subsubsection{Local Gaussian correlation, normalised version}
\label{sec:LGC_revised}

The algorithm that estimates the LGC (see
\citet{Berentsen2014:departure_from_independence} for an
\R-implementation) can run into problems if the data under
investigation contains outliers --- i.e.\ the numerical convergence
might not succeed for points~$\LGp$ in the periphery of the data.  It
is possible to counter this problem by removing the most extreme
outliers, but an alternative strategy based on normalisation will be
applied instead.

The key observation is that the numerical estimation problem does not
occur when the marginal distributions are standard normal - which
motivates an adjusted strategy similar to the copula-concept
from~\citet{sklar1959:_copula}.  Sklar's theorem gives the existence
of a copula~$C(\uz{1},\uz{2})$ such that the joint cdf $G(\bm{w})$ can
be expressed as
\mbox{$C\!\left(\Gz{1}\!\left(\wz{1}\right)\!,
    \Gz{2}\!\left(\wz{2}\right)\right)$}, with
$\Gz{i}\!\left(\wz{i}\right)$ the marginal cdf corresponding to
$\Wz{i}$.  This copula~$C$ contains all the interdependence
information between the two marginal random variables $\Wz{1}$ and
$\Wz{2}$, it will be unique when the two margins are continuous, and
it will then be invariant under strictly increasing transformations of
the margins.\footnote{For a proof of this statement, see e.g.\
  \citet[Theorem~2.4.3]{Nelsen2006}.} Under this continuity
assumption, the random variable
\mbox{$\bm{W} = \left(\Wz{1},\Wz{2}\right)$} will have the same copula
as the transformed random variable
\mbox{$\bm{Z} \defeq \left(\Phiz[-1]{}\!\left(
      \Gz{1}\!\left(\Wz{1}\right)\right), \Phiz[-1]{}\!\left(
      \Gz{2}\!\left(\Wz{2}\right)\right)\right)$}, where $\Phi$ is the
cdf of the standard normal distribution~--- whose corresponding pdf as
usual will be denoted by~$\phi$.\footnote{ See
  \citet{Berentsen2014:Copula} for an approach where this is used to
  construct a \textit{canonical local Gaussian correlation} for the
  copula~$C$.}  This transformed version of $\bm{W}$ has standard
normal margins, so the LGC-estimation algorithm will not run into
numerical problems --- which motivates the following alternative
approach to the definition of~LGC: Instead of finding a Gaussian
approximating of the pdf $g(\bm{w)}$ (of the original random
variable~$\bm{W}$) at a point~$\LGp$, find a Gaussian approximation of
the pdf $\gz{\bm{Z}}(\bm{z})$ of the transformed random
variable~$\bm{Z}$ at a transformed point~$\LGpz{\bm{Z}}$.  Expressed
relative to the pdf~$c$ of the copula~$C$, this means that the setup
in \cref{eq:LGC_normalised_after} below will be used instead of the
setup in \cref{eq:LGC_normalised_before}.
\begin{subequations}
  \label{eq:LGC_normalised}
  \begin{align}
    \label{eq:LGC_normalised_before}
    g(\bm{w)}     & = c\!\left(\Gz{1}\!\left(\wz{1}\right),       \Gz{2}\!\left(\wz{2}\right)\right)       \gz{1}\!\left(\wz{1}\right)      \gz{2}\!\left(\wz{2}\right) 
    & \text{approximate at }     &  \LGp=\LGpoint, \\
    \label{eq:LGC_normalised_after}
    \gz{\bm{Z}}(\bm{z})     & = c\!\left(      \Phi\!\left(\zz{1}\right),       \Phi\!\left(\zz{2}\right)\right)       \phi\!\left(\zz{1}\right)      \phi\!\left(\zz{2}\right)     & \text{approximate at }     &  \LGpz{\bm{Z}}\defeq
    \left(      \Phiz[-1]{}\!\left(      \Gz{1}\!\left(\LGpi{1}\right)\right),       \Phiz[-1]{}\!\left(      \Gz{2}\!\left(\LGpi{2}\right)\right)
    \right).
  \end{align}
\end{subequations}

The normalised version of the LGC will return values that differ from
those obtained from the general LGC-version introduced in
\cref{sec:LGC}, but the two versions coincide when the random variable
$\bm{W}$ is bivariate Gaussian.  The transformed random
variable~$\bm{Z}$ corresponding to the~$\bm{W}$
from~\cref{eq:W_bivariate_Gaussian} will then be
\mbox{$\bm{Z} = \left(    \left(\Wz{1} - \muz{1} \right)/\sigmaz{1},     \left(\Wz{2} - \muz{2} \right)/\sigmaz{2} \right)$}, which~implies
\begin{equation}
  \label{eq:Y_bivariate_Gaussian}
  \bm{Z} \sim \operatorname{N}\!\left(    \left[    \begin{matrix}
      0 \\
      0
    \end{matrix}
     \right], 
     \left[        \begin{matrix}
         1          & \rho \\
         \rho          & 1
       \end{matrix}
     \right] 
   \right), 
\end{equation}
so the normalised LGC will thus also coincide with the
global parameter~$\rho$ at all~points.

The convergence rate for the estimates is rather slow for the LGC
cases discussed above (it is
$\sqrt{n\!\subp{\parenR{\bz{1}\bz{2}}}{}{}{}{3}}$), and that is due to
the kernel function~$\Kz{\bm{b}}$ in
\cref{eq:locally_weighted_Kullback_Leibler_intro}.  Briefly
summarised, the \mbox{$5\times5$} covariance matrix of the estimate
$\widehatbmthetaz{\bm{b}}$ will have the form
\mbox{$\Vz[-1]{\bm{b}}\Wz{\bm{b}}\Vz[-1]{\bm{b}}$}, the presence of
the kernel~$\Kz{\bm{b}}$ means that the matrices $\Vz{\bm{b}}$ and
$\Wz{\bm{b}}$ have rank one in the limit $\blimit$, and this slows
down the convergence rate, cf.\ \citet[Th.~3]{Tjostheim201333} for the
details.

The property that the limiting matrices have rank one does not pose a
problem if only one parameter is estimated,\footnote{The matrices
  then becomes \mbox{$1\times1$}, so the singularity problems does not
  occur.} and the convergence rate would then be much faster (i.e.\
$\sqrt{n\bz{1}\bz{2}}$).  Inspired by the fact that the transformed
random variable~$\bm{Z}$ have standard normal margins, it has been
introduced a simplified normalised version of the LGC where only the
$\rho$-parameter should be estimated when using the approximation
approach from \cref{eq:LGC_normalised_after}, i.e.\ the values of
$\muz{1}$, $\muz{2}$ are taken to be~0, whereas $\sigmaz[2]{1}$ and
$\sigmaz[2]{2}$ are taken to be~1.  This simplified approach has been
applied successfully with regard to density estimation\footnote{Note
  that it is \textit{not} the local Gaussian correlation that is the
  target of interest when this simplified approach is used for density
  estimation.}
in \citet{otneim2017locally,otneim2018conditional}, but for the local
spectrum analysis considered in this paper it gave inferior results
--- and this paper will thus not include any plots based on the
normalised one-parameter version.\footnote{
  The theory for the normalised one-free-parameter version of LGC is
  avaialbe in the first authors PhD-thesis,
  \myblind{\url{https://bora.uib.no/handle/1956/16950}.}{blinded url.}
  This also contains a discussion with regard to why an approach based
  on the normalised one-free-parameter approach fails to produce
  decent results.}

\makeatletter{}

\subsection{The local Gaussian spectral densities}
\label{sec:LGSD_definition}

{
  \label{pR2bp3}
  An extension of the spectral density $f(\omega)$ from
  \cref{eq:basic_spectral_density_rho} can be based on any of the
  three LGC-versions mentioned in \cref{sec:LGC,sec:LGC_revised}.  The
  one presented below is based on the normalised five-parameter local
  Gaussian autocorrelation, since that ensures that the estimation
  algorithm avoids the aforementioned numerical convergence problems
  --- but the theory developed in the Supplementary Material does also
  cover the general situation.
}

\begin{definition}
  \label{def:LGSD_def_main_part}
  The local Gaussian spectral density (LGSD), at a
  point,~\mbox{$\LGp=\LGpoint$}, for a strictly stationary univariate
  time series $\TSR{\Yz{t}}{t\in\ZZ}{}$ is constructed in the
  following manner.
  \begin{enumerate}[label=(\alph*)]
  \item
    \label{def:LGSD_def_main_part_normalisation}
    With $G$ the univariate \textit{marginal} cumulative distribution
    of $\TSR{\Yz{t}}{t\in\ZZ}{}$, and $\Phi$ the cumulative
    distribution of the standard normal distribution, define a
    normalised version $\TSR{\Zz{t}}{t\in\ZZ}{}$ of
    $\TSR{\Yz{t}}{t\in\ZZ}{}$ by
    \begin{align}
      \label{eq:LGSD_def_main_part_normalisation}
      \TSR{\Zz{t} \defeq
      \Phiz[-1]{}\!\parenR{G\!\parenR{\Yz{t}}}}{t\in\ZZ}{}.
    \end{align}
  \item     For a given point \mbox{$\LGp=\LGpoint$} and for each
    \textit{bivariate pair}
    \mbox{$\bmZz{h:t}\defeq \parenR{\Zz{t+h},\Zz{t}}$},
        a \textit{local Gaussian
      autocorrelation} $\lgacr[p]{\LGp}{h}$ can be computed.  The convention
    \mbox{$\lgacr[p]{\LGp}{0}\equiv 1$} is used when \mbox{$h=0$}.
  \item     \label{def:LGSD_requirement}
        When
    \mbox{$\sumss{h\in\ZZ}{} \parenAbs{\lgacr[p]{\LGp}{h}} < \infty$},
    the \textit{local Gaussian spectral density} at the point~$\LGp$
    is defined~as
    \begin{align}
      \label{eq:lgsd_definition_main_document}
      \lgsd[p]{\LGp}{\omega} \defeq \sumss{h=-\infty}{\infty}
      \lgacr[p]{\LGp}{h} \cdot \ez[-2\pi i\omega h]{}.
    \end{align}
  \end{enumerate}
\end{definition}

Notice that the requirement
\mbox{$\sumss{h\in\ZZ}{} \parenAbs{\lgacr[p]{\LGp}{h}} < \infty$} in
\myref{def:LGSD_def_main_part}{def:LGSD_requirement} implies that the
concept of local Gaussian spectral density in general might not be
well defined for all stationary time series $\TSR{\Yz{t}}{t\in\ZZ}{}$
and all points \mbox{$\LGp\in\RRn{2}$}.

{
  \label{pR2bp1_marginal}
  The normalisation in \cref{eq:LGSD_def_main_part_normalisation}
  preserves the copula-structure of the original time series, but a
  standard normal marginal will be used instead of its original
  marginal distribution.  This implies that the transformed time
  series will have all moments, even though that might not be the case
  for a tick tailed original time series.  A local Gaussian
  investigation of the normalised time series can detect non-Gaussian
  dependency structures in the original time series, but keep in mind
  that an investigation of the original marginal might also be of
  interest in many situations, e.g.\ with regard to discriminant
  analysis.
  
  Finally, note that the normalisation in
  \label{pR1MC2}
  \cref{eq:LGSD_def_main_part_normalisation} can be compared to, but
  is very different from, the normalization in
  \citet{kl94:_some_limit_theor_self_normal}.  }

The following definition of time reversible time series, from
\citet[def.~4.6]{tong1990non}, is needed
in~\myref{th:lgsd_properties}{th:lgsd_real_when_reversible_time_series}.

\begin{definition}
  \label{def:Yt_reversible}
  A stationary time series $\TSR{\Yz{t}}{t\in\ZZ}{}$ is time
  reversible if for every positive integer~$n$ and every
  \mbox{$\tz{1},\tz{2},\dotsc,\tz{n} \in \ZZ$}, the vectors
  \mbox{$\parenR{\Yz{\tz{1}},\Yz{\tz{2}}\dotsc,\Yz{\tz{n}}}$} and
  \mbox{$\parenR{\Yz{-\tz{1}},\Yz{-\tz{2}}\dotsc,\Yz{-\tz{n}}}$} have
  the same joint distributions.
\end{definition}

{\begin{lemma}
  \label{th:lgsd_properties}
  The following properties holds for $\lgsd[p]{\LGp}{\omega}$.
  \begin{enumerate}[label=(\alph*)]
  \item \label{th:lgsd_equal_to_osd_when_Gaussian}
    $\lgsd[p]{\LGp}{\omega}$ coincides with $f(\omega)$ for all
    \mbox{$\LGp\in\RRn{2}$} when $\TSR{\Yz{t}}{t\in\ZZ}{}$ is a
    Gaussian time series, or when $\TSR{\Yz{t}}{t\in\ZZ}{}$ consists
    of i.i.d.\ observations.
  \item     \label{th:lgsd_reflection_property}
        The following holds when \mbox{$\LGpd\defeq\LGpointd$} is the
    diagonal reflection of \mbox{$\LGp=\LGpoint$};
    \begin{subequations}
      \label{eq:th:lgsd_properties}
      \begin{align}
        \label{eq:th:lgsd_folding_property}
        \lgsd[p]{\LGp}{\omega}         &= 1 + \sumss{h=1}{\infty}
          \lgacr[p]{\LGpd}{h} \cdot \ez[+2\pi i\omega h]{} 
          + \sumss{h=1}{\infty}
          \lgacr[p]{\LGp}{h}  \cdot \ez[-2\pi i\omega h]{}, \\
                \label{eq:th:lgsd_conjugate_property}
        \lgsd[p]{\LGp}{\omega}         &= \overline{\lgsd[p]{\LGpd}{\omega}}.
      \end{align}
    \end{subequations}
  \item     \label{th:lgsd_real_when_reversible_time_series}
        When $\TSR{\Yz{t}}{t\in\ZZ}{}$ is time reversible, then
    $\lgsd[p]{\LGp}{\omega}$ is real valued for all
    \mbox{$\LGp\in\RRn{2}$}, i.e.\
    \begin{align}
      \label{eq:th:lgsd_reflection_property}
      \lgsd[p]{\LGp}{\omega}       &= 1 + 2\cdot \sumss{h=1}{\infty}
        \lgacr[p]{\LGp}{h} \cdot \cos(2\pi\omega h).
    \end{align}
  \item     \label{th:lgsd_real_on_diagonal}
        $\lgsd[p]{\LGp}{\omega}$ will in general be complex-valued, but it
    will always be real valued when the point $\LGp$ lies on the
    diagonal, i.e.\ when \mbox{$\LGpdiagonal$}.
    \Cref{eq:th:lgsd_reflection_property} will hold in this diagonal
    case too.
  \end{enumerate}
\end{lemma}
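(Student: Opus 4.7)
The plan is to prove the four parts in a linear order, with part~(b) supplying the main structural identity that everything else rests on. The crucial observation, which I would establish up front, is the reflection identity
\begin{equation*}
  \lgacr[p]{\LGp}{-h} = \lgacr[p]{\LGpd}{h}, \qquad h \in \ZZ.
\end{equation*}
This follows from the definition in \myref{def:LGSD_def_main_part}{}: the LGC at lag $-h$ is computed from the bivariate pair $\parenR{\Zz{t-h},\Zz{t}}$, which by strict stationarity of $\TSR{\Zz{t}}{t\in\ZZ}{}$ has the same joint distribution as $\parenR{\Zz{0},\Zz{h}}$, i.e.\ the coordinate swap of the pair $\parenR{\Zz{h},\Zz{0}}$ used to define $\lgacr[p]{\LGpd}{h}$. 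Swapping coordinates in the bivariate random vector sends the evaluation point $\LGp$ to $\LGpd$, and because $\psi(\bm{w};\bmthetaz{})$ in \cref{eq:psi_density} is symmetric in its coordinate pair (under the matching swap of $(\muz{1},\sigmaz{1})$ and $(\muz{2},\sigmaz{2})$) with the correlation parameter unchanged, the optimizer of the locally weighted Kullback–Leibler functional in \cref{eq:locally_weighted_Kullback_Leibler_intro} returns the same $\rho$-component. Hence the LGC-values coincide.

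For part~(b), I would then split the series \cref{eq:lgsd_definition_main_document} into the $h=0$ term, the $h\geq 1$ terms, and the $h\leq -1$ terms; reindex the negative block via $h\mapsto -h$; and substitute the identity above to obtain \cref{eq:th:lgsd_folding_property}. The conjugate relation \cref{eq:th:lgsd_conjugate_property} follows by conjugating the full series for $\lgsd[p]{\LGpd}{\omega}$ (using that the LGC values are real) and applying the same reindexing together with the reflection identity. Part~(a) is then a short corollary: in the Gaussian case the normalised LGC in Section~2.1.2 coincides with the global correlation parameter for every~$\LGp$ (cf.\ \cref{eq:Y_bivariate_Gaussian} and the discussion following it), so $\lgacr[p]{\LGp}{h}=\rho(h)$ for all $\LGp$ and the two spectra coincide termwise; for i.i.d.\ data all bivariate pairs with $h\neq 0$ are independent, the local Gaussian approximation is the product of two standard normals with $\rho=0$, and both $\lgsd[p]{\LGp}{\omega}$ and $f(\omega)$ reduce to the constant~$1$.

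For part~(c), time reversibility in the sense of \cref{def:Yt_reversible} transfers to $\TSR{\Zz{t}}{t\in\ZZ}{}$ because $\Zz{t}$ is a fixed measurable transformation of $\Yz{t}$. Reversibility applied to the pair $\parenR{\Zz{t+h},\Zz{t}}$ together with stationarity gives $\parenR{\Zz{t+h},\Zz{t}} \stackrel{d}{=} \parenR{\Zz{t},\Zz{t+h}}$, which via the symmetry argument used above forces $\lgacr[p]{\LGp}{h}=\lgacr[p]{\LGpd}{h}$. Combining this with the reflection identity from the preamble yields $\lgacr[p]{\LGp}{-h}=\lgacr[p]{\LGp}{h}$, and pairing the $h$ and $-h$ terms in \cref{eq:lgsd_definition_main_document} collapses the exponentials into cosines, giving \cref{eq:th:lgsd_reflection_property}. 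Part~(d) is immediate: on the diagonal $\LGp=\LGpd$, so the reflection identity degenerates to the evenness $\lgacr[p]{\LGp}{-h}=\lgacr[p]{\LGp}{h}$, and the same cosine collapse applies.

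The only non-routine step is the reflection identity, and the subtle point there is making precise why swapping the two coordinates of the bivariate random vector translates into the reflection $\LGp\mapsto\LGpd$ at the level of the locally optimal Gaussian parameters. I would handle this by a change-of-variables argument in \cref{eq:locally_weighted_Kullback_Leibler_intro}: under the map $\bm{w}\mapsto\parenR{\wz{2},\wz{1}}$, the target density $g$ becomes the density of the swapped pair, the kernel $\Kh[\bm{w}-\LGp]{\bm{b}}$ becomes $\Kh[\bm{w}-\LGpd]{\bm{b}}$ (assuming the product kernel used throughout the paper), and $\psi(\bm{w};\bmthetaz{})$ is invariant under the simultaneous swap of the mean/variance components with $\rho$ unchanged. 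Hence the minimiser's $\rho$-coordinate is the same in both problems, and passing to the \mbox{$\blimit$} limit gives the required equality of LGC-values. Once this is in hand, everything else is bookkeeping.
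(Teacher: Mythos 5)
Your proposal is correct and follows essentially the same route as the paper: your reflection identity $\lgacr[p]{\LGp}{-h}=\lgacr[p]{\LGpd}{h}$ is exactly the paper's \emph{diagonal folding property} (\cref{th:diagonal_folding_property}), established there by the same coordinate-swap argument on the joint distribution and the induced swap of the local Gaussian parameters, and parts (a)--(d) are then derived from it by the same reindexing and pairing of terms. The only cosmetic difference is that you phrase the swap argument as a change of variables in the penalty function \cref{eq:locally_weighted_Kullback_Leibler_intro} while the paper works directly with the cdfs of the lagged pairs, but these are the same observation.
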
}

\begin{proof}
  \Cref{th:lgsd_equal_to_osd_when_Gaussian} follows for the Gaussian
  case since the local Gaussian autocorrelations $\lgacr[p]{\LGp}{h}$
  by construction coincides with the ordinary (global)
  autocorrelations $\rho(h)$ in the Gaussian case.  Similarly, when
  $\TSR{\Yz{t}}{t\in\ZZ}{}$ consists of i.i.d.\ observations, then
  both local and global autocorrelations will be~0 when
  \mbox{$h\neq0$}, and the local and global spectra both become the
  constant function~1.
  \Cref{th:lgsd_reflection_property,th:lgsd_real_when_reversible_time_series,th:lgsd_real_on_diagonal}
  are trivial consequences of the diagonal folding property from
  \cref{th:diagonal_folding_property}, i.e.\
  \mbox{$\lgacr[p]{\LGp}{-h} = \lgacr[p]{\LGpd}{h}$}, and the
  definition of time reversibility, see
  \cref{app:_diagonal_argument,app:_time_reversible} for details.
\end{proof}

For general points \mbox{$\LGp=\LGpoint$}, the complex valued result
of $\lgsd[p]{\LGp}{\omega}$ might be hard to investigate and
interpret~--- but, due to
\myref{th:lgsd_properties}{th:lgsd_real_on_diagonal}, the
investigation becomes simpler for points on the diagonal.
This might also be the situation of most practical interest, since it
corresponds to estimating the local spectrum at (or around) a given
value of $\TSR{\Yz{t}}{t\in\ZZ}{}$ --- such as a certain quantile for
the distribution of $\Yz{t}$.
The real valued results $\lgsd[p]{\LGp}{\omega}$ for
$\LGp$ along the diagonal can be compared with the result of the
ordinary (global) spectral density $f(\omega)$, as given in
\cref{eq:basic_spectral_density_rho}, and this might detect cases
where the times series $\TSR{\Yz{t}}{t\in\ZZ}{}$ deviates from
\textit{being Gaussian}.  Furthermore, if the global spectrum
$f(\omega)$ is flat, then any peaks and troughs of
$\lgsd[p]{\LGp}{\omega}$ might be interpreted as indicators of e.g.\
\textit{periodicities at a local level}.  This implies that estimates
of $\lgsd[p]{\LGp}{\omega}$ might be useful as an exploratory tool, an
idea that will be pursued in~\cref{sec:Examples}.

Note that the collection of local Gaussian autocorrelations
$\TSR{\lgacr[p]{\LGp}{h}}{h\in\ZZ}{}$ might not be non-negative
definite, which implies that both the theoretical and estimated local
Gaussian spectral densities might therefore become negative.  However,
as the artificial process investigated in \cref{fig:trigonometric}
(page \pageref{fig:trigonometric}) shows, the peaks of
$\lgsd[p]{\LGp}{\omega}$ still occur at the expected frequencies for
the investigated points --- which implies that the lack of
non-negativity does not prevent this tool from detecting nonlinear
structures in non-Gaussian white noise.

The following definition is needed when the discussion later on
refers to $m$-truncated versions of the different spectra.

\begin{definition}
  \label{def:m_truncated_lgsd_and_sd}
  The \mbox{$m$-truncated} versions $\lgsdM[p]{\LGp}{\omega}{m}$ and
  $\fz[m]{}(\omega)$ of $\lgsd[p]{\LGp}{\omega}$ and $f(\omega)$, for
  some lag-window function $\lambdaz{m}(h)$, is defined by means of
  \begin{subequations}
    \label{eq:def:m_truncated_lgsd_and_sd}
    \begin{align}
      \label{eq:def:m_truncated_lgsd}
      \lgsdM[p]{\LGp}{\omega}{m} 
      &\defeq 1 +         \sumss{h=1}{m}
        \lambdazM{h}{m}\cdot \lgacr[p]{\LGpd}{h}
        \cdot \ez[+2\pi i\omega h]{} +         \sumss{h=1}{m}
        \lambdazM{h}{m}\cdot \lgacr[p]{\LGp}{h} \cdot
        \ez[-2\pi i\omega h]{}, \\
      \label{eq:def:m_truncated_sd}
      \fz[m]{}(\omega)
      &\defeq \sumss{h=-m}{m} \lambdazM{h}{m}\cdot \rho(h) \cdot
        \ez[-2\pi i\omega h]{}.
    \end{align}
  \end{subequations}
\end{definition}

\subsection{Estimation}
\label{sec:Estimation}

Theoretical and numerical estimates of the ordinary spectral density
$f(\omega)$ is typically investigated by means of the fast Fourier
transform (FFT) and techniques related to the periodogram.  The
FFT-approach can not be used in the local case since there is no
natural factorisation of terms making up a local estimated covariance,
but there does exist a pre-FFT approach for the estimation of
$f(\omega)$,
where a Fourier transform is taken of the estimated autocorrelations
after they have been smoothed and truncated by means of some
lag-window function~--- and the pre-FFT approach can be adapted to deal
with the estimates of the local Gaussian spectral~densities.

\newtheorem{algorithm}[theorem]{Algorithm}

\begin{algorithm}
  \label{def:lgsd_estimator}
  For a sample $\TSR{\yz{t}}{t=1}{n}$ of size~$n$, an
  \mbox{$m$-truncated} estimate $\hatlgsdM[p]{\LGp}{\omega}{m}$ of
  $\lgsd[p]{\LGp}{\omega}$ is constructed by means of the following
  procedure.
  \begin{enumerate}[label=(\alph*)]
  \item     \label{def:lgsd_estimator_ecdf}
    Find an estimate $\widehatGz{\!n}$ of the marginal cumulative
    distribution function, and compute the \textit{pseudo-normalised
      observations}
    $\TSR{\widehatzz{t} \defeq
      \Phiz[-1]{}\!\parenR{\widehatGz{\!n}\!\parenR{\yz{t}}}}{t=1}{n}$
    that corresponds to $\TSR{\yz{t}}{t=1}{n}$.
  \item     \label{def:lgsd_estimator_pseudo_normalisation}
        Create the lag~$h$ pseudo-normalised pairs
    $\TSR{\parenR{\widehatzz{t+h},\widehatzz{t}}}{t=1}{n-h}$ for
    \mbox{$h=1,\dotsc,m$}, and estimate, both for the point
    \mbox{$\LGp=\LGpoint$} and its diagonal reflection
    \mbox{$\LGpd=\LGpointd$}, the local Gaussian autocorrelations
    $\TSR{\hatlgacrb[p]{\LGp}{h}{\bmbzh{h}}}{h=1}{m}$ and
    $\TSR{\hatlgacrb[p]{\LGpd}{h}{\bmbzh{h}}}{h=1}{m}$, where the
    $\TSR{\bmbzh{h}}{h=1}{m}$ is the bandwidths used during the
    estimation of the local Gaussian autocorrelation for the different
    lags.  
          \item     \label{def:lgsd_esitimator_folded}
        Adjust \cref{eq:th:lgsd_folding_property} from
    \myref{th:lgsd_properties}{th:lgsd_reflection_property} with some
    lag-window function $\lambdazM{h}{m}$ to get the estimate
    \begin{align}
      \label{eq:App:hatlgsd_definition_main_document}
      \hatlgsdM[p]{\LGp}{\omega}{m} \defeq 1 +       \sumss{h=1}{m}
      \lambdazM{h}{m}\cdot \hatlgacrb[p]{\LGpd}{h}{\bmbzh{h}}
      \cdot \ez[+2\pi i\omega h]{} +       \sumss{h=1}{m}
      \lambdazM{h}{m}\cdot \hatlgacrb[p]{\LGp}{h}{\bmbzh{h}} \cdot
      \ez[-2\pi i\omega h]{}.
    \end{align}
  \end{enumerate}
\end{algorithm}

{
  \label{pR1MC3}
  The presence of the kernel $\Kh[\bm{w}-\LGp]{\bm{b}}$ in
  \cref{eq:locally_weighted_Kullback_Leibler_intro} implies that
  {small sample effects} can occur when the local Gaussian spectrum
  $\lgsd[p]{\LGp}{\omega}$ is estimated for some combinations of
  points $\LGp$ and bandwidths $\bm{b}$ --- and this can in particular
  be an issue if the points lie in the low density regions
  corresponding to the tails of our distribution.  Roughly speaking:
  When the bandwidth $\bm{b}$ becomes \enquote{too small}, then the
  estimated local Gaussian autocorrelations will have a tendency to
  approach either \enquote{$-1$} or \enquote{$+1$}, cf.\
  \cref{app:Bandwidth_sensitivity} --- and these estimates will then
  in general only reflect the random configuration of those lag-$h$
  pairs that happened to lie closest to the point $\LGp$.
  \Cref{sec:Examples_the_parameters} presents strategies that can be
  used in order to detect/avoid this issue, and additional details are
  presented in the Supplementary Material.

} 

The following result is an analogue to
\cref{eq:th:lgsd_reflection_property}
of \myref{th:lgsd_properties}{th:lgsd_real_when_reversible_time_series}
\begin{lemma}
  \label{th:lgsd_estimate_real_valued}
  When it is assumed that the sample $\TSR{\yz{t}}{t=1}{n}$ comes from
  a time reversible stochastic process $\TSR{\Yz{t}}{t\in\ZZ}{}$, the
  $m$-truncated estimate $\hatlgsdM[p]{\LGp}{\omega}{m}$ can for all
  points $\LGp\in\RRn{2}$ be written as
  \begin{align}
    \label{eq:hatlgsd_definition_simplified_main_document}
    \hatlgsdM[p]{\LGp}{\omega}{m} 
    &= 1 + 2\cdot \sumss{h=1}{m}
      \lambdazM{h}{m}\cdot \hatlgacrb[p]{\LGp}{h}{\bmbzh{h}}
      \cdot \cos(2\pi\omega h).
  \end{align}
  Moreover, \cref{eq:hatlgsd_definition_simplified_main_document} will
  always hold when the point $\LGp$ lies on the diagonal,
  i.e.\ \mbox{$\LGpdiagonal$}.
\end{lemma}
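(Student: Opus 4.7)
The plan is to deduce the stated simplification of $\hatlgsdM[p]{\LGp}{\omega}{m}$ directly from the folding representation in the defining \cref{eq:App:hatlgsd_definition_main_document} by establishing an estimator-level analogue of the diagonal folding identity used in the proof of \myref{th:lgsd_properties}{th:lgsd_real_when_reversible_time_series}. Concretely, I would show that under the time reversibility assumption the two families of estimates $\hatlgacrb[p]{\LGp}{h}{\bmbzh{h}}$ and $\hatlgacrb[p]{\LGpd}{h}{\bmbzh{h}}$ coincide for every lag \mbox{$h=1,\dotsc,m$}; once that is in hand the two sums in \cref{eq:App:hatlgsd_definition_main_document} can be combined via \mbox{$\ez[-2\pi i\omega h]{} + \ez[+2\pi i\omega h]{} = 2\cos(2\pi\omega h)$} and \cref{eq:hatlgsd_definition_simplified_main_document} follows.

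The first step is to note how the bivariate LGC estimator behaves under swapping the two coordinates. The local log-likelihood $\Lz{n}(\bmthetaz{})$ from \cref{eq:LN_intro} is built from the symmetric bivariate Gaussian density $\psi(\bm{w};\bmthetaz{})$ and a product kernel $\Kz{\bm{b}}$; both are invariant under simultaneously swapping the two coordinates of the data pair and the two coordinates of the evaluation point (the $\mu$'s and $\sigma$'s are interchanged while $\rho$ is unchanged). Consequently, the $\rho$-component of the maximiser satisfies the identity that evaluating at~$\LGpd$ with the pairs $\parenR{\widehatzz{t+h},\widehatzz{t}}$ produces exactly the same value as evaluating at~$\LGp$ with the reversed pairs $\parenR{\widehatzz{t},\widehatzz{t+h}}$. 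This is the direct empirical counterpart of \cref{th:diagonal_folding_property} used in the proof of \cref{th:lgsd_properties}.

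The second and crucial step is to invoke the time reversibility hypothesis of \cref{def:Yt_reversible} to argue that, at the level of the estimator, using the pairs $\parenR{\widehatzz{t},\widehatzz{t+h}}$ gives the same value as using $\parenR{\widehatzz{t+h},\widehatzz{t}}$. Since the marginal transformation $\Phiz[-1]{}\!\circ\widehatGz{\!n}$ is applied coordinate-wise, the normalised sample inherits the reversibility of $\TSR{\Yz{t}}{t\in\ZZ}{}$, so the collection of pairs $\TSR{\parenR{\widehatzz{t+h},\widehatzz{t}}}{t=1}{n-h}$ and its reverse $\TSR{\parenR{\widehatzz{t},\widehatzz{t+h}}}{t=1}{n-h}$ enter the local-likelihood symmetrically; combining this with the swap identity from the first step yields \mbox{$\hatlgacrb[p]{\LGpd}{h}{\bmbzh{h}} = \hatlgacrb[p]{\LGp}{h}{\bmbzh{h}}$}. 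This is the main obstacle of the proof: one must be careful that ``time reversibility'' translates into an exact equality of finite-sample $M$-estimators rather than a merely distributional statement, and the argument therefore relies on the concrete symmetry of the penalty $\Qz{n}(\bmthetaz{})$ in \cref{eq:QhN_paper} together with the reversibility of the $\widehatzz{t}$'s, exactly as in the theoretical proof of \myref{th:lgsd_properties}{th:lgsd_real_when_reversible_time_series}.

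Finally, for the ``moreover'' clause, I would observe that when $\LGpdiagonal$ one has $\LGp=\LGpd$, so the two estimates $\hatlgacrb[p]{\LGp}{h}{\bmbzh{h}}$ and $\hatlgacrb[p]{\LGpd}{h}{\bmbzh{h}}$ are literally the same object and no reversibility assumption is needed to merge the two sums; the cosine representation \cref{eq:hatlgsd_definition_simplified_main_document} then follows at once from \cref{eq:App:hatlgsd_definition_main_document}. This mirrors how \myref{th:lgsd_properties}{th:lgsd_real_on_diagonal} is obtained in the theoretical case.
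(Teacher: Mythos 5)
Your first step (the coordinate-swap symmetry of the local likelihood, giving that the estimate at $\LGpd$ from the pairs $\parenR{\widehatzz{t+h},\widehatzz{t}}$ equals the estimate at $\LGp$ from the reversed pairs $\parenR{\widehatzz{t},\widehatzz{t+h}}$) is sound, as is the \enquote{moreover} clause: when $\LGpdiagonal$ the two estimates in \cref{eq:App:hatlgsd_definition_main_document} are literally the same number and the cosine form follows exactly. The gap is in your second step. Time reversibility (\cref{def:Yt_reversible}) is a statement about the joint \emph{distributions} of the process; it does not imply that a single realised sample path is symmetric under time reversal. For a given sample, the multiset of lag~$h$ pairs $\TSR{\parenR{\widehatzz{t+h},\widehatzz{t}}}{t=1}{n-h}$ is not equal to its diagonal reflection $\TSR{\parenR{\widehatzz{t},\widehatzz{t+h}}}{t=1}{n-h}$ — that would require the empirical scatter to be exactly symmetric about the diagonal, which fails almost surely for continuous data. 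Hence the two $M$-estimators $\hatlgacrb[p]{\LGp}{h}{\bmbzh{h}}$ and $\hatlgacrb[p]{\LGpd}{h}{\bmbzh{h}}$ do \emph{not} coincide exactly in finite samples; they coincide only in distribution (and asymptotically, since both target $\lgacr[p]{\LGp}{h}=\lgacr[p]{\LGpd}{h}$ by \cref{th:diagonal_folding_property} and \cref{th:Yt_time_reversible}). You flag this as the main obstacle and then assert it away by appealing to \enquote{the reversibility of the $\widehatzz{t}$'s}, but reversibility of the realised normalised sample is not something the hypothesis gives you.

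The intended reading of the lemma is weaker than what you try to prove: under the time-reversibility assumption the population identity $\lgacr[p]{\LGpd}{h}=\lgacr[p]{\LGp}{h}$ licenses one to \emph{adopt} \cref{eq:hatlgsd_definition_simplified_main_document} as the form of the $m$-truncated estimator (it estimates the same real-valued target $\lgsd[p]{\LGp}{\omega}$ of \cref{eq:th:lgsd_reflection_property}), which is why the proof in the paper simply cites \cref{th:lgsd_real_when_reversible_time_series,th:lgsd_real_on_diagonal} of \cref{th:lgsd_properties} rather than arguing at the level of the finite-sample likelihood. To repair your argument, replace the claimed exact equality with either this convention, or with the statement that the two estimators are equal in distribution and asymptotically equivalent. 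A further minor point: your swap identity in the first step also tacitly uses $\bz{h1}=\bz{h2}$ in the product kernel (otherwise the bandwidths must be interchanged along with the coordinates); this holds for the configurations used in the paper but should be stated.
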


\begin{proof}
  This follows from
  \cref{th:lgsd_real_when_reversible_time_series,th:lgsd_real_on_diagonal}
  of \cref{th:lgsd_properties}.
      \end{proof}

  \label{note:pseudo_normalisation}
  The estimated $\widehatGz{\!n}$ in
  \myref{def:lgsd_estimator}{def:lgsd_estimator_pseudo_normalisation}
  can e.g.\ be the rescaled empirical cumulative distribution function
  created from the sample $\TSR{\yz{t}}{t=1}{n}$
  (\label{pR2bp1_empirical_marginal}
  which transforms original data into ranks divided by $n+1$), or it
  could be based on some logspline technique like the one implemented
  in \citet{otneim2017locally}.

  \label{note:bandwidths_asymptotically_equivalent}
  The bandwidths \mbox{$\bmbzh{h} = \parenR{\bz{h1},\bz{h2}}$} in
  \myref{def:lgsd_estimator}{def:lgsd_estimator_pseudo_normalisation}
  does not need to be equal for all the lags~$h$ when an estimate
  $\hatlgsdM[p]{\LGp}{\omega}{m}$ is computed.  For the asymptotic
  investigation it is sufficient to require that $\bz{h1}$ and
  $\bz{h2}$ approach zero at the same rate, i.e.\ that there exists
  \mbox{$\bm{b}=\parenR{\bz{1},\bz{2}}$} such that
  \mbox{$\bz{hi}\asymp\bz{i}$} for \mbox{$i=1,2$} and
  for all~$h$ (that is to say, \mbox{$\lim \bz{hi}/\bz{i} = 1$}).

  \label{note:asympotitic_effect_of_pseudo_normalisation}
  The asymptotic theory for $\hatlgacrb[p]{\LGp}{h}{\bmbzh{h}}$, given
  that the required regularity conditions are satisfied, follows when
  the original argument from \citet{Tjostheim201333} is combined with
  the argument in \citet{otneim2017locally}.  The analysis in
  \citet{Tjostheim201333} considered the general case where the
  original observations $\TSR{\yz{t}}{t=1}{n}$ were used instead of
  the normalised observations
  $\TSR{\zz{t} \defeq \Phiz[-1]{}\parenR{G(\yz{t})}}{t=1}{n}$.  Since
  the cumulative density function $G$ in general will be unknown, the
  present asymptotic analysis must work with the pseudo-normalised
  observations $\TSR{\widehatzz{t}}{t=1}{n}$, which makes it necessary
  to take into account the difference between the true normalised
  values $\zz{t}$ and the estimated pseudo-normalised values
  $\widehatzz{t}$.  The analysis in \citet{otneim2017locally} implies
  that $\widehatGz{\!n}\!\parenR{\yz{t}}$ approaches
  $G\!\parenR{\yz{t}}$ at a faster rate than the rate of convergence
  for the estimated local Gaussian correlation, so (under some
  regularity conditions) the convergence rate of
  $\hatlgacrb[p]{\LGp}{h}{\bmbzh{h}}$ will thus not be affected by the
  distinction between $\zz{t}$ and $\widehatzz{t}$.  The present
  analysis will not duplicate the arguments related to this
  distinction, and the interested reader should consult
  \citet[Section~3]{otneim2017locally} for the details.

  The bias-variance balance for the estimates
  $\hatlgsdM[p]{\LGp}{\omega}{m}$ must consider the size of $m$
  relative to both $n$ and the bandwidths $\TSR{\bmbzh{h}}{h=1}{m}$,
  i.e.\ the kernel function reduces the number of observations that
  effectively contributes to the computations of the estimates~--- and
  that number of effective contributors can also depend on the
  location of the point $\LGp$, i.e.\ whether the point $\LGp$ lies at
  the center or in the periphery of the pseudo-normalised observations
  $\TSR{\parenR{\widehatzz{t+h},\widehatzz{t}}}{t=1}{n-h}$.  Confer
  \cref{sec:estimation_aspect_for_the_given_parameter_configuration}
  for further details.

\makeatletter{}

\Cref{fig:dmbp_original_normalised} shows the effect of the
pseudo-normalisation on the \texttt{dmbp}
example\footnote{\label{footnote:dmbp} This is the
  Deutschemark/British pound Exchange Rate (\texttt{dmbp}) data from
  \citet{bollerslev96:_period}, which is a common benchmark data set
  for GARCH-type models, and as such models are among the motivating
  factors for the study of the local Gaussian spectral density, it
  seems natural to test the method on \texttt{dmbp}.  The data plotted
  here was found in the \Rpackage \Rref{rugarch}, see
  \citet{ghalanos15_rugarch}, where the following description was
  given: \enquote{The daily percentage nominal returns computed as
    \mbox{$100\left[\ln\left(P_t\right) -
        \ln\left(P_t-1\right)\right]$}, where $P_t$ is the bilateral
    Deutschemark/British pound rate constructed from the corresponding
    U.S.\ dollar rates.}  } that will be discussed in
\cref{sec:Real_data}.  The uppermost part shows the original
\texttt{dmbp}-series (of length~1974) whereas the lowermost part shows
the pseudo-normalised transformation of it, and it is clear that the
shape of the pseudo-normalised version resembles the shape of the
original version.

\begin{knitrout}\footnotesize
\definecolor{shadecolor}{rgb}{0.969, 0.969, 0.969}\color{fgcolor}\begin{figure}[h]

{\centering \includegraphics[width=1\linewidth]{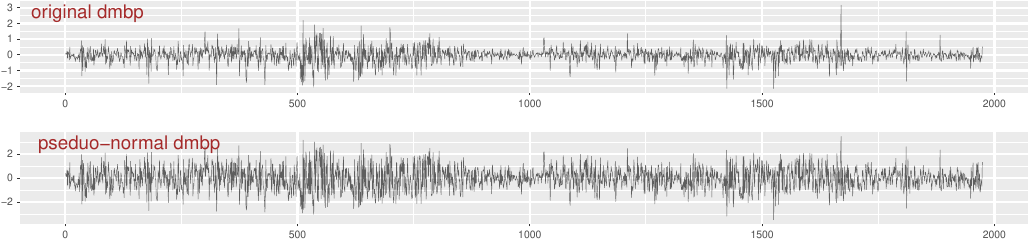} 

}

\caption[\texttt{dmbp}, original version and pseudo-normalised version]{\texttt{dmbp}, original version and pseudo-normalised version.}\label{fig:dmbp_original_normalised}
\end{figure}

\end{knitrout}

\makeatletter{}
\subsection{Asymptotic theory for $\hatlgsdM[p]{\LGp}{\omega}{m}$}
\label{sec:Asymptotic_theory}

This section presents asymptotic results for the cases where
$\hatlgsdM[p]{\LGp}{\omega}{m}$ are real-valued functions.  Note that
both assumptions and results are stated relative to the original
observations instead of the pseudo-normalised observations.  This
simplification does not affect the final convergence rates (see
earlier remarks, page
\pageref{note:asympotitic_effect_of_pseudo_normalisation},
for details)
and it makes the analysis easier.  The requirement that the LGSD
should be defined relative to the normalised observations is due to
computational issues, and the theoretical investigation shows that it
could just as well have been phrased in terms of the
original~observations.

\subsubsection{A definition and an assumption for $\Yz{t}$}
\label{app:assumptions_upon_Yt}

The assumption to be imposed on the univariate time series
$\TSR{\Yz{t}}{t\in\ZZ}{}$ is given in terms of components related to
the bivariate lag-$h$-pairs that can be constructed from it.  The
theoretical analysis of $\hatlgsdM[p]{\LGp}{\omega}{m}$ also requires
that \mbox{$(m+1)$-variate} pairs are considered.  Note that
\cref{def:lgsd_esitimator_folded} of \cref{def:lgsd_estimator} implies
that it is sufficient to only consider positive values
for~$h$.

\begin{definition}
  \label{def:Yht_and_YMt}
  For a strictly stationary univariate time series
  $\TSR{\Yz{t}}{t\in\ZZ}{}$, with \mbox{$h\geq1$} and \mbox{$m\geq2$},
  define bivariate and \mbox{$(m+1)$-variate} time~series as~follows,
  \begin{align}
    \label{eq:Yht_and_YMt}
    \Yht{h}{t} \defeq \Vector[\prime]{\Yz{t+h}, \Yz{t}},  \qquad
    \YMt{m}{t} \defeq \Vector[\prime]{\Yz{t+m}, \dotsc, \Yz{t}},
  \end{align}
  and let $\gh[\yh{h}]{h}$ and $\gM[\yM{m}]{m}$ denote the respective
  probability density functions.
\end{definition}

  \label{note:details_for_Yt_definition}
  The bivariate densities $\gh{h}$ can all be obtained from the
  \mbox{$(m+1)$} variate density $\gM{m}$ by integrating out the $m-1$
  redundant marginals, which in particular implies that if an
  \mbox{$(m+1)$-variate} function
  \mbox{$\tildeetahi[\yM{m}]{h}{} :\RRn{m+1}\rightarrow\RRn{1}$} is
  the \textit{obvious extension}\footnote{
    Consider the function to be a constant with respect to all the new
    variables that are introduced.}  of a bivariate function
  \mbox{$\etahi[\yh{h}]{h}{} :\RRn{2}\rightarrow\RRn{1}$}, then
  \begin{align}
    \label{eq:expectation_hierarchy}
    \E{\etahi[\Yht{h}{t}]{h}{}} =
    \E{\tildeetahi[\YMt{m}{t}]{h}{}},     \qquad \text{for } h \in \parenC{1,\dotsc,m}.
  \end{align}

  With the notation from \cref{def:Yht_and_YMt} the following
  \cref{assumption_Yt} can now be imposed on $\Yz{t}$.  Note that
  \cref{assumption:gh_bh0,assumption:gh_b0_infimum_of_bh0,assumption:h_x:y_x:y:z_finite_expectations}
  of \cref{assumption_Yt} contain references to definitions that
  first are given explicitly in \cref{sec:Technical_Results} in the
  Supplementary Material; these definitions are related to an
  \mbox{$(m+1)$}-variate penalty function for the time series
  $\YMt{m}{t}$\! ~--- and they are quite technical so it would impede
  the flow of the paper to include all the details here.  For the
  present section, it is sufficient to know that the new
  \mbox{$(m+1)$}-variate function can be expressed as a sum of $m$
  bivariate penalty-functions of the form given in
  \cref{eq:QhN_paper}.

  The key idea is that $\bmWz{t}$ and $g(\bm{w})$ in
  \cref{eq:locally_weighted_Kullback_Leibler_intro,eq:locally_weighted_Kullback_Leibler_equation_intro,eq:theta0_limit,eq:W_bivariate_Gaussian,eq:LN_intro,eq:QhN_paper,eq:QhN_derivatives_paper}
  are replaced with $\Yht{h}{t}$ and $\gh[\yh{h}]{h}$, which implies
  that an additional index $h$ must be added in order to keep track of
  the bookkeeping.  In particular, an inspection of
  \cref{eq:QhN_derivatives_paper} motivates the introduction of a
  random variable vector
  $\bmXz{h:t} =
  \Kz{\bm{b}}\!\parenR{\Yht{h}{t}-\LGp}\bm{u}\!\parenR{\Yht{h}{t};\bmthetaz{}}$,
  and the random variables $\XNht{n}{hq}{i}$ that occur in
  \myref{assumption_Yt}{assumption:h_x:y_x:y:z_finite_expectations}
  are the components of $\sqrt{\bz{1}\bz{2}}\bmXz{h:t}$.  Furthermore,
  notice that different combinations of the indices $h,i,j$ and $k$ in
  the product $\XNht{n}{hq}{i}\cdot\XNht{n}{jr}{k}$ implies that it
  can contain from two to four different terms of the time series
  $\TSR{\Yz{t}}{t\in\ZZ}{}$, so the corresponding density function can
  thus either be bi-, tri- or tetravariate.  The indices
  $q,r=1,\dotsc,5$ keep track of the appropriate derivatives of the
  5-dimensional parameter vector $\bm{\theta}$.  See
  \cref{def:Uh,def:Xht_*Lc} for details.

\begin{assumption}
  \label{assumption_Yt}
      The univariate process $\TSR{\Yz{t}}{t\in\ZZ}{}$ will be assumed to
  satisfy the following properties, with \mbox{$\LGp=\LGpoint$} in
  \cref{assumption:gh_differentiable_at_LGp} the point at which the
  estimate $\hatlgsdM[p]{\LGp}{\omega}{m}$ of $\lgsd[p]{\LGp}{\omega}$
  is to be computed.
  \begin{enumerate}[label=(\alph*)]
  \item     \label{assumption_Yt_strictly_stationary}
        $\TSR{\Yz{t}}{t\in\ZZ}{}$ is strictly stationary.
      \item     \label{assumption_Yt_strong_mixing}
        $\TSR{\Yz{t}}{t\in\ZZ}{}$ is strongly mixing, with
    mixing coefficient $\alpha(j)$ satisfying
    \begin{align}
      \label{eq:alpha_requirement}
      \sumss{j=1}{\infty} \jz[a]{}       \parenSz[1-2/\nu]{}{\alpha(j)} < \infty \qquad       \text{for some $\nu>2$ and $a > 1 - 2/\nu$}.
    \end{align}
      \item     \label{assumption_Yt_finite_variance}
        $\Var{\Yz{t}} < \infty$.
      \end{enumerate}
      
              The bivariate density functions $\gh[\yh{h}]{h}$ of 
  the lag~$h$ pairs $\Yht{h}{t}$ of the univariate time series
  $\TSR{\Yz{t}}{t\in\ZZ}{}$, must satisfy the following requirements
  for a given point \mbox{$\LGp=\LGpoint$}.
    \begin{enumerate}[label=(\alph*),resume]
  \item     \label{assumption:gh_differentiable_at_LGp}
        $\gh[\yh{h}]{h}$ is differentiable at $\LGp$, such that 
    Taylor's theorem can be used to write $\gh[\yh{h}]{h}$~as
    \begin{align}
      \label{eq_assumption:gh_differentiable_at_LGp}
      \gh[\yh{h}]{h}       & =         \gh[\LGp]{h} +         \power[\prime]{}{\bmmathfrakgz{h}(\LGp)}         \left[\yh{h}-\LGp \right] +         \power[\prime]{}{\bmmathfrakRz{h}(\yh{h})}         \left[\yh{h}-\LGp \right], \\
      \text{where } \bmmathfrakgz{h}(\LGp)       \nonumber
      &=         \Vector[\prime]{        \left.\tfrac{\partial}{\partial
        \yz{h}} \gh[\yh{h}]{h}\right|_{\yh{h} =\, \LGp},
        \left.\tfrac{\partial}{\partial \yz{0}} \gh[\yh{h}]{h}\right|_{\yh{h} =\, \LGp}}         \text{ and }         \lim_{\yh{h}\longrightarrow\, \LGp}         \bmmathfrakRz{h}(\yh{h})         = \bm{0},
                                                            \end{align}
    and the same requirement must also hold for the diagonally
    reflected point \mbox{$\LGpd=\parenR{\LGpi{2},\LGpi{1}}$}.
      \item     \label{assumption:gh_bh0}
        There exists a bandwidth $\bmbz{h0}$ such that there
    for every \mbox{$\bm{0} < \bm{b} < \bmbz{h0}$} is a
    unique minimiser $\thetahb{h}{b}$ of the penalty
    function~$\qh{h}{b}$ defined in \cref{eq:penalty_qh},
    which is obtained from
    \cref{eq:locally_weighted_Kullback_Leibler_intro} by putting
    $\bm{w}= \yh{h}$.
  \item     \label{assumption:gh_b0_infimum_of_bh0}
    The collection of bandwidths
    $\TSR{\bmbz{h0}}{h\in\ZZ}{}$ has a positive
    infimum, i.e.\ there exists a $\bmbz{0}$ such that
    $\bm{0} < \bmbz{0}       \defeq \inf_{h\in\ZZ} \bmbz{h0}$,
    which implies that this $\bmbz{0}$ can be used
    simultaneously for all the lags.
  \item     \label{assumption:h_x:y_x:y:z_finite_expectations}
    For $\XNht{n}{hq}{i}$ from \cref{def:Xht_*Lc},
    the bivariate, trivariate and tetravariate
    density functions must be such that the expectations
    $\E{\XNht{n}{hq}{i}}$, $\E{\absp[\nu]{\XNht{n}{hq}{i}}}$ and
    $\E{\XNht{n}{hq}{i}\cdot\XNht{n}{jr}{k}}$ all are
    finite.
  \end{enumerate}
\end{assumption}

These assumptions on $\Yz{t}$
are extensions of those used for the LGC-case in
\citet{Tjostheim201333}.
\Myref{assumption_Yt}{assumption_Yt_strong_mixing} is a bit more
general than the one used in \citet{Tjostheim201333}, but that is not
a problem since the arguments given there trivially extends to
the present case.

The $\alpha$-mixing requirement in
\cref{assumption_Yt_strong_mixing} ensures that $\Yz{t+h}$
and~$\Yz{t}$ will be asymptotically independent as $\hlimit$, i.e.\
the bivariate density functions $\gh[\yh{h}]{h}$ will for large
lags~$h$ approach the product of the marginal densities, and the
situation will thus stabilise when $h$ is large enough.  This is in
particular of importance for \cref{assumption:gh_b0_infimum_of_bh0},
since it implies that it will be possible to find a nonzero
$\bmbz{0}$ that works for all~$h$.

{
  {\label{PR1MC1}
    We do not consider the $\alpha$-mixing condition to be very
    strong.  In particular, note that GARCH type models, which are
    frequently used in econometrics, and also in the present paper,
    cf.\ \cref{sec:GARCH_model}, are $\beta$-mixing under weak
    conditions, see e.g.\ \citet{carrasco_chen_2002}; and $\beta$-mixing implies
    $\alpha$-mixing.}
}

The finiteness requirements in
\myref{assumption_Yt}{assumption:h_x:y_x:y:z_finite_expectations}
will be trivially satisfied if the densities are bounded, i.e.\ they
will then be consequences of properties of the kernel function
$\Kz{\bm{b}}$ and the score function of the bivariate Gaussian
distribution, see \cref{th:integrals_kernel_and_score_components}
for details.

\subsubsection{An assumption for $\Yz{t}$ and the
  score function $\bm{u}(\bm{w};\bm{\theta})$ of
  $\psi(\bm{w};\bm{\theta})$}
\label{sec:assumption_score_function}

The score function in
\cref{eq:locally_weighted_Kullback_Leibler_equation_intro}, i.e.\
\mbox{$\bm{u}(\bm{w};\bm{\theta}) \defeq \tfrac{\partial}{\partial
    \bm{\theta}} \log\left(\psi(\bm{w};\bm{\theta}) \right)$}, plays a
central role in the local density-estimation approach of
\citet{hjort96:_local}, and it 
also plays a pivotal role in the local Gaussian correlation theory
developed in \citet{Tjostheim201333}.

In particular, the convergence rate that in \citet{Tjostheim201333} is
given for ${\widehatbmthetaz{\LGp}-\bmthetaz{\LGp}}$ does implicitly
require that $\bm{u}(\LGp;\bmthetaz{\LGp})\neq\bm{0}$ in order for the
corresponding asymptotic covariance matrix to be well defined.  The
investigation of
$\parenR{\hatlgsdM[p]{\LGp}{\omega}{m} - \lgsd[p]{\LGp}{\omega}}$ in
this paper builds on the asymptotic results from
\citet{Tjostheim201333}, and the following assumption must 
be satisfied in order for the given convergence rates and asymptotic
variances to be valid.

\begin{assumption}
  \label{assumption_score_function}
  The collection of local Gaussian parameters
  $\TSR{\bmthetaz{\LGp}(h)}{}{}$ at the point $\LGp$ for the
  bivariate probability density functions $\gh[\yh{h}]{h}$, must all
  be such that
  \begin{enumerate}[label=(\alph*)]
  \item     \label{assumption_score_function_finite_h}
        $\bmuz{}(\LGp;\bmthetaz{\LGp}(h))\neq\bm{0}$ for
    all finite $h$.
  \item     \label{assumption_score_function_limit_h}
        $\lim \bmuz{}(\LGp;\bmthetaz{\LGp}(h))\neq\bm{0}$.
  \end{enumerate}
\end{assumption}

  It is, for a given time series $\Yz{t}$ and a given point $\LGp$,
  possible to inspect the 5 equations in
  \mbox{$\bmuz{}(\bm{w};\bmthetaz{})=\bm{0}$} in order to see when
  \cref{assumption_score_function_finite_h,assumption_score_function_limit_h}
  of \cref{assumption_score_function} might fail.
    For the case of the asymptotic requirement in
  \cref{assumption_score_function_limit_h}, the key observation is
  that the strong mixing requirement from
  \myref{assumption_Yt}{assumption_Yt_strong_mixing} implies that
  $\Yz{t+h}$ and $\Yz{t}$ will become independent when $\hlimit$.
  Together with the assumption of normalised marginals, this implies
  that the limit of $\bmthetaz{\LGp}(h)$ always becomes
  $\Vector[\prime]{\muz{1},\muz{2},\sigmaz{1},\sigmaz{2},\rho} =
  \Vector[\prime]{0,0,1,1,0}$, which means that
  \myref{assumption_score_function}{assumption_score_function_limit_h}
  will fail for any point $\LGp$ that solves
  $\bmuz{}(\LGp;\Vector[\prime]{0,0,1,1,0})=0$.

\subsubsection{Assumptions for $n$, $m$ and $\bm{b}$}
\label{app:assumptions_upon_mNb}

For simplicity, the present analysis will use the
\mbox{$\bm{b}=\parenR{\bz{1},\bz{2}}$} introduced in
the second paragraph after \cref{th:lgsd_estimate_real_valued},
i.e.\ it
will be assumed that the individual bandwidths $\bmbzh{h}$ for the
different lags~$h$ approach zero at the same rate --- and that it for
the asymptotic investigation thus can be assumed that the same
bandwidth is used for all the lags.
\begin{assumption}
  \label{assumption_Nmb}
  Let \mbox{$m \defeq \mz{n} \rightarrow \infty$} be a sequence of
  integers denoting the number of lags to include, and let
  \mbox{$\bm{b} \defeq \bmbz{n} \rightarrow \bmzeroz[+]{}$} be the
  bandwidths used when estimating the local Gaussian correlations for
  the lags \mbox{$h = 1,\dotsc,m$} (based on $n$ observations).
      Let $\bz{1}$ and $\bz{2}$ refer to the two components of
  $\bm{b}$, and let $\alpha$, $\nu$ and~$a$ be as introduced in
  \myref{assumption_Yt}{assumption_Yt_strong_mixing}.  Let
  \mbox{$s\defeq \sz{n} \rightarrow\infty$} be a sequence of integers
  such that \mbox{$s=\oh{\sqrt{n\bz{1}\bz{2}/m}}$}, and let $\tau$ be
  a positive constant.
    The following requirements must be satisfied for these
  entities.\footnote{Notational convention: \enquote{$\vee$} denotes the maximum of two
    numbers, whereas \enquote{$\wedge$} denotes the minimum.} 
  \begin{enumerate}[label=(\alph*)]
  \item     \label{eq:assumption_N_and_b1b2}
        $\log n / n\! \parenRz[5]{}{\bz{1}\bz{2}} \longrightarrow 0$.,
  \item     \label{eq:assumption_Nb1b2/m}
        $n\bz{1}\bz{2}/m \longrightarrow \infty$.
  \item     \label{eq:assumption_m(b1 join b2)}
        $\mz[\delta]{}\!\left(\bz{1}\vee\bz{2}\right) \longrightarrow 0,     \text{ where }     \delta = 2 \vee \tfrac{\nu(a+1)}{\nu(a-1)-2}$.
                                          \item     \label{eq:assumption_alpha(s)_and_(N/b1b2)^1/2}
        $\sqrt{nm/\bz{1}\bz{2}}\cdot     \sz[\tau]{}\!\cdot\alpha(s-m+1) \longrightarrow \infty$.
  \item     \label{eq:assumption_m=o((Nb1b2)^tau/(2+5tau)-lambda)}
        $m = \oh{\parenRz[\tau/(2+5\tau)-\lambda]{}{n\bz{1}\bz{2}}},    \text{ for some } \lambda \in \left(0, \tau/(2+5\tau)\right)$.
  \item     \label{eq:assumption_m=o(s)}
        $m = \oh{s}$.
  \end{enumerate}
\end{assumption}
\Myref{assumption_Nmb}{eq:assumption_N_and_b1b2} is needed in order
for the asymptotic theory from \citet{Tjostheim201333} to be valid for
the estimates $\hatlgacr[5]{\LGp}{h}$.  See
\cref{th:block_sizes_for_main_result} for a verification of the
internal consistency of the requirements given in
\cref{assumption_Nmb}.  
\label{pR2bp2_number_of_points_in_the_window_main} 
The expected number of observations near $\LGp$ will for large $n$
and small $\bz{1}$ and $\bz{2}$ be of order
$n\bz{1}\bz{2}\cdot\gh[\LGp]{h}$ --- and this will, when
$\gh[\LGp]{h}>0$, go to infinity when $\nlimit$ and $\blimit$.  See
the end of \cref{pR2bp2_number_of_points_in_the_window_SM} for
further details.

\subsection{Convergence theorems for $\hatlgsdM[p]{\LGp}{\omega}{m}$}
\label{sec:lgsd_convergence_theorems}

\begin{theorem}[$\LGp$ on diagonal, i.e.\ $\LGpdiagonal$]
  \label{th:asymptotics_for_hatlgsd}
  The local Gaussian spectral density $\lgsd[p]{\LGp}{\omega}$ is a
  real valued function when the point $\LGp$ lies on the
  diagonal.
    Furthermore; when the univariate time series $\Yz{t}$ satisfies
  \cref{assumption_Yt,assumption_score_function}, and $n$, $m$ and
  \mbox{$\bm{b}=\parenR{\bz{1},\bz{2}}$} are as given in
  \cref{assumption_Nmb}, then the following asymptotic results holds
  for the \mbox{$m$-truncated} estimate
  $\hatlgsdM[p]{\LGp}{\omega}{m}$,
  \begin{align}
    \label{eq:th:asymptotics_for_hatlgsd}
    \sqrt{n \!\parenRz[3]{}{\bz{1}\bz{2}} \!/ m} \cdot     \left(\hatlgsdM[p]{\LGp}{\omega}{m} - \lgsd[p]{\LGp}{\omega} \right)
      \stackrel{\scriptscriptstyle d}{\longrightarrow}
    \UVN{\bm{0}}{\sigmaz[2]{\!\LGp}(\omega)},
  \end{align}
                        where the formula
  \begin{align}
    \label{eq:th:asymptotics_for_hatlgsd_variance}
    \sigmaz[2]{\!\LGp}(\omega) =     4 \lim_{\mlimit} \frac{1}{m} \sumss{h=1}{m}     \lambdazM[2]{h}{m} \cdot     \subp{\cos}{}{}{}{2} (2\pi\omega h)     \cdot \tildesigmaz[2]{\!\LGp}(h)
  \end{align}
  relates the variance $\sigmaz[2]{\!\LGp}(\omega)$ to the
  asymptotic variances $\tildesigmaz[2]{\!\LGp}(h)$ of
  $\sqrt{n \!\parenRz[3]{}{\bz{1}\bz{2}}}\cdot
  \parenR{\hatlgacrb[p]{\LGp}{h}{\bmbzh{h}} - \lgacr[p]{\LGp}{h}}$.
\end{theorem}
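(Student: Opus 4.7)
The plan is to decompose
\begin{align*}
 \hatlgsdM[p]{\LGp}{\omega}{m} - \lgsd[p]{\LGp}{\omega}
 = S_{n,m}(\omega) + B_{n,m}(\omega),
\end{align*}
where the stochastic part
\begin{align*}
 S_{n,m}(\omega) = 2\sum_{h=1}^{m}\lambdazM{h}{m}\cos(2\pi\omega h)\bigl[\hatlgacrb[p]{\LGp}{h}{\bmbzh{h}}-\lgacr[p]{\LGp}{h}\bigr]
\end{align*}
collects the LGC estimation errors and the deterministic part
\begin{align*}
 B_{n,m}(\omega) = 2\sum_{h=1}^{m}(\lambdazM{h}{m}-1)\cos(2\pi\omega h)\lgacr[p]{\LGp}{h} - 2\sum_{h=m+1}^{\infty}\cos(2\pi\omega h)\lgacr[p]{\LGp}{h}
\end{align*}
is the combined truncation and lag-window bias. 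First I would show that, under the summability imposed in \myref{def:LGSD_def_main_part}{def:LGSD_requirement} together with a mild decay hypothesis on $\lgacr[p]{\LGp}{h}$ and the Parzen-type condition $\lambdazM{h}{m}\to 1$ pointwise as $\mlimit$, the bias satisfies $B_{n,m}(\omega)=\op{\sqrt{m/(n(\bz{1}\bz{2})^3)}}$, so it is asymptotically negligible at the rate claimed in \cref{eq:th:asymptotics_for_hatlgsd}.

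For the stochastic part, I would invoke the Bahadur representation of the normalised LGC estimator, obtained by Taylor-expanding $\nablah{h}\Qz{n}(\hatlgthetab{}{h}{\bmbzh{h}})=\bm{0}$ around $\bmthetaz{0}(h)$ (i.e.\ the argument of \citet{Tjostheim201333}, augmented as in \citet{otneim2017locally} so that the pseudo-normalisation in \myref{def:lgsd_estimator}{def:lgsd_estimator_ecdf} is absorbed):
\begin{align*}
 \sqrt{n(\bz{1}\bz{2})^3}\bigl[\hatlgacrb[p]{\LGp}{h}{\bmbzh{h}}-\lgacr[p]{\LGp}{h}\bigr]
 = \bmez{\rho}^{\top}\Vz[-1]{h}\bigl\{-\nablah{h}\Qz{n}(\bmthetaz{0}(h))/\sqrt{n(\bz{1}\bz{2})^3}\bigr\} + \op{1},
\end{align*}
where $\bmez{\rho}$ extracts the $\rho$-coordinate and $\Vz{h}$ is the rescaled limiting Hessian. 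Plugging this into $S_{n,m}(\omega)$ and exchanging the order of summation writes
\begin{align*}
 \sqrt{n(\bz{1}\bz{2})^3/m}\;S_{n,m}(\omega) = \frac{1}{\sqrt{n}}\sum_{t=1}^{n} T_{n,t}(\omega) + \op{1},
\end{align*}
where $T_{n,t}(\omega)$ is a centred sum of $m$ kernel-weighted score contributions indexed by the lag $h$, built from the components of $\sqrt{\bz{1}\bz{2}}\,\bmXz{h:t}$ from \myref{assumption_Yt}{assumption:h_x:y_x:y:z_finite_expectations}.

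The core step is a central limit theorem for $n^{-1/2}\sum_{t=1}^{n}T_{n,t}(\omega)$ with $m\to\infty$. I would apply the Bernstein big-block/small-block decomposition to the $\alpha$-mixing sequence $(T_{n,t}(\omega))_{t=1}^{n}$, with the block length $s=\sz[]{n}$ from \cref{assumption_Nmb}: \myref{assumption_Nmb}{eq:assumption_alpha(s)_and_(N/b1b2)^1/2} controls the small-block and inter-big-block dependence, the $\nu$-th moment hypothesis in \myref{assumption_Yt}{assumption:h_x:y_x:y:z_finite_expectations} yields the Lyapunov condition, and \myref{assumption_Nmb}{eq:assumption_m=o((Nb1b2)^tau/(2+5tau)-lambda)} together with \myref{assumption_Nmb}{eq:assumption_m(b1 join b2)} ensures that the Lyapunov ratio vanishes even though $m\to\infty$ with $n$. \Cref{assumption_score_function} is needed at this point to keep the inverses $\Vz[-1]{h}$ well defined uniformly in $h$, including in the limit $\hlimit$ where $\bmthetaz{0}(h)$ approaches the independence parameters $\Vector[\prime]{0,0,1,1,0}$.

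Finally I would identify the limiting variance by computing $\E{T_{n,t}(\omega)^{2}}$ and its limit. The diagonal ($h=h'$) terms give exactly
\begin{align*}
 \frac{4}{m}\sum_{h=1}^{m}\lambdazM[2]{h}{m}\cos^2(2\pi\omega h)\cdot\tildesigmaz[2]{\!\LGp}(h),
\end{align*}
which converges to $\sigmaz[2]{\!\LGp}(\omega)$ in \cref{eq:th:asymptotics_for_hatlgsd_variance}. The off-diagonal ($h\neq h'$) terms involve the cross-expectations $\E{\XNht{n}{hq}{i}\cdot\XNht{n}{jr}{k}}$ from \myref{assumption_Yt}{assumption:h_x:y_x:y:z_finite_expectations}, which by the mixing hypothesis \cref{eq:alpha_requirement} factorise into products of bivariate expectations at a rate controlled by $\alpha(|h-h'|)$; averaging in $h,h'$ via $m^{-1}$ and summing by the Davydov inequality makes their contribution $\oh{1}$. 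The hard part, and the one I expect to consume the bulk of the technical work, will be precisely this uniform control of the cross-covariances: the kernel localisation means the effective sample size is only $n\bz{1}\bz{2}$, so the $1/\sqrt{n(\bz{1}\bz{2})^3}$ rate of each $\hatlgacrb[p]{\LGp}{h}{\bmbzh{h}}$ is already delicate, and when $m$ is allowed to grow with $n$ one must balance the mixing decay \cref{eq:alpha_requirement} against the growth allowances in \cref{assumption_Nmb} to guarantee that the tri- and tetravariate densities evaluated at $\LGp$ do not produce a spurious non-vanishing contribution — this is exactly what the choice of $\delta$ in \myref{assumption_Nmb}{eq:assumption_m(b1 join b2)} and the compatibility of $s$, $m$, and $n\bz{1}\bz{2}$ are engineered to provide.
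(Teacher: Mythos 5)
Your proposal runs on the same engine as the paper's proof but packages it differently. The paper first assembles a joint penalty function $\QMN{m}{n}=\sumss{h=1}{m}\tildeQhN{h}{n}$, verifies the four Klimko--Nelson conditions for it (the fourth via a Cram\'{e}r--Wold reduction followed by a Bernstein big-block/small-block CLT and a truncation argument), obtains a $5m$-variate normality result for the stacked parameter vector $\widehatbmthetaz{\LGp|\overbar{m}|\bm{b}}$ with block-diagonal covariance, and only then projects onto the linear functional $\bmLambdaz[\!\prime]{m}(\omega)\bmEz[\prime]{m}$ via \citet[Prop.~6.4.2]{Brockwell:1986:TST:17326}. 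You instead linearise each $\hatlgacrb[p]{\LGp}{h}{\bmbzh{h}}$ separately (a Bahadur representation) and apply the block CLT directly to the one linear combination you care about; this is in effect the single Cram\'{e}r--Wold direction the paper's argument specialises to, so the core computations (Davydov bounds, Lyapunov/Lindeberg verification, identification of the diagonal variance terms) coincide. One genuine difference in your favour: you isolate the deterministic bias $B_{n,m}(\omega)=2\sum_{h}(\lambdazM{h}{m}-1)\cos(2\pi\omega h)\lgacr[p]{\LGp}{h}-2\sum_{h>m}\cos(2\pi\omega h)\lgacr[p]{\LGp}{h}$ explicitly, whereas the paper silently identifies $\lgsdM[p]{\LGp}{\omega}{m}$ with $\lgsd[p]{\LGp}{\omega}$; but note that making this term $\op{\sqrt{m/(n(\bz{1}\bz{2})^{3})}}$ requires a decay rate on $\lgacr[p]{\LGp}{h}$ that is \emph{not} among the paper's stated assumptions, so you are proving a slightly different (arguably more honest) statement.

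Two concrete points need repair. First, your mechanism for killing the off-diagonal ($h\neq h'$) variance contributions is wrong: at a fixed $t$ the two lag pairs $(\Yz{t+h},\Yz{t})$ and $(\Yz{t+h'},\Yz{t})$ \emph{share} the coordinate $\Yz{t}$, so $\alpha$-mixing in $|h-h'|$ gives you nothing there. What actually makes these terms vanish is the kernel localisation: the product of the two kernels centred at $\LGp$ forces a trivariate (or tetravariate) density to be integrated over a region whose extra dimension has width $\Oh{\bz{1}\wedge\bz{2}}$, giving $\Oh{\mz[2]{}(\bz{1}\vee\bz{2})}$ in total, which is exactly what \myref{assumption_Nmb}{eq:assumption_m(b1 join b2)} with $\delta\geq 2$ is designed to annihilate after the $1/m$ averaging. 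Mixing and Davydov enter only for the cross-$t$ covariances. Second, your lag-by-lag Bahadur expansion produces $m$ separate $\op{1}$ remainders; after summing over $h$ and dividing by $\sqrt{m}$ you need these remainders to be $\op{1}$ \emph{uniformly in $h$} (or $\op{m^{-1/2}}$ individually), which does not follow from the fixed-$h$ result of \citet{Tjostheim201333} when $\mlimit$. The paper sidesteps this by running the Klimko--Nelson machinery jointly on the full $5m$-dimensional parameter, so the Taylor-remainder control is handled once for the combined object rather than $m$ times.
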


\begin{proof}
  The proof is given in \cref{sec:proof_of_th:asymptotics_for_hatlgsd}.
\end{proof}

{
  \label{pR2bp2_1}
  \label{pR1MC6_2} 
  The variance $\sigmaz[2]{\!\LGp}(\omega)$ depends on all the
  bivariate density functions through the variances
  $\tildesigmaz[2]{\!\LGp}(h)$.  Moreover, it is clear from
  \cref{eq:th:asymptotics_for_hatlgsd_variance} that
  $\sigmaz[2]{\!\LGp}(\omega)$ as a function of the frequency $\omega$
  is symmetric around $\omega=\tfrac{1}{4}$, with its highest values
  when $\omega \in \parenC{0,\tfrac{1}{2}}$.  The same symmetry is not
  present for the variance of the $m$-truncated spectra
  $\hatlgsdM[p]{\LGp}{\omega}{m}$, and the variance of
  $\hatlgsdM[p]{\LGp}{\omega}{m}$ will have its highest value when
  $\omega=0$, cf.\ \cref{app:finite_sample_and_variance_LGSD} for
  details.  }

A similar result to \cref{th:asymptotics_for_hatlgsd} can be stated
for time reversible stochastic processes.
\begin{theorem}[$\Yz{t}$ time reversible]
  \label{th:asymptotics_for_hatlgsd_reversible}
  The local Gaussian spectral density $\lgsd[p]{\LGp}{\omega}$ is a
  real valued function for all points $\LGp$ when $\Yz{t}$ is time
  reversible (see \cref{def:Yt_reversible}).
  Furthermore under
  \cref{assumption_Yt,assumption_score_function,assumption_Nmb}, the
  same asymptotic results as stated in
  \cref{th:asymptotics_for_hatlgsd} holds for the \mbox{$m$-truncated}
  estimate $\hatlgsdM[p]{\LGp}{\omega}{m}$.
\end{theorem}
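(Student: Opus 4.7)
The plan is to deduce \cref{th:asymptotics_for_hatlgsd_reversible} from \cref{th:asymptotics_for_hatlgsd} by showing that time reversibility reduces the general point case to the same cosine-decomposition structure that was already treated on the diagonal. The first step is to verify the real-valued claim: by \myref{th:lgsd_properties}{th:lgsd_real_when_reversible_time_series} it suffices to observe that under \cref{def:Yt_reversible}, taking $n=2$ with $t_1=t+h$, $t_2=t$ gives that $(Y_{t+h},Y_t)$ and $(Y_{-t-h},Y_{-t})$ are equidistributed; by stationarity the latter is equidistributed with $(Y_t,Y_{t+h})$, so $\gh[\yh{h}]{h}(y_{t+h},y_t)=\gh[\yh{h}]{h}(y_t,y_{t+h})$. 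Consequently the local Gaussian approximation at $\LGp$ for $(Y_{t+h},Y_t)$ coincides with the one at $\LGpd$, giving $\lgacr[p]{\LGp}{h}=\lgacr[p]{\LGpd}{h}$, and combining with the diagonal folding property $\lgacr[p]{\LGp}{-h}=\lgacr[p]{\LGpd}{h}$ used in \cref{th:lgsd_properties} yields $\lgacr[p]{\LGp}{-h}=\lgacr[p]{\LGp}{h}$. Inserting this symmetry into \cref{eq:lgsd_definition_main_document} collapses the two-sided sum into the cosine series \cref{eq:th:lgsd_reflection_property}, which is real.

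Next I would pass to the estimator side. The key observation is that the same reasoning applies \emph{at the estimation step}: under time reversibility the pseudo-normalised pairs $(\widehat z_{t+h},\widehat z_t)$ and $(\widehat z_t,\widehat z_{t+h})$ have identical empirical behavior up to the asymptotic order relevant here, so $\hatlgacrb[p]{\LGp}{h}{\bmbzh{h}}$ and $\hatlgacrb[p]{\LGpd}{h}{\bmbzh{h}}$ differ only by lower-order terms that are absorbed in the $\op{}$ remainders arising in the proof of \cref{th:asymptotics_for_hatlgsd}. Substituting $\hatlgacrb[p]{\LGpd}{h}{\bmbzh{h}}$ by $\hatlgacrb[p]{\LGp}{h}{\bmbzh{h}}$ in \cref{eq:App:hatlgsd_definition_main_document} and using $e^{+2\pi i\omega h}+e^{-2\pi i\omega h}=2\cos(2\pi\omega h)$ produces exactly the representation \cref{eq:hatlgsd_definition_simplified_main_document}, which is the same real-valued cosine sum that underlies \cref{th:asymptotics_for_hatlgsd}.

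With both $\lgsd[p]{\LGp}{\omega}$ and $\hatlgsdM[p]{\LGp}{\omega}{m}$ now written in the identical cosine form, the argument in \cref{sec:proof_of_th:asymptotics_for_hatlgsd} applies verbatim: the asymptotic normality of $\sqrt{n(b_1b_2)^3}\cdot(\hatlgacrb[p]{\LGp}{h}{\bmbzh{h}}-\lgacr[p]{\LGp}{h})$ from \citet{Tjostheim201333}, together with a blocking argument controlled by \cref{assumption_Nmb} and a variance computation across lags using \myref{assumption_Yt}{assumption:h_x:y_x:y:z_finite_expectations}, delivers \cref{eq:th:asymptotics_for_hatlgsd} with the limiting variance \cref{eq:th:asymptotics_for_hatlgsd_variance}. \Cref{assumption_score_function} is invoked here exactly as in the diagonal case to guarantee that each $\tildesigmaz[2]{\!\LGp}(h)$ is well defined and nonzero.

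The main obstacle is the second paragraph: one must argue rigorously that the distributional symmetry $\gh[\yh{h}]{h}(y_{t+h},y_t)=\gh[\yh{h}]{h}(y_t,y_{t+h})$ induced by time reversibility transfers to the \emph{estimated} local Gaussian autocorrelations uniformly in $h\leq m$ as $\nlimit$, so that the replacement of $\hatlgacrb[p]{\LGpd}{h}{\bmbzh{h}}$ by $\hatlgacrb[p]{\LGp}{h}{\bmbzh{h}}$ contributes only an $\op{}$ term after summing over $h$. This requires a uniform-in-$h$ bound on the difference between the two estimators, which is where the bandwidth and truncation constraints of \cref{assumption_Nmb}, together with the score-function nondegeneracy of \cref{assumption_score_function}, must be used carefully to ensure that the remainder stays of smaller order than the $\sqrt{m/(n(b_1b_2)^3)}$ scaling on the left side of \cref{eq:th:asymptotics_for_hatlgsd}.
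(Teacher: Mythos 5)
Your first paragraph reproduces the paper's argument for real-valuedness (it is exactly \myref{th:lgsd_properties}{th:lgsd_real_when_reversible_time_series}, proved via \cref{th:Yt_time_reversible} and the diagonal folding property), and your third paragraph is the paper's entire proof: repeat the argument of \cref{sec:proof_of_th:asymptotics_for_hatlgsd} verbatim. Where you diverge is your second paragraph, and that is where the proposal breaks.

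The paper never needs a replacement argument, because \cref{th:lgsd_estimate_real_valued} already takes the estimator for a time-reversible process to be the symmetrised one-point cosine sum \cref{eq:hatlgsd_definition_simplified_main_document}, built only from $\hatlgacrb[p]{\LGp}{h}{\bmbzh{h}}$. You instead keep the general two-point estimator of \cref{def:lgsd_estimator} and claim that $\hatlgacrb[p]{\LGpd}{h}{\bmbzh{h}}$ may be swapped for $\hatlgacrb[p]{\LGp}{h}{\bmbzh{h}}$ at the cost of an $\op{1}$ remainder. That claim is false, and no uniform-in-$h$ bound can repair it, because the obstruction is distributional rather than a matter of remainder control. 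Under time reversibility the two estimates are identically distributed and consistent for the same population value, but they are not equal realisation by realisation: the kernels localise at the distinct points $\LGp$ and $\LGpd$, the cross-terms between the two kernels vanish as $\blimit$ (the bivariate case of \cref{th:integrals_kernel_and_score_components_LGP_and_LGpd}), and so the two estimates have asymptotically \emph{independent}, non-degenerate Gaussian fluctuations at the rate $\sqrt{n\parenRz[3]{}{\bz{1}\bz{2}}}$. Their difference is therefore of exactly the order of the fluctuations that drive the CLT, not of smaller order.

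Carried out correctly, your route does not land on \cref{th:asymptotics_for_hatlgsd} at all. The real part of the two-point estimator is $1+\sumss{h=1}{m}\lambdazM{h}{m}\cos(2\pi\omega h)\parenS{\hatlgacrb[p]{\LGp}{h}{\bmbzh{h}}+\hatlgacrb[p]{\LGpd}{h}{\bmbzh{h}}}$, whose limiting variance is $\lim m\inv\sumss{h=1}{m}\lambdazM[2]{h}{m}\subp{\cos}{}{}{}{2}(2\pi\omega h)\parenR{\tildesigmaz[2]{\!\LGp}(h)+\tildesigmaz[2]{\!\LGpd}(h)}$, which under the diagonal symmetry forced by time reversibility equals $\tfrac12\sigmaz[2]{\!\LGp}(\omega)$ — half the variance in \cref{eq:th:asymptotics_for_hatlgsd_variance} — and the imaginary part, proportional to the differences $\hatlgacrb[p]{\LGp}{h}{\bmbzh{h}}-\hatlgacrb[p]{\LGpd}{h}{\bmbzh{h}}$, is centred at zero but has non-vanishing variance as in \cref{th:asymptotics_for_hatlgsd_complex_valued}. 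So the two estimators genuinely differ in distribution, and the theorem as stated only concerns the cosine-form estimator. The fix is simply to adopt the paper's convention from \cref{th:lgsd_estimate_real_valued} at the outset, after which your third paragraph is the whole proof.
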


\begin{proof}
  \Myref{th:lgsd_properties}{th:lgsd_real_when_reversible_time_series}
  states that $\lgsd[p]{\LGp}{\omega}$ is a real-valued function, and
  the proof of \cref{th:asymptotics_for_hatlgsd} (see
  \cref{sec:proof_of_th:asymptotics_for_hatlgsd}) can then be repeated
  without any modifications.
\end{proof}

The asymptotic normality results in
\cref{th:asymptotics_for_hatlgsd,th:asymptotics_for_hatlgsd_reversible}
do not easily enable a computation of pointwise confidence intervals
for the estimated LGSD.  Thus, the pointwise confidence intervals
later on will either be estimated based on suitable quantiles obtained
by repeated sampling from a known distribution, or they will be based
on bootstrapping techniques for those cases where real data have been
investigated.  Confer \citet[ch.~7.2.5 and
7.2.6]{terasvirta2010modelling} for further details with regard to the
need for bootstrapping in such situations.
See also \citet{lacal2017local,lacal:_estim} for analytic results on the bootstrap and block
bootstrap in the case of estimation of the local Gaussian auto- and
cross-correlation functions.

The asymptotic result for $\hatlgsdM[p]{\LGp}{\omega}{m}$
complex-valued is given in
\cref{sec:lgsd_estimate_complex_valued_case}, where it can be seen
that
\mbox{$\sqrt{n \!\parenRz[3]{}{\bz{1}\bz{2}} \!/ m} \cdot     \left(\hatlgsdM[p]{\LGp}{\omega}{m} - \lgsd[p]{\LGp}{\omega}
  \right)$} then asymptotically approaches a complex-valued normal
distribution.

\section{Visualisations and interpretations}
\label{sec:Examples}

\label{pAE2_navigation} 
\label{pAE2_comparison} 
This section will show how different visualisations of the
\mbox{$m$-truncated} estimates $\hatlgsdM[p]{\LGp}{\omega}{m}$ can be
used to detect 
nonlinear dependency structures in a time series.  
Similar graphical methods can also be found in
\citet{li19:_quant_frequen_analy_spect_diver,BIRR2019122}, and the
heatmap-plot presented in this section is in particular inspired by
the one encountered in \citet{li19:_quant_frequen_analy_spect_diver}.

Technical details, and the description of the selected tuning
parameters of $\hatlgsdM[p]{\LGp}{\omega}{m}$, are given in
\cref{sec:Examples_the_parameters}.
\Cref{sec:estimation_aspect_for_the_given_parameter_configuration}
uses the aforementioned \texttt{dmbp}-data (see page
\pageref{footnote:dmbp}) to highlight how the different tuning
parameters of the estimation algorithm are interconnected.

A sanity test of the implemented estimation algorithm is presented in
\cref{sec:Some_simulations}, and it is there seen that
$\hatlgsdM[p]{\LGp}{\omega}{m}$ can detect local periodic structures
in an example where a heuristic argument enables the prediction of the
anticipated result.  
\Cref{sec:Real_data} applies the local Gaussian machinery to the
\texttt{dmbp}-data, and it also contains the results from a GARCH-type
model fitted to the \texttt{dmbp}-data.  
A comparison of the results from the original data and the fitted
model can reveal to what extent the internal dependency structure of
the fitted model actually reflects the dependency structure of the
original sample, and this might be of interest with regard to model
selection.

A few extreme examples have been included in the Supplementary
Material in order to investigate the limitations of this method.
\Cref{app:fig:trigonometric.C1.component} examine the detection of a
periodic component located far out in the tail of a large sample,
and \cref{sec:lgch:beware_of_global_structures} consider a situation
based on a deterministic function perturbed by very low random
fluctuations.

\subsection{The input parameters and some other technical details}
\label{sec:Examples_the_parameters}

Several tuning parameters must be selected in order to compute the
$m$-truncated local Gaussian spectral density estimates
$\hatlgsdM[p]{\LGp}{\omega}{m}$, and the values used for the
plots in this section are given below.  
Note that these parameters have been selected in order to provide a
\textit{proof of concept} for the fact that nonlinear dependency
structures 
can be detected by this approach, and the quest for \enquote{optimal
  parameters} is a topic for further work.  
The interested reader can consult \cref{app:sensitivity_analysis} in
the Supplementary Material for a sensitivity analysis of the different
tuning parameters.

\textbf{The pseudo-normalisation:} 
The initial step of the computation of $\hatlgsdM[p]{\LGp}{\omega}{m}$
is to replace the observations $\TSR{\yz{t}}{t=1}{n}$ with the
corresponding pseudo-normalised observations
$\TSR{\widehatzz{t}}{t=1}{n}$, cf.\ \cref{def:lgsd_estimator}, i.e.\
an estimate of the marginal cumulative density function $G$ is needed.
The present analysis has used the rescaled empirical cumulative
density function $\widehatGz{\!n}$ for this purpose, but the
computations could also have been based on a logspline-estimate
of~$G$.
A preliminary test revealed that the two normalisation procedures
created strikingly similar estimates of
$\hatlgsdM[p]{\LGp}{\omega}{m}$, so the computationally faster
approach based on the rescaled empirical cumulative density-function
has thus been applied for the present investigation.

\textbf{The length $n$ of the samples:} 
All samples have the same length as the  \texttt{dmbp}-data, i.e.\
$n=1974$.  
The estimation machinery produces similar results for shorter samples,
but it is important to keep in mind that too short samples might not
reveal the dependency structure of interest --- which in particular
might be an issue for the tails of the distribution.

\textbf{The points $\LGp$ of investigation:} 
Three diagonal points, with coordinates corresponding to the 10\%,
50\% and 90\% percentiles of the standard normal
distribution,\footnote{
  The corresponding coordinates are $(-1.28,-1.28)$, $(0,0)$ and
  $(1.28,1.28)$.} 
will be used in the basic plots in this section.  These points will
often be referred to as \textit{lower tail}, \textit{center} and
\textit{upper tail} when discussed in the text.  Confer
\cref{app:Bandwidth_sensitivity} for further details related to the
selection of $\LGp$, and see
\cref{fig:dmt_heatmap_and__levels_vs_norm} for a heatmap-based plot.

\textbf{The lag-window function $\lambdazM{h}{m}$:} 
The smoothing of the estimated local Gaussian autocorrelations, cf.\
\myref{def:lgsd_estimator}{def:lgsd_esitimator_folded}, was done by
the Tukey-Hanning lag-window kernel:
$\lambdazM{h}{m} = \tfrac{1}{2} \cdot \left(1 + \cos\left(\pi\cdot
    \tfrac{h}{m}\right) \right)$ for $|h| \leq m $,
$\lambdazM{h}{m} =0$ for $|h|>m$.

\textbf{The bandwidth $\bm{b}$:} 
The estimation of the local Gaussian autocorrelations requires the
selection of a bandwidth-vector $\bm{b} = \parenR{\bz{1},\bz{2}}$, and
the majority of the plots in this section have used
\mbox{$\bm{b}=(.5,.5)$}.  Note that it is natural to require
\mbox{$\bz{1}=\bz{2}$} since both of the components in the lag~$h$
pseudo-normalised pairs comes from the same univariate time series.
Further discussion of choice of bandwidth is given in
\cref{How.to.select.the.tuning.parameters?}

\textbf{The truncation level $m$:} 
The value \mbox{$m=10$} was used for the truncation level, since it
was possible to 
detect nonlinear dependency structures even for that low truncation
level.

\textbf{The number of replicates $R$:} 
The estimated values (means and 90\% pointwise confidence intervals)
have been based on $R=100$ replicates.  Simulations were used for the
cases with known parametric models, whereas a bootstrap based
resampling strategy were used for the real data example (cf.\
\cref{app:regarding_resampling} for the technical details).

\textbf{Numerical convergence:} 
The \Rpackage \Rref{localgauss}, see \citet{Berentsen:2014:ILR},
estimates the local Gaussian autocorrelations $\lgacr{\LGp}{h}$ and
returns them together with an attribute 
that reveals whether or not the estimation algorithm converged
numerically.  
The $m$-truncated estimates $\hatlgsdM[p]{\LGp}{\omega}{m}$ inherits
the convergence-attributes from the estimates
$\TSR{\hatlgacr{\LGp}{h}}{h=-m}{m}$, and either \enquote{\texttt{NC =
    OK}} or \enquote{\texttt{NC = FAIL}} will be added to the plot
depending on the convergence status.  Note that convergence-problems
hardly occurs when the computations are based on pseudo-normalised
observations.

\textbf{Reproducibility and interactive investigations:} 
All the examples in this paper can be reproduced by the scripts (see
\cref{app:data_details}) that are contained in the \Rpackage
\lgsdRpackage.  Note that the computations of
$\hatlgsdM[p]{\LGp}{\omega}{m}$ can be performed for a wide range of
tuning parameters, which allows an integrated interactive
investigation of the results by means of a
\Rref{shiny}-application.\footnote{\label{fotnote:shiny_details} See
  \citet{chang16_shiny} for details about \Rref{shiny}.}

\subsection{Estimation aspects for the given parameter configuration}
\label{sec:estimation_aspect_for_the_given_parameter_configuration}

The estimation of $\hatlgsdM[5]{\LGp}{\omega}{m}$ for a point
\mbox{$\LGp = \LGpoint$} that lies on the diagonal, i.e.\
\mbox{$\LGpdiagonal$}, will be based on the estimates of
$\hatlgacr[5]{\LGp}{h}$ for \mbox{$h=1,\dotsc,m$}, and it is thus of
interest to first investigate how
these estimates depend on the configuration of the tuning parameters
given in \cref{sec:Examples_the_parameters}.  This is most easily done
in terms of an example, and the pseudo-normalised \texttt{dmbp}-data
(of length~1974) will be used for this purpose.

First of all, note that the combination of point $\LGp$ and bandwidth
$\bm{b}$ influences how many of the $h$-lagged pairs that effectively
contribute to the computation of $\hatlgacr[5]{\LGp}{h}$.  This is
shown in \cref{fig:dmbp_pseudo_normalised_with_rectangles} for the
pseudo-normalised \texttt{dmbp}-data.
In the plot of the pseudo-normalised time series (top panel), the
three horizontal dashed lines represent the \textit{levels} which
corresponds to the coordinates of the three points~$\LGp$, whereas the
horizontal strips centered at those lines show which observations that
lie within a distance of \mbox{$b=0.5$} from the respective lines.
The three plots at the bottom shows the corresponding $1$-lagged
pairs, each with a \textit{bandwidth-square} (of width~$2b$) centered
at one of the selected points~$\LGp$.

\begin{knitrout}\footnotesize
\definecolor{shadecolor}{rgb}{0.969, 0.969, 0.969}\color{fgcolor}\begin{figure}[h]

{\centering \includegraphics[width=1\linewidth]{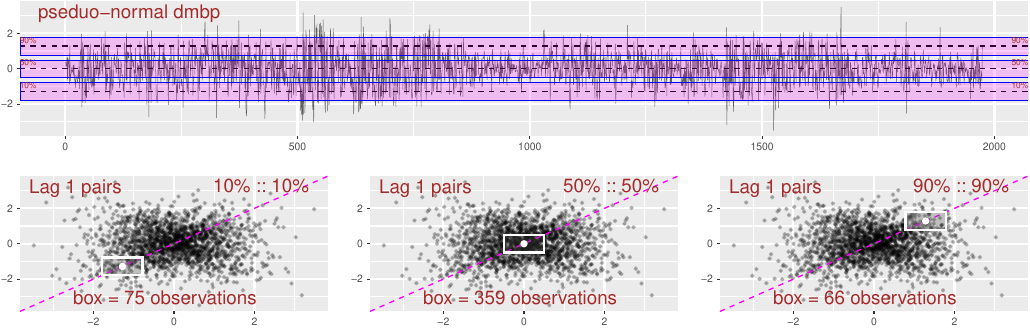} 

}

\caption[\texttt{dmbp} (pseudo-normalised version), \textit{levels}
and \textit{bandwidth-bands} (top) and \textit{lag~1
  bandwidth-squares} (bottom)]{\texttt{dmbp} (pseudo-normalised
  version), \textit{levels} and \textit{bandwidth-bands} (top) and
  \textit{lag~1 bandwidth-squares} (bottom).  Further details in the
  main text.}\label{fig:dmbp_pseudo_normalised_with_rectangles}
\end{figure}

\end{knitrout}

The estimates of $\lgacr{\LGp}{1}$ are based on the 1-lagged pairs
seen in the lower part of
\cref{fig:dmbp_pseudo_normalised_with_rectangles}, and these and
similar estimates for lags up to 200 (based on
\mbox{$\bm{b}=\parenR{0.5,0.5}$}) are shown in \cref{fig:dmbp_lag}.
An investigation of \cref{fig:dmbp_lag} shows how
$\hatlgacr[5]{\LGp}{h}$ varies for the three points of interest, and
there is a clear distinction between the center and the two tails.
Note that the bias-variance balance of the estimates
$\hatlgacr[5]{\LGp}{h}$ depends on the number of $h$-lagged pairs
that effectively contribute during the computation, and it is thus
clear that the variance will increase for points $\LGp$ that lie
farther out in the tails.  The selection of which tail-points to
investigate must thus take into account the number of available
observations for the lags to be included.

\begin{knitrout}\footnotesize
  \definecolor{shadecolor}{rgb}{0.969, 0.969, 0.969}\color{fgcolor}\begin{figure}[h]

    {\centering \includegraphics[width=1\linewidth]{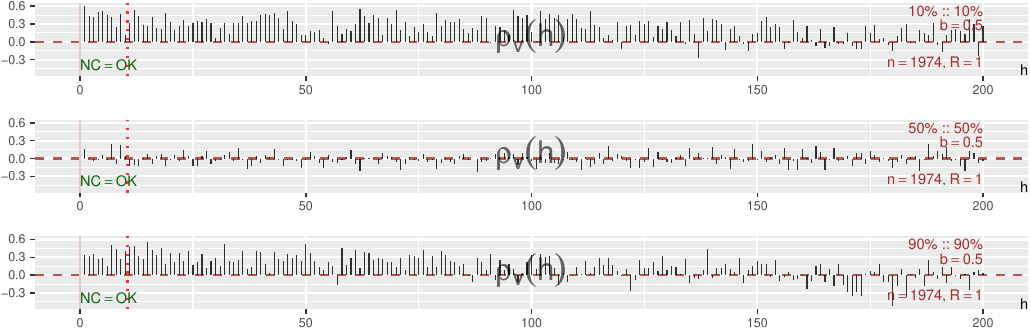} 

    }

    \caption[]{\texttt{dmbp}-data, $\hatlgacr[5]{\LGp}{h}$ for
      \mbox{$h=1,\dotsc,200$} (for the three points of interest).  The
      estimates for \mbox{$h=1,\dotsc,10$} will be used for
      $\hatlgsdM[5]{\LGp}{\omega}{m}$, cf.\ \cref{fig:dmbp}.  
    }\label{fig:dmbp_lag}
  \end{figure}

\end{knitrout}

The $\hatlgacr[5]{\LGp}{h}$ tends to fluctuate around~0 at the center,
which implies that the corresponding estimated spectral density
$\hatlgsdM[5]{\LGp}{\omega}{m}$ most likely will be rather flat and
close to~1.  For the two tails, it seems natural to assumme that some
long-range dependency must be present, and one might also suspect that
there is an asymmetry between the two tails.\footnote{A further
  investigation of this is easy when the \Rref{shiny}-application in
  the \Rpackage \lgsdRpackage\ is used, since it then is possible to
  immediately switch to an investigation of the corresponding
  spectra.}

Based on the impression from \cref{fig:dmbp_lag}, it might be a
connection between the global long-range dependence in the
\texttt{dmbp}-data and the local dependency structure in the tails
--- but note that the estimates in \cref{fig:dmbp_lag} are based on
the pseudo-normalised data, so the information from the marginal
distribution is not present here.  However, the same kind of
behaviour has been observed for pseudo-normalised samples from
different GARCH-type models, so the dependency structure of the
tails could be a significant contributor to the global long-range
dependency seen in time series models like ARCH and GARCH.

\subsection{Sanity testing the implemented estimation algorithm}
\label{sec:Some_simulations}

The purpose of this section is to check whether or not the
implemented estimation algorithm returns reasonable results for some
simulated examples.  It is only for the Gaussian case that the true
value of the local Gaussian spectral densities $\lgsd{\LGp}{\omega}$
are known, and it is thus important to specifically construct an
example where heuristic arguments enable the prediction of the
anticipated results.

The strategy used to create the plots for the simulated data works
as follows: First draw a given number of independent replicates from
the specified model, and compute $\hatlgsdM[5]{\LGp}{\omega}{m}$ and
$\widehatfz[m]{}(\omega)$ for each of the replicates.  Then extract
the mean of these estimates to get estimates of the true values of
$\lgsdM[5]{\LGp}{\omega}{m}$ and $\fz[m]{}(\omega)$, and select
suitable upper and lower percentiles of the estimates to produce an
estimate of the pointwise confidence intervals.

Note that the plots have been annotated with the following
information: The numerical convergence status {NC} in the lower left
corner; the truncation level $m$ in the upper left corner; the
percentiles of the point $\LGp$ of investigation, and the bandwidth
$\bm{b}$ in the upper right corner; the length $n$ and the number of
replicates $R$ in the lower right corner.

\subsubsection{Gaussian white noise}
\label{sec:Gaussian_white_noise}

The sanity testing of the implemented estimation algorithm starts with
the trivial case.
\Cref{fig:Gaussian_WN} shows the result when the estimation procedure
is used on 100 independent samples of length 1974 from a standard
normal distribution~$\UVN{0}{1}$.  The computations are based on the
bandwidth \mbox{$\bm{b}=\parenR{0.5,0.5}$}, and the points (on the
diagonal) corresponds to the $0.1$, $0.5$ and~$0.9$ quantiles of the
standard normal distribution.  The top left panel shows the
pseudo-normalised version of the first time series that was sampled
from the model, with dashed
lines at the levels that corresponds to the above mentioned points.
The three other panels contains information about the
\mbox{$m$-truncated} ordinary spectral density $\fz[m]{}(\omega)$ (red
part,\footnote{If you have a black and white copy of this paper, then
  read \enquote{red} as \enquote{dark} and \enquote{blue} as
  \enquote{light}.} the same for all the plots) and the
\mbox{$m$-truncated} local Gaussian spectral densities
$\lgsdM[5]{\LGp}{\omega}{m}$ for the three points under investigation
(blue part).

\begin{knitrout}\footnotesize
\definecolor{shadecolor}{rgb}{0.969, 0.969, 0.969}\color{fgcolor}\begin{figure}[h]

{\centering \includegraphics[width=1\linewidth]{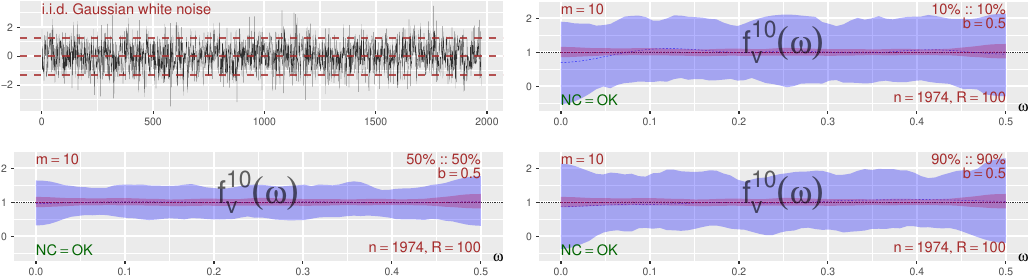} 

}

\caption[i.i.d]{i.i.d.\ Gaussian white noise, with global and local
  spectra for three points.}\label{fig:Gaussian_WN}
\end{figure}

\end{knitrout}

It can be seen from \cref{fig:Gaussian_WN} that the means
of the estimates
(the dashed lines at the center of the regions)
are good estimates of
$\fz[m]{}(\omega)$ and $\lgsdM[5]{\LGp}{\omega}{m}$,
which in this case in fact coincides with $f(\omega)$ and
$\lgsd[5]{\LGp}{\omega}$, i.e.\ it is known that the true values are
identical to~1 both for the local and global case.  Observe that the
estimated 90\% pointwise confidence intervals are wider for the local
Gaussian spectral densities, which is as expected since the bandwidth
used in the estimation of the local Gaussian autocorrelations reduces
the number of observations that effectively contributes to the
estimated values, and thus makes the estimates more prone to
small-sample variation.  Note also that the pointwise confidence
intervals are wider in the tails, which is a natural consequence of
the reduced number of points in those regions, cf.\ the discussion
related to \cref{fig:dmbp_pseudo_normalised_with_rectangles}.  The
width of these pointwise confidence intervals will decrease when the
bandwidth increases, cf.\ the discussion related to
\cref{fig:trigonometric_one_cosine}.

The estimation procedure gave good estimates of the true values
$f(\omega)$ and $\lgsd[5]{\LGp}{\omega}$ in the simple example of
\cref{fig:Gaussian_WN}, but it is important to keep in mind that these
plots actually shows estimates of $\fz[m]{}(\omega)$ and
$\lgsdM[5]{\LGp}{\omega}{m}$.  It might be necessary to apply a (much)
higher truncation level $m$ before $\fz[m]{}(\omega)$ and
$\lgsdM[5]{\LGp}{\omega}{m}$ gives decent approximations of the true
values $f(\omega)$ and $\lgsd[5]{\LGp}{\omega}$.  
However, for the task of interest in \cref{sec:Examples} it is
actually not a problem if the selected truncation level does not
give \enquote{optimal estimates} of $f(\omega)$ and
$\lgsd[5]{\LGp}{\omega}$ --- since the detection of nonlinear
dependency structures can be seen for a wide range of different
truncation levels.  The recommended approach is
to estimate $\hatlgsdM[5]{\LGp}{\omega}{m}$ for a range of possible
truncation levels~$m$, and then check if the shape of the estimates
for different truncations share the same properties with regard to the
position of any peaks and troughs.  The \Rpackage \lgsdRpackage\ is
designed in such a way that this is trivial to do.

\subsubsection{Some trigonometric examples}
\label{sec:Deterministic_trigonometric_models}

Beyond the realm of Gaussian time series, it is not known what the
true value for the local Gaussian spectral density actually should
be.  The sanity of the implemented estimation algorithm will thus be
tested by the means of an artificially constructed \textit{local
  trigonometric} time series, for which it at least can be reasonably
argued what the expected outcome should be for some specially
designated points~$\LGp$ (given a suitable bandwidth~$\bm{b}$).  These
artificial time series will not satisfy the requirements needed for
the asymptotic theory to hold true (as is also the case for standard
global spectral analysis), but they can still be used to show how an
exploratory tool based on the local Gaussian spectral density can
detect local periodic properties that the ordinary spectral density
fails to~detect.

As a prerequisite (and a reference) for the investigation of the local
trigonometric time series, it is prudent to first investigate the
result based on independent samples from a time series of the form
\mbox{$\Yz{t}=\cos\!\parenR{2\pi\alpha t + \varphi} + \wz{t}$}, where
$\wz{t}$ is Gaussian white noise with mean zero and standard
deviation~$\sigma$, and where it in addition is such that~$\alpha$ is
fixed for all the replicates, whereas the phase-adjustment $\varphi$
is randomly generated for each individual replicate.  A realisation
with \mbox{$\alpha=0.302$} and \mbox{$\sigma = 0.75$} is shown in
\cref{fig:trigonometric_one_cosine}, where the frequency $\alpha$ has
been indicated with a vertical line in order to show that both the
local and global approaches in this case have a peak at the expected
position.  The plots are based on 100 samples of length 1974, and
shows 90\% pointwise confidence intervals.  Some useful remarks can be
based on \cref{fig:trigonometric_one_cosine}, before \textit{the local
  trigonometric} case is defined and investigated.

\begin{knitrout}\footnotesize
\definecolor{shadecolor}{rgb}{0.969, 0.969, 0.969}\color{fgcolor}\begin{figure}[h]

{\centering \includegraphics[width=1\linewidth]{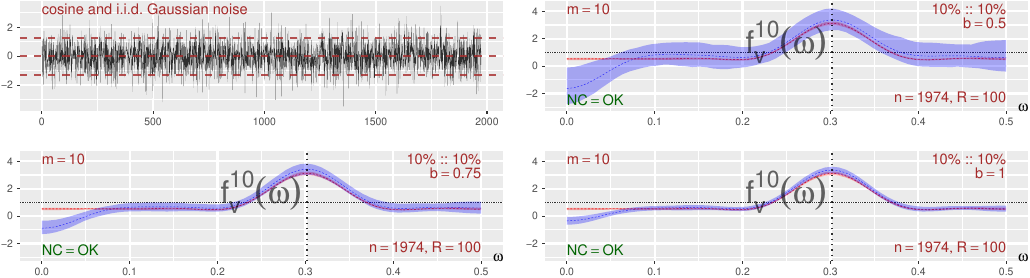} 

}

\caption[Single cosine and i.i.d]{Single cosine and i.i.d.\ white
  noise, same point, bandwidths 0.5, 0.75
  and~1.}\label{fig:trigonometric_one_cosine}
\end{figure}
\end{knitrout}

All the plots in \cref{fig:trigonometric_one_cosine} show the same
point (corresponding to the 10\% quantile) in the lower tail, but they
differ with regard to the bandwidths that have been used.  In
particular, the upper right plot is based on the bandwidth
\mbox{$\bm{b}=(.5,.5)$} (the bandwidth used in all the other
examples), whereas the two plots at the bottom shows the situation for
the bandwidths \mbox{$\bm{b}=(.75,.75)$} and \mbox{$\bm{b}=(1,1)$},
respectively at the left and right.  In this case, the widths of the
pointwise confidence intervals are influenced by the selected
bandwidths, but the overall shape is similar and close to the global
estimate shown in red.  This feature is also present for the other
examples that have been~investigated.

Note that the cosine is recovered using just a neighbourhood of the
10\% quantile.  Furthermore, the portion of the local Gaussian
spectral density that is negative decreases with increasing bandwidth,
which is in accordance with the remark at the end of \cref{sec:LGC}.
Using the notation from \cref{def:lgsd_estimator}, this can for the
estimates of the local Gaussian autocorrelations be stated as
$\hatlgacrb[p]{\LGp}{h}{\bm{b}}\rightarrow\widehatrhoz{}\,(h)$ when
$\bm{b}\rightarrow \bm{\infty}$, which implies that the
estimate $\hatlgsdM[5]{\LGp}{\omega}{m}$ converges towards the global
non-negative estimate $\widehatfz[m]{}(\omega)$.  It is thus possible
to reduce the amount of negative values for the estimates
$\hatlgsdM[5]{\LGp}{\omega}{m}$ by increasing the bandwidth $\bm{b}$,
but keep in mind that it is the limits $\blimit$ and $\mlimit$ that
should be taken in order to actually estimate the local Gaussian
spectral density $\lgsd[5]{\LGp}{\omega}$.

The truncation level used in \cref{fig:trigonometric_one_cosine} is
rather low, i.e.\ $m=10$, 
but it can be seen that the peak is observed at the correct
frequency.  The peak will grow taller and narrower when a higher
truncation level is used, but it will stay at the same frequency.
This indicates that these plots (even for low truncation values) can
detect properties of the underlying structure.  Again, this feature
is shared with the other examples that have been investigated.

The local Gaussian spectral densities in
\cref{fig:trigonometric_one_cosine} goes below zero for low
frequencies, a feature that is not entirely unexpected as
$\TSR{\lgacr[5]{\LGp}{h}}{h\in\ZZ}{}$, the collection of local
Gaussian autocorrelations, may not be a non-negative definite
function.  In fact, based on the observation that the estimates of
$\hatlgsdM[5]{\LGp}{\omega}{m}$ have peaks that are taller and wider
than those of $\widehatfz[m]{}(\omega)$, it is as expected that these
estimates might need to have negative values somewhere.  The reason
for this is that all the spectral densities (global, local and
\mbox{$m$-truncated}) by construction necessarily must integrate to
one over the interval \mbox{$(-\tfrac{1}{2},\tfrac{1}{2}]$}.  The
higher and wider peaks of the estimates for
$\hatlgsdM[5]{\LGp}{\omega}{m}$ thus requires that it has to lie below
the estimates of $\widehatfz[m]{}(\omega)$ in some other region, and
if necessary it must attain negative values somewhere.  The
interesting details in the plots are thus the position of the peaks of
$\hatlgsdM[5]{\LGp}{\omega}{m}$, and regions with negative values
should not in general be considered a too troublesome feature.

Note that, under certain circumstances,
$\hatlgsdM[5]{\LGp}{\omega}{m}$ might contain spurious artefacts
when it is computed for time series having a non-flat ordinary
spectrum, c.f.\ \cref{sec:lgch:beware_of_global_structures} for a
discussion related to a case based on a deterministic function with
small noise.

\textbf{The local trigonometric case:} 
The key idea in this example is that an artificial time series
$\TSR{\Yz{t}}{t\in\ZZ}{}$ can be constructed by the following
scheme:
\begin{enumerate}
\item 
  Select $r$ time series
  $\TSR{\Cz{i}(t)}{i=1}{r}$.  
\item 
  Select a random variable $I$ with values in the set
  $\parenC{1,\dotsc,r}$, and use this to sample a collection of
  indices $\TSR{\Iz{t}}{t\in\ZZ}{}$ (i.e.\ for each $t$ an
  independent realisation of $I$ is taken).  Let
  $\pz{i}\defeq\Prob{\Iz{i}=i}$ denote the probabilities for the
  different outcomes.
\item 
  Define $\Yz{t}$ by
  means of the equation
  \begin{align}
    \label{eq:local_trigonometric_sum}
    \Yz{t} \defeq \sumss{i=1}{r} \Ind{\Iz{t} = i}\cdot\Cz{i}(t).
  \end{align}
  The indicator function $\Ind{\cdot}$ ensures that only one of the
  $\Cz{i}(t)$ contribute for a given value~$t$, i.e.\ it is also
  possible to write $\Yz{t} = \Cz{\Iz{t}}(t)$.
\end{enumerate}

The \textit{local trigonometric} time series (needed for the sanity
testing of the implemented estimation algorithm) are constructed by
selecting $r$ cosine-functions that oscillates around different
horizontal base-lines $\Lz{i}$, i.e.\ 
\begin{align}
  \label{eq:local_trigonometric_components}
  \Cz{i}(t) = \Lz{i} + \Az{i}(t) \cdot   \cos \left(2\pi\alphaz{i} t + \varphiz{i} \right), \qquad i = 1,\dotsc,r,
\end{align}
where $\alphaz{i}$ and $\varphiz{i}$ respectively represent the
frequency and phase-adjustment occurring in the cosine-function, and
where the amplitudes $\Az{i}(t)$ are uniformly distributed in some
interval $\parenS{\az{i},\bz{i}}$.  Note that it is assumed that the
phases $\varphiz{i}$ are uniformly drawn (one time for each
realisation) from the interval between~$0$ and~$2\pi$, and it is
moreover also assumed that the stochastic processes $\varphiz{i}$,
$\Az{i}(t)$ and $\Iz{t}$ are independent of each other.

The autocorrelation $\rho(h)$ of the time series
$\TSR{\Yz{t}}{t\in\ZZ}{}$, with $\Cz{i}(t)$ as given in
\cref{eq:local_trigonometric_components}, has been computed in the
Supplementary Material, cf.\
\cref{app:eq:correlation.of.the.artificial.time.series} in
\cref{app:fig:trigonometric}.  For the purpose of the present
section, it is sufficient to know that it is possible to find
parameter-configurations for which the global spectrum is rather
flat (at least when truncated at $m=10$), which 
implies that it cannot detect the frequencies $\alphaz{i}$ of the
underlying structure.

Strictly speaking, neither $f(\omega)$ nor $\lgsd{\LGp}{\omega}$ are well defined for
the \textit{local trigonometric} times series, but this is not
important since it still is possible to {predict} (cf.\
\cref{app:fig:trigonometric} for details) that the $m$-truncated
estimates $\hatlgsdM{\LGp}{\omega}{m}$ for some points $\LGp$ should
resemble \cref{fig:trigonometric_one_cosine} --- and this can be
used, cf.\ \cref{fig:trigonometric}, to test the sanity of the
implemented estimation algorithm.

The explicit expression for the \textit{local trigonometric} example
studied in \cref{fig:trigonometric} is given by $r=4$ components
$\Cz{i}(t)$ of the form given in
\cref{eq:local_trigonometric_components}, where the probabilities
$\pz{i}$ 
are given by $(0.05, 1/3-0.05, 1/3, 1/3)$, the frequencies
$\alphaz{i}$ are given by $(0.267, 0.091, 0.431, 0.270)$, the
base-lines $\Lz{i}$ are given by the values $(-2, -1, 0, 1)$, and
the lower and upper ranges for the uniforms sampling of the
amplitudes $\Az{i}(t)$ are respectively given by
{$(0.5, 0.2, 0.2, 0.5)$} and {$(1.0, 0.5, 0.3, 0.6)$}.  
Note that $\Lz{i}$ and $\Az{i}(t)$ should be selected in order to
give a minimal amount of overlap between the different components,
cf.\ \cref{app:fig:trigonometric} for further details.

\Cref{fig:trigonometric} shows $\widehatfz[m]{}(\omega)$ and
$\hatlgsdM{\LGp}{\omega}{m}$ for the \textit{local trigonometric}
example.  The ordinary spectrum does not detect the frequencies
$\alphaz{i}$ (indicated by vertical lines), whereas the local
Gaussian spectra does have clear peaks at the frequencies from
respectively $\Cz{2}(t)$, $\Cz{3}(t)$ and $\Cz{4}(t)$.  Moreover, a
comparison with \cref{fig:trigonometric_one_cosine} shows that
$\hatlgsdM{\LGp}{\omega}{m}$ indeed does look like predicted, which
verifies the sanity of the implemented estimation algorithm.

\begin{knitrout}\footnotesize
  \definecolor{shadecolor}{rgb}{0.969, 0.969, 0.969}\color{fgcolor}\begin{figure}[h]

    {\centering \includegraphics[width=1\linewidth]{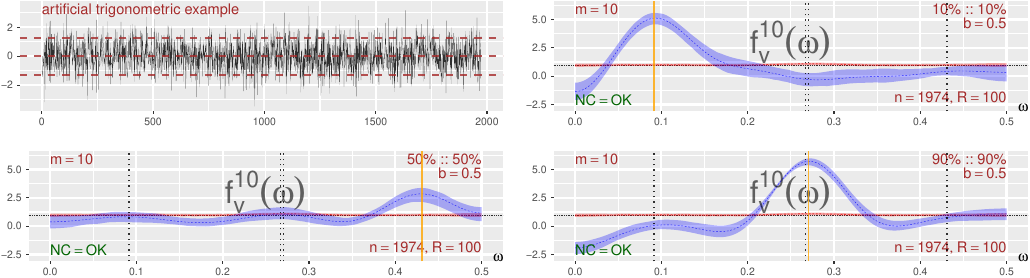} 

    }

    \caption[Artifical example, \textit{hidden trigonometric
      components}]{Artifical example, \textit{local trigonometric
        components}.  Global and local spectra for the three points $\LGp$
      on the diagonal, i.e.\ lower tail, center and upper
      tail.}\label{fig:trigonometric}
  \end{figure}

\end{knitrout}

The selected percentiles $\TSR{\pz{i}}{i=1}{4}$ implies that
observations from the $\Cz{3}(t)$ component after
pseudo-normalisation should lie between $\Phiz[-1]{}(1/3)=-0.43$ and
$\Phiz[-1]{}(2/3)=.43$.  The estimation of $\lgsdM{\LGp}{\omega}{m}$
is based on the bandwidth $\bm{b}=(0.5,0.5)$, which implies that the
estimate at the center will be \enquote{contaminated} by
observations from the neighbouring components --- and this explains
the lower amplitude seen for this point.

The three points $\LGp$ in \cref{fig:trigonometric} correspond
roughly to the base-lines $\Lz{2}, \Lz{3}$ and $\Lz{4}$, and the
corresponding frequencies $\alphaz{2}, \alphaz{3}$ and $\alphaz{4}$
are here detected by $\hatlgsdM{\LGp}{\omega}{m}$.  But what about
the base-line $\Lz{1}$ and the $\alphaz{1}$-frequency?

The low probability at which the $\Cz{1}(t)$ component is selected
implies that the point $\LGp$ corresponding to the base-line
$\Lz{1}$ must lie far out in the lower tail, and for the present
sample size (of $n=1974$) the scarcity of observations in this
region implies that it is not possible to obtain decent estimates of
the required local Gaussian autocorrelations $\lgacr{\LGp}{h}$.  A
countermeasure to this problem would be to use a larger bandwidth
$\bm{b}$, but the result would then be \enquote{contaminated} by the
observations from the $\Cz{2}(t)$ component --- and the peak of
$\hatlgsdM{\LGp}{\omega}{m}$ would then be at the frequency
$\alphaz{2}$ instead of $\alphaz{1}$.  This implies that misleading
results can occur when the bandwidth $\bm{b}$ is to large.  

However, note that for a large enough sample it is possible to
detect the frequency $\alphaz{1}$ that belongs to the
$\Cz{1}(t)$-component, cf.\
\cref{app:fig:trigonometric.C1.component} for further details.

The $\Cz{1}(t)$ component was included in this example in order to
emphasise that extra care is needed when investigating the outer
tails of a sample.  This of course begs the question: For a given
sample $\TSR{\Yz{t}}{t=1}{n}$, how can an investigator figure out
whether or not the estimate of $\lgsdM{\LGp}{\omega}{m}$, for a
given combination of point $\LGp$ and bandwidth $\bm{b}$, seems
trustworthy or not?  Another important question for an investigator
is to decide if some points $\LGp$ might be more interesting than
others.  Both of these questions can be investigated by means of the
two plots seen in \cref{fig:dmt_heatmap_and__levels_vs_norm}, which
(for a single sample from the aforementioned \textit{local
  trigonometric} construction) investigates the $m=10$ truncated
local Gaussian spectra $\lgsdM{\LGp}{\omega}{m}$ for points along
the diagonal.  Note that the points $\LGp$ are represented by their
respective percentiles, and the range goes from the 5\% percentile
to the 95\% percentile.

\begin{figure}[h]
  {\centering \includegraphics[width=1\textwidth]{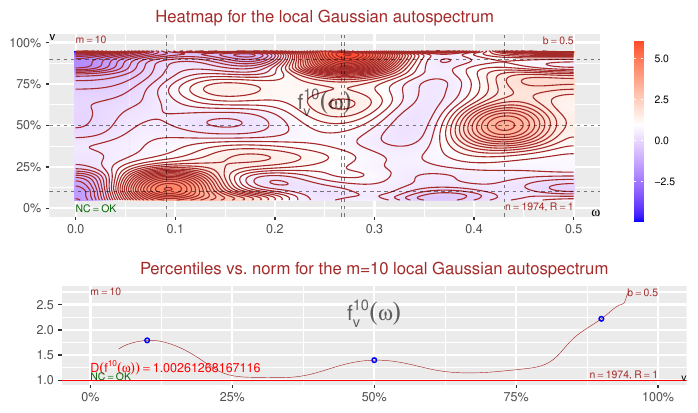}
  }
  \caption[]{Heatmap-plot with corresponding distance-plot, based on
    the local trigonometric case, showing how
    $\hatlgsdM{\LGp}{\omega}{10}$ varies with the percentiles for
    the diagonal-points $\LGp$.  The percentiles and frequencies
    used in \cref{fig:trigonometric} have been indicated with
    lines/points.}\label{fig:dmt_heatmap_and__levels_vs_norm}
\end{figure}

The upper part of \cref{fig:dmt_heatmap_and__levels_vs_norm} is a
heatmap-plot for $\hatlgsdM{\LGp}{\omega}{m}$ (inspired by plots in
\citet{li19:_quant_frequen_analy_spect_diver}), which in this case
is based on one sample of length $n=1974$.  The contour-lines in
this plot clearly reveals that the highest peaks occur approximately
at the points investigated in \cref{fig:trigonometric}.  In fact,
looking at the heatmap, the peak at the 90\% percentile of
\cref{fig:dmt_heatmap_and__levels_vs_norm}, may have its maximum
closer to the 95\% percentile, but one has to be a little careful
here since the estimates of $\lgacr{\LGp}{h}$ might degenerate
towards $+1$ (or $-1$) in the outer part of the tail.

The lower part of \cref{fig:dmt_heatmap_and__levels_vs_norm} shows
the corresponding distance-plot
$D\!\parenR{\lgsdM{\LGp}{\omega}{m}}$, where the norms of the
$m$-truncated spectra (realised as elements of the complex Hilbert
space of Fourier series, cf.\
\cref{app:method_for_sensitivity_analysis} for details) are plotted
against the diagonal points.  Note that distance-based plots do not
contain any information about the frequencies, and completely
different spectral densities can have the same distance-value.  It
is thus important to always combine a distance-based plot with a
plot that reveals the frequency-component.

The horizontal line at the bottom of the distance-plot gives the
norm of the ordinary spectrum, and it can be seen that this line is
very close to the white-noise value which is~1.  It is interesting
and reassuring that it picks up the peaks at the 10\% and 50\%
percentiles.  It does however not indicate a peak close to the 95\%
percentile, but this is also the least clear peak of the heatmap.

This discussion shows that it is important to include a wide range
of points when performing an investigation based on local Gaussian
spectral densities, since it is necessary to check how 
$\hatlgsdM[5]{\LGp}{\omega}{m}$ changes as the diagonal point $\LGp$
varies from the lower tail to the upper tail.  The \Rpackage
\lgsdRpackage\ is designed for such investigations, and it includes
an interactive interface that can switch between different
visualisations.  Note that \lgsdRpackage\ also can deal with points
$\LGp$ that lies outside of the diagonal, and it can in addition
also digest multivariate time series.

\subsection{Real data and a fitted GARCH-type model}
\label{sec:Real_data}

The local Gaussian machinery will now be used on the
\texttt{dmbp}-data.  It will here be seen that local properties of the
nonlinear dependency structure 
indeed can be obtained by comparing $\widehatfz[m]{}(\omega)$ and
$\hatlgsdM{\LGp}{\omega}{m}$, and this works even for low values of
the truncation level $m$.

Another topic that it is natural to consider is the comparison of
$\hatlgsdM{\LGp}{\omega}{m}$ based on the 
data and $\hatlgsdM{\LGp}{\omega}{m}$ based on simulations from a
model fitted to the data --- and this will in particular be
investigated for a GARCH-type model that was fitted to the
\texttt{dmbp}-data by the \Rpackage \Rref{rugarch},
\citet{ghalanos15_rugarch}.

\subsubsection{The real data example }
\label{sec:Real_data_dmbp}

The \texttt{dmbp}-data (length~1974), whose original and
pseudo-normalised versions can be seen in
\cref{fig:dmbp_original_normalised}, will now be investigated by the
$m$-truncated local Gaussian spectral densities
$\lgsdM{\LGp}{\omega}{m}$.  These estimates will be based on the
bandwidth \mbox{$\bm{b}=\parenR{0.5,0.5}$}, and they will be
computed for the three diagonal points corresponding to the 10\%,
50\% and 90\% percentiles of the standard normal distribution.  The
estimated local Gaussian autocorrelations $\hatlgacr{\LGp}{h}$ that
is used in the computation of $\hatlgsdM{\LGp}{\omega}{m}$ can be
seen in \cref{fig:dmbp_lag}, and the estimated values of
$\widehatfz[m]{}(\omega)$ and $\hatlgsdM{\LGp}{\omega}{m}$ (for the
$m=10$ case) are shown as the red and blue solid lines\footnote{
  Solid lines are always used by the \Rpackage \lgsdRpackage\ when
  $\hatlgsdM{\LGp}{\omega}{m}$ is based on real data.} in
\cref{fig:dmbp}.  
The pointwise confidence intervals are based on the resampling
strategy discussed on page \pageref{page.the.resampling.algorithm}.

\begin{knitrout}\footnotesize
  \definecolor{shadecolor}{rgb}{0.969, 0.969, 0.969}\color{fgcolor}\begin{figure}[h]

    {\centering \includegraphics[width=1\linewidth]{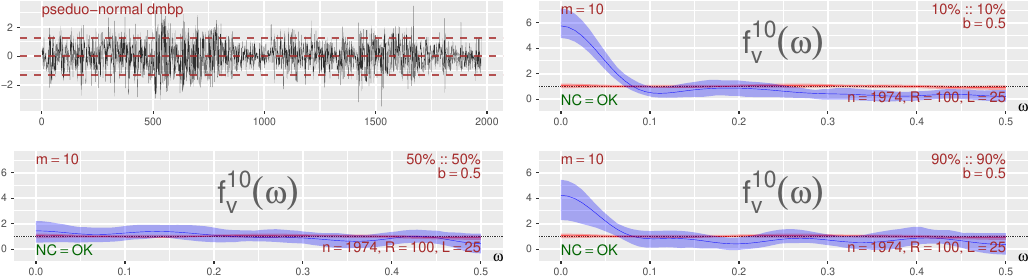} 

    }

    \caption[\texttt{dmbp}-data, bootstrapped based confidence
    intervals]{\texttt{dmbp}-data, bootstrapped based confidence
      intervals.  Global and local spectra for the three diagonal
      points.}\label{fig:dmbp}
  \end{figure}

\end{knitrout}

The global spectrum $\widehatfz[m]{}(\omega)$ is flat, which is in
agreement with the knowledge that the \texttt{dmbp}-data resembles
white noise.  The local Gaussian spectrum
$\hatlgsdM{\LGp}{\omega}{m}$ at the center is also rather flat,
which is no surprise given the values $\hatlgacr{\LGp}{h}$ seen in
the middle panel of \cref{fig:dmbp_lag}.  The estimates
$\hatlgsdM{\LGp}{\omega}{m}$ in the tails are obviously not flat,
and the clear peaks at the frequency $\omega=0$ are again in
agreement with the corresponding values $\hatlgacr{\LGp}{h}$ from
\cref{fig:dmbp_lag}.

The difference between the (solid lines in the) lower and upper tail
could indicate the presence of an asymmetry, i.e.\ the peak are more
prominent for the lower tail.  It would be premature to draw a firm
conclusion regarding asymmetry based one a single plot using the low
truncation level $m=10$, but the asymmetry can also be seen for
higher truncation levels (investigated up to $m=200$), with an
increasing difference between the height of these peaks.
Such an asymmetry, with a higher peak at the lower tail, would be in
agreement with the asymmetry between a \textit{bear market} (going
down) and a \textit{bull market} (going up).

A comparison solely based on the solid lines in \cref{fig:dmbp} is
not sufficient, since an observed difference could be due to the
variability of the estimator used to find
$\hatlgsdM{\LGp}{\omega}{m}$.  It is thus necessary to decide on a
reasonable resampling strategy (described below) that can provide
pointwise confidence intervals like those shown in \cref{fig:dmbp}.
Based on the pointwise confidence intervals, it is clear that
the 
truncated local and global spectra indeed do show that the
\texttt{dmbp}-data contains local non-linear dependency structures
in the tails.  
Note that the width of the pointwise confidence interval is a
function of the frequency, cf.\
\cref{app:finite_sample_and_variance_LGSD}, and this can in some
cases give it a wide \enquote{trumpet shape} near $\omega=0$, as
seen in the lower and upper tails in \cref{fig:dmbp} (and which is
even more prominent in \cref{fig:GARCH})

The pointwise confidence intervals in \cref{fig:dmbp} requires a
resampling strategy that takes into account that the local Gaussian
autocorrelations ${\lgacr{\LGp}{1},\dotsc,\lgacr{\LGp}{m}}$
are estimated by a local likelihood approach. The asymptotic
properties of these estimates were developed in the present paper
using the procedure from \citet{klimko1978}, cf.\ 
\cref{App:local_penalty_function_Klimko_Nelson_approach}.

The block bootstrap can be used for a variety of estimators, and it
can in particular, cf.\ \citet[Example 2.4, p.\
1219-20]{kuensch89:_jackk_boots_gener_station_obser}, be applied for
estimators based on the Klimko-Nelson procedure.  
The block bootstrap was thus used as the resampling strategy in an
earlier draft of this paper, and the results were similar to
\cref{fig:dmbp} when a block length of {$L=100$ \label{page:L=100}} was used.  The
selected block length $L$ seemed reasonable based on the
$\hatlgacr{\LGp}{h}$-values seen in \cref{fig:dmbp_lag}.
See \cref{app:What.about.the.ordinary.block.bootstrap?} for further
details.

Some comments related to the block bootstrap were received during
the review-process, and those motivated the investigation 
presented in \cref{app:regarding_resampling}, which 
lead to the adjusted resampling strategy given in
\label{page.the.resampling.algorithm}
\cref{def:index_based_block_bootstrap_for_tuples}.  The adjusted
resampling method uses a two step procedure, where the first step
uses the block bootstrap on the \textit{indices of the
  observations}, and the next step uses those resampled indices to
identify the $h$-lagged pairs $\parenR{\Yz{t+h},\Yz{t}}$ that should
be used when estimating $\lgacr{\LGp}{h}$ for the resampled data.

The adjusted resampling approach reduce the edge-effect noise that
occurs when the components of a resampled pair belong to different
blocks, and this implies that it works well with lower block lengths
than those needed for the ordinary block bootstrap.  A sensitivity
analysis related to the selection of the block length $L$ is
presented in \cref{app:Block_length_sensitivity}.

\subsubsection{A heatmap/distance plot for the \texttt{dmbp}-data} 
\label{sec:A.heatmap/distance.plot.for.the.dmbp-data}

It is of interest to know how $\hatlgsdM{\LGp}{\omega}{m}$ behaves
for other diagonal points, 
and this can be seen in \cref{fig:dmbp_heatmap_and__levels_vs_norm}
which is constructed in the same manner as
\cref{fig:dmt_heatmap_and__levels_vs_norm}. Keep in mind that these
plots are based on pseudo-normalised data,
i.e.\ the information in the marginal distribution is not present,
and \cref{fig:dmbp_heatmap_and__levels_vs_norm} thus primarily
reveals information about the 
copula-structure of the time series under investigation, cf.\
\cref{app:Point_sensitivity} for further details.

\begin{figure}[h]
  {\centering \includegraphics[width=1\textwidth]{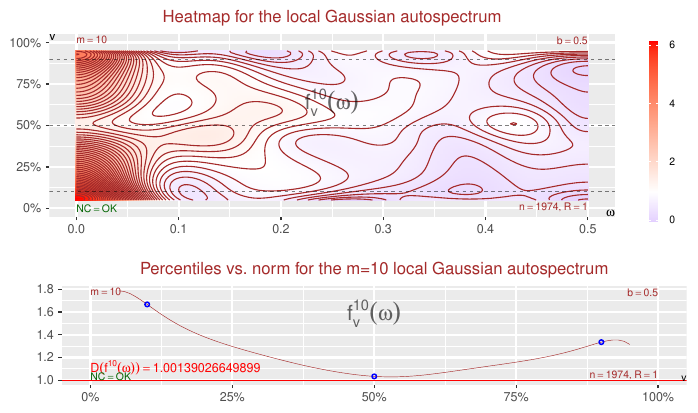}
  }
  \caption[]{Heatmap and corresponding distance-based plots based on
    the \texttt{dmbp-data}, showing how $\hatlgsdM{\LGp}{\omega}{10}$
    varies with the percentiles for the diagonal-points $\LGp$.  The
    percentiles used in \cref{fig:dmbp}, i.e.\ 10\%, 50\% and 90\%,
    have been
    highlighted with lines/points.}\label{fig:dmbp_heatmap_and__levels_vs_norm}
\end{figure}

\Cref{fig:dmbp_heatmap_and__levels_vs_norm} supports the impression
that there is an asymmetry between the lower tail and the upper
tail, and it can also be seen that the local dependency structure is
weak near the center.  Note that these plots go from the 5\% to 95\%
percentile, in order to show that it might be perilous to go too far
out in the tail for the present sample size ($n=1974$).  This is
discussed in more detail in \cref{app:Point_sensitivity}, where
heatmap based plots of the estimated underlying local Gaussian
autocorrelations can be found, cf.\
\cref{fig:dmbp_v_heatmap_lgacr,fig:dmt_v_heatmap_lgacr,fig:apARCH_v_heatmap_lgacr}.

\subsubsection{A GARCH-type model}
\label{sec:GARCH_model}

This section will consider 
an \textit{asymmetric power ARCH-model} (apARCH) of order $(2,3)$,
with parameters based on a fitting to the
\texttt{dmbp}-data.\footnote{
  The \Rpackage \Rref{rugarch}, \citet{ghalanos15_rugarch} was used
  to find the parameters of a multitude of GARCH-type models, and
  the asymmetric power ARCH model with the best fit was then
  selected.} 
Technical details about this model, and comments regarding the
script needed for the reproduction of this example, can be found in
\cref{app:fig:GARCH} in the Supplementary Material.

For a comparison with the results 
based on the \texttt{dmbp}-data, it is natural to consider $R=100$
samples of length $n=1974$ from the fitted apARCH$(2,3)$ model ---
and the estimates of $\lgsdM{\LGp}{\omega}{m}$ should be computed
for the same points $\LGp$ and with the same tuning parameters
$\bm{b}$ and $m$.  The result from such an investigation can be seen
in \cref{fig:GARCH}.

\begin{knitrout}\footnotesize
  \definecolor{shadecolor}{rgb}{0.969, 0.969, 0.969}\color{fgcolor}\begin{figure}[h]

    {\centering \includegraphics[width=1\linewidth]{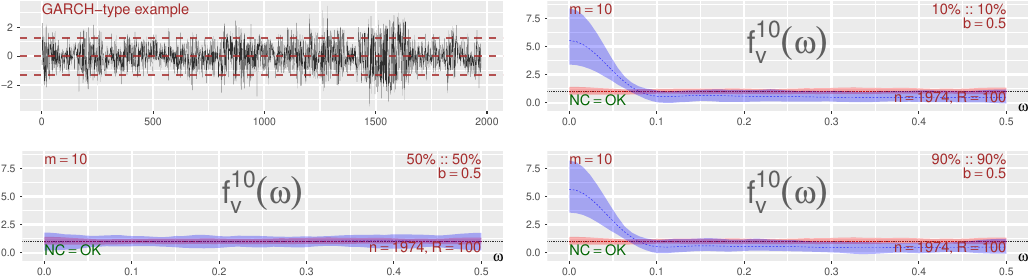} 

    }

    \caption[GARCH-type model, based on \texttt{dmbp}]{GARCH-type model,
      based on \texttt{dmbp}.  Global and local spectra for three
      points.}\label{fig:GARCH}
  \end{figure}

\end{knitrout}

It is clear from \cref{fig:GARCH} that the estimate of the
$m$-truncated global spectrum is flat, and this is in agreement with
the knowledge that $f(\omega) = 1$ for a GARCH-type model (since
\mbox{$\rho(h)=0$} when \mbox{$h\neq0$}).  It can also be seen that
the esimates $\hatlgsdM{\LGp}{\omega}{m}$  based on the fitted model
have the same overall structure as those in \cref{fig:dmbp}.  In
particular, there is a flat spectrum at the center, and the tails
show the presence of nonlinear structures with peaks at $\omega=0$.
\Cref{fig:GARCH} does however not pick up the apparent and
intuitively reasonable asymmetry seen in the solid lines in
\cref{fig:dmbp}, which also are supported by the plots in
\cref{fig:dmbp_heatmap_and__levels_vs_norm}.

\subsubsection{Local testing of fitted models}
\label{sec:local.testing.of.fitted.models}

A comparison of plots like those in \cref{fig:dmbp,fig:GARCH} can be
used to perform a \enquote{local sanity check} of whether or not the
dependency structure of the fitted model properly matches the
dependency structure of the data --- and it is also possible to
perform \enquote{local comparisons} of different models that has
been fitted to the same data.  The interested reader can find
similar local investigations of data and fitted models in e.g.\
\citet{li19:_quant_frequen_analy_spect_diver,BIRR2019122}.

Note that it for such comparisons also is of interest to include
points $\LGp$ outside the diagonal.  The plots needed for
off-diagonal points must take into account that
$\hatlgsdM{\LGp}{\omega}{m}$ will be complex-valued outside the
diagonal, but this has already been taken care of in the \Rpackage
\lgsdRpackage, where the implemented solution simply mimics the
co-spectra, quadrature-spectra, phase-spectra and amplitude spectra
that is used for the ordinary complex-valued cross-spectra.

An alternative strategy to the comparison of two sets of plots, like
those in \cref{fig:dmbp,fig:GARCH}, is to superimpose the
$\hatlgsdM{\LGp}{\omega}{m}$ from the \texttt{dmbp}-data on the top
of the corresponding plots based on the fitted model.  A plot based
on this superposition principle (inspired by a similar plot from
\citet{BIRR2019122}) is given in \cref{fig:aparch_dmbp_comparison}
in the Supplementary Material, cf.\
\cref{app:resampling_parametric_bootstrap}.  Note that this plot also
contains visualisations of complex-valued spectra.

\section{Conclusion}
\label{sec:Conclusions}

The local Gaussian spectral density $\lgsd[p]{\LGp}{\omega}$ has in
this paper been introduced as a new tool for the study of nonlinear
time-series.  The examples show that even for low truncation levels
it is possible to detect nonlinear periodicities missed by the
ordinary spectral density.  Further, one can detect the presence of
general nonlinear dependency structures by a comparison of the
$m$-truncated versions of the ordinary spectrum and the local
Gaussian spectra.

The $m$-truncated spectra $\lgsdM{\LGp}{\omega}{m}$ can also be of
interest with regard to 
\textit{local comparisons} of models fitted to a given sample, as
discussed at the end of \cref{sec:Real_data}.

The \Rpackage \lgsdRpackage\ can estimate $\lgsdM{\LGp}{\omega}{m}$
for a large number of combinations of points $\LGp$, truncation
levels $m$, and block lengths $\bm{b}$ --- and it does also have an
integrated \Rref{shiny}-application that enables an easy interactive
investigation of the results.  The Supplementary Material contains a
sensitivity analysis that shows how $\hatlgsdM{\LGp}{\omega}{m}$
reacts to adjustments of $\LGp$, $m$ and $\bm{b}$ --- and it is
there also seen that adjustments of the block length $L$, within
wide intervals,
have a minimal impact on the pointwise confidence intervals.

\section*{Acknowledgements}

The authors are most grateful for the valuable comments and
suggestions from the referees and the associate editor.

\section*{Supplementary Material}
The online Supplementary Material contains the appendices.  The
scripts needed for the reproduction of the examples in this paper is
contained in the \Rpackage \lgsdRpackage, cf.\
\cref{app:data_details} for further details.

\clearpage{}

  \putbib     
\end{bibunit}

\begin{bibunit}[elsarticle-harv]
  \spacingset{1.5} 

  {  \clearpage{}\clearpage

\appendix

\renewcommand\thefigure{\thesection.\arabic{figure}}
\setcounter{figure}{0}

\begin{center}
  {\Large SUPPLEMENTARY MATERIAL}
  \setcounter{page}{1}
\end{center}

This part contains the supplementary material to the paper
\textit{Nonlinear spectral analysis: A local Gaussian approach}.
The {asymptotic results for $\hatlgsdM[p]{\LGp}{\omega}{m}$} are
presented in \cref{app:asymptotics_for_hatlgsd},
\cref{sec:Technical_Results} contains the underlying {asymptotic
  results for the parameters
  $\widehatbmthetaz{\LGp|\overbar{m}|\bm{b}}$}, and a collection of
technical details is given in \cref{app:Tecnhical_details}.

A sensitivity analysis of the tuning parameters is given in
\cref{app:sensitivity_analysis}, and some comments related to the
selection of the tuning parameters are given in
\cref{How.to.select.the.tuning.parameters?}.
\Cref{app:regarding_resampling} discusses issues related to sampling
and resampling, including a sensitivity analysis of the block length
$L$ for the slightly adjusted block bootstrap that is used in this
paper.

Finally, \cref{app:data_details} contains some additional
information about the examples used in the main document, and it
does also include comments related to the reproducibility scripts
that are contained in the \Rpackage \lgsdRpackage.

\section{Asymptotic results for $\hatlgsdM[p]{\LGp}{\omega}{m}$}
\label{app:asymptotics_for_hatlgsd}

This appendix presents the asymptotic properties of
$\hatlgsdM[p]{\LGp}{\omega}{m}$, the \textit{$m$-truncated estimate of
  the local Gaussian spectral density}, i.e.\ the proof of
\cref{th:asymptotics_for_hatlgsd} is given here together with a
theorem that covers the case when $\hatlgsdM[p]{\LGp}{\omega}{m}$ is
complex-valued.  The technical details needed for the proofs are
covered in \cref{sec:Technical_Results,app:Tecnhical_details}.
Note that the theory is given for the general situation, i.e.\ it is
not required that the time series under investigation should have been
replaced with a pseudo-normalised version.

\subsection{The proof of \cref{th:asymptotics_for_hatlgsd}}
\label{sec:proof_of_th:asymptotics_for_hatlgsd}
\begin{proof}
  The property that $\lgsd[p]{\LGp}{\omega}$ is a real-valued function
  when $\LGp$ lies on the diagonal was proved in
  \myref{th:lgsd_properties}{th:lgsd_real_on_diagonal}.  The
  expression for $\hatlgsdM[p]{\LGp}{\omega}{m}$ from
  \cref{th:lgsd_estimate_real_valued} can by vectors be written~as
  \begin{align}
    \hatlgsdM[p]{\LGp}{\omega}{m} 
    & = 1 + 2 \cdot \bmLambdaz[\!\prime]{m}(\omega)\cdot \widehatbmPz{\LGp|m|\bm{b}},
  \end{align}
                                                                          i.e.\ the sum can be expressed as the inner product of the two
  vectors
  \begin{subequations}
    \begin{align}
      \label{eq:App_Lambda_vec}
      \bmLambdaz[\prime]{m}(\omega) 
      &\defeq \Vector{        \lambdazM{1}{m} \cdot \cos \left(2\pi\omega \cdot 1\right),         \dotsc,         \lambdazM{m}{m} \cdot \cos \left(2\pi\omega \cdot m\right)}, \\
      \label{eq:App_P_vec}
      \widehatbmPz{\LGp|m|\bm{b}}
      &\defeq \Vector[\prime]{
                                \hatlgacrb[p]{\LGp}{1}{\bmbzh{1}},         \dotsc,         \hatlgacrb[p]{\LGp}{m}{\bmbzh{m}}}.
    \end{align}
  \end{subequations}
  Since $\hatlgacrb[p]{\LGp}{h}{\bmbzh{h}}$ is one of the 5 estimated
  parameters $\hatlgthetab[p]{\LGp}{h}{\bmbzh{h}}$ from the local
  Gaussian approximation (of the lag~$h$ pairs) at the
  point~$\LGp$,\footnote{ The properties of
    $\hatlgthetab[p]{\LGp}{h}{\bmbzh{h}}$ was investigated in
    \citet{Tjostheim201333}.  A brief summary, with notation adjusted
    to fit the multivariate framework of the present paper, is given
    \cref{app:bivariate_penalty_functions}.}  it is clear that it is
  possible to write
  \mbox{$\hatlgacrb[p]{\LGp}{h}{\bmbzh{h}} = \bmez[\prime]{5}\cdot
    \hatlgthetab[p]{\LGp}{h}{\bmbzh{h}}$}, where $\bmez[\prime]{5}$ is
  the unit vector that picks out $\hatlgacrb[p]{\LGp}{h}{\bmbzh{h}}$
  from~$\hatlgthetab[p]{\LGp}{h}{\bmbzh{h}}$.  The vectors
  $\TSR{\hatlgthetab[p]{\LGp}{h}{\bmbzh{h}}}{h=1}{m}$ can be stacked
  on top of each other to give a joint parameter vector
  $\widehatbmthetaz{\LGp|\overbar{m}|\bm{b}}$, and it follows that the
  vector $\widehatbmPz{\LGp|m|\bm{b}}$ can be expressed as
  \mbox{$\widehatbmPz{\LGp|m|\bm{b}} = \bmEz[\prime]{m}\cdot
    \widehatbmthetaz{\LGp|\overbar{m}|\bm{b}}$}, where
  $\bmEz[\prime]{m}$ is the matrix that picks out the relevant
  components from $\widehatbmthetaz{\LGp|\overbar{m}|\bm{b}}$.
  It follows from this, and \citet[Proposition~6.4.2,
  p.~211]{Brockwell:1986:TST:17326}, that an asymptotic normality
  result for $\widehatbmthetaz{\LGp|\overbar{m}|\bm{b}}$ will give an
  asymptotic normality result for $\hatlgsdM[p]{\LGp}{\omega}{m}$.  In
  particular, if a suitable scaling factor\footnote{ $\cz{n|m|\bm{b}}$
    must be a function of $n$, $m$ and $\TSR{\bmbzh{h}}{h=1}{m}$, such
    that \mbox{$\cz{n|m|\bm{b}}\rightarrow\infty$} when $\nlimit$,
    $\mlimit$ and \mbox{$\bmbzh{h}\rightarrow\bmzeroz[+]{}$}.}
  $\cz{n|m|\bm{b}}$ gives a \mbox{$5m$-variate} asymptotic normality
  result for~$\widehatbmthetaz{\LGp|\overbar{m}|\bm{b}}$,
  \begin{align}
    \label{eq:asymptotic_limit_for_theta}
    \cz{n|m|\bm{b}} \cdot
    \left(\widehatbmthetaz{\LGp|\overbar{m}|\bm{b}} -
    \bmthetaz{\LGp|\overbar{m}}
    \right) \stackrel{\scriptscriptstyle
    d}{\longrightarrow} \UVN{\bm{0}}{\Sigmaz{\LGp|\overbar{m}}},
  \end{align}
  then a scaling factor $\cz[\,\prime]{n|m|\bm{b}}$ can be found
  that gives a univariate asymptotic normality result
  for~$\hatlgsdM[p]{\LGp}{\omega}{m}$,
  \begin{align}
    \label{eq:asymptotic_limit_for_hatlgsd}
    \cz[\,\prime]{n|m|\bm{b}} \cdot
    \left(\hatlgsdM[p]{\LGp}{\omega}{m} -
    \lgsd[p]{\LGp}{\omega} 
    \right) \stackrel{\scriptscriptstyle
    d}{\longrightarrow} \UVN{\bm{0}}{\sigmaz[2]{\LGp}(\omega)}, 
  \end{align}
  where the variance $\sigmaz[2]{\LGp}(\omega)$ is a suitably scaled
  version of the limit of
  \begin{align}
    \nonumber
    \Var{\hatlgsdM[p]{\LGp}{\omega}{m}}
    &= 4\cdot \Var{\bmLambdaz[\!\prime]{m}(\omega)\cdot
      \bmEz[\prime]{m}\cdot\widehatbmthetaz{\LGp|\overbar{m}|\bm{b}}} \\
    \label{eq:lgsde_var_m}
    &= 4\cdot \bmLambdaz[\prime]{m}(\omega)\cdot \bmEz[\prime]{m}\cdot
    \Var{\widehatbmthetaz{\LGp|\overbar{m}|\bm{b}}}\cdot
    \bmEz{m}\cdot \bmLambdaz{m}(\omega).
  \end{align}

  The asymptotic normality required in
  \cref{eq:asymptotic_limit_for_theta} follows from
  \cref{th:asymptotics_for_hatlgtheta} (page
  \pageref{th:asymptotics_for_hatlgtheta}), i.e.\ the scaling factor
  $\cz{n|m|\bm{b}}$ will be
  $\sqrt{n \!\parenRz[3]{}{\bz{1}\bz{2}}}$, whereas the
  asymptotic covariance matrix $\Sigmaz{\LGp|\overbar{m}}$ can be
  written as the direct sum of the covariance matrices for
  \mbox{$\sqrt{n \!\parenRz[3]{}{\bz{1}\bz{2}}}\cdot
    \hatlgthetab[p]{\LGp}{h}{\bmbzh{h}}$}, i.e.\
  \begin{align}
    \Var{\sqrt{n \!\parenRz[3]{}{\bz{1}\bz{2}}}\cdot
    \widehatbmthetaz{\LGp|\overbar{m}|\bm{b}}}  
    = \oplusss{h=1}{m} \Var{\sqrt{n
    \!\parenRz[3]{}{\bz{1}\bz{2}}}\cdot
    \hatlgthetab[p]{\LGp}{h}{\bmbzh{h}}}, 
  \end{align}
  from which a simple calculation gives
  \begin{align}
    \label{eq:lgsde_var_m_computed}
    \Var{\sqrt{n \!\parenRz[3]{}{\bz{1}\bz{2}}}\cdot
    \hatlgsdM[p]{\LGp}{\omega}{m}} =     4\cdot \sumss{h=1}{m}     \lambdazM[2]{h}{m} \cdot     \subp{\cos}{}{}{}{2} (2\pi\omega h)     \cdot \Var{\sqrt{n \!\parenRz[3]{}{\bz{1}\bz{2}}}\cdot
    \hatlgacrb[p]{\LGp}{h}{\bmbzh{h}}}. 
  \end{align}

  From this it is clear that the scaling factor $\cz{n|m|\bm{b}}$
  requires an additional scaling with $\sqrt{1/m}$ in order to include
  the averaging factor $1/m$ for the sum in
  \cref{eq:lgsde_var_m_computed}.  Thus,
  \mbox{$\cz[\,\prime]{n|m|\bm{b}} = \sqrt{n
      \!\parenRz[3]{}{\bz{1}\bz{2}} \!/ m}$}, which completes
  the proof.
                                                \end{proof}

  Some care must be taken formally with regard to the limiting
  \mbox{$5m$-variate} normal distribution in
  \cref{eq:asymptotic_limit_for_theta}, since it has to be interpreted
  as something that is approximately valid for large (but finite)
  values of the truncation point~$m$.  The univariate normal
  distribution in \cref{eq:asymptotic_limit_for_hatlgsd} is the one of
  interest, and this will under the required assumptions
    be well defined in the limit.

\subsection{The complex-valued case}
\label{sec:lgsd_estimate_complex_valued_case}

\begin{theorem}[Complex-valued case]
    \label{th:asymptotics_for_hatlgsd_complex_valued}
                            If the local Gaussian spectral density $\lgsd[p]{\LGp}{\omega}$ is a
  complex valued function for a point
  \mbox{$\LGp=\parenR{\LGpi{1},\LGpi{2}}$}, i.e.\
  \mbox{$\lgsd[p]{\LGp}{\omega} = \lgsdRE[p]{\LGp}{\omega} -
    i\lgsdIM[p]{\LGp}{\omega}$}, with
  \mbox{$\lgsdIM[p]{\LGp}{\omega}\not\equiv0$}, then, under
  \cref{assumption_Yt,assumption_score_function,assumption_Nmb}, the components
  $\hatlgsdREM[p]{\LGp}{\omega}{m}$ and
  $\hatlgsdIMM[p]{\LGp}{\omega}{m}$ of the \mbox{$m$-truncated}
  estimate $\hatlgsdM[p]{\LGp}{\omega}{m}$ will, when
  $\omega\not\in\tfrac{1}{2}\cdot\ZZ\defeq\parenC{\dotsc,-1,-\tfrac{1}{2},0,\tfrac{1}{2},1,\dotsc}$,
  be jointly asymptotically normally distributed as given below.
    \begin{align}
    \label{eq:th:asymptotics_for_hatlgsd_off_diagonal_Yt_not_reversible}
    \sqrt{n \!\parenRz[3]{}{\bz{1}\bz{2}} \!/ m} \cdot     \parenR{
    \begin{bmatrix}
      \hatlgsdREM[p]{\LGp}{\omega}{m} \\
      \hatlgsdIMM[p]{\LGp}{\omega}{m}
    \end{bmatrix}
    -     \begin{bmatrix}
      \lgsdRE[p]{\LGp}{\omega} \\
      \lgsdIM[p]{\LGp}{\omega}
    \end{bmatrix} }
    \stackrel{\scriptscriptstyle d}{\longrightarrow}
    \UVN{
    \begin{bmatrix}
      0 \\
      0
    \end{bmatrix}}{
    \begin{bmatrix}
      \sigmaz[2]{\!c:\LGp}(\omega) & 0 \\
      0 &\sigmaz[2]{\!q:\LGp}(\omega) 
    \end{bmatrix}},
  \end{align}
  where the variances $\sigmaz[2]{\!c:\LGp}(\omega)$ and
  $\sigmaz[2]{\!q:\LGp}(\omega)$ are given by
  \begin{subequations}
    \label{eq:th:asymptotics_for_hatlgsd_variance_RE_and_IM}
    \begin{align}
    \label{eq:th:asymptotics_for_hatlgsd_variance_RE}
      \sigmaz[2]{\!c:\LGp}(\omega)
      &=         \lim_{\mlimit} \frac{1}{m} \sumss{h=1}{m}       \lambdazM[2]{h}{m} \cdot       \subp{\cos}{}{}{}{2} (2\pi\omega h)       \cdot \parenC{
        \tildesigmaz[2]{\!\LGp}(h) +   \tildesigmaz[2]{\!\LGpd}(h) 
        } \\
    \label{eq:th:asymptotics_for_hatlgsd_variance_IM}
      \sigmaz[2]{\!q:\LGp}(\omega)
      &=         \lim_{\mlimit} \frac{1}{m} \sumss{h=1}{m}       \lambdazM[2]{h}{m} \cdot       \subp{\sin}{}{}{}{2} (2\pi\omega h)       \cdot \parenC{
        \tildesigmaz[2]{\!\LGp}(h) +   \tildesigmaz[2]{\!\LGpd}(h) 
        },
    \end{align}
  \end{subequations}
                with $\tildesigmaz[2]{\!\LGp}(h)$ and
  $\tildesigmaz[2]{\!\LGpd}(h)$ related to respectively
  $\hatlgacrb[p]{\LGp}{h}{\bmbzh{h}}$ and
  $\hatlgacrb[p]{\LGpd}{h}{\bmbzh{h}}$ as given in
  \cref{th:asymptotics_for_hatlgsd}.

  The component $\hatlgsdIMM[p]{\LGp}{\omega}{m}$ is identical to~0
  when $\omega\in\tfrac{1}{2}\cdot\ZZ$, and for these frequencies the
  following asymptotic result holds under the given assumptions
  \begin{align}
    \label{eq:th:asymptotics_for_hatlgsd_off_diagonal_Yt_not_reversible_corner_cases}
    \sqrt{n \!\parenRz[3]{}{\bz{1}\bz{2}} \!/ m} \cdot     \parenR{
      \hatlgsdM[p]{\LGp}{\omega}{m} - 
      \lgsd[p]{\LGp}{\omega}}
    \stackrel{\scriptscriptstyle d}{\longrightarrow}
    \UVN{0}{\sigmaz[2]{\!c:\LGp}(\omega)}.
  \end{align}
  
\end{theorem}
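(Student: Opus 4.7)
The strategy is a straightforward extension of the argument used for \cref{th:asymptotics_for_hatlgsd}, but now both the stacked parameter vector at $\LGp$ and its diagonal reflection at $\LGpd$ must be handled jointly. Writing $e^{\pm 2\pi i\omega h}=\cos(2\pi\omega h)\pm i\sin(2\pi\omega h)$ in \cref{eq:App:hatlgsd_definition_main_document}, I would first separate real and imaginary parts:
\begin{align*}
  \hatlgsdREM[p]{\LGp}{\omega}{m}
  &=1+\sumss{h=1}{m}\lambdazM{h}{m}\cos(2\pi\omega h)\,\bigl[\hatlgacrb[p]{\LGp}{h}{\bmbzh{h}}+\hatlgacrb[p]{\LGpd}{h}{\bmbzh{h}}\bigr],\\
  \hatlgsdIMM[p]{\LGp}{\omega}{m}
  &=\sumss{h=1}{m}\lambdazM{h}{m}\sin(2\pi\omega h)\,\bigl[\hatlgacrb[p]{\LGp}{h}{\bmbzh{h}}-\hatlgacrb[p]{\LGpd}{h}{\bmbzh{h}}\bigr].
\end{align*}
Both quantities are linear functionals of the stacked vector obtained by concatenating $\widehatbmthetaz{\LGp|\overbar{m}|\bm{b}}$ with $\widehatbmthetaz{\LGpd|\overbar{m}|\bm{b}}$, so the proof reduces to establishing an asymptotic normality result for this $10m$-variate vector and then applying \citet[Prop.~6.4.2]{Brockwell:1986:TST:17326} as in \cref{sec:proof_of_th:asymptotics_for_hatlgsd}.

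The first step is therefore to extend \cref{th:asymptotics_for_hatlgtheta} to produce a joint asymptotic normality statement for the stacked estimator at the two points $\LGp$ and $\LGpd$ (with $\LGpnotdiagonal$). The block-diagonal structure in $h$ used in the real-valued case is preserved from \cref{th:asymptotics_for_hatlgtheta}; what is new is that, for each fixed lag $h$, I would argue that $\hatlgacrb[p]{\LGp}{h}{\bmbzh{h}}$ and $\hatlgacrb[p]{\LGpd}{h}{\bmbzh{h}}$ are asymptotically uncorrelated. This follows because the kernel windows $\Kz{\bm{b}}(\cdot-\LGp)$ and $\Kz{\bm{b}}(\cdot-\LGpd)$ have disjoint supports in the limit $\blimit$ (since $\LGp\neq\LGpd$), so the leading-order contributions to the score-based expansion of \cref{eq:QhN_derivatives_paper} at $\LGp$ and at $\LGpd$ involve disjoint index sets of observations up to a set of asymptotically negligible measure. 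Writing $A_h\defeq\hatlgacrb[p]{\LGp}{h}{\bmbzh{h}}$, $B_h\defeq\hatlgacrb[p]{\LGpd}{h}{\bmbzh{h}}$, one then obtains $\Var(A_h\pm B_h)\sim\tildesigmaz[2]{\LGp}(h)+\tildesigmaz[2]{\LGpd}(h)$ and $\Cov(A_h+B_h,A_h-B_h)\sim\tildesigmaz[2]{\LGp}(h)-\tildesigmaz[2]{\LGpd}(h)$, each scaled by $1/(n\parenRz[3]{}{\bz{1}\bz{2}})$.

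The second step is the covariance calculation for $(\hatlgsdREM,\hatlgsdIMM)$ after scaling by $\sqrt{n\parenRz[3]{}{\bz{1}\bz{2}}/m}$. The variances directly produce \cref{eq:th:asymptotics_for_hatlgsd_variance_RE,eq:th:asymptotics_for_hatlgsd_variance_IM}. The cross-covariance reduces, after the block-diagonality of the underlying $5m$-variate covariance is exploited, to
\begin{align*}
  \frac{1}{m}\sumss{h=1}{m}\lambdazM[2]{h}{m}\cos(2\pi\omega h)\sin(2\pi\omega h)\bigl[\tildesigmaz[2]{\LGp}(h)-\tildesigmaz[2]{\LGpd}(h)\bigr]
  =\frac{1}{2m}\sumss{h=1}{m}\lambdazM[2]{h}{m}\sin(4\pi\omega h)\,\Deltaz{h},
\end{align*}
with $\Deltaz{h}\defeq\tildesigmaz[2]{\LGp}(h)-\tildesigmaz[2]{\LGpd}(h)$ a uniformly bounded sequence (boundedness following from \cref{assumption_Yt,assumption_score_function} and the mixing-based stabilisation as $\hlimit$). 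I would then use a standard Abel/partial-summation argument to show that for every $\omega\not\in\tfrac12\ZZ$ this Cesàro average tends to zero, since the partial sums $\sumss{h=1}{N}\sin(4\pi\omega h)$ are uniformly bounded whenever $4\omega\not\in\ZZ$, while the special values $\omega\in\tfrac14\ZZ\setminus\tfrac12\ZZ$ make $\sin(4\pi\omega h)\equiv0$. This yields the block-diagonal limiting covariance matrix in \cref{eq:th:asymptotics_for_hatlgsd_off_diagonal_Yt_not_reversible}.

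The third step handles the corner cases $\omega\in\tfrac12\ZZ$: then $\sin(2\pi\omega h)=0$ for all integer $h$, so $\hatlgsdIMM[p]{\LGp}{\omega}{m}\equiv0$ and $\hatlgsdM[p]{\LGp}{\omega}{m}$ is real valued and identical to $\hatlgsdREM[p]{\LGp}{\omega}{m}$; the univariate result \cref{eq:th:asymptotics_for_hatlgsd_off_diagonal_Yt_not_reversible_corner_cases} follows directly from the variance calculation above with variance $\sigmaz[2]{c:\LGp}(\omega)$. The main obstacle I anticipate is the first step: rigorously justifying the asymptotic independence of the local estimators at $\LGp$ and $\LGpd$, because the argument must combine the disjoint-support property of the kernels (valid only in the limit $\blimit$) with the strong mixing of $\TSR{\Yz{t}}{t\in\ZZ}{}$ in such a way that the cross-covariance contributions in the extended analogue of \cref{eq:QhN_derivatives_paper} are shown to be of smaller order than the diagonal contributions under the rate assumptions of \cref{assumption_Nmb}.
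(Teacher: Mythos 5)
Your architecture is the same as the paper's: the identical real/imaginary decomposition of $\hatlgsdM[p]{\LGp}{\omega}{m}$ into linear functionals of the stacked correlation estimates at $\LGp$ and $\LGpd$, a joint asymptotic normality result for the stacked parameter vector at the two points with block-diagonal covariance (the paper's \cref{th:asymptotics_for_hatlgtheta_LGp_and_LGpd}), and then the linear-functional argument of \cref{th:asymptotics_for_hatlgsd} via Brockwell--Davis. You correctly single out the asymptotic independence of the estimators at $\LGp$ and $\LGpd$ as the new technical content, and your mechanism is the paper's: the paper proves (in \cref{th:integrals_kernel_and_score_components_LGP_and_LGpd}) that the cross-term integrals with one kernel centred at $\LGp$ and the other at $\LGpd$ are $\oh{1}$ because the two kernels asymptotically sift out different points; since the kernel has unbounded support this is done through a decay estimate rather than literal disjointness, but your sketch captures the right idea. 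Your treatment of the corner cases $\omega\in\tfrac{1}{2}\ZZ$ also matches; the paper additionally remarks that $\omega\notin\tfrac{1}{2}\ZZ$ is what guarantees $\sigmaz[2]{\!q:\LGp}(\omega)\neq0$, so that the bivariate limit law is non-degenerate.

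The one step that would fail as written is the partial-summation argument for the vanishing of the cross-covariance $\tfrac{1}{2m}\sumss{h=1}{m}\lambdazM[2]{h}{m}\sin(4\pi\omega h)\Deltaz{h}$. Bounded partial sums of $\sin(4\pi\omega h)$ together with mere boundedness of $\Deltaz{h}$ do not force the Ces\`{a}ro average to zero: Abel summation transfers the bound to the total variation of the coefficient sequence $\lambdazM[2]{h}{m}\Deltaz{h}$, which you have not controlled, and an adversarial bounded choice such as $\Deltaz{h}=\operatorname{sgn}\parenR{\sin(4\pi\omega h)}$ yields a strictly positive limit. What actually rescues the step is stronger than boundedness and is already implicit in your parenthetical remark about mixing-based stabilisation: \myref{assumption_Yt}{assumption_Yt_strong_mixing} forces $\gh[\yh{h}]{h}$ to converge, as $\hlimit$, to the product of the identical normalised marginals, which is diagonally symmetric, so $\tildesigmaz[2]{\!\LGp}(h)$ and $\tildesigmaz[2]{\!\LGpd}(h)$ share the same limit and $\Deltaz{h}\rightarrow0$; then
\begin{align*}
  \parenAbs{\tfrac{1}{m}\sumss{h=1}{m}\lambdazM[2]{h}{m}\cos(2\pi\omega h)\sin(2\pi\omega h)\Deltaz{h}}
  \leq \tfrac{1}{m}\sumss{h=1}{m}\parenAbs{\Deltaz{h}} \longrightarrow 0
\end{align*}
directly, with no need for cancellation in the trigonometric factor. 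The paper itself does not spell out this cross-covariance computation (it simply asserts that the argument of the real-valued theorem carries over), so you are being more explicit than the source here; just replace the Abel-summation step by the $\Deltaz{h}\rightarrow0$ argument and the proof is complete.
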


\begin{proof}
  The case $\omega\in\tfrac{1}{2}\cdot\ZZ$ can be proved by the exact
  same argument that was used in the proof of
  \cref{th:asymptotics_for_hatlgsd}, whereas the general case requires
  a bivariate extension of that proof.  In particular, when the proof
  of \cref{th:asymptotics_for_hatlgsd} is used on
  $\hatlgsdREM[p]{\LGp}{\omega}{m}$ and
  $\hatlgsdIMM[p]{\LGp}{\omega}{m}$, it follows that they can be
  written as
    \begin{subequations}
    \label{eq:M_truncated_lgsdRE_and_lgsdIM_vectors}
    \begin{align}
      \label{eq:M_truncated_lgsdRE_vector}
      \hatlgsdREM[p]{\LGp}{\omega}{m} 
      &= 1 +         \bmLambdaz[\!\prime]{c|m}(\omega)\cdot \widehatbmPz{\LGp|m|\bm{b}} +         \bmLambdaz[\!\prime]{c|m}(\omega)\cdot \widehatbmPz{\LGpd|m|\bm{b}}         = 1 + \bmLambdaz[\!\prime]{c|\OUbar{m}}(\omega)\cdot
        \widehatbmPz{\LGp|\OUbar{m}|\bm{b}} \\
            \label{eq:M_truncated_lgsdIM_vector}
      \hatlgsdIMM[p]{\LGp}{\omega}{m} 
      &= 0 +         \bmLambdaz[\!\prime]{q|m}(\omega)\cdot \widehatbmPz{\LGp|m|\bm{b}} -         \bmLambdaz[\!\prime]{q|m}(\omega)\cdot \widehatbmPz{\LGpd|m|\bm{b}}         = 0 + \bmLambdaz[\!\prime]{q|\OUbar{m}}(\omega)\cdot
        \widehatbmPz{\LGp|\OUbar{m}|\bm{b}},
    \end{align}
  \end{subequations}
    where $\bmLambdaz[\!\prime]{c|m}(\omega)$ and
  $\bmLambdaz[\!\prime]{q|m}(\omega)$ are the coefficient vectors
  containing respectively the cosines and sines, where
  $\widehatbmPz{\LGp|m|\bm{b}}$ and $\widehatbmPz{\LGpd|m|\bm{b}}$
  contains the estimated correlations corresponding to $\LGp$ and
  $\LGpd$ for the lags under consideration, and where the length~$2m$
  vectors $\bmLambdaz[\!\prime]{c|\OUbar{m}}(\omega)$,
  $\bmLambdaz[\!\prime]{q|\OUbar{m}}(\omega)$ and
  $\widehatbmPz{\LGp|\OUbar{m}|\bm{b}}$ are defined in the obvious
  manner in order to get a more compact notation.
              Following the same line of argument as in the proof of
  \cref{th:asymptotics_for_hatlgsd}, it follows that
  \mbox{$\widehatbmPz{\LGp|\OUbar{m}|\bm{b}} = \parenR{
      \bmEz[\prime]{m} \oplus \bmEz[\prime]{m}}\cdot
    \widehatbmThetaz{\OUbar{m}|\bm{b}}\!\parenR{\LGp,\LGpd}$}, where
  $\widehatbmThetaz{\OUbar{m}|\bm{b}}\!\parenR{\LGp,\LGpd}$ is the
  full set of estimated parameters from the local Gaussian
  approximations at $\LGp$ and $\LGpd$ for the lags under
  consideration,\footnote{    The vector
    $\widehatbmThetaz{\OUbar{m}|\bm{b}}\!\parenR{\LGp,\LGpd}$ can be
    expressed as a combination of
    $\widehatbmthetaz{\LGp|\overbar{m}|\bm{b}}$ and
    $\widehatbmthetaz{\LGpd|\overbar{m}|\bm{b}}$, where
    $\widehatbmthetaz{\LGp|\overbar{m}|\bm{b}}$ is the parameter
    vector from the proof of \cref{th:asymptotics_for_hatlgsd}.}   and where
  \mbox{$\parenR{ \bmEz[\prime]{m} \oplus \bmEz[\prime]{m}}$} is
  the matrix that picks out the relevant autocorrelations.

  Based upon this, it follows that the target of interest can be
  written as
  \begin{align}
    \begin{bmatrix}
      \hatlgsdREM[p]{\LGp}{\omega}{m} \\
      \hatlgsdIMM[p]{\LGp}{\omega}{m}
    \end{bmatrix} =     \begin{bmatrix}
      1 \\
      0
    \end{bmatrix} +     \begin{bmatrix}
      \bmLambdaz[\!\prime]{c|\OUbar{m}}(\omega) \\
      \bmLambdaz[\!\prime]{q|\OUbar{m}}(\omega)
    \end{bmatrix}\cdot
    \parenR{ \bmEz[\prime]{m} \oplus \bmEz[\prime]{m}}\cdot
    \widehatbmThetaz{\OUbar{m}|\bm{b}}\!\parenR{\LGp,\LGpd},
  \end{align}
  which together with the asymptotic normality result from
  \cref{th:asymptotics_for_hatlgtheta_LGp_and_LGpd}, i.e.\
  \begin{align}
        \sqrt{n \!\parenRz[3]{}{\bz{1}\bz{2}}} \cdot     \left(\widehatbmThetaz{\OUbar{m}|\bm{b}}\!\parenR{\LGp,\LGpd} -
    \bmThetaz{\OUbar{m}}\!\parenR{\LGp,\LGpd} \right)     \stackrel{\scriptscriptstyle d}{\longrightarrow}
    \UVN{\bm{0}}{      \Sigmaz{\LGp|\overbar{m}} \oplus \Sigmaz{\LGpd|\overbar{m}}
    },
  \end{align}
  gives the result when the arguments in the proof of
  \cref{th:asymptotics_for_hatlgsd} are applied to the present setup.
  Note that the requirement $\omega\not\in\tfrac{1}{2}\cdot\ZZ$ is
  needed in order to ensure that the variance
  $\sigmaz[2]{\!q:\LGp}(\omega)$ is different from~0, which is
  needed in order for \citet[Proposition~6.4.2,
  p.~211]{Brockwell:1986:TST:17326} to be valid in this~case.
\end{proof}

\subsection{The finite sample case and the variance of $\hatlgsdM{\LGp}{\omega}{m}$}
\label{app:finite_sample_and_variance_LGSD}

The variance of the estimated local Gaussian spectral density
$\hatlgsdM{\LGp}{\omega}{m}$, as seen in \cref{eq:lgsde_var_m}, is a
function of both the point $\LGp$ and the frequency $\omega$.  It is
with regard to this of interest to note that the variance
$\sigmaz[2]{\!\LGp}(\omega)$ is symmetric around
$\omega=\tfrac{1}{4}$, and it attains its highest values when
$\omega \in \parenC{0,\tfrac{1}{2}}$.  This symmetry is a consequence
of the fact that all the correlation terms are asymptotically
negligible.

The correlation-terms are however still present in the $m$-truncated
case, and this changes the situation a bit.  To clarify: The
correlation terms will depend on the frequency $\omega$ trough the
functions $\cos(2\pi\omega k)\cdot\cos(2\pi\omega \ell)$, and these
functions are in general not symmetrical around $\omega=\tfrac{1}{4}$.
For $\omega=0$ all these products are equal to~1, whereas the value
for $\omega=\tfrac{1}{2}$ will be given by
$\cos(2\pi\omega k)\cdot\cos(2\pi\omega \ell) = (-1)^{k+\ell}$.  The
consequence of this is that the highest value of this variance is
obtained at $\omega=0$ --- which in particular was evident in the
plots related to the apARCH-model and the \texttt{dmpb}-data, cf.\
\cref{fig:GARCH,fig:dmbp} on pages \pageref{fig:GARCH} and
\pageref{fig:dmbp} in the main document, where a \enquote{trumpet
  shape} could be seen for the pointwise confidence intervals near
$\omega=0$.

\section{Asymptotic results for $\widehatbmthetaz{\LGp|\overbar{m}|\bm{b}}$}
\label{sec:Technical_Results}
\setcounter{figure}{0}

This section will investigate the asymptotic properties of the
parameter vector $\widehatbmthetaz{\LGp|\overbar{m}|\bm{b}}$, that
is used in the proof of \cref{th:asymptotics_for_hatlgsd}.  The proof
is similar in spirit to the one used in \citet{Tjostheim201333} for
the asymptotic investigation of the parameter vectors
$\hatlgthetab[p]{\LGp}{h}{\bmbzh{h}}$, i.e.\ the Klimko-Nelson penalty
function approach will be used to derive the desired result.

\Cref{App:local_penalty_function_Klimko_Nelson_approach} explains the
Klimko-Nelson approach and shows how a local penalty function for the
present case can be constructed based on the local penalty function
encountered in \citet{Tjostheim201333}.
\Cref{App:A4_requirement_general_case} verifies the fourth of the
requirements needed for the Klimko-Nelson approach, and the asymptotic
results for $\widehatbmthetaz{\LGp|\overbar{m}|\bm{b}}$ are
collected in \cref{app:asymptotics_for_theta}.  

  The asymptotic investigation requires several indices in order to
  keep track of the different components, and to simplify references
  to $\LGp$ and $\bm{b}$ will whenever possible be suppressed
  from the~notation.

\makeatletter{}

\subsection{Local penalty functions and the Klimko-Nelson approach}
\label{App:local_penalty_function_Klimko_Nelson_approach}

\citet{Tjostheim201333} used a local penalty function to define the
\textit{local Gaussian correlation}~$\lgcor[5]{\LGp}$ as a new
\textit{local measure of dependence} at a point~$\LGp$, and then used
the approach formalised
in \citet{klimko1978}, to investigate the asymptotic properties
of~$\hatlgcor[5]{\LGp}$.
The \textit{local Gaussian spectral density}~$\lgsd[p]{\LGp}{\omega$} is
based on the local Gaussian autocorrelations~$\lgacr[p]{\LGp}{h}$, and
the asymptotic properties of the estimates $\hatlgsdM[p]{\LGp}{\omega}{m}$
are thus closely connected to the asymptotic properties
of~$\hatlgacr[p]{\LGp}{h}$.

The Klimko-Nelson approach shows how the asymptotic properties of
\textit{an estimate of the parameters of a penalty function $Q$} can
be expressed relative to the asymptotic properties of (entities
related to) the penalty function itself.
This result plays a pivotal role in the present analysis, and it has
thus been included in \cref{app:Klimko-Nelson-approach}.

\Cref{app:bivariate_penalty_functions} presents the bivariate
definitions and results from~\citet{Tjostheim201333}, with the
notational modifications that are needed in order to make it fit into
the multivariate approach in the present~paper.  The bivariate penalty
functions~$\QhN{h}{n}$ from \citet{Tjostheim201333} will be used as
building blocks for the new penalty function.

\makeatletter{}
\subsubsection{The Klimko-Nelson approach}
\label{app:Klimko-Nelson-approach}

The following presentation is based on
\citet[Th.~3.2.23]{taniguchi00:_asymp_theor_statis_infer_time_series}.

Let $\TSR{\bmXz{t}}{t\in\ZZ}{}$ be an \mbox{$m$-variate} strictly
stationary and ergodic process that satisfies the requirement
\mbox{$\E{\parenABSz[2]{}{\bmXz{t}}}<\infty$}.  Consider a general
real valued penalty function
{$\Qz{n}=\Qz{n}(\bm{\theta}) =
  \Qz{n}\!\left(\bmXz{1},\dotsc,\bmXz{n};\bm{\theta}\right)$}, which
should depend upon $n$ observations $\TSR{\bmXz{t}}{i=1}{n}$ and a
parameter vector $\bm{\theta}$ that lies in an open set
\mbox{$\bm{\Theta}\in\RRn{p}$}, and let the true value of the
parameter be denoted by $\bmthetaz[\circ]{}$.  Add the requirement
that $\Qz{n}$ must be twice continuously differentiable with respect
to $\bm{\theta}$ a.e.\ in a neighbourhood $\mathcal{N}$ of
$\bmthetaz[\circ]{}$, such that the following Taylor expansion is
valid (in the neighbourhood $\mathcal{N}$) for
\mbox{$\parenABS{\bm{\theta} - \bmthetaz[\circ]{}} < \delta$},
\begin{subequations}
  \label{eq:Klimko_Nelson_Taylor_expansion}
  \begin{align}
    \nonumber
    \Qz{n}(\bm{\theta})      &= \Qz{n}\!\left(\bmthetaz[\circ]{}\right) +
      \parenRz[\prime]{}{\bm{\theta} - \bmthetaz[\circ]{}}
      \frac{\partial}{\partial
      \bm{\theta}}\Qz{n}\!\left(\bmthetaz[\circ]{}\right) +
      \frac{1}{2} \parenRz[\prime]{}{\bm{\theta} - \bmthetaz[\circ]{}}
      \frac{\partialz[2]{}}{\partial\bm{\theta}\partial
      \bmthetaz[\prime]{}} 
      \Qz{n}\!\left(\bmthetaz[\circ]{}\right) 
      \parenRz{}{\bm{\theta} - \bmthetaz[\circ]{}} \\
    \label{eq:Klimko_Nelson_Taylor_expansion_before_matrices}
    &\phantom{=\ } +
      \frac{1}{2} \parenRz[\prime]{}{\bm{\theta} - \bmthetaz[\circ]{}}
      \parenC{\frac{\partialz[2]{}}{\partial\bm{\theta}\partial
      \bmthetaz[\prime]{}} 
      \Qz{n}\!\left(\bmthetaz[*]{}\right) -
      \frac{\partialz[2]{}}{\partial\bm{\theta}\partial 
      \bmthetaz[\prime]{}} 
      \Qz{n}\!\left(\bmthetaz[\circ]{}\right) }
      \parenRz{}{\bm{\theta} - \bmthetaz[\circ]{}} \\
      \nonumber
    &= \Qz{n}\!\left(\bmthetaz[\circ]{}\right) +
      \parenRz[\prime]{}{\bm{\theta} - \bmthetaz[\circ]{}}
      \frac{\partial}{\partial
      \bm{\theta}}\Qz{n}\!\left(\bmthetaz[\circ]{}\right) +
      \frac{1}{2} \parenRz[\prime]{}{\bm{\theta} - \bmthetaz[\circ]{}}
      \Vz{n}       \parenRz{}{\bm{\theta} - \bmthetaz[\circ]{}} \\
    \label{eq:Klimko_Nelson_Taylor_expansion_with_matrices}
    &\phantom{=\ } +
      \frac{1}{2} \parenRz[\prime]{}{\bm{\theta} - \bmthetaz[\circ]{}}
      \Tz{n}\!\left(\bmthetaz[*]{}\right) 
      \parenRz{}{\bm{\theta} - \bmthetaz[\circ]{}} 
  \end{align}
\end{subequations}
where $\Vz{n}$ and $\Tz{n}\!\left(\bmthetaz[*]{}\right)$ are defined
in the obvious manner, with
\mbox{$\bmthetaz[*]{} =
  \bmthetaz[*]{}\!\left(\bmXz{1},\dotsc,\bmXz{n};\bm{\theta}\right)$}
an intermediate point between $\bm{\theta}$ and
$\bmthetaz[\circ]{}$ (determined by the mean value theorem).

\begin{theorem}[Klimko-Nelson, \citet{klimko1978}]   \label{th:Klimko_Nelson_1978}
  Assume that $\TSR{\bmXz{t}}{t\in\ZZ}{}$ and $\Qz{n}$ are such that as
  $\nlimit$
  \begin{enumerate}[label=\emph{(A\arabic*)}]
  \item     \label{th:Klimko_Nelson_1978_A1}
    $\nz[-1]{}(\partial/\partial\bm{\theta}) \Qz{n}\!\left(
      \bmthetaz[\circ]{}\right)
    \stackrel{\scriptscriptstyle a.s.\ }{\longrightarrow} \bm{0}$,
  \item     \label{th:Klimko_Nelson_1978_A2}
    $\nz[-1]{} \Vz{n} \stackrel{\scriptscriptstyle
      a.s.\ }{\longrightarrow} V$, where $V$ is a \mbox{$p\times p$}
    positive definite matrix, and
  \item     \label{th:Klimko_Nelson_1978_A3}
  for $j,k = 1,\dotsc,p$
  \begin{align}
    \adjustlimits \lim_{\nlimit} \sup_{\delta\rightarrow 0}
    \parenRz[-1]{}{n\delta}
    \parenAbs{\power{jk}{\Tz{n}\!\parenC{\bmthetaz[*]{}}}} <
    \infty\qquad \text{a.s.}  
  \end{align}
  where
  $\power{jk}{\Tz{n}\!\parenC{\bmthetaz[*]{}}}$
  is the \mbox{$(j,k)$}th component of
  $\Tz{n}\!\parenC{\bmthetaz[*]{}}$.
  \end{enumerate}
  Then there exists a sequence of estimators
  \mbox{$\widehatbmthetaz{n} =
    \parenRz[\prime]{}{\widehatthetaz{1},\dotsc,\widehatthetaz{p}}$},
  such that
  \mbox{$\widehatbmthetaz{n} \stackrel{\scriptscriptstyle a.s.\
    }{\longrightarrow} \bmthetaz[\circ]{}$}, and for any
  \mbox{$\epsilon>0$}, there exists an event $E$ with
  \mbox{$P(E)>1-\epsilon$} and an $\nz[\circ]{}$ such that on $E$, for
  \mbox{$n>\nz[\circ]{}$},
  \mbox{$(\partial/\partial\bm{\theta})\Qz{n}(\widehatbmthetaz{n}) =
    \bm{0}$} and $\Qz{n}$ attains a relative minimum at
  $\widehatbmthetaz{n}$.  Furthermore, if
  \begin{enumerate}[label=\emph{(A\arabic*)}, resume]
  \item     \label{th:Klimko_Nelson_1978_A4}
    $\nz[-1/2]{}(\partial/\partial\bm{\theta})\Qz{n}\!\left(
      \bmthetaz[\circ]{}\right) 
    \stackrel{\scriptscriptstyle d}{\longrightarrow} \UVN{\bm{0}}{W}$
  \end{enumerate}
  then
  \begin{align}
    \label{eq:Klimko_Nelson_original_theorem}
    \nz[1/2]{}(\widehatbmthetaz{n} -
      \bmthetaz[\circ]{})
    \stackrel{\scriptscriptstyle d}{\longrightarrow} 
    \UVN{\bm{0}}{\Vz[-1]{}\Wz{}\Vz[-1]{}}.
  \end{align}
\end{theorem}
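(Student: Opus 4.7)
The proof splits naturally into two parts corresponding to the two conclusions: (i) existence of a consistent sequence of local minimisers $\widehat{\bm{\theta}}_n$ satisfying the first-order condition, and (ii) the central-limit statement in \cref{eq:Klimko_Nelson_original_theorem}. Both parts are driven by the Taylor expansion in \cref{eq:Klimko_Nelson_Taylor_expansion_with_matrices}, rearranged so that the quadratic form in $V_n$ is the leading non-vanishing contribution once $Q_n$ is normalised by $n$.

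For the consistency part, the plan is to exhibit, for each sufficiently small $\delta>0$, an event on which $Q_n(\bm{\theta}) - Q_n(\bmthetaz[\circ]{}) > 0$ for every $\bm{\theta}$ on the sphere $\|\bm{\theta}-\bmthetaz[\circ]{}\|=\delta$; continuity of $Q_n$ then forces a local minimum in the interior of the ball, and the first-order condition holds there. Dividing \cref{eq:Klimko_Nelson_Taylor_expansion_with_matrices} by $n$, assumption \ref{th:Klimko_Nelson_1978_A1} makes the linear term in $\bm{\theta}-\bmthetaz[\circ]{}$ of order $o(\delta)$ a.s., assumption \ref{th:Klimko_Nelson_1978_A2} turns the second term into $\tfrac{1}{2}\delta^{2}\,\lambda_{\min}(V) + o(\delta^{2})$, and assumption \ref{th:Klimko_Nelson_1978_A3} bounds the remainder entrywise by $o(\delta^{2})$ a.s.\ as $\delta\downarrow0$. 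Choosing $\delta$ small and $n$ large so that the quadratic term dominates yields the strict positivity on the sphere; the $\varepsilon$-event $E$ in the statement is then just the intersection of the three almost-sure events obtained from \ref{th:Klimko_Nelson_1978_A1}--\ref{th:Klimko_Nelson_1978_A3}, truncated to give probability at least $1-\varepsilon$ uniformly in $n\geq n_{\circ}$.

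For the normality part, I would differentiate \cref{eq:Klimko_Nelson_Taylor_expansion_with_matrices} in $\bm{\theta}$, evaluate at $\widehat{\bm{\theta}}_n$, and use $\partial Q_n/\partial\bm{\theta}(\widehat{\bm{\theta}}_n)=\bm{0}$ from the first part. This rearranges to
\begin{equation*}
\sqrt{n}\,(\widehat{\bm{\theta}}_n - \bmthetaz[\circ]{})
= -\,\bigl[n^{-1}V_n + n^{-1}R_n\bigr]^{-1}\; n^{-1/2}\,\tfrac{\partial}{\partial\bm{\theta}}Q_n(\bmthetaz[\circ]{}),
\end{equation*}
where $R_n$ is the remainder bracket from \cref{eq:Klimko_Nelson_Taylor_expansion_before_matrices}. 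By \ref{th:Klimko_Nelson_1978_A2} the first factor tends a.s.\ to $V^{-1}$; by \ref{th:Klimko_Nelson_1978_A3} together with the already-established a.s.\ convergence $\widehat{\bm{\theta}}_n\to\bmthetaz[\circ]{}$, the correction $n^{-1}R_n$ is a.s.\ negligible; and by \ref{th:Klimko_Nelson_1978_A4} the score factor converges in distribution to $\UVN{\bm{0}}{W}$. Slutsky's theorem then gives the sandwich limit $\UVN{\bm{0}}{V^{-1}WV^{-1}}$.

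The main obstacle is not any single step but the joint bookkeeping required to make \ref{th:Klimko_Nelson_1978_A3} deliver a genuine a.s.\ bound on $n^{-1}R_n$ when $\bm{\theta}$ is replaced by the \emph{random} sequence $\widehat{\bm{\theta}}_n$; this forces one to keep the intermediate point $\bmthetaz[*]{}$ inside a shrinking neighbourhood of $\bmthetaz[\circ]{}$, which is only guaranteed on the same event $E$ constructed during the consistency argument. A secondary technical point is that \ref{th:Klimko_Nelson_1978_A3} is phrased componentwise with a $\limsup_{\delta\to0}$, so the global bound on the spectral norm of $T_n(\bmthetaz[*]{})$ that is needed for the rearrangement has to be obtained by summing the $p^{2}$ componentwise bounds and then matching the rate of shrinkage of the neighbourhood to the rate implicit in \ref{th:Klimko_Nelson_1978_A1}--\ref{th:Klimko_Nelson_1978_A2}. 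Once this uniform control is in place, the rest is a standard delta-method-style manipulation.
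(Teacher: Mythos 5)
The paper does not prove this theorem: it is quoted directly from \citet{klimko1978} (following the presentation in Taniguchi and Kakizawa), so there is no in-paper proof to compare against. Your outline is the classical Klimko--Nelson argument and it is essentially correct: the sphere argument for consistency, with the $V$-quadratic term dominating the $o(\delta)\cdot\delta$ linear contribution from (A1) and the $O(\delta^{3})$ remainder extracted from (A3), followed by the sandwich limit via Slutsky using (A2) and (A4). The one step you should rephrase is \enquote{differentiate the Taylor expansion}: the intermediate point $\bmthetaz[*]{}$ in \cref{eq:Klimko_Nelson_Taylor_expansion_with_matrices} depends on $\bm{\theta}$, so that identity cannot literally be differentiated; instead apply the mean-value theorem afresh to each component of the gradient, giving
\begin{align*}
  \bm{0}
  = (\partial/\partial\bm{\theta})\Qz{n}(\widehatbmthetaz{n})
  = (\partial/\partial\bm{\theta})\Qz{n}\!\left(\bmthetaz[\circ]{}\right)
  + \bigl[\Vz{n} + \Tz{n}\!\parenC{\bmthetaz[*]{}}\bigr]
  \left(\widehatbmthetaz{n} - \bmthetaz[\circ]{}\right),
\end{align*}
which produces exactly the displayed identity you then analyse. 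With that substitution, your bookkeeping concern --- keeping $\bmthetaz[*]{}$ inside a shrinking neighbourhood of $\bmthetaz[\circ]{}$ on the event $E$ so that (A3) applies at the random point --- is precisely the right one, and the remainder of the argument goes through.
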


\makeatletter{}
\subsubsection{The bivariate penalty functions}
\label{app:bivariate_penalty_functions}

This section will translate the bivariate results from
\citet{Tjostheim201333} into the present multivariate framework, and
these bivariate components will then be used to define a new penalty
function in \cref{app:new_penalty_function}.

The main idea from \citet{Tjostheim201333} is to use bivariate
Gaussian densities $\psi\!\left(\yh{h}; \thetah[\LGp]{h}\right)$ to
approximate the bivariate densities $\gh{h}\!\left(\yh{h}\right)$
at a point~$\LGp$, where
\mbox{$\thetah[\LGp]{h} =
  \Vector[\prime]{\thetahcomp[\LGp]{h}{1},\dotsc,\thetahcomp[\LGp]{h}{5}}$}
is the five dimensional parameter-vector of the bivariate Gaussian
distribution.  The point $\LGp$ will be fixed for the remainder
of this discussion, and it will henceforth be dropped from the
notation for the parameters, i.e.\ $\thetah{h}$ should always be
understood as~$\thetah[\LGp]{h}$.

The local investigation requires a bandwidth vector
\mbox{$\bm{b}=\left(\bz{1},\bz{2}\right)$} and a kernel
function~$K(\bm{w})$, which is used to define
$\Khbdef\defeq \tfrac{1}{\bz{1}\bz{2}} K\!\left(  \tfrac{\yz{h} - \LGpi{1}}{\bz{1}},   \tfrac{\yz{0} - \LGpi{2}}{\bz{2}}\right)$, which in turn is used in the following local approximation
around~$\LGp$,
\begin{align}
  \label{eq:penalty_qh}
  \qh{h}{b}
  &\defeq \intss{\RRn{2}}{} \Khbdef \left[     \psi\!\left(\yh{h}; \thetah{h}\right)     - \gh{h}\!\left(\yh{h}\right)    \log\psi\!\left(\yh{h}; \thetah{h}\right)   \right] \dyh{h},
\end{align}
a minimiser of which should satisfy the vector equation
\begin{align}
  \label{eq:def_of_theta_hb} 
  \intss{\RRn{2}}{} \Khbdef \uh[\yh{h}; \thetah{h}]{h} \left[   \psi\!\left(\yh{h}; \thetah{h}\right) -  \gh{h}\!\left(\yh{h}\right)   \right] \dyh{h} &= \bm{0},
\end{align}
where
\mbox{$\uh[\yh{h}; \thetah{h}]{h} \defeq \nablah{h} \log
  \psi\!\left(\yh{h}; \thetah{h}\right)$} is the score function of
$\psi\!\left(\yh{h}; \thetah{h}\right)$ (with \mbox{$\nablah{h}   \defeq \partial/\partial \thetah{h}$}).
Under the assumption that there is a bandwidth $\bmbz{0}$ such that
there exists a minimiser $\thetahb{h}{b}$ of \cref{eq:penalty_qh}
which satisfies \cref{eq:def_of_theta_hb} for any $\bm{b}$ with
\mbox{$\bm{0} < \bm{b} < \bmbz{0}$},\footnote{      Inequalities involving vectors are to be interpreted in a
  component-wise manner.} this $\thetahb{h}{b}$ will be referred to as the population~value for
the given bandwidth~$\bm{b}$.

\Cref{eq:penalty_qh} is a special case of a tool that
\citet{hjort96:_local} introduced in order to perform \textit{locally
  parametric nonparametric density estimation}, but (as was done in
\citet{Tjostheim201333}) it can also be used to define and estimate
local Gaussian parameters --- whose asymptotic properties can be
investigated by means of a local penalty function
$\QhN[\thetah{h}]{h}{n}$, to be described below, and the Klimko-Nelson
approach.

For a sample of size~$n$ from $\TSR{\Yht{h}{t}}{t\in\ZZ}{}$, the
following \mbox{$M$-estimator}\footnote{  The entity $\LhN[\thetah{h}]{h}{n}$ can for independent observations
  be thought of as a \textit{local log-likelihood} or a \textit{local
    kernel-smoothed log-likelihood}, see
  \citet[Section~2-3]{hjort96:_local} for details.  In the realm of
  time series, where the observations are dependent, it is according
  to \citet[page~36]{Tjostheim201333} better to interpret it as an
  $M$-estimation penalty function} will be used, which (due to the ergodicity implied by
\myref{assumption_Yt}{assumption_Yt_strictly_stationary}) will
converge towards the penalty function~$\qh{h}{b}$,
\begin{align}
  \nonumber   \LhN[\thetah{h}]{h}{n}   &\defeq     \LhN[\Yht{h}{1},\dotsc,\Yht{h}{n}; \thetah{h}]{h}{n} \\
  &\defeq   \label{eq:LhN}
    n\inv \sumss{t=1}{n} \KhbDEF \log \psi\!\left( \Yht{h}{t}; \thetah{h}
    \right) 
    - \intss{\RRn{2}}{} \Khbdef \psi\!\left(\yh{h};
    \thetah{h}\right) \dyh{h}.
\end{align}

The local penalty function from \citet{Tjostheim201333} can be
described as
\begin{align}
  \nonumber   \QhN[\thetah{h}]{h}{n} 
  &\defeq     \QhN[\Yht{h}{1},\dotsc,\Yht{h}{n}; \thetah{h}]{h}{n}
    \defeq     -n \LhN[\thetah{h}]{h}{n} \\
  \label{eq:QhN}
  &= - \sumss{t=1}{n} \KhbDEF \log \psi\!\left( \Yht{h}{t}; \thetah{h}
  \right) 
  + n \intss{\RRn{2}}{}  \Khbdef \psi\!\left(\yh{h};
    \thetah{h}\right) \dyh{h},
\end{align}
and it remains to write out how the different components in
\cref{app:Klimko-Nelson-approach} looks like for this particular
penalty function.  A central component is the vector of partial
derivatives, which by the score function $\uh[\yh{h}; \thetah{h}]{h}$
can be given as,
\begin{align}
  \label{eq:QhN_derivatives}
  \nablah{h}\QhN[\thetah{h}]{h}{n} &=   - \sumss{t=1}{n} \left[\KhbDEF \uh[\Yht{h}{t}; \thetah{h}]{h}
    - \intss{\RRn{2}}{} \Khbdef \uh[\yh{h}; \thetah{h}]{h} \psi\!\left(\yh{h};
    \thetah{h}\right) \dyh{h} \right].
\end{align}
Note that the expectation of the bracketed expression in the sum gives
the left hand side of \cref{eq:def_of_theta_hb}, which implies that
the expectation will be~$\bm{0}$ when
$\nablah{h}\QhN[\thetah{h}]{h}{n}$ is evaluated at the population
value~$\thetahb{h}{b}$.

Given a bandwidth $\bm{b}$ which is small enough to ensure a unique
solution $\thetahb{h}{b}$, the next part of interest is the Taylor expansion of order two in a
neighbourhood
$\mathcalNz{h} \defeq \parenC{\thetah{h}: \absp{\thetah{h} -
    \thetahb{h}{b}} < \delta}$ of $\thetahb{h}{b}$,~i.e.\
\begin{subequations}
  \label{eq:Taylor_expansion_QhN}
  \begin{align}
  \nonumber   \QhN[\thetah{h}]{h}{n} 
  &= \QhN[\thetahb{h}{b}]{h}{n} +     \Vector[\prime]{\thetah{h} - \thetahb{h}{b}} \nablah{h}
    \QhN[\thetahb{h}{b}]{h}{n} 
    + \frac{1}{2}   \Vector[\prime]{\thetah{h} - \thetahb{h}{b}}     \VhN{}{h:\bm{b}}{n}     \left[\thetah{h} - \thetahb{h}{b}\right] \\
  \label{eq:Taylor_expansion_QhN_formula}
  &\phantom{=\ } + \frac{1}{2} \Vector[\prime]{\thetah{h} -
    \thetahb{h}{b}}
    \ThN{h:\bm{b}}{n}     \Vector{\thetah{h} - \thetahb{h}{b}},
        \intertext{where}
        \label{eq:matrix_WhN}
    \VhN{}{h:\bm{b}}{n} 
    &\defeq \VhN[\thetahb{h}{b}]{}{h:\bm{b}}{n} \defeq       \nablah{h}\nablah[\prime]{h} \QhN[\thetahb{h}{b}]{h}{n}, \\
    \label{eq:matrix_ThN}
    \ThN{h:\bm{b}}{n} 
    &\defeq \ThN[{\bmthetaz[*]{h}}, \thetahb{h}{b}]{h:\bm{b}}{n} \defeq       \nablah{h}\nablah[\prime]{h} \QhN[{\bmthetaz[*]{h}}]{h}{n} -
      \nablah{h}\nablah[\prime]{h} \QhN[\thetahb{h}{b}]{h}{n},
  \end{align}
\end{subequations}
with $\bmthetaz[*]{h}$ an intermediate point between $\thetah{h}$ and
$\thetahb{h}{b}$, again determined by the mean value theorem.

With the preceding definitions, \citet[theorem~1]{Tjostheim201333}
investigated the case where the bandwidth $\bm{b}$ was fixed as
$\nlimit$, i.e.\
\cref{th:Klimko_Nelson_1978_A1,th:Klimko_Nelson_1978_A2,th:Klimko_Nelson_1978_A3,th:Klimko_Nelson_1978_A4}
of \cref{th:Klimko_Nelson_1978} was verified in order to obtain the
following result for the estimated local Gaussian
parameters~$\hatthetahN{h}{n}$; for every \mbox{$\epsilon > 0$} there
exists an event $\Az{h}$ (possibly depending on the point~$\LGp$) with
\mbox{$\Prob{\Az[c]{h}} < \epsilon$}, such that there exists a
sequence of estimators $\hatthetahN{h}{n}$ that converges almost
surely to $\thetahb{h}{b}$ (the minimiser of $\qh{h}{b}$
from~\cref{eq:penalty_qh}).  And, moreover, the following asymptotic
behaviour is observed
\begin{align}
  \label{eq:CLT_bivariate}
  \parenRz[1/2]{}{n\bz{1}\bz{2}}   \left(\hatthetahN{h}{n} - \thetahb{h}{b}\right)   \stackrel{\scriptscriptstyle d}{\longrightarrow}   \UVN{\bm{0}}{\Sigmaz{h:\bm{b}}},
\end{align}
where
\mbox{$\Sigmaz{h:\bm{b}}\defeq
  \VhN{-1}{h:\bm{b}}{}\WhN{}{h:\bm{b}}{}\VhN{-1}{h:\bm{b}}{}$} with
$\WhN{}{h:\bm{b}}{}$ the matrix occurring in
\cref{th:Klimko_Nelson_1978_A4} of \cref{th:Klimko_Nelson_1978}.

The situation when $\blimit$ as $\nlimit$ requires some extra care
since the presence of the kernel function $\Khb[\bm{w}]{h}{\bm{b}}$ in
$\QhN[\thetah{h}]{h}{n}$, see \cref{eq:QhN}, gives limiting matrices
of $\VhN{}{h:\bm{b}}{}$ and $\WhN{}{h:\bm{b}}{}$ of rank one.
The details are covered in theorems 2 and~3 in
\citet[p.~39-40]{Tjostheim201333}, which ends out with the following
adjusted version of \cref{eq:CLT_bivariate}, where $n$ and
\mbox{$\bm{b}=\left(\bz{1},\bz{2}\right)$} are such that
\mbox{$\log n /n\!\parenRz[5]{}{\bz{1}\bz{2}} \rightarrow 0$},
\begin{align}
  \label{eq:CLT_bivariate_general_case}
  \parenRz[1/2]{}{n\parenRz[3]{}{\bz{1}\bz{2}}}   \left(\hatthetahN{h}{n} - \bmthetaz[\circ]{h}\right)   \stackrel{\scriptscriptstyle d}{\longrightarrow}   \UVN{\bm{0}}{\Sigmaz[\circ]{h}},
\end{align}
where $\bmthetaz[\circ]{h}$ is the $\blimit$ value of $\thetahb{h}{b}$
and where the limiting matrix $\Sigmaz[\circ]{h}$ is a
$\parenRz[2]{}{\bz{1}\bz{2}}$-rescaled version of
matrices \textit{related to} the matrices $\VhN{}{h:\bm{b}}{}$ and
$\WhN{}{h:\bm{b}}{}$, see the discussion in \citet{Tjostheim201333}
for details.

\makeatletter{}

\subsubsection{A new penalty function}
\label{app:new_penalty_function}

The proof of \cref{th:asymptotics_for_hatlgsd} requires an asymptotic
result for the parameter vector
$\widehatbmthetaz{n|\overbar{m}|\bm{b}}$, which was obtained by
combining $m$ parameter vectors corresponding to the bivariate lag~$h$
pairs \mbox{$\parenR{\Yz{t+h},\Yz{t}}$} for \mbox{$h=1,\dotsc,m$}.
This section will show
how a penalty function for $\widehatbmthetaz{n|\overbar{m}|\bm{b}}$
can be constructed based on the bivariate penalty
functions~$\QhN{h}{n}$ defined in
\cref{app:bivariate_penalty_functions}.  The indices $n$ and $\bm{b}$
will for notational simplicity be suppressed from the notation, and
only $\thetaM{m}$ will henceforth be~used.

An analysis akin to the one in Theorem~1 of \citet{Tjostheim201333}
will be performed in this section, i.e.\ the asymptotic situation will
be investigated for the simple case where the truncation~$m$ and the
bandwidth~$\bm{b}$ both are fixed as $\nlimit$.  
The proof that the new penalty function satisfies the four
requirements
\cref{th:Klimko_Nelson_1978_A1,th:Klimko_Nelson_1978_A2,th:Klimko_Nelson_1978_A3,th:Klimko_Nelson_1978_A4}
of \cref{th:Klimko_Nelson_1978} can then be based upon corresponding
components of the proof of Theorem~1 from \citet{Tjostheim201333}.

The general case, where $\mlimit$ and $\blimit$ when $\nlimit$, can
recycle the arguments given here for the requirements in
\cref{th:Klimko_Nelson_1978_A1,th:Klimko_Nelson_1978_A2,th:Klimko_Nelson_1978_A3},
but extra work is needed for the requirement given in
\cref{th:Klimko_Nelson_1978_A4}.  The details needed for
\cref{th:Klimko_Nelson_1978_A4} will be covered in
\cref{App:A4_requirement_general_case}.

With regard to the construction of the new penalty function, the main
observation of interest is that the~$\QhN[\thetah{h}]{h}{n}$ from
\cref{app:bivariate_penalty_functions} was defined for bivariate time
series~$\TSR{\Yht{h}{t}}{t\in\ZZ}{}$, whereas the new penalty function
will be defined for the \mbox{$(m+1)$-variate} time
series~$\TSR{\YMt{m}{t}}{t\in\ZZ}{}$.
The first step is to extend the penalty functions $\QhN{h}{n}$,
\mbox{$h=1,\dotsc,m$} from expression based on $\Yht{h}{t}$ to
expressions based on $\YMt{m}{t}$, but this is trivial since the
bivariate functions occurring in the definition of
$\QhN[\thetah{h}]{h}{n}$ can be extended in a natural manner to
\mbox{$(m+1)$-variate} functions, as mentioned in
\cref{note:details_for_Yt_definition}, which gives the desired
functions $\tildeQhN[\thetah{h}]{h}{n}$.

\begin{definition}
  Let the new penalty function $\QMN[\thetaM{m}]{m}{n}$ be given as
  follows,
                                \begin{subequations}
    \label{eq:QMN_definition}
    \begin{align}
      \QMN[\thetaM{m}]{m}{n} 
      &\defeq 
        \QMN[\YMt{m}{1},\dotsc,\YMt{m}{n}; \thetaM{m}]{m}{n} \defeq 
        \sumss{h=1}{m} \tildeQhN[\thetah{h}]{h}{n}, \\
      \intertext{where $\thetaM{m}$ is the column vector obtained by stacking all the
    individual $\thetah{h}$ on top of each other,~i.e.}
        \thetaM{m} 
      &\defeq \Vector[\prime]{\bmthetaz[\prime]{1},\dotsc,\bmthetaz[\prime]{m}}.
  \end{align}
  \end{subequations}
\end{definition}

The $m$ components $\tildeQhN[\thetah{h}]{h}{n}$ in the sum that
defines $\QMN[\thetaM{m}]{m}{n}$ have no common parameters, which
implies that the optimisation of the parameters for the different
summands can be performed~independently.
For a given sample from~$\TSR{\YMt{m}{t}}{t\in\ZZ}{}$ and for a given
bandwidth~$\bm{b}$, the optimal parameter vector~$\hatthetaMN{m}{n}$
for $\QMN[\thetaM{m}]{m}{n}$ can thus be constructed by stacking on
top of each other the parameter vectors that optimise the individual
summands in~\cref{eq:QMN_definition} --- and these are the parameter
vectors~$\hatthetahN{h}{n}$ that shows up for the~$m$ bivariate cases
in \cref{eq:CLT_bivariate}.  Since each~$\hatthetahN{h}{n}$ converge
almost surely to~$\thetahb{h}{b}$, it is clear
that~$\hatthetaMN{m}{n}$ will converge almost surely
to~$\thetaMb{m}{b}$, the vector obtained by stacking the~$m$
vectors~$\thetahb{h}{b}$ on top of each other.  

The desired asymptotic result for the fixed~$\bm{b}$ and fixed~$m$
estimates $\hatlgsdM{\LGp}{\omega}{m}$ can be obtained directly from
the preceding observation and Theorem~1 in \citet{Tjostheim201333},
but that would not reveal how $m$ and $\bm{b}$ must behave in the
general situation.
The rest of this section will thus be used to verify
\cref{th:Klimko_Nelson_1978_A1,th:Klimko_Nelson_1978_A2,th:Klimko_Nelson_1978_A3,th:Klimko_Nelson_1978_A4}
from \cref{th:Klimko_Nelson_1978}, which in essence only requires a
minor adjustment of the bivariate discussion from
\cref{app:bivariate_penalty_functions}, i.e.\ the discussion can start
with the following Taylor-expansion of $\QMN[\thetaM{m}]{m}{n}$,
\begin{align}
  \nonumber   \QMN[\thetaM{m}]{m}{n} 
  &= \QMN[\thetaMb{m}{b}]{m}{n} +     \Vector[\prime]{\thetaM{m}-\thetaMb{m}{b}} \nablaM{m}
    \QMN[\thetaMb{m}{b}]{m}{n} 
    + \frac{1}{2}   \Vector[\prime]{\thetaM{m}-\thetaMb{m}{b}}     \VMbN{m}{\bm{b}}{n}     \Vector{\thetaM{m} - \thetaMb{m}{b}} \\
  \label{eq:Taylor_expansion_QMN}
  &\phantom{=\ } + \frac{1}{2}   \Vector[\prime]{\thetaM{m} - 
    \thetaMb{m}{b}}   \TMbN{m}{\bm{b}}{n}   \Vector{\thetaM{m} - \thetaMb{m}{b}},
\end{align}
where $\thetaMb{m}{b}$ represents the vector obtained by stacking on
top of each other the $m$ individual population parameters
$\thetahb{h}{b}$, where
\mbox{$\nablaM{m} \defeq
  \Vector[\prime]{\nablah[\prime]{1},\dotsc,\nablah[\prime]{m}}$}, and
where the matrices $\VMbN{m}{\bm{b}}{n}$ and $\TMbN{m}{\bm{b}}{n}$
corresponds to the matrices $\VhN{}{h:\bm{b}}{n}$ and
$\ThN{h:\bm{b}}{n}$ from \cref{eq:Taylor_expansion_QhN}.

  The following matrix-observations gives the foundation for the
  extension from the bivariate case to the multivariate case.
  \begin{enumerate}
  \item     Keeping in mind how $\nablaM{m}$ is defined relative to $\nablah{h}$,
    and how $\QMN{m}{n}$ is defined relative to~$\QhN{h}{n}$, it is clear
    that $\nablaM{m}\QMN[\thetaMb{m}{b}]{m}{n}$ is the vector obtained
    by stacking the $m$ vectors $\nablah{h}\QhN[\thetahb{h}{b}]{h}{n}$
    on top of each~other.
  \item     The operator $\nablaM{m}\nablaM[\prime]{m}$ can be viewed as an
    \mbox{$m\times m$} block-matrix, consisting of the \mbox{$5\times5$}
    matrices $\nablah{j}\nablah[\prime]{k}$, \mbox{$j,k = 1,\dotsc,m$}.  Due to
    the definition of $\QMN{m}{n}$, it is clear that the only operators
    $\nablah{j}\nablah[\prime]{k}$ that will return a nonzero result are those
    having~\mbox{$j=k$}.
  \item     The preceding observation implies that
    \mbox{$\VMbN{m}{\bm{b}}{n} = \oplusss{h=1}{m}
      \VhN{}{h:\bm{b}}{n}$}, i.e.\ $\VMbN{m}{\bm{b}}{n}$ is the direct
    sum of the matrices~$\VhN{}{h:\bm{b}}{n}$ (the block diagonal matrix
    where the diagonal blocks equals~$\VhN{}{h:\bm{b}}{n}$, and all
    other blocks are zero, cf.\ e.g.\
    \citet[p.30]{Horn:2012:MA:2422911} for further details).
  \item     The same observation implies that
    \mbox{$\TMbN{m}{\bm{b}}{n} = \oplusss{h=1}{m}
      \ThN{h:\bm{b}}{n}$}
  \end{enumerate}

With these observations, and the details from the proof of Theorem~1
in \citet{Tjostheim201333}, it is straightforward to verify
\cref{th:Klimko_Nelson_1978_A1,th:Klimko_Nelson_1978_A2,th:Klimko_Nelson_1978_A3}
of \cref{th:Klimko_Nelson_1978}, whereas
\cref{th:Klimko_Nelson_1978_A4} requires some more work.

\begin{lemma}[\Cref{th:Klimko_Nelson_1978_A1} of
  \cref{th:Klimko_Nelson_1978}.]  {\ }\\ 
  \label{Res:QMN_A1_requirement}
  $\nz[-1]{}\nablaM{m}\QMN[\thetaMb{m}{b}]{m}{n}
  \stackrel{\scriptscriptstyle a.s.}{\longrightarrow} \bm{0}$
\end{lemma}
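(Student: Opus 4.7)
The plan is to exploit the block-diagonal structure of the new penalty function to reduce the multivariate requirement to the bivariate one already established in \citet{Tjostheim201333}. Specifically, by the definition
$Q_{\overline{m}}(\bm{\theta}_{\overline{m}}) = \sum_{h=1}^{m} \tilde{Q}_h(\bm{\theta}_h)$
and the fact that $\bm{\theta}_h$ and $\bm{\theta}_k$ share no common components when $h\neq k$, the operator $\nabla_{\overline{m}}$ acts on $Q_{\overline{m}}$ block by block. Consequently, $\nabla_{\overline{m}} Q_{\overline{m}}(\bm{\theta}_{\overline{m}|\bm{b}})$ is exactly the vector obtained by stacking the $m$ bivariate gradients $\nabla_h \tilde{Q}_h(\bm{\theta}_{h|\bm{b}})$, which in view of \cref{eq:expectation_hierarchy} (cf.\ the remark following \cref{def:Yht_and_YMt}) coincide almost surely with the bivariate gradients $\nabla_h Q_h(\bm{\theta}_{h|\bm{b}})$ based on the pairs $\bm{Y}_{h:t}$.

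First I would observe that the proof of Theorem~1 in \citet{Tjostheim201333} establishes, for each fixed $h$ and under our \cref{assumption_Yt}, that
\begin{equation*}
  n^{-1}\nabla_h Q_h(\bm{\theta}_{h|\bm{b}}) \stackrel{a.s.}{\longrightarrow} \bm{0},
\end{equation*}
which is precisely the bivariate version of requirement \ref{th:Klimko_Nelson_1978_A1}. This uses strict stationarity and ergodicity (from \myref{assumption_Yt}{assumption_Yt_strictly_stationary} together with the strong mixing assumption \myref{assumption_Yt}{assumption_Yt_strong_mixing}), together with the defining property (see \cref{eq:def_of_theta_hb}) that the expectation of the bracketed summand in \cref{eq:QhN_derivatives} vanishes at the population value $\bm{\theta}_{h|\bm{b}}$. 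The strong law of large numbers for stationary ergodic sequences then delivers the almost sure convergence.

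Next I would assemble these $m$ bivariate convergences into the required multivariate one. Since we are in the setting where $m$ is fixed (this is the analogue of Theorem~1 in \citet{Tjostheim201333}, as flagged in the text preceding the lemma), the stacked vector $n^{-1}\nabla_{\overline{m}} Q_{\overline{m}}(\bm{\theta}_{\overline{m}|\bm{b}})$ converges almost surely to $\bm{0}$ because a finite collection of almost sure convergent sequences converges jointly almost surely -- the exceptional null set is simply the finite union of the $m$ individual null sets.

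The argument is therefore essentially mechanical once the block structure is recognised, and the only ``work'' is noting that the extension from bivariate marginals to the $(m+1)$-variate joint density $g_{\overline{m}}$ does not alter the expectation appearing in \cref{eq:QhN_derivatives} -- this is guaranteed by the marginalisation identity in \cref{eq:expectation_hierarchy}. No genuine obstacle arises here; the main technical hurdle in the sequence of lemmas will instead be the verification of requirement \ref{th:Klimko_Nelson_1978_A4} in \cref{App:A4_requirement_general_case}, where $m\to\infty$ and $\bm{b}\to\bm{0}^+$ interact nontrivially and a central limit argument (rather than a law of large numbers) is needed.
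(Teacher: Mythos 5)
Your proposal is correct and follows essentially the same route as the paper: both arguments reduce the statement to the bivariate almost-sure convergence established in Theorem~1 of \citet{Tjostheim201333} and then exploit the fact that the gradient of the new penalty function is just the stack of the $m$ bivariate gradients, so that the finitely many individual null sets combine into a single null set. The additional detail you supply (the marginalisation identity and the explicit union-of-null-sets step) merely fleshes out what the paper states in one sentence.
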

\begin{proof}
  Since $\nablaM{m}\QMN[\thetaMb{m}{b}]{m}{n}$ is the vector obtained
  by stacking the $m$ vectors $\nablah{h}\QhN[\thetahb{h}{b}]{h}{n}$
  on top of each other, and the proof of Theorem~1 in
  \citet{Tjostheim201333} shows that
  $\nz[-1]{}\nablah{h}\QhN[\thetahb{h}{b}]{h}{n}$ converges almost surely
  to~$\bm{0}$, the same must necessarily be true for the combined
  vector $\nz[-1]{}\nablaM{m}\QMN[\thetaMb{m}{b}]{m}{n}$~too.
\end{proof}

\begin{lemma}[\Cref{th:Klimko_Nelson_1978_A2} of
  \cref{th:Klimko_Nelson_1978}.]   {\ }\\
  \label{Res:QMN_A2_requirement}
  $\nz[-1]{}\VMbN{m}{\bm{b}}{n} \stackrel{\scriptscriptstyle
    a.s.}{\longrightarrow} \VMbN{m}{\bm{b}}{}$, where
  $\VMbN{m}{\bm{b}}{}$ is a \mbox{$5m\times 5m$} positive definite
  matrix.
\end{lemma}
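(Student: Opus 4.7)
The plan is to exploit the block-diagonal structure of $\VMbN{m}{\bm{b}}{n}$ already noted in the third matrix observation preceding the lemma, namely
\begin{align*}
  \VMbN{m}{\bm{b}}{n} = \oplusss{h=1}{m} \VhN{}{h:\bm{b}}{n},
\end{align*}
and reduce the multivariate assertion to the bivariate one established in Theorem~1 of \citet{Tjostheim201333}. Since scalar multiplication distributes over the direct sum, $\nz[-1]{}\VMbN{m}{\bm{b}}{n} = \oplusss{h=1}{m} \nz[-1]{}\VhN{}{h:\bm{b}}{n}$, so the convergence claim decomposes block by block.

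First I would invoke the bivariate Klimko--Nelson verification from the proof of Theorem~1 of \citet{Tjostheim201333}: under \cref{assumption_Yt,assumption_score_function}, for each fixed $h\in\{1,\dotsc,m\}$ the ergodic theorem applied to the strictly stationary process $\TSR{\Yht{h}{t}}{t\in\ZZ}{}$ yields
\begin{align*}
  \nz[-1]{}\VhN{}{h:\bm{b}}{n}
  \stackrel{\scriptscriptstyle a.s.}{\longrightarrow}
  \VhN{}{h:\bm{b}}{},
\end{align*}
where $\VhN{}{h:\bm{b}}{}$ is a $5\times 5$ positive definite matrix (positive definiteness being the conclusion already established in \citep{Tjostheim201333} using the non-degeneracy of the score function $\bm{u}(\bm{w};\bmthetaz{})$ at the point of interest, which here is guaranteed by \myref{assumption_score_function}{assumption_score_function_finite_h}). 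Define
\begin{align*}
  \VMbN{m}{\bm{b}}{} \defeq \oplusss{h=1}{m} \VhN{}{h:\bm{b}}{}.
\end{align*}
Since a finite collection of almost sure convergences occurs on an event of probability one (intersect the $m$ exceptional null sets), the block-diagonal assembly gives $\nz[-1]{}\VMbN{m}{\bm{b}}{n} \stackrel{\scriptscriptstyle a.s.}{\longrightarrow} \VMbN{m}{\bm{b}}{}$.

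It remains to verify positive definiteness of $\VMbN{m}{\bm{b}}{}$. For any nonzero vector $\bm{v}\in\RRn{5m}$, split it into $m$ blocks $\bm{v} = \Vector[\prime]{\bm{v}\ssi{1}^{\,\prime},\dotsc,\bm{v}\ssi{m}^{\,\prime}}$ of length~$5$; then
\begin{align*}
  \bm{v}^{\prime}\VMbN{m}{\bm{b}}{}\bm{v}
  = \sumss{h=1}{m} \bm{v}\ssi{h}^{\,\prime}\VhN{}{h:\bm{b}}{}\bm{v}\ssi{h},
\end{align*}
and at least one $\bm{v}\ssi{h}$ is nonzero, so positive definiteness of each bivariate block forces the sum to be strictly positive. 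The main (and only) non-routine point is ensuring that the bivariate positive definiteness from \citep{Tjostheim201333} actually applies for every $h\in\{1,\dotsc,m\}$ simultaneously; this is where \cref{assumption_score_function} plays its role, by ruling out any degeneracy of the local Gaussian approximation at $\LGp$ for each lag~$h$, and where \myref{assumption_Yt}{assumption:gh_b0_infimum_of_bh0} ensures that the single bandwidth~$\bm{b}$ is small enough to give a unique minimiser $\thetahb{h}{b}$ simultaneously across all lags. With these in place, the block-wise argument is complete.
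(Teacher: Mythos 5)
Your proposal is correct and follows essentially the same route as the paper: decompose $\VMbN{m}{\bm{b}}{n}$ as the direct sum $\oplusss{h=1}{m}\VhN{}{h:\bm{b}}{n}$, invoke the blockwise almost sure convergence to positive definite limits from Theorem~1 of \citet{Tjostheim201333}, and assemble. The only cosmetic difference is that you verify positive definiteness of the direct sum via the quadratic form, whereas the paper cites the fact that the spectrum of a direct sum is the union of the blocks' spectra; these are equivalent.
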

\begin{proof}
  Since $\VMbN{m}{\bm{b}}{n}$ is the direct sum of the $m$ matrices
  $\VhN{}{h:\bm{b}}{n}$, the behaviour of those will describe the
  behaviour of $\VMbN{m}{\bm{b}}{n}$.  The proof of Theorem~1 in
  \citet{Tjostheim201333} shows that the matrices
  $\nz[-1]{}\VhN{}{h:\bm{b}}{n}$ converges almost surely to
  positive definite matrices~$\VhN{}{h:\bm{b}}{}$, and this implies
  that $\nz[-1]{}\VMbN{m}{\bm{b}}{n}$ will converge almost
  surely to a block diagonal matrix $\VMbN{m}{\bm{b}}{}$, defined as
  the direct sum of the matrices~$\VhN{}{h:\bm{b}}{}$.  Since the set
  of eigenvalues for a direct sum of matrices equals the union of the
  eigenvalues for its components, see
  \citet[p.30]{Horn:2012:MA:2422911} for details, if follows that
  $\VMbN{m}{\bm{b}}{n}$ is positive definite since all the
  $\VhN{}{h:\bm{b}}{n}$ are positive definite.
\end{proof}

\begin{lemma}[\Cref{th:Klimko_Nelson_1978_A3} of
  \cref{th:Klimko_Nelson_1978}.] {\ }\\
  \label{Res:QMN_A3_requirement}
  For \mbox{$j,k = 1,\dotsc, 5m$},
  \begin{align}
    \label{eq:TMN_requirement}
    \adjustlimits \lim_{n\rightarrow\infty} \sup_{\delta\rightarrow 0}     (n\delta)\inv \absp{\power{jk}{\TMbN{m}{\bm{b}}{n}}} < \infty\     \quad \text{a.s.},
  \end{align}
  where $\power{jk}{\TMbN{m}{\bm{b}}{n}}$ is the
  $\ith{(j,k)}$ component of $\TMbN{m}{\bm{b}}{n}$.
\end{lemma}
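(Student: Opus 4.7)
The plan is to exploit the block-diagonal structure of $\TMbN{m}{\bm{b}}{n}$ already noted in the construction of the penalty function, and then reduce the claim to the corresponding bivariate statement already verified in the proof of Theorem~1 in~\citet{Tjostheim201333}.

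First, I would recall the observation made just before \cref{Res:QMN_A1_requirement}: because the summands $\tildeQhN[\thetah{h}]{h}{n}$ that make up $\QMN[\thetaM{m}]{m}{n}$ share no common parameters, the mixed second-order operators $\nablah{j}\nablah[\prime]{k}$ kill everything except the $j=k$ diagonal, which yields
\[
\TMbN{m}{\bm{b}}{n} \;=\; \oplusss{h=1}{m} \ThN{h:\bm{b}}{n}.
\]
Consequently, for indices $j,k\in\{1,\ldots,5m\}$ there are only two cases to consider. If $j$ and $k$ lie in different $5\times 5$ diagonal blocks, then $\power{jk}{\TMbN{m}{\bm{b}}{n}}=0$ identically, so the $\limsup$ in \cref{eq:TMN_requirement} is trivially $0<\infty$ almost surely. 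If $j$ and $k$ lie in the same diagonal block indexed by $h$, then $\power{jk}{\TMbN{m}{\bm{b}}{n}}$ equals a component $\power{j'k'}{\ThN{h:\bm{b}}{n}}$ of one of the bivariate matrices.

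For this second case I would invoke the corresponding bivariate statement verified inside the proof of Theorem~1 of \citet{Tjostheim201333}: for each fixed $h$ and each pair $(j',k')$ one has
\[
\adjustlimits \lim_{n\rightarrow\infty} \sup_{\delta\rightarrow 0} (n\delta)\inv \absp{\power{j'k'}{\ThN{h:\bm{b}}{n}}} < \infty \quad \text{a.s.}
\]
Since $m$ is fixed in the present setup (we are mimicking Theorem~1 of \citet{Tjostheim201333} before passing to the general regime handled in \cref{App:A4_requirement_general_case}), the union of the $5m\cdot 5m$ exceptional null sets over all choices of $(j,k)$ remains a null set, so \cref{eq:TMN_requirement} holds for every~$(j,k)$ simultaneously, almost surely.

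There is no real obstacle here: the block-diagonal structure reduces the multivariate condition to finitely many instances of a statement already established in the bivariate theory, and no new analytic input is needed. The only thing to be careful about is the bookkeeping between the original $(j',k')$-indices within the $h$-th block and their global $(j,k)$-counterparts, and the remark that the union-over-$h$-and-over-$(j',k')$ of the relevant a.s.\ events is still a full-measure event because $m$ is fixed.
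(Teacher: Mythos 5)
Your proposal is correct and follows essentially the same route as the paper: the paper's proof likewise uses that $\TMbN{m}{\bm{b}}{n} = \oplusss{h=1}{m} \ThN{h:\bm{b}}{n}$, dismisses the off-block entries as trivially zero, and appeals to the bivariate bound from the proof of Theorem~1 in \citet{Tjostheim201333} on each diagonal block. Your extra remark about uniting the finitely many null sets (valid since $m$ is fixed here) is a harmless elaboration of what the paper leaves implicit.
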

\begin{proof}
  $\TMbN{m}{\bm{b}}{n}$ is the direct sum of the $m$ matrices
  $\ThN{h:\bm{b}}{n}$, so the required inequality is trivially
  satisfied for all entries $j$ and $k$ that gives an element outside
  of the diagonal-blocks.  The proof of Theorem~1 in
  \citet{Tjostheim201333} shows that the inequality is satisfied
  almost surely on each of the $m$ blocks $\ThN{h:\bm{b}}{n}$, which
  implies that it holds for $\TMbN{m}{\bm{b}}{n}$~too.
\end{proof}

\begin{lemma}[\Cref{th:Klimko_Nelson_1978_A4} of
  \cref{th:Klimko_Nelson_1978}.] {\ }\\
  \label{Res:QMN_A4_requirement}
      $\nz[-1/2]{} \nablaM{m} \QMN[\thetaMb{m}{b}]{m}{n}
  \stackrel{\scriptscriptstyle d}{\longrightarrow} \Norm{\bm{0}}{\WMbN{m}{\bm{b}}{}}$
\end{lemma}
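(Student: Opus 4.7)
The plan is to use the Cramér--Wold device to reduce joint asymptotic normality of the stacked vector to univariate asymptotic normality of an arbitrary linear combination, and then to invoke a CLT for strongly mixing sequences. The diagonal blocks of the limiting covariance matrix $\WMbN{m}{\bm{b}}{}$ will be supplied directly by the bivariate analysis of Theorem~1 in \citet{Tjostheim201333}, whereas the off-diagonal blocks require a new computation.

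First, from \cref{eq:QhN_derivatives} together with the block structure of $\nablaM{m}$, the stacked gradient can be written compactly as
\begin{align*}
  \nablaM{m}\QMN[\thetaMb{m}{b}]{m}{n}
  &= -\sumss{t=1}{n} \bmXz{m:t}, \qquad \bmXz{m:t} \defeq \Vector[\prime]{\bmXz[\prime]{1:t},\dotsc,\bmXz[\prime]{m:t}}, \\
  \bmXz{h:t}
  &\defeq \KhbDEF\, \uh[\Yht{h}{t}; \thetahb{h}{b}]{h}
    - \intss{\RRn{2}}{}\Khbdef\, \uh[\yh{h}; \thetahb{h}{b}]{h}\,\psi(\yh{h};\thetahb{h}{b})\,\dyh{h},
\end{align*}
and by the defining equation \cref{eq:def_of_theta_hb} for $\thetahb{h}{b}$, each sub-block of $\E{\bmXz{m:t}}$ vanishes. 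For a fixed $\bm{\lambda} \in \RRn{5m}$ partitioned as $(\lambdaz{1}',\dotsc,\lambdaz{m}')'$, the scalar process $\TSR{\bmlambdaz[\prime]{}\bmXz{m:t}}{t\in\ZZ}{}$ is a measurable function of at most $m{+}1$ consecutive terms of $\TSR{\Yz{t}}{t\in\ZZ}{}$, and therefore inherits the strong mixing property from \myref{assumption_Yt}{assumption_Yt_strong_mixing} with the same coefficient $\alpha(\cdot)$ shifted by at most $m$.

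Next I would invoke a CLT for strongly mixing sequences (e.g.\ the Ibragimov--Rio CLT), whose moment and mixing conditions are supplied by the finite $\nu$-moment conditions in \myref{assumption_Yt}{assumption:h_x:y_x:y:z_finite_expectations} and the summability condition \cref{eq:alpha_requirement}. This yields
\begin{align*}
  \nz[-1/2]{}\sumss{t=1}{n}\bmlambdaz[\prime]{}\bmXz{m:t}
  \stackrel{\scriptscriptstyle d}{\longrightarrow}
  \UVN{0}{\bmlambdaz[\prime]{}\WMbN{m}{\bm{b}}{}\bmlambdaz{}},
\end{align*}
with limiting variance
\begin{align*}
  \bmlambdaz[\prime]{}\WMbN{m}{\bm{b}}{}\bmlambdaz{}
  = \lim_{\nlimit} \Var\!\parenC{\nz[-1/2]{}\sumss{t=1}{n}\bmlambdaz[\prime]{}\bmXz{m:t}}.
\end{align*}
Since this holds for every $\bmlambdaz{}$, the Cramér--Wold device delivers the multivariate convergence claimed in the lemma, with $\WMbN{m}{\bm{b}}{}$ a $5m\times 5m$ matrix whose $(j,k)$ block is
\begin{align*}
  \WMbN[jk]{m}{\bm{b}}{}
  = \sumss{s\in\ZZ}{}\Cov{\bmXz{j:0}}{\bmXz{k:s}}.
\end{align*}

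The main obstacle is the explicit control of the off-diagonal blocks ($j\neq k$): while the diagonal blocks $\WMbN[hh]{m}{\bm{b}}{} = \WhN{}{h:\bm{b}}{}$ are exactly the matrices appearing in the bivariate Theorem~1 of \citet{Tjostheim201333}, the cross-lag covariances $\Cov{\bmXz{j:0}}{\bmXz{k:s}}$ involve expectations of the score-kernel product $\Khbdef\,\uh[\Yht{j}{0};\thetahb{j}{b}]{j}\cdot\Khb[\Yht{k}{s}-\LGp]{k}{\bm{b}}\,\uh[\Yht{k}{s};\thetahb{k}{b}]{k}$, which are driven by the tri- and tetravariate densities of $\Yz{t}$ rather than by the bivariate densities $\gh[\yh{h}]{h}$. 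Finiteness of these expectations is precisely what is guaranteed by \myref{assumption_Yt}{assumption:h_x:y_x:y:z_finite_expectations}, while the absolute summability in $s$ follows by combining the $\nu$-moment bound with the covariance inequality for strongly mixing sequences and the summability condition \cref{eq:alpha_requirement}. This is also the place where \cref{assumption_score_function} enters, since it rules out degeneracy of $\WMbN[hh]{m}{\bm{b}}{}$ by ensuring $\bm{u}(\LGp;\bmthetaz{\LGp}(h))\neq\bm{0}$. The subsequent lifting to $\blimit$ and $\mlimit$, together with the block-size arguments of \cref{th:block_sizes_for_main_result}, is deferred to \cref{App:A4_requirement_general_case}.
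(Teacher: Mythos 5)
Your proposal is correct and follows essentially the same route as the paper: Cramér--Wold reduction to an arbitrary linear combination of the mean-zero, strongly mixing summands (with the mean-zero property coming from the defining equation for $\thetahb{h}{b}$), followed by a CLT for strongly mixing sequences with the long-run covariance $\WMbN{m}{\bm{b}}{}$ as limit; the paper invokes Theorems~2.20(i) and 2.21(i) of \citet{fan03:_nonlin_time_series} where you cite an Ibragimov--Rio-type CLT, but the conditions checked (inherited mixing, finite $\nu$-th moments via Minkowski) are the same. Your remarks on the tri- and tetravariate control of the off-diagonal blocks and on \cref{assumption_score_function} anticipate material the paper defers to the general $\blimit$, $\mlimit$ analysis in \cref{App:A4_requirement_general_case}, and are not needed for this fixed-$m$, fixed-$\bm{b}$ statement.
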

\begin{proof}
  As done in the proof of Theorem~1 in \citet{Tjostheim201333}, the
  idea is to first prove asymptotic normality of each individual
  component of $\nablaM{m} \QMN[\thetaMb{m}{b}]{m}{n}$ by 
  Theorem~2.20(i) and Theorem~2.21(i) from
  \citet[p.~74-75]{fan03:_nonlin_time_series}.  Then the
  Cram\'{e}r-Wold Theorem (see e.g.\ Theorem~29.4 in
  \citet{Billingsley12:_probab_measur}) will be used to conclude that
  the joint distribution of $\nablaM{m} \QMN[\thetaMb{m}{b}]{m}{n}$
  will be the joint distribution of these limiting components, and
  finally a simple observation based on moment-generating functions
  tells us that this limiting joint distribution is
  asymptotically~normal.

  Since
  \mbox{$\nablaM{m} \QMN[\thetaMb{m}{b}]{m}{n} = \Vector[\prime]{
      \power[\prime]{}{\nablah{1} \QhN[\thetahb{h}{b}]{1}{n}},       \dotsc,       \power[\prime]{}{\nablah{m} \QhN[\thetahb{h}{b}]{m}{n}}}$},
  its components can be indexed by pairs $[h,i]$,
  \mbox{$h = 1,\dotsc,m$} and \mbox{$i= 1, \dotsc, 5$}.  From
  \cref{eq:QhN_derivatives} it is clear that the
  \mbox{$[h,i]$-component} of the vector can be written~as
  \begin{align}
    \label{eq:QMN_hi_component}
    \parenRz{[h,i]}{\nablaM{m} \QMN[\thetaMb{m}{b}]{m}{n}} 
    &= - \sumss{t=1}{n} \Xhit{h}{i}{t},
  \end{align}
  where the random variable $\Xhit{h}{i}{t}$ is defined as
    \begin{align}
    \label{eq:Xhit}
    \Xhit{h}{i}{t} &\defeq     \KhbDEF \uhi[\Yht{h}{t}; \thetahb{h}{b}]{h}{i} - \intss{\RRn{2}}{}
    \Khbdef \uhi[\yh{h}; \thetahb{h}{b}]{h}{i} \psi\!\left(\yh{h};
      \thetah{h}\right) \dyh{h},
  \end{align}
  and where $\uhi{h}{i}$ refers to the $i^{\operatorname{th}}$
  component of the $\ith{h}$ score function $\uh{h}$.
  
  The required \mbox{$\alpha$-mixing} property (and thus ergodicity)
  are inherited from the original univariate time series $\Yz{t}$ to
  $\Xhit{h}{i}{t}$ (see \cref{eq:alpha_requirement_derived_from_Yt}
  for details), 
    and the connection with $\Lp{\nu}$-theory observed in
  \cref{eq:expectation_and_Lph} gives
  \mbox{$\E{\absp[\nu]{\Xhit{h}{i}{t}}} < \infty$}.
    Finally, since $\thetahb{h}{b}$ is the population value parameter
  that minimise \cref{eq:def_of_theta_hb}, it follows that
  \mbox{$\E{\Xhit{h}{i}{t}} = 0$}.
    These observations show that $\Xhit{h}{i}{t}$ satisfies the
  requirements needed in order to apply Theorem~2.20(i) and
  Theorem~2.21(i) from \citet[p.~74-75]{fan03:_nonlin_time_series},
  i.e.\ for \mbox{$\Sz{hi|n} \defeq \sumss{t=1}{n} \Xhit{h}{i}{t}$},
  Theorem~2.20(i) gives the asymptotic result
    \begin{align}
    \label{eq:Th2.20(i)}
    \nz[-1]{} \Sz{hi|n} \longrightarrow \sigmaz[2]{} \defeq \gammaz{0} +
    2 \sumss{\ell\,\geq1}{} \gammaz{\ell},
  \end{align}
  with $\gammaz{\ell}$ being the $\ith[\,]{\ell}$ autocovariance of
  the series $\TSR{\Xhit{h}{i}{t}}{t\in\ZZ}{}$.
    From Theorem~2.21(i) it now follows that there is a component-wise
  asymptotic normality, i.e.\
  \begin{align}
    \label{eq:Th2.21(i)}
    \nz[-1/2]{} \Sz{hi|n} \stackrel{\scriptscriptstyle
    d}{\longrightarrow}
    \UVN{0}{\sigmaz[2]{}}.
  \end{align}
  
  In order to apply the Cram\'{e}r-Wold device, all possible linear
  combinations of the components in
  $\nablaM{m} \QMN[\thetaMb{m}{b}]{m}{n}$ must be considered.  Such
  general sums can be represented as
  \mbox{$\Sz{n}(\bm{a}) \defeq \bmaz[\prime]{}\,\nablaM{m}
    \QMN[\thetaMb{m}{b}]{m}{n}$}, where
  \mbox{$\bm{a}\in\RRn{5\times m}$}.
    This can be rewritten, by \enquote{taking the sum outside of the
    vector $\nablaM{m} \QMN[\thetaMb{m}{b}]{m}{n}$}, as
  \begin{align}
    \Sz{n}(\bm{a}) &= \sumss{t=1}{n} \Xat[a]{t},
  \end{align}
  where \mbox{$\Xat[a]{t} = \bmaz[\prime]{}\Xt{t}$}, with the vector
  $\Xt{t}$ obtained by stacking all the components $\Xhit{h}{i}{t}$ on
  top of each other, i.e.\
  \mbox{$\Xt{t} = \Vector[\prime]{\Xhit{1}{1}{t}, \dotsc,
      \Xhit{m}{5}{t}}$}.
  
  By construction, \mbox{$\E{\Xat[a]{t}} = 0$}, the required
  $\alpha$-mixing are inherited from the original time series
  $\parenC{\Yz{t}}$ (see \cref{eq:alpha_requirement_derived_from_Yt}),
  and \cref{th:Lp_expectation} ensures that the property
  \mbox{$\E{\absp[\nu]{\Xat[a]{t}}} < \infty$} holds true.  That is,
  $\Xat[a]{t}$ does also satisfy the requirements stated in
  Theorem~2.20(i) and Theorem~2.21(i), which gives the following
  asymptotic results;
  \begin{align}
    \nz[-1]{} \Sz{n}(\bm{a}) 
    &\longrightarrow \sigmaz[2]{}(\bm{a}) \defeq
      \gammaz{0}(\bm{a}) +
      2 \sumss{\ell\,\geq1}{} \gammaz{\ell}(\bm{a}) \\
          \nz[-1/2]{} \Sz{n}(\bm{a}) 
    &\stackrel{\scriptscriptstyle d}{\longrightarrow} 
            \UVN{0}{\sigmaz[2]{}(\bm{a})},
  \end{align}
  where the autocovariances $\gammaz{\ell}(\bm{a})$ now are with respect to
  the time series \mbox{$\Xat[a]{t} = \bmaz[\prime]{}\Xt{t}$}.
  
  Since
  {$\gammaz{0}(\bm{a}) = \Var{\bmaz[\prime]{}\Xt{t}}
    = \bmaz[\prime]{}\Var{\Xt{t}}\bm{a}$} and
  {$\gammaz{\ell}(\bm{a}) =
    \Cov{\bmaz[\prime]{}\Xt{t+\ell}}{\bmaz[\prime]{}\Xt{t}}
    = \bmaz[\prime]{}\Cov{\Xt{t+\ell}}{\Xt{t}}\bm{a}$},   it follows that we can write
  {$\sigmaz[2]{}(\bm{a}) =
    \bmaz[\prime]{}\WMbN{m}{\bm{b}}{}\bm{a}$}, with
  $\WMbN{m}{\bm{b}}{}$ being the matrix obtained in the obvious manner
  by factorising out $\bmaz[\prime]{}$ and $\bm{a}$ from
  the sum of autocovariances, i.e.\
  \begin{align}
    \label{eq:SMb_definition}
    \WMbN{m}{\bm{b}}{} &\defeq \Var{\Xt{t}} + 2 \sumss{\ell\,\geq1}{}
                 \Cov{\Xt{t+\ell}}{\Xt{t}} \\
    \label{eq:SMb_definition_simplified}
               &= \E{\Xt{t}\bmXz[\prime]{t}} +2 \sumss{\ell\,\geq1}{}                  \E{\Xt{t+\ell}\bmXz[\prime]{t}},
  \end{align}
  where the second equality follows since \mbox{$\E{\Xt{t}}=\bm{0}$}.

  The Cram\'{e}r-Wold device now gives the required conclusion,
  \mbox{$\nz[-1/2]{} \nablaM{m} \QMN[\thetaMb{m}{b}]{m}{n}
    \stackrel{\scriptscriptstyle d}{\longrightarrow}
    \Norm{\bm{0}}{\WMbN{m}{\bm{b}}{}}$}.
\end{proof}

\Cref{Res:QMN_A1_requirement,Res:QMN_A2_requirement,Res:QMN_A3_requirement,Res:QMN_A4_requirement}
shows that the penalty function $\QMN[\thetaM{m}]{m}{n}$ (for fixed
$m$ and fixed $\bm{b}$) satisfies the four requirements given in
\cref{th:Klimko_Nelson_1978_A1,th:Klimko_Nelson_1978_A2,th:Klimko_Nelson_1978_A3,th:Klimko_Nelson_1978_A4}
of \cref{th:Klimko_Nelson_1978}, and this implies that the following
asymptotic results holds in this particular case
\begin{align}
  \sqrt{n}\left(\hatthetaMN{m}{n} - \thetaMb{m}{b}\right)
  \stackrel{\scriptscriptstyle d}{\longrightarrow}   \UVN{\bm{0}}{  \Vz[-1]{\overbar{m}|\bm{b}} \WMbN{m}{\bm{b}}{} \Vz[-1]{\overbar{m}|\bm{b}}}.
\end{align}

The hard task to deal with in the general situation, when $\mlimit$
and $\blimit$ as $\nlimit$, is the asymptotic behaviour of
\mbox{$\nz[-1/2]{} \nablaM{m} \QMN[\thetaMb{m}{b}]{m}{n}$}.
This will be treated in \cref{App:A4_requirement_general_case}.

\subsection{The A4-requirement in the general case}
\label{App:A4_requirement_general_case}

The verification of the three first requirements of the Klimko-Nelson
approach does work as before when \enquote{$\mlimit$ and $\blimit$
  when $\nlimit$}, whereas the asymptotic normality in the fourth
requirement demands a more detailed investigation.
\Cref{app:final_building_blocks} will introduce some new building
blocks to be used in the investigation of the asymptotic properties,
which will be developed in
\cref{app:the_asymptotic_results,app:the_asymptotic_results_advanced}.
Some technical details that only depend upon the kernel function and
the score functions have been collected in
\cref{app:integrals_kernel_score}.

\makeatletter{}
\subsubsection{The final building blocks}
\label{app:final_building_blocks}

The bivariate processes~$\Yht{h}{t}$ from \cref{def:Yht_and_YMt} will
now be used to construct new random variables, that culminates in a
random variable $\QNhvec{n}{\overbar{m}}$ which has the same limiting
distribution\footnote{Due to the presence of the kernel function
  $\Khb[\bm{w}]{h}{\bm{b}}$, the fourth requirement of the
  Klimko-Nelson approach will (when $\blimit$) require that the
  scaling factor $\nz[-1/2]{}$ is adjusted with
  $\parenRz[1/2]{}{\bz{1}\bz{2}}$, and this scaling must thus also be
  included in the discussion in the present approach.} $\sqrt{\bz{1}\bz{2}}\nablaM{m}\QMN[\thetaMb{m}{b}]{m}{n}$.
Looking upon \cref{eq:QhN_derivatives}, it is clear that everything
depends upon the three functions
\mbox{$\psi\!\left(\yh{h}; \thetah{h}\right)$},
\mbox{$\uh[\yh{h}; \thetah{h}]{h}$} and $\Khbdef$.

\begin{definition}
  \label{def:Uh}
  For \mbox{$\psi\!\left(\yh{h}; \thetah{h}\right)$} the local
  Gaussian density used when approximating~$\gh[\yh{h}]{h}$ at the
  point \mbox{$\LGp=\LGpoint$},
      define for
  all \mbox{$h \in \NN$} and   \mbox{$q \in \parenC{1,\dotsc,5}$} 
  \begin{enumerate}[label=(\alph*)]
  \item     \label{item_def:Uh}
        With $\thetahb{h}{b}$ the population value that minimises the
    penalty function~$\qh{h}{b}$ from \cref{eq:penalty_qh}, let
    \begin{align}
      \label{eq_def:Uh}
      \Uh[\bm{w}]{hq:\bm{b}}       &\defeq         \left.         \frac{\partial}{\partial \thetahcomp{h}{q}}         \log \left(\psi\!\left(\yh{h}; \thetah{h}\right)\right)         \right|_{\left(\yh{h};\, \thetah{h}\right) =         \left(\bm{w};\, \thetahb{h}{b}\right) }.
    \end{align}
              \item     \label{item_def:Uh_truncated}
        For \mbox{$L\geq0$}, define the following lower and upper
    truncated versions of $\Uh[\bm{w}]{hq:\bm{b}}$,
    \begin{subequations}
      \label{eq:definition_of_UhL}
      \begin{align}
        \label{eq:definition_of_UhL_leq}
        \UhL[\bm{w}]{hq:\bm{b}}{\leq L} 
        &\defeq           \Uh[\bm{w}]{hq:\bm{b}}\cdot \Ind{\absp{\Uh[\bm{w}]{hq:\bm{b}}} \leq L},  \\
        \label{eq:definition_of_UhL_>}
        \UhL[\bm{w}]{hq:\bm{b}}{> L} 
        &\defeq           \Uh[\bm{w}]{hq:\bm{b}}\cdot \Ind{\absp{\Uh[\bm{w}]{hq:\bm{b}}} > L}. 
      \end{align}
    \end{subequations}
    Obviously;
    \mbox{$\Uh[\bm{w}]{hq:\bm{b}} = \UhL[\bm{w}]{hq:\bm{b}}{\leq L}
      + \UhL[\bm{w}]{hq:\bm{b}}{> L}$} and 
    \mbox{$\UhL[\bm{w}]{hq:\bm{b}}{\leq L} \cdot
      \UhL[\bm{w}]{hq:\bm{b}}{> L} = 0$}.
      \item     \label{item_def:Uh_limit}
        Let $\Uh[\bm{w}]{hq}$ be as in \cref{item_def:Uh}, with the
    difference that the limit $\blimit$ of the parameters
    $\thetahb{h}{b}$ are used in the definition.\footnote{      The limit of the parameters $\thetahb{h}{b}$ will exist under
      assumptions that implies that the four requirements of the
      Klimko-Nelson approach are satisfied, cf.\
      \citet{Tjostheim201333} for details.}     Let $\UhL[\bm{w}]{hq}{\leq L}$ and $\UhL[\bm{w}]{hq}{> L}$ be
    the     truncated versions of~$\Uh[\bm{w}]{hq}$.
  \end{enumerate}
\end{definition}

The following simple observations will be useful later on.

\begin{lemma}
  \label{th:Uh_finite}
  For the point $\LGp$, the following holds for the functions
  introduced in \cref{def:Uh}.
  \begin{enumerate}[label=(\alph*)]
  \item    \label{th:Uh_finite_sup}
      \mbox{$\sup_{\scriptscriptstyle hq}      \parenAbs{\Uh[\LGp]{hq:\bm{b}}} < \infty$} and 
   \mbox{$\sup_{\scriptscriptstyle hq}      \parenAbs{\Uh[\LGp]{hq}} < \infty$}.
    \item    \label{th:Uh_threshold_limit}
      When $L$ is large enough,
   \mbox{$\UhL[\LGp]{hq:\bm{b}}{\leq L} = \Uh[\LGp]{hq:\bm{b}}$}
   and \mbox{$\UhL[\LGp]{hq}{\leq L} = \Uh[\LGp]{hq}$}.
 \end{enumerate}
\end{lemma}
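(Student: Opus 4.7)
The plan is to exploit the closed form of the bivariate Gaussian score function. Recall that each component $\Uh[\bm{w}]{hq:\bm{b}}$ is (by \cref{eq_def:Uh}) the $q$th component of $\bm{u}(\bm{w};\bmthetaz{})$ evaluated at $\bm{w}=\bm{W}$ taken to be $\LGp$ and parameter vector $\bmthetaz{}=\thetahb{h}{b}$. The five components of $\bm{u}(\LGp;\bmthetaz{})$ are rational functions of $(\muz{1},\muz{2},\sigmaz{1},\sigmaz{2},\rho)$ whose only singularities occur when either $\sigmaz{i}\to 0$ or $\rho\to\pm 1$. Hence it suffices to show that the parameter vectors $\thetahb{h}{b}$ are confined to a compact subset $\Thetaz[*]{}$ of the open parameter space on which the variances are bounded below and $\parenAbs{\rho}$ is bounded away from $1$, uniformly in $h\in\NN$ and $\bm{0}<\bm{b}<\bmbz{0}$; continuity of $\bm{u}(\LGp;\cdot)$ on $\Thetaz[*]{}$ then yields part \ref{th:Uh_finite_sup}, and part \ref{th:Uh_threshold_limit} follows by taking any $L$ larger than this finite supremum so that the indicator in \cref{eq:definition_of_UhL_leq} is identically one.

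First I would isolate the limit-in-$\bm{b}$ component: by \myref{assumption_Yt}{assumption:gh_b0_infimum_of_bh0} the population minimiser $\thetahb{h}{b}$ exists for every $\bm{b}<\bmbz{0}$ and, together with \myref{assumption_Yt}{assumption:gh_differentiable_at_LGp}, converges as $\blimit$ to a unique $\bmthetaz[\circ]{h}$. Next I would handle the limit-in-$h$ component: the $\alpha$-mixing property from \myref{assumption_Yt}{assumption_Yt_strong_mixing} implies that the bivariate density $\gh[\yh{h}]{h}$ of $\Yht{h}{t}$ converges to the product $\phi(\yz{h})\phi(\yz{0})$ of the standard normal marginals as $\hlimit$, and continuity of the population parameter in the density then forces $\bmthetaz[\circ]{h}\longrightarrow\Vector[\prime]{0,0,1,1,0}$. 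Since both limits exist, the family $\parenC{\thetahb{h}{b}:h\in\NN,\,\bm{0}<\bm{b}<\bmbz{0}}$ has compact closure in an interior subset of the parameter space (variances near $1$, correlation near~$0$ for large~$h$, and for each finite $h$ parameters bounded inside the open admissible region by the existence of a minimiser of $\qh{h}{b}$).

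Given this compactness, $\Uh[\LGp]{hq:\bm{b}}$ and $\Uh[\LGp]{hq}$ lie in the continuous image of $\Thetaz[*]{}$ under the smooth score map, which is itself compact, hence bounded; this proves the first display in \ref{th:Uh_finite_sup}. Setting
\[
L^{*} \defeq \sup_{h,q} \parenAbs{\Uh[\LGp]{hq:\bm{b}}} \vee \sup_{h,q} \parenAbs{\Uh[\LGp]{hq}} < \infty,
\]
any $L\geq L^{*}$ satisfies $\Ind{\parenAbs{\Uh[\LGp]{hq:\bm{b}}}\leq L}\equiv 1$, so \cref{eq:definition_of_UhL_leq} collapses to $\UhL[\LGp]{hq:\bm{b}}{\leq L}=\Uh[\LGp]{hq:\bm{b}}$, and identically for the $\blimit$ version; this gives \ref{th:Uh_threshold_limit}.

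The main obstacle is the uniformity in $h$ and $\bm{b}$ of the \textit{non-degeneracy} of $\thetahb{h}{b}$, i.e.\ ruling out that along some sequence of lags (or bandwidths) one of the local variances collapses to zero or $\parenAbs{\rho}$ tends to $1$. The $\blimit$ direction is standard (existence of the minimiser), so the delicate part is the $\hlimit$ direction, where one must combine the mixing-induced convergence of $\gh[\yh{h}]{h}$ to a non-degenerate product density with a continuity argument for the local Gaussian parameter functional. Once this uniform interior confinement is established the rest of the argument is routine.
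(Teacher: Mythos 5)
Your proposal is correct and follows essentially the same route as the paper's own proof: both reduce the claim to the observation that the score components at the fixed point $\LGp$ depend only on the local parameter vectors $\thetahb{h}{b}$, and both use the $\alpha$-mixing assumption to argue that these parameters stabilise (towards the independence values) as $\hlimit$, so that no parameter can degenerate and the supremum over $h$ and $q$ is finite, with part \ref{th:Uh_threshold_limit} then being an immediate consequence. You are somewhat more explicit than the paper about the delicate point — uniform confinement of $\thetahb{h}{b}$ to a compact interior subset of the parameter space, ruling out $\sigmaz{i}\to 0$ or $\parenAbs{\rho}\to 1$ along subsequences — which the paper's proof passes over quickly, so your additional care is welcome but not a different argument.
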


\begin{proof}
  By definition, the functions~$\Uh[\bm{w}]{hq:\bm{b}}$
  and~$\Uh[\bm{w}]{hq}$ will all be bivariate polynomials of order two
  (in the variables $\wz{1}$ and~$\wz{2}$), which implies that they
  are well defined for any point~$\LGp$.  Since the parameters in
  these polynomials originates from a local Gaussian approximation of
  $\gh[\yh{h}]{h}$ at the point~$\LGp$, and since
  \myref{assumption_Yt}{assumption_Yt_strong_mixing} ensures that the
  bivariate densities $\gh[\yh{h}]{h}$ will approach the product of
  the marginal densities when $\hlimit$, it follows that the estimated
  parameters must stabilise when $h$ becomes large.  This rules out
  the possibility that any of the parameters can grow to infinitely
  large values, which implies that the supremums in
  \cref{th:Uh_finite_sup} are finite.  \Cref{th:Uh_threshold_limit}
  follows as a direct consequence of this, the statement holds true
  for any threshold value $L$ that is larger than the supremums given
  in \cref{th:Uh_finite_sup}.
\end{proof}

The bivariate kernel to be used in the present approach will be the
same as the one used in \citet{Tjostheim201333}, i.e.\ it will be the
product kernel based on two standard normal kernels.  The following
definition enables a more general approach to be used in the
theoretical investigation,\footnote{  Differences in the computational cost implies that the product
  normal kernel is used for practical~purposes.} while capturing the desirable properties
that will be satisfied for the product normal~kernel.

\begin{definition}
  \label{def:kernel}
  From a bivariate, non-negative, and bounded kernel function
  $K(\bm{w})$, that satisfies
  \begin{subequations}
    \label{eq:kernel_integrals}
    \begin{align}
      \label{eq:kernel_integral_one}
      &\intss{\RRn{2}}{} \!\! K\!\left(\wz{1},\wz{2}\right)
        \d{\wz{1}}\!\d{\wz{2}} = 1,\\
              \label{eq:kernel_integrals_left_wellbehaved}
      &\mathcalKz{1:k} \! \left(\wz{2}\right)         \defeq         \intss{\RRn{1}}{} \!\!         K\!\left(\wz{1},\wz{2}\right)         \wz[k]{1} \d{\wz{1}}  \qquad         \text{is bounded for }         k \in \TSR{0,1,2}{}{}, \\
              \label{eq:kernel_integrals_right_wellbehaved}
      &\mathcalKz{2:\ell} \! \left(\wz{1}\right)         \defeq         \intss{\RRn{1}}{} \!\!         K\!\left(\wz{1},\wz{2}\right)         \wz[\ell]{2} \d{\wz{2}}  \qquad         \text{is bounded for }         \ell \in \TSR{0,1,2}{}{}, \\
              \label{eq:kernel_integrals_finite}
      &\intss{\RRn{2}}{} \!\! K\!\left(\wz{1},\wz{2}\right)         \absp{\wz[k]{1} \wz[\ell]{2}}         \d{\wz{1}}\!\d{\wz{2}} < \infty, \qquad k, \ell
        \geq 0 \text{ and } k + \ell \leq  2\cdot\ceil{\nu},
                    \end{align}
  \end{subequations}
  where \mbox{$\nu>2$} is from
  \myref{assumption_Yt}{assumption_Yt_strong_mixing} (and
  $\ceil{\cdot}$ is the ceiling function), define
  \begin{align}
    \label{eq:definition_of_K}
    \Khb[\yh{h}-\LGp]{h}{\bm{b}}
    &\defeq       \frac{1}{\bz{1}\bz{2}} K\left(      \frac{\yz{h}-\vz{1}}{\bz{1}},
      \frac{\yz{0}-\vz{2}}{\bz{2}}
      \right).
  \end{align}
\end{definition}

It turns out, see \cref{app:integrals_kernel_score} for details, that
the asymptotic results needed later on mainly depends upon the
properties of the kernel $K(\bm{w})$ and the components
$\Uh[\bm{w}]{hq:\bm{b}}$ of the score functions.

Some vector and matrix notation is needed in order to make the
expressions later on more~tractable.
\begin{definition}
  \label{def:matrices_related_to_the_new_penalty_function}
  With $\gh[\yh{h}]{h}$, $\Uh[\bm{w}]{hq:\bm{b}}$ and $K(\bm{w})$ as
  given in \cref{def:Yht_and_YMt,def:kernel,def:Uh}, let
  \mbox{$\mathfrakUhb{h}{b} \defeq
    \Vector[\prime]{\Uh[\bm{v}]{h1:\bm{b}},\dotsc,
      \Uh[\bm{v}]{hp:\bm{b}}}$}, and define the following matrices. 
  \begin{subequations}
    \label{eq_def:matrices_related_to_the_new_penalty_function}
    \begin{align}
      \label{eq_def:matrices_related_to_the_new_penalty_function_Whb}
      \WhN{}{h}{\bm{b}}
      &\defeq \mathfrakUhb{h}{b} \mathfrakUhb[\prime]{h}{b} \cdot
        \gh[\LGp]{h} 
        \intss{\RRn{2}}{} \power[2]{}{K(\bm{w})}\d{\bm{w}}, \\
      \label{eq_def:matrices_related_to_the_new_penalty_function_WMb}
      \WMbN{m}{b}{} 
      &\defeq \oplusss{h=1}{m} \WhN{}{h}{\bm{b}}.
    \end{align}
  \end{subequations}
  Matrices $\WhN{}{h}{}$ and $\WMbN{m}{}{}$ can be defined in a
  similar manner, using the $\blimit$ versions $\Uh[\bm{w}]{hq}$ from
  \myref{def:Uh}{item_def:Uh_limit}.
    Note that $\WhN{}{h}{\bm{b}}$ and $\WhN{}{h}{}$ will have rank one,
  whereas $\WMbN{m}{}{\bm{b}}$ and $\WMbN{m}{}{}$ will have rank~$m$.
    Furthermore, note that if \mbox{$\bmaz{h} \in \RRn{5}$} and
  \mbox{$\baM{m} =
    \Vector[\prime]{\bmaz{1},\dotsc,\bmaz{m}}$}, then
  \mbox{$\baM[\prime]{m} \WMbN{m}{}{\bm{b}} \baM{m} = \sumss{h=1}{m}
    \bmaz[\prime]{h} \WhN{}{h}{\bm{b}} \bmaz{h}$}.
\end{definition}

The time is due for the introduction of the random variables.
\begin{definition}
  \label{def:Xht_*Lc}
          Based on $\Yht{h}{t}$, $\Uh[\bm{w}]{hq:\bm{b}}$ and
  $\Khb[\yh{h}-\LGp]{h}{\bm{b}}$ from
  \cref{def:Yht_and_YMt,def:Uh,def:kernel}, define new bivariate
  random variables as follows,
  \begin{subequations}
    \label{eq:definition_of_Xht_*Lc}
    \begin{align}
      \label{eq:definition_of_XNht*}
      \XNht{n}{hq}{t}\!\left(\LGp\right)       &\defeq
        \sqrt{\bz{1}\bz{2}}         \Khb[\Yht{h}{t}-\LGp]{h}{\bm{b}} \Uh[\Yht{h}{t}]{hq:\bm{b}}, \\
      \label{eq:definition_of_XhtL}
      \XNht[\leq L]{n}{hq}{t}\!\left(\LGp\right)       &\defeq
        \sqrt{\bz{1}\bz{2}}         \Khb[\Yht{h}{t}-\LGp]{h}{\bm{b}} \UhL[\Yht{h}{t}]{hq:\bm{b}}{\leq L}, \\
      \label{eq:definition_of_Xhtc}
      \XNht[>L]{n}{hq}{t}\!\left(\LGp\right)       &\defeq
        \sqrt{\bz{1}\bz{2}}         \Khb[\Yht{h}{t}-\LGp]{h}{\bm{b}} \UhL[\Yht{h}{t}]{hq:\bm{b}}{> L}.
    \end{align}
  \end{subequations}
  Obviously;   \mbox{    $\XNht{n}{hq}{t}\!\left(\LGp\right) =     \XNht[\leq L]{n}{hq}{t}\!\left(\LGp\right) +     \XNht[>L]{n}{hq}{t}\!\left(\LGp\right)$}   and \mbox{    $\XNht[\leq L]{n}{hq}{t}\!\left(\LGp\right) \cdot     \XNht[>L]{n}{hq}{t}\!\left(\LGp\right) = 0$}.
\end{definition}

Since the point $\LGp$ will be fixed for the remainder of this
discussion, $\LGp$ will be suppressed and only $\XNht{n}{hq}{t}$ will
be used when referring to \cref{eq:definition_of_XNht*}, and~$\LGp$
will also be suppressed for the new random variables derived
from~$\XNht{n}{hq}{t}$.

Note: A comparison of $\XNht{n}{hq}{t}$ against the components
occurring in the expression for $\nablah{h}\QhN[\thetah{h}]{h}{n}$,
see \cref{eq:QhN_derivatives}, implies that the following adjusted
variable should be included,
\begin{align}
  \label{eq:Xht_adjusted}
  \tildeXNht{n}{hq}{t} 
  &\defeq     \XNht{n}{hq}{t} - \sqrt{\bz{1}\bz{2}}\intss{\RRn{2}}{}     \Khb[\yh{h}-\LGp]{h}{\bm{b}} \Uh[\yh{h}]{hq:\bm{b}}     \psi\!\left(\yh{h}; \thetah{h}\right) \d{\bm{\yh{h}}},
\end{align}
but the arguments later on will use a mean adjusted approach similar
to the one used in \citet{masry95:_nonpar_estim_ident_nonlin_arch},
see the definitions of $\ZNht{n}{hq}{t}$ and $\QNh{n}{hq}$ below, and
the only place $\tildeXNht{n}{hq}{t}$ is needed is in the proof of
\cref{eq:connection_RV_nabla_QMN}.

\begin{definition}
  \label{def:ZNht_and_QNh}
  Based on the bivariate random variables $\XNht{n}{hq}{t}$ from
  \cref{def:Xht_*Lc} define the following bivariate and
  \mbox{$(m+1)$-variate} random variables,
  \begin{subequations}
    \label{eq:definition_ZNht_and_QNh}
    \begin{align}
      \label{eq:definition_ZNht}
      \ZNht{n}{hq}{t}         &\defeq \XNht{n}{hq}{t} -           \E{\XNht{n}{hq}{t}}, \\
                \label{eq:definition_QNh}
      \QNh{n}{hq}         &\defeq 
                    \sumss{t=1}{n} \ZNht{n}{hq}{t}.
    \end{align}
  \end{subequations}
  Similarly, $\ZNht[\geq L]{n}{hq}{t}$, $\ZNht[< L]{n}{hq}{t}$,
  $\QNh[\geq L]{n}{hq}$ and $\QNh[< L]{n}{hq}$ can be defined in the
  natural manner, with the obvious connections   \mbox{$\ZNht{n}{hq}{t} =     \ZNht[\geq L]{n}{hq}{t} + \ZNht[< L]{n}{hq}{t}$},   \mbox{$\ZNht[\geq L]{n}{hq}{t} \cdot \ZNht[< L]{n}{hq}{t} = 0$},   and   \mbox{$\QNh{n}{hq} = \QNh[\geq L]{n}{hq} + \QNh[< L]{n}{hq}$}
  holding for all~$L$.   Moreover: \mbox{$\Cov{       \ZNht{n}{hq}{i}}{       \ZNht{n}{j}{k} } = \E{       \ZNht{n}{hq}{i} \cdot       \ZNht{n}{j}{k} } = \Cov{       \XNht{n}{hq}{i}}{       \XNht{n}{jr}{k} }$}.
\end{definition}

The last batch of random variables can now be introduced.
\begin{definition}
  \label{def:ZNMt_and_QNM}
  Based upon the bivariate $\ZNht{n}{hq}{t}$ from
  \cref{def:ZNht_and_QNh}, and for
  \mbox{$\bm{a} \defeq \baM{m} \in \RRn{5\times m}$},
      define the following \mbox{$(m+1)$-variate} random~variables,
  \begin{subequations}
    \label{eq:definition_ZNMt_and_QNM}
    \begin{align}
            \label{eq:definition_ZNMt}
      \ZNMt{n}{m}{t}(\bm{a})       &\defeq \sumss{h=1}{m} \sumss{q=1}{5} \az{hq} \ZNht{n}{hq}{t}         = \bmaz[\prime]{} \ZNhtvec{n}{\overbar{m}:t}, \\
            \label{eq:definition_QNM}
      \QNM{n}{m}(\bm{a})       &\defeq \sumss{h=1}{m} \sumss{q=1}{5} \az{hq} \QNh{n}{hq}         = \bmaz[\prime]{}  \QNhvec{n}{\overbar{m}},
    \end{align}
  \end{subequations}
  where $\ZNhtvec{n}{\overbar{m}:t}$ and $\QNhvec{n}{\overbar{m}}$ are defined in the
  obvious manner.  
        \end{definition}

\begin{lemma}
  \label{eq:connection_RV_nabla_QMN}
  $\QNhvec{n}{\overbar{m}}$ and $\sqrt{\bz{1}\bz{2}}\nablaM{m}\QMN[\thetaMb{m}{b}]{m}{n}$ share
  the same limiting distribution.
\end{lemma}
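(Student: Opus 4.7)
The claim essentially reduces to a direct computation combined with the defining equation of the population value $\thetahb{h}{b}$. Here is the strategy.

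First I would compute, for each pair $(h,q)$, the $[h,q]$-component of $\sqrt{\bz{1}\bz{2}}\nablaM{m}\QMN[\thetaMb{m}{b}]{m}{n}$. Since the summands $\tildeQhN[\thetah{h}]{h}{n}$ defining $\QMN{m}{n}$ depend on disjoint blocks of parameters, this component equals $\sqrt{\bz{1}\bz{2}}$ times the $q$-th entry of $\nablah{h}\QhN[\thetahb{h}{b}]{h}{n}$. Applying the explicit expression from \cref{eq:QhN_derivatives} and inserting the factor $\sqrt{\bz{1}\bz{2}}$ into the kernel yields, in the notation of \cref{def:Xht_*Lc},
\begin{align*}
    \sqrt{\bz{1}\bz{2}}\,\parenRz{[h,q]}{\nablaM{m}\QMN[\thetaMb{m}{b}]{m}{n}}
    = -\sumss{t=1}{n}\left[\XNht{n}{hq}{t}
    - \sqrt{\bz{1}\bz{2}}\intss{\RRn{2}}{}\Khb[\yh{h}-\LGp]{h}{\bm{b}}\Uh[\yh{h}]{hq:\bm{b}}\psi\!\parenR{\yh{h};\thetahb{h}{b}}\d{\yh{h}}\right].
\end{align*}

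The key step is then to identify the bracketed term as $\ZNht{n}{hq}{t}$. Since $\thetahb{h}{b}$ is the population value, it satisfies the first-order condition \cref{eq:def_of_theta_hb}, which says precisely that
\begin{align*}
    \intss{\RRn{2}}{}\Khb[\yh{h}-\LGp]{h}{\bm{b}}\Uh[\yh{h}]{hq:\bm{b}}\psi\!\parenR{\yh{h};\thetahb{h}{b}}\d{\yh{h}}
    = \intss{\RRn{2}}{}\Khb[\yh{h}-\LGp]{h}{\bm{b}}\Uh[\yh{h}]{hq:\bm{b}}\gh[\yh{h}]{h}\d{\yh{h}}.
\end{align*}
Multiplying by $\sqrt{\bz{1}\bz{2}}$ and recognising the right-hand side as $\E{\XNht{n}{hq}{t}}$, the bracketed term becomes $\XNht{n}{hq}{t}-\E{\XNht{n}{hq}{t}} = \ZNht{n}{hq}{t}$. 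Summing over $t$ and using \cref{eq:definition_QNh} gives $\sqrt{\bz{1}\bz{2}}\,\parenRz{[h,q]}{\nablaM{m}\QMN[\thetaMb{m}{b}]{m}{n}} = -\QNh{n}{hq}$, so stacking over all $(h,q)$ produces the vector identity
\begin{align*}
    \sqrt{\bz{1}\bz{2}}\,\nablaM{m}\QMN[\thetaMb{m}{b}]{m}{n} = -\QNhvec{n}{\overbar{m}}.
\end{align*}

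Finally, since the two random vectors differ only by a sign, they automatically share the same limiting distribution: any asymptotic distribution obtained for $\QNhvec{n}{\overbar{m}}$ (which will be centred multivariate normal by the arguments in the forthcoming sections, see \cref{Res:QMN_A4_requirement} and the Cram\'{e}r-Wold device applied to $\QNM{n}{m}(\bm{a})$) is symmetric about $\bm{0}$ and thus invariant under negation. There is no substantial obstacle here; the whole lemma is essentially a bookkeeping verification whose only nontrivial input is the first-order condition \cref{eq:def_of_theta_hb} defining $\thetahb{h}{b}$. The real work lies ahead, namely in establishing the asymptotic normality of $\QNhvec{n}{\overbar{m}}$ itself when $\mlimit$ and $\blimit$ jointly with $\nlimit$.
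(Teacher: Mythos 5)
Your proof is correct and takes essentially the same route as the paper's: both arguments reduce the lemma to the first-order condition \cref{eq:def_of_theta_hb}, which forces the centering integral against $\psi$ to coincide with $\E{\XNht{n}{hq}{t}}$ (the integral against $\gh{h}$) once $\bm{b}<\bmbz{0}$, so that the two random vectors agree componentwise up to sign. You are in fact slightly more explicit than the paper in recording the overall sign flip coming from \cref{eq:QhN_derivatives} and disposing of it via the symmetry of the centred Gaussian limit.
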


\begin{proof}
  The only difference between $\QNhvec{n}{\overbar{m}}$ and
  $\sqrt{\bz{1}\bz{2}}\nablaM{m}\QMN[\thetaMb{m}{b}]{m}{n}$ is that
  the first use $\ZNht{n}{hq}{t}$ where the second use
  $\tildeXNht{n}{hq}{t}$.  The difference between these components are
  \begin{align}
    \ZNht{n}{hq}{t} - \tildeXNht{n}{hq}{t}     = \sqrt{\bz{1}\bz{2}} \cdot \intss{\RRn{2}}{}     \Khb[\yh{h}-\LGp]{h}{\bm{b}} \Uh[\yh{h}]{hq:\bm{b}}     \parenC{\gh[\yh{h}]{h} -     \psi\!\left(\yh{h}; \thetah{h}
    \right) }  \d{\bm{\yh{h}}},
  \end{align}
            and this difference will not only approach zero but in fact be
  identical to zero when the bandwidth $\bm{b}$ is smaller than
  $\bmbz{0}$, since the population value $\thetahb{h}{b}$ in that case
  satisfies \cref{eq:def_of_theta_hb}.  The result now follows from
  \citet[Th.~25.4]{Billingsley12:_probab_measur}.
\end{proof}

The purpose of the new random variables introduced in
\cref{def:Xht_*Lc,def:ZNht_and_QNh,def:ZNMt_and_QNM} is to find under
which conditions the fourth requirement of the Klimko-Nelson approach
is satisfied in the general situation where $\mlimit$ and $\blimit$
when $\nlimit$.

The part that does require some effort to investigate is the fourth
requirement of \cref{th:Klimko_Nelson_1978},
which (using the notation introduced here) means that it is necessary
to verify that $\nz[-1/2]{}\,\QNhvec{n}{\overbar{m}}$ approaches a
normal distribution when $\bm{b}$ goes to zero when $n$ and~$m$ are
\enquote{large enough}.  The proof will be presented in a step by step
manner, that builds upon the asymptotic behaviour
of~$\E{\XNht{n}{hq}{i}\cdot\XNht{n}{jr}{k}}$.
The computation of this expectation will (depending on the indices
$h$, $i$, $j$ and~$k$) either require a bivariate, trivariate or
tetravariate integral.

\begin{table}[h]
  \begin{center}
    \begin{tabular}{l||c|c||c|c}
      Combinations       &$\LGp$ &$\bm{b}$  &$\Yht{h}{i}$ &$\Yht{j}{k}$ \\
      \hline
      First argument of $\Khb{h}{\bm{b}}$       &$\vz{1}$ &$\bz{1}$ &$\Yz{h+i}$ &$\Yz{j+k}$ \\
            Second argument of $\Khb{h}{\bm{b}}$       &$\vz{2}$ &$\bz{2}$ &$\Yz{i}$ &$\Yz{k}$
    \end{tabular}
    \caption{Factors deciding bivariate, trivariate or tetravariate.}
    \label{table:b12_Yhijk_variants}
  \end{center}
\end{table}

\Cref{table:b12_Yhijk_variants} lists the combinations that must be
taken into account when
computing~$\E{\XNht{n}{hq}{i}\cdot\XNht{n}{jr}{k}}$, i.e.\ the
presence of $\LGp$ and $\bm{b}$ and the dependence on $\Yz{t}$ in the
kernel~functions --- and it is evident from this table that the amount
of overlap in the indexing set
\mbox{$  \parenC{i, h+i, k, j+k}$} will decide if the resulting
integral turns out to be bi-, tri- or~tetravariate.
Note that \cref{eq:App:hatlgsd_definition_main_document} of
\myref{def:lgsd_estimator}{def:lgsd_esitimator_folded} implies that
only positive indices are required, so the bivariate case can thus
only occur when \mbox{$i=k$} and \mbox{$h=j$}.  It will be seen later
on that these bivariate components are the only ones that adds
non-negligible contributions to the asymptotic~behaviour.

\makeatletter{}

\subsubsection{The asymptotic results --- basic part}
\label{app:the_asymptotic_results}

The analysis of the asymptotic properties of~$\XNht{n}{hq}{i}$, from
\cref{def:Xht_*Lc}, would be quite simple if either the kernel
function~$K(\bm{w})$ or the score-function
components~$\Uh[\bm{w}]{hq:\bm{b}}$ had bounded support, since the
finiteness requirements of
\myref{assumption_Yt}{assumption:h_x:y_x:y:z_finite_expectations} then
would follow directly from
\cref{th:integrals_kernel_and_score_components}, and the proof of
\cref{th:expectations_of_XNht*} would be rather trivial.  However, in
the present analysis, $K(\bm{w})$ and~$\Uh[\bm{w}]{hq:\bm{b}}$ both
have~$\RRn{2}$ as their support, which implies that extra care must be
taken when working with the densities under consideration.

\begin{lemma}
  \label{th:expectations_of_XNht*}
  When $\Yz{t}$   satisfies \cref{assumption_Yt}, and
  $\Uh[\bm{w}]{hq:\bm{b}}$ and $K(\bm{w})$ are as given in
  \cref{def:kernel,def:Uh}, then the random variables
  $\XNht{n}{hq}{t}$ from \cref{def:Xht_*Lc} satisfies
  \begin{enumerate}[label=(\alph*)]
  \item     \label{eq:E(Xhi)}
        $\E{\XNht{n}{hq}{i}} = \Oh{\sqrt{\bz{1}\bz{2}}}$.
  \item     \label{eq:E(|Xhi|.nu)}
        $\power[1/\nu]{}{\E{\absp[\nu]{\XNht{n}{hq}{i}}}} =
    \Oh{\absp[(2-\nu)/2\nu]{\bz{1}\bz{2}}}$.
  \item     \label{eq:E(Xhi.Xjk)}
        $\E{\XNht{n}{hq}{i}\cdot\XNht{n}{jr}{k}} =     \begin{cases}
            \Uh[\LGp]{hq:\bm{b}} \Uh[\LGp]{hr:\bm{b}} \gh[\LGp]{h}       \intss{\RRn{2}}{} \power[2]{}{K(\bm{w})}\d{\bm{w}} 
            + \Oh{\bz{1}\vee\bz{2}}       &\text{when bivariate},\\
      \Oh{\bz{1}\wedge\bz{2}}       &\text{when trivariate},\\
      \Oh{\bz{1}\bz{2}}       &\text{when tetravariate},
    \end{cases}$ \\
                    where bivariate, trivariate and tetravariate refers to how many
    different $\Yz{t}$ the four indices $h$, $i$, $j$ and~$k$ gives,
    cf.\ \cref{table:b12_Yhijk_variants} for details.
  \end{enumerate}
\end{lemma}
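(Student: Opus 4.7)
The plan is to compute each of the three expectations by writing them as integrals against the appropriate joint density of the underlying $\Yz{t}$-variables, performing the standard kernel substitution $\bm{u}=(\bm{y}-\LGp)/\bm{b}$, and then Taylor-expanding $\gh{h}$ around $\LGp$ using \myref{assumption_Yt}{assumption:gh_differentiable_at_LGp}. The score components $\Uh[\bm{w}]{hq:\bm{b}}$ and $\Uh[\bm{w}]{hq}$ are quadratic polynomials in $\bm{w}$ (they are derivatives of $\log\psi$ where $\psi$ is a bivariate Gaussian density, so each partial derivative is at worst quadratic), and hence the finiteness of all the resulting kernel-times-polynomial integrals is guaranteed by \cref{th:integrals_kernel_and_score_components} together with the moment condition \cref{eq:kernel_integrals_finite} on $K$.

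For part \ref{eq:E(Xhi)} I would write
\[
\E{\XNht{n}{hq}{i}}=\sqrt{\bz{1}\bz{2}}\intss{\RRn{2}}{}\Khb[\yh{h}-\LGp]{h}{\bm{b}}\Uh[\yh{h}]{hq:\bm{b}}\gh[\yh{h}]{h}\,\d{\yh{h}},
\]
substitute $\bm{u}=(\yh{h}-\LGp)/\bm{b}$ (Jacobian $\bz{1}\bz{2}$), and obtain
\[
\sqrt{\bz{1}\bz{2}}\intss{\RRn{2}}{}K(\bm{u})\Uh[\LGp+\bm{b}\bm{u}]{hq:\bm{b}}\gh[\LGp+\bm{b}\bm{u}]{h}\,\d{\bm{u}}=\sqrt{\bz{1}\bz{2}}\bigl[\Uh[\LGp]{hq:\bm{b}}\gh[\LGp]{h}+o(1)\bigr],
\]
which is $\Oh{\sqrt{\bz{1}\bz{2}}}$. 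Part \ref{eq:E(|Xhi|.nu)} follows the same template: $\E{|\XNht{n}{hq}{i}|^{\nu}}=(\bz{1}\bz{2})^{\nu/2}\int\!\Khb[\cdot]{h}{\bm{b}}^{\nu}|\Uh[\cdot]{hq:\bm{b}}|^{\nu}\gh{h}\,d\yh{h}$, and after substitution the prefactor becomes $(\bz{1}\bz{2})^{\nu/2}\cdot(\bz{1}\bz{2})^{1-\nu}=(\bz{1}\bz{2})^{1-\nu/2}$ multiplying a bounded integral (which is finite because of \cref{eq:kernel_integrals_finite}); taking the $1/\nu$ power gives the claimed rate $(\bz{1}\bz{2})^{(2-\nu)/(2\nu)}$.

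Part \ref{eq:E(Xhi.Xjk)} is the main part, and the plan is to split according to the overlap pattern of the index set $\{i,h+i,k,j+k\}$ that was catalogued in \cref{table:b12_Yhijk_variants}. In the \emph{bivariate} case ($h=j$, $i=k$) there is only one pair $(\Yz{h+i},\Yz{i})$ and the integral collapses to
\[
\bz{1}\bz{2}\intss{\RRn{2}}{}\Khb[\yh{h}-\LGp]{h}{\bm{b}}^{2}\Uh[\yh{h}]{hq:\bm{b}}\Uh[\yh{h}]{hr:\bm{b}}\gh[\yh{h}]{h}\,\d{\yh{h}};
\]
the same substitution produces the claimed leading term $\Uh[\LGp]{hq:\bm{b}}\Uh[\LGp]{hr:\bm{b}}\gh[\LGp]{h}\int K^{2}$, with the $\Oh{\bz{1}\vee\bz{2}}$ remainder coming directly from the Taylor expansion of $\gh{h}$ and of the $\Uh{hq:\bm{b}}$-polynomials around $\LGp$ (the two bandwidth components entering linearly in the next term). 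In the \emph{trivariate} case one variable is shared between the two kernels and one is not; writing out the three-variable change of variables one observes that only three of the four $\bz{i}$-factors in the Jacobian cancel the four $\bz{i}$-factors coming from the two $\Khb{h}{\bm{b}}$'s, leaving one surviving factor---either $\bz{1}$ or $\bz{2}$ depending on whether the shared index occurs in the first or second slot---which yields the uniform bound $\Oh{\bz{1}\wedge\bz{2}}$. In the \emph{tetravariate} case nothing is shared, so the four $\bz{i}$'s from the Jacobian of the four-dimensional substitution cancel all four $\bz{i}$'s in the denominators of the two kernels, and the overall prefactor $\bz{1}\bz{2}$ from the definition of $\XNht{n}{hq}{i}$ remains, giving $\Oh{\bz{1}\bz{2}}$.

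The anticipated main obstacle is keeping the remainder terms uniform in $h$ and in the lag pattern: we need the Taylor remainders from \myref{assumption_Yt}{assumption:gh_differentiable_at_LGp} for every bivariate density $\gh{h}$, and for the tri- and tetravariate cases we need the analogous smoothness of the higher-order marginals together with the finiteness assumption \myref{assumption_Yt}{assumption:h_x:y_x:y:z_finite_expectations}, which is precisely what rules out a pathological blow-up in $h$ via the $\alpha$-mixing-driven decay of the joint densities towards products of marginals. All other steps are routine kernel-type computations.
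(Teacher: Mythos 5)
Your plan is correct and follows essentially the same route as the paper's own proof: kernel substitution, Taylor expansion of $\gh[\yh{h}]{h}$ at $\LGp$ for the leading term in the bivariate case, and bandwidth bookkeeping (Jacobian factors against the kernel normalisations) for the trivariate and tetravariate cases — the paper merely factors the pure kernel-and-score integrals out into a separate lemma and then multiplies by the (bounded) density. The one detail you gloss over is that the trivariate bound $\Oh{\bz{1}\wedge\bz{2}}$ is obtained by observing that two different choices of substitution give $\Oh{\bz{1}}$ and $\Oh{\bz{2}}$ respectively, exactly as the paper does.
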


\begin{proof}
                              The expectations in \cref{eq:E(Xhi),eq:E(|Xhi|.nu),eq:E(Xhi.Xjk)}
  are all finite due to
  \myref{assumption_Yt}{assumption:h_x:y_x:y:z_finite_expectations}
  and they do in addition correspond to integrals whose integrands are
  of the form \mbox{$\mathcal{V}\cdot g$}, where $g$ is a density
  function and $\mathcal{V}$ is an integrand of the type discussed in
  \cref{eq:int_K*U,eq:int_(KU)^nu,eq:int_KKUU} of
  \cref{th:integrals_kernel_and_score_components}, i.e.\ $\mathcal{V}$
  collects everything that only depends on the functions
  $\Uh[\bm{w}]{hq:\bm{b}}$ and $K(\bm{w})$.  The substitutions used in
  the proof of \cref{th:integrals_kernel_and_score_components} can be
  applied to the different cases under investigation, and it follows
  that these substitutions will create new integrals with the desired
  function of $\bz{1}$ and $\bz{2}$ as a scaling factor.    This proves \cref{eq:E(Xhi),eq:E(|Xhi|.nu)} and it also takes care
  of the trivariate and tetravariate cases of \cref{eq:E(Xhi.Xjk)}.

              \Cref{eq_assumption:gh_differentiable_at_LGp} from
  \myref{assumption_Yt}{assumption:gh_differentiable_at_LGp} is
  needed for the bivariate case of \cref{eq:E(Xhi.Xjk)}, i.e.\ the
  Taylor expansion of $\gh[\yh{h}]{h}$ around the point~$\LGp$
  allows the integral of interest to be written as the sum of the
  following three~integrals:
  \begin{subequations}
    \label{eq:E(Xhi.Xjk)_bivariate}
    \begin{align}
      \label{eq:E(Xhi.Xjk)_bivariate_constant}
      \mathcalJz{1} 
      &\defeq \intss{\RRn{2}}{}         \mathcal{V}\left(\yh{h}\right) \cdot \gh[\LGp]{h}         \d{\yh{h}}, \\
            \label{eq:E(Xhi.Xjk)_bivariate_linear}
      \mathcalJz{2} 
      &\defeq \intss{\RRn{2}}{}         \mathcal{V}\left(\yh{h} \right) \cdot 
        \left(         \power[\prime]{}{\bmmathfrakgz{h}(\LGp)}         \left[\yh{h}-\LGp \right]  \right)         \d{\yh{h}}, \\ 
            \label{eq:E(Xhi.Xjk)_bivariate_remainder}
      \mathcalJz{3} 
      &\defeq \intss{\RRn{2}}{}         \mathcal{V}\left(\yh{h} \right) \cdot 
        \left(         \power[\prime]{}{\bmmathfrakRz{h}(\yh{h})}         \left[\yh{h}-\LGp \right] \right)         \d{\yh{h}}.
    \end{align}
  \end{subequations}

  The bivariate case of
  \myref{th:integrals_kernel_and_score_components}{eq:int_KKUU} shows
  that the term $\mathcalJz{1}$ gives the desired
  result, so it remains to prove that the terms
  $\mathcalJz{2}$ and $\mathcalJz{3}$ are
  $\Oh{\bz{1}\vee\bz{2}}$.  For this investigation, the substitution
  \mbox{$\wz{1} = \left(\yz{h}-\vz{1}\right)/\bz{1}$} and
  \mbox{$\wz{2} = \left(\yz{0}-\vz{2}\right)/\bz{2}$} must be applied,
  which in particular replaces the vector
  \mbox{$\left[\yh{h}-\LGp \right]$} with the vector
  \mbox{$\Vector[\prime]{\bz{1}\wz{1},\bz{2}\wz{2}}$}.
  In order to compactify the notation, let $\az{1}$ and $\az{2}$
  denote the two components
  of~$\bmmathfrakgz{h}(\LGp)$, let $\mathcal{W}$ be
  the substituted version of $\mathcal{V}$, let
  $\mathfrakRz{h1}$ and $\mathfrakRz{h2}$
  be the two components of the remainder function and finally let
  $\mathfrakTz{h1}$ and $\mathfrakTz{h2}$
  be the substituted versions of
  $\mathfrakRz{h1}\mathcal{W}$ and
  $\mathfrakRz{h2}\mathcal{W}$.

  With this notation, the substitution used upon
  $\mathcalJz{2}$ gives
  \begin{align}
    \label{E(Xhi.Xjk)_substitution_linear}
    \mathcalJz{2}     &= \az{1}\bz{1} \intss{\RRn{2}}{}       \wz{1} \cdot \mathcal{W}(\bm{w})        \d{\bm{w}}
      + \az{2}\bz{2} \intss{\RRn{2}}{}       \wz{2} \cdot \mathcal{W}(\bm{w})       \d{\bm{w}},
  \end{align}
  whose integrands include an extra factor of $\wz{1}$ or $\wz{2}$
  compared to the integrands encountered in the proof of
  \cref{th:integrals_kernel_and_score_components}.  This is however no
  problem, since \myref{th:integrals:K_w1w2}{eq:int_|K2w_i|} implies
  that the finiteness conclusion still holds true in these cases,
  which implies that $\mathcalJz{2}$ is $\Oh{\bz{1}\vee\bz{2}}$

  Since
  \myref{assumption_Yt}{assumption:h_x:y_x:y:z_finite_expectations}
  ensures that the sum of the three integrals
  $\mathcalJz{1}$, $\mathcalJz{2}$ and
  $\mathcalJz{3}$ is finite, and the above discussion
  shows that the two first of them are finite, it follows that
  $\mathcalJz{3}$ also is~finite.  An inspection
  of~$\mathcalJz{3}$ after substitution,~i.e.\
    \begin{align}
      \label{E(Xhi.Xjk)_substitution_remainder}
      \mathcalJz{3}       &=  \intss{\RRn{2}}{}         \left[
        \bz{1}\wz{1} \cdot         \mathfrakTz{h1}(\bm{y}(\bm{w}))         + \bz{2}\wz{2} \cdot
        \mathfrakTz{h2}(\bm{y}(\bm{w}))         \right] \d{\bm{w}},
    \end{align}
    then reveal that the maximum of $\bz{1}$ and $\bz{2}$ can be
    factorised out of the integrand.  This implies that
    $\mathcalJz{3}$ is $\Oh{\bz{1}\vee\bz{2}}$, and thus concludes the
    proof of \cref{th:expectations_of_XNht*}
\end{proof}

The following corollary is handy when the covariance is the target of interest.

\begin{corollary}
  \label{th:covariances_of_XNht*}
  When $\Yz{t}$   satisfies \cref{assumption_Yt}, and
  $\Uh[\bm{w}]{hq:\bm{b}}$ and $K(\bm{w})$ are as given in
  \cref{def:kernel,def:Uh}, then the random variables
  $\XNht{n}{hq}{t}$ from \cref{def:Xht_*Lc} satisfies
  \begin{align}
    \label{eq:covariances_of_XNht*}
    \Cov{\XNht{n}{hq}{i}}{\XNht{n}{jr}{k}} =     \begin{cases}
      \Uh[\LGp]{hq:\bm{b}} \Uh[\LGp]{hr:\bm{b}} \gh[\LGp]{h}
      \intss{\RRn{2}}{} \power[2]{}{K(\bm{w})}\d{\bm{w}}       + \Oh{\bz{1}\vee\bz{2}}       &\text{when bivariate},\\
      \Oh{\bz{1}\wedge\bz{2}}       &\text{when trivariate},\\
      \Oh{\bz{1}\bz{2}}       &\text{when tetravariate}.
    \end{cases}
  \end{align}
        \end{corollary}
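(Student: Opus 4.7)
The plan is to prove the corollary by a direct calculation that combines parts \ref{eq:E(Xhi)} and \ref{eq:E(Xhi.Xjk)} of \cref{th:expectations_of_XNht*} through the identity
\begin{align*}
  \Cov{\XNht{n}{hq}{i}}{\XNht{n}{jr}{k}} = \E{\XNht{n}{hq}{i}\cdot\XNht{n}{jr}{k}} - \E{\XNht{n}{hq}{i}}\cdot\E{\XNht{n}{jr}{k}}.
\end{align*}
The first term on the right-hand side is already given by \myref{th:expectations_of_XNht*}{eq:E(Xhi.Xjk)} in each of the three cases, so the only work is to verify that the product of the marginal expectations can be absorbed into the error term stated in the corollary.

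By \myref{th:expectations_of_XNht*}{eq:E(Xhi)}, each of $\E{\XNht{n}{hq}{i}}$ and $\E{\XNht{n}{jr}{k}}$ is $\Oh{\sqrt{\bz{1}\bz{2}}}$, hence their product is $\Oh{\bz{1}\bz{2}}$. For the tetravariate case this matches the stated order exactly, and for the trivariate case the inequality $\bz{1}\bz{2} \leq \bz{1}\wedge\bz{2}$ (valid as $\blimit$, since at least one of the factors is $\leq 1$) ensures $\Oh{\bz{1}\bz{2}} \subseteq \Oh{\bz{1}\wedge\bz{2}}$. For the bivariate case, the analogous inequality $\bz{1}\bz{2} \leq \bz{1}\vee\bz{2}$ shows that $\Oh{\bz{1}\bz{2}}$ is absorbed into the $\Oh{\bz{1}\vee\bz{2}}$ remainder, so the explicit leading constant from \myref{th:expectations_of_XNht*}{eq:E(Xhi.Xjk)} is preserved.

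There is no real obstacle here; the result is a bookkeeping corollary to \cref{th:expectations_of_XNht*}. The only mild subtlety is to make sure the order-of-magnitude comparisons $\bz{1}\bz{2} \leq \bz{1}\wedge\bz{2} \leq \bz{1}\vee\bz{2}$ are explicitly justified using the fact that the bandwidths are eventually at most one (a consequence of $\blimit$ as guaranteed by \cref{assumption_Nmb}), so that the mean-product term is genuinely negligible relative to the stated remainders. Writing these three cases out in sequence then yields \cref{eq:covariances_of_XNht*}.
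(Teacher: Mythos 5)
Your proposal is correct and follows exactly the paper's own (one-line) argument: expand the covariance as $\E{XY}-\E{X}\E{Y}$, apply parts (a) and (c) of the preceding lemma, and absorb the $\Oh{\bz{1}\bz{2}}$ mean-product into the stated remainders. The extra justification you give for the ordering $\bz{1}\bz{2}\leq\bz{1}\wedge\bz{2}\leq\bz{1}\vee\bz{2}$ (valid once the bandwidths are at most one) is a harmless elaboration of what the paper leaves implicit.
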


\begin{proof}
  Since \mbox{$\Cov{\XNht{n}{hq}{i}}{\XNht{n}{jr}{k}} =     \E{\XNht{n}{hq}{i}\cdot\XNht{n}{jr}{k}} -
    \E{\XNht{n}{hq}{i}}\cdot\E{\XNht{n}{jr}{k}}$}, the result follows
  immediately from an inspection of \cref{eq:E(Xhi),eq:E(Xhi.Xjk)} of
  \cref{th:expectations_of_XNht*}.
\end{proof}

The next corollary is needed in the proof of
\cref{th:convergent_covariance_ZNMt}.

\begin{corollary}
  \label{th:Lp_properties_for_ZNht_and_ZNMt}
  When $\Yz{t}$   satisfies \cref{assumption_Yt}, and
  $\Uh[\bm{w}]{hq:\bm{b}}$ and $K(\bm{w})$ are as given in
  \cref{def:kernel,def:Uh}, then the random variables
  $\ZNht{n}{hq}{t}$ and $\ZNMt{n}{m}{t}(\bm{a})$ from
  \cref{def:ZNht_and_QNh} satisfies
  \begin{enumerate}[label=(\alph*)]
  \item     \label{th:Lp_properties_for_ZNht_limit}
        \mbox{$\power[1/\nu]{}{\E{\absp[\nu]{\ZNht{n}{hq}{t}}}} =
      \Oh{\absp[(2-\nu)/2\nu]{\bz{1}\bz{2}}}$}.
      \item     \label{th:Lp_properties_for_ZNMt_limit}
            \mbox{$\power[1/\nu]{}{\E{\absp[\nu]{\ZNMt{n}{m}{t}(\bm{a})}}}
      =   \Oh{m\absp[(2-\nu)/2\nu]{\bz{1}\bz{2}}}$}.
  \end{enumerate}
\end{corollary}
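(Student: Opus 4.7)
The plan is to derive both statements from Lemma \ref{th:expectations_of_XNht*} via Minkowski's inequality, regarding $\power[1/\nu]{}{\E{\absp[\nu]{\cdot}}}$ as the $L^{\nu}$-norm on the underlying probability space. For part \ref{th:Lp_properties_for_ZNht_limit}, the centering $\ZNht{n}{hq}{t} = \XNht{n}{hq}{t} - \E{\XNht{n}{hq}{t}}$ is expressed as a sum of a random variable and a (deterministic) constant, and the triangle inequality in $L^{\nu}$ gives
\begin{align*}
  \power[1/\nu]{}{\E{\absp[\nu]{\ZNht{n}{hq}{t}}}}
  \leq \power[1/\nu]{}{\E{\absp[\nu]{\XNht{n}{hq}{t}}}}
  + \absp{\E{\XNht{n}{hq}{t}}}.
\end{align*}
I would then invoke \cref{eq:E(|Xhi|.nu)} of \cref{th:expectations_of_XNht*} to bound the first term on the right by $\Oh{\absp[(2-\nu)/2\nu]{\bz{1}\bz{2}}}$, and \cref{eq:E(Xhi)} of the same lemma to bound the second term by $\Oh{\sqrt{\bz{1}\bz{2}}}$.

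The next step is to compare the two orders as $\blimit$. Since $\nu>2$ from \myref{assumption_Yt}{assumption_Yt_strong_mixing}, the exponent $(2-\nu)/(2\nu)$ is strictly negative whereas $1/2$ is strictly positive, so $\sqrt{\bz{1}\bz{2}} = \oh{\absp[(2-\nu)/2\nu]{\bz{1}\bz{2}}}$, which implies that the first term dominates and yields the claimed rate for \cref{th:Lp_properties_for_ZNht_limit}.

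For part \ref{th:Lp_properties_for_ZNMt_limit}, the random variable $\ZNMt{n}{m}{t}(\bm{a}) = \sumss{h=1}{m} \sumss{q=1}{5} \az{hq} \ZNht{n}{hq}{t}$ is a finite linear combination of the centered variables already analysed. Applying Minkowski's inequality a second time gives
\begin{align*}
  \power[1/\nu]{}{\E{\absp[\nu]{\ZNMt{n}{m}{t}(\bm{a})}}}
  \leq \sumss{h=1}{m} \sumss{q=1}{5} \absp{\az{hq}} \cdot
  \power[1/\nu]{}{\E{\absp[\nu]{\ZNht{n}{hq}{t}}}}.
\end{align*}
Since $\bm{a}$ is a fixed vector, $\max_{h,q} \absp{\az{hq}}$ is a finite constant, and the inner sum over $q$ contributes only a further constant factor of~$5$. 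I would then substitute the rate from \cref{th:Lp_properties_for_ZNht_limit} into each of the $5m$ summands, obtaining a bound of order $m \cdot \absp[(2-\nu)/2\nu]{\bz{1}\bz{2}}$, which is the required conclusion.

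There is no serious obstacle here; the proof is essentially a two-step application of Minkowski's inequality together with bookkeeping of the orders in $\bm{b}$. The only point that deserves explicit attention is the domination $\sqrt{\bz{1}\bz{2}} = \oh{\absp[(2-\nu)/2\nu]{\bz{1}\bz{2}}}$, which relies on the mixing-exponent condition $\nu>2$ built into \cref{assumption_Yt}.
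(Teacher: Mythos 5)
Your proposal is correct and follows essentially the same route as the paper: a two-fold application of Minkowski's inequality combined with the rates from \cref{eq:E(Xhi),eq:E(|Xhi|.nu)} of \cref{th:expectations_of_XNht*}, followed by bounding the coefficients $\absp{\az{hq}}$ by their maximum. Your explicit justification that $\sqrt{\bz{1}\bz{2}}$ is dominated by $\absp[(2-\nu)/2\nu]{\bz{1}\bz{2}}$ because $\nu>2$ makes the exponent negative is a detail the paper leaves implicit, but the argument is otherwise identical.
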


\begin{proof}
  The connection between expectations and $\Lp{\nu}$-spaces discussed in
  \cref{app:sigma_algebras_Lp_spaces}, see
  \cref{eq:expectation_and_Lph}, can be applied here, which in essence
  reduces the proof to a simple application of Minkowski's inequality.
    For \cref{th:Lp_properties_for_ZNht_limit}, note that
  \cref{th:expectations_of_XNht*} gives the following result
    \begin{subequations}
    \begin{align}
      \label{th:Lp_properties_for_ZNht_limit_proof}
      \power[1/\nu]{}{\E{\absp[\nu]{\ZNht{n}{hq}{t}}}}       &=          \power[1/\nu]{}{\E{\absp[\nu]{\XNht{n}{hq}{t} -
        \E{\XNht{n}{hq}{t}} }}} \\
      &\leq         \power[1/\nu]{}{\E{\absp[\nu]{\XNht{n}{hq}{t}}}} +
        \power[1/\nu]{}{\E{\absp[\nu]{\E{\XNht{n}{hq}{t}}}}} \\ 
      &=         \Oh{\absp[(2-\nu)/2\nu]{\bz{1}\bz{2}}} + 
        \Oh{\sqrt{\bz{1}\bz{2}}} \\
      &= \Oh{\absp[(2-\nu)/2\nu]{\bz{1}\bz{2}}}.
    \end{align}
  \end{subequations}
    \Cref{th:Lp_properties_for_ZNMt_limit} now follows from
  \cref{th:Lp_properties_for_ZNht_limit} and \cref{th:Lp_expectation},
  due to the following inequality,
  \begin{subequations}
    \begin{align}
      \label{th:Lp_properties_for_ZNMt_limit_proof}
      \power[1/\nu]{}{\E{\absp[\nu]{\ZNMt{n}{m}{t}(\bm{a})}}}       &=          \power[1/\nu]{}{\E{\absp[\nu]{\sumss{h=1}{m} \sumss{q=1}{5}
        \az{hq} \ZNht{n}{hq}{t}} }} \\
      &\leq          \sumss{h=1}{m} \sumss{q=1}{5} \absp{\az{hq}}         \power[1/\nu]{}{\E{\absp[\nu]{\ZNht{n}{hq}{t}}}} \\
      &\leq          \sumss{h=1}{m} \sumss{q=1}{5}         \Az{\overbar{m}} \cdot
                                                \Oh{\absp[(2-\nu)/2\nu]{\bz{1}\bz{2}}} \\
      &= \Oh{m \absp[(2-\nu)/2\nu]{\bz{1}\bz{2}}}.
    \end{align}
  \end{subequations}
  where $\Az{\overbar{m}}$ is the maximum of $\absp{\az{hq}}$.
\end{proof}

\makeatletter{}
\subsubsection{The asymptotic results --- final part}
\label{app:the_asymptotic_results_advanced}

This section will present the final steps toward the verification of
the fourth requirement of the Klimko-Nelson approach for the case
where $\mlimit$ and $\blimit$ when $\nlimit$.  Note that
\cref{th:AN_of_QNh_and_QNM} (the main theorem) requires both a large
block - small block argument and a truncation argument, and the
technical details related to these components will be taken care of in
\cref{th:convergent_covariance_ZNMt} and
\cref{th:eta_variance_expressions}.

The large block - small block argument requires that quite a few
components must be verified to be asymptotically negligible.  The
following lemma, which extends an argument encountered in the proof
of~\citet[Lemma~4.3(b)]{masry95:_nonpar_estim_ident_nonlin_arch},
shows that the asymptotic negligibility of all the \enquote{off the
  diagonal} components can be taken care of in one operation.

\begin{lemma}
  \label{th:convergent_covariance_ZNMt}
  When $\Yz{t}$   satisfies \cref{assumption_Yt}, when $n$, $m$ and $\bm{b}$ are as
  specified in \cref{assumption_Nmb}, and when
  $\Uh[\bm{w}]{hq:\bm{b}}$ and $K(\bm{w})$ are as given in
  \cref{def:kernel,def:Uh} --- then the random variables
  $\ZNMt{n}{m}{t}(\bm{a})$ from \cref{def:ZNMt_and_QNM} satisfies
  \begin{align}
    \label{eq:convergent_covariance_ZNMt_general_M}
    \frac{1}{n} \sumss{\stackrel{i,k = 1}{\scriptscriptstyle i\neq k}}{n}     \absp{\E{\ZNMt{n}{m}{i}(\bm{a}) \cdot     \ZNMt{n}{m}{k}(\bm{a}) }} &= \oh{1}.
  \end{align}
\end{lemma}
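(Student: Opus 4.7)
The plan is to adapt the large-block / small-block dichotomy used in Masry's Lemma~4.3(b) to the present multivariate setting, expanding the inner product structure of $\ZNMt{n}{m}{t}(\bm{a})$ and then exploiting the covariance estimates already established together with a Davydov-type covariance inequality. First I would expand
\begin{align*}
  \E{\ZNMt{n}{m}{i}(\bm{a}) \ZNMt{n}{m}{k}(\bm{a})}
  &= \sum_{h,j=1}^m \sum_{q,r=1}^5 a_{hq}a_{jr}\,
     \Cov{\XNht{n}{hq}{i}}{\XNht{n}{jr}{k}},
\end{align*}
using the identity $\E{\ZNht{n}{hq}{i}\cdot\ZNht{n}{jr}{k}}=\Cov{\XNht{n}{hq}{i}}{\XNht{n}{jr}{k}}$ noted after \cref{def:ZNht_and_QNh}. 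The coefficients $|a_{hq}|$ are uniformly bounded, so the task reduces to estimating these $m^2$ covariances uniformly.

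Next I would split the off-diagonal index set $\{(i,k):1\le i,k\le n,\ i\neq k\}$ according to a cutoff $c_n$ (to be chosen later) into a near region $0<|i-k|\le c_n$ and a far region $|i-k|>c_n$. For the near region I rely on \cref{th:covariances_of_XNht*}: since the indexing set $\{i,h+i,k,j+k\}$ has (by a short parity argument with $h,j\ge 1$) at least three distinct values whenever $i\neq k$, each covariance falls into the trivariate or tetravariate regime and is therefore $\Oh{b_1\wedge b_2}=\Oh{\sqrt{b_1 b_2}}$ uniformly in $h,j,i,k$. Multiplying by the $m^2$ coefficient factors, by the number $\Oh{n c_n}$ of near pairs, and by the $1/n$ prefactor yields an upper bound of order $\Oh{c_n m^2 \sqrt{b_1 b_2}}$.

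For the far region I would apply the Davydov covariance inequality (in the form available for strongly mixing sequences): since $\XNht{n}{hq}{i}$ is $\sigma(Y_s:i\le s\le i+h)$-measurable and $\XNht{n}{jr}{k}$ is $\sigma(Y_s:k\le s\le k+j)$-measurable, with $h,j\le m$, we obtain for $|i-k|>m$
\begin{align*}
  \bigl|\Cov{\XNht{n}{hq}{i}}{\XNht{n}{jr}{k}}\bigr|
  \le 8\,\alpha(|i-k|-m)^{1-2/\nu}\,
      \|\XNht{n}{hq}{i}\|_\nu\|\XNht{n}{jr}{k}\|_\nu.
\end{align*}
By \myref{th:expectations_of_XNht*}{eq:E(|Xhi|.nu)} each $\Lp{\nu}$-norm is $\Oh{|b_1 b_2|^{(2-\nu)/(2\nu)}}$, so after multiplying by $m^2$ and summing over the far pairs the contribution to \eqref{eq:convergent_covariance_ZNMt_general_M} is bounded by a constant times $m^2|b_1 b_2|^{(2-\nu)/\nu}\sum_{j>c_n-m}\alpha(j)^{1-2/\nu}$, and the mixing hypothesis \eqref{eq:alpha_requirement} gives the tail estimate $\sum_{j>c}\alpha(j)^{1-2/\nu}=\oh{c^{-a}}$.

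The main obstacle is to choose $c_n$ so that both bounds vanish simultaneously, i.e.\ both $c_n m^2\sqrt{b_1 b_2}\to 0$ and $c_n^{-a} m^2 |b_1 b_2|^{(2-\nu)/\nu}\to 0$; equivalently, one needs
\begin{align*}
  \bigl(m^2 |b_1 b_2|^{(2-\nu)/\nu}\bigr)^{1/a}
  \ll c_n \ll \bigl(m^2\sqrt{b_1 b_2}\bigr)^{-1},
\end{align*}
which reduces to $m^{2+2/a}|b_1 b_2|^{1/2+(2-\nu)/(a\nu)}=\oh{1}$. I expect this inequality to follow from the rate conditions in \cref{assumption_Nmb}, and in particular from \myref{assumption_Nmb}{eq:assumption_m(b1 join b2)} (via the definition of $\delta$ tuned to $\nu$ and $a$) combined with \myref{assumption_Nmb}{eq:assumption_Nb1b2/m}; the verification of this balance, and the bookkeeping of the intermediate block length $s_n$ already introduced in \cref{assumption_Nmb}, is where the real work lies.
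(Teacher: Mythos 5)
Your overall architecture --- expand $\ZNMt{n}{m}{i}(\bm{a})\ZNMt{n}{m}{k}(\bm{a})$ into covariances of the $\XNht{n}{hq}{i}$, split the off-diagonal pairs by the gap $|i-k|$, use the explicit covariance rates of \cref{th:covariances_of_XNht*} for small gaps and Davydov's inequality with the mixing rate for large gaps --- is the same as the paper's (which, after a stationarity reduction to a single sum over the lag $\ell=|i-k|$, splits into three zones $\ell\le m$, $m<\ell\le k_n+m$, $\ell>k_n+m$). The gap is in your near region. You bound every covariance there uniformly by the trivariate rate $O(b_1\wedge b_2)=O(\sqrt{b_1b_2})$, but trivariate configurations can only occur when $|i-k|\le m$ (for larger gaps the index sets $\{i,i+h\}$ and $\{k,k+j\}$ with $1\le h,j\le m$ cannot intersect), and even then only for $O(m)$ of the $m^2$ pairs $(h,j)$ per lag (namely $h=|i-k|$ or $h=j+|i-k|$). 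For $m<|i-k|\le c_n$ every term is tetravariate, hence $O(b_1b_2)$. Discarding this distinction inflates your near-region contribution to $O(c_n m^2\sqrt{b_1b_2})$, whereas the correct accounting gives $O(m^2(b_1\wedge b_2))+O(m^3 b_1b_2)$ for the lags up to $m$ plus $O(c_n m^2 b_1b_2)$ for the intermediate lags. Consequently your balancing requirement $m^{2+2/a}|b_1b_2|^{1/2+(2-\nu)/(a\nu)}=o(1)$ is strictly stronger than what \cref{assumption_Nmb} provides, and it can fail outright: when $a$ is close to its permitted lower bound $1-2/\nu$, the exponent $\tfrac12+\tfrac{2-\nu}{a\nu}$ is negative, so the left-hand side tends to infinity and no choice of $c_n$ works.

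The repair is precisely the finer decomposition: handle $\ell\le m$ separately with the trivariate/tetravariate count above (these terms are $o(1)$ by \myref{assumption_Nmb}{eq:assumption_m(b1 join b2)}, since $m^2(b_1\vee b_2)\to 0$ and $m^3b_1b_2\le m^{-1}(m^2(b_1\vee b_2))^2$); use the pure tetravariate rate $O(b_1b_2)$ on $m<\ell\le k_n+m$; and only then apply Davydov for $\ell>k_n+m$. The cutoff $k_n\asymp m^{2/a}|b_1b_2|^{(2-\nu)/(a\nu)}$ then has to satisfy the weaker condition $m^{2+2/a}|b_1b_2|^{1+(2-\nu)/(a\nu)}\to 0$, whose exponent on $|b_1b_2|$ is positive (since $a>1-2/\nu$) and which does follow from \myref{assumption_Nmb}{eq:assumption_m(b1 join b2)}; this is the content of \cref{th:k_N_limit_for_off_diagonal_components}. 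Your Davydov step itself is fine (centering, the $\Lp{\nu}$ bounds from \myref{th:expectations_of_XNht*}{eq:E(|Xhi|.nu)}, and the tail estimate $\sum_{j>c}\alpha(j)^{1-2/\nu}=\oh{c^{-a}}$ from \cref{eq:alpha_requirement} all go through), so only the near-region bookkeeping needs to change.
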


\begin{proof}
  \Myref{assumption_Yt}{assumption_Yt_strictly_stationary}, i.e.\ the
  strict stationarity of $\TSR{\Yz{t}}{t\in\ZZ}{}$, implies that the
  double sum in \cref{eq:convergent_covariance_ZNMt_general_M} can be
  reduced to a single sum, i.e.\
  \begin{align}
    \label{eq:single_sum_o(1)}
    \frac{1}{n} \sumss{\stackrel{i,k = 1}{\scriptscriptstyle i\neq k}}{n}     \absp{\E{\ZNMt{n}{m}{i}(\bm{a}) \cdot     \ZNMt{n}{m}{k}(\bm{a}) }}      &= 2 \sumss{\ell=1}{n-1} \left(1
      - \frac{\ell}{n}\right)       \INMl{n}{m}{\ell}(\bm{a}),
  \end{align}
  where the terms $\INMl{n}{m}{\ell}(\bm{a})$ are given by
  \begin{subequations}
    \label{eq:definiton_and_rewriting_of_INMl}
    \begin{align}
      \label{eq:definition_of_INMl}
      \INMl{n}{m}{\ell}(\bm{a})       &\defeq \absp{
        \E{\ZNMt{n}{m}{0}(\bm{a}) \cdot         \ZNMt{n}{m}{\ell}(\bm{a})} } \\
      &= \absp{ \E{        \sumss{h=1}{m} \sumss{q=1}{5} \vechcomp{hq}{a} \ZNht{n}{hq}{0}         \cdot         \sumss{j=1}{m} \sumss{r=1}{5} \vechcomp{jr}{a}
        \ZNht{n}{jr}{\ell} }} \\
      \label{eq:rewriting_of_INMl}
      &= \absp{ \sumss{h=1}{m} \sumss{j=1}{m} \sumss{q=1}{5} \sumss{r=1}{5}         \vechcomp{hq}{a} \vechcomp{jr}{a} \E{        \ZNht{n}{hq}{0}         \cdot         \ZNht{n}{jr}{\ell} }} \\
          \label{eq:proof_o(1)_INMl_to_INhl}
      &\leq \sumss{h=1}{m} \sumss{j=1}{m}  \sumss{q=1}{5} \sumss{r=1}{5}         \absp{\vechcomp{hq}{a}}\! \absp{\vechcomp{jr}{a}}         \INhjl{n}{hq}{jr}{\ell},
    \end{align}
  \end{subequations}
  where \mbox{$\INhjl{n}{hq}{jr}{\ell} \defeq \absp{\E{         \ZNht{n}{hq}{0} \cdot         \ZNht{n}{jr}{\ell} }} = \absp{\Cov{         \XNht{n}{hq}{0}}{         \XNht{n}{jr}{\ell} }}$}.
  
  Introducing integers $\kz{n}$ (to be specified later on) such that
  \mbox{$\kz{n}\rightarrow\infty$} and
  \mbox{$\kz{n}\mz[2]{}\bz{1}\bz{2}\rightarrow0$} as
  \mbox{$n\rightarrow\infty$}, \cref{eq:single_sum_o(1)}~can be
  written as the sum of the following three sums,
  \begin{subequations}
    \label{eq:proof_o(1)_J}
    \begin{align}
      \label{eq:proof_o(1)_J1}
      \Jz{1} &\defeq 2 \sumss{\ell = 1}{m} \left(1 -
            \ell/n\right) \INMl{n}{m}{\ell}(\bm{a}), \\
      \label{eq:proof_o(1)_J2}
      \Jz{2} &\defeq 2 \sumss{\makebox[0pt]{$\scriptstyle \ell =
            m+1$}}{\makebox[0pt]{$\scriptstyle \kz{n} + m$}} \,\left(1 -
            \ell/n\right) \INMl{n}{m}{\ell}(\bm{a}), \\
      \label{eq:proof_o(1)_J3}
      \Jz{3} &\defeq 2 \sumss{\makebox[0pt]{$\scriptstyle \ell =
            \kz{n}+m+1$}}{n-1} \, \left(1 -
            \ell/n\right) \INMl{n}{m}{\ell}(\bm{a}).
    \end{align}
  \end{subequations}

  From the definition of $\INMl{n}{m}{\ell}(\bm{a})$ it is seen that
  in $\Jz{1}$ there will be some overlap between those $\Yz{t}$ that
  are a part of $\ZNMt{n}{m}{0}(\bm{a})$ and those that are a part of
  \mbox{$\ZNMt{n}{m}{\ell}(\bm{a})$}, and moreover that this will not
  be the case for the two sums $\Jz{2}$ and~$\Jz{3}$.

        \Cref{eq:proof_o(1)_INMl_to_INhl,eq:proof_o(1)_J1} implies that a
  squeeze argument can be used when dealing with $\Jz{1}$, i.e.\
  \begin{align}
    \label{eq:bound_on_J1}
    0 \leq \Jz{1} \leq     2\cdot \left(\max_{\stackrel{    \scriptscriptstyle h \in \parenC{1,\dotsc,m}}{
    \scriptscriptstyle q \in \parenC{1,\dotsc,5}}}
    \absp[2]{\az{hq}} \right) \cdot     \sumss{\ell=1}{m} \sumss{h=1}{m} \sumss{j=1}{m}  \sumss{q=1}{5} \sumss{r=1}{5} \absp{\Cov{     \XNht{n}{hq}{0}}{     \XNht{n}{jr}{\ell} }},
  \end{align}
  and \cref{th:covariances_of_XNht*} can be used to determine how the
  summand behaves in the limit.        \Cref{table:b12_Yhijk_variants}, page
  \pageref{table:b12_Yhijk_variants}, shows that the bivariate case
  never occurs, that $h$ must be equal to $\ell$ or \mbox{$j+\ell$} in
  order for a trivariate case to occur, and that the rest of the cases
  must be~tetravariate.  It is not hard (but a bit tedious) to
  explicitly compute the number of trivariate terms that occur in
  \cref{eq:bound_on_J1}, but for the present asymptotic analysis it is
  sufficient to note that the number of trivariate terms is of
  order~$\mz[2]{}$, whereas the number of tetravariate terms is of
  order~$\mz[3]{}$.    \Cref{th:covariances_of_XNht*} thus gives that the bivariate and
  tetravariate parts of the bound for $\Jz{1}$ respectively are
  $\Oh{\mz[2]{}\!\left(\bz{1}\wedge\bz{2}\right)}$ and
  $\Oh{\mz[3]{}\bz{1}\bz{2}}$.

  \mbox{$\Jz{1}=\oh{1}$} now follows from
  \myref{assumption_Nmb}{eq:assumption_m(b1 join b2)} and the
  following two simple observations;
  \begin{subequations}
    \label{eq:J1_dominating_order}
    \begin{align}
      \label{eq:J1_dominating_order_bivariate}
      \mz[2]{}\!\left(\bz{1}\wedge\bz{2}\right)       & \leq \mz[2]{}\!\left(\bz{1}\vee\bz{2}\right), \\
            \label{eq:J1_dominating_order_trivariate}
      \mz[3]{}\bz{1}\bz{2}       &\leq \mz[-1]{} \cdot
        \mz[4]{}
        \! \parenRz[2]{}{\bz{1}\vee\bz{2}}
        =  \mz[-1]{} \cdot 
        \parenRz[2]{}{\mz[2]{}\left(\bz{1}\vee\bz{2}\right)}.
    \end{align}
  \end{subequations}

  For $\Jz{2}$, a squeeze similar to the one in \cref{eq:bound_on_J1}
  can be used.  The situation becomes simpler since \mbox{$\ell>M$}
  ensures that only the tetravariate case is present, and the order of
  $\Jz{2}$~becomes
  \begin{align}
    \label{eq:order_of_J2}
    \Jz{2} 
    &= \Oh{\kz{n} \mz[2]{} \bz{1}\bz{2}}.
  \end{align}
  Since \mbox{$\kz{n}\mz[2]{}\bz{1}\bz{2}\rightarrow0$} (with a choice
  of $\kz{n}$ to be specified below), it follows
  that~\mbox{$\Jz{2}=\oh{1}$}.

  For $\Jz{3}$, the Corollary of Lemma~2.1
  in~\citet{Davydov:1968:CDG} will be used to get an upper bound
  on~$\INMl{n}{m}{\ell}(\bm{a})$, such that a squeeze-argument can
  be used for $\Jz{3}$ too.  The requirements needed for
  Davydov's result are covered as follows: The strong mixing
  requirement is covered by \cref{assumption_Yt}, and (for a given $m$
  and $\bm{b}$) the requirement about finite expectations follows from
  \myref{th:Lp_properties_for_ZNht_and_ZNMt}{th:Lp_properties_for_ZNMt_limit}.

  The \mbox{$\sigma$-algebras} to be used follows from the comment
  stated after \cref{eq:sigma_algebra_Yht_range}, i.e.\ that
  \mbox{$\ZNMt{n}{m}{0}(\bm{a}) \in \sigmaYhtRange{0}{m}$},
  whereas~\mbox{$\ZNMt{n}{m}{\ell}(\bm{a}) \in
    \sigmaYhtRange{\ell}{\ell+m} \subset \sigmaYhtRange{m +
      (\ell-m)}{\infty}$}.   Thus, for \mbox{$\ell > \kz{n} + m$}, the following bound
  is obtained on~$\INMl{n}{m}{\ell}(\bm{a})$,
 \begin{subequations}
    \label{eq:bound_on_INMl}
    \begin{align}
      \label{eq:bound_on_INMl_definition}       \INMl{n}{m}{\ell}(\bm{a}) 
      &=         \absp{\E{\ZNMt{n}{m}{0}(\bm{a}) \cdot \ZNMt{n}{m}{\ell}(\bm{a})} } \\
      \label{eq:bound_on_INMl_zero_mean}       &=         \absp{\E{\ZNMt{n}{m}{0}(\bm{a}) \cdot
        \ZNMt{n}{m}{\ell}(\bm{a})} -         \E{\ZNMt{n}{m}{0}(\bm{a})} \cdot
        \E{\ZNMt{n}{m}{\ell}(\bm{a})} } \\
      \label{eq:bound_on_INMl_Davydov}       &\leq         12
        \parenRz[1/\nu]{}{\E{\absp[\nu]{\ZNMt{n}{m}{0}(\bm{a})}}}
        \cdot
        \parenRz[1/\nu]{}{\E{\absp[\nu]{\ZNMt{n}{m}{\ell}(\bm{a})}}}
        \cdot         \parenSz[1-1/\nu-1/\nu]{}{\alpha(\ell-m)}\\
      \label{eq:bound_on_INMl_stationarity}       &=         12
        \parenRz[2]{}{        \parenRz[1/\nu]{}{        \E{\absp[\nu]{\ZNMt{n}{m}{0}(\bm{a})}}}} \cdot         \parenSz[1-2/\nu]{}{\alpha(\ell-m)} \\
      \label{eq:bound_on_INMl_boundedness}       &= 12 \parenRz[2]{}{
        \Oh{m\absp[(2-\nu)/2\nu]{\bz{1}\bz{2}}}}
        \cdot         \parenSz[1-2/\nu]{}{\alpha(\ell-m)} \\
      \label{eq:bound_on_INMl_boundedness_II}       &\leq \mathcal{C} \cdot \mz[2]{} \cdot
        \absp[(2-\nu)/\nu]{\bz{1}\bz{2}}\cdot         \parenSz[1-2/\nu]{}{\alpha(\ell-m)},
    \end{align}
  \end{subequations}
  where \cref{eq:bound_on_INMl_zero_mean} follows since the mean of
  $\ZNMt{n}{m}{t}(\bm{a})$ by construction is zero,
    where \cref{eq:bound_on_INMl_Davydov} is Davydov's result, where
  \cref{eq:bound_on_INMl_stationarity} use the strict stationarity of
  the process $\parenC{\Yz{t}}$, where
  \cref{eq:bound_on_INMl_boundedness} is due to
  \myref{th:Lp_properties_for_ZNht_and_ZNMt}{th:Lp_properties_for_ZNMt_limit},
  and
      finally \cref{eq:bound_on_INMl_boundedness_II} is an equivalent
  statement, using a suitable constant $\mathcal{C}$ to express the
  upper~bound.

  A squeeze for $\Jz{3}$ can now be stated in the following manner
  \begin{align}
    \label{eq:squeeze_J3}
    0 \leq \Jz{3} \leq \mathcalCz{3} \cdot     \sumss{\makebox[0pt]{$\scriptstyle j = \kz{n} + 1$}}{\infty}     \ \left( \mz[2]{} \cdot
    \absp[(2-\nu)/\nu]{\bz{1}\bz{2}} \right)\cdot     \parenSz[1-2/\nu]{}{\alpha(j)},
  \end{align}
  where $\mathcalCz{3}$ is a constant, where the index has been
  shifted by introducing \mbox{$j=\ell-m$}, and where the sum
  from \cref{eq:proof_o(1)_J3} has been extended to infinity (adding
  only non-negative summands).
  
  A comparison of \cref{eq:squeeze_J3} with the finiteness requirement
  that the strong mixing coefficients should satisfy, see
  \myref{assumption_Yt}{assumption_Yt_strong_mixing}, indicates that
  if
  \mbox{$\jz[a]{} \geq \mz[2]{} \cdot
    \absp[(2-\nu)/\nu]{\bz{1}\bz{2}}$} for \mbox{$j \geq \kz{n} + 1$},
  then that could be used to get a new upper bound in
  \cref{eq:squeeze_J3}.  Taking the $\ith{a}$ root on both sides,
      it is clear that the desired inequality can be obtained when
  \mbox{$\kz{n} + 1 = \ceil{      \mz[2/a]{} \cdot
      \absp[(2-\nu)/a\nu]{\bz{1}\bz{2}} }$}, which gives the new bound
                \begin{align}
    \label{eq:squeeze_J3_improved}
    0 \leq \Jz{3} \leq \mathcalCz{3} \cdot     \sumss{\makebox[0pt]{$\scriptstyle j = \kz{n} + 1$}}{\infty}     \ \jz[a]{} \parenSz[1-2/\nu]{}{\alpha(j)},
  \end{align}
  and if \mbox{$\kz{n}\rightarrow\infty$} when
  $\nlimit$, the finiteness assumption from
  \myref{assumption_Yt}{assumption_Yt_strong_mixing} gives
  that~\mbox{$\Jz{3} = \oh{1}$}.

  Finally, \cref{th:k_N_limit_for_off_diagonal_components} verifies
  that $\kz{n}$ satisfies the two limits
  \mbox{$\kz{n} \mz[2]{} \bz{1}\bz{2}
    \rightarrow 0$} (needed for the \mbox{$\Jz{2}$-term})
  and \mbox{$\kz{n} \rightarrow \infty$} (needed for the
  \mbox{$\Jz{3}$-term}).  Altogether, this shows that
  \cref{eq:convergent_covariance_ZNMt_general_M} can be rewritten as
  \mbox{$\Jz{1}+\Jz{2}+\Jz{3}$}, all
  of which are $\oh{1}$, and the proof is~complete.
\end{proof}

The following observations are needed in the truncation argument of \cref{th:AN_of_QNh_and_QNM}.

\begin{corollary}
      \label{th:var(ZNMt)}
  \label{th:eta_variance_expressions}
  When $\Yz{t}$   satisfies \cref{assumption_Yt},   when $n$, $m$ and $\bm{b}$ are as specified in
  \cref{assumption_Nmb},         and with
  \mbox{$\WMbN{m}{}{\bm{b}}=\oplusss{h=1}{m} \WhN{}{h}{\bm{b}}$} and
  \mbox{$\bm{a} = \baM{m} =
    \Vector[\prime]{\bmaz{1},\dotsc,\bmaz{m}}$} (with
  \mbox{$\bmaz{h} \in \RRn{5}$}) as given in
  \cref{def:matrices_related_to_the_new_penalty_function}, then the
  random variable $\ZNMt{n}{m}{t}(\bm{a})$ from
  \cref{def:ZNMt_and_QNM} satisfies
  \begin{enumerate}[label=(\alph*)]
  \item     \label{th:var(ZNMt)=expression}
        \label{th:var(ZNMt)=O(m)}
        $\Var{\ZNMt{n}{m}{t}(\bm{a})} =         \baM[\prime]{m} \WMbN{m}{}{\bm{b}}\baM{m} +
    \Oh{\mz[2]{}\cdot
      \parenR{\bz{1}\vee\bz{2}} } =     \sumss{h=1}{m} \bmaz[\prime]{h} \WhN{}{h}{\bm{b}} \bmaz{h}  +     \Oh{\mz[2]{}\cdot
      \parenR{\bz{1}\vee\bz{2}} }      = \Oh{m}$.
              \end{enumerate}
    Furthermore, with \mbox{$r\defeq\rz{n}$} a sequence of integers that
  goes to~$\infty$ when $\nlimit$, and for a given threshold
  value~$L$, the following holds for the random variables
  \mbox{$\etaz{1:r} \defeq \sumss{t=1}{r} \ZNMt{n}{m}{t}(\bm{a})$},
  \mbox{$\etaz[\leq L]{1:r} \defeq \sumss{t=1}{r} \ZNMt[\leq
    L]{n}{m}{t}(\bm{a})$} and
  \mbox{$\etaz[> L]{1:r} \defeq \sumss{t=1}{r}
    \ZNMt[>L]{n}{m}{t}(\bm{a})$}.
    \begin{enumerate}[label=(\alph*),resume]
  \item     \label{th:eta_variance_expressions_not_truncated}
        \mbox{$\Var{\etaz{1:r}} =       r \cdot \parenC{\sumss{h=1}{m} \bmaz[\prime]{h}
        \WhN{}{h}{\bm{b}} \bmaz{h} + \oh{1} }$}. 
  \item     \label{th:eta_variance_expressions_truncated}
        When $L$ is large enough,     \mbox{$\Var{\etaz[\leq L]{1:r}} =       r \cdot \parenC{\sumss{h=1}{m} \bmaz[\prime]{h}
        \WhN{}{h}{\bm{b}} \bmaz{h} + \oh{1} }$} and     \mbox{$\Var{\etaz[> L]{1:r}} = r \cdot \oh{1}$}.
  \end{enumerate}
\end{corollary}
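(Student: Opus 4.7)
\textbf{Proof plan for Corollary \ref{th:var(ZNMt)}.} My plan is to prove (a), (b) and (c) in sequence, with (a) providing the single-$t$ variance computation, (b) inflating it to a partial sum by a diagonal/off-diagonal split, and (c) using the truncation lemma \ref{th:Uh_finite} to track which parts of the argument survive when $U_{hq:\bm{b}}$ is replaced by $U_{hq:\bm{b}}^{\leq L}$ or $U_{hq:\bm{b}}^{>L}$.

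For (a), since the $\ZNht{n}{hq}{t}$ all have mean zero, bilinearity of covariance and \cref{def:ZNht_and_QNh} give
\begin{align*}
  \Var{\ZNMt{n}{m}{t}(\bm{a})}
  = \sumss{h,j=1}{m}\sumss{q,r=1}{5} \az{hq}\az{jr} \Cov{\XNht{n}{hq}{t}}{\XNht{n}{jr}{t}}.
\end{align*}
I would classify each covariance according to the bi/tri/tetra-variate dichotomy of \cref{table:b12_Yhijk_variants}: since the time indices $i,k$ are both equal to $t$, the case is bivariate when $h=j$ and trivariate when $h\neq j$ (the tetravariate case is impossible). \Cref{th:covariances_of_XNht*} then gives the bivariate diagonal contribution $\Uh[\LGp]{hq:\bm{b}}\Uh[\LGp]{hr:\bm{b}} \gh[\LGp]{h}\int K^2 + \Oh{\bz{1}\vee\bz{2}}$, which, summed against $\az{hq}\az{hr}$ over $q,r$ and then over $h$, reproduces $\sumss{h=1}{m}\bmaz[\prime]{h}\WhN{}{h}{\bm{b}}\bmaz{h}$ by the explicit form of $\WhN{}{h}{\bm{b}}$ in \cref{def:matrices_related_to_the_new_penalty_function}. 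The $m$ bivariate error terms give $\Oh{m(\bz{1}\vee\bz{2})}$ and the at most $m(m-1)$ trivariate terms give $\Oh{m^2(\bz{1}\wedge\bz{2})}$, jointly $\Oh{\mz[2]{}(\bz{1}\vee\bz{2})}$. This bound is $\oh{1}$ by \myref{assumption_Nmb}{eq:assumption_m(b1 join b2)}, while the leading sum has $m$ bounded summands, yielding $\Oh{m}$.

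For (b), I use $\Var{\etaz{1:r}} = \sumss{i=1}{r}\Var{\ZNMt{n}{m}{i}(\bm{a})} + \sumss{i\neq k}{}\E{\ZNMt{n}{m}{i}(\bm{a})\ZNMt{n}{m}{k}(\bm{a})}$. By strict stationarity (\myref{assumption_Yt}{assumption_Yt_strictly_stationary}) and (a), the diagonal contributes $r\cdot\bigl(\sumss{h=1}{m}\bmaz[\prime]{h}\WhN{}{h}{\bm{b}}\bmaz{h} + \Oh{\mz[2]{}(\bz{1}\vee\bz{2})}\bigr)$. For the off-diagonal, \cref{th:convergent_covariance_ZNMt} (whose proof in fact bounds the stationarity-shift-invariant quantity $\sumss{\ell=1}{r-1}\absp{\E{\ZNMt{n}{m}{0}\ZNMt{n}{m}{\ell}}}$ by $\oh{1}$) applied to the window of length $r$ gives $\sum_{i\neq k}\absp{\E{\cdot}} = \oh{r}$. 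Absorbing both the $\Oh{\mz[2]{}(\bz{1}\vee\bz{2})}=\oh{1}$ diagonal error and the off-diagonal $\oh{r}$ into the $r\cdot\oh{1}$ slot yields the stated expression.

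For (c), the key is that the proof of (a) uses the values $\Uh[\LGp]{hq:\bm{b}}$ \emph{only} in the leading (bivariate) term and integrates against bounded densities in the error terms. When $L$ is large enough, \myref{th:Uh_finite}{th:Uh_threshold_limit} gives $\UhL[\LGp]{hq:\bm{b}}{\leq L}=\Uh[\LGp]{hq:\bm{b}}$, so the bivariate leading term is preserved for $\etaz[\leq L]{1:r}$ and the argument of (a)-(b) carries over verbatim; the error terms require a version of \cref{th:expectations_of_XNht*} for the truncated score, which follows from the same substitution and the trivial bound $\absp{U^{\leq L}}\leq\absp{U}$ so that all the integrals in \cref{app:integrals_kernel_score} remain finite. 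For $\etaz[>L]{1:r}$, the same theshold gives $\UhL[\LGp]{hq:\bm{b}}{>L}=0$, which kills the bivariate leading term exactly, leaving only the $\Oh{\mz[2]{}(\bz{1}\vee\bz{2})}=\oh{1}$ diagonal remainder and the $\oh{r}$ off-diagonal remainder, giving $r\cdot\oh{1}$.

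The main obstacle is the bookkeeping in (c): one needs to ensure that the covariance bounds underlying \cref{th:convergent_covariance_ZNMt} (Davydov's inequality together with the $\Lp{\nu}$-bound of \cref{th:Lp_properties_for_ZNht_and_ZNMt}) remain valid for the truncated $\UhL{hq:\bm{b}}{\leq L}$ and $\UhL{hq:\bm{b}}{>L}$. This is routine because truncation cannot inflate $\absp{U}$ nor its $\Lp{\nu}$-norms, so all the quantitative bounds used in the proof of the basic lemmas transfer without change; the only genuinely new input is \myref{th:Uh_finite}{th:Uh_threshold_limit}, which pins down what happens at the single point $\LGp$ that drives the leading term.
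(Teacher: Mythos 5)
Your proposal is correct and follows essentially the same route as the paper's proof: the same diagonal/off-diagonal split of the double sum via \cref{th:covariances_of_XNht*} for (a), the same stationarity-plus-\cref{th:convergent_covariance_ZNMt} decomposition for (b), and the same appeal to \myref{th:Uh_finite}{th:Uh_threshold_limit} for (c). Your observation that the tetravariate case cannot occur in (a) because both time indices equal $t$ is a small sharpening of the paper's wording (which lumps trivariate and tetravariate together in the off-diagonal part), but it does not change the argument or the resulting bound.
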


\begin{proof}
  For \cref{th:var(ZNMt)=expression}, note that it follows from
  \cref{def:ZNht_and_QNh,def:ZNMt_and_QNM} that
  \begin{subequations}
    \begin{align}
      \label{eq_th:var(ZNMt)=expression_first}
      \small \Var{\ZNMt{n}{m}{t}(\bm{a})}       &{\small =         \sumss{h=1}{m} \sumss{j=1}{m} \sumss{q=1}{5} \sumss{r=1}{5}
        \az{hq}\az{jr} \Cov{\XNht{n}{hq}{t}}{\XNht{n}{jr}{t}}} \\
      \label{eq_th:var(ZNMt)=expression_second}
      &{\small = 
        \sumss{h=1}{m} \sumss{q=1}{5} \sumss{r=1}{5}
        \az{hq}\az{hr} \Cov{\XNht{n}{hq}{t}}{\XNht{n}{hr}{t}} +        \sumss{\stackrel{h,j = 1}{\scriptscriptstyle h\neq j}}{m} \sumss{q=1}{5} \sumss{r=1}{5}
        \az{hq}\az{jr} \Cov{\XNht{n}{hq}{t}}{\XNht{n}{jr}{t}}}.
    \end{align}
  \end{subequations}
  The bivariate case of \cref{th:covariances_of_XNht*} can be applied
  to the \enquote{diagonal part} of the sum in
  \cref{eq_th:var(ZNMt)=expression_second}, whereas the trivariate and
  tetravariate cases can be applied to the \enquote{off-diagonal
    part}.  The \enquote{diagonal part} can thus be written as the sum
  of \\
  {$\sumss{h=1}{m} \sumss{q=1}{5} \sumss{r=1}{5} \az{hq}\az{hr}     \Uh[\LGp]{hq:\bm{b}} \Uh[\LGp]{hr:\bm{b}} \gh[\LGp]{h}
    \intss{\RRn{2}}{} \power[2]{}{K(\bm{w})}\d{\bm{w}}$} (which is
  equal to 
  {$\baM[\prime]{} \WMbN{m}{}{\bm{b}} \baM{}=\sumss{h=1}{m}
    \bmaz[\prime]{h} \WhN{}{h}{\bm{b}} \bmaz{h}$}) and a sum that is
  \mbox{$\Oh{m\cdot\left(\bz{1}\vee\bz{2}\right)}$}.  For the
  \enquote{off-diagonal part} the result is
  \mbox{$\Oh{\mz[2]{}\cdot\left(\bz{1}\wedge\bz{2}\right)}$}.  Both of
  these asymptotically negligible terms are covered by
  \mbox{$\Oh{\mz[2]{}\cdot\left(\bz{1}\vee\bz{2}\right)}$}, and this
  gives the two first equalities of \cref{th:var(ZNMt)=expression}.
  The last equality follows since the summands
          $\bmaz[\prime]{h} \WhN{}{h}{\bm{b}} \bmaz{h}$ are finite.
  
  For \cref{th:eta_variance_expressions_not_truncated}, note that the
  variance can be expressed as
  \begin{align}
    \Var{\etaz{1:r}} = \sumss{i=1}{r}
    \Var{\ZNMt{n}{m}{i}(\bm{a})} +     \sumss{\stackrel{i,k = 1}{\scriptscriptstyle i\neq k}}{r}
    \E{\ZNMt{n}{m}{i}(\bm{a}) \cdot \ZNMt{n}{m}{k}(\bm{a})}.
  \end{align}
  The \enquote{on diagonal} part of this sum equals
  \mbox{$r\cdot \Var{\ZNMt{n}{m}{1}(\bm{a})}$} due to
  \myref{assumption_Yt}{assumption_Yt_strictly_stationary}, while the
  \enquote{off diagonal} part due to
  \cref{th:convergent_covariance_ZNMt} becomes \mbox{$r\cdot\oh{1}$}.
  Together with the result from \cref{th:var(ZNMt)=expression}, this
  gives the statement in
  \cref{th:eta_variance_expressions_not_truncated}.
  
  The truncated cases in \cref{th:eta_variance_expressions_truncated}
  use the same arguments as those encountered in
  \cref{th:eta_variance_expressions_not_truncated}, with the effect
  that the \mbox{$\Uh[\LGp]{hq:\bm{b}} \Uh[\LGp]{hr:\bm{b}}$} that
  occurs in $\WhN{}{h}{\bm{b}}$ either are replaced by
  \mbox{$\UhL[\LGp]{hq:\bm{b}}{\leq L}\UhL[\LGp]{hr:\bm{b}}{\leq L}$}
  or by \mbox{$\UhL[\LGp]{hq:\bm{b}}{>L}\UhL[\LGp]{hr:\bm{b}}{>L}$}.
    \Myref{th:Uh_finite}{th:Uh_threshold_limit} gives that
  \mbox{$\UhL[\LGp]{hq:\bm{b}}{\leq L}=\Uh[\LGp]{hq:\bm{b}}$} when~$L$
  is large enough (and thus \mbox{$\UhL[\LGp]{hq:\bm{b}}{>L}=0$}),
  which completes the proof.
\end{proof}

The main theorem can now be stated, i.e.\ this result can be used to
verify the fourth requirement of the Klimko-Nelson approach for the
penalty function $\QMN[\thetaMb{m}{b}]{m}{n}$, from which it follows
an asymptotic normality result for
$\widehatbmthetaz{\LGp|\overbar{m}|\bm{b}}$, that finally gives the
asymptotic normality result of $\hatlgsdM[p]{\LGp}{\omega}{m}$.
(Confer \cref{note:interpretation_large_m} for an interpretation of
the $m$ that occurs in the limiting distributions.)

\begin{theorem}
  \label{th:AN_of_QNh_and_QNM}
  For a given point \mbox{$\LGp=\LGpoint$}:
  When $\Yz{t}$   satisfies \cref{assumption_Yt,assumption_score_function},   when $n$, $m$ and $\bm{b}$ are as specified in
  \cref{assumption_Nmb},         and with
  \mbox{$\WMbN{m}{}{\bm{b}}=\oplusss{h=1}{m} \WhN{}{h}{\bm{b}}$} and
  {$\bm{a} = \baM{m} =
    \Vector[\prime]{\bmaz{1},\dotsc,\bmaz{m}}$} (with
  \mbox{$\bmaz{h} \in \RRn{5}$}) as given in
  \cref{def:matrices_related_to_the_new_penalty_function},
      then the random variables
  $\QNM{n}{m}(\bm{a})$ and $\QNhvec{n}{\overbar{m}}$ from
  \cref{def:ZNMt_and_QNM} will for small $\bm{b}$ and large $m$ and
  $n$ satisfy
  \begin{enumerate}[label=(\alph*)]
  \item     \label{th:AN_of_QNM}
        $\nz[-1/2]{}\,\QNM{n}{m}(\bm{a}) 
    \stackrel{\scriptscriptstyle d}{\longrightarrow}     \UVN{0}{\sumss{h=1}{m} \bmaz[\prime]{h} \WhN{}{h}{\bm{b}} \bmaz{h}}$, i.e.\
    asymptotically univariate normal.   \item     \label{th:AN_of_QNh_vector}
        $\nz[-1/2]{}\,\QNhvec{n}{\overbar{m}}
    \stackrel{\scriptscriptstyle d}{\longrightarrow}     \UVN{\bm{0}}{\oplusss{h=1}{m} \WhN{}{h}{\bm{b}}}$, i.e.\ asymptotically
    \mbox{$5m$-variate} normal.   \end{enumerate}
\end{theorem}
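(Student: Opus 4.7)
\textbf{Proof proposal for Theorem \ref{th:AN_of_QNh_and_QNM}.}
The plan is to establish part (a) by a Bernstein large-block/small-block argument combined with a truncation at level $L$, and then to derive part (b) from (a) via the Cramér-Wold device. Fix $\bm{a}\in\RRn{5m}$ and write $\QNM{n}{m}(\bm{a})=\sumss{t=1}{n}\ZNMt{n}{m}{t}(\bm{a})$. Choose block sizes $r=r_n$ (large) and $s=s_n$ (small) compatible with Assumption \ref{assumption_Nmb} and with $s/r\to 0$, $r\cdot m/n\bz{1}\bz{2}\to 0$, $n/r\cdot\alpha(s-m+1)\to 0$. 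Partition $\{1,\dots,n\}$ into alternating blocks of length $r$ and $s$, giving $k\asymp n/(r+s)$ large-block sums $\Uz{j}$ and small-block sums $\Vz{j}$ (plus a remainder). The idea is that $\sumss{j=1}{k}\Vz{j}$ together with the remainder term is asymptotically negligible in probability (by Chebyshev, since $\Var{\Vz{j}}=s\cdot\oh{1}$ via Corollary \ref{th:eta_variance_expressions}, giving overall variance of order $ks/n\cdot(\text{something bounded})=\oh{1}$ after the $1/\sqrt{n}$ scaling), while $n^{-1/2}\sumss{j=1}{k}\Uz{j}$ carries the limit.

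First I would show that the $\Uz{j}$'s can be replaced by independent copies $\widetilde{U}_j$ with the same marginal law. Use the characteristic-function coupling lemma of Volkonskii–Rozanov (or iterated applications of Davydov's inequality to $\E{\exp(i\lambda\Uz{1})\cdots\exp(i\lambda\Uz{k})}-\prodss{j=1}{k}\E{\exp(i\lambda\Uz{j})}$): each pair of adjacent large blocks is separated by a gap of length $s-m+1$, so the characteristic-function error is bounded by $16(k-1)\alpha(s-m+1)$, and Assumption \ref{assumption_Nmb}\ref{eq:assumption_alpha(s)_and_(N/b1b2)^1/2} makes this vanish. Next, compute $\Var{n^{-1/2}\sumss{j=1}{k}\Uz{j}}$: by stationarity this equals $(k/n)\Var{\Uz{1}}=(k/n)\cdot r\cdot(\sumss{h=1}{m}\bmaz[\prime]{h}\WhN{}{h}{\bm{b}}\bmaz{h}+\oh{1})$ using Corollary \ref{th:eta_variance_expressions}\ref{th:eta_variance_expressions_not_truncated}, and $kr/n\to 1$ gives the target variance.

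The main technical obstacle will be verifying the Lindeberg condition for the independent sums $\widetilde{U}_j$. Because the score components $\Uh[\bm{w}]{hq:\bm{b}}$ are unbounded polynomials of order 2 in $\bm{w}$, one cannot directly bound $|\widetilde{U}_j|$ uniformly. This is where the truncation comes in: replace $\ZNMt{n}{m}{t}(\bm{a})$ by its truncated version $\ZNMt[\leq L]{n}{m}{t}(\bm{a})$, carry out the block argument for the truncated sums to obtain a normal limit with variance $\sumss{h=1}{m}\bmaz[\prime]{h}\WhN{}{h|L}{\bm{b}}\bmaz{h}+\oh{1}$, and then send $L\to\infty$ after $n\to\infty$. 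The Lindeberg condition for the truncated blocks follows by bounding $|\widetilde{U}_j^{\leq L}|\leq r\cdot C_L\cdot\sqrt{\bz{1}\bz{2}}/(\bz{1}\bz{2})=C_L r/\sqrt{\bz{1}\bz{2}}$ (using boundedness of $K$ and the truncation), so $|\widetilde{U}_j^{\leq L}|/\sqrt{n}\leq C_L r/\sqrt{n\bz{1}\bz{2}}=\oh{1}$ thanks to Assumption \ref{assumption_Nmb}\ref{eq:assumption_Nb1b2/m} combined with the block-size requirements (cf.~\cref{th:block_sizes_for_main_result}); the Lindeberg sum then collapses to zero. Corollary \ref{th:eta_variance_expressions}\ref{th:eta_variance_expressions_truncated} with Assumption \ref{assumption_score_function} guarantees that for $L$ large enough the truncated variance coincides with the full target variance, so the iterated limit yields the claimed univariate normality in part~(a).

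For part (b), apply the Cramér-Wold device: since $n^{-1/2}\QNM{n}{m}(\bm{a})=\bm{a}'\cdot n^{-1/2}\QNhvec{n}{\overbar{m}}$ converges in distribution to $\UVN{0}{\bm{a}'(\oplusss{h=1}{m}\WhN{}{h}{\bm{b}})\bm{a}}$ for every $\bm{a}\in\RRn{5m}$ by part (a), the joint convergence $n^{-1/2}\QNhvec{n}{\overbar{m}}\stackrel{d}{\longrightarrow}\UVN{\bm{0}}{\oplusss{h=1}{m}\WhN{}{h}{\bm{b}}}$ follows, with the block-diagonal covariance inherited from Definition \ref{def:matrices_related_to_the_new_penalty_function}. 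The interpretation of ``large $m$'' here is the one indicated after the statement, namely the formal finite-$m$ limit that gets subsequently spliced into the scaling factor $\sqrt{m}$ when deriving Theorem \ref{th:asymptotics_for_hatlgsd}.
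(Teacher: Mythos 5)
Your proposal follows essentially the same route as the paper's proof: the Bernstein large-block/small-block decomposition, negligibility of the small blocks and cross-terms via \cref{th:convergent_covariance_ZNMt} and \cref{th:eta_variance_expressions}, asymptotic independence of the large blocks via the Volkonskii--Rozanov characteristic-function bound $16(\ell-1)\alpha(s-m+1)$, a truncation at level $L$ to handle the unbounded score polynomials in the Lindeberg step, and Cram\'{e}r--Wold for part (b). Two small remarks. First, your bounds drop a factor of $m$ in a couple of places: each $\ZNMt{n}{m}{t}(\bm{a})$ is a sum of $5m$ kernel-score terms, so the truncated large-block sum is bounded by a constant times $L\,rm/\sqrt{\bz{1}\bz{2}}$ (not $L\,r/\sqrt{\bz{1}\bz{2}}$), and the correct Lindeberg normaliser is $\sqrt{\mathfrak{s}_\ell^2}\asymp\sqrt{\ell r m}$ rather than $\sqrt{n}$; the relevant ratio is then $\sqrt{rm/(\ell\bz{1}\bz{2})}$, which is exactly one of the limits verified in \cref{th:block_sizes_for_main_result}, so the conclusion survives but the bookkeeping matters in the $\mlimit$ regime. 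Second, the iterated limit $L\to\infty$ after $\nlimit$ is unnecessary here: by \cref{th:Uh_finite} the score components evaluated at the fixed point $\LGp$ are bounded uniformly in $h$, so a single sufficiently large fixed $L$ makes the truncated and untruncated variances coincide exactly, which is how the paper closes the argument.
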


\begin{proof}
  For the proof of \cref{th:AN_of_QNM}, note the following connection
  between $\QNM{n}{m}(\bm{a})$ and $\ZNMt{n}{m}{t}(\bm{a})$ which
  follows directly from \cref{def:ZNht_and_QNh,def:ZNMt_and_QNM},
  \begin{subequations}
    \begin{align}
      \label{eq:QNM_vs_ZNMt}
      {\small \QNM{n}{m}(\bm{a})       = \sumss{h=1}{m} \sumss{q=1}{5} \az{hq} \QNh{n}{hq} 
      = \sumss{h=1}{m} \sumss{q=1}{5} \az{hq} 
      \left[ \sumss{t=1}{n}
      \ZNht{n}{hq}{t} \right]
      =  \sumss{t=1}{n} \left[ \sumss{h=1}{m} \sumss{q=1}{5} \az{hq} 
      \ZNht{n}{hq}{t} \right] 
      =  \sumss{t=1}{n} \ZNMt{n}{m}{t}(\bm{a}).}
    \end{align}
  \end{subequations}
    A large block - small block argument can be used to analyse this,
  i.e.\ the index set \mbox{$\parenC{1,\dotsc,n}$} will be
  partitioned into large blocks and small blocks, such that
  $\QNM{n}{m}(\bm{a})$ can be expressed as the sum of $\SNp{n}{1}$,
  $\SNp{n}{2}$ and $\SNp{n}{3}$ (to be defined below).  The asymptotic
  distribution of $\QNM{n}{m}(\bm{a})$ will be shown to
  coincide with the asymptotic distribution of $\SNp{n}{1}$, the
  summands of $\SNp{n}{1}$ will be shown to be asymptotically
  independent, and finally the Lindeberg conditions for asymptotic
  normality of $\SNp{n}{1}$ will be verified.

  Use~$\ell$,~$r$, and~$s$ from
  \myref{th:block_sizes_for_main_result}{th:block_sizes_for_main_result_r_ell_limits}
  to divide the indexing set \mbox{$\parenC{1,\dotsc,n}$} into
  \mbox{$2\ell + 1$} subsets of large blocks and small blocks (and one
  reminder block), defined as follows
  \begin{subequations}
    \label{eq:large_small_remainder_blocks}
    \begin{align}
      \label{eq:large_blocks}
      \mathcalAz{j}       &\defeq \parenC{(j-1)\left(r + s\right) + 1,
        \dotsc, (j-1)\left(r + s\right) + r }, \text{ for
        }j = 1,\dotsc, \ell, \\
              \label{eq:small_blocks}
      \mathcalBz{j} 
      &\defeq \parenC{(j-1)\left(r + s\right) + r +
        1, \dotsc, j\left(r + s\right)  }, \text{ for
        }j = 1,\dotsc, \ell, \\
      \label{eq:reminder_block}
      \mathcalCz{\ell} 
      &\defeq
        \begin{cases}
          \parenC{\ell\left(r + s\right) + 1, \dotsc, n }
          &\text{when } \ell\left(r + s\right) < n, \\
          \emptyset 
          &\text{when } \ell\left(r + s\right) = n.
        \end{cases}
    \end{align}
  \end{subequations}
  In order to avoid   iterated sums later on, introduce the following unions,
  \begin{subequations}
    \begin{align}
      \mathcal{A}^{\circ} 
      \defeq \cupss{j = 1}{\ell}  \mathcalAz{j}, \qquad
      \mathcal{B}^{\circ} 
      \defeq          \cupss{j = 1}{\ell}  \mathcalBz{j}.
    \end{align}
  \end{subequations}
  Note that the number of elements in $\mathcal{A}^{\circ}$ and
  $\mathcal{B}^{\circ}$ will be \mbox{$\ell r$} and \mbox{$\ell s$}
  respectively.  The number of elements in $\mathcalCz{\ell}$ will
  be \mbox{$n-\ell(r+s)$}, and this can vary between~0 and
  \mbox{$r+s-1<2r$}.

  Use these subsets of \mbox{$\parenC{1,\dotsc,n}$ }to define
  the following variables,            \begin{subequations}
    \label{eq:eta_xi_zeta}
    \begin{align}
      \label{eq:eta}
      \etaz{j} 
      &\defeq \sumss{t \in \mathcalAz{j}}{}
        \ZNMt{n}{m}{t}(\bm{a}), \text{ for }j = 1,\dotsc, \ell,
      &\SNp{n}{1} 
      &\defeq \sumss{j=1}{\ell} \etaz{j}         = \sumss{\scriptscriptstyle t\in\mathcal{A}^{\circ}}{}
        \ZNMt{n}{m}{t}(\bm{a}), \\
              \label{eq:xi}
      \xiz{j}
      &\defeq \sumss{t \in \mathcalBz{j}}{}
        \ZNMt{n}{m}{t}(\bm{a}), \text{ for }j = 1,\dotsc, \ell,
      &\SNp{n}{2} 
      &\defeq \sumss{j=1}{\ell} \xiz{j}        = \sumss{\scriptscriptstyle t\in\mathcal{B}^{\circ}}{}
        \ZNMt{n}{m}{t}(\bm{a}), \\
              \label{eq:zeta}
      \zetaz{\ell} 
      &\defeq \sumss{t \in \mathcalCz{\ell}}{}
        \ZNMt{n}{m}{t}(\bm{a}), 
      &\SNp{n}{3} 
      &\defeq \zetaz{\ell},
    \end{align}
  \end{subequations}
  such that 
  \begin{align}
    \label{eq:QNM_rewritten}
    \nz[-1/2]{}\, \QNM{n}{m}(\bm{a}) =     \nz[-1/2]{} \parenC{\SNp{n}{1} + \SNp{n}{2} + \SNp{n}{3}  }.
  \end{align}
  
  The expectation of these quantities are by construction equal to
  zero, which gives 
  \begin{align}
    \label{eq:var_QNM_S}
    \Var{\nz[-1/2]{}\,\QNM{n}{m}(\bm{a})} 
    &= \frac{1}{n} \E{\QNM{n}{m}(\bm{a}) \cdot
      \QNM{n}{m}(\bm{a}) }  
      = \frac{1}{n} \sumss{p=1}{3} \sumss{q=1}{3}       \E{\SNp{n}{p} \cdot \SNp{n}{q}}.
  \end{align}
  
  When \mbox{$p\neq q$}, there will be no overlap between the indexing
  sets that occur in the two sums, and the following inequality, here
  illustrated by the case \mbox{$p=1$} and~\mbox{$q=2$}, is obtained
  \begin{subequations}
    \label{eq:SNp_SNq_and_p<>q_negligible}
    \begin{align}
      \absp{\frac{1}{n} \E{\SNp{n}{1} \cdot \SNp{n}{2} } }
      &=         \absp{\frac{1}{n} \E{        \left(\sumss{\scriptscriptstyle i\in\mathcal{A}^{\circ}}{}
        \ZNMt{n}{m}{i}(\bm{a}) \right)         \cdot \left( \sumss{\scriptscriptstyle
        k\in\mathcal{B}^{\circ}}{} \ZNMt{n}{m}{k}(\bm{a})
        \right) }}\\
          &\leq         \frac{1}{n}         \sumss{\scriptscriptstyle i\in\mathcal{A}^{\circ}}{}         \sumss{\scriptscriptstyle k\in\mathcal{B}^{\circ}}{}         \absp{\E{        \ZNMt{n}{m}{i}(\bm{a}) \cdot         \ZNMt{n}{m}{k}(\bm{a}) }}\\
          &\leq         \frac{1}{n} \sumss{\stackrel{i,k = 1}{\scriptscriptstyle i\neq k}}{n}         \absp{\E{\ZNMt{n}{m}{i}(\bm{a}) \cdot         \ZNMt{n}{m}{k}(\bm{a}) }}.
    \end{align}
  \end{subequations}
  \Cref{th:convergent_covariance_ZNMt} thus gives that the expectation
  of all the cross-terms are asymptotically negligible.

  For the case \mbox{$p=q=2$}, i.e.\ the small blocks, the same
  strategy as in \cref{eq:SNp_SNq_and_p<>q_negligible} shows that the
  internal cross-terms are asymptotically negligible.
  \Myref{th:var(ZNMt)}{th:var(ZNMt)=O(m)} states that the remaining
  summands all are $\Oh{m}$, which results in the following bound
  \begin{subequations}
    \label{eq:SN2_SN2_negligible}
    \begin{align}
      \frac{1}{n} \E{ \SNp{n}{2} \cdot \SNp{n}{2} }       &=         \frac{1}{n} \sumss{\scriptscriptstyle i, k \in
        \mathcal{B}^{\circ}}{} \E{\ZNMt{n}{m}{i}(\bm{a}) \cdot
        \ZNMt{n}{m}{k}(\bm{a}) }\\             &=         \frac{1}{n} \sumss{\scriptscriptstyle i \in \mathcal{B}^{\circ}}{}
        \E{\ZNMt{n}{m}{i}(\bm{a}) \cdot
        \ZNMt{n}{m}{i}(\bm{a}) } +
        \frac{1}{n} \sumss{\stackrel{i, k \in
        \mathcal{B}^{\circ}}{\scriptscriptstyle i\neq k}}{}
        \E{\ZNMt{n}{m}{i}(\bm{a}) \cdot
        \ZNMt{n}{m}{k}(\bm{a}) }\\             &=         \frac{1}{n} \sumss{\scriptscriptstyle i \in
        \mathcal{B}^{\circ}}{} \Oh{m} + \oh{1}\\             &=         \Oh{\frac{m\ell s}{n}}.
    \end{align}
  \end{subequations}

  For the case \mbox{$p=q=3$}, i.e.\ the residual block, a similar
  argument gives
  \begin{align}
    \label{eq:SN3_SN3_negligible}
    \frac{1}{n} \E{\SNp{n}{3} \cdot \SNp{n}{3} }     = \Oh{\frac{m\left(n - \ell(r +s)\right)}{n}} 
    < \Oh{\frac{mr}{n}}. 
  \end{align}

  \Myref{th:block_sizes_for_main_result}{th:block_sizes_for_main_result_r_ell_limits}
  ensures that $(m\ell s)/n$ and $mr/n$ goes to zero, so the terms
  investigated in \cref{eq:SN2_SN2_negligible} and
  \cref{eq:SN3_SN3_negligible} are asymptotically~negligible.  This
  implies that   \mbox{$\nz[-1/2]{}\!\left(\QNM{n}{m}(\bm{a}) -
      \SNp{n}{1}\right) \Rightarrow 0$}, and
  \citet[Theorem~25.4]{Billingsley12:_probab_measur} states that there
  thus is a common limiting distribution
  for~$\nz[-1/2]{}\,\QNM{n}{m}(\bm{a})$ and~$\nz[-1/2]{}\,\SNp{n}{1}$.

  The arguments used for $\SNp{n}{2}$ also gives the simple
  observation below, which is needed later on,
  \begin{align}
    \label{eq:SN1_SN1_expression}
    \Var{ \nz[-1/2]{}\,\SNp{n}{1} }     = \frac{1}{n} \sumss{j=1}{\ell} \Var{\etaz{j}} + \oh{1}.   \end{align}

  The next step is to show that the random variables $\etaz{j}$ are
  asymptotically independent, which formulated relative to the
  characteristic functions corresponds to~showing
  \begin{align}
    \label{eq:characteristic_functions_SN1_eta}
    \absp{\E{\exp\!\left(it\SNp{n}{1}\right)} -     \prodss{j=1}{\ell} \E{\exp\left(it\etaz{j}\right)}}
    \rightarrow 0.
  \end{align}
  The validity of this statement follows from Lemma~1.1 in
  \citet[p.~180]{Volkonskii:1959:SLT}, by introducing random variables
  \mbox{$\Vz{\!j} = \exp\!\left(it\etaz{j}\right)$}, for
  \mbox{$j=1,\dotsc,\ell$}.  By construction, the $\Vz{\!j}$ trivially
  satisfies the requirement \mbox{$\absp{\Vz{\!j}}\leq 1$}, so it only
  remains to identify the corresponding \mbox{$\sigma$-algebras} and
  the distance between them.  From the definitions of $\etaz{j}$,
  $\mathcalAz{j}$ and $\ZNMt{n}{m}{t}(\bm{a})$, it is easy to see that
  \mbox{$\Vz{\!j} \in \sigmaYhtRange{(j-1)(r+s)+1}{(j-1)(r+s)+r+m}$},
  and from this it follows that the distance between the highest index
  in the \mbox{$\sigma$-algebra} corresponding to $\Vz{\!j}$ and the
  lowest index in the \mbox{$\sigma$-algebra} corresponding to
  $\Vz{\!j+1}$, is given by
  \begin{align}
    \vartheta 
    &\defeq \parenC{((j+1)-1)(r+s)+1} -       \parenC{(j-1)(r+s)+r+m}       = s - m + 1.
  \end{align}
  
  \Myref{assumption_Nmb}{eq:assumption_m=o(s)}, i.e.\
  \mbox{$m=\oh{s}$}, ensures that there (asymptotically) will be no
  overlap between these \mbox{$\sigma$-algebras}, and the result from
  \citet{Volkonskii:1959:SLT} thus gives
  \mbox{$16(\ell-1)\alpha(\vartheta)$} as an upper bound on the left
  side of \cref{eq:characteristic_functions_SN1_eta}.
      \Myref{th:block_sizes_for_main_result}{th:block_sizes_for_main_result_r_ell_limits}
  says that this bound goes   to zero, which shows that the $\etaz{j}$ are
  asymptotically~independent.

  It remains to verify the Lindeberg condition, for which an
  expression for
  \mbox{$\mathfraksz[2]{\ell}
    \defeq \sumss{j=1}{\ell} \Var{\etaz{j}}$} is needed.
  From \myref{assumption_Yt}{assumption_Yt_strictly_stationary} and 
  \myref{th:eta_variance_expressions}{th:eta_variance_expressions_not_truncated},
  it follows that
  \begin{align}
    \label{eq:Lindeberg_sn_squared}
    \mathfraksz[2]{\ell}     =  \sumss{j=1}{\ell} \Var{\etaz{j}}     = \ell \cdot \Var{\etaz{1}}     = \ell \cdot r \cdot \parenC{\sumss{h=1}{m} \bmaz[\prime]{h}
    \WhN{}{h}{\bm{b}} \bmaz{h} + \oh{1} }, 
  \end{align}
    and assuming
  \mbox{$\mathfraksz[2]{\ell}>0$}, the condition to verify is
  \begin{align}
    \label{eq:Lindeberg_condition}
    \forall\ \epsilon > 0\qquad     \lim_{n\rightarrow\infty} \sumss{j=1}{\ell} \frac{1}{    \mathfraksz[2]{\ell}}     \E{ \etaz[2]{j}     \cdot
    \Ind{\absp{\etaz{j}} \geq \epsilon\sqrt{    \mathfraksz[2]{\ell}
    }}}     \longrightarrow 0.
  \end{align}
    This holds trivially if the sets occurring in the indicator
  functions, i.e.\
  \mbox{$\TSR{\absp{\etaz{j}} \geq \epsilon\sqrt{        \mathfraksz[2]{\ell}}}{}{}$}, becomes empty when $n$ is large
  enough.  It is thus of interest to see if an upper bound
  for~\mbox{$\absp{\etaz{j}}$} can be found, and if the
  limit of this upper bound becomes smaller than the limit of the
  right-hand side~\mbox{$\epsilon\sqrt{      \mathfraksz[2]{\ell}}$}.

  Keeping in mind the definitions
  of~\mbox{$\XNht{n}{hq}{t}$},~\mbox{$\ZNht{n}{hq}{t}$}
  and~\mbox{$\etaz{j}$}, see
  \cref{eq:definition_of_XNht*,eq:definition_ZNht,eq:eta}, it is clear
  that an upper bound for $\absp{\etaz{j}}$ might be
  deduced~from,
  \begin{subequations}
    \label{eq:eta_bound}
    \begin{align}
      \absp{\etaz{j}} 
      &=         \absp{\sumss{ t\in\mathcalAz{j}}{}         \sumss{h=1}{m} \sumss{q=1}{5} \az{hq} \ZNht{n}{hq}{t}}         \leq         \sumss{ t\in\mathcalAz{j}}{}         \sumss{h=1}{m} \sumss{q=1}{5} \absp{\az{hq}} \absp{\ZNht{n}{hq}{t}}, \\
              \absp{\ZNht{n}{hq}{t}} 
      &=         \absp{\XNht{n}{hq}{t} - \E{\XNht{n}{hq}{t}}}         \leq         \absp{\XNht{n}{hq}{t}} + \Oh{\sqrt{\bz{1}\bz{2}}}, \\
              \absp{\XNht{n}{hq}{t}} 
      &=
        \absp{\sqrt{\bz{1}\bz{2}}\cdot \frac{1}{\bz{1}\bz{2}}         \Kh[\frac{\Yz{t+h}-\vz{1}}{\bz{1}},         \frac{\Yz{t}-\vz{2}}{\bz{2}}]{h}         \Uh[\Yht{h}{t}]{h:\bm{b}} }.
    \end{align}
  \end{subequations}

  If all of the functions~$\Uh[\bm{w}]{hq:\bm{b}}$ are bounded, or if
  the kernel functions~$\Khb[\bm{w}-\LGp]{h}{\bm{b}}$ have bounded
  support, then the present framework will be sufficient to reach the
  desired conclusion.  However, no such conditions are assumed, and a
  truncation argument must thus be introduced in order to deal with
  this~problem --- in particular, the expression
  \mbox{$\QNM{n}{m}(\bm{a}) = \QNM[\leq L]{n}{m}(\bm{a}) +
    \QNM[>L]{n}{m}(\bm{a})$} will be~used.

  \Myref{th:Uh_finite}{th:Uh_finite_sup} implies that a large enough
  value for the threshold~$L$ will ensure that 
        all constructions and arguments based upon the ordinary
  functions~$\Uh[\bm{w}]{hq:\bm{b}}$ also works nicely for the
  truncated functions $\UhL[\bm{w}]{hq:\bm{b}}{\leq L}$
  and~$\UhL[\bm{w}]{hq:\bm{b}}{>L}$.
    With regard to the limiting distributions, first note that
  $\nz[-1/2]{}\,\QNM[>L]{n}{m}(\bm{a})$ and
  $\nz[-1/2]{}\,\power[|>L]{}{\Sz[(1)]{n}}$ shares the
  same limiting distribution, and then observe that the upper
  truncated versions of
  \cref{eq:SN1_SN1_expression,eq:Lindeberg_sn_squared} together with
  the result from
  \myref{th:eta_variance_expressions}{th:eta_variance_expressions_truncated},
  gives the following bound when $L$ is large enough:
  \begin{align}
    \Var{ \nz[-1/2]{}\,\power[|>L]{}{\Sz[(1)]{n}}}         = \frac{1}{n} \sumss{j=1}{\ell} \Var{\etaz[>L]{j}} +
    \oh{1}     = \frac{\ell r}{n} \cdot \oh{1}.
  \end{align}
  Since \mbox{$\ell r \asymp n$}, it follows that
  \mbox{$\nz[-1/2]{}\,\QNM[>L]{n}{m}(\bm{a})
    \Rightarrow 0$}, so the limiting distributions of
  $\nz[-1/2]{}\,\QNM{n}{m}(\bm{a})$ and
  $\nz[-1/2]{}\,\QNM[\leq L]{n}{m}(\bm{a})$
  coincide when $L$ is large~enough.\footnote{    Truncation arguments often requires the threshold value~$L$ to go
    to~$\infty$ in order for a conclusion to be obtained for the
    original expression, but this is not required for the present case
    under investigation (due to \cref{th:Uh_finite}).}     Next, observe that the random variable $\absp{\etaz[\leq L]{j}}$
  obviously will have an upper bound, since the truncated polynomial
  $\UhL[\bm{w}]{hq:\bm{b}}{\leq L}$ will occur in the lower truncated
  version of \cref{eq:eta_bound}.
    Since the kernel function $K(\bm{w})$ by definition is bounded by
  some constant $\mathcal{K}$, it follows
  that~$\absp{\etaz[\leq L]{j}}$ is bounded~by
  \begin{align}
    \label{eq:upper_bound_eta_truncated}
    \absp{\etaz[\leq L]{j}} 
    &\leq       5rm \left(\max \absp{\az{hq}}\right)       \left( \frac{\mathcal{K}}{\sqrt{\bz{1}\bz{2}}}       L + \Oh{\sqrt{\bz{1}\bz{2}}} \right)       < \mathcal{C}L \frac{r m}{\sqrt{\bz{1}\bz{2}}},
  \end{align}
  where $\mathcal{C}$ is a constant that is independent of the
  index~$j$.

  It remains to verify that the indicator functions
  $\Ind{\absp{\etaz[\leq L]{j}} \geq \epsilon\sqrt{      \left(\mathfraksz[2]{\ell}\right)^{{}_{\leq L}}}}$, from the lower truncated version
  of \cref{eq:Lindeberg_condition}, becomes zero when $\nlimit$, which
  can be done by checking that the upper bound of
  $\absp{\etaz[\leq L]{j}}$ from
  \cref{eq:upper_bound_eta_truncated} in the limit gives a smaller
  value than the lower truncated version of
  \mbox{$\left(\mathfraksz[2]{\ell}\right)^{{}_{\leq L}}$} from
  \cref{eq:Lindeberg_sn_squared}.  
        This in turn can be done by dividing both of them with
  $\sqrt{\ell r m}$, and then compare their limits.  Assuming that the
  threshold value~$L$ is high enough to allow
  \myref{th:eta_variance_expressions}{th:eta_variance_expressions_truncated}
  to be used, i.e.\ that
  \mbox{$\left(\mathfraksz[2]{\ell}\right)^{{}_{\leq L}}$} and
  $\mathfraksz[2]{\ell}$ share the same asymptotic expression, this
  becomes,
  \begin{subequations}
    \label{eq:lhs_and_rhs_indicator_function_truncated}
    \begin{align}
      \label{eq:lhs_indicator_function_truncated}
      \frac{\absp{\etaz[\leq L]{j}}}{\sqrt{\ell r m}} 
      &\leq         \mathcal{C}L
        \sqrt{\frac{mr}{\ell\bz{1}\bz{2}}}         \longrightarrow 0,         \qquad 
        \text{due to  \myref{th:block_sizes_for_main_result}{th:block_sizes_for_main_result_r_ell_limits},
        } \\
              \label{eq:rhs_indicator_function_truncated}
      \frac{\epsilon\sqrt{      \left(\mathfraksz[2]{\ell}\right)^{{}_{\leq L}}}}{\sqrt{\ell r m}}   
      &=         \epsilon \cdot \sqrt{\frac{1}{m} \parenC{\sumss{h=1}{m}
        \bmaz[\prime]{h} \WhN{}{h}{\bm{b}} \bmaz{h} + \oh{1} }}                  \asymp         \epsilon \cdot \sqrt{\frac{1}{m} \sumss{h=1}{m} \bmaz[\prime]{h} \WhN{}{h}{\bm{b}} \bmaz{h}}.      \end{align}
  \end{subequations}
    \Myref{assumption_score_function}{assumption_score_function_limit_h}
    ensures that $\WhN{}{h}{\bm{b}}$
  (from \cref{def:matrices_related_to_the_new_penalty_function})
  converges to some non-zero matrix (as $\hlimit$ and $\blimit$), and
  this implies that the limit of
  \mbox{$\frac{1}{m} \sumss{h=1}{m} \bmaz[\prime]{h} \WhN{}{h}{\bm{b}}
    \bmaz{h}$} in \cref{eq:rhs_indicator_function_truncated} will be
  nonzero, from which it follows that the indicator function in
  \cref{eq:Lindeberg_condition} becomes zero in the limit, i.e.\ that
  the Lindeberg condition is satisfied.

  This implies that 
  \begin{align}
    \label{eq:asymptotic_normality_for_truncated_version}
    \frac{\sumss{j=1}{\ell} \etaz[\leq L]{j}}{    \sqrt{\mathfraksz[2]{\ell}}} \longrightarrow N(0,1),
  \end{align}
  which due to~\mbox{$\ell r \asymp n$} can be re-expressed as
  \begin{align}
    \label{eq:asymptotic_normality_for_truncated_version_cleaned}
    \nz[-1/2]{} \sumss{j=1}{\ell} \etaz[\leq L]{j}    \longrightarrow \UVN{0}{\sumss{h=1}{m} \bmaz[\prime]{h}
        \WhN{}{h}{\bm{b}} \bmaz{h}}.
  \end{align}
    The proof of \cref{th:AN_of_QNM} is now complete, since the four
  random variables $\nz[-1/2]{}\QNM{n}{m}(\bm{a})$,
  $\nz[-1/2]{}\QNM[\leq L]{n}{m}(\bm{a})$, \mbox{$\nz[-1/2]{}
    \!\left(\SNp{n}{1}\right)^{{}_{\leq L}}$} and \mbox{$\nz[-1/2]{}
    \sumss{j=1}{\ell} \etaz[\leq L]{j}$} all share the same limiting
  distribution (when $L$ is large enough).

  The proof of \cref{th:AN_of_QNh_vector} follows from the
  Cram\'{e}r-Wold theorem.
\end{proof}

  \label{note:interpretation_large_m}
  The statements in \cref{th:AN_of_QNh_and_QNM} has to be interpreted
  as an approximate asymptotic distributions valid for large $m$ and
  $n$ and small $\bm{b}$.  One part of the \enquote{asymptotic
    problem} is the interpretation of an infinite-variate Gaussian
  distribution, but the main problem is the occurrence of the kernel
  function $K(\bm{w})$, which in the limit gives a degenerate Gaussian
  distribution in \myref{th:AN_of_QNh_and_QNM}{th:AN_of_QNh_vector}.
  This degeneracy in itself would not have been any issue if the
  target of interest had been the asymptotic behaviour of
  $\nz[-1/2]{}\,\QNhvec{n}{\overbar{m}}$, but it requires some
  additional rescaling before the Klimko-Nelson approach in
  \cref{th:Klimko_Nelson_1978} can be used to investigate the
  asymptotic properties of the estimates~$\hatthetaMN{m}{n}$, see
  \cref{app:asymptotics_for_theta} for details.

\begin{corollary}
  \label{th:A4_requirement_final_case}
  Given the same assumptions as in \cref{th:AN_of_QNh_and_QNM}, the
  following asymptotic result holds true
  \begin{align}
    \nz[-1/2]{}\sqrt{\bz{1}\bz{2}} \nablaM{m}\QMN[\thetaMb{m}{b}]{m}{n}
    \stackrel{\scriptscriptstyle d}{\longrightarrow}     \UVN{\bm{0}}{\oplusss{h=1}{m} \WhN{}{h}{\bm{b}}},
  \end{align}
  i.e.\ asymptotically \mbox{$5m$-variate} normal.
\end{corollary}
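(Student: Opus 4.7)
The plan is to derive this corollary as an essentially immediate consequence of the two main pieces that have just been assembled. The left-hand side $\nz[-1/2]{}\sqrt{\bz{1}\bz{2}}\,\nablaM{m}\QMN[\thetaMb{m}{b}]{m}{n}$ is exactly the rescaled score vector that drives the fourth Klimko-Nelson requirement, and it has already been re-expressed in terms of the random vector $\QNhvec{n}{\overbar{m}}$ built from the truncated score-times-kernel random variables $\ZNht{n}{hq}{t}$.

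First, I would quote the asymptotic normality statement of \myref{th:AN_of_QNh_and_QNM}{th:AN_of_QNh_vector}, which already provides exactly the limiting object we want: under \cref{assumption_Yt,assumption_score_function,assumption_Nmb}, we have
\begin{align*}
  \nz[-1/2]{}\,\QNhvec{n}{\overbar{m}}
  \stackrel{\scriptscriptstyle d}{\longrightarrow}
  \UVN{\bm{0}}{\oplusss{h=1}{m} \WhN{}{h}{\bm{b}}}.
\end{align*}
Second, I would invoke \cref{eq:connection_RV_nabla_QMN}, whose proof showed the stronger fact that the componentwise discrepancy between $\ZNht{n}{hq}{t}$ and $\tildeXNht{n}{hq}{t}$ is not merely asymptotically negligible but actually identically zero as soon as $\bm{b} < \bmbz{0}$ (this is the content of the defining vector equation \cref{eq:def_of_theta_hb} for the population parameter $\thetahb{h}{b}$). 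Consequently, for all sufficiently small bandwidths in the regime specified by \cref{assumption_Nmb}, the two scaled random vectors $\nz[-1/2]{}\,\QNhvec{n}{\overbar{m}}$ and $\nz[-1/2]{}\sqrt{\bz{1}\bz{2}}\,\nablaM{m}\QMN[\thetaMb{m}{b}]{m}{n}$ coincide exactly on the probability space.

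Third, I would conclude by combining the two points: since the two $\nz[-1/2]{}$-scaled vectors are equal for all sufficiently small $\bm{b}$, they necessarily share any limiting distribution, so \citep[Theorem~25.4]{Billingsley12:_probab_measur} transfers the Gaussian limit from $\nz[-1/2]{}\,\QNhvec{n}{\overbar{m}}$ to $\nz[-1/2]{}\sqrt{\bz{1}\bz{2}}\,\nablaM{m}\QMN[\thetaMb{m}{b}]{m}{n}$, giving the claimed \mbox{$5m$-variate} normal limit with covariance $\oplusss{h=1}{m} \WhN{}{h}{\bm{b}}$. There is no real obstacle: all of the technical work (large-block/small-block decomposition, truncation, Lindeberg verification, and Cramér-Wold reduction) is already embedded in the proof of \cref{th:AN_of_QNh_and_QNM}, and the only additional content here is the bookkeeping identification of the score vector of $\QMN{m}{n}$ with $\QNhvec{n}{\overbar{m}}$, which was the purpose of \cref{eq:connection_RV_nabla_QMN}.
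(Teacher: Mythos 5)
Your proposal is correct and takes essentially the same route as the paper: the paper's own proof likewise combines the identification of the scaled score vector with $\QNhvec{n}{\overbar{m}}$ (via the lemma showing the two coincide once $\bm{b}<\bmbz{0}$, so that they share a limiting distribution by Billingsley's Theorem~25.4) with the asymptotic normality already established for $\nz[-1/2]{}\,\QNhvec{n}{\overbar{m}}$. The only difference is that you spell out the exact-equality observation that the paper leaves implicit in its one-line citation of that lemma.
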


\begin{proof}
  \Cref{eq:connection_RV_nabla_QMN} states that
  $\QNhvec{n}{\overbar{m}}$ and
  $\sqrt{\bz{1}\bz{2}}\nablaM{m}\QMN[\thetaMb{m}{b}]{m}{n}$ have the
  same limiting distribution, and the result thus follows from
  \myref{th:AN_of_QNh_and_QNM}{th:AN_of_QNh_vector}.
\end{proof}

\makeatletter{}
\subsection{The asymptotic results for
  $\widehatbmthetaz{\LGp|\overbar{m}|\bm{b}}$}
\label{app:asymptotics_for_theta}

The final details needed for the investigation of the asymptotic
properties of $\hatlgsdM[p]{\LGp}{\omega}{m}$ will now be presented.
(Confer \cref{note:interpretation_large_m} for an
interpretation of the $m$ that occurs in the limiting distribution.)

\begin{theorem}
  \label{th:asymptotics_for_hatlgtheta}
  Under the same assumptions as in \cref{th:AN_of_QNh_and_QNM}, the
  estimated parameter vector
  $\widehatbmthetaz{\LGp|\overbar{m}|\bm{b}}$ converges towards the
  true parameter vector $\bmthetaz{\LGp|\overbar{m}}$ in the
  following manner.
  \begin{align}
    \label{eq:th:asymptotics_for_hatlgtheta}
    \sqrt{n \!\parenRz[3]{}{\bz{1}\bz{2}}} \cdot     \parenR{\widehatbmthetaz{\LGp|\overbar{m}|\bm{b}} -
    \bmthetaz{\LGp|\overbar{m}}}     \stackrel{\scriptscriptstyle d}{\longrightarrow}
    \UVN{\bm{0}}{\Sigmaz{\LGp|\overbar{m}}},
  \end{align}
  where
  \mbox{$\Sigmaz{\LGp|\overbar{m}} \defeq
    \oplusss{h=1}{m}\Sigmaz{\LGp|h}$}, i.e.\
  $\Sigmaz{\LGp|\overbar{m}}$ is the direct sum of the covariance
  matrices $\Sigmaz{\LGp|h}$ that corresponds to
  \mbox{$\sqrt{n \!\parenRz[3]{}{\bz{1}\bz{2}}} \cdot
    \parenR{\widehatbmthetaz{\LGp|h|\bm{b}}-\bmthetaz{\LGp|h}}$}.
\end{theorem}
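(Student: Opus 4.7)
The strategy is to apply the Klimko–Nelson framework of \cref{th:Klimko_Nelson_1978} to the penalty function $\QMN[\thetaM{m}]{m}{n}$ defined in \cref{eq:QMN_definition}, and then transfer the block-diagonal structure of the resulting covariance matrix into the statement of the theorem. The basic idea is that \cref{Res:QMN_A1_requirement,Res:QMN_A2_requirement,Res:QMN_A3_requirement} together with \cref{th:A4_requirement_final_case} already cover all four requirements \cref{th:Klimko_Nelson_1978_A1,th:Klimko_Nelson_1978_A2,th:Klimko_Nelson_1978_A3,th:Klimko_Nelson_1978_A4}, so the asymptotic normality is essentially assembled, and the only work left concerns the correct scaling factor.

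First I would note that, under the stated assumptions, a minimiser $\widehatbmthetaz{\LGp|\overbar{m}|\bm{b}}$ of $\QMN[\thetaM{m}]{m}{n}$ can be constructed by stacking the bivariate minimisers $\hatlgthetab[p]{\LGp}{h}{\bmbzh{h}}$ obtained from each $\tildeQhN[\thetah{h}]{h}{n}$, since the summands in \cref{eq:QMN_definition} share no common parameters. The consistency toward $\bmthetaz{\LGp|\overbar{m}}$ is then a direct consequence of the bivariate consistency proved in \citet{Tjostheim201333}. The A1--A3 requirements were already verified in \cref{Res:QMN_A1_requirement,Res:QMN_A2_requirement,Res:QMN_A3_requirement} for the general $(m,\bm{b})$ regime using only the block-diagonal structure of $\VMbN{m}{\bm{b}}{n}$ and $\TMbN{m}{\bm{b}}{n}$, and \cref{th:A4_requirement_final_case} upgrades the fourth requirement so that
\begin{align*}
  \nz[-1/2]{}\sqrt{\bz{1}\bz{2}}\,\nablaM{m}\QMN[\thetaMb{m}{b}]{m}{n}
  \stackrel{d}{\longrightarrow} \UVN{\bm{0}}{\WMbN{m}{}{\bm{b}}},
\end{align*}
with $\WMbN{m}{}{\bm{b}} = \oplus_{h=1}^{m} \WhN{}{h}{\bm{b}}$.

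The second step is to deal with the rank deficiency of $\VMbN{m}{\bm{b}}{}$ and $\WMbN{m}{\bm{b}}{}$ that arises in the limit $\blimit$, exactly as in Theorems~2 and~3 of \citet{Tjostheim201333}. Since $\VMbN{m}{\bm{b}}{n}$ and $\WMbN{m}{\bm{b}}{n}$ are the direct sums of the bivariate matrices $\VhN{}{h:\bm{b}}{n}$ and $\WhN{}{h}{\bm{b}}$, the rescaling argument needed to offset the rank-one degeneracy of each block carries over blockwise: the same $\parenRz[2]{}{\bz{1}\bz{2}}$-rescaling used in \cref{eq:CLT_bivariate_general_case} produces finite, positive-definite limiting matrices $\VhN{}{h:\bm{b}}{}^{\circ}$ and $\WhN{}{h}{}^{\circ}$ within each diagonal block, and the associated bivariate covariance $\Sigmaz{\LGp|h}$ of $\sqrt{n(\bz{1}\bz{2})^3}\cdot(\hatlgthetab[p]{\LGp}{h}{\bmbzh{h}}-\bmthetaz{\LGp|h})$. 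Combining this blockwise scaling through the Klimko–Nelson conclusion \cref{eq:Klimko_Nelson_original_theorem} then yields
\begin{align*}
  \sqrt{n\!\parenRz[3]{}{\bz{1}\bz{2}}}\cdot \parenR{\widehatbmthetaz{\LGp|\overbar{m}|\bm{b}} - \bmthetaz{\LGp|\overbar{m}}}
  \stackrel{d}{\longrightarrow} \UVN{\bm{0}}{\Sigmaz{\LGp|\overbar{m}}},
\end{align*}
where $\Sigmaz{\LGp|\overbar{m}} = \oplus_{h=1}^{m} \Sigmaz{\LGp|h}$ inherits the block-diagonal structure directly from the block-diagonal structure of $\VMbN{m}{\bm{b}}{}$ and $\WMbN{m}{\bm{b}}{}$.

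The main obstacle is not the block-diagonal assembly, which is purely algebraic once the bivariate results are in hand, but rather the careful handling of the rescaling needed to compensate for the degeneracy of the limiting matrices as $\blimit$. This step requires adapting the \enquote{\emph{matrices related to}} construction from \citep{Tjostheim201333} to the $5m$-dimensional setting, verifying that the rescaled blocks are uniformly well-behaved in $h$ (so that the direct sum is a genuine covariance matrix), and checking that \cref{assumption_score_function}, which underpins the nonvanishing of the score at $\LGp$, keeps each $\Sigmaz{\LGp|h}$ nondegenerate. Once this is done, no new limit theorem is required: the result is purely an application of \cref{th:Klimko_Nelson_1978} via the verifications of A1--A4 already assembled in the preceding sections.
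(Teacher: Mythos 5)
Your proposal is correct and follows essentially the same route as the paper's own proof: verify A1--A4 of the Klimko--Nelson theorem for the stacked penalty function (A1--A3 from the block-diagonal lemmas, A4 from \cref{th:A4_requirement_final_case}), exploit the direct-sum structure of $\VMbN{m}{\bm{b}}{}$ and $\WMbN{m}{\bm{b}}{}$ to obtain $\Sigmaz{\LGp|\overbar{m}}=\oplusss{h=1}{m}\Sigmaz{\LGp|h}$, and import the $\sqrt{\bz{1}\bz{2}}$ kernel balancing plus the additional rescaling for the rank-one degeneracy from Theorems~2--3 of \citet{Tjostheim201333} to arrive at the $\sqrt{n\!\parenRz[3]{}{\bz{1}\bz{2}}}$ rate. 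The level of detail you defer to the bivariate results of \citep{Tjostheim201333} matches what the paper itself does.
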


\begin{proof}
  Under the given assumptions, \cref{th:A4_requirement_final_case}
  states that the fourth requirement of \cref{th:Klimko_Nelson_1978}
  (the Klimko-Nelson approach) holds true for the local penalty
  function $\QMN[\bmthetaz{\LGp|\overbar{m}|\bm{b}}]{m}{n}$ in the
  general case where $\mlimit$ and $\blimit$ when $\nlimit$.  The
  three remaining requirements holds true by the same arguments that
  was used in \cref{app:new_penalty_function}, so the Klimko-Nelson
  approach can be used to obtain an asymptotic result for the
  difference of the estimate
  $\widehatbmthetaz{\LGp|\overbar{m}|\bm{b}}$ and the true parameter
  $\bmthetaz{\LGp|\overbar{m}}$.

  As in \citet{Tjostheim201333}, it will be instructive to first
  consider the simpler case where $m$ and $\bm{b}$ were fixed.  In
  this case, the asymptotic result obtained from
  \cref{th:Klimko_Nelson_1978} takes the form,
  \begin{align}
    \label{eq:th:asymptotics_for_hatlgtheta_original}
    \sqrt{n} \cdot     \left(\widehatbmthetaz{\LGp|\overbar{m}|\bm{b}} -
    \bmthetaz{\LGp|\overbar{m}} \right)     \stackrel{\scriptscriptstyle d}{\longrightarrow}
    \UVN{\bm{0}}{    \Sigmaz{\LGp|\overbar{m}}},
  \end{align}
  with
  \mbox{$\Sigmaz{\LGp|\overbar{m}} \defeq
    \Vz[-1]{\LGp|\overbar{m}} \Wz{\LGp|\overbar{m}}
    \Vz[-1]{\LGp|\overbar{m}}$}, where the \mbox{$5m\times 5m$}
  matrices $\Vz{\LGp|\overbar{m}}$ and $\Wz{\LGp|\overbar{m}}$ can
  be represented as
  \begin{align}
    \label{eq:Vn_Wm_matrices_lgsd}
    \Vz{\LGp|\overbar{m}}
    = \oplusss{h=1}{m}     \Vz{\LGp|h}, \qquad
    \Wz{\LGp|\overbar{m}}
    = \oplusss{h=1}{m} \Wz{\LGp|h},
  \end{align}
  i.e.\ they are the direct sums of the \mbox{$5\times 5$} matrices
  $\Vz{\LGp|h}$ and $\Wz{\LGp|h}$ that corresponds to the bivariate
  penalty functions used for the investigation of the parameter
  vectors $\bmthetaz{\LGp|h|\bm{b}}$.

  Since $\Vz{\LGp|\overbar{m}}$ is the direct sum of the invertible
  matrices $\Vz{\LGp|h}$, it follows that
  $\Vz[-1]{\LGp|\overbar{m}}$ is the direct sum of
  $\Vz[-1]{\LGp|h}$ (see e.g.\ \citet[p.31]{Horn:2012:MA:2422911}).
    This implies that the matrix of interest can be expressed as
  \mbox{$\Sigmaz{\LGp|\overbar{m}} =
    \oplusss{h=1}{m}\Sigmaz{\LGp|h}$}, where
  \mbox{$\Sigmaz{\LGp|h} \defeq
    \Vz[-1]{\LGp|h}\Wz{\LGp|h}\Vz[-1]{\LGp|h}$} are the
  covariance matrices that corresponds to
  \mbox{$\sqrt{n} \cdot
    \parenR{\widehatbmthetaz{\LGp|h|\bm{b}}-\bmthetaz{\LGp|h}}$},
  i.e.\ a bivariate result like the one in
  \citet[Th.~1]{Tjostheim201333}.

  For the general situation, when $\mlimit$ and $\blimit$ when
  $\nlimit$, it is necessary with an additional scaling in order to
  get a covariance matrix with finite entries.  Obviously, a factor
  $\sqrt{\bz{1}\bz{2}}$ must be included in order to balance the
  effect of the kernel function $\Khb{h}{\bm{b}}$.
  
  Moreover, since the limiting matrices of $\Vz{\LGp|h}$ and
  $\Wz{\LGp|h}$ turns out to have rank one, an additional scaling is
  required in order to obtain a covariance matrix with finite entries.
  This case is treated in \citet[Th.~3]{Tjostheim201333}, from which
  it follows that the scaling factor must be
  $\sqrt{\parenRz[3]{}{\bz{1}\bz{2}}}$.
\end{proof}

\subsection{An extension to two different points, i.e.\ both
  $\LGp$ and $\LGpd$}
\label{app:theta_extension_different_local_points}
The previous analysis was restricted to the case where one point was
used throughout, which is sufficient for the investigation of the
asymptotic properties of the \mbox{$m$-truncated} estimates
$\hatlgsdM[p]{\LGp}{\omega}{m}$ for a point $\LGp$ that lies upon the
diagonal (see \cref{th:asymptotics_for_hatlgsd}) or for general points
\mbox{$\LGp\in\RRn{2}$} when the time series under investigation is
time reversible (see \cref{th:asymptotics_for_hatlgsd_reversible}).

An investigation of the \mbox{$m$-truncated} estimates
$\hatlgsdM[p]{\LGp}{\omega}{m}$ for points~\mbox{$\LGp=\LGpoint$} that
lies off the diagonal, i.e.\ \mbox{$\LGpnotdiagonal$}, requires some
minor modifications of the setup leading to
\cref{th:asymptotics_for_hatlgtheta}, as discussed in
the proof of the following theorem.

\begin{theorem}
  \label{th:asymptotics_for_hatlgtheta_LGp_and_LGpd}
  Consider the same setup as in \cref{th:AN_of_QNh_and_QNM}, but with
  the modification that the point
  \mbox{$\LGp=\LGpoint$} lies off the diagonal, and
  with the added requirement that the bivariate densities
  $\gh[\yh{h}]{h}$ does not possess diagonal symmetry.  
    With \mbox{$\LGpd=\parenR{\LGpi{2},\LGpi{1}}$} the diagonal
  reflection of $\LGp$, the two parameter vectors
  $\widehatbmthetaz{\LGp|\overbar{m}|\bm{b}}$ and
  $\widehatbmthetaz{\LGpd|\overbar{m}|\bm{b}}$ can be combined to a
  vector
  \mbox{$\widehatbmThetaz{\OUbar{m}|\bm{b}}\!\parenR{\LGp,\LGpd} =
    \Vector[\prime]{\widehatbmthetaz[\prime]{\LGp|\overbar{m}|\bm{b}},
      \widehatbmthetaz[\prime]{\LGpd|\overbar{m}|\bm{b}}}$},
  possessing the following asymptotic behaviour.
    \begin{align}
    \label{eq:th:asymptotics_for_hatlgtheta_LGp_and_LGpd}
    \sqrt{n \!\parenRz[3]{}{\bz{1}\bz{2}}} \cdot     \left(\widehatbmThetaz{\OUbar{m}|\bm{b}}\!\parenR{\LGp,\LGpd} -
    \bmThetaz{\OUbar{m}}\!\parenR{\LGp,\LGpd} \right)     \stackrel{\scriptscriptstyle d}{\longrightarrow}
    \UVN{\bm{0}}{    \begin{bmatrix}
      \Sigmaz{\LGp|\overbar{m}} &0 \\
      0 &\Sigmaz{\LGpd|\overbar{m}}
    \end{bmatrix}},
  \end{align}
  where the matrices $\Sigmaz{\LGp|\overbar{m}}$ and
  $\Sigmaz{\LGpd|\overbar{m}}$ are as given in
  \cref{th:asymptotics_for_hatlgtheta}.
\end{theorem}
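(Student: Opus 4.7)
The plan is to adapt the Klimko--Nelson machinery from \cref{app:new_penalty_function,App:A4_requirement_general_case} to a combined penalty function defined on the two points jointly, and then show that the resulting asymptotic covariance matrix is block diagonal because the kernel contributions at $\LGp$ and $\LGpd$ are asymptotically uncorrelated.

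First, I would introduce the $10m$-dimensional parameter vector
$\bmThetaz{\OUbar{m}}\!\parenR{\LGp,\LGpd} \defeq
\Vector[\prime]{\bmthetaz[\prime]{\LGp|\overbar{m}},\bmthetaz[\prime]{\LGpd|\overbar{m}}}$
and the combined penalty function
$\tildeQMN{m}{n}\!\parenR{\bmThetaz{\OUbar{m}}} \defeq \QMN[\bmthetaz{\LGp|\overbar{m}}]{m}{n} + \QMN[\bmthetaz{\LGpd|\overbar{m}}]{m}{n}$,
where each summand is the penalty function from \cref{eq:QMN_definition} evaluated at the corresponding point. Because the two summands share no parameters, the minimiser of $\tildeQMN{m}{n}$ stacks the two individual minimisers, so the estimate is precisely $\widehatbmThetaz{\OUbar{m}|\bm{b}}\!\parenR{\LGp,\LGpd}$. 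The requirement that the bivariate densities $\gh[\yh{h}]{h}$ lack diagonal symmetry is exactly what guarantees that $\bmthetaz{\LGp|\overbar{m}}$ and $\bmthetaz{\LGpd|\overbar{m}}$ are genuinely different, so the population targets are well identified.

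Next, I would verify the four Klimko--Nelson conditions for $\tildeQMN{m}{n}$. The first three (A1, A2, A3) from \cref{th:Klimko_Nelson_1978} follow essentially verbatim from \cref{Res:QMN_A1_requirement,Res:QMN_A2_requirement,Res:QMN_A3_requirement} applied at each of the two points, because $\nablah{}\nablah[\prime]{}\tildeQMN{m}{n}$ decomposes into two non-interacting diagonal blocks indexed by $\LGp$ and $\LGpd$, giving $\tildeVz{\OUbar{m}|\bm{b}} = \Vz{\LGp|\overbar{m}|\bm{b}} \oplus \Vz{\LGpd|\overbar{m}|\bm{b}}$ and an analogous decomposition for the $T$-matrix. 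The scaling $\sqrt{n\!\parenRz[3]{}{\bz{1}\bz{2}}}$ and the rank-one limits behave exactly as in \cref{th:asymptotics_for_hatlgtheta}.

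The main obstacle, and the only genuinely new step, is A4: showing that $\sqrt{\bz{1}\bz{2}}\,\nz[-1/2]{}\,\nablah{}\tildeQMN{m}{n}$ is jointly asymptotically normal with a \emph{block-diagonal} covariance matrix $\WhN{}{\LGp|\overbar{m}}{} \oplus \WhN{}{\LGpd|\overbar{m}}{}$. Here I would extend the random variables of \cref{def:Xht_*Lc} to carry the point as a label, producing $\XNht{n}{hq}{t}(\LGp)$ and $\XNht{n}{hq}{t}(\LGpd)$, and then invoke the Cram\'er--Wold device on an arbitrary linear combination $\baM[\prime]{\LGp}\,\QNhvec{n}{\overbar{m}}(\LGp) + \baM[\prime]{\LGpd}\,\QNhvec{n}{\overbar{m}}(\LGpd)$. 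Repeating the large block--small block and truncation argument from \cref{th:AN_of_QNh_and_QNM}, the variance of this combination reduces to evaluating $\E{\XNht{n}{hq}{t}(\LGp)\,\XNht{n}{jr}{t}(\LGpd)}$ for coinciding time indices, together with off-diagonal time contributions that vanish by the same $\alpha$-mixing argument as in \cref{th:convergent_covariance_ZNMt}. The key new computation is the bivariate analogue of \cref{th:expectations_of_XNht*}\ref{eq:E(Xhi.Xjk)}: after the substitution $\wz{i} = (\yz{i}-\vz{i})/\bz{i}$ centred at $\LGp$, the kernel centred at $\LGpd$ becomes $K\!\parenR{(\LGpi{1}-\LGpi{2})/\bz{1}+\wz{1},\, (\LGpi{2}-\LGpi{1})/\bz{2}+\wz{2}}$, whose argument tends to infinity componentwise as $\blimit$ (since $\LGpi{1}\neq\LGpi{2}$), so by dominated convergence using \cref{eq:kernel_integrals_finite} the product-kernel integral is $\oh{1}$ and hence the cross-covariance is negligible relative to the $\parenRz[-2]{}{\bz{1}\bz{2}}$ rescaling. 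This yields the block-diagonal $W$-matrix, and composing with $\tildeVz[-1]{\OUbar{m}|\bm{b}} = \Vz[-1]{\LGp|\overbar{m}|\bm{b}} \oplus \Vz[-1]{\LGpd|\overbar{m}|\bm{b}}$ produces the block-diagonal covariance $\Sigmaz{\LGp|\overbar{m}} \oplus \Sigmaz{\LGpd|\overbar{m}}$ claimed in \cref{eq:th:asymptotics_for_hatlgtheta_LGp_and_LGpd}.
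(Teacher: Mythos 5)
Your proposal is correct and follows essentially the same route as the paper: both form the combined penalty function as the sum of the two single-point penalties, exploit the absence of shared parameters to get the block-diagonal Hessian and reuse the A1--A3 verifications, and reduce the new A4 work to showing that the cross-point kernel product integral is $\oh{1}$ because the separation $(\LGpi{1}-\LGpi{2})/\bz{i}$ diverges as $\blimit$. No gaps to report.
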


\begin{proof}
  This result follows when the Klimko-Nelson approach is used upon the
  local penalty-function
  \begin{align}
    \label{eq:combined_penalty_function_LGp_and_LGpd}
    \Qz{\OUbar{m}:n}\!\parenR{\bmThetaz{\OUbar{m}|\bm{b}}\!\parenR{\LGp,\LGpd}}
    \defeq \QMN[\bmthetaz{\LGp|\overbar{m}|\bm{b}}]{m}{n} +
    \QMN[\bmthetaz{\LGpd|\overbar{m}|\bm{b}}]{m}{n},
  \end{align}
        i.e.\ the four requirements in
  \cref{th:Klimko_Nelson_1978_A1,th:Klimko_Nelson_1978_A2,th:Klimko_Nelson_1978_A3,th:Klimko_Nelson_1978_A4}
  of \cref{th:Klimko_Nelson_1978} must be verified for this new
  penalty function.  The function $\QMN{m}{n}$ on the right side of
  \cref{eq:combined_penalty_function_LGp_and_LGpd} is the penalty
  function encountered in the investigation of
  $\bmthetaz{\LGp|\overbar{m}|\bm{b}}$, i.e.\ the same observations
  $\TSR{\Yz{t}}{t=1}{n}$ occurs in both the first and second term, but
  the point of interest will be $\LGp$ in the first one and
  $\LGpd$ in the second one.  

  The requirement that $\LGp$ lies off the diagonal together with the
  requirement that none of the bivariate densities $\gh[\yh{h}]{h}$
  possess diagonal symmetry implies that different approximating local
  Gaussian densities occurs for the different points and different
  lags, so it can be assumed that there is no common parameters in
  $\bmthetaz{\LGp|\overbar{m}|\bm{b}}$ and
  $\bmthetaz{\LGpd|\overbar{m}|\bm{b}}$.  
                                        This implies that the arguments used to verify the three first
  requirements of \cref{th:Klimko_Nelson_1978} for the penalty
  function $\QMN{m}{n}$ (see
  \cref{Res:QMN_A1_requirement,Res:QMN_A2_requirement,Res:QMN_A3_requirement}),
  also will work upon the combined penalty function
    $\Qz{\OUbar{m}:n}$, and it will in particular be the case that the
  Hessian matrix $\Vz{\OUbar{m}|\bm{b}:n}$ occurring in
  \cref{Res:QMN_A2_requirement} can be written as the direct sum of
  the matrices that corresponds to
  $\QMN[\bmthetaz{\LGp|\overbar{m}|\bm{b}}]{m}{n}$ and
  $\QMN[\bmthetaz{\LGpd|\overbar{m}|\bm{b}}]{m}{n}$, i.e.\
  \mbox{$\Vz{\OUbar{m}|\bm{b}}\!\parenR{\LGp,\LGpd} =
    \VMbN{m}{\bm{b}}{n}\!\parenR{\LGp} \oplus
    \VMbN{m}{\bm{b}}{n}\!\parenR{\LGpd}$}, where the points of
  interest have been included in the notation to keep track of the
  components.
  
    The investigation of the fourth requirement of the Klimko-Nelson
  approach for the penalty function $\QMN{\Ubar{m}}{n}$ requires some minor
  modifications of the constructions that was encountered in
  \cref{app:final_building_blocks}.  Both
  $\XNht{n}{hq}{t}\!\left(\LGp\right)$ and
  $\XNht{n}{hq}{t}\!\left(\LGpd\right)$ (for \mbox{$h=1,\dotsc,m$} and
  \mbox{$q=1,\dotsc,5$}) are needed, and the final random variable
  will include both $\LGp$ and $\LGpd$ versions of the variables
  $\ZNht{n}{hq}{t}$, $\QNh{n}{hq}$, $\ZNMt{n}{m}{t}(\bm{a})$,
  $\ZNhtvec{n}{\overbar{m}:t}$, $\QNM{n}{m}(\bm{a})$ and
  $\QNhvec{n}{\overbar{m}}$.
    
  A minor revision of \cref{eq:connection_RV_nabla_QMN} proves that
  the same limiting distribution occurs for the
  \mbox{$\sqrt{\bz{1}\bz{2}}$}-scaled gradient of
  $\Qz{\OUbar{m}:n}\!\parenR{\bmThetaz{\OUbar{m}|\bm{b}}\!\parenR{\LGp,\LGpd}}$
  and for the random variable
  {$\QNhvec{n}{\OUbar{m}}(\LGp,\LGpd) \defeq \Vector[\prime]{
      \subp{\QNhvec{n}{\overbar{m}}(\LGp)}{}{}{}{\prime},
     \subp{\QNhvec{n}{\overbar{m}}(\LGpd)}{}{}{}{\prime}}$},
    and it is easy to see that
  \mbox{$\ZNMt{n}{\Ubar{m}}{t}(\bmaz{1},\bmaz{2};\LGp,\LGpd) \defeq
    \ZNMt{n}{m}{t}(\bmaz{1};\LGp) + \ZNMt{n}{m}{t}(\bmaz{2};\LGpd)$}
  must take the place of $\ZNMt{n}{m}{t}(\bm{a})$ in the existing
  proofs.
    The key ingredient for the asymptotic investigation of
  $\ZNMt{n}{\Ubar{m}}{t}(\bmaz{1},\bmaz{2};\LGp,\LGpd)$ is a simple
  extension of \myref{th:expectations_of_XNht*}{eq:E(Xhi.Xjk)} such
  that it also covers the \enquote{cross-term} cases
  $\E{\XNht{n}{hq}{i}(\LGp)\cdot\XNht{n}{jr}{k}(\LGpd)}$ and verifies
  that these cases are asymptotically negligible.  This follows from
  the results stated in
  \cref{th:integrals_kernel_and_score_components_LGP_and_LGpd}
    
  The statement for $\ZNMt{n}{m}{t}(\bm{a})$ given in
  \myref{th:Lp_properties_for_ZNht_and_ZNMt}{th:Lp_properties_for_ZNMt_limit}
  extends trivially to the present case, since the asymptotic
  behaviour are unaffected by the adjustment that a sum of length $m$
  is replaced by two sums of length~$m$.  The statement in
  \cref{th:convergent_covariance_ZNMt} remains the same too, but some
  minor adjustments are needed in the proof: First of all, from the
  definition of $\ZNMt{n}{\Ubar{m}}{t}(\bmaz{1},\bmaz{2};\LGp,\LGpd)$,
  it follows that
  \begin{align}
    \nonumber
    \ZNMt{n}{\Ubar{m}}{i}(\bmaz{1},\bmaz{2};\LGp,\LGpd) \cdot
    \ZNMt{n}{\Ubar{m}}{k}(\bmaz{1},\bmaz{2};\LGp,\LGpd)
    & =       \ZNMt{n}{m}{i}(\bmaz{1};\LGp) \cdot
      \ZNMt{n}{m}{k}(\bmaz{1};\LGp)
      +        \ZNMt{n}{m}{i}(\bmaz{1};\LGp) \cdot
      \ZNMt{n}{m}{k}(\bmaz{2};\LGpd) \\
    &\phantom{=\ } +       \ZNMt{n}{m}{k}(\bmaz{1};\LGp) \cdot
      \ZNMt{n}{m}{i}(\bmaz{2};\LGpd)
      +        \ZNMt{n}{m}{i}(\bmaz{2};\LGpd) \cdot
      \ZNMt{n}{m}{k}(\bmaz{2};\LGpd),
  \end{align}
  and only the parts that contains both $\LGp$ and $\LGpd$ needs to be
  investigated (since the other terms already are covered by the
  existing results).  The statement that must be verified reduces to
  \begin{align}
    \label{eq:convergent_covariance_ZNMt_general_M_both_LGp_and_LGpd}
    \frac{1}{n} \sumss{\stackrel{i,k = 1}{\scriptscriptstyle i\neq k}}{n}     \absp{\ZNMt{n}{m}{i}(\bmaz{1};\LGp) \cdot
      \ZNMt{n}{m}{k}(\bmaz{2};\LGpd) } &= \oh{1},
  \end{align}
  and it is straightforward to verify that this sum can be realised as
  \begin{align}
    \label{eq:convergent_covariance_final_step_LGp_and_LGpd}
    \sumss{\ell=1}{n-1} \left(1
    - \frac{\ell}{n}\right)     \INMl{n}{m}{\ell}(\bmaz{1},\bmaz{2};\LGp,\LGpd) 
    +     \sumss{\ell=1}{n-1} \left(1
    - \frac{\ell}{n}\right)     \INMl{n}{m}{\ell}(\bmaz{2},\bmaz{1};\LGpd,\LGp),
  \end{align}
  where
  \mbox{$\INMl{n}{m}{\ell}(\bmaz{1},\bmaz{2};\LGp,\LGpd) \defeq \absp{
      \E{\ZNMt{n}{m}{0}(\bmaz{1},\LGp) \cdot         \ZNMt{n}{m}{\ell}(\bmaz{2},\LGpd)} } $}, with
  $\INMl{n}{m}{\ell}(\bmaz{2},\bmaz{1};\LGpd,\LGp)$ defined in the
  obvious manner by interchanging the parameters and the points.
    The desired result follows from this, since the remaining part of
  the proof of \cref{th:convergent_covariance_ZNMt} (using the
  adjusted version of \myref{th:expectations_of_XNht*}{eq:E(Xhi.Xjk)})
  gives that the two sums in
  \cref{eq:convergent_covariance_final_step_LGp_and_LGpd} both
  are~$\oh{1}$.
  
  The investigation of the variance of
  $\ZNMt{n}{\Ubar{m}}{t}(\bmaz{1},\bmaz{2};\LGp,\LGpd)$ is straight
  forward, i.e.\ the standard formula for the variance of a sum of
  random variables gives
  \begin{align}
    \nonumber
                                    \Var{\ZNMt{n}{\Ubar{m}}{t}(\bmaz{1},\bmaz{2};\LGp,\LGpd)} 
      = \Var{\ZNMt{n}{m}{t}(\bmaz{1},\LGp)}  
     + 2
      \Cov{\ZNMt{n}{m}{t}(\bmaz{1},\LGp)}{ 
      \ZNMt{n}{m}{t}(\bmaz{2},\LGpd)}  
        + \Var{\ZNMt{n}{m}{t}(\bmaz{2},\LGpd)},
  \end{align}
  and the revised version of
  \myref{th:expectations_of_XNht*}{eq:E(Xhi.Xjk)} implies that the
  covariance part of this expression is asymptotically negligible.
  The two variances are already covered by the existing version of
  \myref{th:var(ZNMt)}{th:var(ZNMt)=expression}, and from this it is
  clear that the asymptotically non-negligible parts can be written as
  \begin{align}
    \label{eq:asympotically_non_negligible_main_matrix_LGp_and_LGpd}
    \baM[\prime]{\Ubar{m}} \cdot \WMbN{\Ubar{m}}{}{\bm{b}} \cdot
    \baM{\Ubar{m}}     \defeq
    \Vector{\bmaz[\prime]{1},\bmaz[\prime]{2}} \cdot     \parenR{\WMbN{m}{}{\bm{b}}(\LGp) \oplus
    \WMbN{m}{}{\bm{b}}(\LGpd)} \cdot     \begin{bmatrix}
      \bmaz{1} \\ 
      \bmaz{2}
    \end{bmatrix}     = \bmaz[\prime]{1} \cdot \WMbN{m}{}{\bm{b}}(\LGp)  \cdot \bmaz{1}     + \bmaz[\prime]{2} \cdot \WMbN{m}{}{\bm{b}}(\LGpd) \cdot \bmaz{2},
  \end{align}
  whereas the asymptotically negligible parts of
  \myref{th:var(ZNMt)}{th:var(ZNMt)=expression} remains as before.
  This is sufficient for the revision of \cref{th:var(ZNMt)} (since
  \cref{th:eta_variance_expressions_not_truncated,th:eta_variance_expressions_truncated}
  follows from \cref{th:var(ZNMt)=expression} and
  \cref{th:convergent_covariance_ZNMt})
  
              Finally, \cref{th:AN_of_QNh_and_QNM} can now be updated based on the
  matrix
  \mbox{$\WMbN{\Ubar{m}}{}{\bm{b}} \defeq \WMbN{m}{}{\bm{b}}(\LGp)
    \oplus \WMbN{m}{}{\bm{b}}(\LGpd)$},
      and with some minor adjustments of the proof, i.e.\ new cross-terms
  are asymptotically negligible and sums of length $m$ are replaced
  with two sums of length~$m$, it follows that
  \begin{align}
    \label{eq:asymptotical_result_penalty_function_LGp_and_LGpd}
    \nz[-1/2]{}\, \Qz{\OUbar{m}:n}\!\parenR{\bmThetaz{\OUbar{m}|\bm{b}}\!\parenR{\LGp,\LGpd}}
    \stackrel{\scriptscriptstyle d}{\longrightarrow}     \UVN{\bm{0}}{\WMbN{m}{}{\bm{b}}(\LGp)
    \oplus \WMbN{m}{}{\bm{b}}(\LGpd)}.
  \end{align}
  
  The revised version of \cref{th:A4_requirement_final_case} is as
  before trivial to prove, which completes the investigation of the
  fourth requirement needed in order to use the Klimko-Nelson
  approach.  Basic linear algebra together with
  \cref{th:asymptotics_for_hatlgtheta}
      now finishes the proof.
\end{proof}

\medskip

  \label{note:alternative_smoothing_strategy}
  The arguments above could (under suitable assumptions) have been
  formulated in a more general setup, leading to a result that shows
  that the parameter vectors
  $\widehatbmthetaz{\LGpz{i}|\overbar{m}|\bm{b}}$ corresponding to
  different points $\TSR{\LGpz{i}}{i=1}{\nu}$ will be jointly
  asymptotically normal and pairwise asymptotically independent.  The
  asymptotically independent property are inherited by the
  corresponding estimated local Gaussian spectral densities
  $\hatlgsdM[p]{\LGpz{i}}{\omega}{m}$, and this enables an alternative
  smoothing strategy for the estimated local Gaussian spectral
  densities at a given point~$\LGp$, see
  \cref{sec:alternative_smoothing_strategy}.  However, the added
  computational cost incurred by such an estimation approach may make
  this a less interesting topic of investigation.

\subsection{An alternative smoothing strategy?}
\label{sec:alternative_smoothing_strategy}
The previously defined estimates $\hatlgsdM[5]{\LGp}{\omega}{m}$ of
$\lgsd[5]{\LGp}{\omega}$ was based on a weighting function
$\lambdaz{m}(h)$ that worked upon the estimated
values~$\hatlgacr[5]{\LGp}{h}$, but it should for the record be noted
that an alternative approach could have been applied too.

The point is that it is possible to extend the result of
\cref{app:theta_extension_different_local_points} to show that the
estimated \mbox{$m$-truncated} local Gaussian spectral densities
$\hatlgsdM[5]{\LGpz{i}}{\omega}{m}$ corresponding to different points
$\TSR{\LGpz{i}}{i=1}{\nu}$ will be jointly asymptotically normal and
pairwise asymptotically independent (when $\mlimit$ and $\blimit$ as
$\nlimit$).  This enables an alternative smoothing strategy, where an
estimate $\hatlgsdM[5]{\LGp}{\omega}{m}$ for a given point~$\LGp$
could be based on a weighting of the values of
$\hatlgsdM[5]{\LGpz{i}}{\omega}{m}$ in a grid of points
surrounding~$\LGp$.

This alternative approach shares some superficial similarities with
the one used when the ordinary global spectrum $f(\omega)$ is computed
based on the periodogram, see e.g.\ \citet{Brockwell:1986:TST:17326}
for details.  However, the efficiency of the periodogram-approach in
the estimation of~$f(\omega)$ is due to the \textit{Fast Fourier
  Transform}, which implies that the periodogram can be computed
directly from the observations without the need for an explicit
computation of all of the estimated
autocovariances~$\widehatrhoz{}(h)$, and that shortcut is not
available for the local Gaussian case.  The computational load would
thus become much larger for the local Gaussian case if such an
averaging-approach was applied.

\section{Technical details}
\label{app:Tecnhical_details}
\setcounter{figure}{0}

This section collects some technical details that would have impeded
the flow of the main argument if they had been included throughout the
paper.  A brief overview: \Cref{app:_diagonal_argument} discuss the
\textit{diagonal folding property} of the local Gaussian
autocorrelations~$\lgacr[p]{\LGp}{h}$ and \cref{app:_time_reversible}
considers the special case of time-reversible time series.
\Cref{app:limit_theorems} collects technical results related to the
asymptotic relationship between $n$, $m$ and~$\bm{b}$, whereas
\cref{app:integrals_kernel_score} shows that the assumptions on the
kernel function $K(\bm{w}$) and the score
functions~$\Uh[\bm{w}]{hq:\bm{b}}$ implies that some integrals are
finite (which implies that
\myref{assumption_Yt}{assumption:h_x:y_x:y:z_finite_expectations} will
be trivially satisfied if the bivariate densities $\gh[\yh{h}]{h}$ are
finite).  \Cref{app:sigma_algebras_Lp_spaces} contains a few basic
definitions/comments related to \mbox{$\alpha$-mixing},
\mbox{$\sigma$-algebras} and \mbox{$\Lp{\nu}$-spaces}.

\makeatletter{}
\subsection{The diagonal folding property of $\lgacr[p]{\LGp}{h}$}
\label{app:_diagonal_argument}

The following simple observation about $\lgacr[p]{\LGp}{h}$ is of
interest both for theoretical and computational aspects of the local
Gaussian spectral density $\lgsd[p]{\LGp}{\omega}$.

\begin{lemma}
  \label{th:diagonal_folding_property}
  For a strictly stationary time series $\TSR{\Yz{t}}{t\in\ZZ}{}$ and
  a point~\mbox{$\LGp=\LGpoint$}, the following symmetry property
  (diagonal folding) holds for the local Gaussian autocorrelation,
  \begin{align}
    \label{eq:th:diagonal_folding_property}
    \lgacr[p]{\LGp}{-h} = \lgacr[p]{\LGpd}{h},
  \end{align}
  where \mbox{$\LGpd=\parenR{\LGpi{2},\LGpi{1}}$} is the diagonal
  reflection of $\LGp$.
\end{lemma}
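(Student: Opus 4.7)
The plan is to exploit two symmetries: the swap symmetry of the bivariate joint distribution coming from strict stationarity, and the swap symmetry of the approximating Gaussian density and the kernel. Combining them should show that the penalty functional defining $\lgacr[p]{\LGp}{-h}$ is identical, term by term, to the one defining $\lgacr[p]{\LGpd}{h}$, from which equality of the $\rho$-components of the respective minimisers follows.

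First I would note that, by strict stationarity (\myref{assumption_Yt}{assumption_Yt_strictly_stationary}), the bivariate lag-$(-h)$ density satisfies
\begin{align*}
  g_{-h}(y_1,y_0) = g_h(y_0,y_1),
\end{align*}
and, after the marginal normalisation, the same relation holds for the density of the pair $(Z_{t-h},Z_t)$, whose local Gaussian correlation at $\LGp$ is by definition $\lgacr[p]{\LGp}{-h}$. Second, the approximating bivariate Gaussian density $\psi(\bm{w};\bm{\theta})$ defined in \cref{eq:psi_density}, with parameter vector $\bm{\theta}=(\mu_1,\mu_2,\sigma_1,\sigma_2,\rho)$, satisfies
\begin{align*}
  \psi\!\parenR{(w_1,w_2);(\mu_1,\mu_2,\sigma_1,\sigma_2,\rho)}
  = \psi\!\parenR{(w_2,w_1);(\mu_2,\mu_1,\sigma_2,\sigma_1,\rho)},
\end{align*}
because the quadratic form in the exponent is invariant under the simultaneous swap of the two coordinates and of the corresponding marginal parameters, while the correlation parameter $\rho$ is untouched. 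Third, the product-kernel $K(\bm{w})$ used in \cref{def:kernel} is symmetric under coordinate swap, so $\Khb[\bm{w}-\LGp]{}{\bm{b}}=\Khb[\bm{w}'-\LGpd]{}{\bm{b}}$ where $\bm{w}'=(w_2,w_1)$ (for a bandwidth $\bm{b}=(b,b)$, which is the natural choice when the two marginals are identical after normalisation).

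With these three ingredients I would carry out the change of variables $\bm{w}\mapsto\bm{w}'=(w_2,w_1)$ in the locally weighted Kullback--Leibler functional $\qz{\bm{b}}$ from \cref{eq:locally_weighted_Kullback_Leibler_intro} written for the lag-$(-h)$ density $g_{-h}$ at $\LGp$ with parameter $\bm{\theta}=(\mu_1,\mu_2,\sigma_1,\sigma_2,\rho)$. Under the substitution the integrand transforms to the integrand of the analogous functional for $g_h$ at $\LGpd$ with the swapped parameter vector $\bm{\theta}'=(\mu_2,\mu_1,\sigma_2,\sigma_1,\rho)$, and the Jacobian is one. Hence the two penalty functionals agree as functions of the pairs $(\bm{\theta},\bm{\theta}')$ related by coordinate swap, so the unique minimiser in one corresponds to the unique minimiser in the other via the same swap. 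Passing to the bandwidth limit $\blimit$ in \cref{eq:theta0_limit} preserves the identification, and reading off the fifth component gives $\lgacr[p]{\LGp}{-h}=\lgacr[p]{\LGpd}{h}$.

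The main obstacle, such as it is, is bookkeeping: making sure the swap symmetry of $\psi$ with swapped parameters is matched exactly by the swap symmetry of the kernel evaluated at $\LGp$ versus $\LGpd$, and checking that the normalisation step (which uses the marginal cdf $G$ of $Y_t$, a single function because $Y_t$ is stationary) does not introduce any asymmetry between the two coordinates. Once that is verified, the argument is essentially a substitution in a single integral.
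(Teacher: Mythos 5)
Your proposal is correct and follows essentially the same route as the paper's own proof: both derive the density symmetry $g_{-h}(y_1,y_0)=g_h(y_0,y_1)$ from stationarity and conclude that the local Gaussian parameter vectors at $\LGp$ (for lag $-h$) and at $\LGpd$ (for lag $h$) are related by the coordinate swap $(\mu_1,\mu_2,\sigma_1,\sigma_2,\rho)\mapsto(\mu_2,\mu_1,\sigma_2,\sigma_1,\rho)$, which leaves $\rho$ unchanged. The paper merely asserts that the density symmetry \emph{induces} this parameter symmetry, whereas you verify it explicitly via the change of variables in the Kullback--Leibler penalty functional; your caveat about needing $b_1=b_2$ (or, equivalently, pairing the bandwidth $(b_1,b_2)$ at $\LGp$ with the swapped bandwidth $(b_2,b_1)$ at $\LGpd$) is immaterial in the end, since the local Gaussian correlation is defined through the limit $\bm{b}\rightarrow\bm{0}^{+}$.
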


\begin{proof}
  This is a simple consequence of the symmetrical nature of the
  bivariate random variables
  \mbox{$\Yht{h}{t}\defeq\left(\Yz{h},\Yz{0}\right)$} and
  \mbox{$\Yht{-h}{t}\defeq\left(\Yz{-h},\Yz{0}\right)$}, which due to
  the connection between the corresponding cumulative density
  functions
  \begin{align}
    \nonumber
    \label{eq:G_diagonal_folding_property}
    \Gz{\!-h}\!\left(\yz{-h},\yz{0} \right)
    &= \Prob{\Yz{-h} \leq \yz{-h}, \Yz{0} \leq \yz{0}} 
      = \Prob{\Yz{0} \leq \yz{0}, \Yz{-h} \leq \yz{-h}}  
      = \Prob{\Yz{h} \leq \yz{0}, \Yz{0} \leq \yz{-h}} \\
    &= \Gz{h}\!\left(\yz{0},\yz{-h} \right)
  \end{align}
  gives the following property\footnote{    This must not be confused with the property that $\gz{h}$ and
    $\gz{-h}$ themselves are symmetric around the diagonal, for that
    will in general not be the case.}   for the probability density functions,
  \begin{align}
    \gz{\!-h}\!\left(\yz{-h},\yz{0} \right) = \gz{h}\!\left(\yz{0},\yz{-h} \right).
  \end{align}

  This implies that \mbox{$\gz{\!-h}(\LGp) = \gz{h}(\LGpd)$}, and the
  symmetry does moreover induce a symmetrical relation between the
  parameters $\bmthetaz{-h}\!\left(\LGp\right)$ of the local Gaussian
  approximation of $\gz{-h}$ at $\LGp$ and the parameters
  $\bmthetaz{h}\!\left(\LGpd\right)$ of the local Gaussian
  approximation of $\gz{h}$ at~$\LGpd$, i.e.\ if
  \mbox{$\bmthetaz{-h}\!\left(\LGp\right)=\Vector[\prime]{\muz{1},\muz{2},\sigmaz{11},\sigmaz{22},\rho}$}
  then
  \mbox{$\bmthetaz{h}\!\left(\LGpd\right)=\Vector[\prime]{\muz{2},\muz{1},\sigmaz{22},\sigmaz{11},\rho}$}.
    \Cref{eq:th:diagonal_folding_property} follows since $\rho$ in these
  two vectors respectively represents $\lgacr[p]{\LGp}{-h}$ and
  $\lgacr[p]{\LGpd}{h}$, and this completes the proof.
\end{proof}

  A trivial consequence of the \textit{diagonal folding property} in
  \cref{th:diagonal_folding_property} is that the local Gaussian
  autocorrelation becomes an even function of the lag~$h$ when
  $\LGpdiagonal$.

\makeatletter{}
\subsection{Time-reversible time series}
\label{app:_time_reversible}

Additional symmetry properties are present for time reversible time
series, which implies that the local Gaussian spectral densities
$\lgsd[p]{\LGp}{\omega}$ always are real-valued for such time series,
see \cref{def:Yt_reversible,th:asymptotics_for_hatlgsd_reversible}.

The following simple result follows immediately from
\cref{def:Yt_reversible}.
\begin{lemma}
  \label{th:Yt_time_reversible}
  If $\TSR{\Yz{t}}{t\in\ZZ}{}$ is time reversible, then  
  \begin{align}
    \label{eq:def:Yt_reversible}
    \gh[\LGpi{1},\LGpi{2}]{h} =  \gh[\LGpi{2},\LGpi{1}]{h}
  \end{align}
  for all points \mbox{$\LGp=\LGpoint \in \RRn{2}$} and all
  \mbox{$h\in\NN$}, which implies 
  \begin{align}
    \lgacr[p]{\LGp}{-h} = \lgacr[p]{\LGp}{h}.
  \end{align}
\end{lemma}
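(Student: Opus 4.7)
The plan is to derive the density symmetry first, treating it as the substantive content, and then deduce the correlation identity by combining it with the diagonal folding property (\cref{th:diagonal_folding_property}) that was just established. The key observation is that time reversibility combined with strict stationarity (\myref{assumption_Yt}{assumption_Yt_strictly_stationary}, which is implicit in the strict-stationarity requirement already imposed on $\TSR{\Yz{t}}{t\in\ZZ}{}$) swaps the roles of the two coordinates of $\Yht{h}{t}$, and so the bivariate density $\gh[\yh{h}]{h}$ must be invariant under reflection across the diagonal $\yz{h}=\yz{0}$.

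For the first claim, I would apply \cref{def:Yt_reversible} with $n=2$ and $(\tz{1},\tz{2})=(h,0)$ to get $\parenR{\Yz{h},\Yz{0}} \stackrel{\scriptscriptstyle d}{=} \parenR{\Yz{-h},\Yz{0}}$. Then invoking strict stationarity to shift the right-hand side by $h$ gives $\parenR{\Yz{-h},\Yz{0}} \stackrel{\scriptscriptstyle d}{=} \parenR{\Yz{0},\Yz{h}}$, which in combination yields $\parenR{\Yz{h},\Yz{0}} \stackrel{\scriptscriptstyle d}{=} \parenR{\Yz{0},\Yz{h}}$. Passing to the joint densities, this distributional equality is exactly $\gh[\LGpi{1},\LGpi{2}]{h}=\gh[\LGpi{2},\LGpi{1}]{h}$ for every $\LGp\in\RRn{2}$, which is the symmetry statement.

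For the second claim, I would combine the just-obtained symmetry $\gh[\yh{h}]{h}=\gh[\yh{h}^{\text{sw}}]{h}$ (where ``$\text{sw}$'' denotes coordinate swap) with the construction of the local Gaussian approximation at $\LGpd$. By the same swap-of-variables argument that drove the proof of \cref{th:diagonal_folding_property}, the local Gaussian approximation of $\gh[\yh{h}]{h}$ at $\LGpd=\parenR{\LGpi{2},\LGpi{1}}$ has parameter vector obtained from that at $\LGp$ by interchanging $\muz{1}\leftrightarrow\muz{2}$ and $\sigmaz{11}\leftrightarrow\sigmaz{22}$ while leaving the correlation parameter $\rho$ fixed. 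Hence $\lgacr[p]{\LGpd}{h}=\lgacr[p]{\LGp}{h}$. Finally, chaining this with \cref{th:diagonal_folding_property}, which states $\lgacr[p]{\LGp}{-h}=\lgacr[p]{\LGpd}{h}$, gives the desired $\lgacr[p]{\LGp}{-h}=\lgacr[p]{\LGp}{h}$.

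The proof is essentially a bookkeeping exercise, and the only genuine subtlety is ensuring that invariance of $\gh[\yh{h}]{h}$ under coordinate swap really forces the population local-Gaussian parameter $\rho$ to be invariant under the same swap of $\LGp$ with $\LGpd$; this rests on the uniqueness of the minimiser of the penalty function $\qh{h}{b}$ in \cref{eq:penalty_qh} guaranteed by \myref{assumption_Yt}{assumption:gh_bh0}, combined with the fact that the Gaussian density $\psi\!\left(\yh{h};\thetah{h}\right)$ is itself invariant under the simultaneous swap of coordinates and of mean/variance components while $\rho$ is held fixed. I anticipate that writing this uniqueness step carefully is the main, though still modest, obstacle.
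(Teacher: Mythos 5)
Your proof is correct and follows essentially the same route as the paper: both derive the diagonal symmetry of $\gh{h}$ from time reversibility of the pair $\parenR{\Yz{h},\Yz{0}}$ combined with strict stationarity (which you apply as an explicit shift and the paper packages into the cdf identity behind \cref{eq:G_diagonal_folding_property}), and both then obtain the autocorrelation identity by the same swap-of-parameters reasoning used in the proof of \cref{th:diagonal_folding_property}. Your extra remark on the uniqueness of the minimiser of $\qh{h}{b}$ makes explicit a point the paper leaves implicit, which is a welcome refinement rather than a deviation.
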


\begin{proof}
  The time reversibility of $\TSR{\Yz{t}}{t\in\ZZ}{}$ implies that
  \mbox{$\parenR{\Yz{h},\Yz{0}}$} and \mbox{$\parenR{\Yz{-h},\Yz{0}}$}
  have the same joint distribution, i.e.\
  \begin{align}
    \nonumber
    \Gz{\!-h}\!\left(\yz{-h},\yz{0} \right)
    &= \Prob{\Yz{-h} \leq \yz{-h}, \Yz{0} \leq \yz{0}} 
      = \Prob{\Yz{h} \leq \yz{-h}, \Yz{0} \leq \yz{0}} 
      = \Gz{h}\!\left(\yz{-h},\yz{0} \right).
  \end{align}
  Together with the observation in
  \cref{eq:G_diagonal_folding_property}, this gives the diagonal
  symmetry stated in \cref{eq:def:Yt_reversible}.  The statement for
  the local Gaussian autocorrelations follows by the same reasoning as
  in the proof of \cref{th:diagonal_folding_property}.
\end{proof}

\makeatletter{}

\subsection{Two limit theorems --- and one comment}
\label{app:limit_theorems}

This section contains two lemmas and one comment.
\Cref{th:block_sizes_for_main_result} combines a check of the internal
consistency of \cref{assumption_Nmb} with the limits needed for the
small block-large block argument in \cref{th:AN_of_QNh_and_QNM},
whereas \cref{th:k_N_limit_for_off_diagonal_components} takes care of
the two limits needed in order to prove that the \textit{off the
  diagonal components} in \cref{th:convergent_covariance_ZNMt} are
asymptotically negligible.  The comment at the end of this section has
been included due to the remark at the end of
\cref{pR2bp2_number_of_points_in_the_window_main} in the main
document.

\begin{lemma}
  \label{th:block_sizes_for_main_result}
  Under \cref{assumption_Nmb}, the following holds.
  \begin{enumerate}[label=(\alph*)]
  \item     \label{th:block_sizes_for_main_result_M_and_awful_exponent}
        There exists integers~$s$ that makes
    \cref{eq:assumption_m=o((Nb1b2)^tau/(2+5tau)-lambda),eq:assumption_m=o(s)}
    of \cref{assumption_Nmb} compatible.
      \item     \label{th:block_sizes_for_main_result_existence_c}
        There exists integers $s$ and constants \mbox{$c \defeq
      \cNi{n}\rightarrow\infty$}, such that
    \begin{align}
      \label{eq:cn_properties}
      c\cdot s = \oh{\sqrt{n\bz{1}\bz{2}/m}},       \qquad                         \sqrt{nm/\bz{1}\bz{2}}\cdot c\cdot \alpha(s-m+1)
      \longrightarrow 0.
    \end{align}
      \item     \label{th:block_sizes_for_main_result_r_ell_limits}
        There exists integers $s$ and constants~$c$, such that with $r$,
    $\ell$ and~$\vartheta$ given as the integers
    \begin{align}  
      \label{eq:definition_r_and_ell}
      r = \rNi{n} \defeq \floor{\frac{\sqrt{n\bz{1}\bz{2}/m}}{c}},       \qquad       \ell = \ellNi{n} \defeq \floor{\frac{n}{r + s}},       \qquad       \vartheta = \varthetaz{\!n} \defeq s-m+1,
    \end{align}
        the following limits occur when \mbox{$n\rightarrow\infty$}:
    \begin{align}
      \label{eq:five_limits}
      \frac{s}{r} \longrightarrow0;\qquad       \ell \alpha(\vartheta) \longrightarrow 0; \qquad       \frac{mr}{n} \longrightarrow 0;\qquad       \frac{mr}{\ell\bz{1}\bz{2}} \longrightarrow 0; \qquad       \frac{m\ell s}{n} \longrightarrow 0.
    \end{align}
  \end{enumerate}
\end{lemma}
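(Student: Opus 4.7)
The plan is to treat the three parts of the lemma as a single coupled construction: choose explicit sequences $s = s_n$ and $c = c_n$ that simultaneously satisfy the compatibility needs of (a), the two bounds of (b), and the five limits of (c). The one analytical input beyond \cref{assumption_Nmb} is the polynomial decay of the strong mixing coefficients implied by \myref{assumption_Yt}{assumption_Yt_strong_mixing}: summability of $j^{a}\alpha(j)^{1-2/\nu}$ forces the general term to zero, so $\alpha(j) = o(j^{-\beta})$ with $\beta \defeq a/(1-2/\nu) > 1$. Every bound on $\alpha$ in what follows is derived from this single polynomial estimate, and all the rest is arithmetic bookkeeping against the rate conditions in \cref{assumption_Nmb}.

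Part (a) is essentially a compatibility check: \myref{assumption_Nmb}{eq:assumption_m=o((Nb1b2)^tau/(2+5tau)-lambda)} is a constraint on $m$ alone while \myref{assumption_Nmb}{eq:assumption_m=o(s)} only asks $s$ to outgrow $m$, so any $s_n$ with $s_n/m \to \infty$ works in isolation; I would however postpone pinning down $s_n$ precisely because tighter demands come from (b) and (c). For part (b) I would work backwards from the two required bounds in \cref{eq:cn_properties}: the polynomial decay of $\alpha$ turns $\sqrt{nm/b_1b_2}\cdot c \cdot \alpha(s-m+1) \to 0$ into a lower bound $s \gg (c\sqrt{nm/b_1b_2})^{1/\beta}$, while $cs = o(\sqrt{nb_1b_2/m})$ is an upper bound on $s$. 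Provided the interval between these two bounds is nonempty for slowly diverging $c_n$ (for instance $c_n = \log\log n$), I would pick $s_n$ inside it, for example as a suitable small power of $nb_1b_2$ calibrated to sit strictly between the two bounds.

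Part (c) carries the bulk of the verification. Substituting $r \asymp \sqrt{nb_1b_2/m}/c$ and $\ell \asymp n/r$ (where the latter uses $s = o(r)$, a by-product of the upper bound in (b)), the five limits in \cref{eq:five_limits} reduce, in order, to: $s/r \to 0$, immediate from (b); $\ell\alpha(\vartheta) \to 0$, immediate from the decay bound on $\alpha(s-m+1)$ combined with $\ell = O(n/r)$; $mr/n \asymp \sqrt{mb_1b_2/n}/c \to 0$, from \myref{assumption_Nmb}{eq:assumption_Nb1b2/m}; $mr^{2}/(nb_1b_2) \asymp 1/c^{2} \to 0$, from $c_n \to \infty$; and finally $ms/r \to 0$. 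The last of these is the tightest of the five, reducing to $c s m^{3/2} = o(\sqrt{nb_1b_2})$.

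The main obstacle, as I anticipate it, is precisely this fifth limit, because it forces a tighter upper bound on $s$ than the one used for (b) and it is the place where the awkward exponent $\tau/(2+5\tau) - \lambda$ in \myref{assumption_Nmb}{eq:assumption_m=o((Nb1b2)^tau/(2+5tau)-lambda)} enters decisively. Since $\tau/(2+5\tau) \to 1/5$ as $\tau \to \infty$, the threshold matches the scaling $m^{5}c^{2} = o(nb_1b_2)$ required by the fifth limit, and the constraint on $m$ is thus calibrated so that a valid $s_n$ does exist in the intersection of the lower bound from (b) and the upper bound $s = o(\sqrt{nb_1b_2/m^{3}}/c)$ from (c). Matching $\tau$ to the polynomial decay exponent $\beta$ of $\alpha$ (via a relation of the form $\tau \asymp 1/(\beta-1)$) closes the loop. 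Once non-emptiness of the admissible interval for $s_n$ is established under the precise exponent of \myref{assumption_Nmb}{eq:assumption_m=o((Nb1b2)^tau/(2+5tau)-lambda)}, picking any $s_n$ inside it and the chosen $c_n$ makes all remaining verifications routine.
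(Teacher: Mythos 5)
Your treatment of part (c) tracks the paper's proof closely: the same reductions of the five limits via $r\asymp\sqrt{n\bz{1}\bz{2}/m}\,/c$ and $\ell\asymp n/r$, the same identification of $m\ell s/n\to 0$ (equivalently $ms/r\to 0$) as the binding constraint, and the same recognition that the exponent in \myref{assumption_Nmb}{eq:assumption_m=o((Nb1b2)^tau/(2+5tau)-lambda)} is calibrated against precisely that term. The paper makes this concrete by taking $s\asymp\parenR{\sqrt{n\bz{1}\bz{2}/m}}^{1-\eta}$ and $c\asymp\parenR{\sqrt{n\bz{1}\bz{2}/m}}^{\eta/2}$ and tuning $\eta$ so that $\eta/(4+\eta)$ lands in $[\tau/(2+5\tau)-\lambda,\ \tau/(2+5\tau))$. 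For part (a), be aware that the compatibility being checked is between $m=\oh{s}$ and $s=\oh{\sqrt{n\bz{1}\bz{2}/m}}$ \emph{jointly}, which forces $m=\oh{(n\bz{1}\bz{2})^{1/3}}$; the content of the paper's part (a) is the verification that $\tau/(2+5\tau)-\lambda<1/5<1/3$, not merely that $s$ can be chosen to outgrow $m$.

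The genuine gap is in part (b). You attempt to derive $\sqrt{nm/\bz{1}\bz{2}}\cdot c\cdot\alpha(s-m+1)\to 0$ from the summability in \myref{assumption_Yt}{assumption_Yt_strong_mixing}, via $\alpha(j)=\oh{j^{-\beta}}$ with $\beta=a/(1-2/\nu)>1$, and then seek $s$ between the resulting lower bound $\parenR{c\sqrt{nm/\bz{1}\bz{2}}}^{1/\beta}$ and the upper bound $\sqrt{n\bz{1}\bz{2}/m}\,/c$. That interval can be empty: the assumptions only give $\beta>1$, and when $a$ is barely above $1-2/\nu$ the non-emptiness condition degenerates (up to the slowly varying $c$) to $nm/\bz{1}\bz{2}\ll n\bz{1}\bz{2}/m$, i.e.\ $m\ll\bz{1}\bz{2}$, which is impossible since $m\to\infty$ while $\bm{b}\to\bm{0}^{+}$; even for fixed $\beta$ slightly above $1$ the requirement $\parenR{m/\bz{1}\bz{2}}^{2}\ll\parenR{n\bz{1}\bz{2}/m}^{\beta-1}$ fails under natural rate regimes. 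The summability of $j^{a}\parenS{\alpha(j)}^{1-2/\nu}$ is simply too weak to control the mixing term at the required block scale, and your closing step of ``matching $\tau$ to $\beta$'' is not something that can be proved — it is exactly the content of \myref{assumption_Nmb}{eq:assumption_alpha(s)_and_(N/b1b2)^1/2}, which directly assumes $\sqrt{nm/\bz{1}\bz{2}}\cdot s^{\tau}\cdot\alpha(s-m+1)\to 0$ (the displayed ``$\to\infty$'' there is evidently a typo, as the paper's own proof of (b) only makes sense with ``$\to 0$''). The paper's argument for (b) is then just the observation that $c\asymp s^{\eta/(2(1-\eta))}$ with $\eta/(2(1-\eta))<\tau$ is eventually dominated by $s^{\tau}$, so the assumed limit transfers from $s^{\tau}$ to $c$. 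To repair your proof, replace the polynomial-decay derivation by a direct appeal to that assumption.
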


\begin{proof}
  \Cref{th:block_sizes_for_main_result_M_and_awful_exponent} will be
  established by first observing that it is possible to find
  integers~$s$ that ensures that
  \myref{assumption_Nmb}{eq:assumption_m=o(s)} is compatible with the
  requirement~\mbox{$m=\oh{\parenRz[\xi]{}{n\bz{1}\bz{2}}}$},
  for any~\mbox{$\xi\in\left(0,\tfrac{1}{3}\right)$}, and then
  checking that the exponent~\mbox{$\tau/(2+5\tau)-\lambda$} lies in
  this~interval.

  Observe that it is impossible to have~\mbox{$m=\oh{s}$}
  and~\mbox{$s=\oh{\sqrt{n\bz{1}\bz{2}/m}}$}
  when~\mbox{$m\geq \sqrt{n\bz{1}\bz{2}/m}$}, which implies
  \mbox{$m<\sqrt{n\bz{1}\bz{2}/m}$}, which is equivalent
  to~\mbox{$m<\parenRz[1/3]{}{n\bz{1}\bz{2}}$}.  Some extra leeway is
  needed in order to construct the desired integers~$s$, so consider
  the requirement
  \begin{align}
    m &= \oh{\parenRz[1/3-\zeta]{}{n\bz{1}\bz{2}}},         \qquad \text{for some } \zeta \in \left(0,\tfrac{1}{3}\right).
  \end{align}
    Define the integers~$s$ by~\mbox{$s\defeq m\cdot\mathfrak{s}$},
  where~\mbox{$\mathfrak{s} \defeq 1\vee
    \floor{\parenRz[\zeta/2]{}{n\bz{1}\bz{2}}}$}, and note that this
  construction ensures that~$s$ goes to~$\infty$.  Further,
  \mbox{$m=\oh{s}$} holds
  since~\mbox{$m/s = 1/\mathfrak{s} \rightarrow 0$},
  and~\mbox{$s=\oh{\sqrt{n\bz{1}\bz{2}/m}}$} holds since
  \begin{align}
    \nonumber
    \frac{s}{\sqrt{n\bz{1}\bz{2}/m}}     &\asymp       \frac{m\cdot\parenRz[\zeta/2]{}{n\bz{1}\bz{2}}}{      \parenRz[1/2]{}{n\bz{1}\bz{2}/m}}       = \frac{\mz[3/2]{}}{\parenRz[(1-\zeta)/2]{}{n\bz{1}\bz{2}}}           = \parenSz[3/2]{}{\frac{m}{
      \parenRz[(1-\zeta)/3]{}{n\bz{1}\bz{2}}}} \\
        &= \parenSz[3/2]{}{\frac{1}{      \parenRz[2\zeta/3]{}{n\bz{1}\bz{2}}}       \cdot \frac{m}{      \parenRz[1/3-\zeta]{}{n\bz{1}\bz{2}}}}       \rightarrow \parenSz[3/2]{}{\frac{1}{\infty}\cdot 0}       = 0.
  \end{align}
  This implies that the desired integers~$s$ can be found
  whenever~\mbox{$m=\oh{\parenRz[\xi]{}{n\bz{1}\bz{2}}}$},
  with~\mbox{$\xi\in\left(0,\tfrac{1}{3}\right)$}.  Since the value
  of~\mbox{$\tau/(2+5\tau)-\lambda$} lies in the
  interval~\mbox{$\left(0,\tfrac{1}{5}\right)$}, the proof
  of~\cref{th:block_sizes_for_main_result_M_and_awful_exponent}
  is~complete.

  For~\cref{th:block_sizes_for_main_result_existence_c,th:block_sizes_for_main_result_r_ell_limits},
  the integers~$s$ and constants~$c$ can e.g.\ be defined as
  \begin{align}
    \label{eq:definition_of_s_and_c}
    s =     1\vee\floor{    \parenRz[1-\eta]{}{\sqrt{n\bz{1}\bz{2}/m}}},     \qquad                 c =     \parenRz[\eta/2]{}{\sqrt{n\bz{1}\bz{2}/m}},     \quad     \text{for some \mbox{$\eta\in(0,1)$}.}
  \end{align}
  Since \mbox{$1-\eta$} and $\eta/2$ are in \mbox{$(0,1)$}, it follows
  from \myref{assumption_Nmb}{eq:assumption_Nb1b2/m} that~$s$ and~$c$
  goes to~$\infty$ as~required.  A quick inspection reveals that the
  product~\mbox{$c\cdot s$} is $\oh{\sqrt{n\bz{1}\bz{2}/m}}$, proving
  the first part of \cref{eq:cn_properties}.  For the second part of
  \cref{eq:cn_properties}, keep in mind the similarity with
  \myref{assumption_Nmb}{eq:assumption_alpha(s)_and_(N/b1b2)^1/2},
  and observe that $c$ in the limit is asymptotically equivalent to
  $\sz[\eta/2(1-\eta)]{}$.  Since $\eta$ can be selected such
  that the exponent \mbox{$\eta/2(1-\eta)$} becomes smaller than any
  \mbox{$\tau>0$}, the second statement holds too, which completes the
  proof of~\cref{th:block_sizes_for_main_result_existence_c}.

  In order to
  prove~\cref{th:block_sizes_for_main_result_r_ell_limits}, note that
  a floor-function~\mbox{$\floor{x}$} in a denominator can be ignored
  in the limit \mbox{$x\rightarrow\infty$},
  since~\mbox{$x\asymp\floor{x}$}, that
  is~\mbox{$\lim x/\floor{x} = 1$}.  Moreover, observe that
  \myref{assumption_Nmb}{eq:assumption_Nb1b2/m} implies that $n/m$
  goes to~$\infty$.  With these observations, all except the last
  limit in \cref{eq:five_limits} are trivial to~prove,~i.e.\
    \begin{subequations}
    \begin{align}
      \frac{s}{r}       &\asymp \frac{s}{\frac{\sqrt{n\bz{1}\bz{2}/m}}{c}}         = \frac{c\cdot s}{\sqrt{n\bz{1}\bz{2}/m}}         \rightarrow 0, \\
              \ell \alpha(\vartheta)       &\leq \frac{n}{r+s} \alpha(\vartheta)         \asymp \frac{n}{r} \alpha(\vartheta)         \asymp \frac{n}{\frac{\sqrt{n\bz{1}\bz{2}/m}}{c}}
        \alpha(\vartheta)         = \sqrt{nm/\bz{1}\bz{2}} \cdot c\cdot \alpha(\vartheta)         \rightarrow 0, \\
              \frac{mr}{n}       &\leq \frac{\frac{\sqrt{n\bz{1}\bz{2}/m}}{c}}{n/m}         = \frac{\sqrt{\bz{1}\bz{2}}}{c\sqrt{n/m}}         \rightarrow \frac{0}{\infty\cdot\infty}         = 0, \\
              \frac{mr}{\ell\bz{1}\bz{2}}       &\asymp \frac{mr}{\frac{n}{r+s}\bz{1}\bz{2}}         =  \frac{r(r+s)}{n\bz{1}\bz{2}/m}         \asymp \frac{\rz[2]{}}{n\bz{1}\bz{2}/m}         \leq \frac{\frac{n\bz{1}\bz{2}/m}{        \cz[2]{}}}{n\bz{1}\bz{2}/m}         = \frac{1}{\cz[2]{}}
        \rightarrow 0.
    \end{align}
  \end{subequations}

  For the proof of~\mbox{$m\ell s/n\rightarrow 0$}, the explicit
  expressions for~$s$ and~$c$ from \cref{eq:definition_of_s_and_c}
  will be needed, i.e.\
  \begin{align}
    \nonumber
    \frac{m\ell s}{n}     &\leq \frac{m \frac{n}{r+s} s}{n}       = m\frac{s}{r+s}       \asymp m\frac{s}{r}       \asymp m\frac{c\cdot s}{\sqrt{n\bz{1}\bz{2}/m}}       \leq m\frac{      \parenRz[1-\eta/2]{}{\sqrt{n\bz{1}\bz{2}/m}}}{      \sqrt{n\bz{1}\bz{2}/m}}  \\
        \label{eq:Mls/N_requirement}
    &= \frac{m}{\parenRz[\eta/4]{}{n\bz{1}\bz{2}/m}}       = \frac{\mz[1+\eta/4]{}}{      \parenRz[\eta/4]{}{n\bz{1}\bz{2}}}       = \parenRz[(4+\eta)/4]{}{\frac{m}{      \parenRz[\eta/(4+\eta)]{}{n\bz{1}\bz{2}}}}      \!\!\!\!\!\!\!\!\!\!\!\!\!\!\!.
  \end{align}
    \Myref{assumption_Nmb}{eq:assumption_m=o((Nb1b2)^tau/(2+5tau)-lambda)}
  states
  that~\mbox{$m =
    \oh{\parenRz[\tau/(2+5\tau)-\lambda]{}{n\bz{1}\bz{2}}}$},
  and it is consequently sufficient to show that an~$\eta$ can be
  found which
  gives~\mbox{$\tau/(2+5\tau)-\lambda \leq p(\eta) \defeq
    \eta/(4+\eta)$}.  Since
  \mbox{$\pz[\prime]{}(\eta) =
    4/\parenRz[2]{}{4+\eta} > 0$}, the highest value
  of~$p(\eta)$ will be found at the upper end of the interval of
  available arguments.  From the proof
  of~\cref{th:block_sizes_for_main_result_existence_c} it is known
  that   \mbox{$\eta/2(1-\eta)<\tau$}, which gives the
  requirement~\mbox{$\eta < 2\tau/(1+2\tau)$}.      The value of~$p(\eta)$ at the upper end of this interval
  is~\mbox{$\tau/(2+5\tau)$}, and since~\mbox{$\lambda>0$} it is 
  possible to find an~$\eta$ that
  satisfies~{$\tau/(2+5\tau)-\lambda\leq
    p(\eta)<\tau/(2+5\tau)$}, which concludes the proof.
\end{proof}

\begin{lemma}
  \label{th:k_N_limit_for_off_diagonal_components}
  Under \cref{assumption_Nmb}, the sequence of integers defined by
  $\kz{n} + 1 \defeq \ceil{\mz[2/a]{} \cdot
    \absp[(2-\nu)/a\nu]{\bz{1}\bz{2}} }$ satisfies the following two
  limit requirements.
  \begin{enumerate}[label=(\alph*)]
  \item     \label{th:k_N_limit_for_off_diagonal_components__infty_limit}
    $\kz{n} \longrightarrow \infty$.
  \item     \label{th:k_N_limit_for_off_diagonal_components__zero_limit}
    $\kz{n} \mz[2]{} \bz{1}\bz{2} \longrightarrow 0$.
  \end{enumerate}
\end{lemma}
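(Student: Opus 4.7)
\medskip

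\noindent\textbf{Proof plan for \cref{th:k_N_limit_for_off_diagonal_components}.} The plan is to deduce both limits directly from the explicit expression for $k_n$ by bounding the ceiling from above and below and then matching against \cref{assumption_Nmb}. For \cref{th:k_N_limit_for_off_diagonal_components__infty_limit}, I would first observe that the exponent $(2-\nu)/(a\nu)$ is strictly negative (since $\nu>2$ and $a,\nu>0$), so that $\absp[(2-\nu)/(a\nu)]{\bz{1}\bz{2}}\to\infty$ as $\bm{b}\to\bmzeroz[+]{}$ by virtue of \myref{assumption_Nmb}{eq:assumption_N_and_b1b2} (which forces $\bz{1}\bz{2}\to 0$), and combined with $m\to\infty$ from \cref{assumption_Nmb} this gives $\mz[2/a]{}\cdot\absp[(2-\nu)/(a\nu)]{\bz{1}\bz{2}}\to\infty$, hence $\kz{n}+1\to\infty$ and thus $\kz{n}\to\infty$.

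For \cref{th:k_N_limit_for_off_diagonal_components__zero_limit}, the first step is the ceiling inequality $\kz{n}+1\leq \mz[2/a]{}\cdot\absp[(2-\nu)/(a\nu)]{\bz{1}\bz{2}}+1$, which after multiplying by $\mz[2]{}\bz{1}\bz{2}$ yields
\begin{align*}
  \kz{n}\mz[2]{}\bz{1}\bz{2}
  &\leq \mz[2(a+1)/a]{}\cdot\absp[(\nu(a-1)+2)/(a\nu)]{\bz{1}\bz{2}}
    + \mz[2]{}\bz{1}\bz{2}.
\end{align*}
The second summand $\mz[2]{}\bz{1}\bz{2}\leq \mz[2]{}\parenRz[2]{}{\bz{1}\vee\bz{2}}=\parenRz[2]{}{\mz{}\parenR{\bz{1}\vee\bz{2}}}$ is $\oh{1}$ by \myref{assumption_Nmb}{eq:assumption_m(b1 join b2)} (since $\delta\geq 2$).

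The harder part is the first summand, and this is where I expect the work to be. The idea is to factor it as a power of the standard ``scale product'' $\mz[x]{}\parenR{\bz{1}\vee\bz{2}}$ controlled by \myref{assumption_Nmb}{eq:assumption_m(b1 join b2)}. Since $\bz{1}\bz{2}\leq\parenRz[2]{}{\bz{1}\vee\bz{2}}$ and the exponent $(\nu(a-1)+2)/(a\nu)$ is positive (as $a>1-2/\nu$ forces $\nu(a-1)+2>0$), one obtains
\begin{align*}
  \mz[2(a+1)/a]{}\cdot\absp[(\nu(a-1)+2)/(a\nu)]{\bz{1}\bz{2}}
  \leq \parenSz[2(\nu(a-1)+2)/(a\nu)]{}{\mz[\nu(a+1)/(\nu(a-1)+2)]{}\cdot\parenR{\bz{1}\vee\bz{2}}},
\end{align*}
the exponents having been chosen so that the two powers of $m$ match. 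The remaining task is then to verify that $\mz[\nu(a+1)/(\nu(a-1)+2)]{}\parenR{\bz{1}\vee\bz{2}}\to 0$, which follows from \myref{assumption_Nmb}{eq:assumption_m(b1 join b2)} because the exponent $\nu(a+1)/(\nu(a-1)+2)$ is strictly dominated by $\delta=2\vee\nu(a+1)/(\nu(a-1)-2)$ (trivially by the second term when $\nu(a-1)-2>0$, and otherwise by $\delta=2$ after a brief comparison of the two rational expressions using $a>1-2/\nu$).

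The main obstacle is purely bookkeeping: keeping the signs and relative sizes of the rational exponents straight when $a$ lies close to the boundary $1-2/\nu$, where several of the numerators and denominators change character. Once the exponent comparison above is established, both claims fall out immediately.
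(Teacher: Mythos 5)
Your proof is, step for step, the paper's own. Part (a) is the same sign observation ($2/a>0$ and $(2-\nu)/(a\nu)<0$, so both factors diverge), and in part (b) you use the same ceiling bound, reach the same dominant quantity $m^{2(a+1)/a}\left|\bz{1}\bz{2}\right|^{(\nu(a-1)+2)/(a\nu)}$, replace $\bz{1}\bz{2}$ by $\left(\bz{1}\vee\bz{2}\right)^{2}$, perform the identical factorisation into $\left\{m^{\nu(a+1)/(\nu(a-1)+2)}\left(\bz{1}\vee\bz{2}\right)\right\}^{2(\nu(a-1)+2)/(a\nu)}$ with positive outer exponent, and close by citing \myref{assumption_Nmb}{eq:assumption_m(b1 join b2)}; the extra $m^{2}\bz{1}\bz{2}$ term you carry from the looser ceiling inequality is harmless and correctly disposed of.

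The one place you go beyond the paper is in trying to make the final exponent comparison explicit, and there your parenthetical claim for the case $\nu(a-1)-2\leq0$ does not hold: take $\nu=4$ and $a=1$ (admissible, since $a>1-2/\nu=1/2$); then $\delta=2$ while $\nu(a+1)/(\nu(a-1)+2)=4$, so no ``brief comparison using $a>1-2/\nu$'' can deliver $\nu(a+1)/(\nu(a-1)+2)\leq\delta$. In fairness, the paper's own proof ends by silently invoking the same inequality and carries the same gap; everything is consistent if the $\delta$ of \myref{assumption_Nmb}{eq:assumption_m(b1 join b2)} is read with denominator $\nu(a-1)+2$ rather than $\nu(a-1)-2$ --- which is precisely the exponent your (and the paper's) factorisation produces, and is almost certainly what is intended. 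So the strategy is the paper's and is sound; only the deferred ``brief comparison'' is unavailable with the assumption as literally printed.
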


\begin{proof}
  The key requirements \mbox{$\nu>2$} and \mbox{$a>1-2/\nu$}
  (inherited from \myref{assumption_Yt}{assumption_Yt_strong_mixing})
  ensures that \mbox{$2/a>0$} and \mbox{$(2-\nu)/a\nu<0$}.  As
  $\mlimit$ and $\blimit$ when $\nlimit$, it follows that
  \mbox{$\kz{n}\rightarrow\infty$}, which proves
  \cref{th:k_N_limit_for_off_diagonal_components__infty_limit}.

  For \cref{th:k_N_limit_for_off_diagonal_components__zero_limit},
  observe that \mbox{$\kz{n} = \ceil{      \mz[2/a]{} \cdot \absp[(2-\nu)/a\nu]{\bz{1}\bz{2}} } - 1 <
    \mz[2/a]{} \cdot \absp[(2-\nu)/a\nu]{\bz{1}\bz{2}}$} implies
  \begin{subequations}
    \begin{align}
      \kz{n} \mz[2]{} \bz{1}\bz{2} 
      &<         \left( \mz[2/a]{} \cdot
        \absp[(2-\nu)/a\nu]{\bz{1}\bz{2}} \right)         \cdot  \mz[2]{} \bz{1}\bz{2} \\
      &=         \mz[2(1+1/a)]{} \cdot
        \absp[1+ (2-\nu)/a\nu]{\bz{1}\bz{2}} \\
      &\leq         \mz[2(1+1/a)]{} \cdot
        \absp[1+ (2-\nu)/a\nu]{        \parenRz[2]{}{\bz{1}\vee\bz{2}}} \\
      &=         \parenCz[2(1+ (2-\nu)/a\nu)]{}{         \mz[\parenC{1+1/a}/\parenC{1+(2-\nu)/a\nu}]{} \cdot
        \left(\bz{1}\vee\bz{2}\right)} \\
      &=         \parenCz[2(1+ (2-\nu)/a\nu)]{}{
        \mz[\parenC{\nu(a+1)}/\parenC{\nu(a-1) +2}]{} \cdot
        \left(\bz{1}\vee\bz{2}\right) }.    \end{align}
  \end{subequations}
  An inspection of the outermost exponent reveals
  \begin{align}
    2\cdot\left(1+ \frac{(2-\nu)}{a\nu}\right) =         2\cdot\frac{a - (1-2/\nu)}{a} > 0,
  \end{align}
  which together with \myref{assumption_Nmb}{eq:assumption_m(b1 join
    b2)} concludes the proof of
  \cref{th:k_N_limit_for_off_diagonal_components__zero_limit}.
\end{proof}

\textbf{A comment related to the remark at the end of
  \cref{pR2bp2_number_of_points_in_the_window_main}:} 
\label{pR2bp2_number_of_points_in_the_window_SM}
It is not required for the theoretical investigation, but it might
still be of interest to mention the following observation: Consider
a combination of a given point $\LGp$, a small bandwidth vector
$\bm{b}=\parenR{\bz{1},\bz{2}}$, and a \textit{large} sample of size
$n$ from a univariate time series $\TSR{\Yz{t}}{t\in\ZZ}{}$ that
satisfies \cref{assumption_Yt}.  The number of lag-$h$ pairs in the
vicinity of $\LGp$ will then, for each $h=1,\dotsc,m$, be of order
$n\bz{1}\bz{2}\cdot\gh[\LGp]{h}$ --- and this will, when
$\gh[\LGp]{h}>0$, go to infinity when $\nlimit$ and $\blimit$.

Only a sketch of the argument will be given here, since the
asymptotic theory does not build upon this observation: First select
a $\bm{b}$-dependent region $\mathcalVz{\bm{b}}(\LGp)$ around $\LGp$
to be the \enquote{$\bm{b}$-vicinity of $\LGp$}, i.e.\
$\mathcalVz{\bm{b}}(\LGp)$ should shrink when $\blimit$.  The area
of $\mathcalVz{\bm{b}}(\LGp)$ should be given by some constant
$\mathcal{A}$ times $\bz{1}\bz{2}$.
From a sample of size $n$
there will be a total of $n-h$ lag-$h$ pairs, and the expected
number of those in the region $\mathcalVz{\bm{b}}(\LGp)$ will be
$(n-h)\cdot\iint_{\mathcalVz{\bm{b}}(\LGp)}\gh[\yh{h}]{h}\dyh{h}$.
\Myref{assumption_Yt}{assumption:gh_differentiable_at_LGp} implies
that the bivariate density functions $\gh[\yh{h}]{h}$ are continuous
at $\LGp$, and it is thus clear that both
$\inf_{\yh{h}\in \mathcalVz{\bm{b}}(\LGp)} \gh[\yh{h}]{h}$ and
$\sup_{\yh{h}\in \mathcalVz{\bm{b}}(\LGp)} \gh[\yh{h}]{h}$ go to
$\gh[\LGp]{h}$ when $\blimit$.  The integral
$\iint_{\mathcalVz{\bm{b}}(\LGp)}\gh[\yh{h}]{h}\dyh{h}$ will thus be
of order $\mathcal{A}\cdot\bz{1}\bz{2}\cdot\gh[\LGp]{h}$ when
$\blimit$, and the result follows.

This shows why it even for rather large
samples might be hard to obtain good estimates of the local Gaussian
spectral densities in the tails, where the densities $\gh[\LGp]{h}$
are low.

\subsection{Integrals based on the kernel and the score functions}
\label{app:integrals_kernel_score}

The asymptotic properties of the random variables introduced in
\cref{def:Xht_*Lc,def:ZNht_and_QNh,def:ZNMt_and_QNM} does of course
depend upon the properties of the time series
$\TSR{\Yz{t}}{t\in\ZZ}{}$ upon which they have been defined, but quite
a few of the required properties does in fact only depend upon
$K(\bm{w})$ and~$\Uh[\bm{w}]{hq:\bm{b}}$.
Note that the treatment in this section exploits the property that the
functions~$\Uh[\bm{w}]{hq:\bm{b}}$ all are quadratic polynomials in
the variables $\wz{1}$ and $\wz{2}$, which implies that the
inequalities from \cref{th:integrals:K_w1w2} is sufficient for the
proofs of the asymptotic results given in
\cref{th:integrals_kernel_and_score_components}.  

\begin{lemma}
  \label{th:integrals:K_w1w2}
  For $K(\bm{w})$ from \cref{def:kernel} (page \pageref{def:kernel}),
  and \mbox{$\nu>2$} from
  \myref{assumption_Yt}{assumption_Yt_strong_mixing} (page
  \pageref{assumption_Yt_strong_mixing}), the following~holds:
  \begin{enumerate}[label=(\alph*)]
  \item     \label{eq:int_|Kw_i|}
        $\absp{      \intss{\RRn{2}}{} K\!\left(\wz{1},\wz{2}\right)       \wz[k]{1} \wz[\ell]{2}       \d{\wz{1}}\!\d{\wz{2}}} < \infty, \qquad k, \ell
    \geq 0 \text{ and } k + \ell \leq 5$.
          \item     \label{eq:int_|K2w_i|}
        $\absp{      \intss{\RRn{2}}{} \power[2]{}{K\!\left(\wz{1},\wz{2}\right)}       \wz[k]{1} \wz[\ell]{2}       \d{\wz{1}}\!\d{\wz{2}}} < \infty, \qquad k, \ell
    \geq 0 \text{ and } k + \ell \leq 5$.
  \item     \label{eq:Kw_Lnu}
        $K\!\left(\wz{1},\wz{2}\right)     \wz[k]{1} \wz[\ell]{2} \in \Lp{\nu},     \qquad k, \ell \geq 0 \text{ and } k + \ell \leq 2$.
  \end{enumerate}
\end{lemma}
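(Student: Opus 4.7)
The plan is to leverage the three structural properties of $K(\bm{w})$ listed in \cref{def:kernel}: non-negativity and boundedness of the kernel, boundedness of the marginal polynomial moments $\mathcalKz{1:k}$ and $\mathcalKz{2:\ell}$ for $k,\ell\in\{0,1,2\}$, and most importantly the finiteness condition \cref{eq:kernel_integrals_finite}, namely \mbox{$\intss{\RRn{2}}{} K(\bm{w})\absp{\wz[k]{1}\wz[\ell]{2}}\d{\bm{w}} < \infty$} whenever $k+\ell \leq 2\ceil{\nu}$. The key numerical observation is that \myref{assumption_Yt}{assumption_Yt_strong_mixing} gives $\nu>2$, hence $\ceil{\nu}\geq 3$ and $2\ceil{\nu}\geq 6$, which comfortably covers the degree-five regime appearing in \cref{eq:int_|Kw_i|,eq:int_|K2w_i|}.

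First, for \cref{eq:int_|Kw_i|}: since $k+\ell\leq 5 \leq 2\ceil{\nu}$, the integrand is absolutely integrable by \cref{eq:kernel_integrals_finite}, so the signed integral exists and is finite (absorbing the outer absolute value). Next, for \cref{eq:int_|K2w_i|}, I use that $K$ is bounded by some constant $\mathcal{M}$, so $K(\bm{w})^2 \leq \mathcal{M}\cdot K(\bm{w})$, which reduces the claim directly to \cref{eq:int_|Kw_i|} that was just established.

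For \cref{eq:Kw_Lnu}, the requirement is \mbox{$\intss{\RRn{2}}{}\power[\nu]{}{K(\bm{w})}\absp{\wz[k\nu]{1}\wz[\ell\nu]{2}}\d{\bm{w}}<\infty$}. Using $\power[\nu]{}{K(\bm{w})}\leq\power[\nu-1]{}{\mathcal{M}}\cdot K(\bm{w})$, the claim reduces to the finiteness of $\intss{\RRn{2}}{}K(\bm{w})\absp{\wz[k\nu]{1}}\absp{\wz[\ell\nu]{2}}\d{\bm{w}}$. Since $\nu$ need not be an integer, I will dominate the polynomial factors by integer-exponent ones via the elementary bound $\absp{w}^{\,\alpha}\leq 1 + \absp{w}^{\ceil{\alpha}}$, which expands the integrand into a sum of four terms. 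The only non-trivial term has total degree $\ceil{k\nu}+\ceil{\ell\nu}$, and using $k+\ell \leq 2$ I will check case-by-case that this stays within $2\ceil{\nu}$; the critical case $k=\ell=1$ gives exactly $2\ceil{\nu}$, which is admissible under \cref{eq:kernel_integrals_finite}, while the cases $k+\ell\leq 1$ are controlled by the marginal boundedness \cref{eq:kernel_integrals_left_wellbehaved,eq:kernel_integrals_right_wellbehaved}.

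There is no genuine obstacle: the argument is essentially bookkeeping between the non-integer exponent $\nu$ and the integer-indexed moment bound $2\ceil{\nu}$. The only subtlety worth flagging explicitly in the write-up is the tight case $k=\ell=1$ in part \cref{eq:Kw_Lnu}, where the saturation $\ceil{k\nu}+\ceil{\ell\nu}=2\ceil{\nu}$ shows that the constant $2\ceil{\nu}$ in \cref{eq:kernel_integrals_finite} is exactly what is needed, and this in turn is the reason why \cref{def:kernel} was formulated with a $2\ceil{\nu}$ threshold rather than something smaller.
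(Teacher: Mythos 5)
Your proposal is correct and, for parts \cref{eq:int_|Kw_i|} and \cref{eq:int_|K2w_i|}, identical to the paper's proof: absolute integrability follows from \cref{eq:kernel_integrals_finite} since $\nu>2$ gives $2\ceil{\nu}\geq 6\geq 5$, and then $K^2\leq \mathcal{C}\cdot K$ by boundedness reduces the squared-kernel case to the first one. For part \cref{eq:Kw_Lnu} you take a mild variant of the paper's route: the paper splits the domain into $\TSR{|w_1^k w_2^\ell|\leq 1}{}{}$ and its complement and bounds $|x|^{\nu}\leq |x|^{\ceil{\nu}}$ on the latter, whereas you dominate each variable pointwise via $|w|^{\alpha}\leq 1+|w|^{\ceil{\alpha}}$; both lead to integrands of total degree at most $2\ceil{\nu}$ when $k+\ell\leq 2$, so your bookkeeping is sound. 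One small correction: for the sub-cases $k+\ell\leq 1$ you appeal to the marginal-moment conditions \cref{eq:kernel_integrals_left_wellbehaved,eq:kernel_integrals_right_wellbehaved}, but those only cover exponents up to $2$, while the relevant exponent after rounding up is $\ceil{k\nu}\geq\ceil{\nu}\geq 3$ when $k=1$; these terms are in fact covered by the same condition \cref{eq:kernel_integrals_finite} that you already invoke for the critical case $k=\ell=1$, so the conclusion is unaffected.
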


\begin{proof}
    Since the kernel function by definition is non-negative, it follows
  that
  \begin{align}
    \label{eq:int_|Kw_i|_proof}
    \absp{    \intss{\RRn{2}}{} K\!\left(\wz{1},\wz{2}\right)     \wz[k]{1} \wz[\ell]{2}     \d{\wz{1}}\!\d{\wz{2}}} \leq     \intss{\RRn{2}}{} \!\! K\!\left(\wz{1},\wz{2}\right)     \absp{\wz[k]{1} \wz[\ell]{2}}     \d{\wz{1}}\!\d{\wz{2}},
  \end{align}
  which proves \cref{eq:int_|Kw_i|}, since
  \cref{eq:kernel_integrals_finite} of \cref{def:kernel} implies that
  this is finite for the specified range of $k$ and~$\ell$.
  
    Since the kernel function is bounded, there is some constant
  $\mathcal{C}$ such that \mbox{$K(\bm{w}) \leq \mathcal{C}$}, which
  implies~that
  \begin{align}
    \label{eq:int_|K2w_i|_proof}
    \absp{    \intss{\RRn{2}}{} \power[2]{}{K\!\left(\wz{1},\wz{2}\right)}     \wz[k]{1} \wz[\ell]{2}     \d{\wz{1}}\!\d{\wz{2}}}     \leq     \mathcal{C} \absp{    \intss{\RRn{2}}{} K\!\left(\wz{1},\wz{2}\right)     \wz[k]{1} \wz[\ell]{2}     \d{\wz{1}}\!\d{\wz{2}}},
  \end{align}
  which due to \cref{eq:int_|Kw_i|} is finite, thus
  \cref{eq:int_|K2w_i|} holds true.
  
    Next, note that $\absp[\nu]{    K\!\left(\wz{1},\wz{2}\right) \wz[k]{1} \wz[\ell]{2}} =
  \absp[(\nu-1)]{ K\!\left(\wz{1},\wz{2}\right)}
  \absp{K\!\left(\wz{1},\wz{2}\right)} \absp[\nu]{\wz[k]{1}
    \wz[\ell]{2}} \leq \mathcalCz[(\nu-1)]{}
  K\!\left(\wz{1},\wz{2}\right) \absp[\nu]{\wz[k]{1} \wz[\ell]{2}}$,
  which gives the following inequality,
  \begin{align}
    \label{eq:Kw_Lnu_proof}
    \parenRz[1/\nu]{}{    \intss{\RRn{2}}{} \absp[\nu]{    K\!\left(\wz{1},\wz{2}\right)     \wz[k]{1} \wz[\ell]{2}} \d{\wz{1}}\!\d{\wz{2}} }    \leq     \mathcalCz[(\nu-1)/\nu]{}     \parenRz[1/\nu]{}{
    \intss{\RRn{2}}{} K\!\left(\wz{1},\wz{2}\right)     \absp[\nu]{\wz[k]{1} \wz[\ell]{2}} \d{\wz{1}}\!\d{\wz{2}} },
  \end{align}
  from which it is clear that a proof of the finiteness of the right
  hand side of \cref{eq:Kw_Lnu_proof} will imply \cref{eq:Kw_Lnu}.
  Since the region of integration can be divided into
  \mbox{$\mathcalAz{k\ell} = \TSR{\bm{w} : \absp{\wz[k]{1}
        \wz[\ell]{2}} \leq 1}{}{}$} and
  \mbox{$\mathcalAz[c]{k\ell} = \RRn{2} \setminus \mathcalAz{k\ell}$},
  it follows from the non-negativeness of $K(\bm{w})$, and
  \cref{eq:kernel_integral_one,eq:kernel_integrals_finite} of
  \cref{def:kernel}, that
  \begin{subequations}
    \begin{align}
      \label{eq:Kw_Lnu_proof_finite_region}
      \intss{\mathcalAz{k\ell}}{} K\!\left(\wz{1},\wz{2}\right)       \absp[\nu]{\wz[k]{1} \wz[\ell]{2}} \d{\wz{1}}\!\d{\wz{2}}         &\leq           \intss{\mathcalAz{k\ell}}{} K\!\left(\wz{1},\wz{2}\right)           \d{\wz{1}}\!\d{\wz{2}}            \leq           \intss{\RRn{2}}{} K\!\left(\wz{1},\wz{2}\right)           \d{\wz{1}}\!\d{\wz{2}} = 1, \\%
               \nonumber
      \intss{\mathcalAz[c]{k\ell}}{} K\!\left(\wz{1},\wz{2}\right)       \absp[\nu]{\wz[k]{1} \wz[\ell]{2}} \d{\wz{1}}\!\d{\wz{2}}       &\leq         \intss{\mathcalAz[c]{k\ell}}{} K\!\left(\wz{1},\wz{2}\right)         \absp[\ceil{\nu}]{\wz[k]{1} \wz[\ell]{2}} \d{\wz{1}}\!\d{\wz{2}} \\       \label{eq:Kw_Lnu_proof_finite_region_II}
      &\leq         \intss{\RRn{2}}{} K\!\left(\wz{1},\wz{2}\right)         \absp{\wz[k\ceil{\nu}]{1} \wz[\ell\ceil{\nu}]{2}}
        \d{\wz{1}}\!\d{\wz{2}} < \infty,    \end{align}
  \end{subequations}
  where the last inequality follows since the assumption
  \mbox{$k+\ell\leq2$} ensures that
  \mbox{$k\ceil{\nu} + \ell\ceil{\nu} \leq 2\ceil{\nu}$}.  
    The expression in \cref{eq:Kw_Lnu_proof} is thus finite --- and, as
  stated in \cref{eq:Kw_Lnu},
  \mbox{$K\!\left(\wz{1},\wz{2}\right) \wz[k]{1} \wz[\ell]{2} \in
    \Lp{\nu}$}.
\end{proof}

\begin{lemma}
  \label{th:integrals_kernel_and_score_components}
  The following holds for $\Uh[\bm{w}]{hq:\bm{b}}$ and
  $\Khb[\yh{h}-\LGp]{h}{\bm{b}}$ from \cref{def:Uh,def:kernel}, and
  \mbox{$\nu>2$} from
  \myref{assumption_Yt}{assumption_Yt_strong_mixing}:
  \begin{enumerate}[label=(\alph*)]
  \item     \label{eq:int_K*U}
        $\intss{\RRn{2}}{}     \sqrt{\bz{1}\bz{2}}\Khb[\bm{\zeta}-\LGp]{h}{\bm{b}}     \Uh[\bm{\zeta}]{hq:\bm{b}} \d{\bm{\zeta}} =     \Oh{\sqrt{\bz{1}\bz{2}}}$.
  \item     \label{eq:int_(KU)^nu}
        $\parenRz[1/\nu]{}{\intss{\RRn{2}}{}         \absp[\nu]{\sqrt{\bz{1}\bz{2}}\Khb[\bm{\zeta}-\LGp]{h}{\bm{b}}           \Uh[\bm{\zeta}]{hq:\bm{b}}}
        \!\!\d{\bm{\zeta}}} =
    \Oh{\absp[(2-\nu)/2\nu]{\bz{1}\bz{2}}}$.
  \item     \label{eq:int_KKUU}
        Let
    \mbox{$\mathcalKz{qr,hj:\bm{b}}\!\left(\zetah{1},\zetah{2}\right)
      \defeq       \Khb[\zetah{1}-\LGp]{h}{\bm{b}}       \Khb[\zetah{2}-\LGp]{j}{\bm{b}}       \Uh[\zetah{1}]{hq:\bm{b}}       \Uh[\zetah{2}]{jr:\bm{b}}$}, where     $\zetah{1}$ and $\zetah{2}$ either coincide completely
    (bivariate), have one common component (trivariate), or have no
    common components (tetravariate).     Let $\kappa$ be the number of variates, and let
    $\d{\bm{\zeta}(\kappa})$ represent the corresponding
    \mbox{$\kappa$-variate} differential.  Then,\\
        $\intss{\RRn{\kappa}}{}     \left(\bz{1}\bz{2}\right)     \mathcalKz{qr,hj:\bm{b}}\!\left(\zetah{1},\zetah{2}\right)     \d{\bm{\zeta}(\kappa}) =     \begin{cases}
      \Uh[\LGp]{hq:\bm{b}} \Uh[\LGp]{jr:\bm{b}}       \intss{\RRn{2}}{} \power[2]{}{K(\bm{w})} \d{\bm{w}} + \Oh{\bz{1}\vee\bz{2}}       &\kappa = 2,\\
      \Oh{\bz{1}\wedge\bz{2}}       &\kappa = 3,\\
      \Oh{\bz{1}\bz{2}}       &\kappa = 4.
    \end{cases}$
  \end{enumerate}
\end{lemma}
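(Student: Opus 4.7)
\textbf{Proof plan for \cref{th:integrals_kernel_and_score_components}.}

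The unifying idea is the change of variables $\bm{w} = (\bm{\zeta} - \LGp)\oslash\bm{b}$ (componentwise), which converts $\Khb[\bm{\zeta}-\LGp]{h}{\bm{b}}\,\d{\bm{\zeta}} = \frac{1}{\bz{1}\bz{2}}K(\bm{w})\cdot \bz{1}\bz{2}\,\d{\bm{w}} = K(\bm{w})\,\d{\bm{w}}$, and transforms $\Uh[\bm{\zeta}]{hq:\bm{b}}$ into a bivariate polynomial of degree at most two in $(\bz{1}\wz{1},\bz{2}\wz{2})$ evaluated at $\LGp$. Since this polynomial expands into finitely many monomials $\wz[k]{1}\wz[\ell]{2}$ with $k+\ell\leq 2$, each carrying a factor $\bz[k]{1}\bz[\ell]{2}$, the finiteness results in \cref{th:integrals:K_w1w2} applied to $K$ (for part~(a)) and $K^\nu$, $K^2$ (for parts~(b)-(c)) will control every resulting integral.

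For part~(a), the substitution yields $\sqrt{\bz{1}\bz{2}}\intss{\RRn{2}}{}K(\bm{w})\Uh[\LGp + \bm{b}\odot\bm{w}]{hq:\bm{b}}\d{\bm{w}}$. Expanding $\Uh[\LGp + \bm{b}\odot\bm{w}]{hq:\bm{b}}$ as a sum of monomials $\bz[k]{1}\bz[\ell]{2}\,\wz[k]{1}\wz[\ell]{2}$ with $k+\ell\leq 2$ and invoking \myref{th:integrals:K_w1w2}{eq:int_|Kw_i|} shows each term is bounded, so the inner integral is $\Oh{1}$, proving the claim. For part~(b), the same substitution gives, after factoring,
\begin{align*}
\parenRz[1/\nu]{}{\intss{\RRn{2}}{}\!\!\absp[\nu]{\sqrt{\bz{1}\bz{2}}\,\tfrac{1}{\bz{1}\bz{2}}K(\bm{w})\Uh[\LGp+\bm{b}\odot\bm{w}]{hq:\bm{b}}}\bz{1}\bz{2}\,\d{\bm{w}}} = \absp[(2-\nu)/(2\nu)]{\bz{1}\bz{2}}\parenRz[1/\nu]{}{\intss{\RRn{2}}{}\absp[\nu]{K(\bm{w})\Uh[\LGp+\bm{b}\odot\bm{w}]{hq:\bm{b}}}\d{\bm{w}}},
\end{align*}
and the inner $\Lp{\nu}$-norm is finite by Minkowski's inequality and \myref{th:integrals:K_w1w2}{eq:Kw_Lnu}, giving the claimed rate.

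The main work, and the principal obstacle, is part~(c), where bookkeeping of shared versus free coordinates is delicate. The bivariate case $\kappa=2$ reduces after substitution to $\intss{\RRn{2}}{}K^2(\bm{w})\,\Uh[\LGp+\bm{b}\odot\bm{w}]{hq:\bm{b}}\Uh[\LGp+\bm{b}\odot\bm{w}]{hr:\bm{b}}\,\d{\bm{w}}$, since the two factors of $(\bz{1}\bz{2})^{-1}$ from the kernels cancel the $(\bz{1}\bz{2})$ from the differential and the prefactor $(\bz{1}\bz{2})$. Extracting the constant term $\Uh[\LGp]{hq:\bm{b}}\Uh[\LGp]{hr:\bm{b}}\intss{\RRn{2}}{}K^2(\bm{w})\d{\bm{w}}$ and bounding the remaining monomial contributions by \myref{th:integrals:K_w1w2}{eq:int_|K2w_i|} yields the leading constant plus an $\Oh{\bz{1}\vee\bz{2}}$ remainder. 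For the tetravariate case $\kappa=4$, all four coordinates are free, so the substitution produces a differential factor $(\bz{1}\bz{2})^2$, which combined with the two $(\bz{1}\bz{2})^{-1}$ kernel factors and the $(\bz{1}\bz{2})$ prefactor yields the overall $\Oh{\bz{1}\bz{2}}$ scaling, while the resulting product integral factors as a product of two integrals each finite by part~(a).

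The hard part is the trivariate case $\kappa=3$, where one of the two coordinates of $\zetah{1}$ coincides with one of $\zetah{2}$. There are four sub-cases (depending on whether the shared coordinate is the first or second component of each $\zetah{i}$), and in each sub-case the kernel belonging to the shared variable appears in both $K$-factors but with potentially different bandwidths ($\bz{1}$ versus $\bz{2}$). The plan is to substitute using the variable transformation that keeps the shared coordinate on its own axis, in which case the product of the two kernels, both evaluated at different scaled displacements of the same coordinate, can be dominated by bounding one of the two kernels by its sup-norm $\mathcal{C}$ and integrating the other against the polynomial $\Uh\cdot\Uh$. The differential provides three factors $\bz{i}$; the two kernels contribute $(\bz{1}\bz{2})^{-2}$; and the prefactor is $(\bz{1}\bz{2})$. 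A careful accounting (which factor survives depends on which component is shared) shows that exactly one of $\bz{1}$ or $\bz{2}$ remains, and in every sub-case the minimum $\bz{1}\wedge\bz{2}$ provides a valid upper bound. Finiteness of all remaining integrals follows from \cref{th:integrals:K_w1w2} applied to the single surviving kernel; the polynomial factors from the two score functions again reduce to monomials $\wz[k]{1}\wz[\ell]{2}$ with $k+\ell\leq 4$, which is within the range covered by \myref{th:integrals:K_w1w2}{eq:int_|Kw_i|}.
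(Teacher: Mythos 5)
Your parts (a) and (b), and the bivariate and tetravariate cases of part (c), follow essentially the same route as the paper: the componentwise substitution, the expansion of the score components as quadratic polynomials in $(\bz{1}\wz{1},\bz{2}\wz{2})$ with constant term $\Uh[\LGp]{hq:\bm{b}}$, and the reduction of every resulting term to the finiteness statements of \cref{th:integrals:K_w1w2}. The genuine gap is in the trivariate case of (c). After your substitution the integrand has the form $K(\wz{1},\wz{2})\cdot K\bigl([(\bz{2}\wz{2}+\vz{2})-\vz{1}]/\bz{1},\wz{3}\bigr)\cdot P(\wz{1},\wz{2},\wz{3})$, where $P$ is a polynomial that is unbounded in all three variables. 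If you bound either kernel by its sup-norm $\mathcal{C}$, the surviving kernel depends on only two of the three variables, and the integral of the polynomial over the remaining free variable, $\intss{\RRn{1}}{}\wz[\ell]{3}\,\d{\wz{3}}$ (or the analogous $\wz{1}$-integral), diverges. Since neither the kernel nor the score functions are assumed to have bounded support, this domination step fails outright. Relatedly, your appeal to \myref{th:integrals:K_w1w2}{eq:int_|Kw_i|} for monomials $\wz[k]{1}\wz[\ell]{2}$ with $k+\ell\leq4$ does not cover the trivariate integrand, which is a polynomial in three variables; an iterated-integral argument is unavoidable here.

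The repair is to integrate the second kernel against the powers of its \emph{own} free variable first: $\intss{\RRn{1}}{}K\bigl([(\bz{2}\wz{2}+\vz{2})-\vz{1}]/\bz{1},\wz{3}\bigr)\wz[\ell]{3}\,\d{\wz{3}}=\mathcalKz{2:\ell}\bigl([(\bz{2}\wz{2}+\vz{2})-\vz{1}]/\bz{1}\bigr)$, which is uniformly bounded by the conditions \cref{eq:kernel_integrals_left_wellbehaved,eq:kernel_integrals_right_wellbehaved} of \cref{def:kernel} --- these assumptions exist precisely for this step, since the first argument of the second kernel ranges over all of $\RRn{1}$ and a pointwise bound on $K$ alone is useless. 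This leaves a constant times $K(\wz{1},\wz{2})$ times a polynomial in $(\wz{1},\wz{2})$ only, which is finite by \myref{th:integrals:K_w1w2}{eq:int_|Kw_i|}. Finally, your scaling bookkeeping yields one surviving power of $\bz{2}$ (or $\bz{1}$, depending on the configuration), which gives only $\Oh{\bz{2}}$; since $\bz{1}\wedge\bz{2}\leq\bz{2}$, this is \emph{weaker} than the claimed $\Oh{\bz{1}\wedge\bz{2}}$. The sharper bound requires running the argument a second time with the alternative substitution that rescales the shared coordinate by the other bandwidth, so that the same integral is shown to be simultaneously $\Oh{\bz{1}}$ and $\Oh{\bz{2}}$.
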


\begin{proof}
    Recalling the definition of $\Khb[\yh{h}-\LGp]{h}{\bm{b}}$ from
  \cref{eq:definition_of_K}, the integral in \cref{eq:int_K*U}
  can be written~as
  \begin{align}
    \intss{\RRn{2}}{}     \sqrt{\bz{1}\bz{2}} \cdot \frac{1}{\bz{1}\bz{2}} K\left(    \frac{\zetahi{}{1}-\vz{1}}{\bz{1}},
    \frac{\zetahi{}{2}-\vz{2}}{\bz{2}}
    \right) 
    \Uh[\zetahi{}{1},\zetahi{}{2}]{hq:\bm{b}}     \d{\zetahi{}{1}} \d{\zetahi{}{2}},
  \end{align}
  which implies that the substitutions
  \mbox{$\wz{1} = \left(\zetahi{}{1}-\vz{1}\right)/\bz{1}$} and
  \mbox{$\wz{2} = \left(\zetahi{}{1}-\vz{2}\right)/\bz{2}$} gives
  the~integral
  \begin{align}
    \nonumber
    &\intss{\RRn{2}}{}       \frac{\sqrt{\bz{1}\bz{2}}}{\bz{1}\bz{2}} K\left(      \wz{1}, \wz{2} \right)       \Uh[\bz{1}\wz{1}+\vz{1},\bz{2}\wz{2}+\vz{2}]{hq:\bm{b}}       \left(\bz{1}\!\d{\wz{1}}\right)\left(\bz{2}\!\d{\wz{2}}\right)\\
    \label{eq:int_K*U_intermediate}
    &= \sqrt{\bz{1}\bz{2}} \cdot       \intss{\RRn{2}}{}       K\left(\wz{1}, \wz{2} \right)       \Uh[\bz{1}\wz{1}+\vz{1},\bz{2}\wz{2}+\vz{2}]{hq:\bm{b}}       \d{\wz{1}}\!\d{\wz{2}}.
  \end{align}
  Since $\Uh[\bm{w}]{hq:\bm{b}}$ is a bivariate polynomial, it is
  clear that
  $\Uh[\bz{1}\wz{1}+\vz{1},\bz{2}\wz{2}+\vz{2}]{hq:\bm{b}}$ can be
  written~as
  \begin{align}
    \label{eq:U_simple_sum}
    \Uh[\vz{1},\vz{2}]{hq:\bm{b}} +     \bz{1}\cz{1}\wz{1} +
    \bz{2}\cz{2}\wz{2} + 
    \bz[2]{1}\cz{11}\wz[2]{1}+
    \bz{1}\bz{2}\cz{12}\wz{1}\wz{2} +
    \bz[2]{2}\cz{22}\wz[2]{2},
  \end{align}
  for suitable constants $\cz{1}$, $\cz{2}$, $\cz{11}$, $\cz{12}$
  and~$\cz{22}$.  The integral in \cref{eq:int_K*U_intermediate} can
  thus be expressed as a sum of integrals like those occurring in
  \myref{th:integrals:K_w1w2}{eq:int_|Kw_i|}, all of which are finite.
  The dominant term becomes $\Oh{\sqrt{\bz{1}\bz{2}}}$ when $\blimit$,
  and the conclusion of \cref{eq:int_K*U}~follows.
  
    The substitution used in \cref{eq:int_K*U} can also be applied for
  \cref{eq:int_(KU)^nu}, resulting~in
  \begin{align}
    \nonumber
    &\parenRz[1/\nu]{}{\intss{\RRn{2}}{}       \absp[\nu]{ \sqrt{\bz{1}\bz{2}} \cdot\frac{1}{\bz{1}\bz{2}} K\left(      \wz{1}, \wz{2} \right)       \Uh[\bz{1}\wz{1}+\vz{1},\bz{2}\wz{2}+\vz{2}]{hq:\bm{b}}}       \left(\bz{1}\!\d{\wz{1}}\right)      \left(\bz{2}\!\d{\wz{2}}\right) } \\
    \label{eq:int_(KU)^nu_intermediate}
    &= \absp[(2-\nu)/2\nu]{\bz{1}\bz{2}}       \parenRz[1/\nu]{}{\intss{\RRn{2}}{}       \absp[\nu]{ K\left(      \wz{1}, \wz{2} \right)       \Uh[\bz{1}\wz{1}+\vz{1},\bz{2}\wz{2}+\vz{2}]{hq:\bm{b}}}       \d{\wz{1}}\!\d{\wz{2}} }.
  \end{align}
  Note that this represent the norm in $\Lp{\nu}$-space, and that
  \cref{eq:U_simple_sum} implies that it can be realised as the norm
  of a sum of the simpler components encountered in
  \myref{th:integrals:K_w1w2}{eq:Kw_Lnu}.  It is now clear that
  Minkowski's inequality can be used to obtain a bound for the
  expression in \cref{eq:int_(KU)^nu_intermediate}.  In particular,
  constants $\ez{1}$, $\ez{2}$, $\ez{11}$, $\ez{12}$ and~$\ez{22}$ can
  be found that realises this bound as
  \begin{align}
    \label{eq:int_(KU)^nu_final}
    \absp[(2-\nu)/2\nu]{\bz{1}\bz{2}}     \left(        \Uh[\vz{1},\vz{2}]{hq:\bm{b}} +     \bz{1}\ez{1}\wz{1} +     \bz{2}\ez{2}\wz{2} +     \bz[2]{1}\ez{11}\wz[2]{1} +     \bz{1}\bz{2}\ez{12}\wz{1}\wz{2} +     \bz[2]{2}\ez{22}\wz[2]{2}     \right),   
  \end{align}
      which is dominated by the
  \mbox{$\absp[(2-\nu)/2\nu]{\bz{1}\bz{2}}$-term} when $\blimit$, as
  stated in \cref{eq:int_(KU)^nu}.
  
    The investigation of \cref{eq:int_KKUU} requires different
  substitutions depending on the $\kappa$ for the configuration under
  investigation.  Noting that the integrand in addition to the scaling
  factor $\bz{1}\bz{2}$ always contains the product
  $\Khb[\zetah{1}-\LGp]{h}{\bm{b}}\Khb[\zetah{2}-\LGp]{j}{\bm{b}}$, it
  follows that it regardless of the value of $\kappa$ will be a factor
  $1/\bz{1}\bz{2}$ that will be adjusted by the $\bz{1}$- and
  \mbox{$\bz{2}$-factors} that originates from the substituted
  differentials.  It is easy to check that the new differentials
  becomes   $\bz{1}\bz{2}\d{\wz{1}}\!\d{\wz{2}}$ when \mbox{$\kappa=2$},   $\bz[2]{1}\bz{2}\d{\wz{1}}\!\d{\wz{2}}\!\d{\wz{3}}$ or
  $\bz{1}\bz[2]{2}\d{\wz{1}}\!\d{\wz{2}}\!\d{\wz{3}}$ when
  \mbox{$\kappa=3$},   and
  $\bz[2]{1}\bz[2]{2}\d{\wz{1}}\!\d{\wz{2}}\!\d{\wz{3}}\!\d{\wz{4}}$
  when \mbox{$\kappa=4$}.  

  For the bivariate case, the substitution from \cref{eq:int_K*U}
  gives an expression of the following form,
  \begin{align}
    \label{eq:int_KKUU_bivariate}
    \intss{\RRn{2}}{} \power[2]{}{K\!\left(\wz{1},\wz{2}\right)}     \cdot \mathcal{U}\!\left(\wz{1},\wz{2}\right) \d{\wz{1}}\!\d{\wz{2}},
  \end{align}
  where $\mathcal{U}\!\left(\wz{1},\wz{2}\right)$ is a product whose
  factors both are of the form encountered in \cref{eq:U_simple_sum},
  i.e.\ it will be a quartic polynomial in the variables
  $\left(\bz{1}\wz{1}\right)$ and $\left(\bz{2}\wz{2}\right)$, and its
    constant term will be $\Uh[\LGp]{hq:\bm{b}} \Uh[\LGp]{jr:\bm{b}}$.
  From
  \myref{th:integrals_kernel_and_score_components}{eq:int_|K2w_i|} it
  follows that this will be a finite integral, and as $\blimit$ the
  result will be as given for the \mbox{$\kappa=2$} case of
  \cref{eq:int_KKUU}.

  For the trivariate case, the overlap between $\zetah{1}$ and
  $\zetah{2}$ will belong to one of the following configurations,   (i) \mbox{$\zetah{1} = \left(\zetahi{}{1}, \zetahi{}{2}\right)$}
  and \mbox{$\zetah{2} = \left(\zetahi{}{1}, \zetahi{}{3}\right)$},   (ii) \mbox{$\zetah{1} = \left(\zetahi{}{1}, \zetahi{}{2}\right)$}
  and \mbox{$\zetah{2} = \left(\zetahi{}{3}, \zetahi{}{1}\right)$},   (iii) \mbox{$\zetah{1} = \left(\zetahi{}{1}, \zetahi{}{2}\right)$} and
  \mbox{$\zetah{2} = \left(\zetahi{}{2}, \zetahi{}{3}\right)$},   or   (iv) \mbox{$\zetah{1} = \left(\zetahi{}{1}, \zetahi{}{2}\right)$}
  and \mbox{$\zetah{2} = \left(\zetahi{}{3}, \zetahi{}{2}\right)$}.     The reasoning is identical for the four cases, so it is sufficient
  to consider case~(i), which gives the following product of kernel
  functions in the original integral,
  \begin{align}
    \label{eq:trivaritate_kernel_product}
    K\!\left(\left(\zetahi{}{1}-\vz{1}\right)\!/\bz{1},     \left(\zetahi{}{2}-\vz{2}\right)\!/\bz{2}\right)     \cdot     K\!\left(\left(\zetahi{}{2}-\vz{1}\right)\!/\bz{1},     \left(\zetahi{}{3}-\vz{2}\right)\!/\bz{2}\right).                       \end{align}
  When the substitution   \begin{align}
    \label{eq:trivariate_substitution}
    \wz{1} = \left(\zetahi{}{1}-\vz{1}\right)\!/\bz{1}, \qquad 
    \wz{2} = \left(\zetahi{}{2}-\vz{2}\right)\!/\bz{2}, \qquad
    \wz{3} = \left(\zetahi{}{3}-\vz{2}\right)\!/\bz{2}, 
  \end{align}
  is used, the following component occurs in the transformed integrand,
  \begin{align}
    \label{eq:messy_trivaritate_kernel_product}
    \mathcal{K}\left(\wz{1},\wz{2},\wz{3}\right) \defeq    K\!\left(\wz{1}, \wz{2}\right)     \cdot     K\!\left(
    \left[\left(\bz{2}\wz{2}+\vz{2}\right)-\vz{1}\right]/\bz{1}, \wz{3} \right).   \end{align}
  The argument
  \mbox{$\left[\left(\bz{2}\wz{2}+\vz{2}\right)-\vz{1}\right]/\bz{1}$}
  does not pose a problem due to the boundedness requirement from
  \cref{eq:kernel_integrals_finite} in \cref{def:kernel}, and the
  following inequality thus holds for
  \mbox{$\ell \in \TSR{0,1,2}{}{}$},
  \begin{subequations}
    \label{eq:messy_trivaritate_kernel_product_resolved}
    \begin{align}
      \intss{\RRn{1}}{} \mathcal{K}\left(\wz{1},\wz{2},\wz{3}\right)
      \wz[\ell]{3} \d{\wz{3}} 
      &=         K\!\left(\wz{1}, \wz{2}\right) \cdot        \intss{\RRn{1}}{}     K\!\left(
        \left[\left(\bz{2}\wz{2}+\vz{2}\right)-\vz{1}\right]/\bz{1}, \wz{3} \right)
        \wz[\ell]{3} \d{\wz{3}} \\
      &=       K\!\left(\wz{1}, \wz{2}\right) \cdot         \mathcalKz{2:\ell} \! \left(        \left[\left(\bz{2}\wz{2}+\vz{2}\right)-\vz{1}\right]/\bz{1}
        \right) \\
      &\leq         \mathcalDz{2:\ell} \cdot K\!\left(\wz{1}, \wz{2}\right),
    \end{align}
  \end{subequations}
    where $\mathcalDz{2:\ell}$ is a constant that bounds the function
  $\mathcalKz{2:\ell}$.  

  Since the substitution in \cref{eq:trivariate_substitution}
  transforms the integral of interest~into
  \begin{align}
    \label{eq:int_KKUU_trivariate}
    \bz{2}\intss{\RRn{3}}{} \mathcal{K}\left(\wz{1},\wz{2},\wz{3}\right)     \cdot \mathcal{U}\!\left(\wz{1},\wz{2},\wz{3}\right)     \d{\wz{1}}\!\d{\wz{2}}\!\d{\wz{3}},
  \end{align}
  where $\mathcal{U}\!\left(\wz{1},\wz{2},\wz{3}\right)$ is a
  quadratic polynomial in the variables $\left(\bz{1}\wz{1}\right)$
  and $\left(\bz{2}\wz{3}\right)$, and a quartic polynomial in
  $\wz{2}$ (with coefficients having suitable powers of $\bz{1}$ and
  $\bz{2}$ as factors), the observation in
  \cref{eq:messy_trivaritate_kernel_product_resolved} implies that an
  iterated approach to the integral (starting with the
  \mbox{$\wz{3}$-variable}) can be used to show that each part of the
  sum will be bounded by a constant times an integral of the form
  encountered in
  \myref{th:integrals_kernel_and_score_components}{eq:int_|Kw_i|}.
  The trivariate integral in \cref{eq:int_KKUU} can thus be bounded by
  a sum of finite integrals having coefficients based on powers of
  $\bz{1}$ and $\bz{2}$.  From the $\bz{2}$ factor in
  \cref{eq:int_KKUU_trivariate}, it follows that the trivariate
  integral in this case is $\Oh{\bz{2}}$ when $\blimit$.  Note that
  \mbox{$\wz{2} = \left(\zetahi{}{2}-\vz{1}\right)\!/\bz{1}$} could
  have been used as an alternative substitution in
  \cref{eq:trivariate_substitution}, which by the obvious
  modifications of the arguments implies that the integral also will
  be $\Oh{\bz{1}}$ when $\blimit$ --- and from this if follows that
    the integral is $\Oh{\bz{1}\wedge\bz{2}}$, which completes the proof
  for the \mbox{$\kappa=3$} case of \cref{eq:int_KKUU}.

  The case \mbox{$\kappa=4$} is quite simple, since no common
  components in $\zetah{1}$ and $\zetah{2}$ implies that the
  tetravariate integral, after the obvious substitution, corresponds
  to an expression of the form
  \begin{align}
    \label{eq:int_KKUU_tetravariate}
    \bz{1}\bz{2} \left( 
    \intss{\RRn{2}}{} K(\bm{w})
    \Uh[\bm{\zeta}(\bm{w})]{hq:\bm{b}}
        \d{\bm{w}}
    \right) \cdot   \left( 
    \intss{\RRn{2}}{} K(\bm{w})
    \Uh[\bm{\zeta}(\bm{w})]{jr:\bm{b}}
    \d{\bm{w}}
    \right),
  \end{align}
  where
  \mbox{$\bm{\zeta}(\bm{w}) =
    \left(\bz{1}\wz{1}+\vz{1},\bz{2}\wz{2}+\vz{2}\right)$}.    The integrals occurring in this product are similar to those
  encountered in the bivariate case discussed above, and it is clear
  that the result will be $\Oh{\bz{1}\bz{2}}$ when $\blimit$, which
  concludes the proof of \cref{eq:int_KKUU}.
\end{proof}

Note that the bivariate case of
\myref{th:integrals_kernel_and_score_components}{eq:int_KKUU} only
considers the configuration where the components of $\zetah{1}$ and
$\zetah{2}$ coincide completely, while the configuration where
\mbox{$\bmzetaz{1} = \left(\zetahi{}{1},\zetahi{}{2}\right)$} and
$\bmzetaz{2}$ is the diagonal reflection
\mbox{$\left(\zetahi{}{2},\zetahi{}{1}\right)$} has been left out.
This restriction does not pose a problem for the asymptotic
investigation of $\hatlgsdM[p]{\LGp}{\omega}{m}$ when the point
\mbox{$\LGp=\parenR{\LGpi{1},\LGpi{2}}$} lies upon the diagonal, i.e.\
when \mbox{$\LGpi{1} = \LGpi{2}$}, since the diagonal folding property
ensures that it is sufficient to consider positive lags for the
point~$\LGp$ in this case.  For the general case, where
\mbox{$\LGpi{1} \neq \LGpi{2}$}, the following adjusted version of
\myref{th:integrals_kernel_and_score_components}{eq:int_KKUU} is
needed, where one of the kernels use~$\LGp$ and the other use the
diagonally reflected point~\mbox{$\LGpd=\parenR{\LGpi{2},\LGpi{1}}$}.

\begin{lemma}
  \label{th:integrals_kernel_and_score_components_LGP_and_LGpd}
  The following holds for $\Uh[\bm{w}]{hq:\bm{b}}$ and
  $\Khb[\yh{h}-\LGp]{h}{\bm{b}}$ from \cref{def:Uh,def:kernel}, when
  the point \mbox{$\LGp=\parenR{\LGpi{1},\LGpi{2}}$} does not
  coincide with its diagonal reflection
  \mbox{$\LGpd=\parenR{\LGpi{2},\LGpi{1}}$}, i.e.\
  \mbox{$\LGpi{1} \neq \LGpi{2}$}.\\
    Let
  \mbox{$\mathcalKz{qr,hj:\bm{b}}\!\left(\zetah{1},\zetah{2};\LGp,\LGpd\right)
    \defeq     \Khb[\zetah{1}-\LGp]{h}{\bm{b}}     \Khb[\zetah{2}-\LGpd]{j}{\bm{b}}     \Uh[\zetah{1}]{hq:\bm{b}}     \Uh[\zetah{2}]{jr:\bm{b}}$}, where   $\zetah{1}$ and $\zetah{2}$ either are diagonal reflections of each
  other (bivariate), have one common component (trivariate), or have
  no common components (tetravariate).   Let $\kappa$ be the number of variates, and let
  $\d{\bm{\zeta}(\kappa})$ represent the corresponding
  \mbox{$\kappa$-variate} differential.  Then,\\
      $$\intss{\RRn{\kappa}}{}     \left(\bz{1}\bz{2}\right)     \mathcalKz{qr,hj:\bm{b}}\!\left(\zetah{1},\zetah{2};\LGp,\LGpd\right)     \d{\bm{\zeta}(\kappa}) =     \begin{cases}
      \oh{1}       &\kappa = 2,\\
      \Oh{\bz{1}\wedge\bz{2}}       &\kappa = 3,\\
      \Oh{\bz{1}\bz{2}}       &\kappa = 4.
    \end{cases}$$
\end{lemma}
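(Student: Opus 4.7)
The overall plan is to mirror the proof of Lemma~\ref{th:integrals_kernel_and_score_components} step by step, treating the three variate cases separately and noting which arguments transfer unchanged and which must be modified to reflect the distinction between the points $\LGp$ and $\LGpd$. The off-diagonal hypothesis $\LGpi{1}\neq\LGpi{2}$ is used only once, in the bivariate case, which is also the one genuinely new ingredient.

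For the tetravariate case ($\kappa=4$), the variables making up $\zetah{1}$ and $\zetah{2}$ are disjoint, so the integrand completely decouples. I would apply the substitution centred at $\LGp$ to the $\zetah{1}$--factor and an analogous substitution centred at $\LGpd$ to the $\zetah{2}$--factor. Each factor then reduces to a bivariate integral of exactly the shape encountered in the proof of \eqref{eq:int_K*U}, which is $\Oh{1}$ by the same argument based on \myref{th:integrals:K_w1w2}{eq:int_|Kw_i|}. Combined with the explicit prefactor $\bz{1}\bz{2}$, this produces $\Oh{\bz{1}\bz{2}}$, and no new ingredients beyond those already present in the $\LGp=\LGpd$ proof are needed.

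For the trivariate case ($\kappa=3$), I would enumerate the four possible overlap configurations exactly as in the original proof and apply, in each case, the substitution normalising $\zetah{1}$ around $\LGp$. The only new feature is that the argument of the second kernel will now involve a shift of the form $[(\bz{a}\wz{a}+\LGpi{i})-\LGpi{j}]/\bz{c}$, which in general contains the nonzero offset $\LGpi{i}-\LGpi{j}$. This is harmless: the cross-sections $\mathcalKz{1:k}$ and $\mathcalKz{2:\ell}$ of the kernel are uniformly bounded on $\RRn{1}$ by \cref{def:kernel}, so the key step \eqref{eq:messy_trivaritate_kernel_product_resolved} of iterated integration in the non-shared variable still yields a uniform bound. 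The remaining steps go through verbatim, and because both $\bz{1}$ and $\bz{2}$ are available as the Jacobian factor by choosing either substitution orientation, the final estimate is $\Oh{\bz{1}\wedge\bz{2}}$.

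The main obstacle is the bivariate case ($\kappa=2$), where $\zetah{2}$ is the diagonal reflection of $\zetah{1}$. Here the off-diagonal assumption $\LGpi{1}\neq\LGpi{2}$ has to do real work, because the direct substitution alone only shows that the integral is $\Oh{1}$. My plan is to perform a Taylor expansion of the underlying density around $\LGp$ as in \eqref{eq:E(Xhi.Xjk)_bivariate}, and then use the defining orthogonality relation \eqref{eq:def_of_theta_hb} for the population parameters $\thetahb{h}{b}(\LGp)$ and $\thetahb{j}{b}(\LGpd)$: since the score functions $\Uh[\cdot]{hq:\bm{b}}(\LGp)$ and $\Uh[\cdot]{jr:\bm{b}}(\LGpd)$ are mean-zero against their respective local Gaussian densities, and since $\LGp\neq\LGpd$ forces these two local Gaussians to be genuinely distinct, the leading constant term in the Taylor expansion must cancel when paired against the two reflected kernels. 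What remains, after a dominated convergence argument controlled by the integrability estimates of \cref{th:integrals:K_w1w2}, is a contribution of strictly smaller order in $\bm{b}$, which is $\oh{1}$. Verifying this cancellation carefully is where I expect the bulk of the work to lie, and it is the step most likely to demand additional regularity bookkeeping on the score functions and on the bivariate densities $\gh[\yh{h}]{h}$.
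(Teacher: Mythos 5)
Your tetravariate and trivariate arguments are fine and coincide with the paper's treatment, which simply observes that the substitutions from the proof of \cref{th:integrals_kernel_and_score_components} carry over unchanged to those two cases. The problem is the bivariate case, which is the only place the hypothesis \mbox{$\LGpi{1}\neq\LGpi{2}$} is actually used. You assert that the direct substitution ``only shows that the integral is $\Oh{1}$'' and therefore propose a cancellation argument based on Taylor-expanding a density and on the relation \cref{eq:def_of_theta_hb}. Both halves of that plan are off target. The statement being proved is a purely analytic claim about an integral whose integrand consists of two kernels and two score-function polynomials; no density $\gh[\yh{h}]{h}$ appears, so there is nothing to Taylor-expand, and \cref{eq:def_of_theta_hb} concerns the kernel-weighted score integrated against the \emph{difference} \mbox{$\psi - g$}, not a mean-zero property of the score against the local Gaussian alone, so it cannot supply the cancellation you describe. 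The Taylor expansion of the density belongs to the proof of \cref{th:expectations_of_XNht*}, where these kernel--score integrals are subsequently \emph{used}, not to the present lemma.

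More importantly, the direct substitution does give $\oh{1}$, and that is exactly the paper's argument. After recentering as in the bivariate case of \myref{th:integrals_kernel_and_score_components}{eq:int_KKUU}, the product of the two kernels becomes
\begin{align*}
  K\!\left(\wz{1}+(\vz{1}-\vz{2})/\bz{1},\ \wz{2}+(\vz{2}-\vz{1})/\bz{2}\right)\cdot K\!\left(\wz{1},\wz{2}\right),
\end{align*}
multiplied by a polynomial in $\wz{1},\wz{2}$, and the prefactor $\bz{1}\bz{2}$ is exactly absorbed by the Jacobian. Since \mbox{$\LGpi{1}\neq\LGpi{2}$}, the offsets $(\vz{1}-\vz{2})/\bz{1}$ and $(\vz{2}-\vz{1})/\bz{2}$ diverge as $\blimit$, so the two recentered kernels separate: for a compactly supported kernel the product is eventually identically zero, and in general the integrable decay guaranteed by \myref{th:integrals:K_w1w2}{eq:int_|Kw_i|} forces the integrand, and hence the integral, to vanish in the limit (the paper's \cref{eq:kernels_going_to_zero}; equivalently, the two kernels converge to Dirac deltas sifting out different points). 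Your proposal misses this separating-kernels mechanism entirely, and the replacement argument you sketch would not close the gap.
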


\begin{proof}
  The statements for the trivariate and tetravariate cases are
  identical to those in
  \myref{th:integrals_kernel_and_score_components}{eq:int_KKUU}, and
  so are the proofs, i.e.\ the same substitutions can be applied for
  the present cases of interest.
  
  For the bivariate case, the substitution
  \mbox{$\wz{1} = \left(\zetahi{}{1}-\vz{1}\right)/\bz{1}$} and
  \mbox{$\wz{2} = \left(\zetahi{}{1}-\vz{2}\right)/\bz{2}$} gives that
  the integral
  \mbox{$\intss{\RRn{2}}{} \subp{K(\wz{1},\wz{2})}{}{}{}{2}\cdot
    \mathcal{U}\!\left(\wz{1},\wz{2}\right) \d{\wz{1}}\d{\wz{2}}$}
  from \cref{eq:int_KKUU_bivariate} is replaced with a sum of
  integrals of the form,
  \begin{align}
    \label{eq:kernels_going_to_zero}
    \intss{\RRn{2}}{}
    K(\wz{1}+(\vz{1}-\vz{2})/\bz{1},
    \wz{2}+(\vz{2}-\vz{1})/\bz{2}) \cdot K(\wz{1},\wz{2}) \wz[k]{1}\wz[\ell]{2}
    \d{\wz{1}}\d{\wz{2}},
  \end{align}
  where \mbox{$k, \ell \geq 0$} and \mbox{$k+\ell \leq 4$}.  and the
  integrands of these integrals goes to zero when $\blimit$, due to
  the assumption that \mbox{$\LGpi{1} \neq \LGpi{2}$}.  To clarify:
  For a kernel function $K$ whose nonzero values occurs on a bounded
  region of $\RRn{2}$, the integrand of
  \cref{eq:kernels_going_to_zero} will become identical to zero when
  \mbox{$(\vz{1}-\vz{2})/\bz{1}$} and \mbox{$(\vz{2}-\vz{1})/\bz{2}$}
  are large enough to ensure that at least one of the factors in the
  integrand must be zero.  
  For the general case, first observe that the factors
  $K(\wz{1},\wz{2})\wz[k]{1}\wz[\ell]{2}$ are the integrands that
  occurs in \myref{th:integrals:K_w1w2}{eq:int_|Kw_i|}, and the
  finiteness of those integrals implies that these factors must go to
  zero at a sufficiently high rate when $\wz{1}$ and $\wz{2}$ are far
  from origo.  The rate at which the individual kernel
  $K(\wz{1},\wz{2})$ goes to zero will of course be faster than that
  of the product $K(\wz{1},\wz{2})\wz[k]{1}\wz[\ell]{2}$, and together
  this implies that the integrand in \cref{eq:kernels_going_to_zero}
  must go to zero when $\blimit$, and the integral thus becomes
  asymptotically negligible.
\end{proof}

  It is a straightforward (albeit somewhat tedious) exercise to verify
  that \cref{eq:kernels_going_to_zero} goes towards zero at an
  exponential rate when the kernel function $K(\bm{w})$ is the product
  normal kernel.  The observation that the bivariate case of
  \cref{th:integrals_kernel_and_score_components_LGP_and_LGpd} is
  $\oh{1}$ can also be derived from the realisation that
  $\Khb[\zetah{1}-\LGp]{h}{\bm{b}}$ and
  $\Khb[\zetah{2}-\LGpd]{j}{\bm{b}}$ are entities that converge
  towards two different bivariate Dirac delta functions, and the limit
  of the integral becomes zero since these delta functions sifts out
  different~points.

\makeatletter{}
\subsection{A few details related to $\sigma$-algebras,
  $\alpha$-mixing and $\Lp{\nu}$-spaces}
\label{app:sigma_algebras_Lp_spaces}

The following general definitions and basic observations are needed
when e.g.\ results from~\citet{Davydov:1968:CDG} and
\citet{Volkonskii:1959:SLT} are used.

\textbf{Related $\sigma$-algebras}\\
The $\sigma$-algebras related to the process
$\TSR{\Yz{t}}{t\in\ZZ}{}$, will be denoted
\begin{align}
  \label{eq:sigma_algebra_Yht_range} 
  \sigmaYhtRange{t}{s} &\defeq 
  \sigma\!\left(\Yz{t},\dotsc,\Yz{s}\right),
\end{align}
where $t$ and $s$ are allowed to take the values $-\infty$ and
$+\infty$ respectively.

Note in particular, that if a new random variable is defined by means
of a measurable function $\xi\!\left(\yM{m}\right)$ from $\RRn{m+1}$
to $\RR$, i.e.\
\mbox{$\mathcalYz{m:t} \defeq
  \xi\!\left(\YMt{m}{t}\right)$},
then
\mbox{$\mathcalYz{m:t} \in \sigmaYhtRange{t}{t+m}$}.

\textbf{Inheritance of $\alpha$-mixing} \\
The coefficients in the strong mixing property mentioned in
\myref{assumption_Yt}{assumption_Yt_strong_mixing}, is given by
\begin{align}
  \label{eq:alpha_mixing_definition_for_Yt}
  \alpha\!\left(s\,|\,\Yz{t} \right)  \defeq   \sup \parenC{  \parenAbs{\Prob{A\cap B} - \Prob{A}\Prob{B}}\,:\,
  -\infty < t < \infty,   \ 
  A \in \sigmaYhtRange{\!-\infty}{t},   \ 
  B \in \sigmaYhtRange{t+s}{\infty} },
\end{align}
from which it is an easy task to verify that a derived process, like
the $\mathcalYz{m:t}$ mentioned above, will have an
inherited $\alpha$-mixing coefficient that satisfies
\begin{align}
  \label{eq:alpha_mixing_definition_derived_from_Yt}
  \alpha\!\left(s\,|\, \mathcalYz{m:t} \right)  \leq   \alpha\!\left(s-m\,|\,\Yz{t} \right).
\end{align}
This implies that the finiteness requirement in
\cref{eq:alpha_requirement} will be inherited by the process
$\mathcalYz{m:t}$, i.e.\ with $\nu$ and $a$ as
introduced in \myref{assumption_Yt}{assumption_Yt_strong_mixing}, the
following holds true
\begin{align}
  \label{eq:alpha_requirement_derived_from_Yt}
  \sumss{j=1}{\infty} \jz[a]{}   \parenSz[1-2/\nu]{}{\alpha(j\,|\,\mathcalYz{m:t})} < \infty.
\end{align}

\textbf{Related \mbox{$\Lp{\nu}$-spaces}} \\
Some inequalities are needed in the main proofs, and these
inequalities can be verified by means of the simple connection between
expectations and \mbox{$\Lp{\nu}$-spaces} outlined below.\footnote{
  These definitions are normally presented with $p$ used instead of
  $\nu$.}

First of all, when a measure space \mbox{$\left(\Omega, \mathcal{G},
    \mu\right)$} is given, then for \mbox{$1\leq \nu < \infty$}, the
space \mbox{$\Lp{\nu} \defeq \Lp{\nu}\!\left(\Omega, \mathcal{G},
    \mu\right)$} is defined to be the class of measurable real
functions $\zeta$ for which $\absp[\nu]{\zeta}$ is integrable, that is,
\begin{align}
  \label{eq:Lp_def}
  \zeta(z) \in \Lp{\nu} \quad \defarrow \quad
  \intss{\Omega}{} \absp[\nu]{\zeta(z)} \d{\mu} < \infty.
\end{align}
The \mbox{$\Lp{\nu}$-spaces} related to the processes $\Yht{h}{t}$ and
$\YMt{m}{t}$ will henceforth be denoted by
\begin{subequations}
  \label{eq:definition_Lp-Yht-YMt}
  \begin{align} 
  \label{eq:definition_Lp-Yht}
    \Lph{h}{\nu} \quad &\text{--- the $\Lp{\nu}$ spaces related to the
                       densities~$\gh{h}$,} \\
  \label{eq:definition_Lp-YMt}
    \LpM{m}{\nu} \quad &\text{--- the $\Lp{\nu}$ space related to the
                       density $\gM{m}$.}        \end{align}
\end{subequations}

These $\Lp{\nu}$ spaces are in fact Banach spaces, see e.g.\
\citet[Section~19]{Billingsley12:_probab_measur} for details, which
means that they are complete normed vector spaces, with a
\mbox{$\nu$-norm} defined~by
\begin{align}
  \label{eq:Lp_norm}
  \parenABSz{\nu}{\zeta(z)} 
  &\defeq     \parenRz[1/\nu]{}{    \intss{\Omega}{} \absp[\nu]{\zeta(z)} \d{\mu}}     = \parenRz[1/\nu]{}{ \E{\absp[\nu]{\zeta(Z)}}}
\end{align}
and the Minkowski's inequality (i.e.\ the triangle inequality for
\mbox{$\Lp{\nu}$-spaces}) will play a central role in the investigation
later~on,
\begin{align}
  \label{eq:Minkowski's_inequality}
  \parenABSz{\nu}{\zetaz{1}(z) + \zetaz{2}(z)} &\leq
  \parenABSz{\nu}{\zetaz{1}(z)} +
  \parenABSz{\nu}{\zetaz{2}(z)}.
\end{align}

The main reason for the introduction of these \mbox{$\Lp{\nu}$-spaces}
are the following observation: With $Z$ a random variable on
\mbox{$\left(\Omega, \mathcal{G}, \mu\right)$}, the definitions of
expectation and \mbox{$\Lp{\nu}$-spaces} gives a sequence of
equivalences
\begin{align}
  \label{eq:expectation_and_Lph}
  \E{\absp[\nu]{\zeta(Z)}} < \infty \quad   \Longleftrightarrow \quad   \intss{\Omega}{}   \absp[\nu]{\zeta(z)} \d{\mu} < \infty   \quad \Longleftrightarrow \quad   \zeta(z) \in \Lp{\nu}.
\end{align}

\begin{lemma}
  \label{th:Lp_expectation}
          For a univariate time series $\TSR{\Yz{t}}{t\in\ZZ}{}$, with
  $\Yht{h}{t}$ and $\YMt{m}{t}$ as defined in \cref{def:Yht_and_YMt},
  and with $m$ bivariate functions
  \mbox{$\zetahi{h}{}:\RRn{2} \longrightarrow \RRn{1}$}
    \begin{center}
    If \mbox{$\E{\absp[\nu]{ \zetahi[\Yht{h}{t}]{h}{}}} < \infty$} for
    \mbox{$h=1,\dotsc,m$}, then \mbox{$\parenRz[1/\nu]{}{        \E{\absp[\nu]{\sumss{h=1}{m} \az{h} \zetahi[\Yht{h}{t}]{h}{} }}
      } \leq       \sumss{h=1}{m} \absp{\az{h}}
      \parenRz[1/\nu]{}{\E{\absp[\nu]{\zetahi[\Yht{h}{t}]{h}{}}}}  <
      \infty$}.
  \end{center}
\end{lemma}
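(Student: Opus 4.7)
The statement is essentially the Minkowski inequality for $\Lp{\nu}$-spaces, applied to the common \mbox{$(m+1)$-variate} ambient space; the only wrinkle is to verify that the individual summands live in the same $\Lp{\nu}$-space so that Minkowski is directly applicable.

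The plan is to first observe, as already remarked in \cref{note:details_for_Yt_definition} and formalised in \cref{eq:expectation_hierarchy}, that each bivariate random variable $\zetahi[\Yht{h}{t}]{h}{}$ coincides in distribution, and in expectation, with its \textit{obvious extension} $\tildezetahi[\YMt{m}{t}]{h}{}$ acting on the common \mbox{$(m+1)$-variate} vector $\YMt{m}{t}$. In particular,
\begin{align*}
  \E{\absp[\nu]{\zetahi[\Yht{h}{t}]{h}{}}}
  = \E{\absp[\nu]{\tildezetahi[\YMt{m}{t}]{h}{}}},
\end{align*}
and the assumption therefore places each $\tildezetahi[\YMt{m}{t}]{h}{}$ in the single Banach space $\LpM{m}{\nu}$ of \cref{eq:definition_Lp-YMt}, with $\parenABSz{\nu}{\tildezetahi[\cdot]{h}{}} < \infty$.

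Next I would use the equivalence in \cref{eq:expectation_and_Lph} to rewrite the left-hand side as a norm,
\begin{align*}
  \parenRz[1/\nu]{}{\E{\absp[\nu]{\sumss{h=1}{m} \az{h} \zetahi[\Yht{h}{t}]{h}{}}}}
  = \parenABSz{\nu}{\sumss{h=1}{m} \az{h} \tildezetahi[\YMt{m}{t}]{h}{}},
\end{align*}
and then apply Minkowski's inequality from \cref{eq:Minkowski's_inequality} iteratively (or by a trivial induction on $m$) together with the positive-homogeneity of the norm $\parenABSz{\nu}{\az{h}\cdot \tildezetahi[\cdot]{h}{}} = \absp{\az{h}}\parenABSz{\nu}{\tildezetahi[\cdot]{h}{}}$, to obtain
\begin{align*}
  \parenABSz{\nu}{\sumss{h=1}{m} \az{h} \tildezetahi[\YMt{m}{t}]{h}{}}
  \leq \sumss{h=1}{m} \absp{\az{h}}\parenABSz{\nu}{\tildezetahi[\YMt{m}{t}]{h}{}}
  = \sumss{h=1}{m} \absp{\az{h}}\parenRz[1/\nu]{}{\E{\absp[\nu]{\zetahi[\Yht{h}{t}]{h}{}}}}.
\end{align*}
Finally, the finiteness of the right-hand side is immediate from the hypothesis, since each of the $m$ summands is a finite non-negative number multiplied by a fixed coefficient.

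There is no real obstacle here; the only care required is the bookkeeping to ensure that the random variables with different arities are all expressed on the common ambient space $\RRn{m+1}$ before invoking Minkowski, so that the triangle inequality can legitimately be used on a single sum of elements of one $\Lp{\nu}$-space rather than on quantities defined on several different marginal spaces.
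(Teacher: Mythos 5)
Your proposal is correct and follows essentially the same route as the paper's own proof: both arguments use the equivalence between finite $\nu$-th moments and membership in $\Lp{\nu}$ (\cref{eq:expectation_and_Lph}), lift the bivariate functions to the common \mbox{$(m+1)$-variate} space via \cref{eq:expectation_hierarchy}, and then conclude with the vector-space structure and Minkowski's inequality. The only difference is that you spell out the homogeneity step and the iteration explicitly, which the paper leaves implicit; nothing is missing.
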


\begin{proof}
  From \cref{eq:expectation_and_Lph} it follows that
  \mbox{$\E{\absp[\nu]{ \zetahi[\Yht{h}{t}]{h}{}}} < \infty$} implies
  \mbox{$\zetahi[\yh{h}]{h}{} \in \Lph{h}{\nu}$} for
  \mbox{$h=1,\dotsc,m$}.  With $\tildezetahi[\yM{m}]{h}{}$ the
  corresponding trivial extensions to \mbox{$(m+1)$-variate}
  functions, it follows from \cref{eq:expectation_hierarchy} that
  \mbox{$\tildezetahi[\yM{m}]{h}{} \in \LpM{m}{\nu}$} for
  \mbox{$h=1,\dotsc,m$}.    From the vector space property of \mbox{$\Lp{\nu}$-spaces} it follows
  that
  \mbox{$\sumss{h=1}{m} \az{h} \zetahi[\Yht{h}{t}]{h}{} \in
    \LpM{m}{\nu}$}, and Minkowski's inequality then gives the
  desired~result.
\end{proof}

\section{Sensitivity analysis of the tuning parameters}
\label{app:sensitivity_analysis}
\setcounter{figure}{0} 

This section will investigate how sensitive
$\hatlgsdM[5]{\LGp}{\omega}{m}$ is to changes in the tuning
parameters (and the point $\LGp$).  This will be done by the
distance function $D$ introduced in
\cref{app:method_for_sensitivity_analysis}, together with plots that
reveal information about the frequency-dimension.

\Cref{app:Point_sensitivity,app:Bandwidth_sensitivity} respectively
consider the sensitivity of the point $\LGp$ and the bandwidth
$\bm{b}$, whereas the sensitivity of the truncation level $m$ is
discussed in \cref{app:Truncation_level_sensitivity}.
The effect the value of the block length $L$ has upon the
bootstrap-based pointwise confidence intervals is discussed in
\cref{app:Block_length_sensitivity}, since that gives the most
natural flow.

\Cref{How.to.select.the.tuning.parameters?} contains a discussion
related to the \textit{selection of tuning parameters} for
$\hatlgsdM[5]{\LGp}{\omega}{m}$,
and it also contains some references to the related problem of
selecting the bandwidth when a local Gaussian correlation is to be
estimated from a sample.

The scripts required for the replication of the results in this
section are contained in the \Rpackage \lgsdRpackage, and these
scripts can be used as templates for those that would like to
investigate other time series in a similar manner.  See
\cref{app:data_details} for details.

\subsection{Sensitivity analysis - the distance function}
\label{app:method_for_sensitivity_analysis}

An investigation of the sensitivity requires a tool that can measure
the differences that occur in the resulting estimates when the
tuning parameters are adjusted.  
Many techniques have been developed in order to deal with distances
between spectral functions, cf.\ e.g.\
\citet[Section~7]{BASSEVILLE2013621} and
\citet[Section~1]{georgiou07:_distan_rieman_metric_spect_densit_funct}.
Some approaches are based on proper distance functions, whereas other
use divergence/distortion measures where symmetry and the triangular
identity no longer are present.

A natural (and easy to implement) candidate for the case of interest
in this paper is the distance function inherited from the complex
Hilbert space of Fourier series on the interval
$\parenS{-\tfrac{1}{2},\tfrac{1}{2}}$, cf.\ e.g.\
\citet[Ch.~2.8]{Brockwell:1986:TST:17326}, i.e.\ for
$f(\omega)=\sumss{h=-\infty}{\infty}\rho(h)\ez[-2\pi i h]{}$ the norm
is defined by
$\subp{||f(\omega)||}{}{}{}{2} = \intss{-1/2}{1/2}
f(\omega)\overbar{f(\omega)}\d{\omega} = \sumss{h=-\infty}{\infty}
\subp{\rho(h)}{}{}{}{2}$.  This motivates the following definition.

\begin{definition}
  \label{def:sensitivity_analysis_distance_function}
  Given two spectra
  $\fz{1}(\omega)=\sumss{h=-\infty}{\infty}\rhoz{1}(h)\ez[-2\pi i
  h]{}$ and
  $\fz{2}(\omega)=\sumss{h=-\infty}{\infty}\rhoz{2}(h)\ez[-2\pi i
  h]{}$, the distance between them is denoted by
  \begin{equation}
    \label{eq:def:sensitivity_analysis_distance_function}
    D(\fz{1}(\omega),\fz{2}(\omega)) \defeq \sqrt{\sumss{h=-\infty}{\infty}
      \subp{\parenR{\rhoz{1}(h)-\rhoz{2}(h)}}{}{}{}{2} }.
  \end{equation}
  Furthermore: The notation $D(\fz{1}(\omega))$ will be interpreted
  as $D(\fz{1}(\omega),0)$, which implies that
  $D(\fz{1}(\omega),\fz{2}(\omega))$ also can be written as
  $D(\fz{1}(\omega) -\fz{2}(\omega))$ (which is used in \cref{fig:m_sensitivity_dmbp_percentages}).
\end{definition}

Note that $D$ will work both for real-valued and complex-valued
spectra, and the latter is of importance both with regard to the
univariate case when the point $\LGp$ lies of the diagonal, and with
regard to the multivariate case treated in
\myblind{\citet{jordanger17:_lgcsd}}{blinded reference}.

The obvious adjustment must be done when $D$ is used on
$m$-truncated estimates $\hatlgsdM{\LGp}{\omega}{m}$, i.e.\
$\rho(h)$ should be replaced with
$\lambdazM{h}{m}\cdot\hatlgacr{\LGp}{h}$ when
$|h|\leq m$, and with 0 when $|h|>m$.

The distance function in
\cref{def:sensitivity_analysis_distance_function} is not applicable
in the FFT-periodogram based approach to the estimation of spectral
densities, since that approach does not explicitly compute the
coefficients needed in
\cref{eq:def:sensitivity_analysis_distance_function}.
However, note that the deviance measure that is based on the
\textit{root mean squared error} (RMSE), cf.\ e.g.\
\citet[Section~3.2]{chen2019semiparametric}, is closely related to the
one used in the present paper.  To emphasise: If $\tildefz{1}(\omega)$
and $\tildefz{2}(\omega)$ are two periodogram-based estimates of the
spectral densities $\fz{1}(\omega)$ and $\fz{2}(\omega)$, then the
RMSE-distance is given by
\begin{equation}
  \label{eq:distance_RMSE}
  \Dz{\operatorname{RMSE}}\!\parenR{\tildefz{1}(\omega),
    \tildefz{2}(\omega)} \defeq \sqrt{\frac{1}{n}\sumss{l=0}{n-1}
    \subp{\parenS{\tildefz{1}(\omegaz{l}) -
        \tildefz{2}(\omegaz{l})}}{}{}{}{2}},
\end{equation}
where the summation is over all the Fourier-frequencies
$\omegaz{l}=l/n$ in the interval $[0,1)$.  A quick inspection of the
expression under the square-root in \cref{eq:distance_RMSE} reveals
that this is a Riemann-sum approximation of the integral
$\intss{0}{1} \parenR{\fz{1}(\omega) -
  \fz{2}(\omega)}\overbar{\parenR{\fz{1}(\omega) -
    \fz{2}(\omega)}}\d{\omega}$.  This will, when $\nlimit$, converge
towards $\subp{||\fz{1}(\omega) - \fz{2}(\omega)||}{}{}{}{2}$, which
shows the close connection with the distance function from
\cref{def:sensitivity_analysis_distance_function}.

Another more commonly used divergence measure is also considered in
\citet{chen2019semiparametric}, and that is the divergence measure
based on the Kullback-Leibler (KL) divergence \citet{kullback1951}.
For the periodogram-based approach this can be written as
\begin{equation}
  \label{eq:distance_KL}
  \Dz{\operatorname{KL}}\!\parenR{\tildefz{1}(\omega),
    \tildefz{2}(\omega)} \defeq \sumss{l=0}{n-1}
  \tildefz{1}(\omegaz{l}) \ln
  \parenR{\frac{\tildefz{1}(\omegaz{l})}{\tildefz{2}(\omegaz{l})}}.
\end{equation}
An implementation of the KL-approach in this paper was briefly
considered, but it was discarded since the local Gaussian spectral
densities $\lgsd{\LGp}{\omega}$ in general will be complex-valued
functions, and it was thus not clear how to adjust
\cref{eq:distance_KL} in a proper manner.

\textbf{Regarding the frequency-dimension:} 
A distance measure like the $D$ defined in
\cref{def:sensitivity_analysis_distance_function} does not contain
any information about the frequencies, and completely different
spectral densities can have the same distance-value.  It is thus,
for the purpose of sensitivity analysis, important to combine
distance-based plots with plots that reveal something about the
frequency-component too.

\subsection{Sensitivity analysis: The point $\LGp$}
\label{app:Point_sensitivity}

The bandwidth $\bm{b}$ 
is a central tuning parameter when an estimate of the $m$-truncated
local Gaussian spectral density $\lgsdM[p]{\LGp}{\omega}{m}$ is
desired for a given point $\LGp=\LGpoint$.  The point $\LGp$ itself is
not a tuning parameter of the estimation algorithm, but an
investigator will obviously be interested in information about how
$\lgsdM[p]{\LGp}{\omega}{m}$ varies with $\LGp$, and it is thus also
natural to consider the sensitivity of the estimate
$\hatlgsdM[p]{\LGp}{\omega}{m}$ relatively the selected point.

Two plots related to this particular investigation was included in
the main part, i.e.\
\cref{fig:dmbp_heatmap_and__levels_vs_norm,fig:dmt_heatmap_and__levels_vs_norm},
which respectively considered the \textit{local trigonometrical
  example} and the \texttt{dmbp}-data.  It is preferable to have a
plot available for the present discussion too, and
\cref{fig:apARCH_heatmap_and__levels_vs_norm} shows an example based
on one single simulation from the apARCH$(2,3)$ that was fitted to
the \texttt{dmbp}-data, cf.\ \cref{sec:Real_data}.

\begin{figure}[h]
  {\centering \includegraphics[width=1\textwidth]{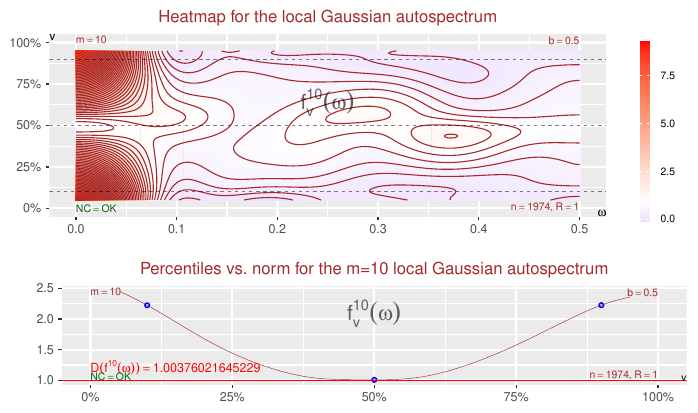}
  }
  \caption[]{Heatmap and corresponding distance-based plots for the
    the apARCH$(2,3)$-model that was fitted to the
    \texttt{dmbp}-data, showing how $\hatlgsdM{\LGp}{\omega}{10}$
    varies with the percentiles for the diagonal-points $\LGp$.  The
    percentiles used in \cref{fig:GARCH}, i.e.\ 10\%, 50\% and 90\%,
    have been highlighted with
    lines/points.}\label{fig:apARCH_heatmap_and__levels_vs_norm}
\end{figure}

The point $\LGp=\LGpoint$ is bivariate, but the present
investigation will restrict its attention to the diagonal cases.
The requirement $\LGpi{1} = \LGpi{2}$ is used for simplicity since
it ensures that the resulting local Gaussian spectral densities will
be real valued.

This restriction implies that the point $\LGp$ is allowed to vary
continuously along a one dimensional line (the diagonal), and a
heatmap can be used to see how $\hatlgsdM[p]{\LGp}{\omega}{m}$
varies with $\LGp$ (for a fixed~$\bm{b}$).  It is also of interest
to use the distance function $D$ from
\cref{def:sensitivity_analysis_distance_function} to create a
distance-based plot that shows how the norm
$D\!\parenR{\hatlgsdM[p]{\LGp}{\omega}{m}}$ varies with $\LGp$.

The points $\LGp$ in \cref{fig:apARCH_heatmap_and__levels_vs_norm}
ranges from the 5\% percentile to the 95\% percentile of the
standard normal distribution, increasing in steps of 0.5\%
(altogether 91 different points).  This percentile based selection
implies that the corresponding points are not equally spaced along
the actual diagonal, and the plots in
\cref{fig:apARCH_heatmap_and__levels_vs_norm} have thus used the
option that the points $\LGp$ have been presented according to their
underlying percentile-values --- which implies that these plots
primarily reveals information about the 
copula-structure of the time series under investigation.

It can be seen from \cref{fig:apARCH_heatmap_and__levels_vs_norm}
that $\hatlgsdM{\LGp}{\omega}{m}$ near the 50\% percentile is quite
close to an i.i.d.\ white noise situation --- and it also seems to
be a very clear symmetry around the 50\% percentile.  This is in
stark contrast to the situation seen for the \texttt{dmbp}-data,
cf.\ \cref{fig:dmbp_heatmap_and__levels_vs_norm}, which indicates an
asymmetry around the 50\% percentile

Note that the 5\% and 95\% percentiles are quite far out in the
tails of the distribution, and it is thus natural to assume that the
selected bandwidth 
in those cases might fail to work properly --- the small sample
variation of the points closest to the point $\LGp$ might simply
render the estimated local Gaussian autocorrelations rather dubious.
It is possible to counter this problem by selecting a larger
bandwidth for percentiles in the tails, but it is then important to
keep in mind that a too large bandwidth might completely miss the
desired local structure at the point of investigation.

\textbf{Heatmap-plots for the estimates $\hatlgacr{\LGp}{h}$:} The
construction of the two plots in
\cref{fig:apARCH_heatmap_and__levels_vs_norm} requires the
computation of all of the underlying estimates $\hatlgacr{\LGp}{h}$,
for $h=1,\dotsc,m$.  It is thus also possible to create
heatmap-based plots that can visualise how these estimates changes
as the point $\LGp$ moves from the 5\% to the 95\% percentile, and
this can for the apARCH$(2,3)$-example be seen in
\cref{fig:apARCH_v_heatmap_lgacr}.

\begin{figure}[h]
  {\centering \includegraphics[width=\textwidth]{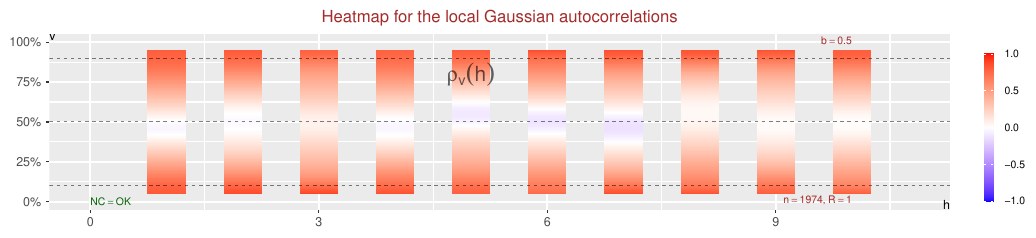}
  }
  \caption[]{Heatmap for $\hatlgacr{\LGp}{h}$, for the
    apARCH$(2,3)$-model fitted to the \texttt{dmbp}-data.  The
    percentiles used in \cref{fig:GARCH}, i.e.\ 10\%, 50\% and 90\%,
    have been highlighted with lines.}
  \label{fig:apARCH_v_heatmap_lgacr}
\end{figure}

It is clear from \cref{fig:apARCH_v_heatmap_lgacr} that the
estimated values $\hatlgacr{\LGp}{h}$ are near symmetric around the
50\% percentile, which thus explains the corresponding symmetry for
$\hatlgsdM{\LGp}{\omega}{m}$ seen in
\cref{fig:apARCH_heatmap_and__levels_vs_norm}.  For the
\texttt{dmbp}-data, see \cref{fig:dmbp_v_heatmap_lgacr}, a similar
level of symmetry is not to the same extent present.  It might from
such plots be possible to identify if it is the contribution from
some particular lags $h$ that drives the asymmetry of the
corresponding estimated spectrum $\hatlgsdM{\LGp}{\omega}{m}$.

\begin{figure}[h]
  {\centering \includegraphics[width=\textwidth]{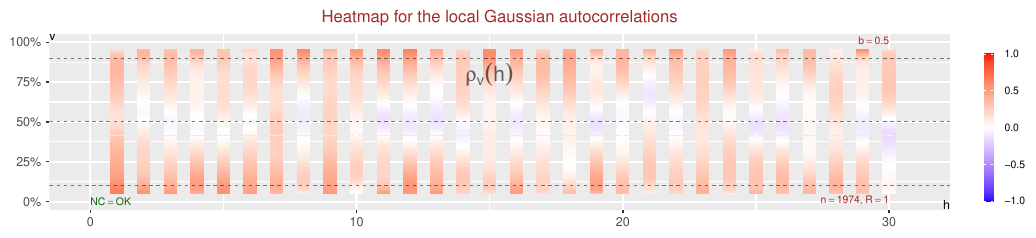}
  }
  \caption[]{Heatmap for $\hatlgacr{\LGp}{h}$, for the
    \texttt{dmbp}-data.  The percentiles used in \cref{fig:dmbp},
    i.e.\ 10\%, 50\% and 90\%, have been highlighted with
    lines.}\label{fig:dmbp_v_heatmap_lgacr}
\end{figure}

For completeness, \cref{fig:dmt_v_heatmap_lgacr} has been included
in order to show how the situation looks like for the \textit{local
  trigonometric example} seen in
\cref{fig:dmt_heatmap_and__levels_vs_norm}.

\begin{figure}[h]
  {\centering \includegraphics[width=\textwidth]{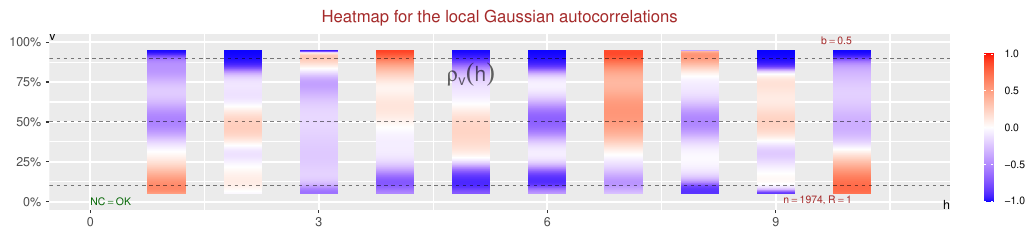}
  }
  \caption[]{Heatmap for $\hatlgacr{\LGp}{h}$, for the \textit{local trigonometric case}.
    The percentiles used in \cref{fig:trigonometric}, i.e.\ 10\%,
    50\% and 90\%, have been highlighted with
    lines.}\label{fig:dmt_v_heatmap_lgacr}
\end{figure}

\subsection{Sensitivity analysis: The bandwidth $\bm{b}$}
\label{app:Bandwidth_sensitivity}

The bandwidth $\bm{b}=\parenR{\bz{1},\bz{2}}$ is bivariate, but it
is natural to assume $\bz{1}=\bz{2}$ when a univariate time series
is investigated.  With this restriction it follows that the
sensitivity of $\hatlgsdM{\LGp}{\omega}{m}$ due to changes in the
bandwidth $\bm{b}$ can be investigated in a similar manner to the
one used in the preceding section for the diagonal points $\LGp$.

The bandwidth $\bm{b}$ should be selected according to the
Goldilocks principle, i.e.\ it should neither be \enquote{too low}
nor \enquote{too high}, it must be \enquote{just right}.  The
heatmap and distance-based plots from
\cref{fig:apARCH_heatmap_and__levels_vs_norm} can easily be adjusted
to visualise the problems that occur when the bandwidth does not
belong to the \enquote{just right} region.  The plots shown in
\cref{fig:heatmap_distance_v} does once more consider the
\texttt{dmbp}-data, and in this case the bandwidth ranges from 0.25
to 1.5 in steps by 0.005 (altogether 251 different bandwidths).  The
bandwidth $b=0.5$ has been highlighted since it was that value that
was used in \cref{fig:dmbp}.

\begin{figure}[h]
  {\centering \includegraphics[width=.9\textwidth]{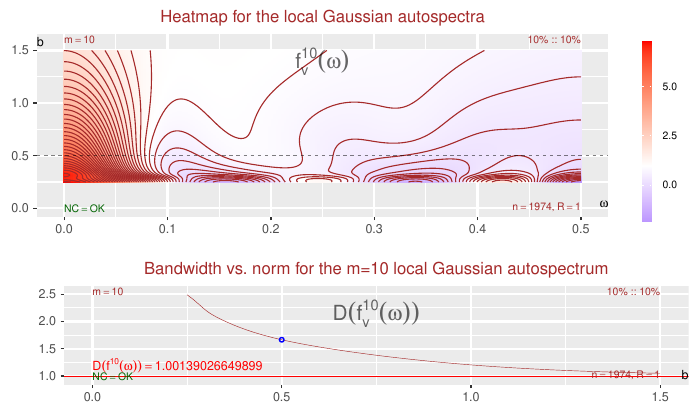}
  }
  \caption[]{Heatmap and corresponding distance-based plots based on
    the \texttt{dmbp-data}, showing how $\hatlgsdM{\LGp}{\omega}{10}$
    varies with the bandwidth $b$.  The bandwidth used in
    \cref{fig:dmbp}, i.e.\ $b=0.5$, has been
    highlighted with a line/point.}\label{fig:heatmap_distance_v}
\end{figure}

The problem when $\bm{b}$ becomes too large is that the estimated
local Gaussian autocorrelations $\hatlgacr{\LGp}{h}$ no longer will
capture the local structure of interest, and the corresponding
estimated local Gaussian spectral density
$\hatlgsdM{\LGp}{\omega}{m}$ (which no longer deserves to be
referred to as \enquote{local}) will then be indistinguishable from
the ordinary spectral density.  It is clear from
\cref{fig:heatmap_distance_v} that a bandwidth of $b=1.5$ is far too
large for the present investigation.

The expected behaviour when a too low bandwidth is used is that it
will trigger a degeneration of the estimated local Gaussian
autocorrelations, i.e.\ $\hatlgacr[p]{\LGp}{h}$ will tend towards
either $+1$ or $-1$ regardless of the actual structure of the
underlying density distributions.

The reason for this is that $\hatlgacr[p]{\LGp}{h}$ will, due to the
kernel function from the density estimation algorithm, become
increasingly sensitive to the position of the $h$-lagged pairs
$\parenR{\Yz{t+h},\Yz{t}}$ that lies nearest to the point
$\LGp=\LGpoint$.  To clarify, for a given point $\LGp$ there will be
a collection of Euclidean distances to the $h$-lagged pairs
$\parenR{\Yz{t+h},\Yz{t}}$ in the sample, and these distances could
(after a re-indexing) be sorted in ascending order
$\TSR{\dz{i}}{i=1}{n-h}$.

Under the assumption that it is the product normal kernel that is
used, the contribution from a lag-$h$ pair
$\parenR{\Yz{t+h},\Yz{t}}$ that lies a distance of $\dz{i}$ from
$\LGp$ will be weighted by
$\wz{i:\bm{b}} \defeq \tfrac{1}{2\pi \bz[2]{}}\ez[-d_i^2/2b^2]{}$
--- and it is now natural to consider the set of all the weights
$\mathcalWz{\LGp:\bm{b}}\defeq\TSR{\wz{i:\bm{b}}}{i=1}{n-h}$.

The primary detail of interest is how much larger the weights are
for the pairs that lies closest to $\LGp$, and it thus necessary to
consider the fraction
$\rz{ij:\bm{b}}\defeq\wz{i:\bm{b}}/\wz{j:\bm{b}} =
\parenRz[1/b^2]{}{\ez[d_j^2-d_i^2]{}}$.  The number $\rz{ij:\bm{b}}$
will, when $\dz{i}<\dz{j}$, grow to $\infty$ when $\blimit$, and
this implies that the estimation algorithm for small $b$-values will
become increasingly sensitive to the $h$-lagged pairs that lies
closest to the point $\LGp$ when the bandwidth shrinks --- and in
the end it would thus be natural to have a degeneration of the
estimated value $\lgacr{\LGp}{h}$ to either $+1$ or $-1$.

Note that the corresponding
$D\!\parenR{\hatlgsdM[p]{\LGp}{\omega}{m}}$ will grow
when this degeneration happens, as can be seen for $b=0.25$ in the
distance-based plot in \cref{fig:heatmap_distance_v}.

\textbf{Heatmap-plots for the estimates $\hatlgacr{\LGp}{h}$:} It is
here, as it was for the investigation of the diagonal points $\LGp$,
possible to also consider a heatmap based investigation of the
underlying estimates $\hatlgacr{\LGp}{h}$, for $h=1,\dotsc,m$.  Such
a plot is given in \cref{fig:dmbp_b_heatmap_lgacr}, and it can there
be observed that it for some of $\hatlgacr{\LGp}{h}$-estimates is
the case that the estimates first switch sign from positive to
negative --- and then they grows quickly towards $-1$.  This kind of
behaviour is expected to occur when the bandwidth $\bm{b}$ has
shrunk to a level that implies that the kernel function in the local
penalty function, cf.\ \cref{eq:QhN}, gives high weights to the few
observations $\parenR{\Yz{t+h},\Yz{t}}$ nearest $\LGp$, and very low
weights elsewhere.

\begin{figure}[h]
  {\centering \includegraphics[width=\textwidth]{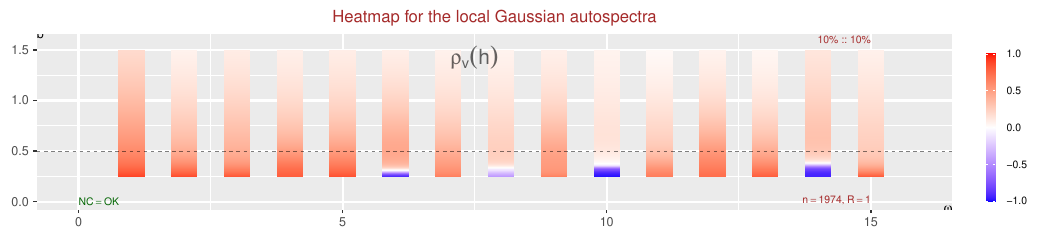}
  }
  \caption[]{Heatmap for $\hatlgacr{\LGp}{h}$, for the
    \texttt{dmbp}-data, showing the effect of different bandwidths.
    The bandwidth used in \cref{fig:dmbp}, i.e.\ 0.5 has been
    highlighted with a line.}\label{fig:dmbp_b_heatmap_lgacr}
\end{figure}

Note that \cref{fig:heatmap_distance_v} considers the situation
where $\LGp$ is the diagonal point corresponding to the
lower tail, but similar plots could have been included for the cases
where $\LGp$ corresponds to either the center or the upper
tail.\footnote{
  The interested reader can use the scripts in the \Rpackage
  \lgsdRpackage\ to get access to these plots for the center and
  upper tail, cf.\ \cref{app:data_details} for details.}  A comparison of the
distance-based plots for these three points is presented in
\cref{fig:three_distances_v}, in order to show how the
bandwidth-sensitivity of $\hatlgsdM[p]{\LGp}{\omega}{m}$ also
depends on the selected point $\LGp$.  A common scale has been used
for the three subplots in order to emphasise the asymmetry between
the lower and upper tail.

\begin{figure}[h]
  {\centering \includegraphics[width=\textwidth]{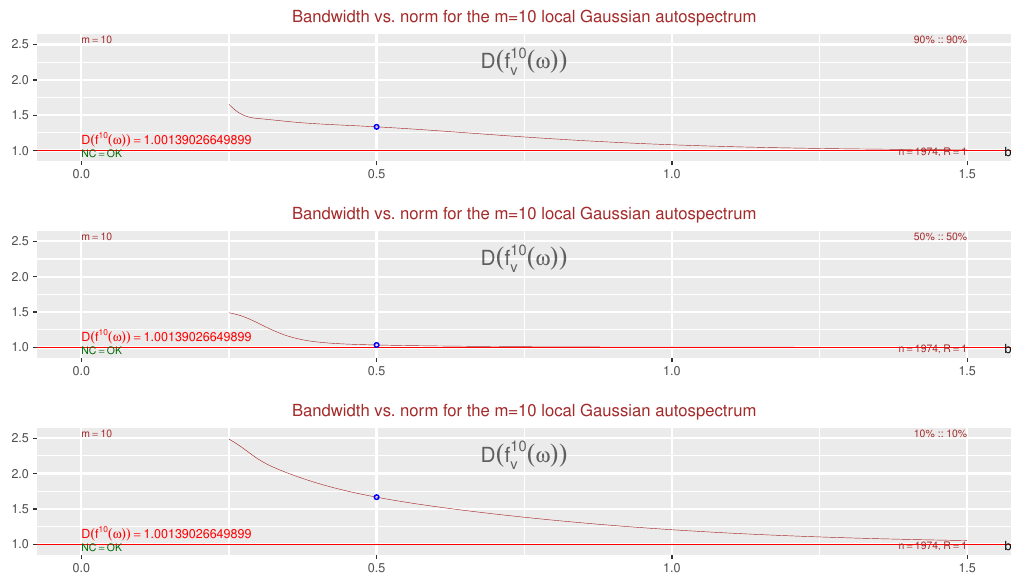}
  }
  \caption[]{Distance-based plots for the \texttt{dmbp-data}, showing
    how $\hatlgsdM{\LGp}{\omega}{10}$ varies with the bandwidth $b$
    for the three percentiles used in \cref{fig:dmbp}, i.e.\ 10\%,
    50\% and 90\%.  The bandwidth used in \cref{fig:dmbp}, i.e.\
    $b=0.5$, has been highlighted by a point.}\label{fig:three_distances_v}
\end{figure}

Note that the center plot of \cref{fig:three_distances_v} reveals that
the \enquote{too low bandwidth problem} occurs a bit slower in a high
density region, but it will even there eventually create a situation
where the estimated local Gaussian autocorrelations degenerate towards
either $+1$ or $-1$.

The heatmap and distance-based plots in
\cref{fig:heatmap_distance_v,fig:dmbp_b_heatmap_lgacr,fig:three_distances_v} can detect the
clearly undesirable regions for the bandwidth $\bm{b}$, but they do
not reveal what the \enquote{just right} value for the bandwidth
should be.  Nevertheless, it is still possible to gain some insight
into how sensitive the estimate of $\lgsdM[p]{\LGp}{\omega}{m}$ will
be for minor variations of the bandwidth $\bm{b}$, and that can be
useful with regard to the selection of a few bandwidths that can be
used when e.g.\ a bootstrap-investigation is to be performed.

The framework used in the \Rpackage \lgsdRpackage\ ensures that it is
trivial to compute and investigate a wide range of bandwidths
simultaneously, and the key idea is that knowledge of the local
dependency structure can still be obtained even if the selected
bandwidths are not spot on the \enquote{just right} value for the
bandwidth.

\subsection{Sensitivity analysis: The truncation level $m$}
\label{app:Truncation_level_sensitivity}

The shape of $\lgsdM{\LGp}{\omega}{m}$ for a low truncation level
can be different from the shape seen when a higher truncation level
is used.  It is thus of interest to investigate how sensitive the
estimates $\hatlgsdM{\LGp}{\omega}{m}$ are to changes in the
truncation level $m$.

This issue can easily be probed by performing an initial
investigation with a high value for the maximum lag to be computed,
since the computational cost is not too large when only a single
sample (like the \texttt{dmbp}-data) is investigated. 
It did e.g.\ not take a long time to estimate $\lgacr{\LGp}{h}$ for
$h=1,\dotsc,200$, which was needed for the construction of
\cref{fig:dmbp_lag} in the main document --- and with these
estimates it is trivial to compare $\widehatfz[m]{}(\omega)$ and
$\hatlgsdM{\LGp}{\omega}{m}$ for $m$ up to 200, since the integrated
\Rref{shiny}-application of the \Rpackage \lgsdRpackage\ can animate
the changes that occur in the spectra when $m$ grows from 0 to 200.

The computational costs can become rather large when it is necessary
to find pointwise confidence intervals, since a high number of
replicates then must be investigated with the same configuration of
tuning parameters.  It is then important to figure out a sufficient
truncation level $m$, and restrict the attention to the estimates of
$\lgacr{\LGp}{h}$ for $h=1,\dotsc,m$.

A drawback with the \Rref{shiny}-based approach in \lgsdRpackage\ is
that it requires an inspection of many different plots.  It could
thus be of interest to also consider summary-plots that either use
the distance function $D$ from
\cref{def:sensitivity_analysis_distance_function}, or some
heatmap-based alternative visualisation of
$\hatlgsdM{\LGp}{\omega}{m}$, similar to those used for
$\hatlgacr{\LGp}{h}$ in
\cref{fig:dmbp_v_heatmap_lgacr,fig:dmt_v_heatmap_lgacr,fig:apARCH_v_heatmap_lgacr,fig:dmbp_b_heatmap_lgacr}.

\textbf{Distance plots:} 
It is possible to investigate the $m$-sensitivity by distance-based
plots, but those plots are less 
useful in this case.
One reason for this is that the norms
$D\!\parenR{\hatlgsdM{\LGp}{\omega}{m}}$ are monotonically
increasing as functions of $m$. This can easily be seen by first
recalling (cf.\
\myref{def:lgsd_estimator}{def:lgsd_esitimator_folded}) that the
estimates 
$\hatlgsdM{\LGp}{\omega}{m}$ are given by
$$\hatlgsdM[p]{\LGp}{\omega}{m} \defeq 1 + \sumss{h=1}{m}
\lambdazM{h}{m}\cdot \hatlgacr{\LGpd}{h} \cdot \ez[+2\pi i\omega h]{}
+ \sumss{h=1}{m} \lambdazM{h}{m}\cdot \hatlgacr{\LGp}{h} \cdot
\ez[-2\pi i\omega h]{},$$ 
and then keeping in mind that the lag-window function
$\lambdazM{h}{m}$ satisfies $\lambdazM{h}{m+1}\geq\lambdazM{h}{m}$.
It follows that
$D\!\parenR{\hatlgsdM{\LGp}{\omega}{m+1}} \geq
D\!\parenR{\hatlgsdM{\LGp}{\omega}{m}}$, which does not provide any
useful new information.

Instead of a plot of the norms
$D\!\parenR{\hatlgsdM[p]{\LGp}{\omega}{m}}$, it is slightly more
interesting to consider a plot that shows
$D\!\parenR{\hatlgsdM[p]{\LGp}{\omega}{m+1}-\hatlgsdM[p]{\LGp}{\omega}{m}}$,
i.e.\ the distances between $\hatlgsdM[p]{\LGp}{\omega}{m+1}$ and
$\hatlgsdM[p]{\LGp}{\omega}{m}$ in the Hilbert room of Fourier
series.  This idea is shown in
\cref{fig:m_sensitivity_dmbp_percentages} for the three diagonal
points and 200 lags that was included in \cref{fig:dmbp_lag}.
Note that \cref{fig:m_sensitivity_dmbp_percentages} takes into
account the scaling due to the lag-window function
$\lambdazM{h}{m}$, and as such it does provide some new information
compared to that contained in the plot showing the estimated local
Gaussian autocorrelations.

\begin{figure}[h]
  {\centering \includegraphics[width=\textwidth]{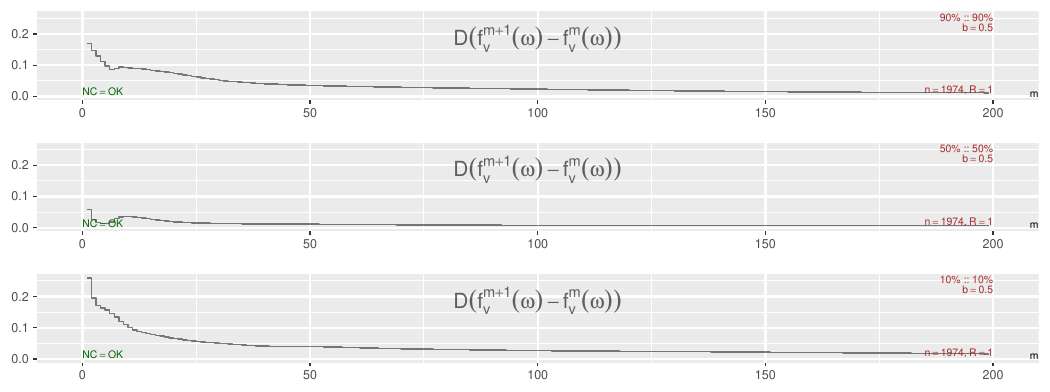}
  }
  \caption[]{Distances between successive $m$-truncations of the local
    spectra,
    \texttt{dmbp}-data.}\label{fig:m_sensitivity_dmbp_percentages}
\end{figure}

The three subplots of \cref{fig:m_sensitivity_dmbp_percentages}
shows that
$D\!\parenR{\hatlgsdM[p]{\LGp}{\omega}{m+1}-\hatlgsdM[p]{\LGp}{\omega}{m}}$
rather quickly starts to decrease monotonically, which is as
expected given the presence of the lag-window function
$\lambdazM{h}{m}$.  This decrease implies that the effect of a
change in the truncation level from $m$ to $m+1$ becomes smaller as
$m$ grows, and the sensitivity is thus largest when $m$ is small. 

\Cref{fig:m_sensitivity_dmbp_percentages} might indicate that the
$m=10$ used in the main part is a bit to small.  However, the
purpose of that particular truncation level was simply to show
that even a low truncation level could be used to detect the
presence of nonlinear dependency structures in the time series under
investigation, i.e.\ structures not detected by the ordinary spectrum.

It is natural to assume that two \textit{successive} local Gaussian
spectra $\hatlgsdM{\LGp}{\omega}{m}$ and
$\hatlgsdM{\LGp}{\omega}{m+1}$ should be similar in shape when $m$
has grown a bit, but this does not imply that the
\textit{accumulated} changes to $\hatlgsdM{\LGp}{\omega}{m}$ are
negligible.  It is thus important to also inspect the
frequency-dimension, and this can as mentioned above easily be done
by the interactive \Rref{shiny}-application in the
\lgsdRpackage-package.

\textbf{Heatmap plots:} 
The truncation level $m$ is a discrete tuning parameter, and an
inspection based on a heatmap-based approach could thus follow the
setup used for the estimated $\hatlgacr{\LGp}{h}$-values seen in
\cref{fig:dmbp_v_heatmap_lgacr,fig:dmt_v_heatmap_lgacr,fig:apARCH_v_heatmap_lgacr,fig:dmbp_b_heatmap_lgacr}.
The \Rpackage \lgsdRpackage\ contains a script that can be used to
create such a heatmap-based plot for $\hatlgsdM{\LGp}{\omega}{m}$,
with the frequencies $\omega$ along one axis and the truncation
levels $m$ along the other.

The resulting heatmap-based plot clearly showed that the peak seen
in \cref{fig:dmbp} at $\omega=0$ (for $m=10$ and a point either in
the lower or upper tail) became even more dominating as $m$
increased, and the peak dominated to such an extent that the
heatmap-based plot did not reveal anything about the other
frequencies.  This plot has thus not been included here, but the
script is 
available in \lgsdRpackage, cf.\ \cref{app:data_details} for
details.

\section{How to select the tuning parameters?}
\label{How.to.select.the.tuning.parameters?}
\setcounter{figure}{0} 

Several tuning parameters are required in order to compute the
$m$-truncated estimate $\hatlgsdM[5]{\LGp}{\omega}{m}$ of the local
Gaussian spectrum $\lgsd{\LGp}{\omega}$, for a given point $\LGp$.
In addition to the truncation level $m$, there is a bandwidth
$\bm{b}$ (to be used when estimating the local Gaussian
autocorrelations $\lgacr{\LGp}{h}$, for $h\in\parenC{1,\dotsc,m}$).
There is also a lag-window function $\lambdaz{m}(h)$ used for
smoothing.

The sensitivity analysis in \cref{app:sensitivity_analysis}
considered the effect of minor changes to the tuning parameters
$\bm{b}$ and $m$, and it did also discuss the sensitivity of
$\hatlgsdM[5]{\LGp}{\omega}{m}$ that is due to the position of the
point $\LGp$ --- which is of interest to know when a given
sample/model is to be investigated.

The task of finding \enquote{optimal tuning parameters} lies beyond
the scope of this paper, and the focal point of interest in this
section will be to give some advice with regard to how the \Rpackage
\lgsdRpackage\ can be used to investigate a given sample/model, cf.\
\cref{Using.the.lgsdRpackage} for the details.  A few comments
related to the selection of the bandwidth $\bm{b}$ is given in
\cref{sec:Some.comments.regarding.the.bandwidth}, primarily in order
to give some pointers to papers that have discussed bandwidth
selection for the estimation of the local Gaussian correlation
$\lgcor{\LGp}$.

\subsection{Using the \Rpackage \lgsdRpackage}
\label{Using.the.lgsdRpackage}

The \Rpackage \lgsdRpackage\ can compute
$\hatlgsdM[5]{\LGp}{\omega}{m}$ for a wide range of tuning
parameters, and for a huge selection of different points $\LGp$.
The integrated \Rref{shiny}-application enables an easy interactive
investigation of the resulting estimates, with an interface that
makes it trivial to switch between visualisations of the estimated
local Gaussian autocorrelations $\hatlgacr{\LGp}{h}$ and the
corresponding estimated local Gaussian spectral densities
$\hatlgsdM[5]{\LGp}{\omega}{m}$.

The computational cost for one single estimate of the local Gaussian
correlation $\lgacr{\LGp}{h}$, for a given lag $h$, a given
bandwidth $\LGp$ and a given point $\LGp$, is usually not that high
(depends on the sample size $n$).  The computational cost does
however quickly escalate when a huge combination of points $\LGp$,
bandwidths $\bm{b}$ and large truncation level $m$ is used.  It
becomes even worse if it is of interest to produce pointwise
confidence intervals, since it then will be necessary to have $R$
replicates of every configuration of these tuning parameters.

This implies that it for a practical investigation is natural to
first do the computations on a single sample, a few bandwidths
$\bm{b}$ and a wide range of points $\LGp$.  The truncation level
$m$ could in this initial investigation probably be rather low,
e.g.\ $m=30$, since the key observation is that it is differences
between the $m$-truncated ordinary and local Gaussian spectra that
can reveal the presence of non-Gaussian dependency structures in the
sample.

The next step of the investigation is the inspection of the heatmap-
and distance-based plots of the estimates
$\hatlgsdM[5]{\LGp}{\omega}{m}$, and from this it is then possible
to figure out if there are some subset of the points $\LGp$ that it
would be of particular interest to investigate further.  If such
points are identified, then it is possible to restrict another
investigation to these points, and then perform e.g.\ $R=100$
replicates in order to produce the pointwise confidence intervals.

This procedure was used in
\cref{app:fig:trigonometric.C1.component}, where the aim of the
investigation was to show that for a sufficiently large sample from
the \textit{local trigonometric} model used in
\cref{sec:Deterministic_trigonometric_models}, it should be possible
to detect the $\Cz{1}(t)$ component that only occurred with a
probability of $\pz{1}=0.05$.  In this case a range of diagonal
points $\LGp$ were selected from the lower tail, and one sample was
used as the basis for the heatmap- and distance-based plots seen in
\cref{fig:dmt_for_extreme_tail_heatmap_and__levels_vs_norm}.  From
this it was then easy to identify a suitable point $\LGp$ that could
be used to create the plot in
\cref{fig:local.trigonometric.C1-component}, where the pointwise
confidence intervals also are present.

This kind of investigation is easy to reproduce for other samples,
since the scripts in the \Rpackage \lgsdRpackage\ can be modified in
order to deal with similar investigations, cf.\ the discussion in
\cref{app:data_details} for further details.

\subsection{Some comments regarding the bandwidth $\bm{b}$}
\label{sec:Some.comments.regarding.the.bandwidth}

The bandwidth \mbox{$\bm{b}=(.5,.5)$} used as default in
\cref{sec:Examples} of the main part was selected based on the fact
that \mbox{$b=.5$} is quite close to the value obtained when the
formula \mbox{$b\approx 1.75\nz[-1/6]{}$} was given the value
\mbox{$n=1974$} (the length of the \texttt{dmbp}-data).  This
formula, due to H{\aa}kon Otneim, is based on an empirical
comparison with a cross-validation bandwidth algorithm used in
\citet{otneim2017locally}, and it has been applied here even though
it originates from a bandwidth-selection algorithm aimed at
computing density estimates based on the one-free-parameter local
Gaussian approximation employed in that paper.

There does exist a leave-one-out cross-validation algorithm for the
selection of the bandwidth to be used when estimating the local
Gaussian correlation based on independent observations, see
\citet[Section~3.4]{Berentsen2014:departure_from_independence} for
details.  However, the estimation of the local Gaussian spectral
density $\lgsdM[1]{\LGp}{\omega}{m}$ requires the estimation of $m$
different local Gaussian autocorrelations $\lgacr[1]{\LGp}{h}$, and
such cross-validation algorithms then becomes quite time
consuming\footnote{Tests were performed to see if it might be
  possible to only use the bandwidth-algorithm for the case
  \mbox{$h=1$}, and then let the higher lags inherit the estimated
  bandwidth~--- but it turned out that that assumption was not a
  viable one.  In particular, the bandwidths estimated for the
  higher lags did not need to be close to the one estimated for the
  first lag.}  --- in particular if it in addition is necessary to
use bootstrapping in order to obtain pointwise confidence intervals
for the estimates.  Moreover, it may be a bit questionable to apply
an algorithm developed for independent observations in a time series
setting.  In particular, the leave-one-out cross-validation has some
flaws if the aim is model selection based upon dependent data, see
\citet{shao1993linear,burman1994cross,racine2000consistent}, where
the concepts leave-$\nz{\nu}$-out cross-validation, \mbox{$h$-block}
cross validation, and \mbox{$hv$-block} cross-validation were
introduced as better tools for the dependent case.

\section{Regarding sampling and resampling}
\label{app:regarding_resampling}
\setcounter{figure}{0} 

This section will discuss sampling related issues, both with regard
to the parametric and the nonparametric cases.  Details related to
the trivial case of sampling from parametric models are given in
\cref{app:simulations.from.a.parametric.model}.
\Cref{app:resampling_parametric_bootstrap} discusses the approach
based on parametric bootstrapping, which can be of interest in order
to see if samples from a model fitted to a given data-set have the
same dependency structure as the original data.  This section
includes a plot similar to one of the diagnostic plots used in
\citet{BIRR2019122}, in which points $\LGp$ both on and off the
diagonal have been used in the investigation.

Nonparametric and model free bootstrap strategies are discussed in
\cref{app:nonparametric_resampling_strategies,app:adjusted_resampling_algorithm},
and it is there seen that a slightly adjusted version of the block
bootstrap, cf.\ \cref{def:index_based_block_bootstrap_for_tuples} on
page \pageref{def:index_based_block_bootstrap_for_tuples}, can be a
useful resampling strategy for the estimators that are used to find
the local Gaussian spectral densities.

A sensitivity analysis of the block length argument $L$ (used in the
adjusted resampling algorithm) is given in
\cref{app:Block_length_sensitivity}, and a few additional comments
related to problematic issues with the initial approach are given in
\cref{app:What.about.the.ordinary.block.bootstrap?}.

\subsection{Simulations from a parametric model}
\label{app:simulations.from.a.parametric.model}

Simulations are trivial for parametric time series models, since new
independent samples (of the same length $n$) can be made directly
from the model.  The estimates of $\lgsdM{\LGp}{\omega}{m}$ (for the
specified values of $m$ and $\bm{b}$) are then computed for each of
these samples, the mean of the resulting estimated spectra is used
as the proxy for the true spectra, whereas pointwise confidence
intervals are constructed directly from the collection of estimated
spectra. 

\subsection{Parametric bootstrap and local sanity-testing of models}
\label{app:resampling_parametric_bootstrap}

A parametric bootstrap approach can be used to investigate models
fitted to real data, and this is e.g.\ used in \citet{BIRR2019122}.
The idea behind the parametric bootstrap is that a parametric model
first is fitted to the original sample, and then that fitted model
is used when resampling --- which implies that the second step in
this procedure is identical to the one described in
\cref{app:simulations.from.a.parametric.model}.

This approach can be used to perform a local sanity-test of the
fitted model, since it becomes possible to identify
points/frequencies with a clear mismatch between the local
structures detected in the original sample and those seen in samples
from the fitted model.  The plot presented in
\cref{fig:aparch_dmbp_comparison}, which is similar in structure to
one of the plots in \citet{BIRR2019122}, shows how such a comparison
can be performed for the \texttt{dmbp}-data and the
apARCH$(2,3)$-model that was seen in \cref{fig:GARCH,fig:dmbp} in
\cref{sec:Real_data}.

\begin{figure}[h]
  {\centering \includegraphics[width=1\linewidth]{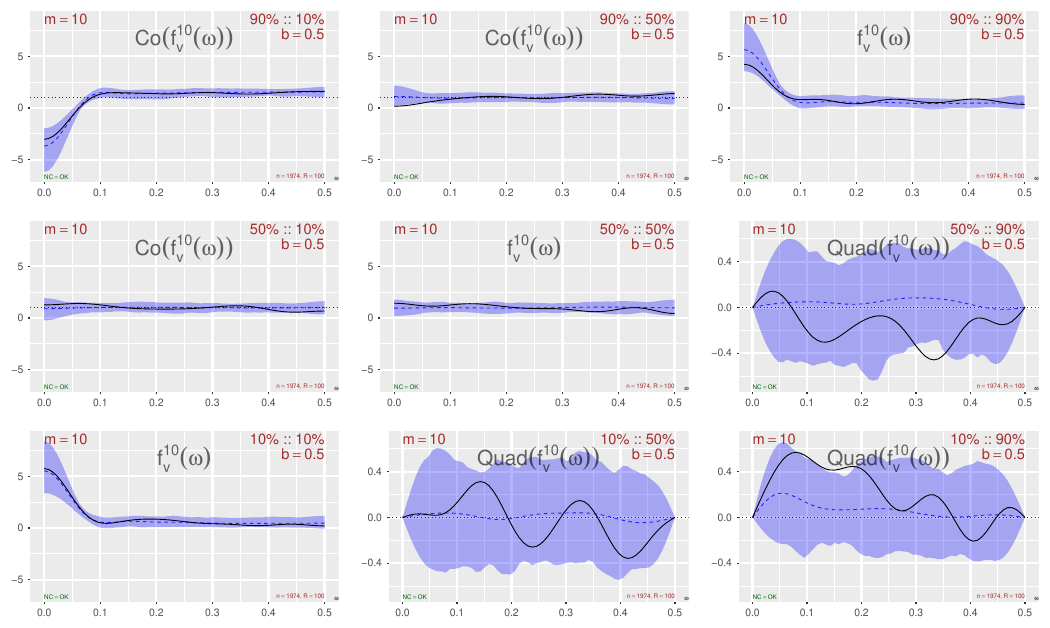}
  }
  \caption[]{The estimates of $\lgsdM{\LGp}{\omega}{m}$ based on the
    \texttt{dmbp}-data (solid lines) have been superimposed on the
    corresponding estimates based on samples from the fitted
    apARCH$(2,3)$-model.  The off-diagonal points $\LGp$ give
    complex-valued $\lgsdM{\LGp}{\omega}{m}$, see main text for
    explanation.}\label{fig:aparch_dmbp_comparison}
\end{figure}

The key idea in \cref{fig:aparch_dmbp_comparison} is that estimates
of $\lgsdM{\LGp}{\omega}{m}$ based on the original sample can be
superimposed on the plots based on parametric bootstrapping from the
fitted model, and this makes it easy to compare them.

Nine different points $\LGp=\LGpoint$ are considered in
\cref{fig:aparch_dmbp_comparison}, and these are based on the
combinations that can be created when $\LGpi{1}$ and $\LGpi{2}$
varies over the 10\%, 50\% and 90\% percentiles of the standard
normal distribution.  The corresponding plots are ordered in a grid
in accordance with the position of these nine points in the plane,
as can be seen by the information about $\LGp$ in the upper right
corner of the respective plots.

The estimates of $\lgsdM{\LGp}{\omega}{m}$ for the three diagonal
points are real-valued, and this is thus in essence the same plots
that was seen in \cref{fig:GARCH} --- but the information about the
global spectrum has been removed and the solid lines from
\cref{fig:dmbp} have been added to the plots.

The estimates $\lgsdM{\LGp}{\omega}{m}$ are complex-valued for the
six off-diagonal points, and in this case the \Rpackage
\lgsdRpackage\ follows the convention used for the complex-valued
cross-spectra, viz.\
$\operatorname{Co}\!\parenR{\lgsdM{\LGp}{\omega}{m}} =
\operatorname{Re}\!\parenR{\lgsdM{\LGp}{\omega}{m}}$ and
$\operatorname{Quad}\!\parenR{\lgsdM{\LGp}{\omega}{m}} =
- \operatorname{Im}\!\parenR{\lgsdM{\LGp}{\omega}{m}}$.

The off-diagonal points are symmetric around the diagonal, i.e.\
both $\LGp=\LGpoint$ and its diagonal reflection $\LGpd=\LGpointd$
are present.  It is the case that
$\lgsd[p]{\LGp}{\omega} = \overline{\lgsd[p]{\LGpd}{\omega}}$, cf.\
\myref{th:lgsd_properties}{th:lgsd_reflection_property},
so it is sufficient to plot 
$\operatorname{Co}\!\parenR{\lgsdM{\LGp}{\omega}{m}}$ on one side of
the diagonal and 
$\operatorname{Quad}\!\parenR{\lgsdM{\LGp}{\omega}{m}}$ on the other
side.

Finally, the same scale is used for all plots showing real values,
whereas another scale is used for the plots related to the imaginary
parts.  This distinction is natural since the scale needed for the
imaginary part can be much smaller,
as can be seen in \cref{fig:aparch_dmbp_comparison}.

A comparison of the dashed and solid lines in
\cref{fig:aparch_dmbp_comparison} can now be used to see if there
might be any faults with the apARCH$(2,3)$-model that was fitted to
the \texttt{dmbp}-data.  The plots related to the real parts does
not give any indications that something is off, with a possible
minor exception near $\omega=0$ for the point at the upper tail (as
also observed in \cref{sec:GARCH_model} in the main part).  The
plots related to the imaginary parts might (when seen isolated)
imply that the model did not catch all of the dependency structure
--- but it is here important to keep in mind that different scales
are used for the two groups of plots, and as such it seems natural
that a good match at the dominating scale might be accompanied with
a more messy situation at the other scale.

It seems natural to conclude that the selected apARCH$(2,3)$-model
performs rather well, which is as expected since it was one of the
better models from a testing procedure that tried out
several thousand different variations of the GARCH-type models
implemented in the \Rref{rugarch}-package.

A final comment to \cref{fig:aparch_dmbp_comparison}: Note the shape
seen for the points $\LGp=\LGpoint$ on the outer tails of the
anti-diagonal, viz.\ when $\LGpi{1}$ corresponds to the 10\%
percentile and $\LGpi{2}$ to the 90\% percentile (or vice versa).
For these points, 
$\operatorname{Co}\!\parenR{\lgsdM{\LGp}{\omega}{m}}$ does have a
deep trough near $\omega=0$, which is rather natural since in a
volatile situation it can be the case that a large decrease is
followed by a somewhat larger increase (like a \enquote{Sucker
  Rally} in the stock-market).

\subsection{Nonparametric bootstrapping techniques}
\label{app:nonparametric_resampling_strategies}

This section will first explain why the block bootstrap could be a
reasonable resampling technique for a statistic like the
$m$-truncated estimates of the local Gaussian spectra.  It will then
be seen that after all there are some issues with the block
bootstrap for the present case, and that motivates the quest for a
slightly modified resampling strategy.

Technical details related to the bootstrap and block bootstrap are
collected in \cref{sec:bootstrap_block_bootstrap}, whereas
\cref{sec:edge_effects_between_blocks} discuss some problems related
to edge-effects between the blocks in the resampled time series.
\Cref{sec:natural_solution_to_edge_effect_issue?} discuss one
potential solution to the edge-effect issue, and explains why this
approach was discarded for the investigation performed in the
present paper.  \Cref{sec:block_of_blocks_bootstrap} presents the
ideas behind the block-of-blocks bootstrap (where edge-effects does
not occur), and it explains why a direct application of that method
might not be an optimal approach when the statistic of interest is
computed by means of an algorithm that contains a kernel function.

\textbf{Justification for the block bootstrap:} First of all, recall
from \myref{def:lgsd_estimator}{def:lgsd_esitimator_folded} (page
\pageref{def:lgsd_esitimator_folded} in the main part) that the
$m$-truncated estimates $\hatlgsdM[p]{\LGp}{\omega}{m}$ of the local
Gaussian spectral densities $\lgsd[p]{\LGp}{\omega}$, are
constructed as follows:
\begin{align}
  \hatlgsdM[p]{\LGp}{\omega}{m} \defeq 1 +       \sumss{h=1}{m}
  \lambdazM{h}{m}\cdot \hatlgacrb[p]{\LGpd}{h}{\bmbzh{h}}
  \cdot \ez[+2\pi i\omega h]{} +       \sumss{h=1}{m}
  \lambdazM{h}{m}\cdot \hatlgacrb[p]{\LGp}{h}{\bmbzh{h}} \cdot
  \ez[-2\pi i\omega h]{},
\end{align}
where the point \mbox{$\LGpd=\LGpointd$} is the diagonal reflection
of \mbox{$\LGp=\LGpoint$}, and $\bmbzh{h}$ is the bandwidth-vector
used for the lag-$h$ pairs (the $\bmbzh{h}$ will henceforward be
dropped from the notation).

Note that the estimates $\hatlgacr[p]{\LGpd}{h}$ and
$\hatlgacr[p]{\LGp}{h}$, for $h=1,\dotsc,m$, and also the
$m$-truncated estimate $\hatlgsdM[p]{\LGp}{\omega}{m}$, all are
estimated by a local likelihood approach --- and the asymptotic
properties of these estimates were developed in the present paper
using the procedure from \citet{klimko1978}, cf.\ the discussion in
\cref{App:local_penalty_function_Klimko_Nelson_approach}.

A statistic obtained from the Klimko-Nelson procedure was explicitly
mentioned by K{\"u}nsch as an example for which the block bootstrap
method would be applicable, cf.\ \citet[Example 2.4, p.\
1219-20]{kuensch89:_jackk_boots_gener_station_obser}, 
and a resampling based on the block bootstrap was thus initially
used for the construction of the pointwise confidence intervals for
the \texttt{dmbp}-example seen in \cref{fig:dmbp}.

Comments received during the review-process initiated an 
investigation of the following problem: Estimates based on the block
bootstrap method can suffer from edge-effect noise when it is used
on smaller sample sizes, cf.\ the discussion in
\cref{app:corrupt.tuples.edge.effect.block.bootstrap}.  This
motivated an investigation of possible replacements, that in the end
lead to the slightly adjusted version of the block bootstrap given
in \cref{app:adjusted_resampling_algorithm}, see
\cref{def:index_based_block_bootstrap_for_tuples} on page
\pageref{def:index_based_block_bootstrap_for_tuples}.

\subsubsection{The bootstrap and the block bootstrap}
\label{sec:bootstrap_block_bootstrap}

The bootstrap introduced in \citet{efron1979} use sampling with
replacement from an i.i.d.\ sample $\TSR{\Xz{i}}{i=1}{n}$ to create
a collection of $B$ bootstrapped samples
$\TSR{\TSR{\Xz[*]{i:b}}{i=1}{n}}{b=1}{B}$,
and then a nonparametric estimator of the variance of a statistic
$\Tz{n} \defeq T\!\parenR{\Xz{1},\dotsc,\Xz{n}}$ can be computed
from the estimates in $\TSR{\Tz[*]{n:b}}{b=1}{B}$, where
$\Tz[*]{n:b} \defeq T\!\parenR{\Xz[*]{1:b},\dotsc,\Xz[*]{n:b}}$.
The block bootstrap introduced in
\citet{kuensch89:_jackk_boots_gener_station_obser} enables a similar
investigation to be performed when the statistic $T_n$ is computed
on a set of observations $\TSR{\Xz{i}}{i=1}{n}$ from a stationary
process, and in this case the resampled sets
$\TSR{\Xz[*]{i:b}}{i=1}{n}$ are created by the following procedure:
(1) Create the set of $L$-sized blocks of consecutive observations
from $\TSR{\Xz{i}}{i=1}{n}$, i.e.\ $\TSR{\bmYz{i}}{i=1}{n-(L-1)}$,
where $\bmYz{i}=\parenR{\Xz{i},\dotsc,\Xz{i+(L-1)}}$.  (2) Sample
with replacement $\ceil{n/L}$ of these blocks, to obtain a set
$\TSR{\bmYz[*]{i:b}}{i=1}{\ceil{n/L}}$.  (3) Concatenate the
selected blocks to one block of size $\ceil{n/L}\cdot L$, and
truncate it at length $n$ to obtain the desired resampled version
$\TSR{\Xz[*]{i:b}}{i=1}{n}$.

\citet{kuensch89:_jackk_boots_gener_station_obser} lists a wide
range of different types of statistics that can be based on
$\TSR{\Xz[*]{i:b}}{i=1}{n}$, and it is for the purpose of the
present paper of particular interest to note that statistics based
on the Klimko-Nelson procedure is specifically mentioned as a case,
which as mentioned above is the case for the estimators in this
paper.

\subsubsection{Corrupt tuples and edge-effect noise for the block
  bootstrap}
\label{app:corrupt.tuples.edge.effect.block.bootstrap}
\label{sec:edge_effects_between_blocks}

A problematic issue with the block bootstrap is that it will
introduce a bit of \textit{edge-effect noise} into the estimation
procedure.  For example, if a time series $\TSR{\Yz{t}}{t=1}{n}$ of
length $n$ is given, then an estimate of $\lgacr{\LGp}{h}$ will be
based on the bivariate set
$\mathcalYz{h}\defeq\TSR{\parenR{\Yz{t+h},\Yz{t}}}{t=1}{n-h}$ of size
$n-h$.  When the block bootstrap is used with some block length $L$,
then there will be a resampled sequence $\TSR{\Yz[*]{t}}{t=1}{n}$
and the idea is that an estimate of $\lgacr{\LGp}{h}$ now should be
computed based on the bivariate set
$\mathcalYz[*]{h:L}\defeq\TSR{\parenR{\Yz[*]{t+h},\Yz[*]{t}}}{t=1}{n-h}$.

However, the set $\mathcalYz[*]{h:L}$ will contain \textit{corrupt
  tuples} that do not exist in $\mathcalYz{h}$, i.e.\ the first and
second component of $\parenR{\Yz[*]{t+h},\Yz[*]{t}}$ can belong to
different blocks, and this will add a bit of \textit{edge-effect
  noise} into the estimation process.  The edge-effect noise is
negligible in the asymptotic situation (very large sample sizes $n$
and large block lengths $L$), but it can 
make an impact when smaller samples are investigated.

For the present paper, it is of particular interest to consider the
amount of corrupt tuples that occur when the block bootstrap is used
on the \texttt{dmbp}-data ($n=1974$ unique observations, i.e.\ no
ties).  The plots in \cref{fig:dmbp} used the truncation level
$m=10$ for $\lgsdM{\LGp}{\omega}{m}$, and it is thus natural to
focus on the estimation of $\lgacr{\LGp}{h}$ for $h=1,\dotsc,10$.

It is easy to see that the expected number $\mathcalEz[*]{h:L}$ of
corrupt tuples in $\mathcalYz[*]{h:L}$ to a close
approximation\footnote{
  It is possible that two neighbouring blocks can join perfectly (no
  edge-effect noise), so the correct formula for the expected number of
  corrupt tuples is slightly less than the numbers given in
  \cref{eq:th:block_expected_number_corrupt}, but this level of
  precision is not needed for the present discussion.} 
will be a simple formula of the number of blocks $q\defeq\ceil{n/L}$
and the length $r\defeq n - (q-1)\cdot L$ of the last block, i.e.\
\begin{equation}
  \label{eq:th:block_expected_number_corrupt}
  \mathcalEz[*]{h:L} \approx
  \begin{cases}
    h \cdot (q-1)     & h \leq r \leq L \\
    h \cdot (q-2) + r & 1 \leq r<h.
  \end{cases}
\end{equation}
A total of $n-h$ tuples $\parenR{\Yz[*]{t+h},\Yz[*]{t}}$ are
included in $\mathcalYz[*]{h:L}$, and the expected fraction of
corrupt tuples is thus given by $\mathcalEz[*]{h:L}/(n-h)$.
It is enlightening to compute the expected fractions of corrupt
tuples for the \texttt{dmbp}-data for the two block lengths $L=25$
and $L=100$, and the results (given as percentages) are listed in
\cref{app:table:corrupt.tuples.block.bootstrap}.
\begin{table}[ht]
  \centering
  \begingroup\tiny
  \begin{tabular}{r|llllllllll}
    \hline
    $L$ \textbackslash\ $h$ & 1 & 2 & 3 & 4 & 5 & 6 & 7 & 8 & 9 & 10 \\ 
    \hline
    25 & 4.0\% & 7.9\% & 11.9\% & 15.8\% & 19.8\% & 23.8\% & 27.8\% & 31.7\% & 35.7\% & 39.7\% \\ 
    100 & 1.0\% & 1.9\% & 2.9\% & 3.9\% & 4.8\% & 5.8\% & 6.8\% & 7.7\% & 8.7\% & 9.7\% \\
    \hline
  \end{tabular}
  \endgroup
  \caption{The expected fraction of corrupt tuples when
    $\lgacr{\LGp}{h}$ are estimated from block bootstrap replicates
    of the \texttt{dmbp}-data ($n=1974$), when $L\in\parenC{25,100}$
    and $h\in\parenC{1,\dotsc,10}$.}
  \label{app:table:corrupt.tuples.block.bootstrap}
\end{table}

It is evident, based on
\cref{app:table:corrupt.tuples.block.bootstrap}, that the expected
fraction of corrupt tuples can become rather large when
$\lgacr{\LGp}{h}$ is estimated for high lags $h$.  The problem for
estimates of $\lgsdM{\LGp}{\omega}{m}$ is slightly reduced since the
estimates $\hatlgacr{\LGp}{h}$ are weighted with the lag-window
functions $\lambdaz{m}(h)$ when $\hatlgsdM{\LGp}{\omega}{m}$ is
computed, which implies that the estimates $\hatlgacr{\LGp}{h}$
suffering from the highest levels of edge-effect noise do not
contribute that much to the final result.

\Cref{app:table:corrupt.tuples.block.bootstrap} indicates that it
could be of interest to find an adjusted resampling technique,
preferably one that completely (or at least partially) removes the
corrupt tuples from the estimation algorithm.  Two different
approaches that completely avoids the corrupt tuples are presented
in
\cref{sec:natural_solution_to_edge_effect_issue?,sec:block_of_blocks_bootstrap},
but there are some issues with these two methods that make them less
interesting to implement.

It is however possible to reduce the number of corrupt tuples by
slightly tweaking the way the block bootstrap algorithm is used when
applied to smaller sample sizes.  The key idea is to move the
primary focus to the indices of the original sample, and then apply
a simple adjustment that selects the $h$-lag pairs in a manner that
is more in line with the way these pairs would have been selected if
the methods from
\cref{sec:natural_solution_to_edge_effect_issue?,sec:block_of_blocks_bootstrap}
had been used.  The technical details are given in
\cref{app:adjusted_resampling_algorithm}, see in particular
\cref{def:index_based_block_bootstrap_for_tuples}.

The corrupt tuples do not disappear with the adjusted resampling
strategy from \cref{def:index_based_block_bootstrap_for_tuples}, but
the expected fraction of such tuples (for a given combination of
sample size $n$, block length $L$ and lag $h$) is significantly
lower than those seen in
\cref{app:table:corrupt.tuples.block.bootstrap}.  It can e.g.\ be
seen from \cref{app:table:corrupt.tuples.ibb.bootstrap} (page
\pageref{app:table:corrupt.tuples.ibb.bootstrap}) that for the
$h=10$ case it will be a reduction from 39.7\% to 0.11\% when
$L=25$, and a reduction from 9.67\% to 0.028\% when $L=100$.

\subsubsection{A \enquote{natural} solution to the edge-effect
  issue?}
\label{sec:natural_solution_to_edge_effect_issue?}

Obviously, if the aim of the investigation is restricted to
$\hatlgacr{\LGp}{h}$ for a single value of $h$, then it is trivial
to completely avoid the problem of corrupt tuples in
$\mathcalYz[*]{h:L}$.  The solution in that case would simply be to
realise $\mathcalYz{h}$ as a sample from a bivariate time series,
and then apply the block bootstrap method on $\mathcalYz{h}$ instead
of the original sample.  The situation becomes a bit more
complicated when it is necessary to estimate $\hatlgacr{\LGp}{h}$,
for $h=1,\dotsc,m$, since an approach where each of these estimates
are computed from its own $\mathcalYz{h}$ might fail to capture some
of the temporal dependency structure from the original sample
$\TSR{\Yz{i}}{i=1}{n}$.

The temporal dependency structure between $\hatlgacr{\LGp}{h}$ will
be taken care of if the estimation of
$\TSR{\lgacr{\LGp}{h}}{h=1}{m}$ is based on (the relevant parts of)
the $(m+1)$-tuples in the derived time series
$\mathcalYz{\overbar{m}} =
\TSR{\parenR{\Yz{i+m},\dotsc,\Yz{i+1},\Yz{i}}}{i=1}{n-m}$, but this
approach is slightly wasteful since the estimation of
$\lgacr{\LGp}{h}$ for an $h<m$ in this case discards the last $m-h$
observations that would have been used if the estimate had been
based on $\mathcalYz{h}$ instead.  The effect of this wastefulness
will of course not be severe when a large sample is investigated,
but it is present.

Moreover, this approach implies that the estimates of
$\lgacr{\LGp}{h}$, for $1\leq h\leq m$, will depend on the selected
value $m$.  For a strict regime of reproducibility, like the one
implemented in the \Rpackage \lgsdRpackage\ , this implies that
everything must be recomputed if the initial truncation level
$\mz{1}$ is changed to $\mz{2}$.  The computational cost related to
the estimate of $\lgacr{\LGp}{h}$ (for a fixed point $\LGp$ and a
fixed bandwidth $\bm{b}$) is usually not that high, but a local
Gaussian investigation will typically involve a wide range of lags
$h$, many points $\LGp$, different values of the bandwidth $\bm{b}$,
and a huge number of replicates.  This implies that the number of
cases to recompute might increase to the tens of thousands, which
makes the \enquote{resampling from $\mathcalYz{\overbar{m}}$ seen as
  an $(m+1)$-variate time series} approach far from desirable to
implement.

The new estimation algorithm introduced in
\cref{app:adjusted_resampling_algorithm} are inspired by the
resampling from tuples outlined above, and for the cost of a tiny
percentage of edge-effects it will completely avoid the problematic
issues mentioned.  In particular: The estimation of the local
Gaussian autocorrelations $\lgacr{\LGp}{h}$ will use all the
available information in $\mathcalYz{h}$, and the estimated values
$\hatlgacr{\LGp}{h}$ will be the same regardless of the value of the
truncation level $m$.

{
  \textbf{The role of the block length $L$ when resampling from
    $\mathcalYz{\overbar{m}}$:} 
  The discussion in \cref{app:adjusted_resampling_algorithm} will
  reveal that the block length $L$ plays a different role when the
  block bootstrap is used on the $(m+1)$-variate tuples in
  $\mathcalYz{\overbar{m}}$, since both $m$ and $L$ then contribute
  to the capturing of the desired dependency structure.  This is
  different from the situation seen when the ordinary block
  bootstrap is used on $\TSR{\Yz{t}}{t=1}{n}$, since then it only is
  the block length $L$ that decides to what extent the temporal
  dependency structure of the original sample is preserved in the
  resampled data $\TSR{\Yz[*]{t}}{t=1}{n}$.  In particular, a too
  short block length will simply destroy all of the dependency
  structure that it is of interest to investigate.

  The situation changes when the block bootstrap is used on
  $\mathcalYz{\overbar{m}}$ (regarded as an $(m+1)$-dimensional time
  series), since it for some estimators then might be the case that
  even a very short block length $L$ can give decent results (in
  particular for an estimator that focus solely on the content
  captured in the $(m+1)$-variate tuples).  For example: If $L=1$,
  then the block bootstrap used on $\mathcalYz{\overbar{m}}$ is
  equivalent to uniform sampling from the tuples in
  $\mathcalYz{\overbar{m}}$.  For an estimator that does not care
  about the internal order of the resampled tuples, e.g.\ the local
  likelihood estimator used in this paper, it might then in fact be
  sufficient to use such a short block length.

  The block length argument $L$ is for this particular situation
  reduced to a tuning parameter that governs the expected number of
  times the different tuples occur in the resampled version of
  $\mathcalYz{\overbar{m}}$.  A higher value of the block length $L$
  will slightly reduce the fraction of tuples sampled from the start
  and the end of $\mathcalYz{\overbar{m}}$, whereas the majority of
  the tuples will have a tiny increase in the expected number of
  occurrences, cf.\
  \cref{app:block.length.and.expected.content.of.resampled.data}.

  The reduction in the expected number of tuples sampled from the
  end of the time series can be of interest for the adjusted
  resampling strategy given in
  \cref{app:adjusted_resampling_algorithm}, since it will induce a
  corresponding reduction in the expected number of corrupt tuples,
  which is desirable since it removes some of the expected
  edge-effect noise from the estimation.  See the discussion in
  \cref{app:adjusted_resampling_algorithm} for further details.

}

\subsubsection{The block-of-blocks bootstrap}
\label{sec:block_of_blocks_bootstrap}

Another tuple-based bootstrapping approach that should be mentioned
is the block-of-blocks bootstrap introduced in
\citet{politis92:_gener_resam_schem_trian_array}.  This method
completely avoids the edge-effect issue that was mentioned for the
block bootstrap, which makes it an interesting alternative to
consider.

The key idea in the block-of-blocks bootstrap is that two levels of
blocks are created, and resampling is made from the second level.
The first level of blocks are created as follows: For a strictly
stationary and weakly dependent $d$-variate time series
$\TSR{\bmXz{i}}{i=1}{n}$, let
$\Bz{i,m,L}\defeq\parenR{\bmXz{(i-1)L+1},\dotsc,\bmXz{(i-1)L+m}}$.
The block $\Bz{i,m,L}$ contains $m$ consecutive observations, and it
can be considered the result of a \enquote{window} of width $m$ that
is \enquote{moving} at lags $L$ at a time.  There are
$Q=\ceil{(n-m)/L}$ of these blocks, and for each block a statistic
$\Tz{i,m,L}$ is defined by a function
$\phiz{m}:\RR^{dm}\rightarrow\RR$, i.e.\
$\Tz{i,m,L}\defeq\phiz{m}\!\parenR{\Bz{i,m,L}}$.  Note that the set
$\TSR{\Tz{i,m,L}}{i=1}{Q}$ actually is a sample from a strictly
stationary univariate time series (derived from the original time
series through $\phiz{m}$), and note that the mean of
$\TSR{\Tz{i,m,L}}{i=1}{Q}$, i.e.\
$\subp{\overline{T}}{}{n}{}{}\defeq\tfrac{1}{Q}\sumss{i=1}{Q}\Tz{i,m,L}$,
gives an estimate of the true value of the statistic given by the
aforementioned function $\phiz{m}$.  It is thus of interest to do a
block bootstrap on the sample $\TSR{\Tz{i,m,L}}{i=1}{Q}$ in order to
investigate the properties of the estimator
$\subp{\overline{T}}{}{n}{}{}$ --- and this motivates the creation
of the second level of blocks $\mathcalBz{j}$, which are created
from $\TSR{\Tz{i,m,L}}{i=1}{Q}$ by means of a \enquote{window} of
width $L$ that is \enquote{moving} at lags $h$ at a time:
$\mathcalBz{j}\defeq\parenR{\Tz{(j-1)h+1,m,L},\dotsc,\Tz{(j-1)h+L,m,L}}$
is constructed by taking $L$ consecutive observations from
$\TSR{\Tz{i,m,L}}{i=1}{Q}$, and there are $q=\ceil{(Q-L)/h}$ of
these blocks.  \citet[p.\
1993]{politis92:_gener_resam_schem_trian_array} explain how sampling
with replacement ($k$ times), followed by a concatenation, can be
used to construct resampled sets $\Tz[*]{1},\dotsc,\Tz[*]{kL}$, and
they give the required theoretical results that connects the mean
$\subp{\overline{T}}{}{}{}{*}$ of this sample with the mean
$\subp{\overline{T}}{}{n}{}{}$ --- which thus gives the algorithm
for the block-of-blocks bootstrapping.

The block-of-blocks bootstrap completely avoids the edge-effect
problem that occurs when the block bootstrap is used, since the
statistic of interest (given by the function $\phiz{m}$) are
computed on the individual blocks $\Bz{i,m,L}$.
This restriction to individual blocks can be an excellent idea for
many statistics of interest, but it is a somewhat questionable
approach for the estimates $\hatlgacr[p]{\LGp}{h}$ of the local
Gaussian autocorrelations.  The reason for this is that the
bandwidth argument $\bm{b}$ in the kernel function
$\Kh[\bm{w}-\LGp]{\bm{b}}$ must be much larger if the estimation
algorithm is to be used on only a subset of the observations --- and
the local structures of interest might then not be detected at all.

It would of course be of interest to implement the block-of-block
bootstrap for the estimates of the local Gaussian spectra if very
large samples are encountered, i.e.\ when the individual blocks
contains several thousand consecutive observations --- but for
shorter samples (like the \texttt{dmbp}-example) it seems better to
use something else.

\subsection{A slightly adjusted resampling algorithm}
\label{app:adjusted_resampling_algorithm}

This section will present a minor adjustment of the ordinary block
bootstrap.  The adjusted approach will by construction return the
same results as those obtained from the ordinary block bootstrap
when the sample size $n$ and the block length $L$ are large.  The
situation is different for smaller sample sizes, since the adjusted
approach then will remove the majority of the corrupt tuples that
adds edge-effect noise into the estimation of the local Gaussian
autocorrelations $\lgacr{\LGp}{h}$.

{
  Note that this adjusted resampling strategy is designed
  to take care of statistics that are constructed from pairs
  $\parenR{\Yz{t+h},\Yz{t}}$, and it does this by mimicking key
  features of the optimal resampling strategy described in
  \cref{sec:natural_solution_to_edge_effect_issue?}.  In
  contradistinction to the adjusted block bootstrap, the ordinary
  block bootstrap is not restricted to statistics based on pairs
  $\parenR{\Yz{t+h},\Yz{t}}$, nor is it specially designed for such
  a case.

  The block length $L$ plays a different role when the resampling is
  done on $(m+1)$-tuples, and it can be considered as a tuning
  parameter that governs the expected number of times the different
  tuples will occur in the resampled set, cf.\ the discussion at the
  end of \cref{sec:natural_solution_to_edge_effect_issue?}.  The
  sensitivity analysis of the block length $L$ in
  \cref{app:Block_length_sensitivity} indicates that the selection
  of 
  $L$ should not be a problematic issue when the samples are large
  enough.
}

\subsubsection{A toy example to illustrate the principle}

It will be a bit easier to digest the definitions and the algorithm
that are given later on in this section, if a simple toy-example is
investigated first: Consider a situation with a time series having
five unique observations $\Yz{1},\Yz{2},\Yz{3},\Yz{4},\Yz{5}$ and
assume that there is an interest for an estimate based on the four
lag-1 tuples in
$\mathcalYz{1} = \TSR{\parenR{\Yz{t+1},\Yz{t}}}{t=1}{4}$.  If a
block bootstrap with block length $L=2$ is used, the resampled time
series might e.g.\ look like
$\Yz[*]{1}=\Yz{4},\Yz[*]{2}=\Yz{5},\Yz[*]{3}=\Yz{3},\Yz[*]{4}=\Yz{4},\Yz[*]{5}=\Yz{2}$,
and the corresponding set of lag-1 tuples would be
$\mathcalYz[*]{1:2} = \TSR{\parenR{\Yz[*]{t+1},\Yz[*]{t}}}{t=1}{4}$.
It is easy to see that $\mathcalYz[*]{1:2}$ in this case will contain
the two corrupt tuples $\parenR{\Yz{3},\Yz{5}}$ and
$\parenR{\Yz{2},\Yz{4}}$, i.e.\ tuples that are not present in
$\mathcalYz{1}$.

The key idea in the adjusted algorithm is to move the focus to the
indices of the original sample, i.e.\ $1,2,3,4,5$, and then use the
block bootstrap to sample from these.  The resampled set of indices
for the example above would be $4,5,3,4,2$, and from these it is
possible to construct the \textit{cyclically $h=1$ shifted} set of
indices $5, 1, 4, 5, 3$.  The method is simply to add the lag $h=1$
to all the resampled indices --- and to start back on 1 if a value
exceeds $n=5$.  The four desired lag-1 tuples
$\mathcalYz[\sharp]{1:2} =
\TSR{\parenR{\Yz[\sharp]{t+1},\Yz[\sharp]{t}}}{t=1}{4}$ are now
created by using the resampled set of indices in the
$\Yz[\sharp]{t}$-component, whereas the cyclically $h=1$ shifted
indices are used for the $\Yz[\sharp]{t+h}$-component.  This results
in the following four tuples,
$\mathcalYz[\sharp]{1:2}=\parenC{\parenR{\Yz{5},\Yz{4}},
  \parenR{\Yz{1},\Yz{5}}, \parenR{\Yz{4},\Yz{3}},
  \parenR{\Yz{5},\Yz{4}}}$, and it is easy to see that the only
corrupt tuple in $\mathcalYz[\sharp]{1:2}$ is
$\parenR{\Yz{1},\Yz{5}}$.  Note: It could in principle now also be
added a fifth tuple $\parenR{\Yz{3},\Yz{2}}$ to
$\mathcalYz[\sharp]{1:2}$, but that is not of interest since there
are only four tuples in $\mathcalYz{1}$.

The adjusted resampling algorithm is thus quite simple in structure,
and it only needs to be formalised.  This is taken care of in
\cref{def:slightly_tweaked_modulo_function,def:indices_of_tuple_given_starting_index,def:actual_tuple_given_starting_index,def:index_based_block_bootstrap_for_tuples}.

It is easy to compute the expected number of corrupt tuples in
$\mathcalYz[\sharp]{h:L}$ for a given combination of sample size
$n$, lag $h$, and block length $L$, and this is done in
\cref{th:ibbb_expected_number_corrupt} in
\cref{sec:edge_effects_for_the_adjusted_resampling_algorithm}.  It
can from this easily be seen how much the edge-effect noise is
reduced for estimates based on the \texttt{dmbp}-data, cf.\
\cref{app:table:corrupt.tuples.ibb.bootstrap} on page
\pageref{app:table:corrupt.tuples.ibb.bootstrap}.

\subsubsection{Three definitions and one algorithm}
\label{app.three.def.one.algorithm}

\begin{definition}
  \label{def:slightly_tweaked_modulo_function}
  For $n$ and $i$ positive integers, and $h$ a non-negative integer,
  define the new index $\mathcal{M}(i,h;n)$ as follows:
  \begin{equation}
    \label{eq:def:slightly_tweaked_modulo_function}
    \mathcal{M}(i,h;n) \defeq 1 + \parenS{(i+h-1)\!\!\! \mod n} =
    (i + h) - n\cdot\floor{\frac{i+h-1}{n}}
  \end{equation}
\end{definition}
The result of $\mathcal{M}(i,h;n)$ will always be a number in the
set $\parenC{1,\dotsc,n}$, and $\mathcal{M}(i,0;n)=i$ when
$i\leq n$.

\begin{definition}
  \label{def:indices_of_tuple_given_starting_index}
  For fixed positive integers $m$ and $n$, with $m<n$, and any
  {starting index} $i\in\parenC{1,\dotsc,n}$, define the
  $(m+1)$-tuple $\mathfrak{M}(i;m,n)$ as follows:
  \begin{equation}
    \label{eq:def:indices_of_tuple_given_starting_index}
    \mathfrak{M}(i;m,n) \defeq \parenR{\mathcal{M}(i,m,n), \dotsc, 
      \mathcal{M}(i,1,n),i}
  \end{equation}
\end{definition}

The result of $\mathfrak{M}(i;m,n)$ will be referred to as an
$(m+1)$-variate tuple of indices.  It will have the desirable form
$\parenR{i+m,\dotsc,i+1,i}$ when $i\leq n-m$.  The result will be
\textit{cyclically shifted} when $i\in\parenC{n-m+1,\dotsc,n}$,
i.e.\ the indices will in that case have the form
$\parenR{\mathcal{M}(i,m,n),\dotsc, 1, n,\dotsc, i}$.  Note that it
is trivial to tweak the definition of $\mathfrak{M}(i;m,n)$, if only
a subset of the resulting indices is required.  This is e.g.\ the
case for the indices needed when estimating $\hatlgacr[p]{\LGp}{h}$,
where it only is the bivariate pairs
$\parenR{\mathcal{M}(i,h,n),i}$ that it is of interest to consider.

\begin{definition}
  \label{def:actual_tuple_given_starting_index}
  For a sample $\TSR{\Yz{i}}{i=1}{n}$ of length
  $n$, an integer $m<n$ and any {starting index}
  $i\in\parenC{1,\dotsc,n}$, use the indices from
  $\mathfrak{M}(i;m,n)$ to define an $(m+1)$-variate tuple
  $\bm{Y}(i;m,n)$ as follows:
  \begin{equation}
    \label{eq:def:actual_tuple_given_starting_index}
    \bm{Y}(i;m,n) \defeq
    \parenR{\Yz{\mathcal{M}(i,m,n)}, \dotsc,\Yz{\mathcal{M}(i,1,n)},\Yz{i}}
  \end{equation}
  The resulting tuple will be referred to as \enquote{desirable}
  when $i\leq n-m$, whereas it will be referred to as
  \enquote{corrupt} when $i\in\parenC{n-m+1,\dotsc,n}$.
\end{definition}

If a starting index $i$ is selected randomly from
$\parenC{1,\dotsc,n}$, then there is a probability of
$p=\tfrac{n-m}{n}$ that the tuple $\bm{Y}(i;m,n)$ will be desirable,
and a probability of $1-p=\tfrac{m}{n}$ that the tuple will be
corrupt.

With these definitions, it is now time to present the adjusted
resampling algorithm.

\begin{algorithm}[Circular index-based block bootstrap for tuples]
  \label{def:index_based_block_bootstrap_for_tuples}
  \ \newline Given a sample
  $\TSR{\Yz{i}}{i=1}{n}$ of length $n$ from a
  strictly stationary time series, and a statistic $\Tz{n}$ that is
  given as a function $\varphiz{n}$ of the $(m+1)$-variate set
  $\mathcalYz{\overbar{m}}\defeq\TSR{\parenR{\Yz{i+m},\dotsc,\Yz{i+1},\Yz{i}}}{i=1}{n-m}$,
  i.e. $\Tz{n}\defeq\varphiz{n}\!\parenR{\mathcalYz{\overbar{m}}}$.
  For a given block length $L$, let $q$ be the number
  $\ceil{n/L}$, and define a resampled set
  $\mathcalYz[\sharp]{\overbar{m}:L}$, and
  $\Tz[\sharp]{n}\defeq\varphiz{n}\!\parenR{\mathcalYz[\sharp]{\overbar{m}:L}}$,
  as follows:
  \begin{enumerate}[label=(\alph*)]
  \item 
    \label{alg:ibbb_nq}
    Sample with replacement $q$ numbers $\nz{1},\dotsc,\nz{q}$ from
    the index set $\parenC{1,\dotsc,n-(L-1)}$.
  \item 
    \label{alg:ibbb_Iq}
    For $j\in\parenC{1,\dotsc,q}$, let $\mathcalIz[\sharp]{j:L}$ be
    the $L$-sized tuple
    $\parenR{\nz{j},\nz{j}+1,\dotsc,\nz{j}+L-1}$.
  \item 
    \label{alg:ibbb_I*}
    Let $\mathcalIz[\sharp]{n}=\parenR{\iz[\sharp]{1},\dotsc,\iz[\sharp]{n}}$ be
    the $n$-sized tuple that occurs when the $q$ tuples
    $\mathcalIz[\sharp]{1:L},\dotsc,\mathcalIz[\sharp]{q:L}$ first are
    concatenated into one tuple, and then truncated at length $n$.
  \item 
    \label{alg:ibbb_G*}
    Use the first $n-m$ indices from $\mathcalIz[\sharp]{n}$ as
    starting indices, and let $\mathcalYz[\sharp]{\overbar{m}:L}$ be
    given by
    \begin{equation}
      \label{eq:alg:ibbb_G*}
      \mathcalYz[\sharp]{\overbar{m}:L} \defeq \TSR{\bm{Y}(\iz[\sharp]{j};m,n)}{j=1}{n-m}.
    \end{equation}
  \item 
    \label{alg:ibbb_T*}
    Use the function $\varphiz{n}$ to define the estimate
    $\Tz[\sharp]{n}$, i.e.\
    $\Tz[\sharp]{n}\defeq\varphiz{n}\!\parenR{\mathcalYz[\sharp]{\overbar{m}:L}}$.
  \end{enumerate}
\end{algorithm}

The index set $\mathcalIz[\sharp]{n}$ from
\myref{def:index_based_block_bootstrap_for_tuples}{alg:ibbb_I*} is
the same set of indices that would occur if the block bootstrap was
used to obtain a resampled version $\TSR{\Yz[*]{i}}{i=1}{n}$ of the
original sample $\TSR{\Yz{i}}{i=1}{n}$.
This implies (assuming reasonable
values for $L$ and $m$) that the majority of the tuples in
$\mathcalYz[\sharp]{\overbar{m}:L}$
also will be present in
$\mathcalYz[*]{\overbar{m}:L}\defeq\TSR{
  \parenR{\Yz[*]{i+m},\dotsc,\Yz[*]{i+1},\Yz[*]{i}}}{i=1}{n-m}$,
where the latter is the one that would have been used to get an
estimate
$\Tz[*]{n}\defeq\varphiz{n}\!\parenR{\mathcalYz[*]{\overbar{m}:L}}$
if the ordinary block bootstrap was used.

All the desirable tuples in $\mathcalYz[*]{\overbar{m}:L}$
will also be contained in $\mathcalYz[\sharp]{\overbar{m}:L}$, and
it is easy to see, cf.\ similar discussion in
\cref{app:corrupt.tuples.edge.effect.block.bootstrap}, that the
number of desirable $(m+1)$-variate tuples in
$\mathcalYz[*]{\overbar{m}:L}$ at least must be
$\parenR{n-m\cdot\ceil{n/L}}/(n-m)$.  This fraction converges
towards 1, given reasonable assumptions with regard to how fast
$L\rightarrow\infty$ and $\mlimit$ when $\nlimit$, which thus
implies that the content of $\mathcalYz[\sharp]{\overbar{m}:L}$ and
$\mathcalYz[*]{\overbar{m}:L}$ in essence coincide when $\nlimit$
--- and it is thus natural to anticipate that the asymptotic
behaviour of the estimates $\Tz[\sharp]{n}$ and $\Tz[*]{n}$ should
be quite similar.

As previously mentioned, the block bootstrap is viable for a
statistic based on the Klimko-Nelson procedure, cf.\ \citet[Example
2.4, p.\ 1219-20]{kuensch89:_jackk_boots_gener_station_obser}, and
it is thus in particular applicable when estimating the local
Gaussian autocorrelations $\lgacr{\LGp}{h}$ and the $m$-truncated
local Gaussian spectra $\lgsdM{\LGp}{\omega}{m}$.  The previously
mentioned overlap between $\mathcalYz[\sharp]{\overbar{m}:L}$ and
$\mathcalYz[*]{\overbar{m}:L}$ indicates that the \textit{circular
  index-based block bootstrap for tuples} from
\cref{def:index_based_block_bootstrap_for_tuples} also should be a
viable alternative for the statistics of interest for the present
paper.

\subsubsection{The block length $L$ and the expected content of
  $\mathcalYz[\sharp]{m:L}$}
\label{app:block.length.and.expected.content.of.resampled.data}

The purpose of the adjusted resampling strategy is to provide the
required data $\mathcalYz[\sharp]{m:L}$, that can replace the
$(m+1)$-variate tuples in $\mathcalYz{m}$ when the pointwise
confidence intervals are to be estimated for the original estimate
of $\lgsdM{\LGp}{\omega}{m}$.  A sensitivity analysis of the block
length $L$ is included in \cref{app:Block_length_sensitivity}, and
it is thus of interest to add some comments about the effect the
block length $L$ has on the expected content of
$\mathcalYz[\sharp]{m:L}$.

It is with regard to this discussion of interest to point out that
the temporal connection \textit{between} the $(m+1)$-variate tuples
in $\mathcalYz{m}$ and $\mathcalYz[\sharp]{m:L}$ does not affect the
resulting estimates of $\lgsdM{\LGp}{\omega}{m}$.  The reason for
this is that the algorithm that estimates the local Gaussian
autocorrelations $\lgacr{\LGp}{h}$ only cares about the points in
the plane that are defined by the bivariate lag-$h$ tuples, that
again are derived from these $(m+1)$-tuples.  To clarify: The
temporal aspect is pivotal with regard to the construction of the
$(m+1)$-variate tuples in $\mathcalYz[\sharp]{m:L}$, but the order
does not matter anymore when these tuples first have been
constructed.

The main detail of interest is thus to figure out the expected
number of times the different tuples will occur in
$\mathcalYz[\sharp]{m:L}$.

The first detail to note is that the content of $\mathcalYz{m}$ and
$\mathcalYz[\sharp]{m:L}$ correspond to \textit{starting indices}
given by $(n-m)$-tuples from the index-set $\parenC{1,\dotsc,n}$.
For $\mathcalYz{m}$ it is simply the tuple $\parenR{1,\dotsc,n-m}$,
whereas it for $\mathcalYz[\sharp]{m:L}$ is the $n-m$ first indices
from the tuple $\mathcalIz[\sharp]{n}$ that was introduced in
\myref{def:index_based_block_bootstrap_for_tuples}{alg:ibbb_I*}.

A brief inspection of \cref{alg:ibbb_nq,alg:ibbb_Iq} of
\cref{def:index_based_block_bootstrap_for_tuples} reveals that
$\mathcalIz[\sharp]{n}$ is built from $q=\ceil{n/L}$ tuples
$\mathcalIz[\sharp]{j:L}=\parenR{\nz{j},\nz{j}+1,\dotsc,\nz{j}+L-1}$,
where the index $\nz{j}$ has been sampled uniformly from the
index-set $\parenC{1,\dotsc,n-(L-1)}$.  The length of the $q-1$
first of these tuples are $L$, whereas the last tuple might be
shorter since it has to be truncated to the length $r=n-(q-1)\cdot L$
in order for $\mathcalIz[\sharp]{n}$ to have the length $n$.

The expected content of $\mathcalYz[\sharp]{m:L}$ is thus related to
the expected number of times different starting indices $k$ will
occur in $\mathcalIz[\sharp]{n}$, which again is related to the
probability that the building blocks $\mathcalIz[\sharp]{j:L}$
contains $k$.  The situation for the $q-1$ first of these building
blocks is the simplest.
The basic observation is that the event
\enquote{$\mathcalIz[\sharp]{j:L}$ contains $k$}
is equivalent to \enquote{$\nz{j}\leq k \leq \nz{j}+L-1$}, which can
be rewritten as \enquote{$k- (L-1) \leq \nz{j} \leq k$}.  The number
$\nz{j}$ must lie in the index set $\parenC{1,\dotsc,n-(L-1)}$, so
this latter event is equivalent to \enquote{$1\vee (k- (L-1))\leq
  \nz{j} \leq k \wedge (n-(L-1))$}. 
This implies that the probability that the $L$-length tuple
$\mathcalIz[\sharp]{j:L}$ contains $k$ can be written out as
\begin{align}
  \label{app:eq:prob.k.in.L.length.block}
  \Prob{\mathcalIz[\sharp]{j:L} \text{ contains\ } k, \text{ for\ } j=1,\dotsc,q-1 }
  &= \
    \begin{cases}
      \frac{k}{n-(L-1)} & 1 \leq k < L \\
      \frac{L}{n-(L-1)} & L \leq k < n-L \\
      \frac{n-(k-1)}{n-(L-1)} & n -L \leq k \leq n.
    \end{cases}
\end{align}
The argument for the last block is similar, but the truncation to
length $r$ implies that it can not contain any indices above the
value $n-(L-r)$.
\begin{align}
  \label{app:eq:prob.k.in.r.length.block}
  \Prob{\mathcalIz[\sharp]{q:L} \text{ contains\ } k}
  &= \
    \begin{cases}
      \frac{k}{n-(L-1)} & 1 \leq k < r \\
      \frac{r}{n-(L-1)} & r \leq k < n-L \\
      \frac{n-(L-r)-(k-1)}{n-(L-1)} & n -L \leq k \leq n - (L-r) \\
      0 & n - (L-r) < k \leq n.
    \end{cases}
\end{align}

The expected number of occurrences of an index $k$ in the index set
$\mathcalIz[\sharp]{n}$ can be found by simply summing the expected
number of occurrences in the $q$ building blocks
$\mathcalIz[\sharp]{j:L}$, and this is easy to find from
\cref{app:eq:prob.k.in.L.length.block,app:eq:prob.k.in.r.length.block}.
For the purpose of the sensitivity analysis in
\cref{app:Block_length_sensitivity}, it is sufficient to observe
that the expected number of occurrences of an index $k$ that lies in
the set $\parenC{L,\dotsc,n-L}$ is given by
$(q-1)\cdot \frac{L}{n-(L-1)} + 1\cdot \frac{r}{n-(L-1)}$, and it
follows from $r=n-(q-1)\cdot L$ that this is the number
$\frac{n}{n-(L-1)}$.

This shows how the block length $L$ affects the expected number of
times different indices $k$ occurs in $\mathcalIz[\sharp]{n}$, which
as mentioned above reveals the expected number of times the
corresponding $(m+1)$ tuple will occur in $\mathcalYz[\sharp]{m:L}$.
It is clear from the fraction $\frac{n}{n-(L-1)}$ that it for a
large enough $n$ will be a rather negligible effect on these
expectations when $L$ is modified from e.g.\ 10 to 69 (which is the
case in \cref{app:Block_length_sensitivity}).

\subsubsection{Edge-effect noise for the adjusted resampling
  algorithm}
\label{sec:edge_effects_for_the_adjusted_resampling_algorithm}

This section will investigate the edge-effect noise that occurs when
the adjusted resampling algorithm is applied, and this will in
\cref{sec:dmbp.and.adjusted.resampling.algorithm} be used to check
that the fraction of corrupt tuples becomes minuscule when this
algorithm is used on the \texttt{dmbp}-data ($n=1974$ unique
observations, i.e.\ no ties).

\begin{lemma}
  \label{th:ibbb_expected_number_corrupt}
  Given a sample $\TSR{\Yz{i}}{i=1}{n}$ from a continuous-valued
  time series, and the corresponding derived time series of
  $(m+1)$-tuples
  $\mathcalYz{\overbar{m}} \defeq
  \TSR{\parenR{\Yz{i+m},\dotsc,\Yz{i+1},\Yz{i}}}{i=1}{n-m}$.  For a
  given block length $L>m$, let $q\defeq \ceil{n/L}$ be the number of
  blocks used in the construction of the resampled version
  $\mathcalYz[\sharp]{\overbar{m}:L}$ (introduced in
  \cref{def:index_based_block_bootstrap_for_tuples}), and let
  $r\defeq n-L\cdot(q-1)$ be the length of the last block.
  Let $\mathcalEz[\sharp]{\overbar{m}:L}$ denote the expected number
  of corrupt tuples in $\mathcalYz[\sharp]{\overbar{m}:L}$, i.e.\
  tuples not occurring in $\mathcalYz{\overbar{m}}$.  The number
  $\mathcalEz[\sharp]{\overbar{m}:L}$ is then given by the following
  formula:
  \begin{equation}
    \label{eq:th:ibbb_expected_number_corrupt}
    \mathcalEz[\sharp]{\overbar{m}:L} =
    \begin{cases}
      \frac{1}{2} (q-1) \frac{m(m+1)}{n-(L-1)} & m \leq r\leq L\\
      \frac{1}{2} (q-2) \frac{m(m+1)}{n-(L-1)} + \frac{1}{2}
      \frac{r(r+1)}{n-(L-1)} & 1 \leq r<m 
    \end{cases}
  \end{equation}
\end{lemma}

\begin{proof}

  The continuity-requirement implies that there are no ties (as is
  the case for the \texttt{dmbp}-data).  Further, there is no need
  to adjust the result for the possibility that a corrupt index-set
  (of length $m+1$) can concatenate observations from the two ends
  of $\TSR{\Yz{i}}{i=1}{n}$ into a sequence that already exists as a
  sub-sequence of $\TSR{\Yz{i}}{i=1}{n}$.  To clarify: This
  requirement ensures e.g.\ that no proper tuple
  $\parenR{\Yz{i+1},\Yz{i}}$ can be equal to
  $\parenR{\Yz{1},\Yz{n}}$, so the formulas in
  \cref{eq:th:ibbb_expected_number_corrupt} are thus exact and not
  only approximate.

  The blocks used in the construction of
  $\mathcalYz[\sharp]{\overbar{m}:L}$ are uniquely identified by
  the starting indices given in
  $\mathcalIz[\sharp]{j:L}=\parenR{\nz{j},\nz{j}+1,\dotsc,\nz{j}+(L-1)}$,
  where the initial numbers $\nz{1},\dotsc,\nz{q}$ are sampled
  uniformly from $\parenC{1,\dotsc,n-(L-1)}$.  This implies that
  $\mathcalEz[\sharp]{\overbar{m}:L}$ can be expressed as the sum
  of the expected number of corrupt tuples in the individual blocks.

  It was mentioned in \cref{def:actual_tuple_given_starting_index}
  that zero corrupt tuples would occur for a starting index in
  $\parenC{n-(m-1),\dotsc,n}$, and it follows from this that a block
  will contain 0 corrupt tuples when $\nz{j}\leq n-m-(L-1)$.  This
  implies that the probability for 0 corrupt tuples in a block is
  given by $\tfrac{n-m-(L-1)}{n-(L-1)}$.  It is easy to check that a
  starting tuple given by $\nz{j}=n-(m-k)-(L-1)$ for some
  $k\in\parenC{1,\dotsc,m}$ must correspond to a block that contains
  $k$ corrupt tuples, and each of these outcomes have the same
  probability $\tfrac{1}{n-(L-1)}$.  It follows from this that the
  expected number of corrupt tuples in a block is given by
  $\tfrac{1}{2}\tfrac{m(m+1)}{n-(L-1)}$.

  The expected number of corrupt tuples for the individual blocks
  can now be used to compute $\mathcalEz[\sharp]{\overbar{m}:L}$,
  i.e.\ the expected number of corrupt tuples in
  $\mathcalYz[\sharp]{\overbar{m}:L}$.  Note that only the first
  $n-m$ indices from $\mathcalIz[\sharp]{n}$ are used in the
  computation of $\mathcalYz[\sharp]{\overbar{m}:L}$, cf.\
  \myref{def:index_based_block_bootstrap_for_tuples}{alg:ibbb_G*},
  and that implies that any potential corrupt tuples from the last
  block will be discarded due to this truncation.  The length $r$ of
  the last block will thus influence whether or not some potential
  corrupt tuples from the second to last block also might be removed
  in this truncation, and the formula for the expected number of
  corrupt tuples in $\mathcalYz[\sharp]{\overbar{m}:L}$ must thus
  take the value of $r$ into account.  By construction, $r$ will be
  a number in the set $\parenC{1,\dotsc,L}$.

  The case where $r\geq m$ is the simplest, since the truncation to
  length $n-m$ in this case does not affect the second to last
  block.  The expected number of corrupt tuples in
  $\mathcalYz[\sharp]{\overbar{m}:L}$ is thus simply the sum of the
  expected number of corrupt tuples from the $q-1$ first blocks,
  which gives the result
  $\mathcalEz[\sharp]{\overbar{m}:L}=\frac{1}{2} (q-1)
  \frac{m(m+1)}{n-(L-1)}$ when $r\geq m$.

  The situation for the case $r<m$ is slightly more complicated.
  The effect of truncation to length $n-m$ will in this case
  completely eliminate the last block of $\mathcalIz[\sharp]{n}$,
  and the second to last block will have its last $(m-r)$ indices
  removed.  This implies that the highest possible number of corrupt
  tuples from block number $q-1$ is reduced from $m$ to $r$, which
  implies that the expected number of corrupt tuples from this block
  becomes $\tfrac{1}{2}\tfrac{r(r+1)}{n-(L-1)}$.  The stated result
  follows when this expected number is added together with the
  contribution from the $q-2$ first blocks, i.e.\
  $\mathcalEz[\sharp]{\overbar{m}:L}=\frac{1}{2} (q-2)
  \frac{m(m+1)}{n-(L-1)} + \frac{1}{2} \frac{r(r+1)}{n-(L-1)}$ when
  $r<m$.
\end{proof}

The result in \cref{th:ibbb_expected_number_corrupt} is stated for
$m+1$ tuples of the form $\parenR{\Yz{i+m},\dotsc,\Yz{i+1},\Yz{i}}$,
but it is easy to see that the expected number of corrupt tuples
remains the same if it is restated for bivariate lag-$m$ tuples 
$\parenR{\Yz{i+m},\Yz{i}}$.  This implies that the formula in
\cref{eq:th:ibbb_expected_number_corrupt} can be used for the
\texttt{dmbp}-data investigation given in the next section.

The continuity requirement in \cref{th:ibbb_expected_number_corrupt}
was included in order to avoid additional technicalities in the
proof, but the resulting expression for the expected number of
corrupt tuples would for most cases remain the same even if some
observations were repeated.

A minor warning should be added with regard to the corrupt tuples
that actually do occur when the resampling strategy from
\cref{def:index_based_block_bootstrap_for_tuples} is used: The way
the tuples $\parenR{\Yz[\sharp]{t+h},\Yz[\sharp]{t}}$ is constructed
implies that the corrupt tuples always will occur at the exact same
positions. For example, the lag-1 corrupt tuple will always be the
tuple $\parenR{\Yz{1},\Yz{n}}$, the lag-2 corrupt tuples will always
be from the set
$\parenC{\parenR{\Yz{1},\Yz{n-1}}, \parenR{\Yz{2},\Yz{n}}}$, and so
on.  In a worst case scenario, some of these tuples might be very
close to the point $\LGp$ for which the local Gaussian
autocorrelation $\lgacr{\LGp}{h}$ is to be computed (this can easily
be checked by plotting the relevant tuples).  Given the low expected
fraction of corrupt tuples, cf.\
\cref{app:table:corrupt.tuples.ibb.bootstrap} in the next section,
it seems likely that this effect should not turn out to be a too big
problem.

\subsubsection{The \texttt{dmbp}-data and corrupt tuples for the
  adjusted resampling algorithm}
\label{sec:dmbp.and.adjusted.resampling.algorithm}

It was seen in \cref{app:corrupt.tuples.edge.effect.block.bootstrap}
that the ordinary block bootstrap could produce a high fraction of
corrupt tuples when it was used on smaller samples.
The \texttt{dmbp}-data ($n=1974$) was used as a test case, and
\cref{app:table:corrupt.tuples.block.bootstrap} on page
\pageref{app:table:corrupt.tuples.block.bootstrap} listed the
approximate fractions of corrupt tuples that was expected to occur
in
$\mathcalYz[*]{h:L}=\TSR{\parenR{\Yz[*]{t+h},\Yz[*]{t}}}{t=1}{n-h}$
when $h\in\parenC{1,\dotsc,10}$ and $L=\parenC{25,100}$.
It is now of interest to create a similar table for the
\textit{circular index-based block bootstrap for tuples} from
\cref{def:index_based_block_bootstrap_for_tuples}, in order to see
to what extent the adjusted resampling strategy manages to reduce the
expected number of corrupt tuples
$\mathcalYz[\sharp]{h:L}=\TSR{\parenR{\Yz[\sharp]{t+h},\Yz[\sharp]{t}}}{t=1}{n-h}$.

The counting formula from \cref{eq:th:ibbb_expected_number_corrupt}
can, as mentioned after the proof of
\cref{th:ibbb_expected_number_corrupt}, be used for the present case
of interest too.  The length of the last blocks will for the two
cases $L=25$ and $L=100$ respectively be 24 and 74, and this implies
(since both of them are larger than $h=10$), that it is the version
$\frac{1}{2} (q-1) \frac{h(h+1)}{n-(L-1)}$ that should be used to
find the expected number of corrupt tuples in
$\mathcalYz[\sharp]{h:L}$ when $h\in\parenC{1,\dotsc,10}$.
The data in \cref{app:table:corrupt.tuples.block.bootstrap} was
given as fractions of the total number of tuples $n-h$, and
\cref{app:table:corrupt.tuples.ibb.bootstrap} has thus used the same
adjustment.

\begin{table}[ht]
  \centering
  \begingroup\tiny
  \begin{tabular}{r|llllllllll}
    \hline
    $L$ \textbackslash\ $h$ & 1 & 2 & 3 & 4 & 5 & 6 & 7 & 8 & 9 & 10 \\ 
    \hline
    25 & 0.002\% & 0.006\% & 0.012\% & 0.020\% & 0.030\% & 0.043\% & 0.057\% & 0.073\% & 0.092\% & 0.112\% \\ 
    100 & 0.001\% & 0.002\% & 0.003\% & 0.005\% & 0.008\% & 0.011\% & 0.014\% & 0.019\% & 0.023\% & 0.028\% \\ 
    \hline
  \end{tabular}
  \endgroup
  \caption{The expected amount of corrupt tuples when
    $\lgacr{\LGp}{h}$ are estimated for the \texttt{dmbp}-data
    by the \textit{circular index-based block bootstrap for tuples},
    cf.\ \cref{def:index_based_block_bootstrap_for_tuples}.}
  \label{app:table:corrupt.tuples.ibb.bootstrap}
\end{table}

A comparison with \cref{app:table:corrupt.tuples.block.bootstrap},
with focus on the entries in the $h=10$ column, shows that the
numbers have been reduced from 39.7\% to 0.112\% when $L=25$, and it
has been a reduction from 9.7\% to 0.028\% when $L=100$.  This
implies that the edge-effect noise for the adjusted resampling
strategy can be considered rather negligible, and it could also be
the case that estimates based on $\mathcalYz[\sharp]{h:L}$ might be
less sensitive to changes in the block length $L$, cf.\ the
sensitivity analysis in \cref{app:Block_length_sensitivity}.

The relation between the entries in
\cref{app:table:corrupt.tuples.block.bootstrap,app:table:corrupt.tuples.ibb.bootstrap}
can be found by comparing the counting formulas for the expected
number of corrupt tuples in $\mathcalYz[*]{h:L}$ and
$\mathcalYz[\sharp]{h:L}$, cf.\ respectively
\cref{eq:th:block_expected_number_corrupt,eq:th:ibbb_expected_number_corrupt},
and this results in\footnote{
  The result is only approximate since
  \cref{eq:th:block_expected_number_corrupt} did not adjust for the
  possibility that neighbouring blocks in some rare cases could join
  perfectly (no edge-effect noise), but the actual fraction should
  be very close to the one given by the expressions in
  \cref{eq:th:fraction.expected.corrupt.ibb.vs.block}.  }
\begin{equation}
  \label{eq:th:fraction.expected.corrupt.ibb.vs.block}
  \frac{\mathcalEz[\sharp]{h:L}}{\mathcalEz[*]{h:L}}
  \approx
  \begin{cases}
    \frac{1}{2} \frac{h+1}{n-(L-1)} & h \leq r\leq L \\
    \frac{1}{2} \frac{(q-2)\cdot h\cdot(h+1) + r\cdot(r+1)
    }{(n-(L-1))\cdot(h \cdot (q-2) + r)} & 1\leq r<h.
  \end{cases}
\end{equation}
It follows from \cref{eq:th:fraction.expected.corrupt.ibb.vs.block}
that the $L=25$ and $h=10$ entry in
\cref{app:table:corrupt.tuples.ibb.bootstrap} is 0.282\% of the
corresponding value in
\cref{app:table:corrupt.tuples.block.bootstrap} --- and it can
similarly be seen that the same relation for the entry $L=100$ and
$h=10$ is given by 0.293\%.

\subsection{Sensitivity analysis: The block length $L$}
\label{app:Block_length_sensitivity}

The block length sensitivity for the adjusted resampling strategy
from \cref{def:index_based_block_bootstrap_for_tuples} will now be
investigated --- and the computations will, as for the other tuning
parameters investigated in \cref{app:sensitivity_analysis}, be based
on the \texttt{dmbp}-data.

The tool for this investigation will be the distance function $D$
that was defined in \cref{app:method_for_sensitivity_analysis},
i.e.\ the distance function inherited from the complex Hilbert space
of Fourier series on the interval
$\parenS{-\tfrac{1}{2},\tfrac{1}{2}}$.  This distance function does
not reveal anything about the frequency-component of the cases under
investigation, so it is also necessary to include a plot that focus
on that aspect for a few of the block lengths $L$.

The block length $L$ takes integer values, and one possible way to
gain some insight into the sensitivity of this argument is to use a
sequence of box-plots to show the status for different values of
$L$.  This approach has been used in
\cref{fig:local__block_length_norms_and_changes_in_percentage},
where the panel at the top contains the results as $L$ increases in
steps of 1 from $L=10$ to $L=69$.

\begin{figure}[h]
  {\centering \includegraphics[width=\textwidth]{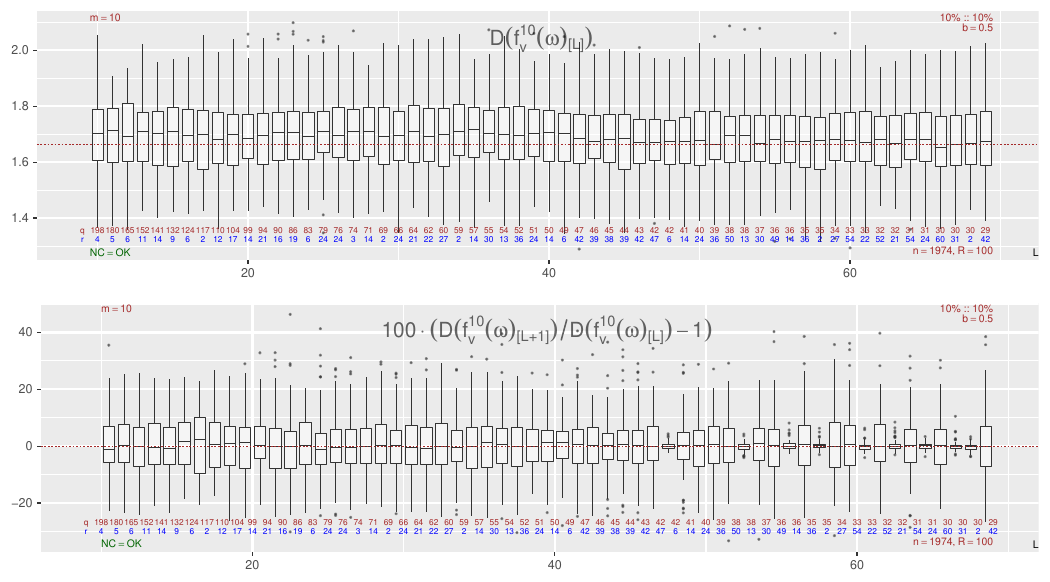}
  }
  \caption[]{Distance based box-plots for the investigation of the
    sensitivity of the block length $L$ for the adjusted resampling
    strategy from \cref{def:index_based_block_bootstrap_for_tuples}.
    The numbers in the two bottom rows show $q=\ceil{n/L}$ and
    $r=n-(q-1)\cdot L$, i.e.\ the number of blocks and the length of
    the last block. }
  \label{fig:local__block_length_norms_and_changes_in_percentage}
\end{figure}

\textbf{The panel at the top of
  \cref{fig:local__block_length_norms_and_changes_in_percentage}:} 
A box-plot for the $D\!\parenR{\lgsdM{\LGp}{\omega}{10}_ {[L]}}$-values
(based on $R=100$ replicates) is given for each block length $L$.  A
horizontal red dashed line has been added that shows the
$D\!\parenR{\lgsdM{\LGp}{\omega}{10}}$-value for the original
sample.  It can be seen that the medians of the box-plots tend to be
slightly larger than the horizontal line that corresponds to the
value based on the original sample, they seem to approach the line
as $L$ increases, but these medians are based on $R=100$ replicates
--- and another realisation might thus look slightly different.  It
does not seem to be any pattern here with regard to how these
box-plots changes when $L$ increases.

\textbf{The panel at the bottom of
  \cref{fig:local__block_length_norms_and_changes_in_percentage}:} 
These box-plots shows the percent-wise changes in the distances when
the block length goes from $L$ to $L+1$, and everything else is kept
identical, i.e.\
$100 \cdot \parenR{D\!\parenR{\lgsdM{\LGp}{\omega}{10}_
    {[L+1]}}/D\!\parenR{\lgsdM{\LGp}{\omega}{10}_ {[L]}}-1}$.  This
is possible to do since the reproducibility setup enables a tracking
for each individual realisation.

A horizontal red dashed line has been added at 0, and it is clear
that the median-part of these box-plots are quite close to this
horizontal line.  It can also be observed that some of these
box-plots are more compact than the other ones, and a simple
investigation of the numbers given at the bottom of the plots
reveals that this phenomenon occurs when an increase from $L$ to
$L+1$ does not reduce the number of blocks that are needed, i.e.\
they occur when $\ceil{n/L} = \ceil{n/(L+1)}$.

For the individual bootstrapped time series, this indicates that the
changes are minimal when the number of blocks remains the same ---
whereas the changes are much larger when the increase of $L$
triggers a reduction in the number of blocks.  However, as is
evident from an inspection of the panel at the top of
\cref{fig:local__block_length_norms_and_changes_in_percentage}, this
effect is only on the level of the individual replicates, and it is
averaged away when a collection of replicates is considered.

Note that the effect noticed in the bottom panel of
\cref{fig:local__block_length_norms_and_changes_in_percentage} also
is present for the global spectral densities (based on these
bootstrapped samples), so this phenomenon is thus not an artefact of
the way the local Gaussian spectral densities are estimated.

\textbf{The frequency-component:} 
\Cref{fig:local__block_length_norms_and_changes_in_percentage}
indicates that the block length sensitivity, as measured by
$D\!\parenR{\lgsdM{\LGp}{\omega}{m}}$, for the adjusted resampling
strategy from \cref{def:index_based_block_bootstrap_for_tuples} is
rather small.  But does this imply that these block lengths should
be considered \textit{equally good} or \textit{equally bad}?  That
can not be concluded from
\cref{fig:local__block_length_norms_and_changes_in_percentage}
alone, and it is thus necessary to also consider a plot that takes
the frequency-dimension into account, as is done in
\cref{fig:local__block_length_cibb_four_cases} for the four block
lengths $L\in\parenC{10,25,50,69}$.

\begin{figure}[h]
  {\centering
    \includegraphics[width=\textwidth]{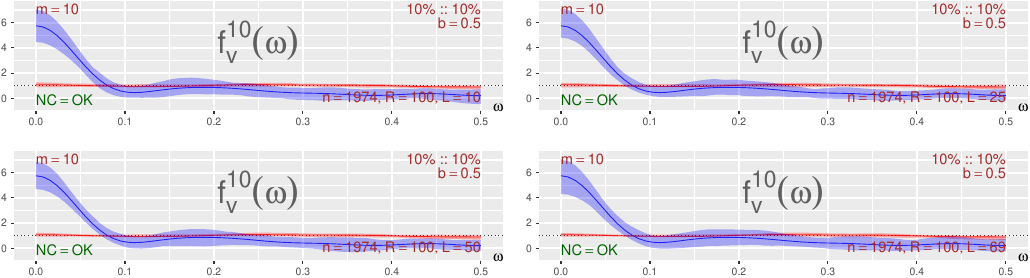}
  }
  \caption[]{Four different block lengths $L$ (from those
    investigated in
    \cref{fig:local__block_length_norms_and_changes_in_percentage})
    have here been used in the resampling strategy given in
    \cref{def:index_based_block_bootstrap_for_tuples}.  The values
    of $L$ are 10, 25, 50 and 69, and this information is plotted at
    the lower right corner of the plots.
  }\label{fig:local__block_length_cibb_four_cases}
\end{figure}

It is clear from \cref{fig:local__block_length_cibb_four_cases} that
the differences between these estimates are rather small, and it is
necessary to look closely in order to see that the pointwise
confidence intervals are slightly narrower near $\omega=0$ for the
two cases $L=25$ and $L=50$.  Moreover, the situation with minimal
differences between the estimates remains unchanged even if the
number of lags are increased to e.g.\ $h=50$.

This might at first sight be somewhat surprising (and a source of
concern), since it seems natural to assume that the block length $L$
should have a larger impact on the results.  However, this result is
actually quite natural to anticipate when the discussion from
\cref{app:block.length.and.expected.content.of.resampled.data} is
taken into account.  It was there noted that the algorithm that
estimates $\lgsdM{\LGp}{\omega}{m}$ does not use the temporal
connection \textit{between} the $(m+1)$-tuples in
$\mathcalYz[\sharp]{h:L} =
\TSR{\parenR{\Yz[\sharp]{t+m},\dotsc,\Yz[\sharp]{t+1},\Yz[\sharp]{t}}}{t=1}{n-m}$,
and that the important detail thus should be the expected number of
times the different tuples would occur in $\mathcalYz[\sharp]{h:L}$.

It was seen in
\cref{app:block.length.and.expected.content.of.resampled.data} that
the majority of the tuples were expected to occur
$\tfrac{n}{n-(L-1)}$ times, and this value hardly changes when
$n=1974$ and $L$ goes from 10 to 69.  There are of course also
differences with regard to the expected number of corrupt tuples for
different block-lengths, cf.\
\cref{eq:th:ibbb_expected_number_corrupt} in
\cref{th:ibbb_expected_number_corrupt}, but the data in
\cref{app:table:corrupt.tuples.ibb.bootstrap} clearly indicates that
this effect also can be considered minuscule.

The effect of different block lengths $L$ will of course be larger
if this resampling strategy is used on a short sample, but for such
samples it might not really be natural to compute the local Gaussian
spectrum in the first place (since the bandwidth $\bm{b}$ in such
cases must be large, and this tends to blot out local differences in
the spectrum).

\textbf{An additional example:} The preceding discussion about the
anticipated outcome is completely general in nature, but one might
still wonder if the results in
\cref{fig:local__block_length_norms_and_changes_in_percentage,fig:local__block_length_cibb_four_cases}
would have looked significantly differently if another case than the
\texttt{dmpb}-data had been used for the investigation.  This is
easy to investigate since the relevant scripts trivially can be
adjusted to investigate other cases too, like e.g.\ a single
realisation from the \textit{local trigonometric} time series, cf.\
\cref{fig:trigonometric,fig:dmt_heatmap_and__levels_vs_norm} in
\cref{sec:Deterministic_trigonometric_models} and the discussion in
\cref{app:fig:trigonometric}.

The result for the distance based box-plots for this new
investigation was (as expected) very similar to the result seen in
\cref{fig:local__block_length_norms_and_changes_in_percentage}.  The
analogue of \cref{fig:local__block_length_cibb_four_cases} is shown
in
\cref{fig:local__block_length_cibb_four_cases_local.trigonometric},
and it seems in fact to be the case that the differences between the
pointwise confidence intervals in this case is even smaller than
those observed in \cref{fig:local__block_length_cibb_four_cases}.

\begin{figure}[h]
  {\centering
    \includegraphics[width=\textwidth]{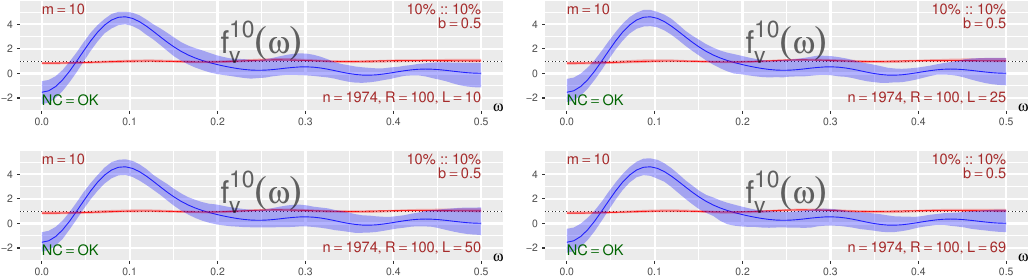}
  }
  \caption[]{A plot similar to
    \cref{fig:local__block_length_cibb_four_cases_local.trigonometric},
    that shows that the adjusted resampling strategy from
    \cref{def:index_based_block_bootstrap_for_tuples} also works
    when used on a single realisation from the local trigonometric
    time series used in
    \cref{fig:trigonometric,fig:dmt_heatmap_and__levels_vs_norm} in
    \cref{sec:Deterministic_trigonometric_models}, see
    \cref{app:fig:trigonometric} for details.
  }\label{fig:local__block_length_cibb_four_cases_local.trigonometric}
\end{figure}

\textbf{Conclusion:} The preceding discussion (based on the
\texttt{dmbp}-data and a \textit{local trigonometric} example)
indicates that the block length $L$ does not seem to have a major
impact on the estimates and pointwise confidence intervals obtained
from the adjusted resampling strategy given in
\cref{def:index_based_block_bootstrap_for_tuples}.  This simplifies
the task described in \cref{sec:Real_data}, i.e.\ to figure out if
it for a given sample of size $n$ seems reasonable to claim that an
observed difference between estimates of $\fz[m]{}(\omega)$ and
$\lgsdM{\LGp}{\omega}{m}$ is large enough to conclude that the
sample at hand do have a non-Gaussian dependency structure.

For other cases, it seems natural to recommend an approach where
different block lengths $L$ are tested (like seen in
\cref{fig:local__block_length_cibb_four_cases,fig:local__block_length_cibb_four_cases_local.trigonometric}),
in order to safeguard against the possibility that the present
examples for some reason both should be exceptional cases.

\textbf{Reproducibility:} The scripts needed for the reproduction of
the plots in this section
are 
included in the \Rpackage \lgsdRpackage\ (see \cref{app:data_details}
for further details), and the interested reader can there easily
adjust the range of the block lengths to be used.  It is also
possible to adjust all the other tuning parameters needed for the
estimation of $\lgsdM{\LGp}{\omega}{m}$, and it is even possible to
perform the computations with the ordinary block bootstrap if so
should be desired.

\subsection{What about the ordinary block bootstrap?}
\label{app:What.about.the.ordinary.block.bootstrap?}

It was originally the ordinary block bootstrap that was used as the
resampling strategy in this paper, and it could be of interest to
include a few comments related to this method.

First of all, note that there are data-driven methods for the
selection of the block length to be used with the block bootstrap,
see e.g.\
\citet{politis1994limit,buhlmann1999block,politis2004automatic,patton2009correction,lahiri2007nonparametric,nordman2014convergence}
--- but these methods do not produce good results when used upon
data with a nonlinear structure and a flat (ordinary) spectrum.

The \enquote{problem} is easily detected from an inspection of the
selection algorithms in sections 3.2 and 3.3 in
\citet{politis2004automatic}, as they all have a factor
\mbox{$G \defeq \sumss{h=-\infty}{\infty} |h|R(h)$} where $R(h)$ is
the lag~$h$ autocovariance of the series under investigation.  For a
time series whose ordinary spectrum is flat, the only nonzero $R(h)$
occurs when \mbox{$h=0$}, and the sum $G$ thus becomes zero in this
case.  This implies that the data-driven block length algorithms
(both for the stationary and for the circular bootstrap) considers a
very short block length to be suitable when bootstrapping the
\texttt{dmbp} data --- and that would obviously destroy all
nonlinear structures in the data.

To the best of the authors' knowledge, it does not exist any
adjustments of the algorithms used for block length selection that
is suited for a situation with a flat global spectrum, and the block
length $L$ thus had to be selected manually.  A sensitivity analysis
of the block length argument for the ordinary block bootstrap showed
something similar to
\cref{fig:local__block_length_norms_and_changes_in_percentage} when
$L$ was large enough, e.g.\ the range from $L=75$ to $L=135$.
It was mentioned on page \pageref{page:L=100} in the main part that
a block length of $L=100$ had been used in an earlier draft of this
paper (selected due to a visual inspection of the
$\hatlgacr{\LGp}{h}$-values seen in \cref{fig:dmbp_lag}, and after
the testing of a few alternatives), and it can be noted that the
pointwise confidence intervals then looked very similar to those
based on the adjusted block bootstrap, cf.\ \cref{fig:dmbp}.

The results for shorter block lengths could on the other hand be
rather bad, but that is hardly surprising based on the observations
in \cref{app:table:corrupt.tuples.block.bootstrap} (in
\cref{app:corrupt.tuples.edge.effect.block.bootstrap}) about the
fractions of corrupt tuples that occurs when the block bootstrap is
used on a short sample.

In retrospect it is clear that the optimal resampling strategy would
have been to use the block bootstrap on the derived $(m+1)$-variate
time series
$\TSR{\parenR{\Yz{t+m},\dotsc,\Yz{t+1},\Yz{t}}}{t=1}{n-m}$, since
that would have eliminated all of the edge-effect noise --- see
\cref{sec:natural_solution_to_edge_effect_issue?} for the details,
and an argument in favour of using the adjustment from
\cref{def:index_based_block_bootstrap_for_tuples} instead.  Note
that a resampling of the derived $(m+1)$-variate time series will
have the same properties as those discussed in
\cref{app:block.length.and.expected.content.of.resampled.data}, and
the sensitivity of the block-length argument $L$ should in this case
be similar to those seen in
\cref{fig:local__block_length_norms_and_changes_in_percentage,fig:local__block_length_cibb_four_cases}
for the adjusted resampling strategy.

The ordinary block bootstrap (working on $\TSR{\Yz{t}}{t=1}{n}$) is
available as a resampling strategy in the \Rpackage \lgsdRpackage,
but the default for this task is the adjusted block bootstrap from
\cref{def:index_based_block_bootstrap_for_tuples}.

\section{Scripts and details related to the examples}
\label{app:data_details}
\setcounter{figure}{0} 

The reproducibility of all the examples in this paper can be done by
the scripts contained in the \Rpackage \lgsdRpackage, and
\cref{app:The.scripts.in.lgsdRpackage} explains how the interested
reader can extract these scripts.
Additional details about the GARCH$(1,1)$-example seen in
\cref{fig:GARCH_intro_example_short}, and the apARCH$(2,3)$-example
seen in \cref{fig:GARCH}, are given
in \cref{app:fig:GARCH_intro_example_short,app:fig:GARCH}.

\Cref{app:fig:trigonometric} investigates the \textit{local
  trigonometric} example seen in
\cref{fig:trigonometric,fig:dmt_heatmap_and__levels_vs_norm}.  It
starts with a theoretical investigation of the general construction
of which the \textit{local trigonometric} example is a particular
realisation, and it then gives the heuristic arguments that enables
this example to be used for the sanity testing of the implemented
estimation algorithm.

The last part of \cref{app:fig:trigonometric} verifies that it for a
large sample is possible to detect an elusive 
component that only occurs with probability 0.05, 
and it ends with some comments related to issues that can occur
(under specific circumstances) when the local Gaussian machinery is
used on a time series whose global spectrum does not look like white
noise.

\subsection{The scripts in the \Rpackage \lgsdRpackage}
\label{app:The.scripts.in.lgsdRpackage}

All the examples in this paper (and the related multivariate paper
\myblind{\citet{jordanger17:_lgcsd}}{blinded reference}) can be
reproduced by the scripts in the \Rpackage \lgsdRpackage.  This
\Rpackage can
be installed 
by using \enquote{\devtoolgithublgsdRpackage}.  
The simplest way to extract the scripts from the internal storage of
this \Rpackage is to use the \Rfunction
\enquote{\Rref{LG\_extract\_scripts()}} after the package has been
installed.

These scripts can either be used as they are (reproduction of the
examples in this paper), or they can be used as templates for
similar investigations of other samples/models that the user would
like to investigate.  In the latter case, see
\cref{How.to.select.the.tuning.parameters?} for some comments
related to the selection of the tuning parameters of the estimation
algorithm.

The reproduction of the figures requires two different scripts.  The
first scripts contain the code needed for the estimation of the
local Gaussian autocorrelations $\lgacr{\LGp}{h}$ for all the
specified combinations of the tuning parameters, whereas the second
scripts contain the code that creates the particular visualisations
seen in the figures in this paper. 
Note that it is sufficient to use the first type of scripts in order
to use the integrated \Rref{shiny}-application that enables an easy
interactive investigation of the resulting estimates.  The second
type of scripts is first needed when one wants to put many figures
into one larger grid.

\subsection{The GARCH$(1,1)$-example in
  \cref{fig:GARCH_intro_example_short}}
\label{app:fig:GARCH_intro_example_short}
A GARCH$(1,1)$ example was in \cref{fig:GARCH_intro_example_short}
used to show that the local Gaussian spectral density could
detect 
dependency structures that the ordinary spectral density did
not detect.

The following description of the standard GARCH model,
introduced in \citet{bollerslev86:_gener}, is taken from the
vignette for the \Rref{rugarch}-package \citet{ghalanos15_rugarch},
\begin{align}
  \label{eq:GARCH_general_model}
  \sigmaz[2]{t} 
  &= \parenR{\omega + \sumss{j=1}{m} \zetaz{j} \vz{jt}} 
    + \sumss{i=1}{q} \alphaz{j} \varepsilonz[2]{t-j}  
    + \sumss{i=1}{p} \betaz{j} \sigmaz[2]{t-j}, 
\end{align}
with $\sigmaz[2]{t}$ denoting the conditional variance, $\omega$ the
intercept and $\varepsilonz[2]{t}$ the residuals from the mean
filtration process.  The GARCH order is defined by $(q,p)$ (ARCH,
GARCH), and it can also be $m$ external regressors $\vz{jm}$ which
are passed \textit{pre-lagged}.  Consult
\citet[sec.~2.2.1]{ghalanos15:_rugarch} for further details.

The \Rcode below specifies the parameters for the GARCH$(1,1)$ model
in
\cref{fig:GARCH_intro_example_short}.  
\begin{kframe}
\begin{alltt}
\hlkwd{library}\hlstd{(rugarch)}
\hlstd{.spec} \hlkwb{<-} \hlkwd{ugarchspec}\hlstd{(}
    \hlkwc{variance.model}\hlstd{=}\hlkwd{list}\hlstd{(}\hlkwc{model}\hlstd{=}\hlstr{"sGARCH"}\hlstd{,}
                        \hlkwc{garchOrder}\hlstd{=}\hlkwd{c}\hlstd{(}\hlnum{1}\hlstd{,}\hlnum{1}\hlstd{)),}
    \hlkwc{mean.model}\hlstd{=}\hlkwd{list}\hlstd{(}\hlkwc{armaOrder}\hlstd{=}\hlkwd{c}\hlstd{(}\hlnum{0}\hlstd{,}\hlnum{0}\hlstd{),}
                    \hlkwc{include.mean}\hlstd{=}\hlnum{TRUE}\hlstd{),}
    \hlkwc{distribution.model}\hlstd{=}\hlstr{"norm"}\hlstd{,}
    \hlkwc{fixed.pars}\hlstd{=}\hlkwd{list}\hlstd{(}\hlkwc{mu}\hlstd{=}\hlnum{0.001}\hlstd{,}
                    \hlkwc{omega}\hlstd{=}\hlnum{0.00001}\hlstd{,}
                    \hlkwc{alpha1}\hlstd{=}\hlnum{0.02}\hlstd{,}
                    \hlkwc{beta1}\hlstd{=}\hlnum{0.95}\hlstd{))}
\end{alltt}
\end{kframe}

\subsection{The apARCH$(2,3)$-example in
  \cref{fig:GARCH}}
\label{app:fig:GARCH}

The apARCH(2,3)-example seen in \cref{fig:GARCH} (see also
\cref{fig:apARCH_heatmap_and__levels_vs_norm,fig:apARCH_v_heatmap_lgacr})
had coefficients that were fitted to the \texttt{dmbp}-data by the
help of the \Rref{rugarch}-package \citet{ghalanos15_rugarch}, and
this particular model was selected after a testing procedure that
tried out several thousand different variations of the GARCH-type
models implemented in the \Rref{rugarch}-package.

The apARCH$(p,q)$ model (for observations \epsilonz{t}) was in
\citet{ding1993long} introduced as
\begin{subequations}
  \label{eq:apARCH_general_model}
  \begin{align}
    &\epsilonz{t} = \sz{t}\ez{t},\qquad \ez{t}\sim \UVN{0}{1}, \\
    &\sz[\delta]{t} = \alphaz{0} + \sumss{i=1}{p} \alphaz{i}
      \subp{\parenR{\absp{\epsilonz{t-i}}-\gammaz{i}\epsilonz{t-i}}}{}{}{}{\delta}
      + \sumss{j=1}{q}\betaz{j}\sz[\delta]{t-i},
  \end{align}
\end{subequations}
where $\alphaz{0} > 0$, $\delta \geq 0$, $\alphaz{i} \geq 0$ and
$-1 < \gammaz{i} < 1$ for $i=1\dotsc,p$, and $\betaz{j} \geq 0$ for
$j=1\dotsc,q$.

The description of this model in the \Rref{rugarch}-package, cf.\
\citet[sec.~2.2.5]{ghalanos15:_rugarch},
is slightly different: The constant $\alphaz{0}$ is there replaced
with $\parenR{\omega + \sumss{j=1}{m} \zetaz{j} \vz{jt}}$, which is
the same term that was used in \cref{eq:GARCH_general_model}, see
the previous section for details.

\subsection{The local trigonometric example in
  \cref{fig:trigonometric,fig:dmt_heatmap_and__levels_vs_norm}}
\label{app:fig:trigonometric}

This section will discuss some topics related to the \textit{local
  trigonometric} example, whose local Gaussian spectral density was
investigated in
\cref{fig:trigonometric,fig:dmt_heatmap_and__levels_vs_norm} of
\cref{sec:Deterministic_trigonometric_models}.  A few basic results
related to the general construction are given in
\cref{app:fig:trigonometric.general.properties}, whereas
\cref{app:fig:trigonometric.heuristic.argument} presents the arguments
that enabled this example to be used for the sanity testing of the
implemented estimation algorithm.

It was noted in \cref{sec:Deterministic_trigonometric_models} that the
first component of the \textit{local trigonometric} example could not
be detected in the short sample investigated in that section, but it
is possible to detected it when the sample-size is large enough, cf.\
\cref{fig:local.trigonometric.C1-component} in
\cref{app:fig:trigonometric.C1.component}.

Finally, \cref{sec:lgch:beware_of_global_structures} highlights
issues that can occur (under specific circumstances) when this
machinery is used on a time series whose global spectrum does not
look like white noise, as seen in
\cref{fig:trigonometric_one_cosine_tiny_noise} where the
$m$-truncated local Gaussian spectrum has been estimated for samples
from a deterministic function perturbed by very low random
fluctuations.

\subsubsection{Some properties of the general construction}
\label{app:fig:trigonometric.general.properties}

Recall that the \textit{local trigonometric} example is a particular
case of a general construction, in which a new time series
$\TSR{\Yz{t}}{t\in\ZZ}{}$ is constructed by the following scheme:
\begin{enumerate}
\item 
  Select $r$ time series $\TSR{\Cz{i}(t)}{i=1}{r}$.
\item 
  Select a random variable $I$ with values in the set
  $\parenC{1,\dotsc,r}$, and use this to sample a collection of
  indices $\TSR{\Iz{t}}{t\in\ZZ}{}$ (i.e.\ for each $t$ an
  independent realisation of $I$ is taken).  Let
  $\pz{i}\defeq\Prob{\Iz{i}=i}$ denote the probabilities for the
  different outcomes.
\item 
  Define $\Yz{t}$ by
  means of the equation
  \begin{align}
    \label{app:eq:local_trigonometric_sum}
    \Yz{t} \defeq \sumss{i=1}{r} \Ind{\Iz{t} = i}\cdot\Cz{i}(t).
  \end{align}
\end{enumerate}

The basic properties of $\TSR{\Yz{t}}{t\in\ZZ}{}$ can be expressed
relatively those of $\TSR{\Cz{i}(t)}{i=1}{r}$, as seen in the
following result.

\begin{lemma} ½
  \label{app:lemma:properties.of.the.artificial.time.series}
  With $\TSR{\Yz{t}}{t\in\ZZ}{}$ as defined above, it follows that:
  \begin{enumerate}[label=(\alph*)]
  \item 
    \label{app:lemma:properties.of.the.artificial.time.series..E}
    $\E{\Yz{t}} = \sumss{i=1}{r} \pz{i}\cdot\E{\Cz{i}(t)}$
  \item 
    \label{app:lemma:properties.of.the.artificial.time.series..E2}
    $\E{\Yz{t+h}\cdot\Yz{t}} =
    \begin{cases}
      \sumss{i=1}{r}\sumss{j=1}{r} \pz{i}\cdot\pz{j}\cdot\E{\Cz{i}(t+h)\cdot\Cz{i}(t)}
      & h\neq 0 \\
      \sumss{i=1}{r} \pz{i}\cdot\E{\Cz{i}(t)^2} & h=0
    \end{cases}$
  \item 
    \label{app:lemma:properties.of.the.artificial.time.series..Cov}
    $\Cov{\Yz{t+h}}{\Yz{t}} =
    \begin{cases}
      \sumss{i=1}{r} \sumss{j=1}{r} \pz{i}\cdot\pz{j}\cdot
      \Cov{\Cz{i}(t+h)}{\Cz{j}(t)} & h \neq 0 \\
      \sumss{i=1}{r} \pz{i}\cdot\E{\Cz{i}(t)^2} -
      \parenR{\sumss{i=1}{r} \pz{i}\cdot\E{\Cz{i}(t)}}^2 & h = 0
    \end{cases}$
  \item 
    \label{app:lemma:properties.of.the.artificial.time.series..Cov..independent}
    The additional assumption that $\Cz{i}(t)$ and $\Cz{j}(t)$ are
    independent when $i\neq j$, simplifies the $h\neq0$ case to:
    $\Cov{\Yz{t+h}}{\Yz{t}}=\sumss{i=1}{r} \pz[2]{i}\cdot
    \Cov{\Cz{i}(t+h)}{\Cz{i}(t)}$.
  \end{enumerate}
\end{lemma}

\begin{proof}
  The random variable $\Iz{t}$ that produces the set of indices is
  independent of $\Cz{i}(t)$, and
  \cref{app:lemma:properties.of.the.artificial.time.series..E} thus
  follows without further ado.  For the $h\neq0$ case of
  \cref{app:lemma:properties.of.the.artificial.time.series..E2} it is
  sufficient to note that $\Iz{t+h}$ and $\Iz{t}$ then are
  independent, and it follows that
  $\E{\Ind{\Iz{t+h}=i}\cdot\Ind{\Iz{t}=j}} =
  \E{\Ind{\Iz{t+h}=i}}\cdot\E{\Ind{\Iz{t}=j}} =
  \Prob{\Iz{t+h}=i}\cdot\Prob{\Iz{t}=j} = \pz{i}\cdot\pz{j}$. For the
  $h=0$ case of
  \cref{app:lemma:properties.of.the.artificial.time.series..E2} it is
  enough to note that $\Ind{\Iz{t}=i}\cdot\Ind{\Iz{t}=j}=0$ when
  $i\neq j$, which together with
  $\Ind{\Iz{t}=i}\cdot\Ind{\Iz{t}=i}=\Ind{\Iz{t}=i}$ gives the
  required expression. 
  The statements in
  \cref{app:lemma:properties.of.the.artificial.time.series..Cov,app:lemma:properties.of.the.artificial.time.series..Cov..independent}
  follows trivially from those in
  \cref{app:lemma:properties.of.the.artificial.time.series..E,app:lemma:properties.of.the.artificial.time.series..E2}.
\end{proof}

The key idea in the local trigonometric example is that the $r$ time
series $\Cz{i}(t)$ all should be \enquote{cosines with some noise},
since this implies (given a reasonable parameter configuration) that
it should be possible to present a decent guesstimate with regard to
the expected shape of 
the $m$-truncated local Gaussian spectral density (for some carefully
selected tuning parameters of the estimation algorithm).  The global
spectrum in this case will not be flat, but it will for low truncation
levels be \enquote{flat enough} for the purpose of showing that the
global spectrum does not detect the underlying frequencies whereas the
local Gaussian spectral density function can do that task.

The following result reiterates the $\Cz{i}(t)$-definition used in
the local trigonometric example, and it presents some basic properties
related to this definition.

\begin{lemma}
  \label{app:lemma:properties.of.the.artificial.time.series..cosine.part}
  Let
  $\Cz{i}(t) = \Lz{i} + \Az{i}(t) \cdot \cos \left(2\pi\alphaz{i} t +
    \varphiz{i} \right)$,
  be defined in the following manner: 
  $\Lz{i}$ and $\alphaz{i}$ are constants that respectively defines
  the horizontal base-line and the frequency.
  The amplitude $\Az{i}(t)$ are for each $t$ uniformly distributed on
  an interval $\parenS{\az{i},\bz{i}}$, and $\Az{i}(t+h)$ and
  $\Az{i}(t)$ are independent when $h\neq0$. 
  The phase-adjustment $\varphiz{i}$ are uniformly drawn (one time for
  each realisation) from the interval between~$0$ and~$2\pi$, and it
  is moreover assumed that the stochastic processes $\varphiz{i}$ and
  $\Az{i}(t)$ are independent of each other.

  \begin{enumerate}[label=(\alph*)]
  \item 
    \label{app:lemma:properties.of.the.artificial.time.series..cosine.part.E}
    $\E{\Cz{i}(t)} = \Lz{i}$
  \item 
    \label{app:lemma:properties.of.the.artificial.time.series..cosine.part.E2}
    $\E{\Cz{i}(t+h)\cdot\Cz{i}(t)} =
    \begin{cases}
      \Lz[2]{i} + \frac{\pi}{4}\cdot
      \parenR{\az[2]{i}+2\az{i}\bz{i}+\bz[2]{i}}\cdot
      \cos(2\pi\alphaz{i}\cdot h) & h \neq 0 \\
      \Lz[2]{i} + \frac{\pi}{3}\cdot
      \parenR{\az[2]{i}+\az{i}\bz{i}+\bz[2]{i}} & h = 0
    \end{cases}$
  \item  
    \label{app:lemma:properties.of.the.artificial.time.series..cosine.part.Cov}
    $\Cov{\Cz{i}(t+h)}{\Cz{i}(t)} =
    \begin{cases}
      \frac{\pi}{4}\cdot
      \parenR{\az[2]{i}+2\az{i}\bz{i}+\bz[2]{i}}\cdot
      \cos(2\pi\alphaz{i}\cdot h) &h\neq 0\\
      \frac{\pi}{3}\cdot
      \parenR{\az[2]{i}+\az{i}\bz{i}+\bz[2]{i}} &h= 0
    \end{cases}$
  \end{enumerate}  
\end{lemma}

\begin{proof}
  This is a consequence of the independence of the two stochastic
  processes $\Az{i}(t)$ and $\varphiz{i}$, and the basic observations:
  $\E{\Az{i}(t)}=\tfrac{1}{2}\cdot\parenR{\az{i}+\bz{i}}$,
  $\E{\Az[2]{i}(t)}=\tfrac{1}{3} \cdot
  \parenR{\az[2]{i}+\az{i}\bz{i}+\bz[2]{i}}$,
  $\E{\cos(2\pi\alphaz{i}t + \varphiz{i})}= 0$ and
  $\E{\cos(2\pi\alphaz{i}(t+h) + \varphiz{i})\cdot\cos(2\pi\alphaz{i}t
    + \varphiz{i})}= \pi\cdot\cos(2\pi\alphaz{i}\cdot h)$.  
  The proof of
  \cref{app:lemma:properties.of.the.artificial.time.series..cosine.part.E}
  is trivial.  For
  \cref{app:lemma:properties.of.the.artificial.time.series..cosine.part.E2}
  it suffices to observe that the $h\neq 0$ case contains
  $\E{\Az{i}(t)}^2$ as a factor, whereas the $h=0$ case contains
  $\E{\Az{i}(t)^2}$ as a factor.
  \Cref{app:lemma:properties.of.the.artificial.time.series..cosine.part.Cov}
  follows from
  \cref{app:lemma:properties.of.the.artificial.time.series..cosine.part.E,app:lemma:properties.of.the.artificial.time.series..cosine.part.E2}.
\end{proof}

Finally, the \textit{local trigonometric} example is obtained by using
$r$ time series $\Cz{i}(t)$, of the form given in
\cref{app:lemma:properties.of.the.artificial.time.series..cosine.part},
in the construction of the time series $\Yz{t}$, i.e.\
\begin{align}
  \label{eq:THE.local.trigonometric.example}
  \Yz{t} = \sumss{i=1}{r} \Ind{\Iz{t} = i}\cdot \parenR{\Lz{i} +
  \Az{i}(t) \cdot \cos \left(2\pi\alphaz{i} t + \varphiz{i} \right)},
\end{align}
where it furthermore is assumed that the $i$-indexed stochastic
variables $\Az{i}(t)$ and $\varphiz{i}$ are independent of the
$j$-indexed variants when $i\neq j$.  It now follows from
\cref{app:lemma:properties.of.the.artificial.time.series,app:lemma:properties.of.the.artificial.time.series..cosine.part}
that the $h\neq 0$ correlation of the time series $\Yz{t}$ in
\cref{eq:THE.local.trigonometric.example} is given by
\begin{align}
  \label{app:eq:correlation.of.the.artificial.time.series}
  \rhoz{Y}(h)
  &= \frac{\frac{\pi}{4}\cdot\sumss{i=1}{r}
    \pz[2]{i}\cdot\parenR{\az[2]{i} + 2\az{i}\bz{i} +
    \bz[2]{i}}\cdot\cos(2\pi\alphaz{i}\cdot h) }
    {\sumss{i=1}{r} \pz{i}\cdot \parenS{\Lz[2]{i} + \frac{\pi}{3}\cdot
    \parenR{\az[2]{i}+\az{i}\bz{i}+\bz[2]{i}}} -
    \parenRz[2]{}{\sumss{i=1}{r} \pz{i}\cdot\Lz{i}} }.
\end{align}

An inspection of
\cref{app:eq:correlation.of.the.artificial.time.series} reveals that
it is fairly easy to find a parameter configuration for which the
numerator is rather small compared to the denominator.  This is of
course not white noise, but the key idea is that it is close enough
to white noise to make it impossible to deduce anything about the
underlying frequencies $\alphaz{i}$ based on the ordinary spectrum.

\subsubsection{The heuristic argument that motivates the local
  trigonometric example}
\label{app:fig:trigonometric.heuristic.argument}

This section starts with an outline that shows how it is possible to
select the parameters of the \textit{local trigonometric} time series
from \cref{eq:THE.local.trigonometric.example} in such a manner that
some specified key features should be present after the
pseudo-normalisation of a sample.  It is with regard to this also
necessary to take into account the tuning-parameters of the estimation
algorithm (i.e.\ the bandwidth $\bm{b}$), since these must be adjusted
relative to the size $n$ of the sample.  The last part of this section
considers the example used in
\cref{sec:Deterministic_trigonometric_models}, and the discussion
related to \cref{fig:dmt_example_TS} will pinpoint why this
hand-waving approach actually works.

\textbf{The heuristic argument:} The amplitude $\Az{i}(t)$ is
uniformly distributed on $\parenS{\az{i},\bz{i}}$, and it thus follows
that all the observations from the $\Cz{i}$ component lies in the
interval $\mathcalIz{i}=\parenS{\Lz{i}-\bz{i},\Lz{i}+\bz{i}}$.  The
first key requirement is that the $r$ intervals $\mathcalIz{i}$ should
have a minimal amount of overlap, and it is moreover for simplicity
natural to require that the base-lines are ordered as follows
$\Lz{1} < \Lz{2} < \dots < \Lz{r}$.

The base lines $\Lz{i}$ do occur in the denominator of $\rhoz{Y}(h)$,
cf.\ \cref{app:eq:correlation.of.the.artificial.time.series}, but for
the purpose of the local Gaussian spectral density investigation it is
the corresponding values after the pseudo-normalisation that is of
interest.  This implies that the values of $\Lz{i}$, $\az{i}$ and
$\bz{i}$ are somewhat irrelevant, since minor modifications of them
will return exactly the same pseudo-normalised sample.

The key ingredient with regard to the pseudo-normalised version of the
sampled values that lies in a given interval $\mathcalIz{i}$, is the
the specification of the probability $\pz{i}\defeq\Prob{\Iz{t}=i}$.
Assuming that the intervals $\mathcalIz{i}$ does not overlap, it is
clear that it for a sample of size $n$ will be natural to assume that
approximately $\pz{i}\cdot n$ of the observations should lie in the
interval $\mathcalIz{i}$ --- and the symmetry of the cosine around its
baseline then implies that approximately one half of these
$\pz{i}\cdot n$ observations should lie below $\Lz{i}$ and the other
half above it.

It follows from this that the base-line $\Lz{1}$ of the $\Cz{1}(t)$
component should occur near the $\LGpi{1} \defeq \tfrac{1}{2}\pz{1}$
percentile of the sample, the \enquote{border-line} between
$\Cz{1}(t)$ and $\Cz{2}(t)$ near the $\pz{1}$ percentile, the
base-line $\Lz{2}$ of the $\Cz{2}(t)$ near the
$\LGpi{2} \defeq \parenR{\pz{1} + \tfrac{1}{2}\pz{2}}$ percentile,
and so on.  This implies that the part of the sample that lies in
the interval $\Iz{i}$ should correspond to the observations between
the two percentiles $\vz{i} - \tfrac{1}{2}\pz{i}$ and
$\vz{i} + \tfrac{1}{2}\pz{i}$,
and this part should moreover look like
a random selection of $\pz{i}\cdot n$ observations from the
$\Cz{i}(t)$ component.

The idea now is that the pseudo-normalisation of the $\Iz{i}$-part
of the sample still should contain a structure that reveals the
frequency $\alphaz{i}$ of the underlying cosine, and that it thus
(given a suitable combination of point $\LGp$ and bandwidth
$\bm{b}$) should be possible to get a result that looks
approximately like the result obtained when
$\lgsdM{\LGp}{\omega}{m}$ is estimated for a single cosine with a
frequency equal to $\alphaz{i}$.

It is possible to select the probabilities $\TSR{\pz{i}}{i=1}{r}$
such that the percentile $\vz{i}$ of the base-line $\Lz{i}$
corresponds directly to the diagonal point $\LGp$ for which
$\lgsdM{\LGp}{\omega}{m}$ should be estimated, but this does not
take into account that the pseudo-normalised version of the
$\Iz{i}$-part of the sample in general will not be symmetric around
$\Phiz[-1]{}(\vz{i})$ (with one exception when $\vz{i}=0.5$).  The
probabilities $\TSR{\pz{i}}{i=1}{r}$ should be selected such that
the $\vz{i}$ percentile lies closer to the center than the
percentile corresponding to $\LGp$.

Given a configuration of probabilities $\TSR{\pz{i}}{i=1}{r}$, and
furthermore assuming that the probability $\pz{i}$ for the
$\Cz{i}(t)$ component is sufficiently large relatively the sample
size $n$, it will now be possible to find a point $\LGp$ and a
bandwidth $\bm{b}$ such that the estimate of
$\lgsdM{\LGp}{\omega}{m}$ has the predicted shape with a peak at the
frequency $\alphaz{i}$ that is used in $\Cz{i}(t)$

It is possible to construct \textit{local trigonometric} examples
where some of the components $\Cz{i}(t)$ are impossible to detect
for a given sample size $n$, but they could still be detected when a
larger sample is used, cf.\ the example discussed in
\cref{app:fig:trigonometric.C1.component}.

\textbf{The case investigated in
  \cref{sec:Deterministic_trigonometric_models}:} 
The heuristic arguments outlined above were used in order to create
the \textit{local trigonometric} example in
\cref{sec:Deterministic_trigonometric_models}.  The initial
requirements for the construction of that particular example were that
the sample should have the same length as the \texttt{dmbp}-data,
i.e.\ $n=1974$, that the bandwidth should be $0.5$, and that the
investigation should be performed at the three diagonal points
corresponding to the 10\%, 50\% and 90\% percentiles of the standard
normal distribution.

The initial approach used three $\Cz{i}(t)$-components with equal
probability of being selected.  An additional new first component
$\Cz{1}(t)$ was then added, with $\pz{1}=0.05$ (and with a
corresponding reduction of the next probability to
$\pz{2}=\tfrac{1}{3}-0.05$).  This adjustment was done in order to get
more mileage out of the example, since it then also could be used to
highlight that the local Gaussian spectral density in some cases might
not have enough observations available to detect all the local
features.  Note that the elusive first component can be detected when
the sample size increases, cf.\
\cref{app:fig:trigonometric.C1.component} for details.

The explicit expression of the \textit{local trigonometric} time
series used in \cref{sec:Deterministic_trigonometric_models} is given
by the following equation,
\begin{align}
  \label{supplementary:eq:local_trigonometric_sum} 
  \Yz{t} \defeq \sumss{i=1}{4} \Ind{\Iz{t} = i} \cdot \parenR{\Lz{i} +
  \Az{i}(t) \cdot   \cos\! \left(2\pi\alphaz{i} t + \varphiz{i}
  \right)}, 
\end{align}
where the probabilities $\pz{i}\defeq\Prob{\Iz{t}=i}$ are given by
$(0.05, 1/3-0.05, 1/3, 1/3)$, and the frequencies $\alphaz{i}$ are
given by $(0.267, 0.091, 0.431, 0.270)$.  The base-lines $\Lz{i}$ are
given by the values $(-2, -1, 0, 1)$, and the lower and upper ranges
for the uniforms sampling of the amplitudes $\Az{i}(t)$ are
respectively given by {$(0.5, 0.2, 0.2, 0.5)$} and
{$(1.0, 0.5, 0.3, 0.6)$}.  Recall that these latter values are not
really of interest with regard to the pseudo-normalised version of the
sample, and the only requirement regarding these should be that they
are selected in order to give a minimal amount of overlap between the
different components.  The phase-adjustments $\varphiz{i}$ are
uniformly selected from the interval $[0,2\pi)$, one time for each
realisation of a sample from $\Yz{t}$.

\begin{knitrout}\footnotesize
  \definecolor{shadecolor}{rgb}{0.969, 0.969, 0.969}\color{fgcolor}\begin{figure}[h]

    {\centering \includegraphics[width=1\linewidth]{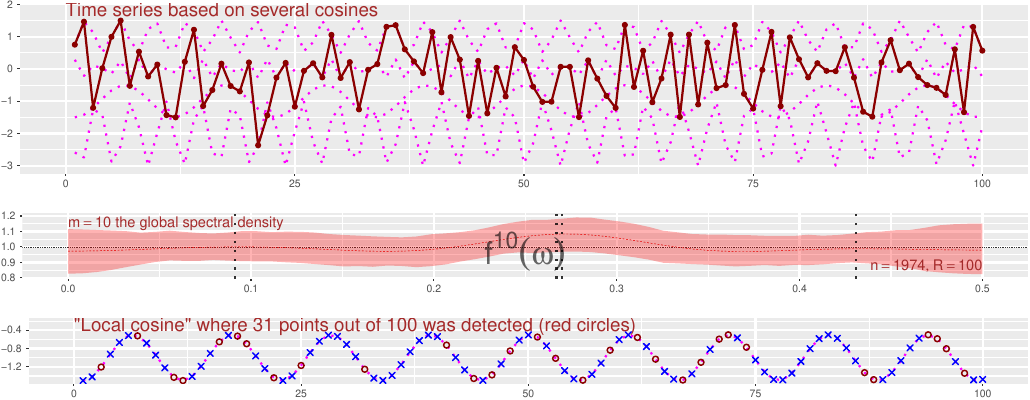} 

    }

    \caption{Top: Short excerpt from artifical example based on
      \textit{hidden trigonometric components}.  Center: Estimated
      (truncated) global spectral density (\textit{hidden frequencies}
      indicated with vertical lines).  Bottom: Local cosine showing the
      detected points at the local level centered at -1. Further details
      in the main text.}\label{fig:dmt_example_TS}
  \end{figure}

\end{knitrout}

\Cref{fig:dmt_example_TS} shows a \textit{simplified} excerpt of
length 100 from one realisation of $\Yz{t}$.  The amplitudes
$\Az{i}(t)$ have here for the simplicity of the present discussion
been fixed to the values {$(1.0, 0.5, 0.3, 0.5)$} since it is of
importance to emphasise which one of the underlying \enquote{hidden}
components $\Cz{i}(t)$ (shown as dotted curves in the top panel) that
was selected in this case (the phase-adjustments $\varphiz{i}$ in this
particular realisation are \mbox{$(0.52, 2.57, 3.24, 2.49)$}).
The center panel of \cref{fig:dmt_example_TS} shows an estimate of the
$m$-truncated (global) spectral density $\fz[m]{}(\omega)$, based on
100 independent samples of length~1974 and with a 90\% pointwise
confidence interval that shows that it is viable to claim that this
particular process behaves almost like white noise.  Note that the
vertical lines in the center panel shows the frequencies~$\alphaz{i}$
that was used in~\cref{eq:local_trigonometric_components}.

The bottom panel of \cref{fig:dmt_example_TS} is the one of major
interest for the present discussion, i.e.\ it is the one from which it
is possible to provide an explanation for the expected shape of the
local Gaussian spectral density, at some particularly designated
points~$\LGp$ (given a suitable bandwidth~$\bm{b}$).  
First of all, the bottom panel shows \textit{one} of the cosines from
the top panel, the
circles represents the points from the top panel that happened to lie
on this particular cosine~--- and the
crosses represents all the remaining points (at integer valued
times~$t$) of the cosine.  Recall that these points are from the
simplified realisation where the amplitudes $\Az{i}(t)$ are constant,
and that the actual values thus would be distorted a bit from those
observed here.

The
circles can be considered as a randomly selected collection of
points from a time series like the one investigated in
\cref{fig:trigonometric_one_cosine} (single cosine function with
some white noise), and the main point of interest is that it (for a
sufficiently long time series, and a sufficiently large
bandwidth~$\bm{b}$) will be the case that the estimated local
Gaussian autocorrelations based on this scarce subset might be quite
close to the estimates obtained if all the points had been
available.  The rationale for this claim is related to the way that
the local Gaussian auto-correlation at lag~$h$ (at a given
point~$\LGp$) is computed from the sets of bivariate points
$\parenR{\Yz{t+h},\Yz{t}}$.  In particular: It might not have a
detrimental effect on the resulting estimate if some of these
lag~$h$ pairs are removed at random, as long as the remaining number
of pairs is large enough.  Based on this idea, it can thus be argued
that the local Gaussian spectral density estimated from the
collection of the circled-marked points should be fairly close to
the situation shown in \cref{fig:trigonometric_one_cosine}, at least
if the time series under investigation is sufficiently~long.

This final heuristic graphical argument is the reason for the
guesstimate that the $m$-truncated local Gaussian spectral densities
(for the three points $\LGp$ corresponding to the 10\%, 50\% and 90\%
percentiles) should have an overall shape that resembles the one seen
for the single cosine example seen in
\cref{fig:trigonometric_one_cosine}.

It did turn out, cf.\
\cref{fig:trigonometric,fig:dmt_heatmap_and__levels_vs_norm}, that the
guesstimate based on these heuristic arguments in fact did hold true,
and the $m$-truncated local Gaussian spectral densities did in fact
detect the specified frequencies $\alphaz{i}$ in
\mbox{$(0.091, 0.431, 0.270)$} at the three targeted points $\LGp$.

Note that the frequency $0.267$ corresponding to the $\Cz{1}(t)$
component could not be detect based on only $n=1974$ observations,
since the probability $\pz{1}=0.05$ requires an investigation far out
in the lower tail.  It is however possible to detect it whit a much
larger sample size, cf.\ the discussion in the next section.

\subsubsection{Detecting the $\Cz{1}(t)$ component of the local
  trigonometric example}
\label{app:fig:trigonometric.C1.component}

The \textit{local trigonometric} example seen in
\cref{fig:trigonometric,fig:dmt_heatmap_and__levels_vs_norm} of
\cref{sec:Deterministic_trigonometric_models}, cf.\
\cref{supplementary:eq:local_trigonometric_sum} for the definition,
contains a component $\Cz{1}(t)$ that goes undetected when the sample
size of $n=1974$ is used.  The reason for the elusiveness of the
$\Cz{1}(t)$ component is that it has a probability of $\pz{1}=0.05$ of
being selected, which implies that it is expected to only find 98.7
observations from this component when $n=1974$.

The \enquote{border} between the observations from the $\Cz{1}(t)$
and $\Cz{2}(t)$ components should occur near the 5\% percentile, but
it is necessary to \enquote{zoom in} on a point $\LGp$ that lies
farther out in the tail than $\pz{1}/2=0.025$.  This requirement
occurs since the estimate of $\lgsdM{\LGp}{\omega}{m}$ should avoid
\enquote{contamination} from the observations from the $\Cz{2}(t)$
component.

Based on the idea that it might be necessary to go all the way out
to the 1\%, it seemed natural to attempt an investigation based on
$n=25000$ observations.  Since the point $\LGp$ now is far out in
the lower tail, e.g.\ the 1\% percentile of the standard normal
distribution is $-2.326$, it seemed reasonable to use the bandwidth
$\bm{b} = (0.4, 0.4)$.

The heatmap and distance plots in
\cref{fig:dmt_for_extreme_tail_heatmap_and__levels_vs_norm} is based
on an investigating of a single realisation, that included
percentiles based on values starting from 2 bandwidths below the 5\%
percentile and ending at 1/2 bandwidth below the 5\% percentile,
i.e.\ the diagonal points starts at approximately the 0.72\%
percentile and ends at the 3.25\% percentile.

\begin{knitrout}\footnotesize
  \definecolor{shadecolor}{rgb}{0.969, 0.969, 0.969}\color{fgcolor}\begin{figure}[h]

    {\centering \includegraphics[width=1\linewidth]{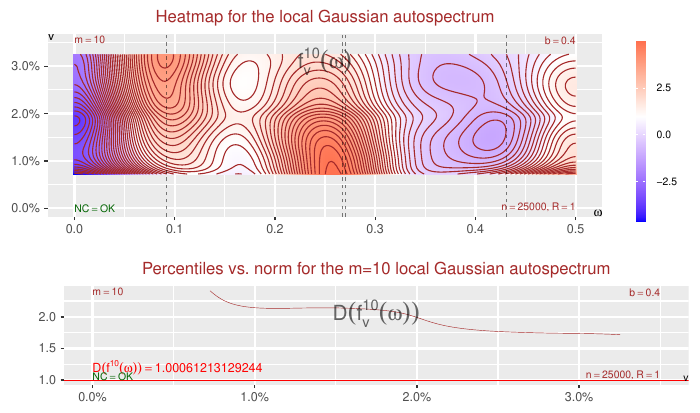} 

    }

    \caption{A heatmap+distance plot used to search for an
      \enquote{optimal} percentile that can reveal the $\alphaz{1}$
      frequency in the lower tail of the \textit{local
        trigonometric}.}
    \label{fig:dmt_for_extreme_tail_heatmap_and__levels_vs_norm}
  \end{figure}
\end{knitrout}

It is no surprise that the $\Cz{2}(t)$ component completely
dominates at the 3.25\% percentile, and it can be seen that it is
necessary to go down to at least the 1\% percentile in order to
detect a peak close to the frequency $\alphaz{1}=0.267$ of the
$\Cz{1}(t)$ component.  Note that
\cref{fig:dmt_for_extreme_tail_heatmap_and__levels_vs_norm} is based
on only 1 single realisation, and other realisations might look
slightly different.

\Cref{fig:local.trigonometric.C1-component} shows the situation when
$R=100$ replicates are used to estimate $\lgsdM{\LGp}{\omega}{m}$ at
the diagonal point $\LGp$ that corresponds to the 1\% percentile.
This shows that $\hatlgsdM{\LGp}{\omega}{m}$
in this case has the expected \enquote{cosine}-shape, and the peak
is at
the frequency $\alphaz{1}=0.267$ of the
$\Cz{1}(t)$ component.

\begin{knitrout}\footnotesize
  \definecolor{shadecolor}{rgb}{0.969, 0.969, 0.969}\color{fgcolor}\begin{figure}[h]

    {\centering \includegraphics[width=1\linewidth]{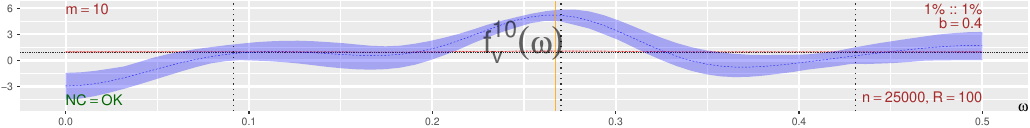} 

    }

    \caption{The detection of the $\alphaz{1}$ frequency in the lower tail
      of the \textit{local trigonometric} example requires a large sample
      and an investigation far out in the lower tail.}
    \label{fig:local.trigonometric.C1-component}
  \end{figure}
\end{knitrout}

\subsubsection{Beware of deterministic global structures under small
  noise}
\label{sec:lgch:beware_of_global_structures}

It was seen in the \textit{local trigonometric} example investigated
in the preceding sections, that a peak of
$\hatlgsdM{\LGp}{\omega}{m}$, like the one seen in
\cref{fig:dmt_for_extreme_tail_heatmap_and__levels_vs_norm},
corresponded to a frequency $\alpha$ of some underlying
cosine-function (that $\widehatfz[m]{}(\omega)$ did not detect).  It
should here be emphasised that the \textit{local trigonometric}
example was fine tuned in order to test the sanity of the
implemented estimation algorithm --- and it would thus be a logical
fallacy to conclude that a similar peak of
$\hatlgsdM{\LGp}{\omega}{m}$ (not present in
$\widehatfz[m]{}(\omega)$) always could be interpreted in the same
manner for general non-Gaussian time series.

An investigation of this issue can be seen in
\cref{fig:trigonometric_one_cosine_tiny_noise}, where an extreme
version of the case investigated in
\cref{fig:trigonometric_one_cosine} are presented.  The setup is
similar to the one from \cref{fig:trigonometric_one_cosine}, i.e.\ the
plots are based on 100 samples of length~1974 from a model of the form
\mbox{$\Yz{t}=\cos\!\parenR{2\pi\alpha t + \varphi} + \wz{t}$}, where
\mbox{$\alpha=0.302$} (as before), but the standard deviation of the
Gaussian white noise $\wz{t}$ has now been reduced to
\mbox{$\sigma = 0.05$}.

\begin{knitrout}\footnotesize
  \definecolor{shadecolor}{rgb}{0.969, 0.969, 0.969}\color{fgcolor}\begin{figure}[h]

    {\centering \includegraphics[width=1\linewidth]{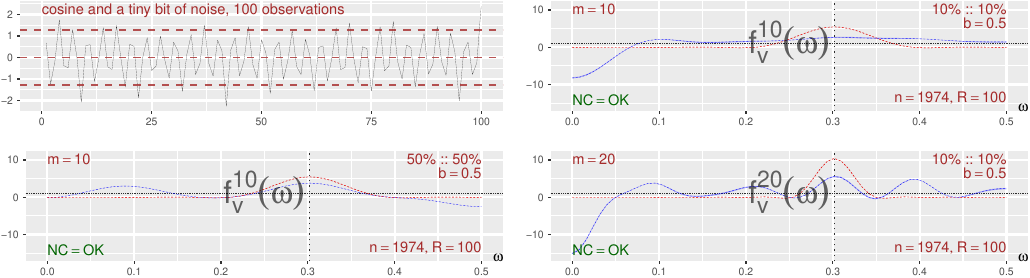} 

    }

    \caption[Pseudo-normalised single cosine with small
    noise]{Pseudo-normalised single cosine with small
      noise.}\label{fig:trigonometric_one_cosine_tiny_noise}
  \end{figure}
\end{knitrout}

The low value of the standard deviation~$\sigma$ implies that samples
from this model have a very clear periodic behaviour, as can be seen
from the plots in \cref{fig:trigonometric_one_cosine_tiny_noise},
where the 90\% confidence intervals are almost indistinguishable from
the mean of the estimates.  This clear periodicity is also evident
from the trace shown in the upper left panel of
\cref{fig:trigonometric_one_cosine_tiny_noise}, where the 100 first
pseudo-normalised observations of one of the samples are~presented.

\Cref{fig:trigonometric_one_cosine_tiny_noise} contains estimates of
the local and global spectra, with focus on the points in the lower
tail and the center for the truncation level \mbox{$m=10$}, and for
the lower tail when \mbox{$m=20$}.  The additional peaks seen at the
center is due to the kernel function $K$ that is used in the
estimation algorithm --- in particular $K$ works on the $h$-lagged
pairs $\parenR{\Yz{t+h},\Yz{t}}$, the contribution becomes
negligible for pairs far away from $\LGp$, the pairs that do
contribute give the impression that several \enquote{local
  frequencies} might be present, but the underlying model has only
one single frequency.

The case in \cref{fig:trigonometric_one_cosine_tiny_noise} is
extreme since the noise-term is minuscule.  Because the local
Gaussian correlation is based on a continuous distribution
assumption and the use of a kernel function, similar difficulties
can be expected for other deterministic functions embedded in low
noise. One possible way out might be to consider an approach where a
parametric model is fitted first to the data and then examine the
residuals with a global and a local spectral analysis.

\clearpage{}
  }
  \putbib     
\end{bibunit}

\end{document}